\def\isarxiv{1} %
\let\C\relax
\definecolor{mydarkblue}{rgb}{0,0.08,0.45}
\newtheorem{theorem}{Theorem}[section]
\newtheorem{lemma}[theorem]{Lemma}
\newtheorem{definition}[theorem]{Definition}
\newtheorem{corollary}[theorem]{Corollary}
\newtheorem{fact}[theorem]{Fact}
\newtheorem{claim}[theorem]{Claim}
\newcommand{\wh}{\widehat}
\newcommand{\wt}{\widetilde}
\newcommand{\ov}{\overline}
\newcommand{\eps}{\varepsilon}
\renewcommand{\epsilon}{\varepsilon}
\renewcommand{\phi}{\varphi}
\newcommand{\rect}{\mathrm{rect}}
\newcommand{\N}{\mathcal{N}}
\newcommand{\R}{\mathbb{R}}
\newcommand{\Z}{\mathbb{Z}}
\newcommand{\C}{\mathbb{C}}
\renewcommand{\L}{\mathsf{left}}
\newcommand{\len}{\mathsf{len}}
\newcommand{\num}{\mathsf{num}}
\renewcommand{\i}{\mathbf{i}}
\renewcommand{\hat}{\wh}
\renewcommand{\d}{\mathrm{d}}
\newcommand{\poly}{\mathrm{poly}}
\DeclareMathOperator{\sinc}{sinc}
\newcommand{\supp}{\mathrm{supp}}
\DeclareMathOperator*{\E}{{\mathbb{E}}}
\newcommand{\median}{\mathrm{median}}
\newcommand*{\RN}[1]{\expandafter\@slowromancap\romannumeral #1@}
\definecolor{b2}{RGB}{51,153,255}
\definecolor{mygreen}{RGB}{80,180,0}
\begin{document}

\ifdefined\isarxiv

\date{}

\title{Quartic Samples Suffice for Fourier Interpolation}
\author{
Zhao Song\thanks{\texttt{zsong@adobe.com}. Adobe Research. }
\and 
Baocheng Sun\thanks{\texttt{woafrnraetns@gmail.com}. Weizmann Institute of Science.
} 
\and
Omri Weinstein\thanks{\texttt{omri@cs.columbia.edu}. The Hebrew University and Columbia University.}
\and 
Ruizhe Zhang\thanks{\texttt{ruizhe@utexas.edu}. The University of Texas at Austin.}
}

\else

\title{Intern Project} 
\maketitle 
\fi

\ifdefined\isarxiv
\begin{titlepage}
  \maketitle
  \begin{abstract}

We study the  problem of interpolating a noisy Fourier-sparse signal in the time duration $[0, T]$ from  noisy samples in the same range, where  
the ground truth signal can be any $k$-Fourier-sparse signal with band-limit $[-F, F]$. 
Our main result is an efficient Fourier Interpolation algorithm that improves the previous best algorithm by [Chen, Kane, Price, and Song, FOCS 2016] in the following three aspects: 
\begin{itemize}
\item The sample complexity is improved from  $\wt{O}(k^{51})$ to $\wt{O}(k^{4})$.
\item The time complexity is improved from $ \wt{O}(k^{10\omega+40})$ to $\wt{O}(k^{4 \omega})$. 
\item The output sparsity is improved from $\wt{O}(k^{10})$ to  $\wt{O}(k^{4})$. 
\end{itemize}
Here, $\omega$ denotes the exponent of fast matrix multiplication.
The state-of-the-art sample complexity of this problem is $\sim k^4$, but was only known to be achieved by an \emph{exponential-time} algorithm.
Our algorithm uses the same number of samples but has a polynomial runtime, laying the groundwork for an efficient Fourier Interpolation algorithm. 

The centerpiece of our algorithm is a new sufficient condition for the frequency estimation task---a high signal-to-noise (SNR) band condition---which allows for efficient and accurate signal reconstruction. 
Based on this condition together with a new structural decomposition of Fourier signals  (Signal Equivalent Method), 
we design a cheap algorithm to estimate each ``significant'' frequency within a narrow range, which is then combined with a signal estimation algorithm into a new Fourier Interpolation framework to reconstruct the ground-truth signal.

  \end{abstract}
  \thispagestyle{empty}
\end{titlepage}

{\hypersetup{linkcolor=black}
\tableofcontents
}
\newpage
\setcounter{page}{1} 

\else

\begin{abstract}

\end{abstract}

\fi

\section{Introduction}

Fourier transforms are the backbone of signal processing and engineering,  with profound implications to nearly every field of scientific computing and technology. This is primarily due to the discovery of the well-known Fast Fourier Transform (FFT) algorithm \cite{ct65}, which is ubiquitous in engineering applications, from image and audio processing to fast integer multiplication and optimization. The classic FFT algorithm of \cite{ct65} computes the \emph{Discrete Fourier Transform} (DFT) of a length-$n$ vector $x$, where both the time and frequency domains are assumed to be discrete. This algorithm takes $O(n)$ samples in the time domain, and constructs $\hat{x} = \mathrm{DFT}(x)$ in $ O(n \log(n))$ time. The discrete setting of DFT limits its applicability in two main aspects: The first one is that many real-world signals are continuous (analog) by nature; Secondly, many real-world applications (such as image processing) involve signals which are \emph{sparse} in the frequency domain (i.e., $\|\hat{x}\|_0= k \ll n$)  \cite{itu92, wat94, r02}. This feature underlies the  \emph{compressed sensing} paradigm \cite{crt06}, which leverages sparsity to obtain \emph{sublinear} algorithms for signal reconstruction, with time and sample complexity depending only on the sparsity $k$. 
Unfortunately, the continuous case cannot  simply be reduced to the discrete case via  standard discretization (i.e., using a sliding-window function), as it ``smears out" the frequencies and blows up the sparsity, which motivates a more direct approach for the continuous problem  \cite{ps15}.

The study of Fourier-sparse signals dates back to the work of Prony in 1795 \cite{pro95}, who studied the problem of exact recovery of the ``ground-truth" signal $x$ in the vanilla \emph{noiseless} setting. By contrast, the realistic setting of reconstruction from  \emph{noisy-samples} \cite{ps15} is a different ballgame, and   
exact recovery is generally impossible \cite{moi15}.
In the \emph{Fourier Interpolation} problem, the  ground-truth signal \begin{align*}
    x^*(t)=\sum_{j=1}^k v_j e^{2\pi\i f_j t},~~  v_j \in \C, f_j \in [-F, F]~ \forall j\in [k],
\end{align*}
is a $k$-Fourier-sparse signal with bandlimit $F$. Given noisy access to the ground truth $x(t)= x^*(t)+g(t)$ in limited time duration $t\in [0, T]$ (which means that we need to recover $x^*(t)$ by taking samples from $x(t)$), the goal is to reconstruct a $\wt{k}$-Fourier-sparse signal $y(t)$ (i.e., $y(t)=\sum_{j=1}^{\wt{k}} \wt{v}_j e^{2\pi\i \wt{f}_j t}$ for some $\wt{v}_j \in \C, \wt{f}_j \in [-F, F]$ for all $ j\in [\wt{k}]$) such that 
\begin{align*}
    \|y(t)-x^*(t)\|_T^2 \leq c ( \|g\|_T^2 + \delta \|x^*(t)\|_T^2)
\end{align*}
holds for some $c= O(1)$, where the $T$-norm of any function $f:\R\rightarrow \C$ is defined as 
\begin{align*} 
\|f(t)\|_T^2 := \frac{1}{T} \int_0^T |f(t)|^2\d t.
\end{align*}
We note that it is not necessary for $y(t)$'s frequencies and magnitudes $(\wt{f}_j, \wt{v}_j)$ being close to the ground-truth signal $x^*(t)$'s frequencies and magnitudes $({f}_{j'}, {v}_{j'})$. %

Prior to this work, the state-of-the-art algorithm for the Fourier interpolation problem was given by \cite{ckps16}, which achieves 
 $\wt{O}(k^{51}) $ sample complexity, 
 $\wt{O}(k^{10\omega+40})$ running time, $ \wt{O}(k^{10})$ output sparsity, and $c\geq 2000$ approximation ratio. 
In \cite{sswz22}, the approximation ratio was improved to $\approx 1+\sqrt{2}$, but the sample complexity remained large, and runtime remained slow.
For calibration, we note that $o(k^4)$ sample  complexity for 
Fourier interpolation
is not known to be achievable even with 
\emph{exponential} decoding time.  
In this work, we focus on improving  the \emph{efficiency} of \cite{ckps16}'s algorithm  across all aspects: (i) runtime, (ii) sample complexity, and (iii) output-sparsity. 
Our main result is: 

\begin{table}[!ht]
    \centering
    \renewcommand{\arraystretch}{1.25}
    \begin{tabular}{|l|c|c|c|} \hline 
       {\bf References} & {\bf Samples} & {\bf Time} & {\bf Output Sparsity}  \\ \hline 
        \cite{ckps16}  & $\wt{O}(k^{51}) $ & $ \wt{O}(k^{10\omega+40})$ & $ \wt{O}(k^{10})$ \\ \hline
        \cite{cp19_colt, sswz22} & $\wt{O}(k^4)$ & $\exp(k^3)$ & $k$ \\ \hline
        Ours (Theorem~\ref{thm:intro_main}) & $\wt{O}(k^{4}) $ & $\wt{O}(k^{4\omega}) $ & $\wt{O}(k^{4}) $ \\ \hline
    \end{tabular}
    \caption{Summary of the results. %
    All the algorithms obtain $O(1)$ approximation ratio. We use $\omega$ to denote the exponent of matrix multiplication, currently $\omega \approx 2.373$ \cite{w12,aw21}.}
    \label{tab:my_label}
\end{table}

\begin{theorem}[Main Theorem]\label{thm:intro_main}
Let $x(t) = x^*(t) + g(t)$, where $x^*(t)$ is $k$-Fourier-sparse signal with frequencies in $[-F, F]$.
Given samples of $x(t)$ over $[0, T]$, there is an algorithm that uses 
\begin{align*}
 k^4 \log(FT) \cdot \poly\log(k,1/\delta , 1/\rho)
\end{align*}
samples, runs in 
\begin{align*}
k^{4\omega} \log(FT)   \cdot \poly\log(k,1/\delta , 1/\rho)
\end{align*}
time, and
outputs a $k^4\cdot \poly\log(k/\delta) $-Fourier-sparse
signal $y(t)$ s.t with probability at least $1-\rho$,
\begin{align*}
\|{y(t) - x^*(t)}\|_T \lesssim \|{g}(t)\|_T + \delta\|{x^*}(t)\|_T.
\end{align*}

\end{theorem}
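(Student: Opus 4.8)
The plan is to follow the classical three-stage template for noisy Fourier recovery — (i) produce a finite list of candidate frequencies, (ii) localize the ``heavy'' part of the spectrum inside that list, (iii) recover magnitudes by least squares — and to make each stage cheap using the two new ingredients advertised above. The basic enabling fact, used throughout, is that a $k$-Fourier-sparse signal restricted to $[0,T]$ behaves like a polynomial of degree $\poly(k\log(1/\delta))$: it is ``flat'' (so $\|x^*\|_{\infty,[0,T]} \le \poly(k)\,\|x^*\|_T$), and it admits weighted sample sets of near-optimal size on which the weighted empirical $L_2$-norm is equivalent, up to a constant, to $\|\cdot\|_T$.

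\textbf{Stage 1 (Reduction to heavy clusters).} First I would invoke the Signal Equivalent Method to replace $x^*$, up to $T$-norm error $O(\|g\|_T + \delta\|x^*\|_T)$, by a structured signal whose frequencies split into $\poly(k)$ narrow groups, each contained in a frequency window of width $\poly(k)/T$. Inside one window the signal equals $e^{2\pi\i f_0 t}$ times a polynomial of degree $\poly(k)$, so a window is described by $\poly(k)$ complex parameters. It therefore suffices to (a) output, for each group, a window of width $\poly(k)/T$ containing it, and then (b) lay a grid of $\poly(k)$ equally spaced frequencies inside each such window; the resulting candidate set $S$ has $|S| = \wt O(k^4)$, and $x^*$ lies within $T$-norm $O(\|g\|_T + \delta\|x^*\|_T)$ of $\mathrm{span}\{e^{2\pi\i f t} : f\in S\}$. (The ``one-cluster'' base case — all frequencies in a single width-$\poly(k)/T$ window — is solved directly by polynomial-interpolation-based recovery.)

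\textbf{Stage 2 (Finding the windows — the crux).} To locate the windows I would sweep $[-F,F]$ over $O(\log(FT))$ dyadic resolution scales; at each scale a filter isolates candidate frequency bands, and for each isolated band I test the \emph{high-SNR band condition} — informally, that the in-band energy of $x$ not explained by noise exceeds a fixed multiple of $\|g\|_T^2$ plus a $(\delta/\log(FT))$-fraction of $\|x^*\|_T^2$. The central new claims are: whenever a band passes the test, a representative frequency can be pinned down to the next finer scale using only $\wt O(k)$ samples and near-linear time, by feeding the filtered measurement into a one-cluster frequency estimator; and whenever a Stage-1 group carries energy above the threshold, the test must pass, so discarding every band that fails loses at most $O(\|g\|_T + \delta\|x^*\|_T)$ of $x^*$ in $T$-norm, the per-scale budgets summing appropriately over the $O(\log(FT))$ scales and the $\poly(k)$ surviving bands per scale. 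This is the main obstacle. One must show: (a) the test is both sound and complete in the above sense despite constructive/destructive interference among the true frequencies inside and near a band — controlling this interference is precisely the role of the Signal Equivalent Method; (b) the cheap estimator really attains the claimed localization accuracy under the condition; and (c) the frequency errors from the $O(\log(FT))$ scales accumulate additively rather than multiplicatively, so the final window width remains $\poly(k)/T$.

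\textbf{Stage 3 (Magnitudes and bookkeeping).} Given $S$ with $|S| = \wt O(k^4)$, I would draw a weighted set $\mathcal D$ of $\wt O(|S|) = \wt O(k^4)$ samples on which the weighted empirical $L_2$-norm is a $2$-approximation of $\|\cdot\|_T$ over $\mathrm{span}\{e^{2\pi\i f t} : f\in S\}$ — supplied by the known well-balanced / leverage-score sampling machinery for Fourier-sparse spaces — and solve $\min_{w\in\C^S}\,\|\sum_{f\in S} w_f e^{2\pi\i f t} - x(t)\|_{\mathcal D}$. The regression costs $\wt O(|S|^\omega) = \wt O(k^{4\omega})$ time, keeps the sample count at $\wt O(k^4)$, and returns $y(t) := \sum_{f\in S} w_f e^{2\pi\i f t}$, which is $\wt O(k^4)$-Fourier-sparse. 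For correctness, compare $y$ to the oracle signal $\wh x$ from Stage 1, which lies in $\mathrm{span}(S)$ and satisfies $\|\wh x - x^*\|_T = O(\|g\|_T + \delta\|x^*\|_T)$: since the empirical norm is equivalent to $\|\cdot\|_T$ on $\mathrm{span}(S)$ and $y$ is the empirical minimizer, $\|y - x^*\|_T \lesssim \|y - x\|_T + \|g\|_T \lesssim \|\wh x - x\|_T + \|g\|_T \lesssim \|g\|_T + \delta\|x^*\|_T$. Finally, the $\log(FT)$ factor is the number of dyadic scales, the $\poly\log(1/\delta)$ factors come from the grid resolution and polynomial degree, and the $\poly\log(1/\rho)$ factor comes from boosting the per-step success probabilities (via median/repetition) so that all $\poly(k)$ localizations and the final regression succeed simultaneously with probability at least $1-\rho$.
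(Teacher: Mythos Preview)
Your three-stage template matches the paper at a coarse level, but two of the stages diverge from the paper's actual mechanism in ways that matter.

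\textbf{Frequency estimation is not a dyadic sweep over bands, and the high-SNR condition is never tested.} The paper does a \emph{single} randomized \textsc{HashToBins} into $B=O(k)$ bins using the filters $G^{(j)}_{\sigma,b}$; the $O(\log(FT))$ factor arises not from multiple resolution scales of filtering but from an ary-search \emph{within each bin} that repeatedly calls a significant-sample generator (Lemma~\ref{lem:significant_samples_for_each_bins:main}) to halve the frequency uncertainty. Crucially, the algorithm never attempts to decide whether a bin has high SNR (it cannot, since it does not know $\|g\|_T$); it blindly outputs one frequency per bin, and the analysis shows that low-SNR bins may return garbage but this is harmless because those bins contribute negligibly to $x^*$ (Lemma~\ref{lem:xSfsubxSleqg}). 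Your proposal to ``test the high-SNR band condition'' at each scale and discard failing bands is therefore not what happens, and it is not obvious how you would implement such a test or control its error budget across scales.

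\textbf{The Signal Equivalent Method is an analysis device, not an algorithmic decomposition.} You invoke it in Stage~1 to replace $x^*$ by a nicely-clustered signal. In the paper it plays a different role: it is used to prove energy bounds for the filtered signals $z_j=(x\cdot H)*G^{(j)}_{\sigma,b}$ by approximating $G^{(j)}_{\sigma,b}$ with an ideal $0/1$ filter $I$ that commutes with $H$ (Lemmas~\ref{lem:xIH2xHI}--\ref{lem:xHI_sub_XHG_T_norm_is_small}), which in turn drives the sample complexity of the significant-sample procedure. The clustering you describe is instead obtained post hoc, after frequency estimation, by approximating each cluster with a polynomial times an exponential (Lemma~\ref{lem:low_degree_approximates_concentrated_freq_ours}), giving a mixed Fourier--polynomial basis of dimension $\wt O(k^4)$ for the regression in Stage~3---not a pure frequency grid.

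\textbf{Boosting cannot be done per-step.} Your final sentence suggests boosting each localization so that all succeed simultaneously. The paper explains why this fails here: whether a given heavy frequency lands in a high-SNR bin depends on the random hash $(\sigma,b)$, so independent repetitions of frequency estimation target \emph{different} frequency subsets and cannot be merged at the frequency level. The paper's fix is to run frequency estimation \emph{and} signal estimation together with constant success probability, obtain $R_p=O(\log(1/\rho))$ candidate signals $y_1,\dots,y_{R_p}$, and select one via a min-of-medians on their pairwise $T$-distances (Lemma~\ref{eq:med_boost_prob}). This is where the $\poly\log(1/\rho)$ overhead actually enters.
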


\subsection{Related works}
\paragraph{Sparse Fourier transform in the discrete setting}

The Fourier transform $\hat{x} \in \C^{N}$ is a vector of length $N$. The goal of a sparse DFT algorithm is, given a bunch of samples $x_{i}$ in the time domain and the sparsity parameter $k$, to output a $k$-Fourier-sparse signal $x'$ with the $\ell_{2}/\ell_{2}$-guarantee
\begin{align*}
     \| \hat{x}' - \hat{x} \|_2 ~ \lesssim ~ \min_{k\text{-sparse}~z} \| z - \hat{x} \|_2.
\end{align*}
There are two different lines of work solving the above problem. One line \cite{gms05,hikp12a,hikp12,ikp14,ik14,k16,k17} is carefully choosing samples (via hash function) and obtaining sublinear sample complexity and running time. The other line \cite{ct06,rv08,bou14,hr16,nsw19} is taking \emph{random} samples (via RIP property \cite{ct06} or others) and paying sublinear sample complexity but nearly linear running time.

\paragraph{Sparse Fourier transform in the continuous setting}

\cite{ps15} defined the sparse Fourier transform in the continuous setting. It shows that as long as the sample duration $T$ is large enough compared to the frequency gap $\eta$, then there is a sublinear time algorithm that recovers all the frequencies up to certain precision and further reconstructs the signal. \cite{jls23} improves and generalize several results in \cite{ps15}. In particular, \cite{ps15} only works for one-dimensional continuous Fourier transform, and \cite{jls23} generalizes it to $d$-dimensional Fourier transform. In order to convert the tone estimation guarantee to signal estimation guarantees, \cite{ps15} provides a positive result which shows $T=O(\log^2(k) / \eta)$ is sufficient, and \cite{moi15} shows a lower bound result where $T = \Omega(1/\eta)$. \cite{s19} asked an open question about whether this gap can be closed. \cite{jls23} made positive progress on that problem by providing a new upper bound which is $T =O(\log(k)/\eta)$. %

From the negative side, \cite{moi15} shows that in order to show tone estimation\footnote{Tone refers to a (frequency, coefficient) pair in \cite{ps15}. E.g., $(f_i,v_i)$ is a tone of the signal $x(t)=\sum_{i=1}^k v_i e^{2\pi\i f_it}$. And tone estimation means estimating each $(f_i, v_i)$ precisely.}, we have to pay a lower bound in sample duration $T$. In \cite{ps15}, it shows that once we have tone estimation, we can obtain a signal estimation guarantee.
Since \cite{ps15} and \cite{moi15}, there is an interesting question about whether we can reconstruct the signal without having a tone estimation guarantee, which is defined as the Fourier interpolation problem.  \cite{ckps16} shows a positive answer to this problem. They provide a polynomial time algorithm to solve this problem. However, both sample complexity and running time in \cite{ckps16} have a huge polynomial factor in $k$.  The major goal of our work is to significantly improve those polynomial factors.

\section{Technical Overview}

\subsection{High-level approach}
The high-level approach of Fourier Interpolation (also Fourier Signal reconstruction) has two steps: frequency estimation  and signal estimation (also called signal recovery or Fourier set query). This work mainly contributes to the first frequency estimation step. %

\paragraph{Filters and \textsc{HashToBins}}
The core technique in Fourier sparse recovery and interpolation algorithms is filtering.  %
There are two kinds of filters we are using. The first filter function applied to the signal is $H(t)$ (Figure~\ref{fig:H_filter_intro}), which is the bounded band limit approximation of the rectangular window function $\rect_T(t)$. Intuitively, since the time duration is restricted to $[0, T]$, we should view the ground truth signal as $ x^*(t) \cdot \rect_T(t)$. However, handling $\wh{\rect}_T(f)$ is not easy due to its unbounded support in the frequency domain. Therefore, we use $H(t)$ instead, which truncates the frequency domain of $\rect_T(t)$ and makes the analysis much easier. 

Another kind of filters we use is $G^{(j)}_{\sigma, b}(t)$ (Figure~\ref{fig:G_filter_intro}), which ``isolates'' the signal through the procedure \textsc{HashToBins} and extracts the one-cluster signal in the $j$-th bin. More specifically, \textsc{HashToBins} divides the frequency domain into $B=O(k)$ bins. %
We can show that with high probability over the randomized hashing function, each bin contains a single cluster of frequencies. Hence, in the following frequency estimation step, we can just focus on recovering the frequency of a one-cluster \emph{filtered signal} in each bin $j\in [B]$:
\begin{align*}
    z_j(t)=(x\cdot H)(t) *G^{(j)}_{\sigma, b}(t).
\end{align*}

\begin{figure}[!ht]
    \centering
    \subfloat[Time domain filter $H(t)$.\label{fig:H_filter_intro}]{
        \includegraphics[width=\textwidth]{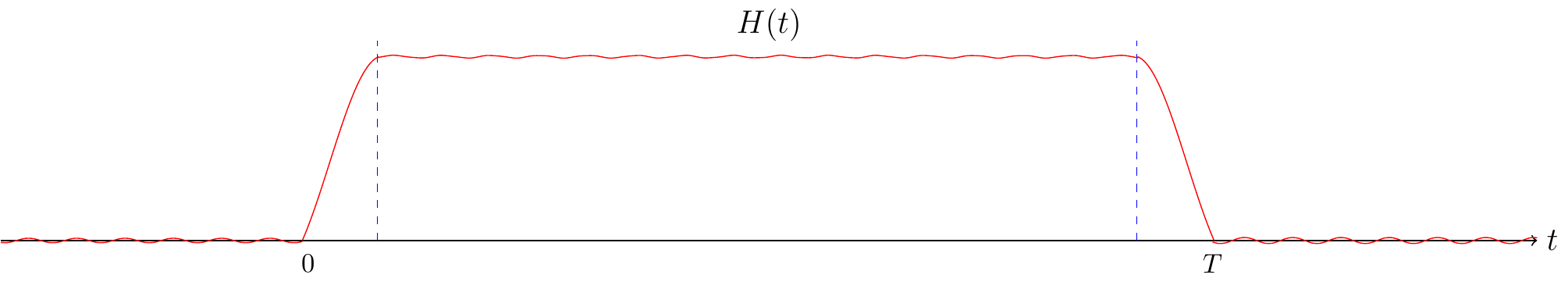}}
    \hfill  
    \subfloat[Frequency domain filter $\hat{G}_{\sigma,b}^{(j)}(f)$ for the $j$-th bin.\label{fig:G_filter_intro}]{
        \includegraphics[width=\textwidth]{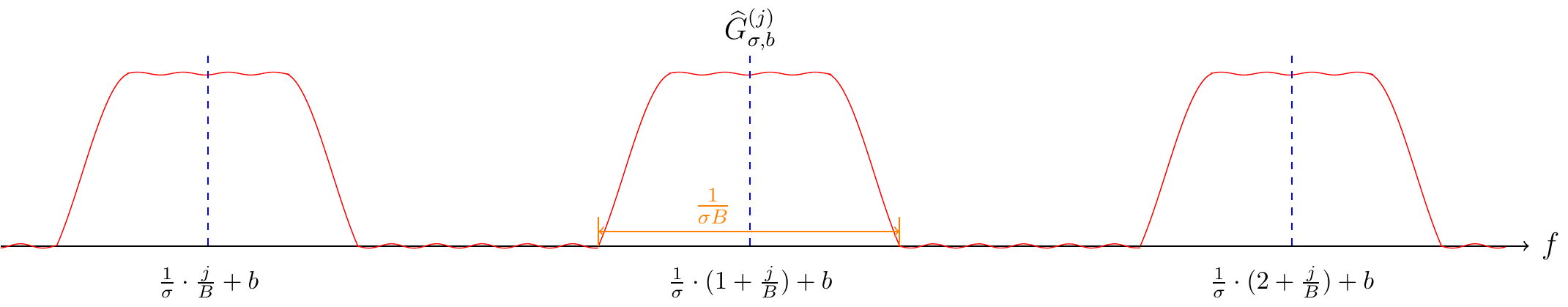}}
    \caption{Time and frequency domain filters.}
\end{figure}

\paragraph{Frequency Estimation} This step is the main focus on this work.
To estimate the frequencies, our algorithm has two levels. The first level generates significant samples of the \emph{local-test signal}:
\begin{align*}
    d_z(t)=z(t)e^{2\pi\i f^* \beta}-z(t+\beta),
\end{align*}
where $ z(t)=z_j(t)$ is the filtered signal in the $j$-th bin and $\beta$ is a perturbation parameter.
A time point $\alpha\in [0,T]$ is defined to be significant with respect to the target frequency $f^*$ if $|d_z(\alpha)| $ is small.  In this case, $z(\alpha+\beta)/z(\alpha)$ is a good approximation of $e^{2\pi\i f^* \beta}$,%
which further implies the target frequency $f^*$. %
The second level is a searching algorithm that iteratively estimates the target frequency $f^*$. In each iteration, it calls the significant sample generation algorithm and uses the significant sample to narrow the possible range of the target frequency until reaching the desired accuracy. %
Based on the two-level strategy, we design an efficient, high-accuracy frequency estimation algorithm, improving the time complexity, sample complexity, and the estimation error of the frequency estimation algorithms in previous works \cite{ckps16, cp19_icalp}. The theorem is stated as follows.

\begin{theorem}[Frequency estimation, Informal version of Theorem~\ref{thm:frequency_recovery_k_better}]\label{thm:freq_est_intro}
There exists an algorithm takes $O(k^2 \log(1/\delta) \log(FT))$ samples, runs in $O(k^2 \log(1/\delta) \log^2 (FT))$ time, returns a set $L$ of $O(k)$ frequencies such that with probability $1- \rho_0$, for any ``important frequency'' $f$, there exists an $\wt{f} \in L$ satisfying
\begin{equation*}
|f-\widetilde{f} | \lesssim \Delta,
\end{equation*}
where $\Delta=k\cdot |\supp(\wh{H})|$, where $\wh{H}$ is the Fourier transform of $H$.
\end{theorem}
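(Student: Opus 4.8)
The goal is to turn the two-level strategy sketched above into a quantitative algorithm, so I would proceed in three stages followed by the accounting.

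\textbf{Stage 1: reduce to one-cluster bins and establish the high-SNR band condition.} First I would fix a random hashing of the frequency axis (a random dilation together with a random shift) and, using the filter $G^{(j)}_{\sigma,b}$, form the bin signals $z_j(t) = (x\cdot H)(t)*G^{(j)}_{\sigma,b}(t)$ for $j\in[B]$ with $B=O(k)$. From the pass-band/stop-band estimates on $\wh G$ and $\wh H$, together with a standard hashing argument (two frequency clusters that are $\gtrsim |\supp\wh H|$ apart fall into different bins with constant probability, while within-cluster frequencies stay together), I obtain: with probability $\ge 1-\rho_0/10$ every bin receives at most one cluster of the $\le k$ frequencies of $x^*$, and in the bin holding an ``important'' frequency $f$ the signal $z_j$ equals, up to $H$-leakage plus the hashed copy of $g$, one cluster of width $O(|\supp\wh H|)$ centered near $f$. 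The crucial point is the \emph{high-SNR band condition}: since the frequency is important, its cluster energy dominates the in-bin noise, and combining this with an $\ell_\infty$-versus-$\|\cdot\|_T$ inequality for $O(k)$-Fourier-sparse signals yields a set $S\subseteq[0,T]$ of measure $\ge(1-1/\poly(k))T$ on which $|z_j(t)|\gtrsim \|z_j\|_T$. (Controlling the cluster's internal structure is where the Signal Equivalent Method is invoked.)

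\textbf{Stage 2: significant samples and the local test.} For a trial frequency $f^*$ and perturbation $\beta$ set $d_z(t)=z_j(t)e^{2\pi\i f^*\beta}-z_j(t+\beta)$. When $f^*$ lies in the cluster and $\beta\lesssim 1/(k|\supp\wh H|)$, the near-tone structure of $z_j$ forces $\|d_z\|_T\ll\|z_j\|_T$ (and the bound degrades only mildly when $f^*$ is merely within $\Delta$ of the true frequency, which is all the output needs). By Markov the ``significant'' times — those with $|d_z(\alpha)|$ small — fill a $(1-1/\poly(k))$-fraction of $[0,T]$; intersecting with $S$ from Stage 1 and drawing $O(\log(1/\delta))$ times $\alpha$ uniformly in $[0,T]$, with probability $\ge 1-\delta$ one $\alpha$ is simultaneously significant and SNR-good, and for it $z_j(\alpha+\beta)/z_j(\alpha)=e^{2\pi\i f^*\beta}$ up to small additive error, which pins down $f^*$ modulo $1/\beta$.

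\textbf{Stage 3: iterative search and accounting.} Starting from $[-F,F]$ I run $O(\log(FT))$ rounds; in round $i$ I choose $\beta_i$ of order (current interval length)$^{-1}$ (so there is no wrap-around ambiguity inside the interval), run the Stage-2 batch, extract a significant sample, and shrink the candidate interval by a constant factor, until its length is $\lesssim \Delta=k|\supp\wh H|$; the midpoints over the $B=O(k)$ bins form $L$. Each evaluation of $z_j$ costs $\wt O(k)$ samples of $x$ (the time-domain filter has $\wt O(k)$ support) and simultaneously yields all $B$ bin values, but because the $\beta_i$ and candidate intervals differ across bins the batches are not shared, giving $\wt O(k)\cdot O(\log(1/\delta))\cdot O(\log(FT))\cdot O(k)=O(k^2\log(1/\delta)\log(FT))$ samples; the running time carries an extra $O(\log(FT))$ factor from the filter and search arithmetic. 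Finally, boosting each bin's success probability to $1-\rho_0/O(k)$ by $O(\log(k/\rho_0))$-fold repetition and a median of the per-trial estimates, then union-bounding over the $O(k)$ bins and the hashing event, gives overall success probability $1-\rho_0$.

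\textbf{Main obstacle.} The delicate step is Stage 1: making the high-SNR band condition quantitative. One must bound the $H$-leakage into a bin, the hashed additive noise, and the contribution of any foreign clusters in the rare bad event, and then prove $|z_j|\gtrsim\|z_j\|_T$ on a $(1-1/\poly(k))$-fraction of $[0,T]$ — a lower bound that cannot follow from concentration alone, since a genuine cluster (not a pure tone) can have near-zeros. Resolving this uses that the cluster width is only $O(|\supp\wh H|)$, so on the relevant time scale $z_j$ behaves like a single tone, together with the structural $\ell_\infty$-vs-$\|\cdot\|_T$ bounds for Fourier-sparse signals; keeping all of these losses polynomial in $k$ (rather than exponential, as crude extremal bounds would give) is exactly what drives the $k^2$ sample complexity.
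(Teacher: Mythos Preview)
There is a genuine gap in Stage~2, specifically the claim that $|z_j(t)|\gtrsim\|z_j\|_T$ on a $(1-1/\poly(k))$-fraction of $[0,T]$. This is false in general, and your proposed rescue (``the cluster width is only $O(|\supp\wh H|)$, so on the relevant time scale $z_j$ behaves like a single tone'') does not work: by Lemma~\ref{lem:property_of_filter_H} Property~\RN{3}, $|\supp\wh H|\cdot T=\Theta(k^2\log^2 k\log^2(1/\delta_1))\gg 1$, so the cluster is \emph{wide} on the time scale $T$, not narrow. Concretely, if the cluster contains $k$ equally spaced tones with gap $\eta\sim|\supp\wh H|/k$, then $z_j$ is a Fej\'er-type kernel whose large values occupy only a $\Theta(1/k)$-fraction of $[0,T]$; on the remaining $(1-\Theta(1/k))$-fraction, $|z_j(t)|^2\ll\|z_j\|_T^2$. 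With uniform sampling and only $O(\log(1/\delta))$ draws you will not hit the good set, so the sample budget in Stage~3 collapses.

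The paper avoids any pointwise lower bound on $|z_j|$ by a \emph{two-level} sampling procedure (Lemma~\ref{lem:significant_samples_z}, Section~\ref{sec:intro_gen_sig_samples}): first draw $O(k\log k)$ points from a non-uniform density $D_z$ matched to the energy bound of Corollary~\ref{cor:condition_number_z} (so that weighted empirical energies of both $z_j$ and $d_z$ concentrate), then sub-sample one of these points with probability proportional to $w_i|z_j(t_i)|^2$. The second level makes $\E[|d_z(\alpha)|^2/|z_j(\alpha)|^2]$ equal to a ratio of empirical energies, and Markov finishes; no lower bound on $|z_j|$ is ever needed. A secondary accounting issue: the $\beta$'s and search intervals \emph{are} shared across bins (all bins start from $[-F,F]$ and shrink synchronously; see Algorithm~\ref{alg:pre_compute}), so one call to \textsc{GenerateSignificantSamples} serves all $B$ bins at once via \textsc{HashToBins}/FFT (Lemma~\ref{lem:hashtobins}). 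The $\wt O(k^2)$ per round then comes from $\wt O(k)$ time-samples of $z$ times $\wt O(k)$ filter taps per evaluation, not from an extra factor of $B$.
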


\paragraph{Signal Estimation}
In signal estimation, a set of estimated frequencies of $y(t)$ has been found, and it remains to interpolate the signal under these frequencies. %
This is often done via \emph{set-query} techniques \cite{pri11}. This step is not the focus of this paper, and more discussions can be found in \cite{ckps16, sswz22}.
\footnote{We stress that this paper is self-contained and we provide all the technical details of signal estimation in Section~\ref{sec:sig_recontr}.
}

\subsection{Our techniques for frequency estimation}

In the frequency estimation part, there are two central questions that need to be answered:
\begin{enumerate}
    \item \emph{Which frequencies or hashing bins are worth  recovering?}
    \item \emph{How to recover a key frequency in a bin?}
\end{enumerate}
Our answer to these questions substantially deviates from previous works, as we discuss below.

\paragraph{Answer to the first question:}For the first question, \cite{ckps16}'s answer is the \emph{heavy-cluster condition}, which is defined as follows:
\begin{align}\label{eq:heavy_cluster_intro}
[f^*-\Delta,f^*+\Delta]~~\text{is heavy if }~~\int_{f^*-\Delta}^{f^*+\Delta} |\wh{H \cdot x^*}(f)|^2 \mathrm{d} f \ge T \cdot \N^2/k,
\end{align}
where $\N^2:=\|g\|_T^2+\delta\|x^*\|_T^2$ represents the noisy-level of $x(t)$.
However, only considering the energy of the ground-truth signal is not enough\footnote{For example, consider the ground-truth signal $x^*(t)=ve^{2\pi \i f^*t}+ve^{2\pi \i (f^*+10\Delta)t}$ and the noise $g(t)=-ve^{2\pi\i f^* t}$. Even if $f^*\pm \Delta$ is a heavy cluster, it is impossible to recover $f^*$ from the observation $x(t)=x^*(t)+g(t)$, since $\hat{x}(f)$ is zero around $f^*$.}. 
Indeed, their algorithm only works for  ``recoverable" clusters, which are defined as:
\begin{align*}
[f^*-\Delta,f^*+\Delta]~~\text{is recoverable if }~~\int_{f^*-\Delta}^{f^*+\Delta} |\wh{H \cdot x}(f)|^2 \mathrm{d} f \ge T \cdot \N^2/k.
\end{align*}
The gap between heavy clusters and recoverable clusters is a bottleneck for improving the approximation ratio of the Fourier interpolation algorithms in \cite{ckps16} to an arbitrarily small constant.  This gap also introduces many  other technical  difficulties in designing more efficient frequency estimation algorithms.

\begin{figure}[!ht]
    \centering
    \subfloat[Low-noise band recovery: high-accuracy frequency estimation is needed.\label{fig:high_snr_b}]{\includegraphics[width=\textwidth]{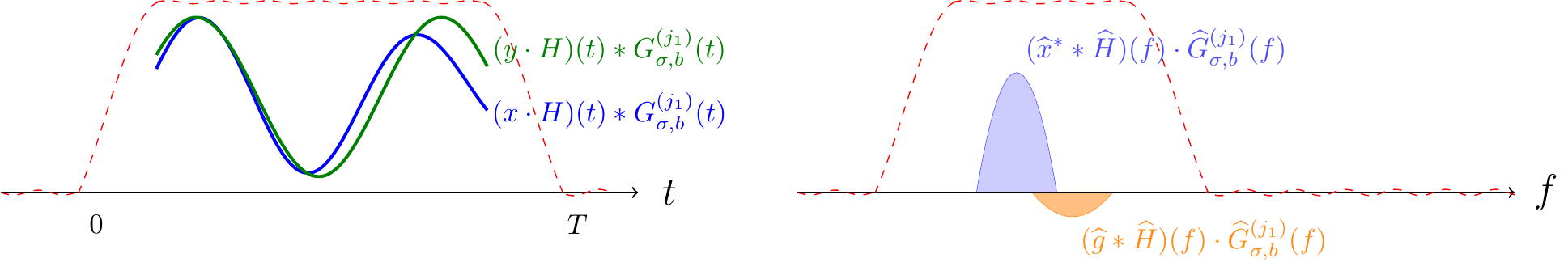}}
    \vspace{1mm}
    \subfloat[High-noise band recovery: any frequency estimation output is acceptable.\label{fig:high_snr_c}]{\includegraphics[width=\textwidth]{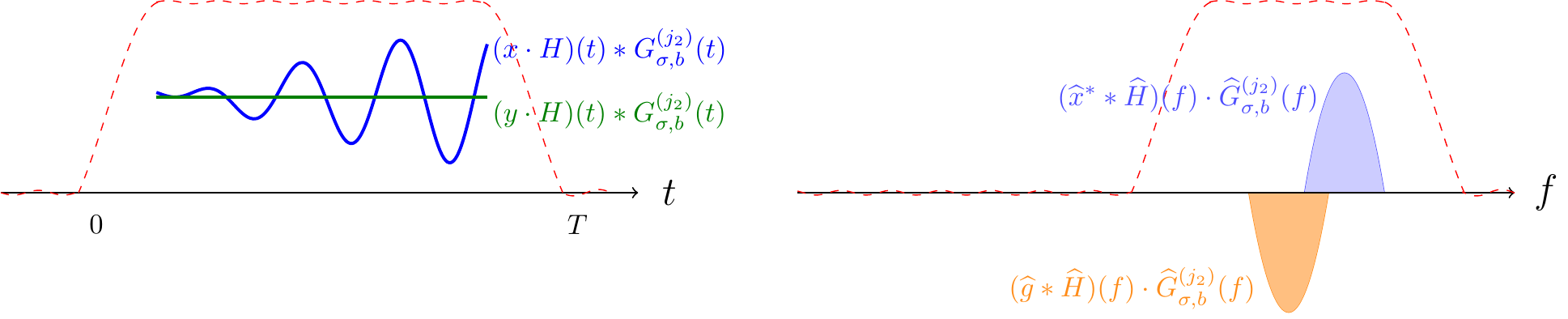}} 
    \caption{The high SNR band condition. The red curves are the filters. On the left, the blue curves are the filtered noisy observation signal in the time domain,  and the green curves are corresponding reconstructed signals. On the right, the light blue regions are the filtered frequencies of the ground-truth signal $x^*$, and the orange regions are the filtered frequencies of the noise $g$.  Figure~\ref{fig:high_snr_b} shows a high-SNR case, where we can recover a good approximation of $x^*$ in this band. Figure~\ref{fig:high_snr_c} shows an extremely low-SNR case, where $g$ has almost the same energy as $x^*$, and a trivial signal ($y(t)=\text{constant}$) suffices for the recovery of this band. %
    }
    \label{fig:high_snr}
\end{figure}

To overcome this  gap, we introduce a new criterion for the frequency bands that need to be non-trivially reconstructed, which we call the \emph{high signal-to-noise ratio (SNR) band condition}. Formally, we say a hashing bin $j\in [B]$ has a high SNR if the filtered signal $z^*_j(t)=(x^*\cdot H)*G_{\sigma,b}^{(j)}$ satisfies:
\begin{align}\label{eq:high_snr_intro}
    \|(g\cdot H)*G_{\sigma,b}^{(j)}(t)\|_T^2\leq c\cdot \|z_j^*(t)\|_T^2,
\end{align}
where $c$ is a universal small constant.  Our frequency estimation algorithm focuses solely on recovering \emph{heavy frequencies in high-SNR bins}. The intuition behind this condition is as follows: if the noise in a band (i.e., $ (g\cdot H)*G^{(j)}_{\sigma, b}(t)$) is too large, then we can simply use an all-zero signal as the reconstruction of the filtered signal. We show this new condition brings many advantages for designing more efficient frequency estimation algorithms. %
In particular, we show that the remaining frequencies in the low-SNR bins are inconsequential for the reconstruction error, and ignoring them in the signal estimation can still achieve the approximation guarantee of Fourier interpolation.\footnote{We remark  our algorithm never  attempts to decide whether a bin satisfies the high-SNR condition or not, but rather assumes all bins are ``good". The low-SNR bins may therefore produce totally wrong frequency estimates. However, for accurate signal estimation, we only need to guarantee that all the good frequencies are reconstructed by the frequency estimation algorithm, so even if the output set contains some wrong frequencies, they can be simply ignored.} %

\begin{figure}[!ht]
    \centering
    {\includegraphics[width=0.6\textwidth]{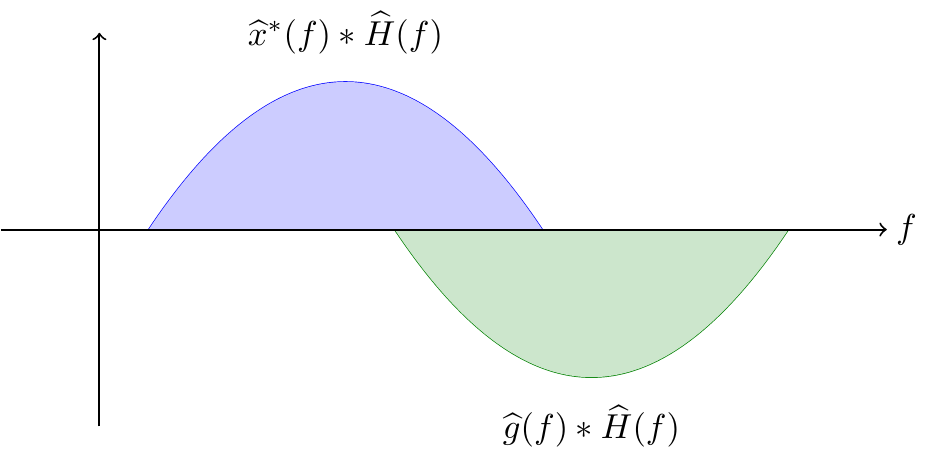}}
\caption{A case that violates our high SNR band assumption but \cite{ckps16} tries to recover. %
$\wh{x}^*(f)*\wh{H}(f)$ (in blue) is the filtered ground-truth signal, %
and $\wh{g}(f)*\wh{H}(f)$ (in green) is the filtered noise. This signal does not satisfy the high SNR band condition since the noise $\wh{g}(f)*\wh{H}(f)$ is too strong. However, the combined signal $(\wh{x}^*(f)+\wh{g}(f))*\wh{H}(f)$ still satisfies the recoverable-cluster condition since it has enough energy in the frequency domain.
}\label{fig:case_ckps_sub_ours}
    
\end{figure}

\paragraph{Answer to the second question:}
As we discussed earlier, the key to answering this question is our novel ``significant-samples"  generation procedure (which produces samples $\alpha$ such that $|z(\alpha)e^{2\pi\i f^* \beta}-z(\alpha+\beta)| $ is small, where $z(t)$ is the filtered signal and $\beta$ is a parameter). This is the content of the following lemma.

\begin{lemma}[Significant Sample Generation, Informal version of Lemma \ref{lem:significant_samples_for_each_bins}]
\label{lem:significant_samples_for_each_bins:main}
There is a Procedure \textsc{GenerateSignificantSamples} in Algorithm \ref{algo:significant_samples_z} such that for $\beta \leq O(1/\Delta)$, it takes $\wt{O}(k^2)$
samples in $x(t)$ and runs in 
$\wt{O}(k^2)$
time. For each frequency $f^*$ with $j := h_{\sigma, b}(f^*)$, if the $j$-th bin has ``high SNR'', and $f^*$ is ``heavy'', then the output $\alpha_j$ satisfies:
\begin{equation*}
|z_j(\alpha_j + \beta) - z_j(\alpha_j)e^{2 \pi \i f^* \beta}|^2 \le 0.01  |z_j(\alpha_j)|^2,
\end{equation*}
with a high constant probability, where $z_{j}(t):=(x\cdot H)*G^{(j)}_{\sigma, b}(t)$.
\end{lemma}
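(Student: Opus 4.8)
The plan is to produce $\alpha_j$ by \emph{energy-proportional sampling}: draw $\alpha_j\in[0,T]$ from the density $\mu_j(t)\propto|z_j(t)|^2$, then show such a draw is a significant sample with high constant probability. The analysis has four parts: (i) bound the local-test signal of the \emph{clean} filtered signal $z_j^*:=(x^*\cdot H)*G^{(j)}_{\sigma,b}$ in $T$-norm; (ii) add back the noise $\nu_j:=(g\cdot H)*G^{(j)}_{\sigma,b}$ using the high-SNR condition; (iii) convert the resulting $T$-norm bound $\|d_{z_j}\|_T\lesssim\|z_j\|_T$ into the desired \emph{pointwise} bound by a Markov argument against $\mu_j$; and (iv) implement the sampling within the stated $\wt O(k^2)$ sample/time budget via \textsc{HashToBins}.

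For (i), split $z_j=z_j^*+\nu_j$. By the one-cluster guarantee of the hashing (which holds with high probability over $\sigma,b$) together with the Signal Equivalent Method, $z_j^*$ is, up to an additive $T$-norm error that the filter parameters render negligible, a genuine $k$-Fourier-sparse signal whose frequencies all lie in $[f^*-\Delta,f^*+\Delta]$. Writing such a signal as $z_j^*(t)=e^{2\pi\i f^* t}w(t)$ with $\wh w$ supported in $[-O(\Delta),O(\Delta)]$, we get $d_{z_j^*}(t)=e^{2\pi\i f^*(t+\beta)}\big(w(t+\beta)-w(t)\big)$, so $|d_{z_j^*}(t)|=|w(t+\beta)-w(t)|\le\int_0^\beta|w'(t+s)|\,\d s$. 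A Bernstein-type inequality for band-limited signals on the window $[0,T]$ (valid since $T$ is large relative to $\Delta^{-1}$) gives $\|w'\|_T\le O(\Delta)\,\|w\|_T$, hence $\|d_{z_j^*}\|_T\le O(\beta\Delta)\,\|z_j^*\|_T$; taking the hidden constant in $\beta\le O(1/\Delta)$ small enough makes this $\le\tfrac1{100}\|z_j^*\|_T$.

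For (ii), $|d_{\nu_j}(t)|\le|\nu_j(t+\beta)|+|\nu_j(t)|$ gives $\|d_{\nu_j}\|_T\le 2\|\nu_j\|_T$ (after the negligible enlargement of the domain to $[0,T+\beta]$), and the high-SNR condition \eqref{eq:high_snr_intro} yields $\|\nu_j\|_T\le\sqrt c\,\|z_j^*\|_T$. Combining, $\|d_{z_j}\|_T\le\|d_{z_j^*}\|_T+\|d_{\nu_j}\|_T\le(\tfrac1{100}+2\sqrt c)\|z_j^*\|_T$, and since $\|z_j\|_T\ge\|z_j^*\|_T-\|\nu_j\|_T\ge(1-\sqrt c)\|z_j^*\|_T$ we get $\|d_{z_j}\|_T^2\le\gamma^2\|z_j\|_T^2$ for a small universal constant $\gamma$. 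Heaviness of $f^*$ (see \eqref{eq:heavy_cluster_intro}) is used only to ensure $\|z_j^*\|_T^2\gtrsim\N^2/k$ dominates the additive approximation errors from (i), so that they are absorbed into the constants; the same estimate applies to any frequency of the cluster, not just a fixed representative $f^*$. For (iii), with $\mu_j$ having density $|z_j(t)|^2/(T\|z_j\|_T^2)$ on $[0,T]$ we have $\E_{\alpha\sim\mu_j}\big[|d_{z_j}(\alpha)|^2/|z_j(\alpha)|^2\big]=\|d_{z_j}\|_T^2/\|z_j\|_T^2\le\gamma^2$, so by Markov $\Pr_{\alpha\sim\mu_j}\big[|d_{z_j}(\alpha)|^2>0.01\,|z_j(\alpha)|^2\big]\le 100\gamma^2$, a small constant, which is the claimed conclusion for $\alpha_j:=\alpha$.

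For (iv), a single call to \textsc{HashToBins} on $\wt O(k)$ samples of $x$ evaluates all $z_j(\alpha)$ at once, so I realize $\mu_j$ by rejection sampling: draw $\alpha$ uniform on $[0,T]$, evaluate every $z_j(\alpha)$, and accept bin $j$ with probability $|z_j(\alpha)|^2/M_j$ for a known $M_j\gtrsim\|z_j\|_\infty^2$; the Fourier-sparse bound $\|z_j\|_\infty^2\lesssim k\log k\cdot\|z_j\|_T^2$ makes each bin accept with probability $\ge 1/\wt O(k)$ per trial, so $\wt O(k)$ trials plus a union bound over the $O(k)$ bins produce an accepted $\alpha_j$ for every good bin with high probability — $\wt O(k^2)$ samples and $\wt O(k^2)$ time total — and approximating $M_j$ and the density to constant multiplicative/TV error only loses an additive constant in the success probability. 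The main obstacle is step (i): the clean-signal estimate must lose only a universal constant (no $\poly(k)$), which forces us to exploit the precise band-limiting of $z_j^*$ coming from the one-cluster property and the exact value $\Delta=k\,|\supp(\wh H)|$, together with a Bernstein inequality on the finite window $[0,T]$ rather than on all of $\R$; a secondary difficulty is tracking the Signal-Equivalent-Method and filter errors against the only-mildly-large lower bound $\|z_j^*\|_T^2\gtrsim\N^2/k$, and fitting the rejection sampling into the $\wt O(k^2)$ budget shared across all $O(k)$ bins.
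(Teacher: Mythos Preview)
Your high-level plan (energy-proportional sampling of $\alpha_j$ followed by a Markov argument) matches the paper's, and parts (ii)--(iii) are essentially what the paper does. There are, however, two genuine gaps.

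\textbf{The main gap is in (iv).} The inequality $\|z_j\|_\infty^2\lesssim k\log k\cdot\|z_j\|_T^2$ is false: for $k$-Fourier-sparse signals the worst-case energy bound is $k^2$, not $k\log k$, and it is attained near the endpoints of $[0,T]$ (Theorem~\ref{thm:energy_bound}; the filtered version is Corollary~\ref{cor:condition_number_z}). Consequently $M_j$ must be taken $\asymp k^2\|z_j\|_T^2$, your uniform-proposal rejection sampler accepts each bin with probability $\asymp 1/k^2$ per trial, and you need $\wt O(k^2)$ trials rather than $\wt O(k)$. Since each trial (a \textsc{HashToBins} call) costs $\wt O(k)$ samples of $x$, the total is $\wt O(k^3)$, not $\wt O(k^2)$. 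The paper's remedy is precisely to replace the uniform proposal by the non-uniform density $D_U(t)\propto\min\{(1-|2t/T-1|)^{-1},\,k\}$ (Lemmas~\ref{lem:dist_well_define} and \ref{lem:dist_well_define_plus}), chosen to match the \emph{location-dependent} energy bound $|z_j(t)|^2\lesssim\min\{k/(1-|2t/T-1|),\,k^2\}\cdot\|z_j\|_T^2$. This makes $\max_t |z_j(t)|^2/D_U(t)\lesssim k\log k\cdot T\|z_j\|_T^2$, so only $\wt O(k)$ first-level draws from $D_U$ suffice (Lemma~\ref{lem:z_energy_preserving_plus}); the second level is then the energy-proportional subsample among those $\wt O(k)$ points. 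This two-level scheme is the missing idea.

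\textbf{A secondary gap is in (i).} Your ``Bernstein on the window $[0,T]$'' for $w(t)=\sum_j v_j e^{2\pi\i f_j' t}$ with $|f_j'|\le\Delta$ is not a standard fact: such $w$ is not in $L^2(\R)$, and when the $f_j'$ are clustered (no gap assumption here) the exponentials on $[0,T]$ need not be even approximately orthogonal, so the coefficient-wise argument $\|w'\|_T\le O(\Delta)\|w\|_T$ does not go through directly. The paper sidesteps this by never leaving the genuine $L^2$ signal $z_j^*$: Claim~\ref{clm:est_freq_z_in_expectation} applies Parseval to $\int_{\R}|\wh{z_j^*}(f)|^2\,|e^{2\pi\i f\beta}-e^{2\pi\i f_0\beta}|^2\d f$, bounds the phase factor by $O(\beta\Delta)$ on the in-cluster part and uses the filter tail on the rest, and then invokes the time-domain concentration Lemma~\ref{lem:full_proof_of_3_properties_true_for_z} to pass from $\|\cdot\|_{L^2(\R)}$ back to $\|\cdot\|_T$. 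That concentration lemma is exactly the substitute for the finite-window Bernstein you are assuming.
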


We first sketch the proof of Theorem~\ref{thm:freq_est_intro} using Lemma~\ref{lem:significant_samples_for_each_bins:main}.  Intuitively, if $z(t)$ is exactly one-sparse, i.e., $z(t)=e^{2\pi\i f^* t}$, then we have $z(t)e^{2\pi\i f^* \beta}-z(t+\beta)=0$, and $\frac{z(t+\beta)}{z(t)}$ gives the exact value of $e^{2\pi\i f^*\beta}$. More generally, by the guarantee of the significant sample, that ratio can well-approximate $e^{2 \pi \i f^* \beta}$, which gives a good estimate of ${f^* \beta} \mod{1}$ in a small constant range:
\begin{align*}
    f^*\approx \frac{1}{2\pi \beta}\Big(\arg\Big(\frac{z(\alpha+\beta)}{z(\alpha)}\Big) + 2\pi s\Big)
\end{align*}
for some unknown $s\in \mathbb{Z}$.
To determine $s$, we use a search technique to narrow down the potential range of $f^*$ from $[-F,F]$ to $[f^*-\Delta,f^*+\Delta]$. In each iteration, we divide the region of interest into $\num=O(1)$ regions, %
and repeatedly run the Procedure \textsc{GenerateSignificantSamples} with several different $\beta$ and pick up the heavy-hitter among all possible regions, which can exponentially increase the success probability of finding the correct interval. Now, we consider the costs of this process. The initial frequency range is $[-F, F]$, and  in the last iteration, the frequency range is $[f^*-\Theta(\Delta),f^*+\Theta(\Delta)]$. %
Thus, we can take the number of iterations to be $O(\log(F/\Delta))\leq O(\log(FT))$. In each iteration, we call Procedure \textsc{GenerateSignificantSamples} for $O(\log\log(F/\Delta) )\leq O(\log\log(FT))$ times. Note that each run of Procedure \textsc{GenerateSignificantSamples} can generate significant samples for all $B$ bins. Therefore, by Lemma~\ref{lem:significant_samples_for_each_bins:main}, the total time and sample complexity for frequency estimation is $\wt{O}(k^2)\cdot O(\log(FT)) \cdot O(\log\log(FT))=\wt{O}(k^2)$. %

\begin{algorithm}[!ht]\caption{Frequency Estimation Algorithm, Informal version of Algorithm \ref{alg:pre_compute},  \ref{alg:locateksignal_locatekinner}, and \ref{algo:freqEstX}}
\begin{algorithmic}[1]
\Procedure{FrequencyEstimationX}{$x, (\sigma, b)$} 
\For{$j\leftarrow [B]$}
\State $\wt{f}_j \leftarrow\textsc{FrequencyEstimationZ}(x, H, G^{(j)}_{\sigma, b})$ \Comment{recover the heavy frequency of $z^{(j)}$} 
\State $L\gets L\cup \{\wt{f}_j\}$
\EndFor
\State \Return $L$
\EndProcedure

\Procedure{$\textsc{FrequencyEstimationZ}$}{$x,H,G^{(j)}_{\sigma, b}$}
	\State $\num \leftarrow O(1)$\Comment{$\num$-ary search in each iteration}
	\State $D \leftarrow O(\log (\frac{FT}{ \Delta} ))$\Comment{number of iterations}
	\State $\L_1\leftarrow -F$, $\len_1 \leftarrow 2F$\Comment{initial searching interval $[\L_1, \L_1+\len_1]$}
	\For{$d\in [D]$}
	\State  $\L_{d+1} \leftarrow  \textsc{ArySearch}$($x,H,G^{(j)}_{\sigma, b}, \L_d, \len_d,  \num$)\Comment{new searching interval's left-end}
	\State $\len_{d+1}\leftarrow 5 \frac{\len_{d}}{ \num} $\Comment{new searching interval's length}
	\EndFor
	\State \Return $\L_{D+1}$
\EndProcedure

\Procedure{$\textsc{ArySearch}$}{$x,H,G^{(j)}_{\sigma, b}, \L_i, \len_i,  \num$}
	\State $I_q\leftarrow [\L_d+(q-1)\len_d/\num, \L_d+q\len_d/\num]$ for $q\in [\num]$\Comment{candidate regions}
	\State $v_{q}\leftarrow 0$ for $q\in [\num]$ \Comment{votes counter}
	\State $R \leftarrow O(\log({\log(FT)})) $ 
	\For {$r=1 \to R$}
		\State Sample $\beta \sim \text{Uniform}([\frac{1}{2}\wh{\beta}, \wh{\beta} ])$ for $\wh{\beta}=O(\frac{\num}{\len_d})$\Comment{perturbation}
        \State $z(\alpha+\beta), z(\alpha) \leftarrow \textsc{GenerateSignificantSamples}(x,H,G^{(j)}_{\sigma, b})$ \Comment{significant sample}
	    \State $\wt{S} \leftarrow \frac{1}{2\pi\beta} ( \arg(\frac{z(\alpha+\beta)}{ z(\alpha)}) +2\pi \Z)$\Comment{all possible frequencies}
	    \State $\wt{I} \leftarrow \{q\in [\num]~|~ I_q \cap \wt{S} \neq \emptyset  \}$\Comment{all possible regions}
	    \State $ v_{q}\leftarrow v_q + 1$ for $q\in \wt{I}$\Comment{add votes to these regions}
	\EndFor
\State \Return  $\L_d+(q-1)\len_d/\num$ for any $q$ such that $ v_{q} +  v_{q+1} + v_{q+2} \geq R / 2$
\EndProcedure
\end{algorithmic}
\end{algorithm}

Then, we sketch the proof of Lemma~\ref{lem:significant_samples_for_each_bins:main}, which contains three parts:
\begin{enumerate}[label=\Roman*.]
    \item A two-level sampling procedure (see Section~\ref{sec:intro_gen_sig_samples}).
    \item Energy estimation and Signal Equivalent Method (see Section~\ref{sec:energy_est_intro}). 
    \item Time-domain concentration of filtered signals (see Section~\ref{sec:intro_concentration}).
\end{enumerate}
 
\subsubsection{Two-level sampling for significant samples generation}\label{sec:intro_gen_sig_samples}

We may assume that in frequency domain, the energy of $\hat{z}(f)$ is concentrated around $f^*$: %
\begin{equation*} \int_{f^*-\Delta}^{f^*+\Delta} | \widehat{z}(f) |^2 \mathrm{d} f \geq 0.7 \int_{-\infty}^{+\infty} | \widehat{z}(f) |^2 \mathrm{d} f.
\end{equation*} 
This is a very natural and necessary assumption for the frequency estimation problem. \footnote{For the filtered signals that do not satisfy the frequency domain energy concentration assumption, it basically means that they do not contain enough information to recover $f^*$, and we can just ignore those ``useless'' clusters.}
Then we can show that:
\begin{align}\label{eq:freq_concentration_intro}
\|z(t)e^{2\pi\i f^* \beta}-z(t+\beta)\|_T^2 < \gamma \|z(t)\|_T^2
\end{align}
where $\gamma\in (0,0.001)$ is a small constant.
We show how to find an $\alpha$ such that $|z(\alpha)e^{2\pi\i f^* \beta}-z(\alpha+\beta)|^2 < \gamma |z(\alpha)|^2$. For ease of discussion, we scale the time domain from $[0, T]$ to $[-T, T]$. 

The main idea is to use a two-level sampling procedure, which is motivated by \cite{cp19_icalp}. In the first level, we take a set $S=\{t_1,\dots,t_s\}$ of $O(k\log(k))$ i.i.d. samples from the following distribution: 
\begin{align}
    D_z(t) =
\begin{cases}
{c}\cdot (1-|t/T| )^{-1}T^{-1} & \text{if } |t|\leq  T(1-1/k)\\
c \cdot  k T^{-1} & \text{if } |t|\in [T(1-{1}/k), T] 
\end{cases}~~~\forall t\in U,  \label{eq:tech_ov:D_z}
\end{align}
where $U= \{t_0\in \R~|~ H(t) > 1-\delta_1~ \forall t\in [t_0,t_0+\beta]\}$. Then, we assign weights $w_i:= 1 / (2T |S| D_z(t_i))$ 
for each sample $t_i\in S$.

In the second level of the sampling procedure, we sub-sample a $t_i$ from the set $S$ as the output according to the following distribution:
\begin{align*}
    D_S(t_i) = 
        \frac{ w_i \cdot |z(t_i)|^2}{\sum_{j\in [s]}  w_j\cdot |z(t_j)|^2} ~~~\forall i \in [s].
\end{align*}

Now, we explain why the two-level sampling procedure works. By the energy estimation method discussed in Section~\ref{sec:energy_est_intro}, we know that:
\begin{align*}
    \|z(t)\|_T^2 \approx & \|z(t)\|_{S,w}^2:=\sum_{i=1}^s w_i\cdot |z(t_i) |^2,~~\text{and}\\
    \|z(t)e^{2\pi\i f^* t}-z(t+\beta)\|_T^2 \approx& \|z(t) e^{2 \pi \i f^* \beta} - z(t+\beta)\|_{S,w}^2:=\sum_{i=1}^s w_i\cdot |z(t_i) e^{2 \pi \i f^* \beta} - z(t_i+\beta)|^2.
\end{align*}
The second level of the sampling procedure ensures that
\begin{align*}
\E_{t\sim D_S}\left[ \frac{|z(t) e^{2 \pi \i f^* \beta} - z(t+\beta)|^2}{|z(t)|^2} \right]
=  &~ \frac{\sum_{i=1}^s w_i |z(t_i) e^{2 \pi \i f^* \beta} - z(t_i+\beta)|^2}{\sum_{j=1}^s w_j|z(t_j)|^2}\\  %
=  &~ \frac{\|z(t) e^{2 \pi \i f^* \beta} - z(t+\beta)\|_{S,w}^2}{\|z(t)\|_{S,w}^2}.
\end{align*}
Hence, we get that
\begin{align*}
    \E_{t\sim D_S}\left[ \frac{|z(t) e^{2 \pi \i f^* \beta} - z(t+\beta)|^2}{|z(t)|^2} \right] \approx 
    \frac{ \|z(t) e^{2 \pi i f^* \beta} - z(t+\beta)\|_T^2}{\|z(t)\|_T^2}< \gamma,
\end{align*}
where the last step follows from Eq.~\eqref{eq:freq_concentration_intro}.
Then by Markov's inequality, we get that the sample $\alpha$ generated by the two-level sampling procedure satisfies $|z(\alpha) e^{2 \pi i f^* \beta} - z(\alpha+\beta)|^2\lesssim \gamma |z(\alpha)|^2$ with high probability. 

The costs of this two-level sampling procedure are calculated as follows. 
In the first level, we takes $|S|=\wt{O}(k)$ samples from $z(t)$, where each sample $ z(t_i)=((x\cdot H) *G^{(j)}_{\sigma, b})(t_i)$ can be computed by $|\supp(G^{(j)}_{\sigma, b}(t))| = \wt{O}(k)$ samples from $x(t)$ in $\wt{O}(k)$ time. Thus, the total time and sample complexity for the first level sampling procedure is $ \wt{O}(k)\cdot \wt{O}(k) =\wt{O}(k^2)$. In the second level, we further select one sample from the output of the first level, which can be done in $ \wt{O}(|S|)=\wt{O}(k)$ times and does not need any new sample.  

We further discuss how large $\beta$ we can choose in the sampling procedure since it controls the estimation accuracy of $f^*$.\footnote{By comparing $z(t+\beta)$ and $z(t)$, we get an estimate of $f^*\beta$ within some error $\pm b$, which implies an estimate of $f^*$ within an error $\pm b/\beta$. Hence, larger $\beta$ gives a higher accuracy of the frequency estimation. }
We note that the range of $\beta$ is determined by Eq.~\eqref{eq:freq_concentration_intro}, %
which is an underlying assumption of our sampling procedure. 
To satisfy this inequality, we need to guarantee that $|e^{2\pi\i f^* \beta} -e^{2\pi\i f \beta} |\leq  \gamma$ for any $f\in f^*\pm \Delta$, which implies that $\beta \leq O(\gamma/\Delta)$. For comparison, the upper bound of $\beta$ in \cite{ckps16} is only $O(\gamma/(\Delta\sqrt{\Delta T}))$ due to a stronger accuracy requirement there.\footnote{\cite{ckps16} give an $\ell_1$-norm error guarantee in the frequency domain, i.e., $\int_{f^*-\Delta}^{f^*+\Delta} |e^{2\pi\i f^* \beta}-e^{2\pi\i f \beta}| \d f $ is small. To obtain an $\ell_2$-norm guarantee (like Eq.~\eqref{eq:freq_concentration_intro}), they need to apply Cauchy-Schwarz inequality, which results in an extra $\sqrt{\Delta T}$ factor in their upper bound of $\beta$.} %

\subsubsection{Energy estimation and Signal Equivalent Method}\label{sec:energy_est_intro}
In this section, we show that the sampling and reweighing method we use in the significant sample generation procedure can accurately estimate the energy of $z(t)$ and $z(t)e^{2\pi\i f^* t}-z(t+\beta)$ with a sample complexity almost reaching the information-theoretic limit. 
\begin{lemma}[Informal version of Lemma~\ref{lem:significant_samples_z_below} and Lemma~\ref{lem:significant_samples_z_above}]\label{lem:energy_est_intro}
Suppose $f^*$ is a heavy frequency hashed to the $j$-th bin which satisfies the high SNR condition. Let $z^*(t)=(x^*\cdot H)*G_{\sigma,b}^{(j)}$ and $z(t)=(x\cdot H)*G_{\sigma,b}^{(j)}$. Let $U\subseteq [0,T]$ be an interval. Let $S=\{t_1,\dots,t_s\}$ be a set of $O(k\log(k))$ i.i.d. samples from the distribution $D$ defined by Eq.~\eqref{eq:tech_ov:D_z} with weights $w_i=1/(TsD(t_i))$. Then, with probability at least 0.8,
\begin{align*}
    \|z(t)\|_{S,w}^2\gtrsim \|z^*(t)\|_U^2~~\text{and}~~\|z(t)e^{2\pi\i f^* t}-z(t+\beta)\|_{S,w}^2 \lesssim \|z^*(t)\|_U^2,
\end{align*}
where $\|z(t)\|_U^2=(1/|U|)\cdot \int_U |z(t)|^2{\rm d}t$.
\end{lemma}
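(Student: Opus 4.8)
The plan is to prove the two bounds via a sampling/concentration argument based on the distribution $D$ of Eq.~\eqref{eq:tech_ov:D_z}, exploiting the fact that $D$ is designed so that the weighted samples give an unbiased, low-variance estimator of the $T$-norm of any bounded-frequency Fourier-sparse signal. First I would establish the two ``population-level'' inequalities that the empirical quantities are supposed to track: namely $\|z^*(t)\|_T^2 \gtrsim \|z^*(t)\|_U^2$ (trivially, since $U\subseteq[0,T]$ and up to the normalization one controls the restricted energy by the full energy — more precisely one wants a matching lower bound on $\|z^*\|_T^2$ in terms of $\|z^*\|_U^2$, which follows from the frequency-domain energy concentration of $z^*$ around $f^*$ and a Nyquist-type argument that a one-cluster signal cannot concentrate its time-energy in a short subinterval), and $\|z(t)e^{2\pi\i f^*\beta}-z(t+\beta)\|_T^2 \lesssim \|z^*(t)\|_U^2$. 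The second of these is where the high-SNR and heaviness hypotheses enter: write $z = z^* + z^g$ where $z^g = (g\cdot H)*G^{(j)}_{\sigma,b}$; by the high-SNR band condition \eqref{eq:high_snr_intro}, $\|z^g\|_T^2 \le c\|z^*\|_T^2$, and by heaviness plus the frequency-concentration assumption \eqref{eq:freq_concentration_intro} applied to $z^*$, the local-test energy $\|z^*(t)e^{2\pi\i f^*\beta}-z^*(t+\beta)\|_T^2 < \gamma\|z^*\|_T^2$; combining via triangle inequality and relating $\|z^*\|_T^2$ back to $\|z^*\|_U^2$ gives the target.

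Next I would move from $T$-norms to the weighted empirical norms $\|\cdot\|_{S,w}^2$. The key tool is that $\|f\|_{S,w}^2 = \sum_i w_i|f(t_i)|^2$ is an unbiased estimator of $\|f\|_T^2$ (by the choice $w_i = 1/(TsD(t_i))$ and $\int_0^T$ being replaced by the importance-sampling sum over the support of $D$), and that for $f$ in the relevant class of Fourier-sparse signals filtered by $H$ and $G^{(j)}$ the estimator concentrates once $s = O(k\log k)$. I expect this concentration to be exactly (or nearly) a black-box consequence of the weighted-sampling / approximate-Carathéodory machinery for Fourier-sparse signals used earlier in the paper (the same reweighting that underlies the ``energy estimation'' claims), so I would cite that and instantiate it for the two functions $z$ (for the lower bound, where I need $\|z\|_{S,w}^2 \gtrsim \|z\|_T^2 \gtrsim \|z^*\|_T^2 - \|z^g\|_T^2 \gtrsim \|z^*\|_U^2$, using high-SNR to absorb the noise term) and $d_z(t) = z(t)e^{2\pi\i f^*\beta}-z(t+\beta)$ (for the upper bound, where I need $\|d_z\|_{S,w}^2 \lesssim \|d_z\|_T^2 \lesssim \|z^*\|_U^2$ from the population bound above). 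A mild technical point is that $d_z$ involves the shift $z(\cdot+\beta)$, so the relevant sample class is ``Fourier-sparse signals sampled at $t_i$ and $t_i+\beta$''; this is handled by the restriction to $U = \{t_0 : H>1-\delta_1 \text{ on } [t_0,t_0+\beta]\}$ built into the definition of $D$, ensuring the shifted filter values are well-controlled.

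The last step is to combine: a union bound over the (constantly many) events that each of the two weighted estimators is within a constant factor of its expectation, each holding with probability $\ge 0.9$, yields the joint conclusion with probability $\ge 0.8$. I would then track constants to make sure the $\gamma, c$ in \eqref{eq:high_snr_intro} and \eqref{eq:freq_concentration_intro} are small enough that the ``$\gtrsim$'' in the first inequality and the ``$\lesssim$'' in the second are genuine (i.e. the noise contribution and the local-test contribution are strictly dominated).

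\textbf{Main obstacle.} The hard part will be the passage from the full-interval $T$-norm to the sub-interval $U$-norm on the lower-bound side: we need $\|z^*\|_T^2 \lesssim \|z^*\|_U^2$ (up to constants and the $1/|U|$ vs $1/T$ normalization), i.e. a reverse inequality showing a one-cluster filtered signal does not hide most of its energy outside $U$. This should follow from the frequency-domain concentration of $\widehat{z^*}$ in a width-$O(\Delta)$ band together with an anti-concentration / Turán-type bound for Fourier-sparse signals (the same kind of inequality that makes the distribution $D$ a valid energy estimator in the first place), but getting the constants right — and confirming $U$ is long enough given $\beta \le O(1/\Delta)$ and the support properties of $H$ — is the delicate point, and is presumably where the detailed lemmas referenced (Lemma~\ref{lem:significant_samples_z_below} and Lemma~\ref{lem:significant_samples_z_above}) do their real work.
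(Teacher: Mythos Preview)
Your overall architecture is right, but there is a genuine gap in the lower-bound half. You propose to obtain $\|z\|_{S,w}^2 \gtrsim \|z\|_T^2$ by applying the weighted-sampling concentration machinery directly to $z$. This does not work: the concentration lemma (Lemma~\ref{lem:link_weight_sampling_size__condition_num}) requires a bound on $\max_t |z(t)|^2/D(t)$, and such an energy bound is only available for filtered signals of the form $(x^*\cdot H)*G^{(j)}_{\sigma,b}$ with $x^*$ genuinely $k$-Fourier-sparse (Corollary~\ref{cor:condition_number_z}, proved via the Signal Equivalent Method). The observed $z=(x\cdot H)*G^{(j)}_{\sigma,b}$ carries the arbitrary noise $g$, so $|z(t)|^2/D(t)$ can be unbounded and no lower-tail concentration is available for $z$ itself. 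Markov only gives upper tails, so it cannot rescue the lower bound either.

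The paper's fix is exactly the decomposition you hint at but do not carry out at the empirical level: write $z=z^*+g_z$ and handle the two pieces \emph{with different tools on the samples}. For $z^*$, apply the concentration lemma (Lemma~\ref{lem:z_energy_preserving_plus}) to get $\|z^*\|_{S,w}^2 \gtrsim \|z^*\|_U^2$; this is legitimate because $z^*$ is a filtered $k$-sparse signal and the energy bound holds. For $g_z$, compute $\E\big[\|g_z\|_{S,w}^2\big]\le \|g_z\|_T^2$ directly and use Markov for an \emph{upper} bound $\|g_z\|_{S,w}^2 \lesssim \|g_z\|_T^2 \le c\|z^*\|_U^2$. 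Then combine via $|z|^2 \ge \tfrac12|z^*|^2 - |g_z|^2$ pointwise on the sample set. The same split is used for the upper bound on $\|d_z\|_{S,w}^2$ (Lemma~\ref{lem:significant_samples_z_above}): concentration (Lemma~\ref{lem:z_diff_energy_preserving}) for $d_{z^*}$, Markov for the noise piece.

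On your ``main obstacle'': the reverse inequality $\|z^*\|_T^2 \lesssim \|z^*\|_U^2$ is indeed needed, but it is not a Tur\'an-type anti-concentration argument. In the paper it is Lemma~\ref{lem:norm_z_cut_preserve}, which follows from the pointwise energy bound Corollary~\ref{cor:condition_number_z} (itself resting on the Signal Equivalent Method and the heavy-cluster hypothesis via Lemma~\ref{lem:xHG_large_than_exp_small}), together with the fact that $|U|\eqsim T$ (Lemma~\ref{lem:relation_R_L_T}). It enters not as a separate population-level step but inside the sample-complexity calculation for the concentration of $\|z^*\|_{S,w}^2$.
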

To prove Lemma~\ref{lem:energy_est_intro}, we develop a \emph{Signal Equivalent Method}. %
Below, we sketch the proof of the first half of Lemma~\ref{lem:energy_est_intro} on the energy estimation for $z(t)$. The second half follows similar ideas. %

Energy estimation is also used in prior works \cite{ckps16, cp19_colt, cp19_icalp,sswz22}, %
where a key component is the following energy bound for the interested function family ${\cal F}$: 
\begin{align*}
    \sup_{f\in {\mathcal{F}}}\sup_{t\in [0,T]}~\frac{|f(t)|^2}{\|f(t)\|_T^2}.
\end{align*}
However, this approach is unlikely to work directly for our filtered signal $z(t)$ since it depends on the randomized hashing function. And under some hashing parameter $(\sigma, b)$, there always exists some signal $x(t)$ such that $z(t)=(x\cdot H)(t) *G^{(j)}_{\sigma, b}(t)$ is in ill-condition (e.g., the frequencies are not well-isolated, or large offset events happen). As a result, bounding $\frac{|z(t)|^2}{\|z(t)\|_T^2}$ for all $z(t)$ of the form $(x\cdot H)*G_{\sigma,b}^{(j)}(t)$ by a small number is not easy.
We bypass the issue by proving an energy bound only for those $z(t)$ under some well-hashed conditions (e.g. frequency is isolated and do not have a large offset), and showing that such a ``refined energy bound'' is still sufficient to derive the sample complexity of our algorithm.    %

\begin{figure}[t]
    \centering
    \subfloat[Signal with non-ideal filter $\wh{G}^{(j)}_{\sigma, b}(f) $.]{\includegraphics[width=0.45\textwidth]{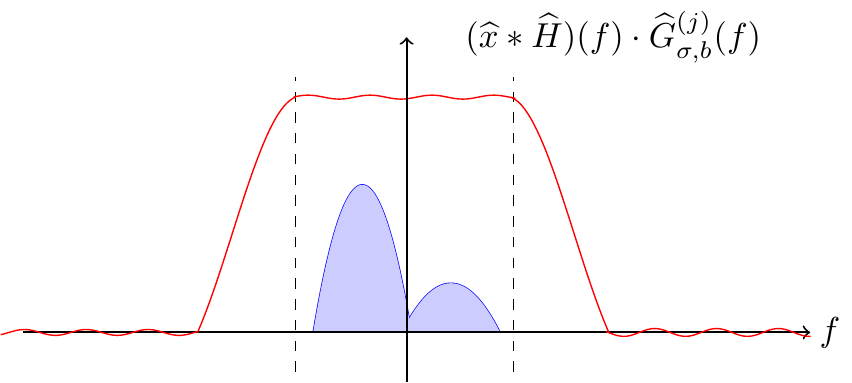}}\label{fig:sig_equ_method:1}
    \hspace{1mm}
    \subfloat[Signal with ideal filter $\wh{I}(f)$.]{\includegraphics[width=0.45\textwidth]{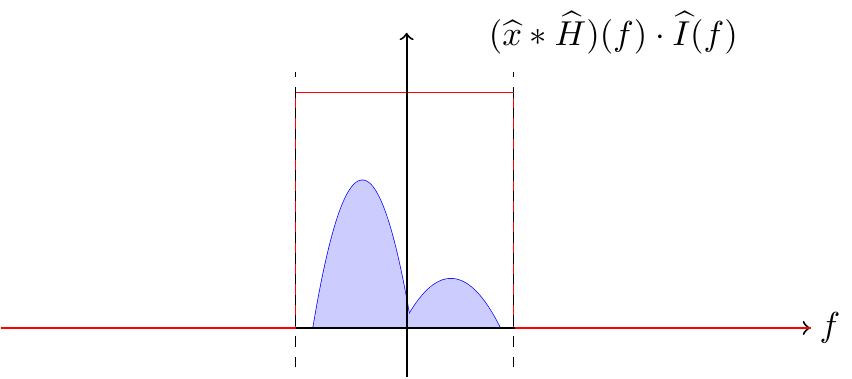}} \label{fig:sig_equ_method:2}
    \caption{The Signal Equivalent Method. This figure demonstrates that $(\wh{x}*\wh{H})(f)\cdot \wh{G}^{(j)}_{\sigma, b}(f)$ (left) can be approximated by $(\wh{x}*\wh{H})(f)\cdot \wh{I}(f)$ (right). %
    $\wh{I}(f)$ (red curve on the right) is the ideal filter that approximate $\wh{G}^{(j)}_{\sigma, b}(f)$ (red curve on the left). %
    } 
    \label{fig:sig_equ_method}
\end{figure}

The motivation of the Signal Equivalent Method comes from the special structure of $z(t)=(x\cdot H)(t)* G^{(j)}_{\sigma, b}(t)$ in the frequency domain. Notices that the observed signal $x(t)$'s Fourier transform $\wh{x}(f)$ only contains some spikes (assuming small noise). 
By convolution with $\wh{H}(f)$ (which corresponds to multiplying by $H(t)$ in the time domain), $(\wh{x}*\wh{H})(f)$ fattens the spikes in the frequency domain (and by Parseval's theorem, the area of the signal in frequency domain equals to its energy). Then, convolution with $G^{(j)}_{\sigma, b}(t)$ ``zooms-in'' to a narrow band around a single frequency. This construction of $z(t)$ motivates us to build a new signal $\ov{z}(t)=(x\cdot H)(t)*I(t)$, where $I(t)$ is a filter function such that $\wh{I}(f)=1$ when $G_{\sigma,b}^{(j)}(t)>1/2$, and $\wh{I}(f)=0$ otherwise.  To analyze the equivalent signal $\ov{z}(t)$, we improve the analysis of the filter $H(t)$ in \cite{cp19_icalp} and give a tighter bound on its value in a sub-interval of $[0,T]$.
Then, we show that the equivalent signal $\ov{z}(t)$ is \emph{almost equivalent} to $z(t)$ under  some ``good conditions'' (i.e., the frequency is isolated and no large offset). We also prove that the ideal filter has %
several useful properties that can mush simplify the analysis (e.g., the function $I(t)$ is randomized, and with high probability, $I(t)$ commutes with $H(t)$).

By the Signal Equivalent Method, we can first prove an energy bound for the equivalent signal $\ov{z}(t)$, which follows from the Fourier-sparse signals' energy bounds (see Section~\ref{sec:energy_bound}). Then, it remains to show that the equivalent signals' energy bound can approximate the original filtered signal $z(t)$'s energy bound. %
We find that the approximation error comes from two sources: the observation noise $g(t)$ and the approximation error $\ov{z}(t)-z(t)$. The first part of the error is small due to the high SNR band condition (Eq.~\eqref{eq:high_snr_intro}). And the second part of the error is mitigated by the tail-bound for $G^{(j)}_{\sigma, b}(t)$ and the heavy-cluster condition (Eq.~\eqref{eq:heavy_cluster_intro}). %
More specifically, the \textsc{HashToBins} procedure and the filter $G_{\sigma, b}^{(j)}(t)$ can bring some interference noise from other bins to $z(t)$, which is perfectly eliminated by the ideal filter $I(t)$ in the equivalent signal $\ov{z}(t)$. Hence, we need to bound this part of noise when we transfer back from the equivalent signal to the true filtered signal. The tail bound of $G_{\sigma,b}^{(j)}(t)$ ensures that adding small interference noise with frequencies far away from the center of the cluster will not drastically affect $z(t)$. However, by this argument, we can only bound the distance between $\ov{z}(t)$ and $z(t)$ by $\|x^*(t)\|_T$, which can be much larger than $\|z(t)\|_T$. Hence, we need to use the heavy-cluster assumption to ensure that $\|x^*(t)\|_T\lesssim \|z(t)\|_T$. Using these error-control techniques, we can prove that an energy bound for $\ov{z}(t)$ implies an energy bound for $z(t)$.

\begin{figure}[!t]
    \centering
    {\includegraphics[width=\textwidth]{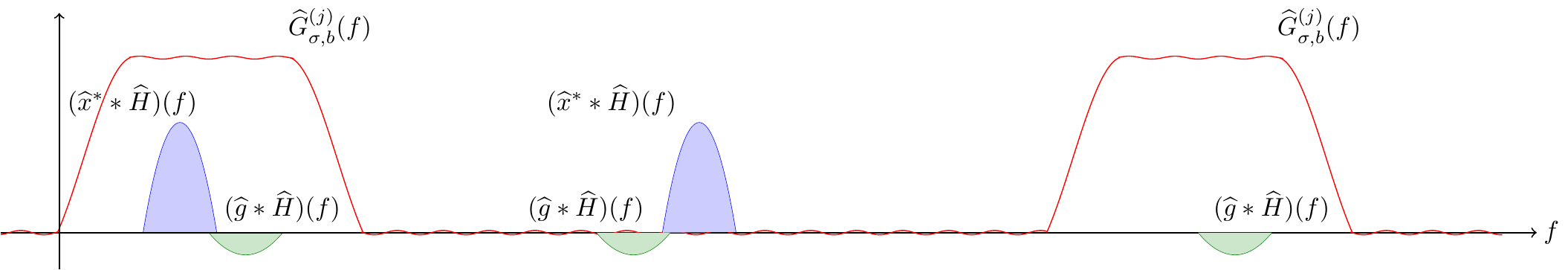}}
\caption{An illustration of a filtered noisy signal. %
$\wh{G}^{(j)}_{\sigma, b}$ (the red curve) is the \text{HashToBins} filter for the $j$-th bin. %
The noise in the filtered signal comes from two parts: one is $\wh{g}(f)*\wh{H}(f)$ (the green signal), and another is the interference by signals outside the bin (the blue and green signals in the middle). 
}\label{fig:equiv_noise}
\end{figure}
We give a comparison between ours and previous approaches for proving the energy estimation guarantee. \cite{ckps16} considers $z(t)$ as a generic signal that satisfies the time and frequency domains concentration properties\footnote{It means that most of the energy of $z(t)$ (i.e., $\|z(t)\|_2$) lies in $[0, T]$ and most of the energy of $\wh{z}(f)$ lies in a $\poly(k)/T$ length interval in frequency domain.}. We exploit ``finer'' structure of $z(t)$ and obtain a stronger energy bound and reduce the number of samples required in norm preserving.
\cite{cp19_icalp} also proves a similar  property (but only for $(x\cdot H)(t)$). However, they assume that all the frequencies of $x^*(t)$ are contained in a small interval, making the task much easier. Our filtered signal $z(t)$ does not satisfy this condition due to the interference noise caused by the \textsc{HashToBins} procedure. %

\subsubsection{Time-domain concentration of filtered signals}\label{sec:intro_concentration}

\begin{figure}[!ht]
    \centering
    {\includegraphics[width=0.6\textwidth]{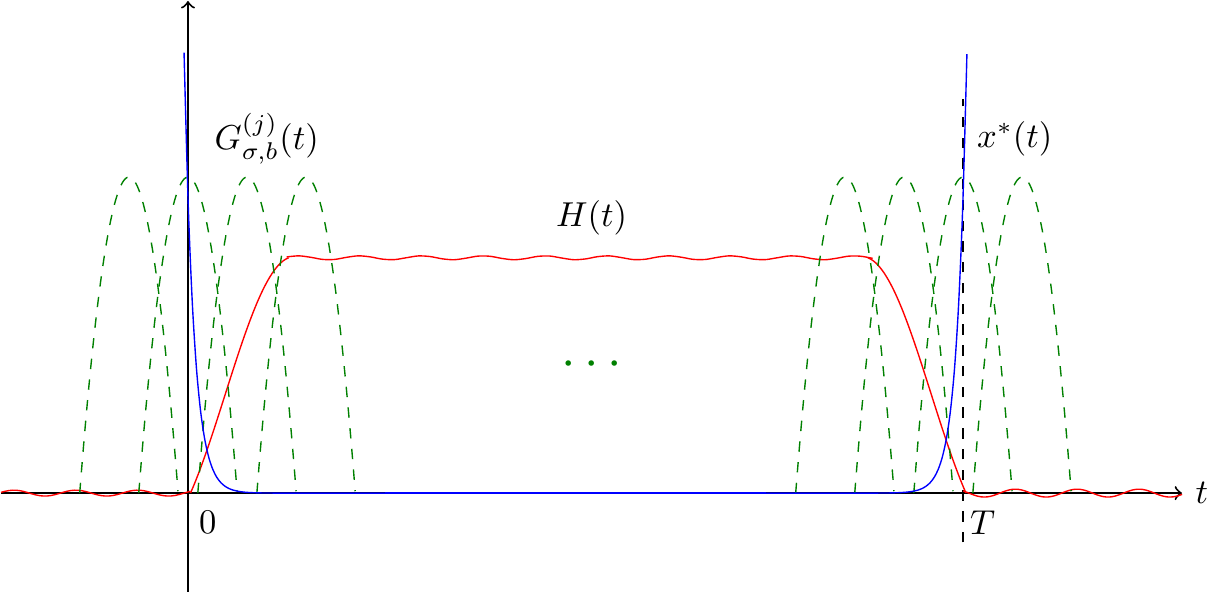}}
    \caption{A bad case that may break the time domain concentration of $z(t)=(x\cdot H)(t)*G^{(j)}_{\sigma, b}(t)$ in $[0, T]$ when the filter decay slowly. $x^*(t)$ (in blue) is a $k$-Fourier-sparse signal. $G^{(j)}_{\sigma, b}$ (in green dashed) is the filter of frequency domain. $H(t)$ (in red) is the filter in time domain. %
    On the one hand, since $x^*(t)$ is very small in $[T/\poly(k), T(1-/\poly(k))]$, and $H(t)$ is very small in $[0,T/\poly(k)] \cup [ T(1-/\poly(k)), T]$, the filtered signal has very small energy within $[0, T]$. On the other hand, since the convolution with $G^{(j)}_{\sigma, b}(t)$ can bring some energy of $x^*(t)$ passing the boundary of $[0, T]$, and the signal $x^*(t)$ could be very large outside $[0, T]$, $(x\cdot H)(t)*G^{(j)}_{\sigma, b}(t)$ may contain very large energy in $\R\backslash [0, T]$.  In this case, $\|z(t)\|_{L_2}^2\gg \|z(t)\|_T^2$.}\label{fig:z_time_cctr}
\end{figure}

The proof of Lemma~\ref{lem:significant_samples_for_each_bins:main} relies on an underlying assumption: the most of the energy of the filtered signal is contained in the observation window $[0,T]$. That is, we need the following lemma:
\begin{lemma}[Informal version of Lemma~\ref{lem:full_proof_of_3_properties_true_for_z}]\label{lem:energy_concentrate_intro}
Let $j\in [B]$ be a bin that contains a heavy frequency. Let $z(t)=(x^*\cdot H)*G_{\sigma,b}^{(j)}$ be the filtered signal. Then, we have
\begin{align*}
    \int_{-\infty}^{+\infty} |z(t) |^2 \mathrm{d} t \leq 1.35\int_{0}^{T} | z(t) |^2 \mathrm{d} t.
\end{align*}
\end{lemma}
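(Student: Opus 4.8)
Writing $\|z\|_{L_2(\R)}^2 = \int_0^T|z|^2\,\d t + \int_{\R\setminus[0,T]}|z|^2\,\d t$, the claim is equivalent to the out‑of‑window bound $\int_{\R\setminus[0,T]}|z|^2\,\d t \le 0.35\int_0^T|z|^2\,\d t$; the content of the lemma is exactly that the heavy‑frequency hypothesis rules out the pathology depicted in Figure~\ref{fig:z_time_cctr}, where $G^{(j)}_{\sigma,b}$ drags energy of $x^*\cdot H$ across the endpoints of $[0,T]$. The plan is to (i) pass to the equivalent signal $\ov z(t)=(x^*\cdot H)*I(t)$ produced by the Signal Equivalent Method, (ii) prove out‑of‑window concentration for $\ov z$, and (iii) transfer it back to $z$, absorbing all error terms using the heavy‑cluster lower bound on $\|z\|_T$.

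For steps (i)--(ii): under the good hashing event for bin $j$ (the heavy frequency $f^*$ is isolated and incurs no large offset), the analysis behind Lemma~\ref{lem:energy_est_intro} shows that the discrepancy between the non‑ideal filter $\wh G^{(j)}_{\sigma,b}$ and the ideal indicator $\wh I$ is governed by the tail bound of $G^{(j)}_{\sigma,b}$, giving $\|z-\ov z\|_{L_2(\R)}^2 \le \eps_1\cdot\poly(k)\,T\,\|x^*\|_T^2$, with $\eps_1$ shrinking as the decay exponent of $G^{(j)}_{\sigma,b}$ grows. Because $\wh I$ is the indicator of a frequency interval that, on the good event, contains the fattened spike at $f^*$ and no other, and because the ideal filter commutes with multiplication by $H$ up to the same kind of small error, $\ov z$ equals $\wt x^*\cdot H$ up to an $\eps_1$‑error, where $\wt x^*$ is the $O(k)$‑Fourier‑sparse signal built from the frequencies of $x^*$ that fall in bin $j$. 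Since $H$ is negligibly small outside $[0,T]$ --- here I would invoke a sharpened form of the sub‑interval bounds on $H$ from \cite{cp19_icalp} --- the signal $\wt x^*\cdot H$ lives essentially on $[0,T]$, and its energy in the (short) endpoint transition zones of $H$ is at most the length fraction of those zones times $\|\wt x^*\cdot H\|_{L_2[0,T]}^2$, which is small because a $k$‑Fourier‑sparse signal spreads its $L_2$‑energy roughly uniformly over sub‑intervals of $[0,T]$ (the energy‑comparison bounds of Section~\ref{sec:energy_bound}). Collecting these, $\int_{\R\setminus[0,T]}|\ov z|^2\,\d t \le \eps_2\cdot\poly(k)\,T\,\|x^*\|_T^2$.

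For step (iii) I convert the error bounds, currently measured against $\poly(k)\,T\,\|x^*\|_T^2$, into a bound against $\|z\|_{L_2[0,T]}^2$, which is where ``bin $j$ contains a heavy frequency'' is used. Since $\wh z = \wh{x^*\cdot H}\cdot\wh G^{(j)}_{\sigma,b}$ and $\wh G^{(j)}_{\sigma,b}\gtrsim 1$ over the range occupied by the fattened spike at $f^*$, the heavy‑cluster condition $\int_{f^*-\Delta}^{f^*+\Delta}|\wh{x^*\cdot H}(f)|^2\,\d f\ge T\,\N^2/k$ yields $\|z\|_{L_2(\R)}^2\gtrsim T\,\N^2/k\ge T\,\delta\,\|x^*\|_T^2/k$. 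Choosing the decay exponents of $H$ and $G^{(j)}_{\sigma,b}$ large enough in terms of $k/\delta$ forces $\eps_1,\eps_2$ below $\eps\,\delta/\poly(k)$, so every error term is $\le \eps\,T\,\N^2/k \le \eps\,\|z\|_{L_2(\R)}^2$; hence
\[
\int_{\R\setminus[0,T]}|z|^2\,\d t \;\le\; 2\|z-\ov z\|_{L_2(\R)}^2 + 2\!\int_{\R\setminus[0,T]}\!|\ov z|^2\,\d t \;\le\; \eps'\,\|z\|_{L_2(\R)}^2 \;=\; \eps'\Big(\|z\|_{L_2[0,T]}^2 + \!\int_{\R\setminus[0,T]}\!|z|^2\,\d t\Big),
\]
and rearranging with $\eps'$ small gives $\int_{\R\setminus[0,T]}|z|^2\,\d t \le 0.35\,\|z\|_{L_2[0,T]}^2$, i.e. the claimed factor $1.35$.

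I expect the main obstacle to be precisely this reconciliation in step (iii): every natural estimate on the cross‑boundary leakage and on $z-\ov z$ is, a priori, controlled only by the full ground‑truth energy $\|x^*\|_T^2$, which can exceed the bin‑$j$ energy $\|z\|_T^2$ by a factor $\poly(k)/\delta$, so the whole bound is vacuous unless the heavy‑cluster hypothesis supplies the matching lower bound on $\|z\|_T^2$ and the filter parameters are tuned to beat that gap. The two supporting technical points --- that the ideal filter $I$ genuinely commutes with multiplication by $H$ up to an error of this small size, and the sharpened sub‑interval estimate on $H$ on which the whole reduction rests --- are where the real work lies.
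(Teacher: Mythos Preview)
Your three-step plan matches the paper's route: pass to $\ov z=(x^*\cdot H)*I$, prove time-domain concentration for $\ov z$, transfer back and absorb the approximation error via the heavy-cluster lower bound. Two corrections, the second of which is a real gap. First, the commutation $(x^*\cdot H)*I=(x^**I)\cdot H$ is \emph{exact} under the no-large-offset event (Lemma~\ref{lem:xIH2xHI}), not only up to an $\eps_1$-error. Second, and more seriously, your step~(ii) bounds the wrong quantity. The out-of-window bound that $H$ actually delivers for $\ov z$ is \emph{self-referential} with a \emph{fixed} constant: since $y:=x^**I$ is $k$-Fourier-sparse, Properties~\RN{4} and~\RN{5} of Lemma~\ref{lem:property_of_filter_H} give $\int_\R|y\cdot H|^2\le 1.3\int_0^T|y\cdot H|^2$ directly --- no heavy-cluster hypothesis, and the constant $1.3$ does not shrink by tuning decay exponents. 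Your conclusion $\int_{\R\setminus[0,T]}|\ov z|^2\le\eps_2\,\poly(k)\,T\|x^*\|_T^2$ with tunable $\eps_2$ is not what the filter gives; against $T\|x^*\|_T^2$ the best available is $\eps_2=\Theta(1)$, so your step-(iii) conversion via $\|z\|_{L_2}^2\gtrsim T\delta\|x^*\|_T^2/k$ is off by a factor $\poly(k)/\delta$ and the final rearrangement fails. (The endpoint-transition-zone digression is not needed and does not lead where you want.)

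The heavy-cluster hypothesis enters \emph{only} to absorb the ideal-filter approximation error. Lemma~\ref{lem:xHI_sub_XHG_T_norm_is_small} gives $\|z-\ov z\|_{L_2}\lesssim\delta_1\sqrt{T}\,\|x^*\|_T$, and the heavy-cluster lower bound $\|z\|_{L_2}^2\gtrsim T\delta\|x^*\|_T^2/k$ turns this into $\|z-\ov z\|_{L_2}=o(1)\cdot\|z\|_{L_2}$ once $\delta_1\le\delta/k$. The paper then chains, at the level of norms rather than squares,
\[
\|z\|_{L_2}\le\|\ov z\|_{L_2}+\|z-\ov z\|_{L_2}\le\sqrt{1.3T}\,\|\ov z\|_T+o(1)\|z\|_{L_2}\le\sqrt{1.3T}\,\|z\|_T+o(1)\|z\|_{L_2},
\]
where swapping $\|\ov z\|_T$ for $\|z\|_T$ costs one more $o(1)$ term by the same error bound; rearranging gives the factor $1.35$. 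So your instinct that step~(iii) is where the reconciliation happens is right, but it reconciles only $\eps_1$: the cross-boundary leakage in step~(ii) must already come out in terms of $\|\ov z\|_T$, not $\|x^*\|_T$.
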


A similar concentration property is also proved in \cite{ckps16}, using a very strict requirement on the $H(t)$ filter that it decays at an exponential rate near the boundary. More specifically, they require that $H(t)$ is exponentially small not only outside the time duration $[0, T]$, but also in the shrinking boundary $[0,T/\poly(k)]\cup [T-T/\poly(k),T]$. The additional constraint allows them to show that $x(t)$'s energy near the boundary cannot ``pass'' the $H$ filter, and the energy concentration of $z(t)$ easily follows. However, it also results in a large support of $H$ in the frequency domain, which leads to a large error in the frequency estimation, and further causes large output sparsity and time/sample complexity of their Fourier interpolation algorithm. 

We resolve this issue by changing the filter function to the one defined in \cite{cp19_icalp}, which has much smaller support and thus saves time and sample complexities.  However, it is exponentially small outside $[0,T]$, but only \emph{polynomially} small near the boundary. %
To prove Lemma~\ref{lem:energy_concentrate_intro}, we use our Signal Equivalent Method again. We construct an equivalent signal $\ov{z}(t)=(x^*\cdot H)*I(t) = (x^* * I)\cdot H(t)$, where $x^*(t)*I(t)$ is a Fourier-sparse signal. Then, by some finer analysis on the $H(t)$ filter (see Lemma~\ref{lem:property_of_filter_H}), we can show that most of the energy of $x^*(t)*I(t)$ is preserved in $[0,T]$, i.e.,
\begin{align*}
      \int_{-\infty}^{+\infty} |\ov{z}(t) |^2 \mathrm{d} t \leq 1.1 \int_{0}^{T} |\ov{z}(t) |^2 \mathrm{d} t. 
\end{align*}
Finally, by the approximation guarantee of Signal Equivalent Method, we get that the energy concentration of $\ov{z}(t)$ implies the energy concentration of $z(t)$.

\subsection{Our techniques for Fourier Interpolation}\label{sec:intro_prob_boost}

\begin{figure}[!ht]
    \centering
    \subfloat[High SNR, where $j_1=h_{\sigma_1,b_1}(f^*)$]{\includegraphics[width=0.45\textwidth]{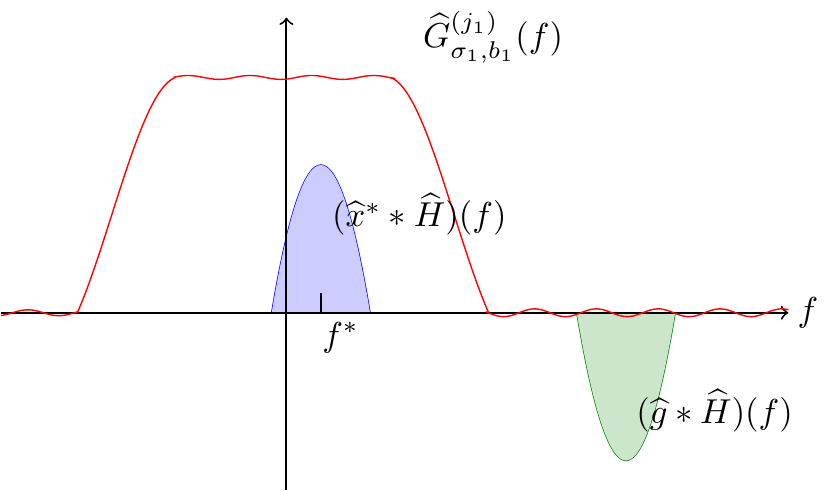}}\label{fig:hash_hsbr:1}
    \hspace{1mm}
    \subfloat[Low SNR, where $j_2=h_{\sigma_2,b_2}(f^*)$]{\includegraphics[width=0.45\textwidth]{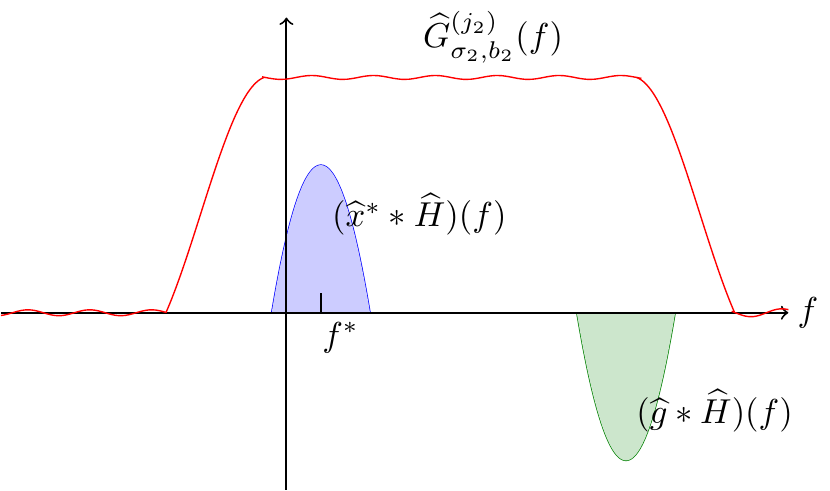}} \label{fig:hash_hsbr:2}
    \caption{The SNR of the same signal changes with different hash functions. In (a), the noise $g\cdot H$ is hashed outside the bin and suppressed by $G^{(j_1)}_{\sigma_1,b_1}$. Thus, this bin has high SNR. In (b), by a different hash function, the noise is hashed inside the bin and  $G^{(j_2)}_{\sigma_2,b_2}$ preserves its energy. Thus, the SNR of the bin becomes very low, even if the signal doesn't change. %
    } 
    \label{fig:hash_hsbr}
\end{figure}
In this section, we discuss how to obtain a Fourier interpolation algorithm with improved efficiency and output sparsity (Theorem~\ref{thm:intro_main}) based on our frequency estimation algorithm (Theorem~\ref{thm:freq_est_intro}).

We first remark that simply applying the original framework of Fourier Interpolation (e.g., \cite{ckps16,sswz22}) and combining with an existing signal estimation algorithm is still not enough to improve the previous algorithm, since the frequency estimation algorithm has a low success probability and we cannot apply the success probability boosting trick in \cite{ckps16} to increase it to $1-\rho$.
More specifically, \cite{ckps16} first boosts the success probability of their frequency estimation algorithm by their merge-stage algorithm (which runs the frequency estimation algorithm for $R=\log(1/\rho)$ times, sorts all recovered frequencies, and picks every $R/2$-th entry of the sorted list), and then runs the signal estimation algorithm. It does not work here because our high SNR band condition makes frequency estimation and signal estimation ``entangled''. More specifically, whether a frequency is contained in a high SNR bin (which \emph{needs} to be recovered) or not depends on the randomized hash function. %
However, if the outputs of multiple runs of the frequency estimation algorithm are mixed together, it is hard to justify which frequencies are necessary, since different runs use different hash functions, resulting in different high SNR bins.
In other words, if we still use \cite{ckps16}'s boosting strategy, we cannot guarantee the final output of the frequency estimation satisfies the requirement of the signal estimation algorithm.

We propose a new Fourier Interpolation framework that boosts the success probability after the signal estimation step. That is, in each run of the constant success probability frequency estimation algorithm, we reconstruct the signal immediately. Let $y_1,\dots,y_{R_p}$ denote the reconstructed signals of $R_p$ runs. Then, we boost the total success probability by outputting the signal $y_{j^*}$:
\begin{align*}
    j^* = \underset{j \in [R_p] }{\arg\min}~\underset{i\in [R_p]}{\median}~ \| y_j(t) -y_i(t) \|_T^2.
\end{align*}
By Chernoff bound, there are more than a half of $y_i$'s being good approximations of the ground-truth signal $x^*(t)$. Using the median trick, we can show that $y_{j^*}$ satisfies the recovery guarantee with an exponentially small failure probability.

It remains to estimate the distance $\|y_j(t) - y_i(t)\|_T^2$ between different reconstructed signals. Naively, it takes $O(\wt{k}^2)$-time since $y_1,y_2$ are $\wt{k}$-Fourier sparse, and it is enough to obtain the time complexity of our Fourier interpolation algorithm in Theorem~\ref{thm:intro_main}. We further propose an $\wt{O}(\wt{k}\cdot k)$-time approximation algorithm for estimating a Fourier-sparse signal's energy, which could be of independent interest. The main idea is to use $\|y_i(t) - y_j(t)\|_{S,w}^2$ to approximate $\|y_i(t) - y_j(t)\|_T^2$, where the sample set $S$ and weights $w$ are defined by the significant sample generation procedure in Section~\ref{sec:intro_gen_sig_samples}. We show that if we take $|S|=\wt{O}(\wt{k})$, we can achieve a constant approximation ratio in $\wt{O}(\wt{k}\cdot k) $ time. In addition, we prove that even if we use the approximated distances, the output signal $y_{j^*}$ still satisfies the recovery guarantee of Fourier interpolation. %

\section{Organization}

In Section \ref{sec:preli}, we define our notations in this paper. In Section \ref{sec:energy_bound}, we review several energy bounds for Fourier-sparse signal.  In Section \ref{sec:f_filter},  we define and show several properties of the frequency domain filters $G_{\sigma,b}^{(j)}$.  In Section \ref{sec:hash_stratgy}, we review the  \textsc{HashToBins} strategy and prove that bad events only happen with small probability.  In Section \ref{sec:t_filter},
we define and show some properties of the time domain filter $H(t)$.

Based on the analysis of the filters, in Section \ref{sec:approx_f_SFS}, we study the ideal filter and develop the Signal Equivalent Method. In Section \ref{sec:concentr}, we show that the filtered signal satisfies some concentration properties in both time and frequency domains. 

Based on the Signal Equivalent Method and the concentration properties, in Section \ref{sec:struct_FS}, we prove an energy bound for filtered Fourier-sparse signals. In Section \ref{sec:twist_struct_FS}, we further extend the energy bound for local-test signals. Then, in Section \ref{sec:emp_energy_est}, we apply the energy bounds and describe how to use samples to empirically estimate the energy of filtered signals and local-test signals. In Section \ref{sec:gen_sigi_sam}, we introduce our algorithm for generating significant samples. In Section \ref{sec:freq_est}, we use the significant samples to do frequency estimation for Fourier sparse signals. Finally, in Section \ref{sec:sig_recontr}, we combine our frequency estimation algorithm with a signal estimation procedure and boost the success probability of Fourier Interpolation. 

Section~\ref{sec:flowchart} presents a flowchart of the key theorems/lemmas for our Fourier interpolation algorithm.

\newpage

\newpage
\onecolumn
\appendix

\section{Preliminaries}
\label{sec:preli}

For any positive integer $n$, we define $[n]$ to be the set $\{1,2,\cdots,n\}$. We define $\i$ to be $\sqrt{-1}$. For a complex number $z=a+b\i$, we define $|z|$ to be the magnitude of $z$, i.e., $|z|=\sqrt{a^2+b^2}$.
For a function $f$, we use $\supp(f)$ to denote the support set of $f$. We use $f \lesssim g$ to denote that there exists a constant $C$ such that $f\leq C\cdot g$. We use $f\eqsim g$ to denote that $ f\lesssim g \lesssim f$. 
For any function $f$, we use $\poly(f)$ to denote $f^{O(1)}$, and $\wt{O}(f)$ to denote $f\cdot \poly\log(f)$. 
For an interval $U\subseteq \mathbb{R}$, we use $|U|$ to denote the size of the interval, and we use $ \text{Uniform}(U)$ to denote the uniform distribution over $U$. 

We use $\omega$ to denote the exponent of matrix multiplication, i.e., $n^{\omega}$ denote the time of multiplying an $n \times n$ matrix with another $n \times n$ matrix. Currently $\omega \approx 2.373$ \cite{w12,aw21}.

For two functions $f$ and $g$, we use $(f*g)(t)=\int_{-\infty}^\infty f(s)g(t-s) \d s$ to denote the convolution of two functions $f$ and $g$. And we use $f^{*l}$ to denote the $l$-fold convolution of $f$, i.e., $f^{*l}(t)=f(t)*f(t)*\cdots *f(t)$. 
For $a\in \R_+$, we use $\rect_a(t)$ to denote the box function with support set length $a$, i.e., $\rect_a(t)=\mathbf{1}_{[-a/2, a/2]}(t)$.
For $a\in \R$, we use $\delta_a(f)$ to denote $\delta(f-a)$, where $\delta(f)$ is the Dirichlet function. We use ${\rm round}(x)$ to denote rounding $x\in \mathbb{R}$ to the nearest integer. For $x\in\R, y \in \R_+$, we use $x\pmod y $ to denote the smallest positive $z\in\R_+$ such that $z\in x+y\Z$.

We say $x(t)$ is $k$-Fourier-sparse if:
\begin{align*}
x(t)=\sum_{j=1}^{k} v_j e^{2\pi\i f_j t}.
\end{align*}
We define $\wh{x}(f)$ to be the Fourier transform of $x(t)$:
\begin{align*}
  \wh{x}(f) =\int_{-\infty}^{\infty} x(t) e^{-2\pi\i f t} \d t.  
\end{align*}
We use ${\cal F}_{k,F}$ to denote the following family of signals:
\begin{align*}
    {\cal F}_{k,F}:=\Big\{ x(t)=\sum_{j=1}^k v_j \cdot e^{2 \pi \i f_j t} ~\Big|~ f_j \in [-F,F], v_j\in \mathbb{C}~\forall j\in[k] \Big\}. 
\end{align*}
Then, we define several norms for signal. 
\begin{itemize}
\item For any discrete set $S\subseteq \mathbb{R}$, the \emph{discrete norm} of $x$ with respect to a set $S$ is defined as 
\begin{align*}
    \|x(t)\|^2_S= \frac{1}{|S|} \sum_{t\in S}|x(t)|^2,
\end{align*}
and the \emph{weighted discrete norm} with weights $w\in \mathbb{R}^S$ is defined as
\begin{align*}
    \|x(t)\|^2_{S,w}=  \sum_{t\in S}w_t|x(t)|^2.
\end{align*}
\item For any continuous interval $U\subset \R$, the continuous $U$-norm of $x$ is defined as
\begin{align*}
    \|x(t)\|_U^2 = \frac{1}{|U|}\int_U |x(t)|^2\d t.
\end{align*}
\item For any $T>0$, the \emph{continuous $T$-norm} is defined as 
\begin{align*}
    \|x(t)\|^2_T=\frac{1}{T} \int_0^T |x(t)|^2\d t.
\end{align*}
\item Let $D$ be a probability distribution over $\mathbb{R}$. The \emph{continuous $D$-norm} is defined as
\begin{align*}
    \|x(t)\|^2_D=\int_{-\infty}^\infty D(t) |x(t)|^2\d t.
\end{align*}
\item The $L_2$-norm of $x(t)$ is defined as
\begin{align*}
    \|x(t)\|^2_{L_2}=\int_{-\infty}^\infty |x(t)|^2\d t.
\end{align*}
\end{itemize}

Throughout this paper, we assume that $x^*(t)\in {\cal F}_{k,F}$ is our ground-truth signal. And the observation signal is $x(t)=x^*(t)+g(t)$, where $g(t)$ is an arbitrary noise function. Furthermore, we assume that $x(t)$ can be observed at any point in $[0, T]$. 

\begin{lemma}[Chernoff Bound \cite{chernoff1952}]\label{lem:chernoff_bound}
Let $X_1, X_2, \cdots, X_n$ be independent random variables. Assume that $0\leq X_i \leq 1$ always, for each $i \in [n]$. Let $X= X_1+X_2+\cdots+X_n$ and $\mu = \mathbb{E}[X] = \overset{n}{  \underset{i=1}{\sum} } \mathbb{E}[X_i]$. Then for any $\epsilon>0$,
\begin{align*}
    \mathsf{Pr} [ X \geq (1+\epsilon) \mu ] \leq \exp(-\frac{\epsilon^2 }{2+\epsilon} \mu) \textit{ and } \mathsf{Pr} [ X \leq (1-\epsilon) \mu ] \leq \exp(-\frac{\epsilon^2 }{2} \mu).
\end{align*}
\end{lemma}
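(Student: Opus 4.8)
The plan is to prove both tail bounds by the standard exponential-moment (Chernoff--Bernstein) method: apply Markov's inequality to $e^{sX}$ for a well-chosen parameter $s$, exploit independence to factor the moment generating function, bound each factor using convexity, and finally optimize over $s$. Since the statement is quoted verbatim from \cite{chernoff1952}, one could also simply cite the standard reference; the sketch below records the derivation.

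First I would handle the \textbf{upper tail}. For any $s>0$, Markov's inequality gives
\begin{align*}
\mathsf{Pr}[X \ge (1+\epsilon)\mu] = \mathsf{Pr}[e^{sX} \ge e^{s(1+\epsilon)\mu}] \le e^{-s(1+\epsilon)\mu}\, \E[e^{sX}].
\end{align*}
By independence of the $X_i$ we have $\E[e^{sX}] = \prod_{i=1}^n \E[e^{sX_i}]$. Since $x\mapsto e^{sx}$ is convex, on $[0,1]$ it lies below its chord, so $e^{sx} \le 1 + (e^s-1)x$ for $x \in [0,1]$; taking expectations and using $1+y \le e^y$ yields $\E[e^{sX_i}] \le 1+(e^s-1)\E[X_i] \le \exp((e^s-1)\E[X_i])$. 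Multiplying over $i$ gives $\E[e^{sX}] \le \exp((e^s-1)\mu)$, hence
\begin{align*}
\mathsf{Pr}[X \ge (1+\epsilon)\mu] \le \exp\big(\mu(e^s - 1 - s(1+\epsilon))\big).
\end{align*}
Choosing $s = \ln(1+\epsilon) > 0$ to minimize the exponent, the right-hand side becomes $\exp(-\mu((1+\epsilon)\ln(1+\epsilon) - \epsilon))$. It then remains to verify the elementary inequality $(1+\epsilon)\ln(1+\epsilon) - \epsilon \ge \frac{\epsilon^2}{2+\epsilon}$ for all $\epsilon>0$, which follows by setting $\phi(\epsilon) := (1+\epsilon)\ln(1+\epsilon) - \epsilon - \frac{\epsilon^2}{2+\epsilon}$, noting $\phi(0)=0$, and checking $\phi'(\epsilon) \ge 0$ for $\epsilon \ge 0$.

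The \textbf{lower tail} is symmetric. If $\epsilon \ge 1$ the claim is trivial, since $X \ge 0$ forces $(1-\epsilon)\mu \le 0$; so assume $\epsilon \in (0,1)$. For $s>0$,
\begin{align*}
\mathsf{Pr}[X \le (1-\epsilon)\mu] = \mathsf{Pr}[e^{-sX} \ge e^{-s(1-\epsilon)\mu}] \le e^{s(1-\epsilon)\mu}\,\E[e^{-sX}] \le \exp\big(\mu(e^{-s} - 1 + s(1-\epsilon))\big),
\end{align*}
using the same convexity bound with $s$ replaced by $-s$. Optimizing with $s = -\ln(1-\epsilon) > 0$ gives the bound $\exp(-\mu((1-\epsilon)\ln(1-\epsilon) + \epsilon))$, and the proof concludes by checking $(1-\epsilon)\ln(1-\epsilon) + \epsilon \ge \frac{\epsilon^2}{2}$ for $\epsilon \in [0,1)$, again via a single-variable monotonicity argument on the difference.

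I expect no genuine obstacle here: the only technical points are the two scalar inequalities bounding $(1\pm\epsilon)\ln(1\pm\epsilon)$ from below by the stated quadratic expressions, and these reduce to routine differentiation. The slightly less mechanical of the two is the lower-tail inequality near $\epsilon \to 1^-$, where one should confirm that the logarithmic singularity is dominated, but this is handled by the same monotonicity computation.
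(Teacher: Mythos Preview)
Your proposal is correct and follows the standard exponential-moment approach. The paper itself provides no proof of this lemma --- it is stated as a cited preliminary (Lemma~\ref{lem:chernoff_bound}, attributed to \cite{chernoff1952}) and used as a black box --- so there is nothing to compare against beyond noting that your derivation is exactly the textbook one you would expect to find behind such a citation.
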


\section{Energy Bounds of Fourier Sparse Signals}
\label{sec:energy_bound}

The energy bound of a function family $\cal F$ is the largest value achieved by a function $f\in {\cal F}$ normalized by its norm (total energy) $\|f\|_T$. It connects the extreme value and the average value of the functions in ${\cal F}$, and is very useful in analyzing the concentration property.  

In our setting, we take ${\cal F}={\cal F}_{k,F}$ to be the set of $F$-band-limit, $k$-sparse Fourier signals.
\cite{k08} showed an energy bound that only depends on the sparsity $k$, without any dependence on the time point $t$, band-limit $F$, time duration $T$, frequency gap $\eta={\min}_{i\neq j}|f_i-f_j|$:

\begin{theorem}[\cite{k08}] %
\label{thm:energy_bound}
For  any $t \in [0,T]$,
\begin{align*}
    \underset{x \in {\cal F}_{k,F}}{\sup} \frac{|x(t)|^2}{\|x\|_T^2} \lesssim k^2.
\end{align*}
\end{theorem}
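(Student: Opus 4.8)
The plan is to prove the bound $\sup_{x\in\mathcal{F}_{k,F}}|x(t)|^2/\|x\|_T^2\lesssim k^2$ by reducing to a statement about polynomials and then invoking a Chebyshev / Remez-type extremal inequality. First I would fix $t\in[0,T]$ and a signal $x(t)=\sum_{j=1}^k v_j e^{2\pi\i f_j t}\in\mathcal{F}_{k,F}$ achieving (or nearly achieving) the supremum. The key reduction is to observe that $x$ restricted to the real line is an exponential polynomial with $k$ terms, and that for any interval of length $T$ the ratio $|x(t)|^2/\|x\|_T^2$ is invariant under (i) translating the interval, (ii) rescaling time, and (iii) modulating all frequencies by a common shift (which only multiplies $x$ by a unimodular factor). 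Using these symmetries I would normalize to a convenient configuration — e.g. move $t$ to an endpoint or the center, and rescale so that $T=1$.

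The heart of the argument is then the following: an exponential sum $\sum_{j=1}^k v_j e^{2\pi\i f_jt}$ behaves, on a bounded interval, like a generalized polynomial of ``degree'' $k-1$, in the sense that it lies in a $k$-dimensional space spanned by $\{e^{2\pi\i f_j t}\}$ and hence satisfies a Markov–Bernstein / Nikolskii-type inequality comparing its sup-norm on the interval to its $L_2$-norm on the interval. The cleanest route I would take is to apply the Turán-type / large sieve machinery, or alternatively the classical fact (going back to work on Müntz systems and used by \cite{k08}) that for any $f$ in the span of $k$ exponentials and any $t$ in an interval $U$,
\begin{align*}
|f(t)|^2 \lesssim \frac{k^2}{|U|}\int_U |f(s)|^2\,\d s.
\end{align*}
I would derive this by a divided-difference or contour-integral representation: write $f(t)$ as an integral of $f$ against a kernel that is itself a bounded-degree exponential sum, and estimate the kernel's $L_2$ norm on $U$ by $O(k^2)$, so that Cauchy–Schwarz yields the claim. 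Concretely, one can use that the reproducing kernel for the $k$-dimensional exponential space inside $L_2(U)$ has diagonal value $O(k^2/|U|)$, which is exactly the statement that the Christoffel function of this system at any interior point is $\gtrsim |U|/k^2$.

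After establishing the interval inequality, the theorem follows immediately by taking $U=[0,T]$: every $t\in[0,T]$ is in $U$, so $|x(t)|^2\lesssim (k^2/T)\int_0^T|x(s)|^2\,\d s = k^2\|x\|_T^2$, and taking the supremum over $x\in\mathcal{F}_{k,F}$ gives the result. I would note that the band-limit $F$ plays no role because the extremal inequality for exponential sums depends only on the \emph{number} of terms, not on the size of the frequencies — this is precisely the phenomenon that makes the bound so clean, and it is the content of \cite{k08}.

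The main obstacle I expect is proving the Christoffel-function lower bound $\gtrsim |U|/k^2$ (equivalently the $O(k^2)$ kernel estimate) uniformly over all choices of the $k$ frequencies $f_1,\dots,f_k\in[-F,F]$, including the degenerate regime where several $f_j$ nearly coincide and the exponential system is ill-conditioned. The standard fix is to pass to the limit: allow confluent frequencies so that the span includes terms $t^\ell e^{2\pi\i ft}$, which reduces the worst case to a genuine polynomial-times-exponential estimate, and there the classical Markov/Chebyshev inequality on an interval gives exactly the $k^2$ factor. Handling this confluent/limiting case carefully — and checking that the constant does not blow up — is the delicate part; everything else is a routine application of symmetry reductions and Cauchy–Schwarz. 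Since \cite{k08} is cited as the source, I would ultimately be content to quote their extremal bound and present only the short reduction showing why it implies the stated form.
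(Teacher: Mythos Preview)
The paper does not prove this theorem at all: it is stated as a black-box citation to \cite{k08}, with no proof or proof sketch provided. So there is no ``paper's own proof'' to compare against --- the paper simply invokes the result and moves on.

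Your proposal is a reasonable high-level sketch of how such extremal bounds for exponential sums are typically established (reproducing-kernel / Christoffel-function arguments, with the confluent-frequency limit handled via polynomial-times-exponential systems and Markov-type inequalities). You yourself recognize at the end that you would ``be content to quote their extremal bound,'' which is exactly what the paper does. The one caution is that your sketch is quite loose at the key step --- the uniform $O(k^2)$ bound on the diagonal of the reproducing kernel over all frequency configurations --- and you correctly flag this as the delicate part. If you actually needed to prove it from scratch, that step requires real work (e.g., the argument in \cite{k08} or the related bounds in the sparse-FT literature); as written, your outline is a plan rather than a proof. But since the paper treats the theorem as a citation, simply citing \cite{k08} is entirely in line with what the paper does.
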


The $k^2$ energy bound can be further improved if we only consider the functions' value at a fixed time point $t$: %
\begin{theorem}[\cite{cp19_colt, be06}] %
\label{thm:bound_k_sparse_FT_x_middle}
Let $D:=\mathrm{Uniform}([-1, 1])$. For  any $t \in (-1,1)$,
\begin{align*}
    \underset{x \in {\cal F}_{k,F}}{\sup} \frac{|x(t)|^2}{\|x\|_D^2} \lesssim \frac{k}{1-|t|}.
\end{align*}
\end{theorem}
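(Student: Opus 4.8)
First I would normalize the geometry. By the map $x(s)\mapsto x(-s)$, which sends $\mathcal F_{k,F}$ to itself, fixes $\|\cdot\|_D$, and exchanges the roles of $t$ and $-t$, I may assume $t\in[0,1)$ and set $\delta:=1-t=1-|t|$. The quantity to be bounded is then the squared operator norm of the point–evaluation functional $x\mapsto x(t)$ on the $k$-dimensional space $W_\Lambda:=\mathrm{span}\{e^{2\pi\i f_1 s},\dots,e^{2\pi\i f_k s}\}\subseteq L^2([-1,1])$ (here $\Lambda=\{f_1,\dots,f_k\}$ is the frequency multiset), equivalently the diagonal value $K_\Lambda(t,t)$ of the reproducing kernel of $W_\Lambda$ for the $D$-norm. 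So the theorem reduces to a Christoffel-type estimate: $\sup_\Lambda K_\Lambda(t,t)\lesssim k/\delta$, uniformly over all choices of the $k$ real frequencies.

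Two features put this beyond a routine finite-dimensional computation, and they are exactly where the difficulty sits. (i) \emph{Independence of $F$:} since the $f_j$ can be arbitrarily spread, one cannot reduce to polynomials of bounded degree; when the frequencies are commensurate and large, $W_\Lambda$ behaves like $e^{\i F s}$ times the polynomials in $e^{\i F s}$ of degree $<k$, and Cauchy--Schwarz on the coefficients buys only a factor $k$, whereas when the frequencies coalesce near $0$ the space degenerates to the degree-$(k-1)$ polynomials, where the Legendre Christoffel function already forces the boundary blow-up $\asymp k/\sqrt{1-t^2}$. The asserted bound $k/\delta$ must dominate both regimes at once. (ii) \emph{The first power of $k$, not $k^2$:} the generic energy bound of Theorem~\ref{thm:energy_bound}, applied (after a time shift) on the interval $[t-\delta,t+\delta]\subseteq[-1,1]$ with $t$ at its midpoint, already gives $|x(t)|^2\lesssim k^2\cdot\tfrac{1}{2\delta}\int_{t-\delta}^{t+\delta}|x|^2\le\tfrac{Ck^2}{\delta}\|x\|_D^2$, i.e. the weaker bound $k^2/\delta$; pushing this down to $k/\delta$ requires genuinely exploiting the $k$-term exponential structure of $x$ rather than treating it as an arbitrary time-/band-concentrated signal.

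The plan is therefore to import the Chebyshev/Nikolskii-type extremal inequality for $k$-term complex exponential sums of Borwein--Erd\'elyi \cite{be06}: on a fixed interval, the ratio $\|p\|_\infty^2/\|p\|_{L^2}^2$ for $p$ in the span of $k$ exponentials with real frequencies is bounded by a function of $k$ alone, with scale-invariance forcing the $F$-independence. A dilation argument then localizes this at the endpoint --- restrict to a window of length $\Theta(\delta)$ having $t$ as an endpoint, rescale it to unit length, and the interval-Nikolskii constant turns into the pointwise bound with the extra $1/\delta$ factor, while the refined fixed-point version of the inequality (rather than the $\|\cdot\|_\infty$ version) is what yields $k$ in place of $k^2$; this packaging is precisely what \cite{cp19_colt} carries out. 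The main obstacle --- the only step that is not bookkeeping --- is item (ii) together with the uniformity of (i): establishing the $F$-free, first-power-of-$k$ Christoffel estimate for an \emph{arbitrary} exponential span, which does not follow from the crude $k^2$ energy bound and instead rests on the extremal-polynomial machinery for exponential sums.
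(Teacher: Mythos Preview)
The paper does not prove this statement: Theorem~\ref{thm:bound_k_sparse_FT_x_middle} is stated as a background result and attributed directly to \cite{cp19_colt, be06}, with no argument given. Your proposal is a reasonable outline of the argument carried out in those references --- the reduction to a Christoffel/Nikolskii-type estimate, the localization to a window of size $\Theta(\delta)$ at the boundary, and the identification of the Borwein--Erd\'elyi extremal inequality for $k$-term exponential sums as the step that wins the first power of $k$ and the $F$-independence --- so there is nothing to compare against here beyond the citation.
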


\section{Filter in Frequency Domain}
\label{sec:f_filter}
Filtering is one of the most important techniques in sparse Fourier transform literature. 
In this section, we introduce the frequency domain filter function $\wh{G}^{(j)}_{\sigma, b}(f)$, which is the key to implement the \textsc{HashToBins} strategy.
We first review the the construction given by \cite{ckps16} with some different parameter settings and show some known properties (see Section \ref{sec:f_filter_pre}). %
Then, we prove a new property of the filter functions: the frequency domain covering property (see Section \ref{sec:f_filter_cover}). %

\subsection{Frequency domain filter construction}
\label{sec:f_filter_pre}

In this section we review the construction and several basic properties of the frequency domain filter $G^{(j)}_{\sigma, b}(t)$, $\wh{G}^{(j)}_{\sigma, b}(f)$.

\begin{definition}[$G$-filter's construction, \cite{ckps16}]\label{def:define_G_filter}
Given $B >1$, $\delta >0$, $\alpha>0$. Let $l := \Theta( \log(k/\delta) )$.
Define $G_{B,\delta,\alpha}(t)$ and its Fourier transform $\wh{G_{B,\delta,\alpha}}(f)$ as follows:
\begin{eqnarray*}
G_{B,\delta,\alpha}(t):= &~ b_0 \cdot (\rect_{ \frac{B}{(\alpha \pi)}} (t) )^{\star  l} \cdot  \sinc(t \frac{\pi}{2B}), \\
\widehat{G_{B,\delta,\alpha}}(f):= &~ b_0 \cdot ( \sinc(\frac{B}{\alpha \pi} f) )^{ l} * \rect_{\frac{\pi}{2B}}(f),
\end{eqnarray*}
where $b_0 = \Theta(B \sqrt{l}/\alpha)$ is the normalization factor such that $\wh{G}(0)=1$. %
\end{definition}

\begin{definition}[Filter for bins]\label{def:G_j_sigma_b}
Given $B >1$, $\delta >0$, $\alpha>0$, let 
\begin{align*}
    \wh{G}(f):=\wh{G}_{B,\delta,\alpha}(2\pi(1-\alpha) f)
\end{align*}
where $G_{B,\delta,\alpha}$ is defined in Definition~\ref{def:define_G_filter}. For any $\sigma>0, b\in\R$ and $j\in [B]$, 
define
\begin{align*}
G^{(j)}_{\sigma,b}(t) := &~ \frac{1}{\sigma} G(t/\sigma) e^{2\pi\i t(j/B-\sigma b)/\sigma},
\end{align*}
and its Fourier transformation:
\begin{align*}
\widehat{G}^{(j)}_{\sigma,b}(f)  = \sum_{i\in \Z}\widehat{G}( \sigma f +\sigma b-i-\frac{j}{B}).
\end{align*}
\end{definition}

Then, we provide several properties of $G $ and $G^{(j)}_{\sigma, b}(t)$, which is proven by \cite{ckps16}.

\begin{lemma}[$G$-filter's properties, \cite{ckps16}]
Given $B >1$, $\delta >0$, $\alpha>0$, let $G:=G_{B,\delta,\alpha}(t)$ be defined in Definition~\ref{def:define_G_filter}. Then, $G$ satisfies the following properties:%
\begin{eqnarray*}
&\mathrm{Property~\RN{1}} : &\widehat{G}(f) \in [1 - \delta/k, 1] , \text{~if~} |f|\leq (1-\alpha)\frac{2\pi}{2B}.\\
&\mathrm{Property~\RN{2}}: &\widehat{G}(f) \in [0,1], \text{~if~}  (1-\alpha)\frac{2\pi}{2B} \leq |f| \leq \frac{2\pi}{2B}.\\
&\mathrm{Property~\RN{3}} : &\widehat{G}(f) \in [-\delta /k, \delta/k], \text{~if~}  |f| >  \frac{2\pi}{2B}
.\\
&\mathrm{Property~\RN{4}} : &\supp(G(t) ) \subset [\frac{l}{2} \cdot \frac{-B}{\pi\alpha}, \frac{l}{2} \cdot \frac{B}{\pi\alpha}].\\
&\mathrm{Property~\RN{5}} : & \underset{t}{\max} |G(t)| \lesssim  \poly(B,l).%
\end{eqnarray*}
\end{lemma}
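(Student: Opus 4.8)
The plan is to read off Properties~\RN{4} and~\RN{5} directly from the time-domain formula for $G$, and to obtain Properties~\RN{1}--\RN{3} from the frequency-domain formula together with one sharp concentration estimate for powers of $\sinc$. Throughout I use the convention $\sinc(x)=\sin(\pi x)/(\pi x)$, so that $|\sinc(x)|\le\min\{1,\,1/|x|\}$ and $\int_{\R}\sinc(x)^l\,\d x=\Theta(1/\sqrt l)$ for large $l$, with tail bound $\int_{|x|\ge 2}|\sinc(x)|^l\,\d x\le 2^{-\Omega(l)}$.

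For Property~\RN{4}: $\rect_{B/(\alpha\pi)}$ is supported on $[-\tfrac{B}{2\alpha\pi},\tfrac{B}{2\alpha\pi}]$, and since supports add under convolution, $(\rect_{B/(\alpha\pi)})^{\star l}$ is supported on $[-\tfrac{l}{2}\cdot\tfrac{B}{\pi\alpha},\,\tfrac{l}{2}\cdot\tfrac{B}{\pi\alpha}]$; multiplying by $\sinc(t\pi/(2B))$ cannot enlarge the support, which gives the claim. For Property~\RN{5}: bound $|\sinc|\le 1$ pointwise and iterate the elementary inequality $\sup|f*g|\le(\sup|f|)\cdot\int|g|$ a total of $l-1$ times to get $\sup|(\rect_{B/(\alpha\pi)})^{\star l}|\le(B/(\alpha\pi))^{l-1}$; hence $\max_t|G(t)|\le b_0\cdot(B/(\alpha\pi))^{l-1}=\poly(B,l)$ since $b_0=\Theta(B\sqrt l/\alpha)$ and $\alpha$ is a fixed constant.

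For Properties~\RN{1}--\RN{3} I would first rewrite the convolution by a change of variables as $\wh G(f)=b_0\int_{f-\pi/(4B)}^{\,f+\pi/(4B)}\sinc\!\big(\tfrac{B}{\alpha\pi}u\big)^l\,\d u$, so that the normalization $\wh G(0)=1$ pins down $b_0=\Theta(B\sqrt l/\alpha)$ (here one uses that the window $|u|\le\tfrac{\pi}{4B}$ is much wider than the main lobe $|u|\le\tfrac{\alpha\pi}{B}$ since $\alpha$ is a small constant, so $\wh G(0)\approx b_0\int_{\R}\sinc(\tfrac{B}{\alpha\pi}u)^l\,\d u$). Now consider the three regimes. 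If $|f|>\tfrac{\pi}{B}$, the whole window $[f-\tfrac{\pi}{4B},f+\tfrac{\pi}{4B}]$ lies where $\tfrac{B}{\alpha\pi}|u|\ge 3$, so $|\sinc(\tfrac{B}{\alpha\pi}u)|^l\le 3^{-l}$ there and $|\wh G(f)|\le b_0\cdot\tfrac{\pi}{2B}\cdot 3^{-l}\le\delta/k$, which is Property~\RN{3}; the transition band $(1-\alpha)\tfrac{\pi}{B}\le|f|\le\tfrac{\pi}{B}$ is handled the same way (there $\tfrac{B}{\alpha\pi}|u|\ge 2$ throughout the window), giving $0\le\wh G(f)\le\delta/k\le 1$ once $l$ is taken even, which is Property~\RN{2}. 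For $|f|\le(1-\alpha)\tfrac{\pi}{B}$ (Property~\RN{1}): the window at $f$ still contains the entire main lobe $[-\tfrac{\alpha\pi}{B},\tfrac{\alpha\pi}{B}]$ (this needs only $\alpha\le\tfrac58$), so $\wh G(f)$ and $\wh G(0)=1$ differ by $b_0$ times an integral of $\sinc(\tfrac{B}{\alpha\pi}u)^l$ over a set on which $\tfrac{B}{\alpha\pi}|u|\ge 1$, hence by at most $b_0\cdot\tfrac{\pi}{2B}\cdot\pi^{-l}\le\delta/k$; this yields $\wh G(f)\in[1-\delta/k,\,1]$ up to the same $\delta/k$-sized side-lobe slack.

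The routine parts are the support/convolution bookkeeping for Properties~\RN{4}--\RN{5}. The one place needing genuine care---and the main obstacle---is verifying, with the \emph{new} parameter settings (the choice of the constant $\alpha$ and of the constant in $l=\Theta(\log(k/\delta))$), that the side-lobe errors and the difference between the window-at-$f$ and window-at-$0$ integrals of $\sinc^l$ are all genuinely below the target $\delta/k$; this is exactly the computation performed in \cite{ckps16}, so here we only need to check that the constants propagate under the modified parameters.
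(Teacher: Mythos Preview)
The paper does not prove this lemma; it is stated as a citation from \cite{ckps16}, so there is no in-paper argument to compare your attempt against. Your outline follows the standard route for such filter estimates (support arithmetic for \RN{4}, crude sup bounds for \RN{5}, windowed-$\sinc^l$ analysis for \RN{1}--\RN{3}) and is in the spirit of what the cited reference does.

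There is, however, a genuine gap in your Property~\RN{5} step: you bound $\max_t|G(t)|\le b_0\cdot(B/(\alpha\pi))^{l-1}$ and then declare this to be $\poly(B,l)$, but with $b_0=\Theta(B\sqrt l/\alpha)$ that product is $\Theta\!\big((B/\alpha)^{l}\pi^{-(l-1)}\sqrt l\big)$, which is \emph{exponential} in $l$, not polynomial. Indeed, with the paper's convention $\rect_a=\mathbf{1}_{[-a/2,a/2]}$ one has $(\rect_a)^{\star l}(0)=\Theta(a^{l-1}/\sqrt l)$, already exponential in $l$; so either Property~\RN{5} as transcribed is imprecise, or a unit-mass normalization of $\rect$ is implicitly meant (then your iteration gives $\sup\le 1/a$ and the bound collapses to $\Theta(\sqrt l)$). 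Either way, the equality you wrote is false as stated, and you should resolve the normalization before claiming the polynomial bound.

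A second issue concerns Properties~\RN{1}--\RN{2}. You assert that in the transition band $(1-\alpha)\tfrac{\pi}{B}\le|f|\le\tfrac{\pi}{B}$ one already has $\wh G(f)\le\delta/k$; this is incompatible with Property~\RN{1} by continuity of $\wh G$ (immediately below the band $\wh G\ge 1-\delta/k$). The transition band is precisely where $\wh G$ descends from near $1$ to near $0$, and only the coarse $[0,1]$ bound is claimed there. Relatedly, your ``window at $f$ still contains the entire main lobe'' claim for Property~\RN{1} fails arithmetically once $|f|$ exceeds roughly $\tfrac{\pi}{4B}$, well short of $(1-\alpha)\tfrac{\pi}{B}$: with window half-width $\tfrac{\pi}{4B}$ and main-lobe half-width $\tfrac{\alpha\pi}{B}$, containment requires $|f|\le(\tfrac14-\alpha)\tfrac{\pi}{B}$, not $(1-\alpha)\tfrac{\pi}{B}$. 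These are exactly the constant-tracking details you flagged as the main obstacle; they need to be redone against the precise conventions in \cite{ckps16} rather than the transcription here.
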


\begin{lemma}\label{lem:property_of_filter_G}
Let $G^{(j)}_{\sigma, b}(t)$ be defined in Definition~\ref{def:G_j_sigma_b}.
Let offset function 
\begin{align*}
    o_{\sigma, b}(f) = |(\sigma f +\sigma b -\frac{j}{B}) -\frac{1}{2} {\pmod 1}| -\frac{1}{2}.
\end{align*}
Then, 
\begin{eqnarray*}
&\mathrm{Property~\RN{1}} : & \wh{G}_{\sigma,b}^{(b)}(f) \in [1 - \delta/k, 1] , \text{~if~} |o_{\sigma, b}(f)|\leq  (1-\alpha)\frac{2\pi}{2B}.\\
&\mathrm{Property~\RN{2}}: & \wh{G}_{\sigma,b}^{(b)}(f)\in [0,1], \text{~if~}  (1-\alpha)\frac{2\pi}{2B} \leq |o_{\sigma, b}(f)| \leq \frac{2\pi}{2B}.\\
&\mathrm{Property~\RN{3}} : & \wh{G}_{\sigma,b}^{(b)}(f)\in [-\delta /k, \delta/k], \text{~if~}  |o_{\sigma, b}(f)| >  \frac{2\pi}{2B}
.\\
&\mathrm{Property~\RN{4}} : &\supp({G}_{\sigma,b}^{(b)}(t) ) \subset [\frac{l}{2} \cdot \frac{-B}{\pi\alpha}, \frac{l}{2} \cdot \frac{B}{\pi\alpha}].\\
&\mathrm{Property~\RN{5}} : & \underset{t}{\max} |{G}_{\sigma,b}^{(b)}(t)| \lesssim  \poly(B,l).%
\end{eqnarray*}
\end{lemma}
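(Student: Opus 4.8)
The plan is to derive Lemma~\ref{lem:property_of_filter_G} directly from the $G$-filter properties of the preceding lemma by understanding how the operations in Definition~\ref{def:G_j_sigma_b} transform $\wh{G}$. The central observation is that the modulation and dilation applied to $G$ in the time domain translate, in the frequency domain, into a shift and a rescaling of the argument, followed by a periodization (the sum over $i \in \Z$). Concretely, starting from $\wh{G}^{(j)}_{\sigma,b}(f) = \sum_{i \in \Z} \wh{G}(\sigma f + \sigma b - i - j/B)$, I would note that for any fixed $f$ at most one term in the sum is ``active'' (i.e.\ has magnitude exceeding $\delta/k$): by Property~III of $\wh{G}$, a term is non-negligible only when its argument lies in $[-\tfrac{2\pi}{2B}, \tfrac{2\pi}{2B}]$, and since $\tfrac{2\pi}{2B} = \tfrac{\pi}{B} < \tfrac12$ for $B > 1$, the intervals $\{[i + j/B - \tfrac{\pi}{B}, i + j/B + \tfrac{\pi}{B}] : i \in \Z\}$ are pairwise disjoint. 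Hence at most one integer $i$ contributes meaningfully, and the other terms contribute at most $\delta/k$ each (but one has to be slightly careful with the infinite tail — the $\sinc^l$ decay gives summable tails, and with $l = \Theta(\log(k/\delta))$ the total tail is $\le \delta/k$ up to adjusting constants).

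Next I would connect the distance-to-nearest-active-interval to the offset function $o_{\sigma,b}(f)$. By definition $o_{\sigma,b}(f) = |(\sigma f + \sigma b - j/B) - \tfrac12 \bmod 1| - \tfrac12$, which measures exactly how far $\sigma f + \sigma b - j/B$ is from the nearest integer: $|o_{\sigma,b}(f)|$ equals $\mathrm{dist}(\sigma f + \sigma b - j/B, \Z)$, i.e.\ $\min_{i \in \Z} |\sigma f + \sigma b - j/B - i|$. Given this identification, the three regimes in Lemma~\ref{lem:property_of_filter_G} correspond precisely to the three regimes of $\wh{G}$: if $|o_{\sigma,b}(f)| \le (1-\alpha)\tfrac{2\pi}{2B}$ then the argument of the unique active term lies in the Property~I regime of $\wh{G}$, so $\wh{G}^{(j)}_{\sigma,b}(f) \in [1 - \delta/k, 1]$ (the active term is in $[1-\delta/k,1]$ and the tails are negligible — again one may need to rescale $\delta$ by a constant so the combined error stays within $\delta/k$); if it lies in $[(1-\alpha)\tfrac{2\pi}{2B}, \tfrac{2\pi}{2B}]$ we are in Property~II and get $[0,1]$; and if it exceeds $\tfrac{2\pi}{2B}$, \emph{all} terms are in the Property~III regime, so the whole sum is in $[-\delta/k, \delta/k]$.

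For Properties~IV and~V, the argument is more elementary and does not involve the periodization. Property~IV follows because $G^{(j)}_{\sigma,b}(t) = \tfrac1\sigma G(t/\sigma) e^{2\pi \i t (\cdots)/\sigma}$ has support exactly $\sigma \cdot \supp(G)$; however, as stated the claimed bound $[-\tfrac{l}{2}\cdot\tfrac{B}{\pi\alpha}, \tfrac{l}{2}\cdot\tfrac{B}{\pi\alpha}]$ has no $\sigma$ factor, so I would expect either a normalization where $\sigma$ is absorbed, or the statement should read $\sigma\cdot[\ldots]$; I would state it with the $\sigma$ scaling to be safe, matching $\supp(G)$ from Property~IV of the previous lemma. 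Property~V follows from $\max_t |G^{(j)}_{\sigma,b}(t)| = \tfrac1\sigma \max_t |G(t)| \le \tfrac1\sigma \poly(B,l)$, since the modulation is unimodular; again the $\tfrac1\sigma$ may be absorbed by convention. I would check the paper's normalization conventions and state these two properties accordingly, but mathematically they are immediate.

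The main obstacle I anticipate is the bookkeeping on the $\delta/k$ error bounds in the periodized sum: one must verify that summing the unique ``main'' term together with the infinitely many small tail terms still yields an error at most $\delta/k$, which requires using the $\sinc^l$-decay of $\wh{G}$ quantitatively and choosing the hidden constant in $l = \Theta(\log(k/\delta))$ appropriately (and perhaps replacing $\delta$ by $\delta/2$ or absorbing a constant). This is a routine geometric-series estimate but is the only place where the proof is not a one-line translation of the single-filter properties. Everything else is a change of variables (dilation $\leftrightarrow$ frequency rescaling, modulation $\leftrightarrow$ frequency shift) plus the disjointness of the shifted copies, which holds precisely because $2\pi/(2B) = \pi/B < 1/2$ when $B > 1$.
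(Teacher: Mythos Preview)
The paper does not actually give a proof of this lemma: it is stated immediately after the $G$-filter properties lemma (cited from \cite{ckps16}) and treated as a direct consequence, with no argument supplied. Your proposal correctly fills in the omitted details---identifying $|o_{\sigma,b}(f)|$ with $\mathrm{dist}(\sigma f + \sigma b - j/B, \Z)$, observing that at most one term in the periodized sum is active, and reading off Properties~I--III from the corresponding regimes of $\wh{G}$---and your remarks on the missing $\sigma$ factors in Properties~IV and~V are well taken (the paper's statement appears to silently absorb or drop the $\sigma$ scaling).
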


\begin{figure}[t]
    \centering
    {\includegraphics[width=\textwidth]{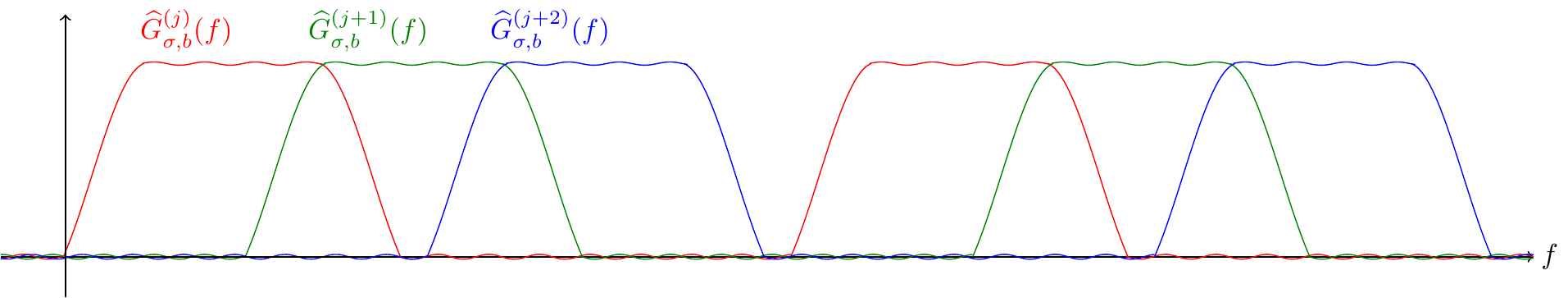}}
\caption{Filters with the frequency domain covering property. The red, green, and blue curves represent the filters $\wh{G}^{(j)}_{\sigma, b}$, $\wh{G}^{(j+1)}_{\sigma, b}$, and $\wh{G}^{(j+2)}_{\sigma, b}$, respectively.
The frequency domain covering property ensures that for each frequency $f\in \mathbb{R}$, there are \emph{at least one but no more than two} filters satisfying $\wh{G}^{(j)}_{\sigma, b}(f)\geq 1-\delta$. %
}\label{fig:covering}
\end{figure}

\subsection{Frequency domain covering}
\label{sec:f_filter_cover}

In this section, we show that the filter functions $\{\hat{G}_{\sigma,b}^{(j)}\}_{j\in [B]}$ form a proper cover for the frequency domain. Roughly speaking, for any frequency $f\in \mathbb{R}$, we show that the sum of all filters' values (squared) at $f$ is very close to one. This property is very important for our high SNR band assumption.

\begin{lemma}\label{lem:GlargerEverywhere}
For any $f\in \R$, there exists at least one $ j\in [B]$ such that
\begin{align*}
    \widehat{G}^{(j)}_{\sigma,b}(f) \geq 1-\frac{\delta}{k}.
\end{align*}
\end{lemma}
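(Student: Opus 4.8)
The plan is to reduce the claim to a statement about the offset function $o_{\sigma,b}$ from Lemma~\ref{lem:property_of_filter_G}, and then observe that the offsets of consecutive bins tile the real line. Recall that by Property~\RN{1} of Lemma~\ref{lem:property_of_filter_G}, it suffices to find a bin $j\in[B]$ with $|o_{\sigma,b}(f)|\le (1-\alpha)\frac{2\pi}{2B}$, where $o_{\sigma,b}(f)=|(\sigma f+\sigma b-\frac jB)-\frac12 \pmod 1|-\frac12$. So the entire lemma is about the single real quantity $u:=\sigma f+\sigma b$: we need some $j\in[B]$ such that $u-\frac jB$ lies within $(1-\alpha)\frac{2\pi}{2B} = \frac{(1-\alpha)\pi}{B}$ of an integer.

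First I would unwind the definition of $o_{\sigma,b}$: the expression $|(x)-\frac12\pmod 1|-\frac12$ equals $-\dist(x,\Z)$ up to sign conventions, i.e. $|o_{\sigma,b}(f)|=\dist\!\big(u-\tfrac jB,\ \Z\big)$ (this is exactly the ``distance to the nearest integer'' of $u-j/B$). Granting this, the task becomes: show that $\min_{j\in[B]}\dist(u-\tfrac jB,\Z)\le \frac{(1-\alpha)\pi}{B}$ for every $u\in\R$. Here I would use that $\alpha$ is a small constant and that the construction has $\frac{2\pi}{2B}=\frac\pi B$ as the half-width of the ``flat top'' region; since $(1-\alpha)\frac\pi B$ needs to be at least $\frac{1}{2B}$ for the covering argument to go through, I should double-check the parameter regime — in \cite{ckps16}'s setting $\alpha$ is chosen small enough that $(1-\alpha)\pi > \tfrac12$, which is what makes the cover work.

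The key step is then elementary: as $j$ ranges over $\{1,2,\dots,B\}$, the points $\{u-\tfrac jB \bmod 1\}$ are exactly the $B$ equally-spaced points $\{\,\{u\}-\tfrac jB \bmod 1 : j\in[B]\,\}$, which form a coset of the subgroup $\frac1B\Z/\Z$ in $\R/\Z$. Any such coset is a set of $B$ points spaced $\frac1B$ apart around the circle of circumference $1$, so the distance from $0$ to the \emph{nearest} such point is at most $\frac{1}{2B}$. Since $\frac{1}{2B}\le (1-\alpha)\frac{2\pi}{2B}$ (as $(1-\alpha)\cdot 2\pi\ge 1$, which holds for any reasonable $\alpha$, certainly $\alpha<1-\frac{1}{2\pi}$), the bin $j^*$ achieving this minimum satisfies $|o_{\sigma,b}(f)|\le (1-\alpha)\frac{2\pi}{2B}$, and Property~\RN{1} then gives $\wh G^{(j^*)}_{\sigma,b}(f)\ge 1-\delta/k$, as desired.

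The main obstacle I anticipate is purely bookkeeping around the $\pmod 1$ conventions and the precise meaning of $o_{\sigma,b}$: one must verify carefully that $|(x-\tfrac12\bmod 1)-\tfrac12|$ really is $\dist(x,\Z)$ (checking the two cases $x\bmod 1\in[0,\tfrac12]$ and $x\bmod 1\in(\tfrac12,1)$ separately), and that the index $j$ in Definition~\ref{def:G_j_sigma_b} ranging over $[B]$ does cover all residues of $\frac jB$ modulo $1$. There is no real analytic content — the lemma is a pigeonhole/tiling statement — so once the definitional translation is pinned down the proof is immediate. I would write it as: (i) translate to $\dist(\,\cdot\,,\Z)$; (ii) note $\{j/B \bmod 1: j\in[B]\}$ is $\frac1B$-net of $\R/\Z$; (iii) pick the nearest lattice point and invoke Property~\RN{1}.
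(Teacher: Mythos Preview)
Your proposal is correct and follows essentially the same route as the paper. The paper explicitly sets $j=\mathrm{round}((\sigma f+\sigma b\bmod 1)\cdot B)$ to force $\sigma f+\sigma b-\tfrac jB$ within $\tfrac{1}{2B}$ of an integer, then invokes Property~\RN{1}; your pigeonhole/$\tfrac1B$-net formulation is the same observation phrased without naming the rounding, and your explicit check that $\tfrac{1}{2B}\le(1-\alpha)\tfrac{2\pi}{2B}$ makes transparent a step the paper leaves implicit.
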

\begin{proof}

We first prove that the lemma holds for those $f^*$ where there exists a $j\in [B]$ such that
\begin{align*}
f^*\in [-b+\frac{j}{\sigma B}-\frac{1}{2\sigma B}, -b+\frac{j}{\sigma B}+\frac{1}{2\sigma B}]+\frac{1}{\sigma} \Z.
\end{align*}
For such $f^*$, we have 

\begin{align*}
     \widehat{G}^{(j)}_{\sigma,b}(f^*) =&~ \sum_{i\in \Z}\widehat{G}( \sigma f^* +\sigma b-i-\frac{j}{B})\\
     \geq &~ \widehat{G}(( \sigma f^* +\sigma b-\frac{j}{B}) \mod{1}) \\
     \geq&~ 1-\frac{\delta}{k},
\end{align*}
where the first step follows from the definition of $\widehat{G}^{(j)}_{\sigma,b}(f^*) $, the second step is straight forward, the third step follows from 
\begin{align*}
     \sigma f^* +\sigma b-\frac{j}{B}\mod{\frac{1}{\sigma}} \in [-\frac{1}{2 B}, \frac{1}{2 B}]
\end{align*}
and Lemma \ref{lem:property_of_filter_G} Property \RN{1} and Definition \ref{def:G_j_sigma_b}.

It remains to show that for an arbitrary $f\in \mathbb{R}$, the condition still holds. Let 
\begin{align*}
    j:=\mathrm{round}((\sigma f + \sigma b \mod{1} )  \cdot B).
\end{align*}
We have
\begin{align*}
    j\in [(\sigma f + \sigma b \mod{1} )  \cdot B-\frac{1}{2}, (\sigma f + \sigma b \mod{1} )  \cdot B + \frac{1}{2}],
\end{align*}
which implies that
\begin{align*}
    j\in [(\sigma f + \sigma b )  \cdot B-\frac{1}{2}, (\sigma f + \sigma b  )  \cdot B + \frac{1}{2}] + B \Z.
\end{align*}
Thus,
\begin{align*}
    f \in [-b+\frac{j}{\sigma B}-\frac{1}{2\sigma B}, -b+\frac{j}{\sigma B}+\frac{1}{2\sigma B}]+\frac{1}{\sigma} \Z.
\end{align*}

The lemma is then proved.
\end{proof}

\begin{lemma}\label{lem:filterGlarge}
For any $f\in \R$,
\begin{align*}
     \sum_{j=1}^B |\widehat{G}^{(j)}_{\sigma,b}(f) |^2\eqsim  1.
\end{align*}
\end{lemma}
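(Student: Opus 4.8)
The plan is to prove the two-sided bound $\sum_{j=1}^B |\widehat{G}^{(j)}_{\sigma,b}(f)|^2 \eqsim 1$ by establishing the lower bound and the upper bound separately, each of which uses the pointwise properties of the filter $\widehat{G}$ from Lemma~\ref{lem:property_of_filter_G} together with the translation structure of the bins.

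\textbf{Lower bound.} This is immediate from Lemma~\ref{lem:GlargerEverywhere}: for any $f\in\R$ there is at least one $j\in[B]$ with $\widehat{G}^{(j)}_{\sigma,b}(f)\ge 1-\delta/k$, hence $\sum_{j=1}^B|\widehat{G}^{(j)}_{\sigma,b}(f)|^2 \ge (1-\delta/k)^2 \gtrsim 1$.

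\textbf{Upper bound.} Here the key observation is a ``bounded overlap'' phenomenon: for a fixed $f$, only a constant number of bins $j$ can have $\widehat{G}^{(j)}_{\sigma,b}(f)$ non-negligible, and the contributions of the remaining bins are each at most $\delta/k$ by Property~\RN{3}, summing to at most $\delta$. Concretely, recall $\widehat{G}^{(j)}_{\sigma,b}(f) = \sum_{i\in\Z}\widehat{G}(\sigma f+\sigma b-i-j/B)$, and $\widehat{G}$ is supported (up to the $\delta/k$ tail) in the window $|\cdot|\le \frac{2\pi}{2B} = \frac{\pi}{B}$ by Property~\RN{3}. Writing $\theta := (\sigma f+\sigma b)\bmod 1 \in [0,1)$, the argument $\sigma f+\sigma b - i - j/B$ ranges, as $i$ varies over $\Z$ and $j$ over $[B]$, over all points of the form $\theta - m/B$ with $m\in\Z$ (since $j/B + i = (j + iB)/B$ covers all integer multiples of $1/B$). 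The condition $|\theta - m/B \bmod 1| \le \frac{\pi}{B}$ — i.e., that the $m$-th such term is not in the $\delta/k$-tail regime — holds only for $m$ in an interval of length $O(1)$ (at most $2\pi/(2\pi/B)/B +2 = O(1)$ values, independent of $B$ since the relevant window has length $2\cdot\frac{\pi}{B}$ while consecutive shift points are spaced $1/B$ apart, giving at most $2\pi + 1$ bins). Hence $\widehat{G}^{(j)}_{\sigma,b}(f) \le 1$ for all $j$ (by summing Properties~\RN{1}--\RN{3}, using that the bins' shifted-$\widehat G$ values themselves never add destructively since $\widehat G \ge -\delta/k$ and the relevant terms are in $[0,1]$), and at most $O(1)$ of the bins have $\widehat{G}^{(j)}_{\sigma,b}(f) > \delta/k$; all others contribute at most $\delta/k$ each, for a total tail of at most $B\cdot\delta/k$. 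Therefore $\sum_{j=1}^B |\widehat{G}^{(j)}_{\sigma,b}(f)|^2 \le O(1)\cdot 1 + B\cdot(\delta/k)^2 \le O(1)$, assuming the standard parameter regime $\delta/k \le 1/B$ (or $B\delta^2/k^2 = O(1)$), which is the setting of the paper.

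\textbf{Main obstacle.} The delicate point is counting precisely how many bins $j$ can have $\widehat{G}^{(j)}_{\sigma,b}(f)$ larger than the $\delta/k$ tail threshold, i.e., making the ``bounded overlap'' claim rigorous. This requires carefully unfolding the periodization $\sum_{i\in\Z}\widehat{G}(\cdot)$ and the definition of the offset function $o_{\sigma,b}(f)$ from Lemma~\ref{lem:property_of_filter_G}, and checking that for a given $f$ the set of $j\in[B]$ with $|o_{\sigma,b}(f)|$ (for that bin's shift) below $\frac{2\pi}{2B}$ has size $O(1)$; this follows because as $j$ ranges over $[B]$ the quantity $\sigma b - j/B$ moves in steps of $1/B$ through a period of length $1$, and the ``good'' window $[-\frac{\pi}{B},\frac{\pi}{B}]$ around the relevant point has length $\frac{2\pi}{B}$, capturing $\le 2\pi+1 = O(1)$ steps. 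Once this counting is in hand, the rest is a routine summation splitting the $B$ terms into the $O(1)$ ``heavy'' bins (each $\le 1$) and the $\le B$ ``light'' bins (each $\le \delta/k$), and invoking the standing assumption on $\delta, k, B$ to absorb the tail.
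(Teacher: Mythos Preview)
Your proposal is correct and follows essentially the same approach as the paper: the lower bound via Lemma~\ref{lem:GlargerEverywhere} is identical, and for the upper bound the paper also splits into ``heavy'' bins (it asserts at most $2$, you argue $O(1)$) each bounded by $1$ via Properties~\RN{1}--\RN{3}, plus the remaining bins each contributing at most $(\delta/k)^2$, for a total $\le 2 + B(\delta/k)^2 \lesssim 1$ using $B=O(k)$. Your overlap-counting argument is more explicit than the paper's one-line appeal to ``the definition of $\widehat{G}^{(j)}_{\sigma,b}$,'' but the structure is the same.
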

\begin{proof}
By Lemma \ref{lem:GlargerEverywhere}, we have that for any $f\in\R$, there exist at least a $ j_0\in [B]$ such that
\begin{align}
    \widehat{G}^{(j_0)}_{\sigma,b}(f) \geq \frac{1}{2}. \label{eq:GlargerEverywhere}
\end{align}

Moreover, we have that
\begin{align}
    B \frac{\delta}{k} = O(\delta) \leq 0.01. 
    \label{eq:GlargerEverywhere_0}
\end{align}
where the first step follows from $B=O(k)$, the second step follows from $\delta=o(1)\leq 0.01$.

In the followings, we give lower and upper bounds for $\sum_{j=1}^B |\widehat{G}^{(j)}_{\sigma,b}(f) |^2$.

\paragraph{Lower bound:}
\begin{align*}
\sum_{j=1}^B |\widehat{G}^{(j)}_{\sigma,b}(f)|^2\geq |\widehat{G}^{(j_0)}_{\sigma,b}(f)|^2 
\gtrsim 1.
\end{align*}
where the first step follows from Lemma \ref{lem:property_of_filter_G} Property \RN{1}, \RN{2}, and \RN{3}, the second step follows from Eq.~\eqref{eq:GlargerEverywhere}, the third step follows from Eq.~\eqref{eq:GlargerEverywhere_0} and ${\delta}/{k}\leq 1$.

\paragraph{Upper bound:}
\begin{align*}
\sum_{j=1}^B (\widehat{G}^{(j)}_{\sigma,b}(f))^2 \leq  2  + B (\frac{\delta}{k})^2
\lesssim 1
\end{align*}
where the first step follows from the definition of $\widehat{G}^{(j)}_{\sigma,b}(f)$, the second step follows from Eq.~\eqref{eq:GlargerEverywhere_0} and ${\delta}/{k}\leq 1$.

Combining them together, the lemma follows.
\end{proof}

\section{Hashing the Frequencies}
\label{sec:hash_stratgy}

In this section, we review the \textsc{HashToBins} strategy, which an important tool for Sparse Fourier Transform \cite{hikp12a, ikp14, ps15, ckps16,k16, k17, jls23}. %
Ideally, the \textsc{HashToBins} procedure randomly splits the frequency domain into $B$ bins so that each bin contains at most one frequency. Then, the $k$-sparse Fourier reconstruction problem is reduced to a much easier one-sparse Fourier reconstruction problem. 

We first describe the hashing strategy(see Section~\ref{sec:hash_stratgy:describe_proc}). However, there are two kinds of bad events such that the \textsc{HashToBins} procedure cannot work as good as we want: two frequencies are hashed to the same bin, or some frequency lies close to the boundary of a bin. 
We show that these bad events only happen with small probabilities (see Sections~\ref{sec:hash_stratgy:freq_iso} and \ref{sec:hash_stratgy:large_off}) .

\subsection{\textsc{HashToBins} procedure}\label{sec:hash_stratgy:describe_proc}

Here, we introduce the hash function and how to compute the resulting signal of the \textsc{HashToBins} procedure.

We first give the definition of the hashing function: 
\begin{definition}[Hash function, \cite{ckps16}]
Let $\pi_{\sigma,b}(f) = \sigma(f+b) \pmod {1}$ and $h_{\sigma,b}(f) = \mathrm{round} (\pi_{\sigma,b}(f) \cdot {B})$ be the hash function that maps frequency $f \in [-F,F]$ into bins $\{0,\cdots,B-1\}$.
\end{definition}
Intuitively, the $j$-th bin corresponding to $f$ such that $\wh{G}^{(j)}_{\sigma, b}(f) \geq 1- \delta/ k$. In general, we set $B= \Theta(k)$ and $\sigma \in [\frac{1}{B \Delta},\frac{2}{B \Delta}]$ chosen uniformly at random, where $\Delta=k\cdot |\supp(\wh{H}(f))|$. 

Then, we show how to compute the \textsc{HashToBins}:
\begin{lemma}[Lemma 6.9 in \cite{ckps16}]\label{lem:hashtobins}

Let $z_j(t) = x(t) * G^{ (j)}_{\sigma,b}(t)$. Let $a:=t/\sigma$
Let $ u\in \C^B$ and for $j \in [B]$,
\begin{align*}
u_j := \sum_{i\in \Z}x(\sigma(a-j-iB )) e^{-2\pi\i \sigma b(j+iB)} G(j+iB).
\end{align*}
Then, we have that for all $j\in [B]$,
\begin{align*}
      \wh{u}_j = z_j({\sigma a}).
\end{align*}
\end{lemma}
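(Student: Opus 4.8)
\textbf{Proof plan for Lemma~\ref{lem:hashtobins}.}
The statement to prove is that if we define $u_j := \sum_{i\in\Z} x(\sigma(a-j-iB))e^{-2\pi\i\sigma b(j+iB)}G(j+iB)$, then the $B$-point DFT of the vector $u$ satisfies $\wh{u}_j = z_j(\sigma a)$, where $z_j(t) = x(t)*G^{(j)}_{\sigma,b}(t)$. The plan is to compute $z_j(\sigma a)$ directly from the definition of convolution, discretize the integral, and match it with the DFT of $u$.

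First I would write $z_j(\sigma a) = (x * G^{(j)}_{\sigma,b})(\sigma a) = \int_{-\infty}^{\infty} x(\sigma a - s)\, G^{(j)}_{\sigma,b}(s)\, \d s$, and substitute the explicit form $G^{(j)}_{\sigma,b}(s) = \frac{1}{\sigma}G(s/\sigma)e^{2\pi\i s(j/B - \sigma b)/\sigma}$ from Definition~\ref{def:G_j_sigma_b}. Changing variables $s = \sigma r$ turns this into $\int_{-\infty}^{\infty} x(\sigma(a-r))\,G(r)\,e^{2\pi\i r(j/B-\sigma b)}\,\d r$. The key structural fact is that $G(r)$ has support contained in an interval of length $O(lB/\alpha)$ (Property~\RN{4}), but more importantly that the construction forces $G$ to be (effectively) supported so that we can fold the real line modulo $B$: using the Poisson-type summation / periodization, $\int_{-\infty}^{\infty} F(r)\,\d r = \sum_{j'=0}^{B-1}\int \big(\sum_{i\in\Z} F(j'+iB)\big)$-style rearrangement — here I'd be more careful and note that since we are evaluating $\wh{u}_j = \sum_{j'=0}^{B-1} u_{j'} e^{-2\pi\i j j'/B}$, I should expand the right-hand side and show it collapses to the integral. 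Actually the cleaner route: expand $\wh{u}_j$, plug in the definition of $u_{j'}$, interchange the finite sum over $j'\in[B]$ with the sum over $i\in\Z$, and recognize $\sum_{j'\in[B]}\sum_{i\in\Z}(\cdot) = \sum_{m\in\Z}(\cdot)$ with $m = j'+iB$. This gives $\wh{u}_j = \sum_{m\in\Z} x(\sigma(a - m))\,G(m)\,e^{2\pi\i m(j/B - \sigma b)}$ after collecting the phase factors $e^{-2\pi\i\sigma b m}e^{2\pi\i jm/B}$ (noting $e^{-2\pi\i j j'/B} = e^{-2\pi\i j m/B}$ since $e^{-2\pi\i j i} = 1$).

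Then the remaining step is to identify this discrete sum $\sum_{m\in\Z} x(\sigma(a-m))G(m)e^{2\pi\i m(j/B-\sigma b)}$ with the integral $\int x(\sigma(a-r))G(r)e^{2\pi\i r(j/B-\sigma b)}\d r = z_j(\sigma a)$. This is where I expect the only real subtlety: the sum-equals-integral identity is \emph{not} literally true for an arbitrary $G$, but holds because of the specific band-limited construction of $G$ in Definition~\ref{def:define_G_filter} — $\wh{G}$ is supported in $[-\pi/(2B)\cdot c, \pi/(2B)\cdot c]$ for appropriate $c<1$, so $G(r)e^{2\pi\i r\theta}$ as a function of $r$ has Fourier support (in its dual variable) confined to a window of length $<1$ around $\theta$, and the Poisson summation formula then says $\sum_{m\in\Z} G(m)e^{2\pi\i m\theta} h(m) = \int G(r)e^{2\pi\i r\theta}h(r)\d r$ when sampling at integer spacing, because the aliases fall outside the support. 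One has to check that $x(\sigma(a-r))$, being a sum of complex exponentials in $r$ with frequencies bounded by $\sigma F < 1$-ish (this uses the choice $\sigma \in [1/(B\Delta), 2/(B\Delta)]$ and $B=\Theta(k)$, $\Delta \gtrsim F$-scale), does not push the combined Fourier support outside $[-1/2,1/2]$. I would invoke this as the band-limiting property of the filter (exactly as in \cite{ckps16}, Lemma 6.9, whose proof this mirrors) rather than re-deriving Poisson summation. The main obstacle, then, is bookkeeping the frequency-support budget to justify that integer-rate sampling of $x\cdot G^{(j)}_{\sigma,b}$ incurs no aliasing — everything else is reindexing finite and infinite sums and collecting phase factors.
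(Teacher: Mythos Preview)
The paper does not give its own proof of this lemma; it is cited directly from \cite{ckps16}. So there is nothing to compare against in the paper itself. That said, your algebraic core --- expanding $\wh{u}_j$, reindexing $\sum_{j'\in[B]}\sum_{i\in\Z}$ as $\sum_{m\in\Z}$ via $m=j'+iB$, and collecting the phase $e^{2\pi\i m(j/B-\sigma b)}$ --- is exactly right and is the whole content of the argument in \cite{ckps16}.

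Where your plan goes astray is the final step. You treat $x*G^{(j)}_{\sigma,b}$ as a genuine continuous convolution and then invoke Poisson summation plus a band-limit on $x(\sigma(a-r))$ to collapse the integral to the integer sum. This is both unnecessary and unjustified. First, the lemma is applied with $x$ replaced by $x\cdot H$ where $x=x^*+g$ and $g$ is \emph{arbitrary} noise; $(x^*+g)\cdot H$ has no band-limit in general, so your aliasing check cannot go through. Second --- and this is the real point --- look again at Definition~\ref{def:G_j_sigma_b}: the Fourier transform is written as $\wh{G}^{(j)}_{\sigma,b}(f)=\sum_{i\in\Z}\wh{G}(\sigma f+\sigma b-i-j/B)$, which is $1/\sigma$-periodic in $f$. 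A function with periodic Fourier transform is a Dirac comb in time, so $G^{(j)}_{\sigma,b}$ is supported on $\sigma\Z$, namely $G^{(j)}_{\sigma,b}(t)=\sum_{m\in\Z}G(m)e^{2\pi\i m(j/B-\sigma b)}\delta(t-\sigma m)$. (The time-domain formula $\frac{1}{\sigma}G(t/\sigma)e^{\cdots}$ in the definition is the continuous envelope before sampling; the stated $\wh{G}^{(j)}_{\sigma,b}$ is the transform of the sampled version.) With this reading, the ``convolution'' $z_j(\sigma a)$ is already the discrete sum $\sum_{m}x(\sigma(a-m))G(m)e^{2\pi\i m(j/B-\sigma b)}$, and the identity $\wh{u}_j=z_j(\sigma a)$ is purely the reindexing you already did --- no Poisson summation, no band-limit hypothesis on $x$. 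Drop the last paragraph of your plan and you have the proof.
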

Note that when we apply Lemma \ref{lem:hashtobins}, we take $x(t)=x(t)\cdot H(t)$, where the latter $x(t)$ is the observable signal, the $H(t)$ is the filter of time domain (see Section~\ref{sec:t_filter}).

\subsection{Frequency isolation}
\label{sec:hash_stratgy:freq_iso}

\begin{figure}[!ht]
    \centering
    {\includegraphics[width=\textwidth]{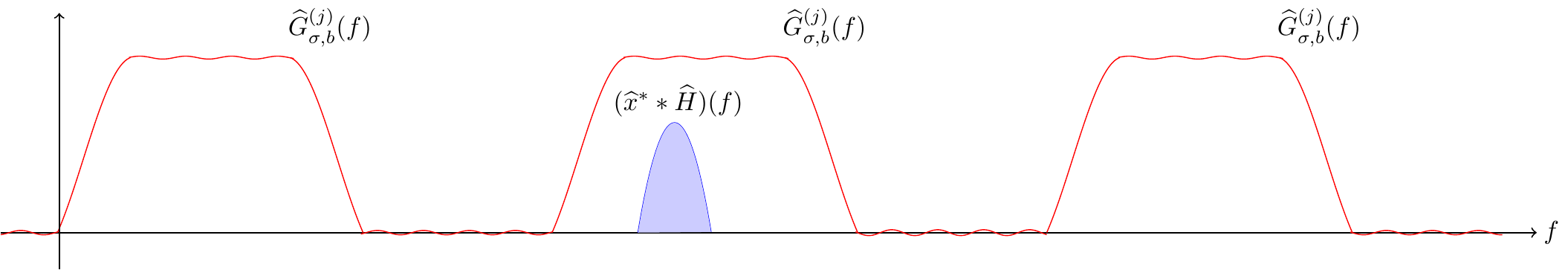}}
\caption{An example of the well-isolation event in the frequency domain. $\wh{G}^{(j)}_{\sigma, b}$ (the red curve) is the filter of frequency domain for the $j$-th bin 
and $H(t)$ is the filter of time domain. $\wh{x}^*(f)*\wh{H}(f)$ (the blue curve) is the filtered ground-truth signal. 
Under the well-isolation event, there is one small interval that contains most of the energy. In other words, each bin only contains one-cluster of frequencies.
}\label{fig:isolation}
\end{figure}

The goal of this section is to define and analyze the Frequency Isolation event. Frequency Isolation requires that the energy of the hashed signal in each bin is concentrated in a small band in the frequency domain. This condition is roughly equivalent to say that each bin only contains one cluster of frequencies. %
This condition is very useful in proving the concentration of the filtered signal in the frequency domain, which serves as one of the basic assumptions of our significant sample generation procedure.

We first introduce Claim \ref{cla:PS15_hash_claims}. This claim states that if two frequencies are not close to each other, with large probability, they also not hashed into the same bin.

\begin{claim}[Collision probability,  \cite{ckps16}]\label{cla:PS15_hash_claims}
For any $\Delta_0>0$, let $\sigma$ be a sample uniformly at random from $[\frac{1}{4B\Delta_0}, \frac{1}{2B\Delta_0}]$. Then, we have:

\begin{enumerate}
\item If $4\Delta_0 \leq|f^+ - f^-| < {2(B-1)\Delta_0} $, then $\mathsf{Pr}[h_{\sigma,b}(f^+) = h_{\sigma,b}(f^-)]=0$.

\item If ${2(B-1)\Delta_0} \leq |f^+ - f^-|$, then
$\mathsf{Pr}[h_{\sigma,b}(f^+) = h_{\sigma,b}(f^-)] \lesssim \frac{1}{B}$.
\end{enumerate}
\end{claim}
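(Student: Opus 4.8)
The plan is to reduce both parts to a statement about the random variable $\sigma |f^+ - f^-| \pmod 1$, following the standard hashing analysis from \cite{ps15, ckps16}. First I would observe that a collision $h_{\sigma,b}(f^+) = h_{\sigma,b}(f^-)$ requires $\pi_{\sigma,b}(f^+)$ and $\pi_{\sigma,b}(f^-)$ to round to the same multiple of $1/B$; since rounding to the nearest multiple of $1/B$ maps two points to the same bin only if they are within $1/B$ of each other on the circle $\mathbb{R}/\mathbb{Z}$ (up to the wraparound at bin boundaries, which one handles by noting the difference $\pi_{\sigma,b}(f^+) - \pi_{\sigma,b}(f^-) = \sigma(f^+ - f^-) \pmod 1$ lies within $1/B$ of $0$ or $1$). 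Writing $\Lambda := |f^+ - f^-|$, a necessary condition for collision is therefore $\sigma \Lambda \pmod 1 \in [-1/B, 1/B]$ (identifying this interval with its wraparound near $1$).

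For part (1), I would use the hypothesis $4\Delta_0 \le \Lambda < 2(B-1)\Delta_0$ together with $\sigma \in [\tfrac{1}{4B\Delta_0}, \tfrac{1}{2B\Delta_0}]$ to bound $\sigma \Lambda$ directly. The lower end gives $\sigma \Lambda \ge \tfrac{4\Delta_0}{4B\Delta_0} = \tfrac{1}{B}$, and the upper end gives $\sigma \Lambda < \tfrac{2(B-1)\Delta_0}{2B\Delta_0} = \tfrac{B-1}{B} < 1$. Hence $\sigma\Lambda$ lies strictly inside $[\tfrac1B, 1 - \tfrac1B]$ with no wraparound possible, so $\sigma\Lambda \pmod 1 = \sigma\Lambda \notin [-1/B,1/B]$, and the collision probability is exactly $0$. (I should double-check the boundary constants and the exact rounding convention to make sure the interval endpoints are handled with the correct strictness; this is the one place small off-by-constant slips could creep in.)

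For part (2), with $\Lambda \ge 2(B-1)\Delta_0$, the product $\sigma \Lambda$ ranges over an interval of length $\Lambda \cdot (\tfrac{1}{2B\Delta_0} - \tfrac{1}{4B\Delta_0}) = \tfrac{\Lambda}{4B\Delta_0} \ge \tfrac{2(B-1)\Delta_0}{4B\Delta_0} = \tfrac{B-1}{2B} \gtrsim 1$, so as $\sigma$ varies uniformly, $\sigma\Lambda \pmod 1$ is close to uniform on $[0,1)$ — more precisely, since the range has length at least a constant times $1$, the fraction of it landing in any fixed window of width $2/B$ is $O(1/B)$. Therefore $\mathsf{Pr}[\sigma\Lambda \pmod 1 \in [-1/B,1/B]] \lesssim 1/B$, and since this event contains the collision event, $\mathsf{Pr}[h_{\sigma,b}(f^+) = h_{\sigma,b}(f^-)] \lesssim 1/B$.

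The main obstacle I anticipate is not conceptual but bookkeeping: correctly translating ``$h_{\sigma,b}$ sends $f^+, f^-$ to the same bin'' into a clean modular condition on $\sigma\Lambda$, including the circular wraparound at the bin boundary $0 \equiv 1$ and the precise behavior of $\mathrm{round}(\cdot \cdot B)$ at half-integer points. Once that equivalence (or one-sided implication, which is all we need) is pinned down, both parts follow from elementary interval arithmetic as above. Since this claim is quoted verbatim from \cite{ckps16}, I would also simply cite their Lemma for the wraparound details and keep the self-contained argument at the level of the sketch above.
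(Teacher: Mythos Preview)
Your sketch is correct and is exactly the standard argument from \cite{ps15,ckps16}; the paper does not give its own proof of this claim but simply cites it. One small note: for part (2) the paper's Lemma~\ref{lem:wrapping} is precisely the ``close to uniform modulo $1$'' tool you invoke informally, so you can cite it directly rather than re-deriving it.
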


We then provide the formal definition of the well-isolation event:
\begin{definition}[Well-isolation condition]\label{def:k_signal_recovery_z}
  We say that a frequency $f^*$ is \emph{well-isolated} under the
  hashing parameters $(\sigma, b)$ if, for $j = h_{\sigma, b}(f^*)$, the hashed signal (in frequency domain) $\wh{z}^{(j)}(f) := \widehat{x\cdot H}(f) \cdot \widehat{G}^{ (j)}_{\sigma,b}(f)$
  satisfies
\begin{align*}
      \int_{\overline{I_{f^*}}} |{\wh{z}^{(j)}(f)}|^2df \lesssim \epsilon \cdot T\N^2/k,
\end{align*}
over the interval $\overline{I_{f^*}} = (-\infty, \infty) \setminus (f^* -\Delta, f^* + \Delta)$.
\end{definition}

The following lemma shows the probability of the Frequency Isolation event under the randomized hashing functions:
\begin{lemma}[Lemma 7.19 in \cite{ckps16}]\label{lem:often-well-isolated}
  Let $f^*$ be any frequency.  Then $f^*$ is well-isolated by
  hashing parameters $(\sigma, b)$ with probability $\ge 0.9$. 
\end{lemma}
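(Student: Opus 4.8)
\textbf{Proof proposal for Lemma~\ref{lem:often-well-isolated}.}

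The plan is to decompose the ground-truth signal $x^*(t) = \sum_{i=1}^k v_i e^{2\pi\i f_i t}$ according to how each frequency $f_i$ behaves relative to the target frequency $f^*$ under the random hash, and to bound the energy that leaks into $\overline{I_{f^*}}$ in each case. Write $j = h_{\sigma,b}(f^*)$ and let $\wh{z}^{(j)}(f) = \widehat{x^*\cdot H}(f)\cdot \wh{G}^{(j)}_{\sigma,b}(f)$ (the noise $g$ is handled separately or folded into $\N^2$). Since $\widehat{x^* \cdot H}(f) = \sum_i v_i \wh{H}(f - f_i)$ and $\wh H$ has support of width $|\supp(\wh H)| = \Delta/k$, the contribution of frequency $f_i$ to $\wh{z}^{(j)}$ is essentially concentrated in the interval $[f_i - \Delta/k, f_i+\Delta/k]$, reweighted by $\wh G^{(j)}_{\sigma,b}$.

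The key steps, in order: (i) Split the frequencies into those within $O(\Delta)$ of $f^*$ and those farther away; the nearby ones contribute only inside $(f^*-\Delta, f^*+\Delta)$ (using that $\Delta = k\cdot|\supp(\wh H)|$ dominates $|\supp(\wh H)|$ and the $G$-filter support is $O(1/B)$ in the rescaled domain, which translates to $O(\Delta)$ in $f$), so they contribute nothing to the integral over $\overline{I_{f^*}}$. (ii) For a far frequency $f_i$ with $|f_i - f^*|$ in the range $[4\Delta, 2(B-1)\Delta]$, apply Claim~\ref{cla:PS15_hash_claims}(1) with $\Delta_0 \eqsim \Delta$ to conclude $h_{\sigma,b}(f_i)\neq j$ deterministically, so the offset $o_{\sigma,b}(f)$ is bounded away from $0$ on the support of $f_i$'s contribution, and by Property~\RN{3} of Lemma~\ref{lem:property_of_filter_G} the filter value there is at most $\delta/k$; hence the energy leaked is at most $(\delta/k)^2$ times $|v_i|^2\|\wh H\|_2^2$ — negligible. (iii) For very far frequencies, $|f_i - f^*| \geq 2(B-1)\Delta$, use Claim~\ref{cla:PS15_hash_claims}(2): the collision probability is $O(1/B)$; summing over the (at most $k$) such frequencies and using $B = \Theta(k)$ with a large enough constant, a union bound gives that with probability $\geq 0.95$ none of them collides with bin $j$, and again Property~\RN{3} kills their contribution. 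One also has to rule out the ``boundary'' sub-case where $f_i$ is not hashed to $j$ but its $\wh H$-smeared tail still spills into the bin; this is controlled because $\Delta = k|\supp(\wh H)|$ gives a buffer of $k$ copies of the smear-width around each bin edge. (iv) Collect the failure probabilities: the only randomness used is in $\sigma$ (and $b$), and the total bad probability is $\leq 0.05 + (\text{boundary event})$, which can be pushed below $0.1$ by choosing constants; on the good event the integral over $\overline{I_{f^*}}$ is at most $O(\delta^2/k)\cdot\|x^*\|_T^2 \lesssim \epsilon T\N^2/k$ after absorbing constants into $\epsilon$ and recalling $\N^2 \geq \delta\|x^*\|_T^2$.

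I would present this as essentially a restatement/re-derivation of \cite[Lemma 7.19]{ckps16} adapted to the present filter normalization, citing Claim~\ref{cla:PS15_hash_claims} and Lemma~\ref{lem:property_of_filter_G} as the two workhorses. The main obstacle I anticipate is the boundary sub-case in step (iii): a far frequency $f_i$ whose hash $h_{\sigma,b}(f_i)$ lands adjacent to $j$ (not equal to $j$) can still have $f_i$ sit close to the edge of bin $j$ after the mod-$1$ wraparound, so the offset $o_{\sigma,b}(f)$ is small and Property~\RN{3} does not apply; handling this requires a separate ``small offset'' / large-offset event argument (the content of Section~\ref{sec:hash_stratgy:large_off}), or alternatively absorbing it by noting the hashed energy in an adjacent-but-not-containing bin is still controlled by the $\sinc$-tail of $\wh G$. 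The cleanest route is probably to invoke the large-offset bound proven in the next subsection and take a union bound with the collision events, so that on the intersection of ``no collision'' and ``no large offset'' both the filter value on far frequencies and the residual boundary leakage are $O(\delta/k)$, yielding the claimed $0.9$ success probability.
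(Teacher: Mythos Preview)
The paper does not supply its own proof of this lemma: it is stated verbatim as a citation of \cite[Lemma~7.19]{ckps16} and left unproved. So there is no in-paper argument to compare against.

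Your sketch is a reasonable reconstruction of the \cite{ckps16} argument and correctly identifies Claim~\ref{cla:PS15_hash_claims} and Lemma~\ref{lem:property_of_filter_G} as the two ingredients that drive the bound. Since you already frame your write-up as ``essentially a restatement/re-derivation of \cite[Lemma~7.19]{ckps16},'' that matches how the present paper treats the result. One caveat: your step~(iii) bounds the \emph{expected number} of collisions by $O(k)\cdot O(1/B)$ and then tries to upgrade this to ``with probability $\geq 0.95$ none of them collides'' via a union bound; with $B=\Theta(k)$ the expected number of collisions is only $O(1)$, not $o(1)$, so a plain union bound does not give a $0.95$ no-collision guarantee. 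The actual \cite{ckps16} argument bounds the \emph{expected leaked energy} (each far frequency contributes $O(1/B)$ in expectation, summed over $k$ frequencies gives $O(1)$ times a single frequency's energy, which is then compared to $T\N^2/k$) and applies Markov's inequality, rather than demanding zero collisions. You may want to reroute step~(iii) through an expectation-plus-Markov argument instead.
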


\subsection{Large offset event}
\label{sec:hash_stratgy:large_off}

Large offset event is another kind of bad event for the \textsc{HashToBins} procedure, which happens when a ground-truth frequency is hashed into the changing edge of the filter $G^{(j)}_{\sigma, b}$. The large offset event breaks the guarantee of our signal equivalent method, and thus affects the performance of our significant sample generation and frequency estimation. Fortunately, this bad event only happens with a small probability.

We first state a tool for analyzing the hashing procedure, which intuitively says that the modular of a random sampling from a long interval is almost uniformly distributed:
\begin{lemma}[\cite{ps15, ckps16}] \label{lem:wrapping}
For any $\widetilde{T}$, and $ 0 \leq \widetilde{\epsilon}, \widetilde{\delta} \leq \widetilde{T}  $, if we sample $\widetilde{\sigma}$ uniformly at random from $[A,2A]$, then
\begin{equation}%
\frac{2\widetilde{\epsilon} }{ \widetilde{T} } - \frac{2\widetilde{\epsilon} }{ A } \leq \mathsf{Pr} \left[ \widetilde{\sigma}  {\pmod {\widetilde{T}} } \in [ \widetilde{\delta} - \widetilde{\epsilon}, \widetilde{\delta} + \widetilde{\epsilon} ~] \right] \leq \frac{2\widetilde{\epsilon} }{ \widetilde{T} } + \frac{4\widetilde{\epsilon} }{ A }.
\end{equation} 
\end{lemma}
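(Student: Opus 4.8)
The plan is to reduce the claim to an elementary length computation for a periodic subset of the real line. Write $\phi(\sigma):=\sigma\pmod{\widetilde{T}}$, and let $J:=[\widetilde{\delta}-\widetilde{\epsilon},\widetilde{\delta}+\widetilde{\epsilon}]$ be the target interval, regarded as a subset of one period $(0,\widetilde{T}]$ of $\phi$. In the regime relevant to the applications — where $\widetilde{\epsilon}$ is small enough that this interval lies inside $[0,\widetilde{T}]$ — we have $|J|=2\widetilde{\epsilon}$, and I will run the argument under that assumption; the general case is verbatim with $2\widetilde{\epsilon}$ replaced by $|J|$. Since $\widetilde{\sigma}$ is uniform on the length-$A$ interval $[A,2A]$, the probability in question equals $\frac{1}{A}\big|\{\sigma\in[A,2A]:\phi(\sigma)\in J\}\big|$, where $|\cdot|$ denotes Lebesgue measure, so it suffices to show that this measure lies in $[\,2p\widetilde{\epsilon},\,2(p+1)\widetilde{\epsilon}\,]$ for $p:=\lfloor A/\widetilde{T}\rfloor$.

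The key step is to tile $[A,2A]$ by complete periods together with one short leftover. Write $A=p\widetilde{T}+r$ with $p=\lfloor A/\widetilde{T}\rfloor$ and $0\le r<\widetilde{T}$, and decompose $[A,2A]=\big(\bigcup_{i=1}^{p}[A+(i-1)\widetilde{T},\,A+i\widetilde{T}]\big)\cup[A+p\widetilde{T},\,2A]$, a union of $p$ intervals each of length exactly $\widetilde{T}$ and one final interval of length exactly $r<\widetilde{T}$. On each length-$\widetilde{T}$ block the map $\phi$ is a measure-preserving bijection onto $(0,\widetilde{T}]$ (piecewise a translation), so it contributes exactly $|J|=2\widetilde{\epsilon}$ to $\big|\{\sigma:\phi(\sigma)\in J\}\big|$; on the final block $\phi$ is a translation onto an arc of length $r$, hence its contribution lies in $[0,\min\{r,|J|\}]\subseteq[0,2\widetilde{\epsilon}]$. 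Summing the $p+1$ contributions yields $\big|\{\sigma\in[A,2A]:\phi(\sigma)\in J\}\big|\in[\,2p\widetilde{\epsilon},\,2(p+1)\widetilde{\epsilon}\,]$.

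Finally I would divide by $A$ and use $A/\widetilde{T}-1\le p\le A/\widetilde{T}$: the lower estimate gives $\frac{2p\widetilde{\epsilon}}{A}\ge\frac{2\widetilde{\epsilon}}{\widetilde{T}}-\frac{2\widetilde{\epsilon}}{A}$ and the upper estimate gives $\frac{2(p+1)\widetilde{\epsilon}}{A}\le\frac{2\widetilde{\epsilon}}{\widetilde{T}}+\frac{2\widetilde{\epsilon}}{A}\le\frac{2\widetilde{\epsilon}}{\widetilde{T}}+\frac{4\widetilde{\epsilon}}{A}$, which is exactly the assertion (indeed slightly stronger on the upper side). The one point that needs care is getting the clean $2\widetilde{\epsilon}/A$ error in the lower bound rather than the lossy $4\widetilde{\epsilon}/A$ that a naive rounding at both endpoints would produce: this is precisely what the decomposition above buys, because $[A,2A]$ has length \emph{exactly} $A=p\widetilde{T}+r$ and therefore leaves only a \emph{single} sub-period fragment (at the right end) instead of two. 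I expect this bookkeeping to be the only real obstacle; the measure-preservation of $\phi$ on each length-$\widetilde{T}$ block is immediate, and the only regime where $|J|=2\widetilde{\epsilon}$ fails (the interval protruding from $[0,\widetilde{T}]$) does not occur in the applications and is handled by the same estimate with $2\widetilde{\epsilon}$ replaced by $|J|$.
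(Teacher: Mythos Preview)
The paper does not actually prove this lemma: it is stated with a citation to \cite{ps15, ckps16} and used as a black box, so there is no ``paper's own proof'' to compare against. Your argument is the standard one and is correct: tile $[A,2A]$ by $p=\lfloor A/\widetilde{T}\rfloor$ full periods plus one leftover of length $r<\widetilde{T}$, observe each full period contributes exactly $|J|$ to the preimage while the leftover contributes at most $|J|$, and then bound $p$ by $A/\widetilde{T}-1\le p\le A/\widetilde{T}$. Your observation that this in fact yields $2\widetilde{\epsilon}/A$ rather than $4\widetilde{\epsilon}/A$ on the upper side is also right; the stated bound is simply slack. The only caveat worth recording is the one you flag yourself: when $[\widetilde{\delta}-\widetilde{\epsilon},\widetilde{\delta}+\widetilde{\epsilon}]$ protrudes from $[0,\widetilde{T}]$ the effective target length in one period may drop below $2\widetilde{\epsilon}$, which can invalidate the lower bound as stated (the upper bound survives since the per-period contribution only shrinks). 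This edge case is indeed irrelevant in the paper's two applications of the lemma, where $\widetilde{\epsilon}$ is small relative to $\widetilde{T}$ and $\widetilde{\delta}$ is away from the boundary.
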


Then, we define the large offset event:
\begin{definition}[Large offset event]
Given $ \sigma\in\R_+, b \in \R$. Let ${G}^{(j)}_{\sigma,b}$ and $\delta$ be defined as in Definition \ref{def:define_G_filter}. For any $k$-Fourier-sparse signal $x$, we say the \emph{Large Offset event} happens, if for any $f\in \supp(\wh{x\cdot H})$ and any $j\in [B]$, 
\begin{align*}
    \widehat{G}^{ (j)}_{\sigma,b}(f)\in \Big[\frac{\delta}{k}, 1-\frac{\delta}{k}\Big].
\end{align*}
\end{definition}

We analyze the probability of large offset event in the following lemma:

\begin{lemma}\label{lem:large_off_not_happen}
Let $\Delta_0=O(\Delta)$, $ \wh{\sigma}=1/\Delta_0$. Given $b=O(\max\{F, 1/\wh{\sigma}\})$, suppose $ \sigma \sim [0.5 \wh{\sigma}, \wh{\sigma}]$ uniformly at random. Then, with probability at least $0.99$, the Large Offset event does not happen. 

Furthermore, with probability at least $0.99$, for any $j\in [k]$, for any $f\in f_j+\supp(\wh{H})$, it holds that $\hat{G}_{\sigma,b}^{(j)}(f)\notin [\delta/k, 1-\delta/k]$.
\end{lemma}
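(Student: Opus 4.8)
The goal is to bound the probability of the Large Offset event (and its ``furthermore'' variant) using the wrapping lemma (Lemma~\ref{lem:wrapping}) together with a union bound over the relevant frequencies. The key observation is that $\wh{G}^{(j)}_{\sigma,b}(f)\in[\delta/k,1-\delta/k]$ can only happen, by Lemma~\ref{lem:property_of_filter_G} Properties~\RN{1}--\RN{3}, when the offset $o_{\sigma,b}(f)$ lies in the ``transition band'' $(1-\alpha)\tfrac{2\pi}{2B}\le |o_{\sigma,b}(f)|\le \tfrac{2\pi}{2B}$, i.e. in a window of total width $O(\alpha/B)$ around the bin edges. So the plan is: first, translate the statement ``$f$ has a large offset'' into ``$\sigma f + \sigma b \pmod{1/B}$ (or equivalently $\sigma(f+b)\pmod 1$ scaled appropriately) falls into an interval of length $O(\alpha/B)$ centered at each bin boundary''; second, apply Lemma~\ref{lem:wrapping} with $\wt T = 1/B$, $\wt\epsilon = \Theta(\alpha/B)$, $A = \Theta(\wh\sigma)$ to get that for a \emph{fixed} $f$ this happens with probability $O(\alpha) + O(\alpha/(B\wh\sigma))$; third, union-bound over all $f\in\supp(\wh{x\cdot H})$.

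For the counting in the union bound, note that $\wh{x\cdot H} = \wh{x}*\wh{H}$, and since $x$ is $k$-Fourier-sparse, $\supp(\wh{x\cdot H})\subseteq \bigcup_{j=1}^k (f_j + \supp(\wh H))$, which is a union of $k$ intervals each of length $|\supp(\wh H)|$. Since $\Delta = k\cdot|\supp(\wh H)|$ and $\Delta_0 = O(\Delta)$, each such interval has length $O(\Delta/k) = O(\Delta_0/k)$, which is small compared to the spacing $1/(\sigma B) = \Theta(\Delta_0)$ between bin edges in the $f$-domain. Hence within each of the $k$ intervals, the image under $f\mapsto \sigma(f+b)$ spans a set of length $O(\sigma \Delta_0/k) = O(1/(Bk))$, so it can straddle at most $O(1)$ bin boundaries, and the ``bad'' sub-event for that interval still corresponds to $\sigma$ (mod a suitable period) landing in an interval of length $O(\alpha/B)\cdot\text{(number of boundaries)} = O(\alpha/B)$. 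Applying Lemma~\ref{lem:wrapping} per interval and summing over the $k$ intervals gives a total bad probability of $O(k\alpha) + O(k\alpha/(B\wh\sigma))$. Using $B = \Theta(k)$ and $\wh\sigma = 1/\Delta_0 = \Omega(1)$ appropriately (and choosing $\alpha$ to be a sufficiently small constant, as is done in the filter parameter setup, or absorbing the $1/k$ from $B=\Theta(k)$), this is at most $0.01$. I would carry this out by first fixing the choice $\wt\epsilon$ and $\wt T$ explicitly, then stating a single-frequency-interval claim as an intermediate step, then summing.

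The ``furthermore'' part is essentially the same argument restricted to the cleaner set $\bigcup_{j\in[k]}(f_j + \supp(\wh H))$ rather than all of $\supp(\wh{x\cdot H})$; in fact it is the same set, so the two statements have the same proof and I would prove the furthermore version and note the first follows, or vice versa. Care is needed with the constant: we want \emph{both} the probability that some $f$ has $\wh G^{(j)}_{\sigma,b}(f)$ in the transition band (the first claim) \emph{and} the complementary good event used later; I would phrase it as ``with probability $\ge 0.99$, for every $f$ in the support and every bin $j$, $\wh G^{(j)}_{\sigma,b}(f)\notin[\delta/k,1-\delta/k]$,'' which is exactly the negation of the Large Offset event.

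\textbf{Main obstacle.} The delicate point is the quantitative bookkeeping in the union bound: making sure the width of the transition band ($O(\alpha/B)$ per bin edge), the number of bin edges each short interval $f_j + \supp(\wh H)$ can straddle ($O(1)$, using $|\supp(\wh H)| = \Delta/k \ll \Delta_0 \eqsim 1/(\sigma B)$), and the factor $k$ from the union bound all combine, after plugging $B=\Theta(k)$ and $\wh\sigma = 1/\Delta_0$ into Lemma~\ref{lem:wrapping}, to a genuine constant below $0.01$ — this requires $\alpha$ to be chosen as a small enough absolute constant and requires checking that the $A = \Theta(\wh\sigma)$ term $4\wt\epsilon/A$ does not dominate. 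The rest is routine.
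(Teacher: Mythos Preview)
Your high-level plan (transition-band reduction, wrapping lemma, union bound over the $k$ intervals $f_j+\supp(\wh H)$) is the paper's plan, but your instantiation of the wrapping lemma diverges from the paper's in a way that leaves a real gap. You take $\wt T = 1/B$, $\wt\epsilon = \Theta(\alpha/B)$, $A=\Theta(\wh\sigma)$. Two problems. First, the random quantity whose modular distribution you need is $\sigma(b+f_j)$, not $\sigma$ itself; as $\sigma$ ranges over $[\tfrac12\wh\sigma,\wh\sigma]$ this ranges over $[\tfrac12\wh\sigma(b+f_j),\wh\sigma(b+f_j)]$, so the correct $A$ is $\tfrac12\wh\sigma(b+f_j)$. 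This is exactly where the hypothesis on $b$ (which you never use) enters the paper's argument: it forces $\wh\sigma b\ge 2$ and hence $A\gtrsim 1$, killing the $4\wt\epsilon/A$ term. Second, and fatally, with $\wt T=1/B$ the leading term $2\wt\epsilon/\wt T$ is $O(\alpha)$ per interval, and after the $k$-fold union bound you get $O(k\alpha)$; since $\alpha$ is a fixed constant in the filter construction, this is $\Theta(k)$, not $\le 0.01$. Your ``main obstacle'' paragraph flags exactly this and then says a small enough absolute constant $\alpha$ suffices --- it does not.

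The paper does something different: it works modulo $\wt T=1$, not $1/B$. It fattens $I_{f_j}$ by the transition width to $I'_{f_j}$ with $|I'_{f_j}|=O(\Delta/B)$, rewrites ``$I_{f_j}$ meets the transition band'' as ``one of the two bin-edge points $\pm\tfrac{1}{2B\sigma}+\tfrac{j}{B\sigma}-b$ lies in $I'_{f_j}\pmod{1/\sigma}$'', and applies Lemma~\ref{lem:wrapping} to $\wt\sigma=\sigma(b+f_j)$ with $\wt T=1$, $\wt\epsilon=\tfrac12\wh\sigma|I'_{f_j}|$, $A=\tfrac12\wh\sigma(b+f_j)$. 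This yields $O(\wh\sigma|I'_{f_j}|)=O(1/B)$ per edge, and summing over $j\in[k]$ gives $O(k/B)\le 0.01$. (A minor slip in your plan: $\sigma\cdot|\supp(\wh H)|=\sigma\cdot\Delta/k\le \wh\sigma\Delta/k=O(1/k)$, not $O(1/(Bk))$ as you wrote, so each image interval has length comparable to $1/B$, not $\ll 1/B$; the ``$O(1)$ boundaries'' conclusion survives but with no room to spare.)
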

\begin{proof}
Let $\alpha$ be defined as in Definition \ref{def:define_G_filter}. Let $I_G :=\{f\in\R~|~ \widehat{G}^{ (j)}_{\sigma,b}(f)\in [\delta /k, 1-\delta / k]\}$. Following from Lemma \ref{lem:property_of_filter_G} Property \RN{2}, we have that $ s_G:=|I_G{\pmod {1/\sigma }}|\leq 10\alpha \Delta_0/B$. 

Let $\delta_{f^*}(f)$ be the Dirichlet function at $f^*$. For any $f_j$ with $j\in[k]$, let $I_{f_j}:=\supp(\wh{H}*\delta_{f_j})$.
We also define 
\begin{align*}
    I'_{f_j}:= \{f\in\R~|~ [f-s_G, f+s_G]\cap  I_{f_j} \neq \emptyset \}.
\end{align*}
Since $\supp(\wh{x\cdot H})=\supp(\wh{H}*\wh{x})\subseteq \bigcup_{j=1}^k \supp(\wh{H}*\delta_{f_j})$, we know that the Large Offset event happens if 
\begin{align*}
    \Big(\bigcup_{j=1}^k I_{f_j}\Big)\cap I_{G} \neq \emptyset.
\end{align*}
Thus, it suffices to bound $\mathsf{Pr} [(\cup_{j=1}^k I_{f_j})\cap I_{G} \neq \emptyset]$.

First, for any $j\in [k]$, we have
\begin{align}
    |I'_{f_j} | \leq &~ |I_{f_j}|+ 2s_G
    \leq \Delta /B + 2s_G
     \leq O(\Delta /B) \label{eq:0_large_off_not_happen}
\end{align}
where the first step follows from the definition of $\Delta$, the second step follows from  $ s_G\leq 10\alpha \Delta_0/B$ and the setting of $\alpha$. 

We have that
\begin{align}
&~\mathsf{Pr} \Big[ \frac{1}{2B\sigma}+\frac{j}{B\sigma}-b   {\pmod {1/\sigma }} \in I'_{f_j} {\pmod {1/\sigma }}\Big] \notag \\
=&~ \mathsf{Pr} \Big[ \frac{1}{2B}+\frac{j}{B}   {\pmod {1 }} \in \sigma b+ \sigma I'_{f_j} {\pmod {1}} \Big] \notag \\
=&~ \mathsf{Pr} \Big[ \sigma b+ \sigma f_j  {\pmod {1 }} \in \frac{1}{2B}+\frac{j}{B} + \sigma [-|I'_{f_j}|/2, |I'_{f_j}|/2] {\pmod {1}} \Big] \notag \\
\leq &~ \mathsf{Pr} \Big[ \sigma b+ \sigma f_j  {\pmod {1 }} \in \frac{1}{2B}+\frac{j}{B} + \wh{\sigma} [-|I'_{f_j}|/2, |I'_{f_j}|/2] {\pmod {1}} \Big] \notag \\
\leq&~ {\wh{\sigma}|I'_{f_j}| }+ \frac{2\wh{\sigma}|I'_{f_j}| }{ 0.5\wh{\sigma} b+0.5 \wh{\sigma} f_j } \notag \\
\leq&~ 2 {\wh{\sigma}|I'_{f_j}| } \notag \\
\leq&~ 2 {\wh{\sigma}\cdot O(\Delta/B)} \notag \\
\leq&~ O(1/B) \label{eq:1_large_off_not_happen}
\end{align}
where the first steps are straightforward, the second step follows from the center of $I'_{f_j}$ is $f_j$, the length of the interval $I'_{f_j}$ is $|I'_{f_j}|$, and $a\in[c-b,c+b]\Rightarrow c\in[a-b,a+b]$,
the third step follows from $ \sigma \leq \wh{\sigma}$,
the forth step follows by applying Lemma \ref{lem:wrapping} with the following parameters setting: 
\begin{align*}
    \wt{T}=&~ 1 ,\notag \\
    \wt{\delta}=&~ \frac{1}{2B}+\frac{j}{B},\notag \\
    \wt{\eps}=&~ \wh{\sigma} |I'_{f_j}|/2,\notag \\
    A=&~  0.5\wh{\sigma} b+0.5 \wh{\sigma} f_j,\notag \\
    \wt{\sigma}=&~  \sigma b+ \sigma f_j,
\end{align*}
the fifth step follows from $ 0.5 b\geq F \geq f_j$ and $ 0.5 b \wh{\sigma} \geq 1$, the sixth step follows from Eq.~\eqref{eq:0_large_off_not_happen}, the last step follows from the definition of $\wh{\sigma}$. %

Similarly, we have that 
\begin{align}
\mathsf{Pr} [ -\frac{1}{2B\sigma}+\frac{j}{B\sigma}-b   {\pmod {1/\sigma }} \in I'_{f_j} {\pmod {1/\sigma }}] \leq&~ O(1/B)\label{eq:2_large_off_not_happen}
\end{align}

Note that $I_G$ is the edge of filter $G^{(j)}_{\sigma, b}$, under the meaning of module $1/\sigma$, the center of $G^{(j)}_{\sigma, b}$ is $\frac{j}{B\sigma}-b$, the length of $G^{(j)}_{\sigma, b}$ is $\frac{1}{B\sigma}$, the length of the edge is $s_G$. 
Moreover, $I_{f_j}$ is an interval center at $f_j$ and length $|\supp(\wh{H})|$. We can judge whether two interval have intersect $I_{f_j}\cap I_{G} \neq \emptyset$ by moving the length of one interval to another and judging whether $-\frac{1}{2B\sigma}+\frac{j}{B\sigma}-b   {\pmod {1/\sigma }}$, $\frac{1}{2B\sigma}+\frac{j}{B\sigma}-b   {\pmod {1/\sigma }}$ (the end point of $I_G$) contains in $I'_{f_j} $. 
By combining Eq.~\eqref{eq:1_large_off_not_happen} and Eq.~\eqref{eq:2_large_off_not_happen}, we have that
\begin{align}
    \mathsf{Pr} [I_{f_j}\cap I_{G} \neq \emptyset] \leq O(1/B)+O(1/B) = O(1/B).
    \label{eq:3_large_off_not_happen}
\end{align}

Therefore, by a union bound over all $j\in [k]$, we get that
\begin{align*}
   \mathsf{Pr} [(\cup_{j=1}^k I_{f_j})\cap I_{G} \neq \emptyset]
   \leq  \sum_{j=1}^k \mathsf{Pr} [ I_{f_j}\cap I_{G} \neq \emptyset] 
      \leq  \sum_{j=1}^k O(1/B) 
      \leq  0.01,
\end{align*}
where the first step is by union bound, the second step follows from Eq.~\eqref{eq:3_large_off_not_happen}, and the last step follows from $B=O(k)$. 
By the definitions of $I_{f_j}$ and $I_G$, it implies that with probability at least $0.99$, for any $j\in [k]$, and any $f\in f_j+\supp(\wh{H})$, $\wh{G}_{\sigma,b}^{(j)}(f)\notin [\delta/k, 1-\delta/k]$. 

The proof of the lemma is then completed.
\end{proof}

\begin{lemma}\label{lem:lo_and_fstar2Glarge}
For $x^*(t)$ be a $k$-Fourier-sparse signal. For frequency $f^*\in\supp(\wh{x}^*)$, let $j=h_{\sigma, b}(f^*)$ be the bin that $f^*$ hashed into. If Large Offset event not happens, then for $f\in \supp(\wh{x}^**\wh{H})$, 
\begin{align*}
    \wh{G}^{(j)}_{\sigma, b}(f) \in [1-\delta/k, 1]
\end{align*}
\end{lemma}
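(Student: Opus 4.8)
The plan is to unwind the definitions of the Large Offset event and of the hash function $h_{\sigma,b}$, and show that ``not a large offset'' forces the filter value at every relevant frequency into the top regime $[1-\delta/k,1]$ rather than the changing-edge regime. Fix $f^*\in\supp(\wh{x}^*)$ and set $j=h_{\sigma,b}(f^*)$. First I would observe that $\supp(\wh{x}^**\wh{H})=\supp(\wh{x}^*)*\supp(\wh{H})\subseteq\bigcup_{i=1}^k\big(f_i+\supp(\wh{H})\big)$, so it suffices to control $\wh{G}^{(j)}_{\sigma,b}(f)$ for $f\in f^*+\supp(\wh{H})$ (this is the cluster that actually contributes near bin $j$; the contributions of the other clusters $f_i$ with $i\neq i^*$ are either in a different bin or negligible, but for the statement as written we only need the $f^*$-cluster since that is what is hashed into $j$). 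For any such $f$, $f\in\supp(\wh{x^*\cdot H})$, so by the negation of the Large Offset event we have $\wh{G}^{(j)}_{\sigma,b}(f)\notin[\delta/k,\,1-\delta/k]$, i.e. $\wh{G}^{(j)}_{\sigma,b}(f)\in[-\delta/k,\delta/k]\cup[1-\delta/k,1]$ by Lemma~\ref{lem:property_of_filter_G} (Properties I--III, which cover all three offset regimes).

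Next I would rule out the small regime $[-\delta/k,\delta/k]$. The idea is continuity together with the fact that $f^*$ itself lands near the center of bin $j$. By definition $j=h_{\sigma,b}(f^*)=\mathrm{round}(\pi_{\sigma,b}(f^*)B)$, so the offset $o_{\sigma,b}(f^*)$ satisfies $|o_{\sigma,b}(f^*)|\le \frac{1}{2B}$ (up to the $2\pi$ normalization in Lemma~\ref{lem:property_of_filter_G}), hence $\wh{G}^{(j)}_{\sigma,b}(f^*)\ge 1-\delta/k$ by Property I — in particular $f^*$ is in the large regime, not the small one. Now as $f$ ranges over the connected set $f^*+\supp(\wh{H})$ (an interval of length $|\supp(\wh{H})|$ containing $f^*$), $\wh{G}^{(j)}_{\sigma,b}(f)$ is continuous and, by the no-large-offset hypothesis, never enters the open middle band $(\delta/k,1-\delta/k)$; since it equals a value $\ge 1-\delta/k$ at $f=f^*$ and cannot jump across the forbidden band, it must stay in $[1-\delta/k,1]$ on the whole interval. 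That gives the claim.

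The one genuine subtlety — and the step I expect to be the main obstacle — is making precise that $f^*+\supp(\wh{H})$ is a \emph{connected} interval and that $\wh{G}^{(j)}_{\sigma,b}$ restricted to it is continuous, so that the ``cannot cross the gap'' argument is valid. Here one uses that $\wh{H}$ has interval support (which is how $\Delta=k\cdot|\supp(\wh{H})|$ is even defined) and that $\wh{G}^{(j)}_{\sigma,b}(f)=\sum_{i\in\Z}\wh{G}(\sigma f+\sigma b-i-j/B)$ is a continuous function of $f$ (finite-overlap sum of shifts of the continuous $\wh{G}$). A secondary point to check is the normalization mismatch between the $2\pi$-scaled offset in Lemma~\ref{lem:property_of_filter_G} and the raw $\frac{1}{2B}$ bound coming from $\mathrm{round}(\cdot)$; with $\Delta_0=O(\Delta)$, $\sigma\in[\tfrac{1}{B\Delta},\tfrac{2}{B\Delta}]$ chosen as in the hashing setup, one verifies $|o_{\sigma,b}(f^*)|\le(1-\alpha)\tfrac{2\pi}{2B}$ so Property I applies at $f^*$; this is the same bookkeeping already done in the proof of Lemma~\ref{lem:GlargerEverywhere}, so I would simply cite that computation. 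Everything else is immediate from the definitions and Lemma~\ref{lem:property_of_filter_G}.
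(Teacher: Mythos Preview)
Your proposal is correct and follows the same approach as the paper's (very terse) proof, which simply asserts the key step that you carefully justify via connectedness of $f^*+\supp(\wh{H})$ and continuity of $\wh{G}^{(j)}_{\sigma,b}$ together with the dichotomy from the no-Large-Offset hypothesis. Your observation that the statement as written is imprecise---it should read $f\in f^*+\supp(\wh{H})$ rather than all of $\supp(\wh{x}^**\wh{H})$, since frequencies in other clusters hashed to different bins would land in the small regime---is also correct and matches how the lemma is actually invoked later in the paper.
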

\begin{proof}

Since Large Offset event not happens, for $f\in\R$, 
\begin{align*}
    \wh{G}^{(j)}_{\sigma, b}(f) \geq 1-\delta/k \text{ or }\wh{G}^{(j)}_{\sigma, b}(f) \leq \delta/k.
\end{align*}

Since Large Offset event not happens and $j=h_{\sigma, b}(f^*)$, we have that for $f\in \supp(\wh{x}^**\wh{H})$,
\begin{align*}
    \wh{G}^{(j)}_{\sigma, b}(f) \geq 1-\delta /k.
\end{align*}

By Lemma \ref{lem:property_of_filter_G} Property \RN{1}, \RN{2}, and \RN{3}, we have that
\begin{align*}
    \wh{G}^{(j)}_{\sigma, b}(f) \in [ 1-\delta/k , 1].
\end{align*}
\end{proof}

\section{Filter in Time Domain}
\label{sec:t_filter}

\begin{figure}[!ht]
    \centering
    {\includegraphics[width=0.6\textwidth]{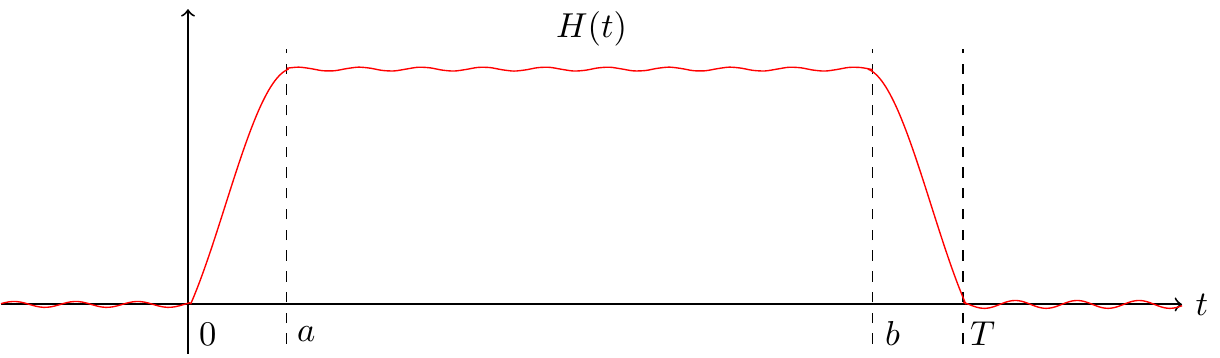}}
\caption{The filter $H(t)$ of time domain. We use decay region to refer $[0, a]$ and $[b, T]$. We use fluctuation region to refer $[a, b]$. 
}\label{fig:time_filter}
\end{figure}

In this section, we discuss the filter $H(t)$ of time domain, %
which is an analogous of the ideal filter $\rect_T(t)$. In the Fourier interpolation problem, we only care about the time duration $[0,T]$. Thus, applying the filter $\rect_T(t)$ to the observation signal $x(t)$ can cut-off the unobservable part and much simplify the analysis. Since $\rect_T(t)$ have an infinite band width, for efficient computation, we need to truncate $\rect_T(t)$'s frequency domain to a $\poly(k)/T$-length interval. However, the frequency truncation loses the high frequency components of the ideal filter $\rect_T(t)$, and the resulting filter $H(t)$ is no longer sharp around the boundary of $[0, T]$. More specifically, $H(t)$ is exponentially close to $1$ within $[T/\poly(k),T(1-1/\poly(k))]$, and exponentially close to $0$ outside $[0, T]$. 

We first provide the construction of $H(t)$ in \cite{cp19_icalp} and review some known properties (see Section \ref{sec:t_filter_pre}). Then, we discuss the normalization factor of the filter and provide a polynomial upper bound of it (see Section \ref{sec:t_filter_norm}). This bound is crucial for our fluctuation bound of the $H(t)$. Next,   we bound the fluctuation of $H(t)$ during a shrinking interval $[T/\poly(k),T(1-1/\poly(k))]$ and prove that $H(t)$ is exponentially close to $1$ in that range (see Section~\ref{sec:t_filter_range}). Furthermore,  we prove that $H(t)$ preserve the energy of Fourier sparse signal in the duration $[ 0, T]$ (see Section \ref{sec:_t_time_ene}).

\subsection{Time domain filter construction}
\label{sec:t_filter_pre}

We first introduce an growth rate bound from \cite{cp19_icalp}. This theorem bound the growth of Fourier sparse signal outside of the duration $[0, T]$ by an exponent function of base $t$. This bound in this theorem is high related to the size of the support set of $\wh{H}(f)$. 
\begin{theorem}[\cite{cp19_icalp}]\label{thm:bound_growth_around1}
There exists $S=O(k^2 \log k)$ such that for any $|t|>T$ and $g(t)=\sum_{j=1}^k v_j \cdot e^{2 \pi \i f_j t}$, $|g(t)|^2 \le \poly(k) \cdot \underset{x \in [-T,T]}{\E}[|g(x)|^2] \cdot |\frac{t}{T}|^{S}$.
\end{theorem}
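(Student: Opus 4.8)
The plan is to sidestep extremal-polynomial or sampling machinery entirely and instead convert the pointwise growth bound into a differential inequality for the \emph{tail energy} $E(r):=\int_{-r}^{r}|g(x)|^2\,\d x$, which I can then close by Gr\"onwall. The only external input I need is the frequency-independent energy bound of Theorem~\ref{thm:energy_bound}; in fact this route delivers the stronger exponent $S=O(k^2)$, which a fortiori gives the claimed $O(k^2\log k)$. I would first dispense with the trivial case $g\equiv 0$; otherwise $g$ is a nonzero real-analytic function on $\R$, so it has only isolated zeros and hence $E(r)>0$ for every $r>0$.

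The first step is to note that ${\cal F}_{k,F}$ is closed under the translation $x(\cdot)\mapsto x(\cdot+c)$ --- this changes only the coefficients $v_j$, not the frequencies --- so Theorem~\ref{thm:energy_bound} applies verbatim on every interval. Evaluating it at the \emph{moving endpoints} $\pm r$ of the window $[-r,r]$ yields $|g(\pm r)|^2\le \frac{Ck^2}{2r}\,E(r)$ for an absolute constant $C$. Since $g$ is continuous, $E$ is $C^1$ with $E'(r)=|g(r)|^2+|g(-r)|^2$, so this gives the scale-invariant differential inequality $E'(r)\le \frac{Ck^2}{r}\,E(r)$, valid for all $r>0$.

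The second step is Gr\"onwall: from $\frac{\d}{\d r}\ln E(r)\le Ck^2/r$, integrating from $T$ to $|t|$ (for any $|t|>T$) gives $E(|t|)\le E(T)\cdot(|t|/T)^{Ck^2}$. Applying the energy bound once more at the point $t\in[-|t|,|t|]$ and doing a few lines of bookkeeping (using $|t|>T$ and $\tfrac1T E(T)=2\,\E_{x\in[-T,T]}[|g(x)|^2]$) then yields $|g(t)|^2\le Ck^2\,\E_{x\in[-T,T]}[|g(x)|^2]\cdot|t/T|^{Ck^2}$, which is exactly Theorem~\ref{thm:bound_growth_around1} with $\poly(k)=Ck^2$ and $S=Ck^2$.

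I do not expect a serious obstacle here once Theorem~\ref{thm:energy_bound} is taken as given: the single idea is to read off the energy bound at the moving endpoint of the growing window, which is what produces the self-similar ODE $E'\le Ck^2E/r$. The only points requiring a moment's care are that the energy bound is legitimately applied at an interval endpoint (it is, being stated for all $t\in[0,T]$) and that $r\mapsto\int_{-r}^{r}|g|^2$ is differentiable (immediate from continuity of $g$). A heavier alternative --- discretizing $g$ along an arithmetic progression, observing that the samples form a sum of $k$ unimodular geometric progressions, and invoking a Tur\'an / Coppersmith--Rivlin-type extrapolation estimate --- also works but is more laborious and yields a worse exponent, so I would not take that path.
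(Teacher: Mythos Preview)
Your proposal is correct. The paper does not actually prove Theorem~\ref{thm:bound_growth_around1}; it is quoted verbatim from \cite{cp19_icalp} as a black-box input to the construction of the time-domain filter $H$. So there is no ``paper's own proof'' to compare against beyond the citation.

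That said, your argument is genuinely different from---and cleaner than---what the cited reference does. The route in \cite{cp19_icalp} goes through discrete extrapolation: one samples $g$ on a fine arithmetic progression inside $[-T,T]$, observes that the sample vector is a short sum of geometric progressions, and applies a Tur\'an/Coppersmith--Rivlin-type inequality to control the value at a far-away sample point; this is where the extra $\log k$ in the exponent $S=O(k^2\log k)$ comes from. Your approach instead reads Theorem~\ref{thm:energy_bound} at the \emph{moving endpoints} of the growing window $[-r,r]$, turning the pointwise $k^2$ energy bound into the scale-free differential inequality $E'(r)\le Ck^2 E(r)/r$ and closing it by Gr\"onwall. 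This is shorter, avoids any discretization, and delivers the sharper exponent $S=O(k^2)$, which for the purposes of this paper only helps (the support of $\wh H$ in Lemma~\ref{lem:supp_H} would shrink by a $\log k$ factor). The two points you flag as needing care---applicability of Theorem~\ref{thm:energy_bound} at the interval endpoints and on arbitrary intervals by translation/rescaling, and differentiability of $E$---are both fine exactly as you say.
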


The definition of the time domain filter $H(t)$ in \cite{cp19_icalp} is given in below. Intuitively, it uses some powers of $\sinc(t)$ to approximate $ \delta_0(t)*\rect_1(t) = \rect_1(t)$, thus one can get finite band-limit and good approximation at the same time. %
\begin{definition}[Definition 4.1 in \cite{cp19_icalp}]\label{def:filter_func_general}
Given an energy bound $R$ satisfying
\begin{align*}
    |x(t)|^2\lesssim R \|x(t)\|_T^2, ~~~\forall t\in [0, T] \text{ and $k$-Fourier sparse signal $x(t)$},
\end{align*}
the growth rate $S$ a power of two, $C\in 2\Z$, and $C_0\in \pi\Z$,%
we define the filter function:
\begin{align*}
    H_1(t)= s_0 \cdot \left( \sinc(C_0 R \cdot t)^{C \log R} \cdot \sinc\big(C_0 \cdot S \cdot t\big)^{C} \cdot \sinc\big(\frac{C_0 \cdot S}{2} \cdot t\big)^{2C} \cdots \sinc\big(C_0 \cdot t \big)^{C \cdot S} \right)*\rect_1(t),
\end{align*}
 where $s_0 \in \mathbb{R}^+$ is a parameter to normalize $H_1(0)=1$. 
Its Fourier transform is as follows:
\begin{align*}
    \wh{H}_1(f) = s_0 \cdot \left( \rect_{C_0 R}(f)^{*C \log R} * \rect_{C_0 \cdot S}(f)^{*C} * \rect_{\frac{C_0 \cdot S}{2}}(f)^{*2C} * \cdots * \rect_{C_0}(f)^{*C S} \right) \cdot \sinc(f/2).
\end{align*}
\end{definition}

We then state some basic properties of the time domain filter in \cite{cp19_icalp}. The following lemma bounds the support size of $\wh{H}_1(f)$:
\begin{lemma}[\cite{cp19_icalp}]\label{lem:supp_H}
Let $C_0=\Theta(C)$. we have that
\begin{align*}
    |\supp(\wh{H}_1(f))| = O(C^2 R \log R + C^2 S \log S). 
\end{align*}
\end{lemma}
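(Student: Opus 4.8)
The plan is to read the size of $\supp(\wh{H}_1)$ directly off the product--convolution expression for $\wh{H}_1(f)$ in Definition~\ref{def:filter_func_general}. The only structural fact needed is that for nonnegative functions the support of a convolution is the Minkowski sum of the supports, so convolving simply adds interval lengths; the scalar $s_0$ and the multiplicative factor $\sinc(f/2)$ (whose zero set has measure zero) do not affect $|\supp(\wh{H}_1)|$. Thus $|\supp(\wh{H}_1)|$ equals the length of the interval supporting
\begin{align*}
\rect_{C_0 R}(f)^{*C\log R} * \rect_{C_0 S}(f)^{*C} * \rect_{C_0 S/2}(f)^{*2C} * \cdots * \rect_{C_0}(f)^{*CS}.
\end{align*}

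First I would record that $\supp(\rect_a) = [-a/2, a/2]$ has length $a$, hence $\rect_a^{*m}$ is supported on an interval of length $ma$, and the support length of a convolution of such blocks is the sum of the block lengths. The leading block $\rect_{C_0 R}^{*C\log R}$ contributes length $C_0 R \cdot C\log R$. For the remaining blocks, using that $S$ is a power of two and indexing them by $j = 0, 1, \dots, \log_2 S$, the $j$-th block is $\rect_{C_0 S/2^j}^{*\,2^j C}$, contributing length $(C_0 S/2^j)(2^j C) = C_0 C S$, which is independent of $j$. Summing over the $\log_2 S + 1$ blocks gives total length $(\log_2 S + 1)\, C_0 C S = O(C_0 C S \log S)$.

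Adding the two contributions and substituting $C_0 = \Theta(C)$ yields
\begin{align*}
|\supp(\wh{H}_1(f))| = O(C_0 C R\log R) + O(C_0 C S\log S) = O(C^2 R\log R + C^2 S\log S),
\end{align*}
as claimed. There is essentially no obstacle here; the one step deserving a second glance is the observation that each of the $\Theta(\log S)$ dyadic blocks $\rect_{C_0 S/2^j}^{*\,2^j C}$ contributes the \emph{same} length $C_0 C S$, so the dyadic part costs only a $\log S$ overhead rather than something polynomial in $S$. One should also pin down the convention that $\supp(\wh{H}_1)$ refers to the support of this band-limited convolution factor (equivalently, the closure of $\{f : \wh{H}_1(f) \neq 0\}$, since the extra $\sinc(f/2)$ factor vanishes only on a measure-zero set).
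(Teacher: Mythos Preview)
Your argument is correct and is exactly the natural one: read off the support length from the convolution structure of $\wh{H}_1$, using that $\rect_a^{*m}$ is supported on an interval of length $ma$ and that support lengths add under convolution of nonnegative functions. The paper does not give its own proof of this lemma; it is stated with a citation to \cite{cp19_icalp} and used as a black box. Your write-up supplies the details cleanly, including the key observation that each dyadic block contributes the same length $C_0CS$, and correctly handles the $\sinc(f/2)$ factor.
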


The following theorem shows some time domain properties of the filter: 
\begin{theorem}[Theorem 4.2 in \cite{cp19_icalp}]\label{thm:HwithScaling}
  Let $R, S > 0$, let $C\in 2\Z$, $C_0\in\pi\Z$, $C_0=\Theta(C)$, and define
  $\alpha=(\frac{1}{2}+\frac{1.2}{\pi C_0 R})$. Consider any function
  $x$ satisfying the following two conditions:
 \begin{enumerate}
\item $\underset{t \in [-1,1]}{\sup} \big[ |x(t)|^2 \big] \le R \cdot \underset{t \in [-1,1]}{\E} \big[ |x(t)|^2 \big] $,
\item  $\poly(R) \cdot \underset{t \in [-1,1]}{\E} \big[ |x(t)|^2 \big] \cdot |{t}|^{S}$ for $t \notin [-1,1]$, %
\end{enumerate}
Then, we have that the filter function $H(t)=H_1\big(\alpha t \big)$  satisfies %
\begin{itemize}
\item Part 1. $\int_{-1}^1 |x(t) \cdot H\big(  t \big)|^2 \mathrm{d} t \ge 0.9 \int_{-1}^1 |x(t)|^2 \mathrm{d} t$,
\item Part 2. $\int_{-1}^1 |x(t) \cdot H\big(  t \big)|^2 \mathrm{d} t \ge 0.95 \int_{-\infty}^{\infty} |x(t) \cdot H\big(  t \big)|^2 \mathrm{d} t$,
\item Part 3. $|H(t)| \le 1.01$ for any $t$.
\end{itemize}
\end{theorem}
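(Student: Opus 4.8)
The plan is to unpack the explicit form $H_1(t)=s_0\cdot P(t)*\rect_1(t)$ from Definition~\ref{def:filter_func_general} (so that $\wh{H}_1=s_0\,\wh{P}\cdot\wh{\rect_1}$), where $P(t)$ denotes the product of $\sinc$-powers $\sinc(C_0Rt)^{C\log R}\sinc(C_0St)^{C}\cdots\sinc(C_0t)^{CS}$ and $s_0=\big(\int_{-1/2}^{1/2}P(s)\,\d s\big)^{-1}$ normalizes $H_1(0)=1$, and then to reduce all three parts to a handful of elementary facts about $P$. Since $|\sinc|\le 1$ we have $|P|\le 1$ everywhere with $P(0)=1$; $P\ge 0$ on $[-\tfrac1{C_0R},\tfrac1{C_0R}]$ (each factor is positive there); and $P$ decays quickly away from the origin, the factor $\sinc(C_0Rt)^{C\log R}$ giving $|P(t)|\le(\pi C_0R|t|)^{-C\log R}$ once $\pi C_0R|t|\ge 1$, and the factor $\sinc(C_0t)^{CS}$ giving $|P(t)|\le(\pi C_0|t|)^{-CS}$ once $|t|\ge 1$. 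Thus the effective support of $P$ has radius $\lesssim\tfrac1{\pi C_0R}$, and $\int_{|t|>\tau}|P|$ is super-polynomially small compared with $\int_{\R}P=\wh{P}(0)$ for any $\tau\gtrsim\tfrac1{C_0R}$; moreover $\wh{P}$ is the non-negative symmetric unimodal box convolution $\rect_{C_0R}^{*C\log R}*\cdots*\rect_{C_0}^{*CS}$ with $\int\wh{P}=P(0)=1$ and support of length $\poly(k)$ by Lemma~\ref{lem:supp_H}, so $\wh{P}(0)=\max\wh{P}\ge 1/\poly(k)$. In particular $\int_{-1/2}^{1/2}P=\wh{P}(0)-\int_{|s|>1/2}P\ge\tfrac12\wh{P}(0)$, which already gives the polynomial normalization bound $s_0\le 2/\wh{P}(0)\le\poly(k)$ promised in Section~\ref{sec:t_filter_norm}.

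For Part~3, since $\rect_1\ge 0$ has unit mass, $\sup_t(P*\rect_1)(t)=\sup_t\int_{t-1/2}^{t+1/2}P\le\int_{\{P\ge 0\}}P=\wh{P}(0)+\int_{\{P<0\}}|P|$, while $(P*\rect_1)(0)=\int_{-1/2}^{1/2}P\ge\wh{P}(0)-\int_{|s|>1/2}|P|$. As $P<0$ only where $|s|\ge\tfrac1{C_0R}$, both correction terms are super-polynomially smaller than $\wh{P}(0)$, hence $\|H\|_\infty=\|H_1\|_\infty=s_0\sup_t(P*\rect_1)(t)\le(1+o(1))\,s_0(P*\rect_1)(0)=(1+o(1))H_1(0)\le 1.01$ once $C,C_0$ are large enough constants and $R,S$ polynomially large.

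For Parts~1 and~2 I would pass to the rescaling $H(t)=H_1(\alpha t)$ with $\alpha=\tfrac12+\tfrac{1.2}{\pi C_0R}$. The same mass estimates show $H_1\ge 1-\eta$ for a small constant $\eta$ throughout $[-\tfrac12+\Theta(\tfrac1{C_0R}),\tfrac12-\Theta(\tfrac1{C_0R})]$ — there the window $[u-1/2,u+1/2]$ engulfs essentially all of $P$'s mass — so $H(t)\ge 1-\eta$ for all $|t|\le 1-\Theta(\tfrac1{C_0R})$, leaving only two ``end'' intervals of total length $\Theta(\tfrac1{C_0R})$ uncontrolled; the constant $\tfrac{1.2}{\pi}$ in $\alpha$ is picked precisely so that this transition finishes before $|t|$ reaches $1$ (which is what Part~2 needs too). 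By hypothesis~1 the energy of $x$ on those end intervals is at most $\sup_{[-1,1]}|x|^2$ times their length $\le R\cdot\E_{[-1,1]}|x|^2\cdot\Theta(\tfrac1{C_0R})=\Theta(\tfrac1{C_0})\,\E_{[-1,1]}|x|^2\le 0.01\int_{-1}^1|x|^2$ for $C_0$ a large constant, so $\int_{-1}^1|xH|^2\ge(1-\eta)^2(1-0.01)\int_{-1}^1|x|^2\ge 0.9\int_{-1}^1|x|^2$, which is Part~1. For Part~2, when $|t|>1$ the window $[\alpha t-1/2,\alpha t+1/2]$ sits entirely at distance $\ge\alpha-\tfrac12=\tfrac{1.2}{\pi C_0R}$ from the origin — past the effective support of $P$ — and beyond $|t|/4$ once $|t|\ge 2$; with the decay bounds this yields $|H(t)|\le\poly(R)\,|t|^{-CS}$ for $|t|\ge 2$ and $|H(t)|\le\poly(R)\,(1.2+\Theta(C_0R)(|t|-1))^{-C\log R}$ for $1<|t|<2$. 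Hypothesis~2 bounds $|x(t)|^2\le\poly(R)\,\E_{[-1,1]}|x|^2\,|t|^S$, and one checks that the tail decay of $H$ wins: with $S=O(k^2\log k)$ and $R=O(k^2)$ (Theorems~\ref{thm:energy_bound} and~\ref{thm:bound_growth_around1}) one has $S=\Theta(R\log R)$, so a large enough constant $CC_0$ makes $CC_0R\log R>S$, forcing $|x(t)H(t)|^2$ to decay exponentially in $S$ past $|t|=1$; this gives $\int_{|t|>1}|xH|^2\le o(1)\cdot\E_{[-1,1]}|x|^2\le o(1)\cdot\int_{-1}^1|x|^2\le o(1)\cdot\int_{-1}^1|xH|^2$ using Part~1, and rearranging $\int_{\R}|xH|^2=\int_{-1}^1|xH|^2+\int_{|t|>1}|xH|^2$ gives Part~2.

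The real obstacle is not any individual estimate but the simultaneous calibration of $R,S,C,C_0$: one needs $C_0$ large enough that the $\Theta(1/(C_0R))$-wide transition of $H$ near $\pm1$ costs only a negligible fraction of the in-window energy (Parts~1 and~3), and at the same time $C$ — hence the decay degree $CS$ — large enough relative to both the $\poly(R)$ normalization $s_0$ and the $\poly(R)$ prefactor and exponent $S$ of hypothesis~2, so that the polynomial tail decay of $H$ genuinely overpowers the degree-$S$ growth of $x$ in Part~2. Producing one consistent choice of the constants $C,C_0$ — which is possible only because of the quantitative relation $S=\Theta(R\log R)$ with $R=\Theta(k^2)$ — is where the argument is tight; the underlying $\sinc$-tail inequalities themselves are routine.
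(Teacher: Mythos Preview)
The paper does not prove this theorem at all: it is quoted verbatim as Theorem~4.2 of \cite{cp19_icalp} and used as a black box in Section~\ref{sec:t_filter_pre}. There is therefore no ``paper's own proof'' to compare your proposal against.

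That said, your sketch is broadly along the lines of the argument in \cite{cp19_icalp}: write $H_1=s_0\cdot(P*\rect_1)$ with $P$ the product of $\sinc$-powers, use the sharp decay of the factor $\sinc(C_0Rt)^{C\log R}$ to localize the mass of $P$ to an interval of radius $\Theta(1/(C_0R))$, and then read off Parts~1--3 from the resulting ``flat interior, short transition, fast tail'' structure of $H$. Your handling of Part~1 (the transition intervals carry only a $\Theta(1/C_0)$ fraction of the energy by hypothesis~1) and Part~2 (the $\sinc(C_0t)^{CS}$ factor gives $|t|^{-CS}$ decay that beats the $|t|^S$ growth once $C$ is large enough) matches the standard argument. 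Two points worth tightening: first, your Part~3 bound $\sup_t(P*\rect_1)(t)\le\int_{\{P\ge0\}}P$ is correct but the subsequent claim that $\{P<0\}\subset\{|s|\ge 1/(C_0R)\}$ needs the observation that on $[-1/(C_0R),1/(C_0R)]$ every $\sinc$ factor is nonnegative (since $C_0R$ is the largest frequency), which you state but should make explicit. Second, in Part~2 the regime $1<|t|<2$ is the delicate one --- there the window $[\alpha t-1/2,\alpha t+1/2]$ is only $\Theta(1/(C_0R))$ away from the origin, so the bound you want is driven by the $\sinc(C_0Rt)^{C\log R}$ factor, not the $\sinc(C_0t)^{CS}$ factor; your formula $(1.2+\Theta(C_0R)(|t|-1))^{-C\log R}$ captures this, but you should check that integrating it against the $\poly(R)$ prefactor over $|t|\in(1,2)$ really gives $o(1)\cdot\E_{[-1,1]}|x|^2$, which requires $C\log R\gtrsim\log(\poly(R))$, i.e.\ $C$ a sufficiently large constant.
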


Throughout this paper, we denote $H(t)$ as the following re-scaling of $H_1(t)$: 
\begin{definition}
\label{def:effect_H_k_sparse}
Let $\alpha=(\frac{1}{2}+\frac{1.2}{\pi C_0 R})$. Let $H_1(t) $ be defined as in Definition \ref{def:filter_func_general}. The filter $H(t)$ is defined as: 
\begin{align*}
    H(t):=H_1(\alpha t).
\end{align*}
and setting $R=S=O(k^2)$,  $R=2^s$, where $s\in\Z_+$, %
$C=O(\log(1/\delta_1))$, $C\in2\Z$, $C_0=\Theta(C)$ and  $C_0\in\pi\Z$. 
\end{definition}

\subsection{Normalization factor of the filter}
\label{sec:t_filter_norm}

The goal of this section is to prove Lemma \ref{lem:bound_s_0}, an upper-bound for the normalization factor $s_0$. This lemma will be used later to ensure that the scaling factor will not break the exponential small fluctuation of Section \ref{sec:t_filter_range}. We note that the same result has been proved in \cite{cp19_icalp}, and we reprove it below for completeness.

\begin{lemma}[Lemma 7.2 in \cite{cp19_icalp}]
\label{lem:bound_s_0}
It holds that
\begin{align*}
    s_0 \leq O({CR \sqrt{{C \log R}}}).
\end{align*}
\end{lemma}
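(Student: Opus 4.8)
\textbf{Proof proposal for Lemma~\ref{lem:bound_s_0}.}
The plan is to write $s_0$ as the reciprocal of $\int_{-1/2}^{1/2}P(s)\,\d s$, where $P$ is the product of $\sinc$-powers in Definition~\ref{def:filter_func_general}, and then lower bound this integral by exploiting that $P\ge 0$ everywhere and that $P\ge 1/2$ on a tiny window about the origin.

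First I would unpack the normalization. Set
\begin{align*}
P(t) := \sinc(C_0 R t)^{C\log R}\cdot \prod_{j=0}^{\log_2 S}\sinc\Big(\tfrac{C_0 S}{2^j}\, t\Big)^{2^j C},
\end{align*}
so that $H_1(t) = s_0\cdot (P*\rect_1)(t)$. Evaluating at $t=0$, using $H_1(0)=1$ and that $\rect_1$ is supported on $[-1/2,1/2]$, gives $s_0 = \big(\int_{-1/2}^{1/2}P(s)\,\d s\big)^{-1}$. Since $C\in 2\Z$, every exponent ($C\log R$ and each $2^j C$) is even, so $P(s)\ge 0$ for all $s\in\R$; hence for any $w\le 1/2$ we have $\int_{-1/2}^{1/2}P(s)\,\d s \ge \int_{-w}^{w}P(s)\,\d s$.

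Next I would show $P(s)\ge 1/2$ for $|s|\le w$ with $w = \Theta\big(1/(C_0 R\sqrt{C\log R})\big)$. I will use the elementary inequality $\ln\sinc(x)\ge -c_1 x^2$ valid for $|x|\le 1/2$ (which follows from $\sinc(x)=1-\Theta(x^2)$ near $0$ together with the monotonicity of $\sinc$ on $[0,1]$). For $|s|\le w$ every argument appearing in $P$ has magnitude at most $C_0 R w = \Theta(1/\sqrt{C\log R})\le 1/2$, so the inequality applies factorwise and
\begin{align*}
-\ln P(s)\ \le\ c_1\Big(C(\log R)(C_0 R s)^2 + \sum_{j=0}^{\log_2 S}2^j C\big(\tfrac{C_0 S}{2^j}s\big)^2\Big)\ =\ c_1 C_0^2 C s^2\Big(R^2\log R + S^2\!\!\sum_{j=0}^{\log_2 S}2^{-j}\Big).
\end{align*}
The geometric sum is at most $2$, and since $R=S$ the first term dominates, so $-\ln P(s)\le 3c_1 C_0^2 C R^2(\log R)s^2$ once $R$ exceeds an absolute constant. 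Choosing $w$ so that this is at most $\ln 2$ gives $w=\Theta\big(1/(C_0 R\sqrt{C\log R})\big)$ and $P(s)\ge 1/2$ on $[-w,w]$; one checks $w\le 1/2$ since $C_0,R$ are large.

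Combining the pieces, $\int_{-1/2}^{1/2}P(s)\,\d s\ge 2w\cdot\tfrac12 = w$, hence
\begin{align*}
s_0\ \le\ \frac{1}{w}\ =\ O\big(C_0 R\sqrt{C\log R}\big)\ =\ O\big(C R\sqrt{C\log R}\big),
\end{align*}
using $C_0=\Theta(C)$. The only mildly delicate steps are identifying the dominant term among the many $\sinc$ factors (the collapse of the geometric series and the $R=S$ comparison) and pinning down the precise elementary bound on $\ln\sinc$ together with the window size; everything else is bookkeeping.
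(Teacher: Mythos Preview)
Your proposal is correct and follows essentially the same approach as the paper's proof: both write $s_0$ as the reciprocal of $\int_{-1/2}^{1/2}P(s)\,\d s$ and lower bound the integral by restricting to a window of width $\Theta\big(1/(C_0 R\sqrt{C\log R})\big)$ on which $P$ is at least a constant. The only cosmetic difference is that the paper uses the polynomial bound $\sinc(t)\ge 1-(\pi t)^2/6$ together with $(1-a)(1-b)\ge 1-a-b$ to lower bound the product, whereas you take logarithms and sum the quadratic bounds on $\ln\sinc$; these are equivalent linearizations of the same estimate.
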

\begin{proof}

We first have that, %
\begin{align*}
   H_1(t) &~ = s_0 \cdot ( \sinc(C_0 R \cdot t)^{C \log R} \cdot \prod_{i=0}^{\log(S)} \sinc\big(\frac{C_0 \cdot S}{2^i} \cdot t\big)^{2^i \cdot C}) * \rect_1(t) \notag\\ 
  &~ = s_0 \cdot \int_{-\infty}^\infty \sinc(C_0 R \cdot \tau)^{C \log R} \cdot \prod_{i=0}^{\log(S)} \sinc\big(\frac{C_0 \cdot S}{2^i} \cdot \tau \big)^{2^i \cdot C} \cdot  \rect_1(t- \tau) \d\tau  \notag\\
  &~ = s_0 \cdot \int_{t-0.5}^{t+0.5} \sinc(C_0 R \cdot \tau)^{C \log R} \cdot \prod_{i=0}^{\log(S)} \sinc\big(\frac{C_0 \cdot S}{2^i} \cdot \tau \big)^{2^i \cdot C}  \d\tau , 
\end{align*}
where the first step follows from the definition of $H_1(t)$, the second step follows from the definition of the convolution, the third step follows from the definition of $\rect_2(t) $ function. 
Thus,
\begin{align*}
    H_1(0) = s_0 \cdot \int_{-0.5}^{+0.5} \sinc(C_0 R \cdot \tau)^{C \log R} \cdot \prod_{i=0}^{\log(S)} \sinc\big(\frac{C_0 \cdot S}{2^i} \cdot \tau \big)^{2^i \cdot C}  \d\tau
\end{align*}

Let $U := ({2 C_0 \log R})^{-1/2}$. We have that
\begin{align}
    &~\int_{-0.5}^{+0.5} \sinc(C_0 R \cdot \tau)^{C \log R} \cdot \prod_{i=0}^{\log(S)} \sinc\big(\frac{C_0 \cdot S}{2^i} \cdot \tau \big)^{2^i \cdot C}  \d\tau \notag \\
    \geq &~ \int_{-\frac{1}{\pi C_0R}}^{+\frac{1}{\pi C_0R}} \sinc(C_0 R \cdot \tau)^{C \log R} \cdot \prod_{i=0}^{\log(S)} \sinc\big(\frac{C_0 \cdot S}{2^i} \cdot \tau \big)^{2^i \cdot C}  \d\tau \notag \\
    = &~ \frac{1}{\pi C_0R} \int_{-{1}}^{+{1}} \sinc(\frac{\upsilon}{\pi})^{C \log R} \cdot \prod_{i=0}^{\log(S)} \sinc\big(\frac{\upsilon}{2^i \pi} \big)^{2^i \cdot C}  \d\upsilon \notag \\
    \geq &~  \frac{1}{\pi C_0R} \int_{-U}^{+U} \sinc(\frac{\upsilon}{\pi})^{C \log R} \cdot \prod_{i=0}^{\log(S)} \sinc\big(\frac{\upsilon}{2^i \pi} \big)^{2^i \cdot C}  \d\upsilon \notag \\
    \geq &~  \frac{1}{\pi C_0R} \int_{-U}^{+U} (1-\frac{\upsilon^2}{6})^{C \log R} \cdot \prod_{i=0}^{\log(S)} (1-\frac{\upsilon^2}{4^i 6})^{2^i \cdot C}  \d\upsilon \notag \\
    \geq &~  \frac{1}{\pi C_0R} \int_{-U}^{+U} (1-{C \log R} \cdot \frac{\upsilon^2}{6}-\sum_{i=0}^{\log(S)} C \cdot \frac{\upsilon^2}{2^i 6})  \d\upsilon \notag \\
    \geq &~ \frac{1}{\pi C_0R} \int_{-U}^{+U} (1-{C \log R} \cdot \frac{\upsilon^2}{3})  \d\upsilon \notag \\
    = &~ \frac{1}{\pi C_0R} (2U-{2 C \log R} \cdot \frac{U^3}{9}) \notag \\
    \geq &~ \frac{1}{\pi C_0R \sqrt{{2 C_0 \log R}}}  , \label{eq:H_1_0_right_part}
\end{align}
where the first step follows from $ R=O(k^2)$, $C_0=O(\log(1/\delta_1))$, $0.5 \geq \frac{1}{\pi C_0R}$,
second step follows from changing the variable $\nu=\pi C R\cdot \tau$, the third step follows from $ U < 1 $, the forth step follows from Fact~\ref{fac:bound_sinc}, the fifth step is follows from $ (1-a)(1-b) \geq 1-a-b$, the sixth step follows from $ \log(R) > 2$, the seventh step is straight forward, the eighth step follows from setting $ U = ({2 C_0 \log R})^{-1/2} $ and $C_0=\Theta(C)$.

As a result, we have that
\begin{align*}
    s_0 \leq  H_1(0) \cdot {\pi CR \sqrt{{2 C \log R}}} = {\pi CR \sqrt{{2 C \log R}}}
\end{align*}
where the first step follows from Eq.~\eqref{eq:H_1_0_right_part}, the second step follows from $ H_1(0) = 1 $. 

\end{proof}

\begin{fact}\label{fac:bound_sinc}
For any $t\in \mathbb{R}$,
\begin{align*}
    1 - \frac{(\pi t)^2}{3!} \le \sinc(t) \le 1.
\end{align*}
\end{fact}

\subsection{Fluctuation bound}
\label{sec:t_filter_range}
The idea filter $\rect_T(t)$ has a constant value $1$ in the interval $[0,T]$. Due to the frequency domain truncation in $H(t)$, it deviates from $\rect_T(t)$ with different magnitudes in different regions. In this section, we prove the Lemma \ref{lem:bound_H_1}, which shows that $H(t)$ is fluctuating near $1$ in the ``interior'' of $[0,T]$ (i.e., $[0+\frac{T}{\poly(k)},T-\frac{T}{\poly(k)}]$). 
It serves as an important tool for analyzing the error in our signal equivalent method. 

\begin{lemma}\label{lem:bound_H_1}

For filter $ H_1(t)$ defined in Definition \ref{def:filter_func_general} with the parameters $ C = \log(1/\delta_1)$, $C_0=\Theta(C)$, $R=S$,  and $ S= 2^s$ (where $s\in \Z_+$), $C_0\in \pi \Z$, we have that
\begin{align*}
    H_1(t) \in [1- \delta_1, 1], \forall |t| < 0.5-\frac{\pi}{C_0R}. 
\end{align*}
Moreover, $H(t)\in [1- \delta_1, 1]$ for any $t\in [\frac{T}{2}-\alpha^{-1}(\frac{1}{2}-\frac{\pi}{C_0R})\frac{T}{2}, \frac{T}{2}+\alpha^{-1}(\frac{1}{2}-\frac{\pi}{C_0R})\frac{T}{2}]$. 
\end{lemma}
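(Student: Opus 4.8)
The plan is to write $H_1$ as a moving‑window average of a nonnegative sinc‑product kernel and then reduce the whole statement to one crude tail estimate for that kernel. Set
\begin{align*}
P(\tau):=\sinc(C_0R\,\tau)^{C\log R}\cdot\prod_{i=0}^{\log S}\sinc\!\Big(\tfrac{C_0S}{2^i}\,\tau\Big)^{2^iC},
\end{align*}
so that by Definition~\ref{def:filter_func_general} (and the computation in the proof of Lemma~\ref{lem:bound_s_0})
\begin{align*}
H_1(t)=s_0\int_{t-1/2}^{\,t+1/2}P(\tau)\,\d\tau
\qquad\text{and}\qquad
1=H_1(0)=s_0\int_{-1/2}^{1/2}P(\tau)\,\d\tau .
\end{align*}
Since every exponent is even, $0\le P\le 1$ everywhere, $P(0)=1$, and $P$ (hence $H_1$) is even. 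The only structural input needed is the \emph{tail bound}: if $|\tau|\ge \pi/(C_0R)$ then $C_0R|\tau|\ge\pi$, so $|\sinc(C_0R\tau)|\le \frac{1}{\pi C_0R|\tau|}\le\frac1{\pi^2}$, and therefore
\begin{align*}
P(\tau)\le \pi^{-2C\log R}\qquad\text{whenever } |\tau|\ge \pi/(C_0R),
\end{align*}
the remaining factors lying in $[0,1]$. In the regime $C=\log(1/\delta_1)$, $R=S=2^s$ (so $\log R=s\ge1$) this quantity is minuscule: $\pi^{-2C\log R}\le\pi^{-2C}=\delta_1^{\Omega(1)}$, while $s_0\le O(CR\sqrt{C\log R})=\poly(k,\log(1/\delta_1))$ by Lemma~\ref{lem:bound_s_0}, so $s_0\,\pi^{-2C\log R}\le\delta_1$.

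\emph{Lower bound.} Take $0\le t<\tfrac12-\tfrac{\pi}{C_0R}$ (the case $t<0$ is identical by evenness). The window $[t-\tfrac12,t+\tfrac12]$ contains the symmetric interval $J_t:=[-(\tfrac12-t),\tfrac12-t]$, since the two left endpoints agree and $\tfrac12-t\le t+\tfrac12$. As $P\ge0$,
\begin{align*}
1-H_1(t)\;\le\; s_0\Big(\int_{-1/2}^{1/2}P-\int_{J_t}P\Big)\;=\;2s_0\int_{1/2-t}^{1/2}P\;\le\;2s_0\,t\,\pi^{-2C\log R}\;\le\; s_0\,\pi^{-2C\log R}\;\le\;\delta_1,
\end{align*}
using evenness of $P$ for the equality and the tail bound (every $\tau\in[\tfrac12-t,\tfrac12]$ has $\tau\ge\tfrac12-t>\tfrac{\pi}{C_0R}$) thereafter. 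Hence $H_1(t)\ge1-\delta_1$.

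\emph{Upper bound and rescaled version.} Splitting $[t-\tfrac12,t+\tfrac12]=J_t\cup[\tfrac12-t,t+\tfrac12]$ and $[-\tfrac12,\tfrac12]=[-\tfrac12,t-\tfrac12]\cup J_t\cup[\tfrac12-t,\tfrac12]$, cancelling $\int_{J_t}P$, and using evenness to replace $\int_{-1/2}^{t-1/2}P$ by $\int_{1/2-t}^{1/2}P$ gives
\begin{align*}
H_1(t)-1=s_0\Big(\int_{1/2}^{t+1/2}P-\int_{1/2-t}^{1/2}P\Big).
\end{align*}
On both intervals $\tau\ge\tfrac12-t>\tfrac{\pi}{C_0R}$, so each integral is $\le t\,\pi^{-2C\log R}$ and thus $|H_1(t)-1|\le s_0\pi^{-2C\log R}\le\delta_1$; to upgrade this to the claimed one‑sided bound $H_1(t)\le1$ one compares the two tail masses directly: the substitution $\tau\mapsto1-\tau$ turns $\int_{1/2}^{t+1/2}P$ into $\int_{1/2-t}^{1/2}P(1-\tau)\,\d\tau$, and on $\tau\in[\tfrac12-t,\tfrac12]$ we have $|1-\tau|\ge|\tau|$, so the nonincreasing envelope $(\pi C_0R|\cdot|)^{-C\log R}$ of the fastest sinc factor forces the gain not to exceed the loss. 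Finally, the ``moreover'' claim is exactly the first part after the affine reparametrization of Definition~\ref{def:effect_H_k_sparse} mapping $[0,T]$ onto $[-1,1]$ and rescaling by $\alpha$: the condition $|\,\cdot\,|<\tfrac12-\tfrac{\pi}{C_0R}$ on the normalized variable becomes $|t-\tfrac T2|<\alpha^{-1}\big(\tfrac12-\tfrac{\pi}{C_0R}\big)\tfrac T2$.

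\emph{Expected main obstacle.} Modulo the routine verification that $s_0\,\pi^{-2C\log R}\le\delta_1$, everything above is elementary; the one genuinely delicate point is the last step of the upper bound --- showing that sliding the unit window off the peak of $P$ never \emph{increases} its integral. The crude tail bound by itself only yields $H_1(t)\le1+\delta_1$, so obtaining the sharp $\le1$ requires exploiting the monotone envelope of the dominant sinc factor (alternatively, if one is content with the two‑sided bound $H_1(t)\in[1-\delta_1,1+\delta_1]$, this step may be omitted).
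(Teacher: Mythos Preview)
Your lower bound is essentially the same idea as the paper's: both of you bound $1-H_1(t)$ by a tail integral of the sinc kernel $P$ and then control that tail using $|\sinc(C_0R\tau)|\ll 1$ once $|\tau|\ge\pi/(C_0R)$, combined with the polynomial bound on $s_0$ from Lemma~\ref{lem:bound_s_0}. The paper integrates the decaying bound $(C_0R\tau)^{-C\log R}$ all the way to infinity, whereas you use the constant bound $\pi^{-2C\log R}$ on a finite interval; either works.

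The gap is in your upper bound. The ``monotone envelope'' step does not establish $P(1-\tau)\le P(\tau)$: an envelope only gives you $P(1-\tau)\le(\pi C_0R|1-\tau|)^{-C\log R}$, and that says nothing about how $P(1-\tau)$ compares to $P(\tau)$ pointwise, since $P(\tau)$ can be much smaller than its envelope (indeed $P$ oscillates and vanishes). So as written you only get $|H_1(t)-1|\le\delta_1$, which you already acknowledged.

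The missing ingredient is precisely the hypothesis $C_0\in\pi\Z$ (together with $R=S=2^s$), which you never use. Because every frequency $C_0R$ and $C_0S/2^i$ is an integer multiple of $\pi$, one has $|\sin(C_0R(1-\tau))|=|\sin(C_0R\tau)|$ and likewise for all the slower factors; the numerators in each $\sinc$ match exactly, and since $|1-\tau|\ge|\tau|$ on $[1/2-t,1/2]$ the denominators give $|\sinc(\,\cdot\,(1-\tau))|\le|\sinc(\,\cdot\,\tau)|$, hence $P(1-\tau)\le P(\tau)$ and your difference of integrals is $\le 0$. The paper packages the same observation as a derivative computation: $H_1'(t)=s_0\big(P(t+1/2)-P(t-1/2)\big)\le 0$ for $t>0$, using the identical $\sin$-periodicity argument to compare $P$ at $t\pm 1/2$, so that $H_1$ is nonincreasing on $[0,\infty)$ and $H_1(t)\le H_1(0)=1$. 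Once you insert the $C_0\in\pi\Z$ step, your substitution argument and the paper's derivative argument are the same proof.
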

\begin{proof}
The proof consists of two parts: upper bound and lower bound. For the upper bound, the idea is to compare the value of $H(t)$ with $H(0)$ by analyzing the gradient of $H(t)$. And the lower bound follows from directly estimating the integral of the product of $\sinc$ functions.

\paragraph{Upper bound:}
We have that $H_1(0) = 1$ by definition. We will show $H_1(t) \leq 1 $ by proving $H_1(t)$ is monotonically decreasing in $t$. %

We have that, %
\begin{align}
   H_1(t) &~ = s_0 \cdot ( \sinc(C_0 R \cdot t)^{C \log R} \cdot \prod_{i=0}^{\log(S)} \sinc\big(\frac{C_0 \cdot S}{2^i} \cdot t\big)^{2^i \cdot C}) * \rect_1(t) \notag\\ 
  &~ = s_0 \cdot \int_{-\infty}^\infty \sinc(C_0 R \cdot \tau)^{C \log R} \cdot \prod_{i=0}^{\log(S)} \sinc\big(\frac{C_0 \cdot S}{2^i} \cdot \tau \big)^{2^i \cdot C} \cdot  \rect_1(t- \tau) \d\tau  \notag\notag \\
  &~ = s_0 \cdot \int_{t-0.5}^{t+0.5} \sinc(C_0 R \cdot \tau)^{C \log R} \cdot \prod_{i=0}^{\log(S)} \sinc\big(\frac{C_0 \cdot S}{2^i} \cdot \tau \big)^{2^i \cdot C}  \d\tau , \label{eq:expand_H_1}
\end{align}
where the first step follows from the definition of $H_1(t)$, the second step follows from the definition of the convolution, the third step follows from the definition of $\rect_1(t) $ function.

Since $C \log R \in 2\Z$, $2^i \cdot C \in 2\Z$, we have that
\begin{align*}
    \sinc(C_0 R \cdot \tau)^{C \log R} \geq 0,
\end{align*}
and 
\begin{align*}
    \sinc\big(\frac{C_0 \cdot S}{2^i} \cdot \tau \big)^{2^i \cdot C} \geq 0.
\end{align*}

Moreover,
by setting $C_0=\pi \Z$,
we have that 
\begin{align}
   2 \mod{\frac{2\pi}{C_0}} = 0. \label{eq:sinc_periodic}
\end{align}

By Eq.~\eqref{eq:sinc_periodic}, we have that
\begin{align*}
    \sin\big(\frac{C_0 \cdot S}{2^i} \cdot (t+0.5) \big) = &~ \sin\big(\frac{C_0 \cdot S}{2^i} \cdot (t-0.5) \big), ~~~\forall i \in \{0,\cdots, \log(S)\},~~\text{and} \notag \\
    \sin(C_0 R \cdot (t+0.5)) = &~ \sin(C_0 R \cdot (t-0.5)) .
\end{align*}
Furthermore, for any $t > 0 $,
\begin{align*}
    \big(\frac{C_0 \cdot S}{2^i} \cdot (t+0.5) \big)^{-1} \leq &~ \big(\frac{C_0 \cdot S}{2^i} \cdot (t-0.5) \big)^{-1}, \forall i \in \{0,\cdots, \log(S)\}, ~~\text{and}\notag \\
    (C_0 R \cdot (t+0.5))^{-1} \leq &~ (C_0 R \cdot (t-0.5))^{-1}.
\end{align*}
Thus,
\begin{align}
    |\sinc\big(\frac{C_0 \cdot S}{2^i} \cdot (t+0.5) \big)| \leq &~ |\sinc\big(\frac{C_0 \cdot S}{2^i} \cdot (t-0.5) \big)|, ~~~\forall i \in \{0,\cdots, \log(S)\} ,~~\text{and}\notag \\
    |\sinc(C_0 R \cdot (t+0.5))| \leq &~ |\sinc(C_0 R \cdot (t-0.5))| \label{eq:sinc_mono}.
\end{align}

Then, we have that
\begin{align*}
   \frac{H_1'(t)}{s_0}  
   = &~ \sinc(C_0 R \cdot (t+0.5))^{C \log R} \cdot \prod_{i=0}^{\log(S)} \sinc\big(\frac{C_0 \cdot S}{2^i} \cdot (t+0.5) \big)^{2^i \cdot C}   \\
   &~\quad -  \sinc(C_0 R \cdot (t-0.5))^{C \log R} \cdot \prod_{i=0}^{\log(S)} \sinc\big(\frac{C_0 \cdot S}{2^i} \cdot (t-0.5) \big)^{2^i \cdot C} \\
   <&~ 0,
\end{align*}
where the first step is straight forward, the second step follows from Eq.~\eqref{eq:sinc_mono}.

Thus, $H_1(t)<H_1(0)=1$ for any $t>0$. 

Similarly, we also have that $H_1(t)<H_1(0)=1$ for any $t\leq 0$ since $H_1(t)$ is symmetric with respect to $t$. 

\paragraph{Lower bound:}

We have that, for any $|t| < 0.5 - \frac{  \pi}{C_0 R}$, 
\begin{align}
    &~ \int_{-\infty}^{t-0.5} \sinc(C_0 R \cdot \tau)^{C \log R} \cdot \prod_{i=0}^{\log(S)} \sinc\big(\frac{C_0 \cdot S}{2^i} \cdot \tau \big)^{2^i \cdot C} \d \tau \notag \\
    = &~  \int_{0.5-t}^{\infty} \sinc(C_0 R \cdot \tau)^{C \log (R)} \cdot \prod_{i=0}^{\log(S)} \sinc\big(\frac{C_0 \cdot S}{2^i} \cdot \tau \big)^{2^i \cdot C}  \d\tau  \notag \\
    \leq &~ \int_{0.5-t}^{\infty} \sinc(C_0 R \cdot \tau)^{C \log (R)}  \d\tau  \notag \\
    \leq &~ \int_{0.5-t}^{\infty} (C_0 R \cdot \tau)^{-C \log (R)}  \d\tau  \notag \\
    \leq &~ \int_{\frac{  \pi}{C_0R}}^{\infty} (C_0 R \cdot \tau)^{-C \log (R)}  \d\tau  \notag \\
    = &~ \frac{1}{C_0R}\int_{  \pi}^{\infty} \upsilon^{-C \log (R)}  \d\upsilon  \notag \\
    = &~ \frac{1}{C_0R} \frac{1}{C \log (R)-1} \pi^{-C \log (R)+1}   \notag \\
    \lesssim &~ \frac{1}{C_0^2R\log (R)}  \delta_1 ,\label{eq:bound_1side_H_1}
\end{align}
where the first step is straight forward, the second step follows from $ \sinc(x) \leq 1$, the third step is follows from $ \sinc(x) \leq 1 / x$, the forth step follows follows from $ 0.5-t \geq \frac{  \pi}{C_0R}$, the fifth step follows from $ \upsilon := CR \tau$, the sixth step is straight forward, the seventh step follows from $ \log(R ) > 1$, $ C \geq \log(1/\delta_1)$.

Hence, for any $t > 0$, 
\begin{align*}
    H_1(t) = &~ s_0 \cdot \int_{t-0.5}^{t+0.5} \sinc(C_0 R \cdot \tau)^{C \log R} \cdot \prod_{i=0}^{\log(S)} \sinc\big(\frac{C_0 \cdot S}{2^i} \cdot \tau \big)^{2^i \cdot C}  \d\tau  \\
    = &~  H_1(0) + s_0 \cdot \int_{1}^{t+0.5} \sinc(C_0 R \cdot \tau)^{C \log R} \cdot \prod_{i=0}^{\log(S)} \sinc\big(\frac{C_0 \cdot S}{2^i} \cdot \tau \big)^{2^i \cdot C}  \d\tau  \\
    &~ \quad - s_0 \cdot \int_{-1}^{t-0.5} \sinc(C_0 R \cdot \tau)^{C \log R} \cdot \prod_{i=0}^{\log(S)} \sinc\big(\frac{C_0 \cdot S}{2^i} \cdot \tau \big)^{2^i \cdot C}  \d\tau \\
    \geq &~  H_1(0) - s_0 \cdot \int_{-\infty}^{t-0.5} \sinc(C_0 R \cdot \tau)^{C \log R} \cdot \prod_{i=0}^{\log(S)} \sinc\big(\frac{C_0 \cdot S}{2^i} \cdot \tau \big)^{2^i \cdot C}  \d\tau \\
    \geq &~  H_1(0) - s_0 O(\frac{1}{C_0^2R\log (R)})  \delta_1   \\
    = &~  1 - s_0 O(\frac{1}{C_0^2R\log (R)} ) \delta_1  \\
    \geq &~ 1 - O(  \delta_1 ),
\end{align*}
where the first step follows from Eq.~\eqref{eq:expand_H_1}, the second step is straight forward, the third step follows from 
\begin{align*}
     \sinc(C_0 R \cdot \tau)^{C \log R} \cdot \prod_{i=0}^{\log(S)} \sinc\big(\frac{C_0 \cdot S}{2^i} \cdot \tau \big)^{2^i \cdot C} \geq 0, ~~~\forall \tau\in \R
\end{align*}
the forth step follows from Eq.~\eqref{eq:bound_1side_H_1}, the fifth step follows from $ H_1(0) = 1 $, the sixth step follows from Lemma \ref{lem:bound_s_0}. 

By re-scaling $\delta_1$, we get that $H(t)\geq 1-\delta_1$ for any $|t| < 0.5 - \frac{  \pi}{C_0 R}$.

The lemma then follows from the upper and lower bounds.

\end{proof}

\subsection{Energy preserving of the time domain filter}
\label{sec:_t_time_ene}

In this section, we show the properties of $H(t)$ that we use in the rest of the paper. 

We first prove Lemma \ref{lem:property_of_filter_H}, which summarizes the results in above sections and prove the energy preserving property of $H(t)$. %

\begin{lemma}\label{lem:property_of_filter_H}
The filter function $(H(t),\widehat{H}(f))$ has the following properties:
\begin{eqnarray*}
&\mathrm{Property~\RN{1}} : & |H(t)|\leq 1.01 , ~ \forall t\in\R \\ %
&\mathrm{Property~\RN{2}} : & 1- \delta_1 \leq H(t) \leq 1, ~ \forall |t| < \alpha^{-1}(\frac{1}{2}-\frac{\pi}{CR}) \\ %
&\mathrm{Property~\RN{3}} : & |\supp(\widehat{H}(f) ) |\leq O(k^2 \log^2 (k) \log^2(1/\delta_1)) \\ %
&\mathrm{Property~\RN{4}} : &\int_{-\infty}^{+\infty} \bigl|x^*(t) \cdot H(t) \cdot (1- \rect_2(t) ) \bigr|^2 \mathrm{d} t < 0.1 \int_{-\infty}^{+\infty} | x^*(t) \cdot \rect_2(t) |^2 \mathrm{d} t \\
&\mathrm{Property~\RN{5}} : &\int_{-\infty}^{+\infty} |x^*(t) \cdot H(t) \cdot \rect_2(t) |^2 \mathrm{d} t \in  [0.9 , 1.1]\cdot \int_{-\infty}^{+\infty} |x^*(t) \cdot \rect_2(t) |^2 \mathrm{d} t
\end{eqnarray*}
\end{lemma}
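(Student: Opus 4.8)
The plan is to obtain all five items directly from the filter construction (Definitions~\ref{def:filter_func_general} and~\ref{def:effect_H_k_sparse}) together with the results already imported from \cite{cp19_icalp}, once I check that the ground-truth signal $x^*$ meets the hypotheses of Theorem~\ref{thm:HwithScaling}. Recall that $H(t)=H_1(\alpha t)$ with $\alpha=\tfrac12+\tfrac{1.2}{\pi C_0R}\approx\tfrac12$, and the parameters are fixed as $R=S=\Theta(k^2)$ (powers of two), $C=\Theta(\log(1/\delta_1))\in 2\Z$, and $C_0=\Theta(C)\in\pi\Z$. Since $x^*\in{\cal F}_{k,F}$, the first step is to verify the two conditions required by Theorem~\ref{thm:HwithScaling}: condition~1, the in-window bound $\sup_{t\in[-1,1]}|x^*(t)|^2\le R\cdot\E_{t\in[-1,1]}|x^*(t)|^2$, is Theorem~\ref{thm:energy_bound} after rescaling the observation window to $[-1,1]$, which gives $R=O(k^2)$; condition~2, the growth bound $|x^*(t)|^2\le\poly(R)\cdot\E_{t\in[-1,1]}|x^*(t)|^2\cdot|t|^{S}$ for $t\notin[-1,1]$, is Theorem~\ref{thm:bound_growth_around1} with growth rate $O(k^2\log k)$. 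Choosing $R$ and $S$ to be powers of two at least as large as these quantities keeps all the parameter constraints consistent.

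With the hypotheses verified, Properties~I, II, III, and~V are essentially restatements of earlier facts. Property~I is exactly Part~3 of Theorem~\ref{thm:HwithScaling}. Property~II follows from Lemma~\ref{lem:bound_H_1}, which gives $H_1(u)\in[1-\delta_1,1]$ for $|u|<\tfrac12-\tfrac{\pi}{C_0R}$; substituting $u=\alpha t$ and using $C_0=\Theta(C)$ yields $H(t)\in[1-\delta_1,1]$ on $|t|<\alpha^{-1}(\tfrac12-\tfrac{\pi}{CR})$. Property~III follows from Lemma~\ref{lem:supp_H}: $|\supp(\wh H_1)|=O(C^2R\log R+C^2S\log S)$, and plugging in $R=S=\Theta(k^2)$, $C=\Theta(\log(1/\delta_1))$ gives the claimed $O(k^2\log^2 k\log^2(1/\delta_1))$ bound, while the $\alpha=\Theta(1)$ rescaling only inflates the frequency support by a constant factor. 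For Property~V, the lower bound $\int_{-1}^1|x^*H|^2\ge 0.9\int_{-1}^1|x^*|^2$ is Part~1 of Theorem~\ref{thm:HwithScaling}, and the upper bound is immediate from Property~I since $|H(t)|^2\le 1.01^2<1.1$.

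Only Property~IV needs a short computation. Writing $\rect_2(t)=\mathbf 1_{[-1,1]}(t)$, it amounts to bounding $\int_{|t|>1}|x^*(t)H(t)|^2\d t$ in terms of $\int_{-1}^1|x^*(t)|^2\d t$. Part~2 of Theorem~\ref{thm:HwithScaling} gives $\int_{-1}^1|x^*H|^2\ge 0.95\int_{-\infty}^\infty|x^*H|^2$, hence
\begin{align*}
\int_{|t|>1}|x^*H|^2\d t=\int_{-\infty}^\infty|x^*H|^2\d t-\int_{-1}^1|x^*H|^2\d t\le\tfrac{0.05}{0.95}\int_{-1}^1|x^*H|^2\d t .
\end{align*}
Combining this with Property~I, $\int_{-1}^1|x^*H|^2\le 1.0201\int_{-1}^1|x^*|^2$, yields $\int_{|t|>1}|x^*H|^2\d t<0.1\int_{-1}^1|x^*|^2\d t$, and this is Property~IV because $\int|x^*H(1-\rect_2)|^2=\int_{|t|>1}|x^*H|^2$ and $\int|x^*\rect_2|^2=\int_{-1}^1|x^*|^2$.

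I do not expect a real obstacle here; the lemma is a packaging step on top of \cite{cp19_icalp}'s filter analysis. The part demanding the most care is the parameter bookkeeping: checking that a single choice of $R$ (the energy bound) and $S$ (the growth rate), both powers of two, can simultaneously satisfy the hypotheses of Theorems~\ref{thm:energy_bound}, \ref{thm:bound_growth_around1}, and~\ref{thm:HwithScaling} together with the integrality constraints $C\in 2\Z$, $C_0\in\pi\Z$, $C_0=\Theta(C)$, and tracking that the $O(1)$ rescaling $H(t)=H_1(\alpha t)$ changes only the interior interval in Property~II and the support size in Property~III, and only by constant factors.
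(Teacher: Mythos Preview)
Your proposal is correct and follows essentially the same approach as the paper: Properties~I, II, IV, V are read off from Theorem~\ref{thm:HwithScaling} (Parts~3, 1, 2, and~3 again) and Lemma~\ref{lem:bound_H_1}, and Property~III comes from Lemma~\ref{lem:supp_H} after plugging in the parameter choices. The only organizational difference is that you verify the hypotheses of Theorem~\ref{thm:HwithScaling} upfront via Theorems~\ref{thm:energy_bound} and~\ref{thm:bound_growth_around1}, whereas the paper folds that check into the derivation of Property~III; your numerical bookkeeping in Property~IV ($\tfrac{0.05}{0.95}\cdot 1.0201<0.1$) matches the paper's ``$\le 0.06$ then $\le 0.1$'' chain.
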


\begin{proof}
We prove each of the five properties in below.

\paragraph{Property I:}
By Theorem \ref{thm:HwithScaling} Part 3, we have that 
\begin{align*}
    |H(t)|\leq 1.01 .
\end{align*}

\paragraph{Property II:}
By Lemma \ref{lem:bound_H_1}, we have that
\begin{align*}
1- \delta_1 \leq H(t) \leq 1, ~ \forall |t| < \alpha^{-1}(\frac{1}{2}-\frac{\pi}{CR}). 
\end{align*}

\paragraph{Property III:}
By the $k$-Fourier-sparse signals' energy bound (Theorem \ref{thm:energy_bound}), we have that
\begin{align*}
    R = O(k^2).
\end{align*}
By Theorem \ref{thm:bound_growth_around1}, we have that
\begin{align*}
      S=O(k^2 \log k).
\end{align*}
Then, by Lemma \ref{lem:supp_H}, we have that
\begin{align*}
    |\supp(\wh{H}(f))| =&~ C^2R \log R + C^2S \log S \\
    =&~ O(\log(1/\delta_1))^2 \cdot O(k^2 \log k \log (k^2 \log k)) \\
    =&~  O(k^2 \log^2 k \log^2(1/\delta_1)). 
\end{align*}

\paragraph{Property IV:}
We have that
\begin{align*}
&~ \int_{-\infty}^{+\infty} \bigl|x^*(t) \cdot H(t) \cdot (1- \rect_1(t) ) \bigr|^2 \mathrm{d} t \\
= &~ \int_{-\infty}^{+\infty} \bigl|x^*(t) \cdot H(t)  \bigr|^2 \mathrm{d} t-\int_{-1}^{1} \bigl|x^*(t) \cdot H(t)  \bigr|^2 \mathrm{d} t \\
\leq &~ 0.06 \int_{-1}^{+1} \bigl|x^*(t) \cdot H(t)  \bigr|^2 \mathrm{d} t \\
\leq &~ 0.1 \int_{-1}^{+1} \bigl|x^*(t) \bigr|^2 \mathrm{d} t \\
=&~ 0.1 \int_{-\infty}^{+\infty} | x^*(t) \cdot \rect_1(t) |^2 \mathrm{d} t, 
\end{align*}
where the first step is straight forward, the second step follows from Theorem \ref{thm:HwithScaling} Part 2, the third step follows from Theorem  \ref{thm:HwithScaling} Part 3, the forth step is straight forward.

\paragraph{Property V:}
We first prove the upper bound:
\begin{align*}
    &~ \int_{-\infty}^{+\infty} |x^*(t) \cdot H(t) \cdot \rect_1(t) |^2 \mathrm{d} t \\
    = &~ \int_{-1}^{+1} |x^*(t) \cdot H(t)  |^2 \mathrm{d} t \\
    \leq &~ 1.1 \int_{-1}^{+1} |x^*(t)   |^2 \mathrm{d} t \\
    = &~ 1.1 \int_{-\infty}^{+\infty} |x^*(t) \cdot \rect_1(t) |^2 \mathrm{d} t, 
\end{align*}
where the first step is straight forward, the second step follows from Theorem  \ref{thm:HwithScaling} Part 3, the third step is straight forward. 

Then, we prove the lower bound:
\begin{align*}
    &~ \int_{-\infty}^{+\infty} |x^*(t) \cdot H(t) \cdot \rect_1(t) |^2 \mathrm{d} t \\
    = &~ \int_{-1}^{+1} |x^*(t) \cdot H(t)  |^2 \mathrm{d} t \\
    \geq &~ 0.9 \cdot \int_{-1}^{+1} |x^*(t)  |^2 \mathrm{d} t \\
    = &~ 0.9 \cdot \int_{-\infty}^{+\infty} |x^*(t) \cdot \rect_1(t) |^2 \mathrm{d} t
\end{align*}
where the first step is straight forward, the second step follows from Theorem  \ref{thm:HwithScaling} Part 1, the third step is straight forward. 

\end{proof}

The following lemma 
bounds the length of the fluctuation region (where $H(t)$ is close to $1$) in the time domain. %

\begin{lemma}
\label{lem:relation_R_L_T}
Let $\Delta = k |\supp(\wh{H}(f))| $, $\beta = O(1/ \Delta)$, 
$L= \frac{T}{2}-\alpha^{-1}(\frac{1}{2}-\frac{\pi}{C_0R})\frac{T}{2}, R=\frac{T}{2}+\alpha^{-1}(\frac{1}{2}-\frac{\pi}{C_0R})\frac{T}{2}-\beta $,
we have that
\begin{align*}
    T - k^2 (T+L-R) \eqsim T, 
\end{align*}
and 
\begin{align*}
    R-L \eqsim T.
\end{align*}
\end{lemma}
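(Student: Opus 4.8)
The statement reduces to understanding the scaling parameter $\alpha = \frac{1}{2} + \frac{1.2}{\pi C_0 R}$ and the resulting geometry of the interval $[L,R]$ inside $[0,T]$. First I would rewrite the half-width $\alpha^{-1}\bigl(\frac12 - \frac{\pi}{C_0 R}\bigr)\cdot\frac{T}{2}$ in a more transparent form. Since $\alpha = \frac12 + \frac{1.2}{\pi C_0 R}$, we have $\alpha^{-1} = \frac{1}{\frac12 + \frac{1.2}{\pi C_0 R}} = 2\cdot\frac{1}{1 + \frac{2.4}{\pi C_0 R}}$, so $\alpha^{-1}\bigl(\frac12 - \frac{\pi}{C_0 R}\bigr) = \frac{1 - \frac{2\pi}{C_0 R}}{1 + \frac{2.4}{\pi C_0 R}}$. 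Recall from Definition~\ref{def:effect_H_k_sparse} that $R = S = O(k^2)$ and $C_0 = \Theta(C) = \Theta(\log(1/\delta_1))$, so $C_0 R = \Theta(k^2\log(1/\delta_1))$ is large; hence this quantity equals $1 - O\bigl(\frac{1}{k^2\log(1/\delta_1)}\bigr)$, i.e. it is $1 - \theta$ for a tiny $\theta > 0$.

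Next I would plug this back into $L$ and $R$. We get $L = \frac{T}{2}\bigl(1 - (1-\theta)\bigr) = \frac{\theta T}{2}$ and, before subtracting $\beta$, $\frac{T}{2} + (1-\theta)\frac{T}{2} = T - \frac{\theta T}{2}$, so $R = T - \frac{\theta T}{2} - \beta$. Therefore $T + L - R = T + \frac{\theta T}{2} - (T - \frac{\theta T}{2} - \beta) = \theta T + \beta$. Now I use the two size bounds on $\theta$ and $\beta$: from the computation above $\theta = O\bigl(\frac{1}{k^2\log(1/\delta_1)}\bigr)$, so $k^2\theta T = O\bigl(\frac{T}{\log(1/\delta_1)}\bigr)$, which is $o(T)$ (indeed $\le T/2$ for $\delta_1$ small enough); and $\beta = O(1/\Delta)$ with $\Delta = k|\supp(\wh H)| = \Omega(k^3)$ by Property~III of Lemma~\ref{lem:property_of_filter_H}, and since we always have $T \gg 1/\Delta$ (the sample duration dominates the frequency resolution — this is the standing regime of the problem, e.g. $T = \Omega(\poly(k))$ times a unit), $k^2\beta = O(k^2/\Delta) = o(1) = o(T)$. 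Combining, $k^2(T + L - R) = k^2\theta T + k^2\beta \le T/2$, hence $T - k^2(T+L-R) \ge T/2$, and trivially $T - k^2(T+L-R) \le T$ since $T+L-R = \theta T + \beta \ge 0$. This gives $T - k^2(T+L-R) \eqsim T$.

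For the second claim, $R - L = \bigl(T - \frac{\theta T}{2} - \beta\bigr) - \frac{\theta T}{2} = T - \theta T - \beta = (1-\theta)T - \beta$. Since $\theta = o(1)$ (say $\theta \le 1/4$) and $\beta = o(1) \le T/4$, we get $R - L \ge T/2$, and $R - L \le T$ trivially, so $R - L \eqsim T$.

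\textbf{Main obstacle.} The only genuinely delicate point is pinning down the size of $\theta$ and of $\beta$ relative to $T$ — i.e. making precise which regime of $T, k, \delta_1$ is assumed so that $k^2\theta T = O(T/\log(1/\delta_1)) \ll T$ and $k^2\beta \ll T$. The arithmetic with $\alpha^{-1}$ is routine, but I must be careful that the scaling in Definition~\ref{def:effect_H_k_sparse} (that $C_0 = \Theta(C)$ with $C = \log(1/\delta_1)$, and $R = S = O(k^2)$) indeed forces $\frac{1}{C_0 R}$ to be small enough; this uses Lemma~\ref{lem:property_of_filter_H} Property~III for the lower bound $|\supp(\wh H)| = \Omega(k^2\polylog)$ feeding into $\Delta$, and the standing assumption that $T$ exceeds the natural length scale $1/\Delta$ of the problem. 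Once those magnitude comparisons are stated cleanly, both $\eqsim$ relations follow by two-sided bounds as above.
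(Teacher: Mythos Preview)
Your proposal is correct and follows essentially the same route as the paper: compute $R-L$ and $T+L-R$ explicitly from the definition of $\alpha$, observe that $\alpha^{-1}\bigl(\tfrac12-\tfrac{\pi}{C_0R}\bigr)=1-\theta$ with $\theta=\Theta\bigl(\tfrac{1}{C_0R}\bigr)=O\bigl(\tfrac{1}{k^2\log(1/\delta_1)}\bigr)$, and then use $k^2\theta\ll 1$ and $k^2\beta\ll T$ to conclude. The paper does the identical calculation, writing out $\theta=\tfrac{2\pi+2.4/\pi}{CR+2.4/\pi}$ explicitly rather than naming it, and justifying $k^2\beta\ll T$ via the same appeal to $\beta=O(1/\Delta)$ with $\Delta\sim k^3\polylog/T$.
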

\begin{proof}
Let $U:=[L,R]$. By Lemma \ref{lem:bound_H_1}, we have that for any $t_0\in U$,
\begin{align*}
    H(t) > 1-\delta_1, \forall t\in [t_0,t_0+\beta].
\end{align*}

We have that
\begin{align*}
    R- L =&~ |U| \\
    = &~ ((\frac{1}{2}+\frac{1.2}{\pi C R})^{-1}\cdot (\frac{1}{2}-\frac{\pi}{CR}) - \beta )\cdot T \\
    \eqsim&~ T,
\end{align*}
where the first step follows from the definition of $ L, R$, the second step follows from Lemma \ref{lem:property_of_filter_H} Property \RN{2}, the third step follows from $ \Delta, CR \gg 1$.

We have that
\begin{align}
    T+L-R =&~ T- |U| \notag\\
    = &~ (1-(\frac{1}{2}+\frac{1.2}{\pi C R})^{-1}\cdot (\frac{1}{2}-\frac{\pi}{CR}) )\cdot T+\beta T \notag\\
    = &~ \frac{2\pi + {2.4}/{\pi}}{CR+{2.4}/{\pi}}\cdot T + \beta T ,\label{eq:TLR_000}
\end{align}
where the first step follows from the definition of $ L, R$, the second step follows from Lemma \ref{lem:property_of_filter_H} Property \RN{2}, the third step is straight forward.

Then, we have that
\begin{align*}
    T-k^2(T+L-R) = T - k^2 \cdot( \frac{2\pi + {2.4}/{\pi}}{CR+{2.4}/{\pi}}+\beta )\cdot T \eqsim T,
\end{align*}
where the first step follows from Eq.~\eqref{eq:TLR_000}, the second step follows from $ C=O(\log(1/\delta_1))$, $R=k^2$, $ k^2 \beta < 1/ k $. 

\end{proof}

\section{Ideal Filter Approximation}
\label{sec:approx_f_SFS}

As we discussed in previous sections, the filtered signal $z^{(j)}(t)=(x\cdot H)*G_{\sigma, b}^{(j)}(t)$ is the signal in the $j$-th bin by the \text{HashToBins} procedure. In this section, we consider an approximation of the frequency domain filter $G_{\sigma, b}^{(j)}$ by the \emph{ideal filter} $I_{\sigma,b}^{(j)}(t)$ defined by its Fourier transform:
\begin{align}\label{eq:def_ideal_filter}
    \wh{I}_{\sigma,b}^{(j)}(f):=\begin{cases}
1, &~ \wh{G}^{(j)}_{\sigma,b}(f) > 1-\delta_1 \\
0, &~ \text{otherwise}
\end{cases}    
\end{align}
Intuitively, $I_{\sigma,b}^{(j)}$ is ``denoising''  the frequency domain filter $\wh{G}^{(j)}_{\sigma,b}$ in the sense that it rounds the heavy Fourier coefficients of  $\wh{G}^{(j)}_{\sigma,b}$ to 1 and rounds the remaining Fourier coefficients to 0. The main purpose of this section is to show that $(x\cdot H)*I_{\sigma,b}^{(j)}$ is a good approximation of $z^{(j)}$. For simplicity, we will use $I$ to denote $I_{\sigma,b}^{(j)}$ when $\sigma, b, j$ are clear from context. 

We first show a commuting property of the ideal filter (see Section~\ref{sec:approx_f_SFS:swap}). Then, we derive the approximation error bound of the ideally filtered signals (see Section~\ref{sec:approx_f_SFS:approximation}).

\subsection{Swap the order of filtering}\label{sec:approx_f_SFS:swap}
We first prove a good property of the ideal filter that $I_{\sigma,b}^{(j)}$ ``commutes'' with the time domain filter $H$ with high probability over the random hashing function. 

\begin{lemma}\label{lem:xIH2xHI}
Let $\delta_1$ be the $\delta$ defined in Lemma \ref{lem:property_of_filter_G}.%
Let $H$ be defined as in Definition~\ref{def:effect_H_k_sparse}, ${G}^{(j)}_{\sigma,b}$ be defined as in Definition \ref{def:G_j_sigma_b}. Let the ideal filter $I=I_{\sigma,b}^{(j)}$ be defined as in Eq.~\eqref{eq:def_ideal_filter}. %

Then, for any $x\in {\cal F}_{k,F}$, with probability $0.9$ over the choice of $(\sigma,b)$, for any $j\in [B]$, %
\begin{align*}
 (x\cdot H)*I(t) = (x*I)(t)\cdot H(t)~~~\forall t\in \mathbb{R}. 
\end{align*}
\end{lemma}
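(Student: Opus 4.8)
The plan is to take Fourier transforms of both sides and reduce the operator identity to a pointwise statement: that the ideal filter $\wh I^{(j)}_{\sigma,b}$ is \emph{locally constant} near every frequency of $x$. Using $\mathcal F[f\cdot g]=\wh f*\wh g$ and $\mathcal F[f*g]=\wh f\cdot\wh g$, and noting that all four functions involved are continuous $L_2$ functions (so equality of Fourier transforms implies equality of the functions), the claim $(x\cdot H)*I=(x*I)\cdot H$ is equivalent to
\[
(\wh x*\wh H)\cdot\wh I=(\wh x\cdot\wh I)*\wh H .
\]
Since $x\in{\cal F}_{k,F}$ we have $\wh x=\sum_{i=1}^k v_i\,\delta_{f_i}$, so the left-hand side equals $\sum_i v_i\,\wh H(\cdot-f_i)\,\wh I(\cdot)$ while the right-hand side equals $\sum_i v_i\,\wh I(f_i)\,\wh H(\cdot-f_i)$. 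Thus it suffices to prove, for every $i\in[k]$, that $\wh I^{(j)}_{\sigma,b}(f)=\wh I^{(j)}_{\sigma,b}(f_i)$ whenever $\wh H(f-f_i)\neq 0$. Since $H$ is even, $\supp(\wh H)$ is a symmetric interval $[-a_H,a_H]$, so $f_i\in f_i+\supp(\wh H)=[f_i-a_H,f_i+a_H]$, and it is enough to show $\wh I^{(j)}_{\sigma,b}$ is constant on this interval.

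First I would collect the relevant facts: $\wh G^{(j)}_{\sigma,b}$ is a continuous function of $f$; $\supp(\wh H)$ is a bounded interval (of size bounded in Lemma~\ref{lem:property_of_filter_H} Property~\RN{3}); and $\supp(\wh{x\cdot H})=\supp(\wh x*\wh H)=\bigcup_{i=1}^k\big(f_i+\supp(\wh H)\big)$. Next I would invoke Lemma~\ref{lem:large_off_not_happen}: with probability at least $0.99$ over $(\sigma,b)$ the Large Offset event does not occur, i.e.\ for every $f\in\supp(\wh{x\cdot H})$ and every bin $j\in[B]$ one has $\wh G^{(j)}_{\sigma,b}(f)\notin[\delta/k,\,1-\delta/k]$.

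The key step is then an intermediate value argument. Fix $i\in[k]$ and $j\in[B]$. On the connected set $f_i+\supp(\wh H)\subseteq\supp(\wh{x\cdot H})$ the continuous function $\wh G^{(j)}_{\sigma,b}$ never takes a value in the band $[\delta/k,1-\delta/k]$, so it is either everywhere $<\delta/k$ on that interval or everywhere $>1-\delta/k$ on it. In the first case $\wh G^{(j)}_{\sigma,b}<\delta/k\le 1-\delta_1$ throughout, so $\wh I^{(j)}_{\sigma,b}\equiv 0$ on the interval; in the second case $\wh G^{(j)}_{\sigma,b}>1-\delta/k\ge 1-\delta_1$ throughout (using $\delta/k\le\delta_1$), so $\wh I^{(j)}_{\sigma,b}\equiv 1$ there. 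Either way $\wh I^{(j)}_{\sigma,b}$ is constant on $f_i+\supp(\wh H)$, hence equal to $\wh I^{(j)}_{\sigma,b}(f_i)$, which is what was needed. Summing the term-by-term identities over $i\in[k]$ gives the Fourier-domain equality for our fixed $j$, and this holds for every $j\in[B]$ on the single event that Large Offset does not occur; inverting the Fourier transform yields the lemma, the probability $0.9$ in the statement comfortably absorbing the $0.99$ from Lemma~\ref{lem:large_off_not_happen}.

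I expect the main point requiring care --- rather than a genuine difficulty --- to be reconciling the two small thresholds: the band $[\delta/k,1-\delta/k]$ in the Large Offset event versus the cutoff $1-\delta_1$ in the definition~\eqref{eq:def_ideal_filter} of $\wh I$, where one needs $\delta/k\le\delta_1$ (and both $<1/2$) for the case analysis to force $\wh I^{(j)}_{\sigma,b}$ to one of the constants $0,1$. One must also use the form of Lemma~\ref{lem:large_off_not_happen} that quantifies over all bins $j\in[B]$ (its main statement), not only the diagonal pairing $j=h_{\sigma,b}(f_i)$ appearing in its final sentence, since the lemma asserts the identity for every $j\in[B]$ simultaneously.
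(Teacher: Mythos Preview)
Your approach is essentially identical to the paper's: both reduce to the Fourier-domain identity $(\wh x*\wh H)\cdot\wh I=(\wh x\cdot\wh I)*\wh H$, expand $\wh x=\sum_i v_i\delta_{f_i}$ term by term, and use Lemma~\ref{lem:large_off_not_happen} to conclude that $\wh I^{(j)}_{\sigma,b}$ is constant on each interval $f_i+\supp(\wh H)$. Your write-up is in fact slightly more explicit than the paper's in two places: you spell out the intermediate-value argument using continuity of $\wh G^{(j)}_{\sigma,b}$ (the paper simply asserts the dichotomy), and you flag the threshold compatibility $\delta/k\le\delta_1$ needed to pass from the Large Offset conclusion to the definition of $\wh I$, which the paper glosses over.
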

\begin{proof}

By Fourier transformation, we have
\begin{align*}
 (x\cdot H)*I(t) = &~ \int_{-\infty}^\infty (\wh{x}*\wh{H})(f)\cdot \wh{I}(f)\cdot \exp(2\pi\i ft) \d f.
\end{align*}

We will show that $(\hat{x}*\wh{H})(f)\cdot \hat{I}(f)=((\hat{x}\cdot \hat{I})*\hat{H})(f)$ with high probability.

On the one hand,
\begin{align*}
    (\wh{x}*\wh{H})(f)\cdot \wh{I}(f) = &~ \sum_{j=1}^k v_j \cdot (\delta_{f_j}*\hat{H})(f)\cdot \wh{I}(f)\\
    = &~ \sum_{j=1}^k v_j \cdot \hat{H}(f-f_j)\cdot \wh{I}(f)\\
    = &~ \sum_{j=1}^k v_j \cdot \hat{H}(f-f_j)\cdot {\bf 1}_{f\in \supp(\wh{I})},
\end{align*}
where the first step follows from $\wh{x}(f)=\sum_{j=1}^k v_j \cdot \delta_{f_j}(f)$, the second step follows from the convolution property of Delta function. By Lemma~\ref{lem:large_off_not_happen}, we get that with probability at least $0.9$, for any $j\in [k]$ and any $f\in \supp(\hat{H})+f_j$, either $\hat{G}_{\sigma,b}^{(j)}<\delta_1$ or $\hat{G}_{\sigma,b}^{(j)}>1-\delta_1$. In other words, either $f_j+\supp(\hat{H})\subseteq \supp(\hat{I})$ or $f_j+\supp(\hat{H})\cap \supp(\hat{I})=\emptyset$. Since $0\in \supp(\hat{H})$, we get that for any $f\in f_j+\supp(\hat{H})$,
\begin{align*}
    f\in \supp(\wh{I})~~\Longleftrightarrow~~f_j\in \supp(\hat{I}).
\end{align*}
Hence, we have
\begin{align*}
    (\wh{x}*\wh{H})(f)\cdot \wh{I}(f) = \sum_{j\in [k]:f_j\in \supp(\hat{I})}v_j\cdot \hat{H}(f-f_j).
\end{align*}

On the other hand,
\begin{align*}
    (\hat{x}\cdot \hat{I})*\hat{H}(f)
    = &~ \sum_{j\in [k]:f_j\in \supp(\hat{I})} v_j\cdot \delta_{f_j} * \hat{H}(f)\\
    = &~ \sum_{j\in [k]:f_j\in \supp(\hat{I})} v_j\cdot \hat{H}(f-f_j)\\
    = &~ (\wh{x}*\wh{H})(f)\cdot \wh{I}(f).
\end{align*}

Therefore,
\begin{align*}
(x\cdot H)*I(t) = &~ \int_{-\infty}^\infty (\wh{x}*\wh{H})(f)\cdot \wh{I}(f)\cdot \exp(2\pi\i ft) \d f\\
 =&~ \int_{-\infty}^\infty (\wh{x}\cdot \wh{I})*\wh{H}(f)\cdot \exp(2\pi\i ft) \d f \notag \\
=&~  (x*I)(t)\cdot H(t),  
\end{align*}
where the last step follows from the definition of Fourier transform.

The lemma is then proved.

\end{proof}

\subsection{Approximation error bounds}\label{sec:approx_f_SFS:approximation}
We analyze the approximation error due to replacing $\hat{G}_{\sigma,b}^{(j)}$ with the ideal filter $I_{\sigma,b}^{(j)}$ defined by Eq.~\eqref{eq:def_ideal_filter}.  The following lemma gives a point-wise error bound. 

\begin{lemma}\label{lem:xHG_sub_XHI_is_small}
Let $\delta_1$ be defined as in Lemma \ref{lem:property_of_filter_G}. Let $H$ be defined as in Definition~\ref{def:effect_H_k_sparse}, ${G}^{(j)}_{\sigma,b}$ be defined as in Definition \ref{def:G_j_sigma_b}. 

For any $x\in {\cal F}_{k,F}$, we have that with probability $0.9$, for any $j\in [B]$,
\begin{align*}
|(x\cdot H)*{G}^{(j)}_{\sigma,b}(t) - (x\cdot H)*I(t)| \lesssim \delta_1  \sqrt{T|S|} \cdot  \|x(t) \|_T~~~\forall t\in \R.
\end{align*}
\end{lemma}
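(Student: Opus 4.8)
The plan is to pass to the frequency domain, where the two convolutions become multiplications, and reduce everything to a single pointwise estimate on $\wh G^{(j)}_{\sigma,b}-\wh I$ over the (small-measure) support of $\wh{x\cdot H}$. Writing $(x\cdot H)*G^{(j)}_{\sigma,b}(t)-(x\cdot H)*I(t)=\int_{\R}(\wh{x\cdot H})(f)\,\bigl(\wh G^{(j)}_{\sigma,b}(f)-\wh I(f)\bigr)\,e^{2\pi\i ft}\,\d f$, the triangle inequality gives
\begin{align*}
\bigl|(x\cdot H)*G^{(j)}_{\sigma,b}(t)-(x\cdot H)*I(t)\bigr|\le \int_{\supp(\wh{x\cdot H})}\bigl|\wh{x\cdot H}(f)\bigr|\cdot\bigl|\wh G^{(j)}_{\sigma,b}(f)-\wh I(f)\bigr|\,\d f .
\end{align*}
Since $\wh x=\sum_{j=1}^k v_j\delta_{f_j}$, we have $\wh{x\cdot H}=\wh x*\wh H$, supported on $\bigcup_{j\in[k]}(f_j+\supp(\wh H))$, a set of Lebesgue measure at most $k\cdot|\supp(\wh H)|=\Delta$; this is the only place the Fourier-sparsity of $x$ enters.

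Next I would establish the key pointwise bound: on the event that the Large Offset event does not occur — which holds with probability at least $0.9$ by Lemma~\ref{lem:large_off_not_happen} — one has $|\wh G^{(j)}_{\sigma,b}(f)-\wh I(f)|\le \delta_1/k$ for every $j\in[B]$ and every $f\in\supp(\wh{x\cdot H})$. Indeed, on this event $\wh G^{(j)}_{\sigma,b}(f)\notin[\delta_1/k,\,1-\delta_1/k]$ for such $f$. If $\wh G^{(j)}_{\sigma,b}(f)>1-\delta_1/k$, then since $1-\delta_1/k>1-\delta_1$, Eq.~\eqref{eq:def_ideal_filter} gives $\wh I(f)=1$, and $\wh G^{(j)}_{\sigma,b}(f)\le 1$ (Lemma~\ref{lem:property_of_filter_G}, Properties~\RN{1}--\RN{3}) yields $|\wh G^{(j)}_{\sigma,b}(f)-1|\le\delta_1/k$. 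If $\wh G^{(j)}_{\sigma,b}(f)<\delta_1/k$, then $\wh I(f)=0$, and since this value cannot lie in the pass band, Properties~\RN{2}--\RN{3} of Lemma~\ref{lem:property_of_filter_G} force $\wh G^{(j)}_{\sigma,b}(f)\ge-\delta_1/k$, so $|\wh G^{(j)}_{\sigma,b}(f)|\le\delta_1/k$. Thus both the pass band and the stop band are covered.

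Finally I would combine the pieces. Substituting the pointwise bound and applying Cauchy--Schwarz followed by Parseval,
\begin{align*}
\bigl|(x\cdot H)*G^{(j)}_{\sigma,b}(t)-(x\cdot H)*I(t)\bigr|
&\le \frac{\delta_1}{k}\int_{\supp(\wh{x\cdot H})}\bigl|\wh{x\cdot H}(f)\bigr|\,\d f\\
&\le \frac{\delta_1}{k}\,\sqrt{|\supp(\wh{x\cdot H})|}\cdot\Bigl(\int_{\R}|\wh{x\cdot H}(f)|^2\,\d f\Bigr)^{1/2}\\
&\le \frac{\delta_1}{k}\,\sqrt{\Delta}\cdot\|x\cdot H\|_{L_2}.
\end{align*}
Then I would bound $\|x\cdot H\|_{L_2}$ by splitting the integral into $[0,T]$ and its complement: on $[0,T]$ use $|H(t)|\le 1.01$ (Lemma~\ref{lem:property_of_filter_H}, Property~\RN{1}) and on the complement use the tail bound of Property~\RN{4} together with Property~\RN{5}, which give $\|x\cdot H\|_{L_2}\lesssim\sqrt T\,\|x(t)\|_T$ after rescaling the window to $[0,T]$. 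Plugging this in and using $\Delta=k\,|\supp(\wh H)|$ together with the choice of $|S|$ (so that $\Delta\lesssim k^2|S|$, using the bound on $|\supp(\wh H)|$ from Lemma~\ref{lem:property_of_filter_H}, Property~\RN{3}) yields $\bigl|(x\cdot H)*G^{(j)}_{\sigma,b}(t)-(x\cdot H)*I(t)\bigr|\lesssim\delta_1\sqrt{T|S|}\,\|x(t)\|_T$, with the failure probability inherited from the Large Offset event.

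The main obstacle is establishing the uniform estimate $|\wh G^{(j)}_{\sigma,b}-\wh I|\le\delta_1/k$ over the \emph{entire} support of $\wh{x\cdot H}$ and for \emph{all} bins simultaneously: without ruling out the Large Offset event, a frequency of $x$ could land on the transition edge of some $\wh G^{(j)}_{\sigma,b}$, where $|\wh G^{(j)}_{\sigma,b}-\wh I|$ is $\Theta(1)$ and the argument collapses, which is precisely why Lemma~\ref{lem:large_off_not_happen} is invoked. A secondary subtlety is that in the stop band $\wh G^{(j)}_{\sigma,b}$ may be slightly negative, so one must use the two-sided Property~\RN{3} of the $G$-filter rather than just non-negativity.
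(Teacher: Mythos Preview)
Your proof is correct and follows essentially the same route as the paper: pass to the frequency domain, invoke the Large Offset event (Lemma~\ref{lem:large_off_not_happen}) to bound $|\wh G^{(j)}_{\sigma,b}-\wh I|$ pointwise on $S=\supp(\wh{x\cdot H})$, then apply Cauchy--Schwarz, Parseval, and the $H$-filter properties to control $\|x\cdot H\|_{L_2}$.

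One small correction: your final step is an unnecessary detour with a misattributed justification. After Cauchy--Schwarz you already have $\frac{\delta_1}{k}\sqrt{|\supp(\wh{x\cdot H})|}=\frac{\delta_1}{k}\sqrt{|S|}\le \delta_1\sqrt{|S|}$ directly, so there is no need to pass through $\Delta$ and then argue $\Delta\lesssim k^2|S|$. Moreover, Property~\RN{3} of Lemma~\ref{lem:property_of_filter_H} is an \emph{upper} bound on $|\supp(\wh H)|$ and cannot by itself yield a lower bound on $|S|$; the inequality $\Delta\le k|S|$ actually follows from $|S|\ge |\supp(\wh H)|$ (since $S$ contains at least one translate of $\supp(\wh H)$). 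The paper avoids this entirely by using the cruder bound $|\wh G^{(j)}_{\sigma,b}-\wh I|\le \delta_1$ and keeping $\sqrt{|S|}$ from the start.
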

\begin{proof}
Let $S:=\supp(\wh{x}*\wh{H})$ be defined as the support set of $\wh{x}*\wh{H}$. 
Then $|S|\leq \Delta$. 

We have that
\begin{align}
&~|(x\cdot H)*{G}^{(j)}_{\sigma,b}(t) - (x\cdot H)*I(t)|\\
=&~ | (x\cdot H)*({G}^{(j)}_{\sigma,b} - I)(t)| \notag \\
=&~ \Big| \int_{-\infty}^\infty (\wh{x}* \wh{H})(f) \cdot (\wh{G}^{(j)}_{\sigma,b} - \hat{I})(f)\cdot e^{2\pi\i f t} \d f\Big| \notag \\
\leq&~  \int_{-\infty}^\infty |(\wh{x}* \wh{H})(f) \cdot (\wh{G}^{(j)}_{\sigma,b} - I)(f)|\d f\notag \\
=&~  \int_S |(\wh{x}* \wh{H})(f) \cdot (\wh{G}^{(j)}_{\sigma,b} - I)(f) |\d f\notag \\
\leq&~  \int_{S} |(\wh{x}* \wh{H})(f) \cdot \delta_1 |\d f\notag \\
\leq&~   \delta_1 \sqrt{|S|}\cdot \sqrt{\int_{-\infty}^\infty |(\wh{x}* \wh{H})(f)  |^2\d f}\notag \\
=&~   \delta_1\sqrt{|S|} \cdot \sqrt{\int_{-\infty}^\infty |(x\cdot H)(t)  |^2\d t }\notag \\
\lesssim &~   \delta_1  \sqrt{T|S|} \cdot  \|(x\cdot H)(t)  \|_T\notag \\
\lesssim &~   \delta_1  \sqrt{T|S|} \cdot  \|x(t) \|_T
\end{align}
where the first step is straight forward, the second step follows from the definition of Fourier transform, the third step follows from triangle equality, the forth step follows from the definition of $S$. For the fifth step, by Lemma~\ref{lem:large_off_not_happen} that with probability at least 0.9, the Large Offset event does not happen (i.e., for any $f\in S$, either $\wh{G}^{(j)}_{\sigma,b}(f)<\delta_1$ or $\wh{G}^{(j)}_{\sigma,b}(f)>1-\delta_1$). Then, by Lemma~\ref{lem:property_of_filter_G}, we know that $-\delta_1\leq \wh{G}^{(j)}_{\sigma,b}(f)\leq 1$. Thus, we get that $|(\wh{G}^{(j)}_{\sigma,b}-\hat{I})(f)|\leq \delta_1$. The sixth step follows from Cauchy–Schwarz inequality, the seventh step follows from Parseval's theorem, the eighth step follows from Lemma \ref{lem:property_of_filter_H} Property \RN{4} and \RN{5}, 
the last step follows from Lemma \ref{lem:property_of_filter_H} Property \RN{5}. 
\end{proof}

The following lemma gives a $T$-norm bound for the approximation error.

\begin{lemma}\label{lem:xHI_sub_XHG_T_norm_is_small}
Let $\delta_1$ be defined as in Lemma \ref{lem:property_of_filter_G}. Let $H$ be defined as in Definition~\ref{def:effect_H_k_sparse}, ${G}^{(j)}_{\sigma,b}$ be defined as in Definition \ref{def:G_j_sigma_b}. 

Then, for any $x\in {\cal F}_{k,F}$, with probability $0.9$, for any $j\in [B]$,
\begin{align*}
\int_{-\infty}^\infty | (x\cdot H)*I(t) - (x\cdot H)*G^{(j)}_{\sigma, b}(t) |^2 \d t \lesssim \delta_1^2 T \|x(t) \|_T^2.
\end{align*}
In particular,
\begin{align*}
\| (x\cdot H)*I(t) - (x\cdot H)*G^{(j)}_{\sigma, b}(t) \|_T \lesssim \delta_1 \|x(t) \|_T.
\end{align*}
\end{lemma}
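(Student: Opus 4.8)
The plan is to obtain the second (``in particular'') inequality as an immediate consequence of the first: for any function $f$ one has $\|f(t)\|_T^2 = \frac{1}{T}\int_0^T |f(t)|^2\d t \le \frac{1}{T}\int_{-\infty}^{\infty}|f(t)|^2\d t$, so applying this with $f = (x\cdot H)*I - (x\cdot H)*G^{(j)}_{\sigma,b}$ and taking square roots turns the $L_2$ bound into the stated $T$-norm bound. Thus everything reduces to the first inequality, which I would prove entirely in the frequency domain.

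By Parseval's theorem,
\begin{align*}
\int_{-\infty}^{\infty}\bigl|(x\cdot H)*I(t) - (x\cdot H)*G^{(j)}_{\sigma,b}(t)\bigr|^2\d t
= \int_{-\infty}^{\infty}\bigl|(\wh{x}*\wh{H})(f)\bigr|^2\cdot\bigl|\wh{I}(f)-\wh{G}^{(j)}_{\sigma,b}(f)\bigr|^2\d f .
\end{align*}
Since $\wh{x}$ is a sum of $k$ Dirac deltas and $\wh{H}$ has compact support, the integrand vanishes off $S:=\supp(\wh{x}*\wh{H})$, so only the values of $\wh{I}-\wh{G}^{(j)}_{\sigma,b}$ on $S$ enter. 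Here I invoke Lemma~\ref{lem:large_off_not_happen}: with probability at least $0.9$ the Large Offset event fails, meaning that for every $f\in S$ and every bin $j\in[B]$ we have $\wh{G}^{(j)}_{\sigma,b}(f)<\delta_1$ or $\wh{G}^{(j)}_{\sigma,b}(f)>1-\delta_1$; in either case the definition~\eqref{eq:def_ideal_filter} of $\wh{I}$ together with the bound $-\delta_1\le\wh{G}^{(j)}_{\sigma,b}(f)\le 1$ from Lemma~\ref{lem:property_of_filter_G} gives the pointwise estimate $|\wh{I}(f)-\wh{G}^{(j)}_{\sigma,b}(f)|\le\delta_1$ throughout $S$. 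On this event,
\begin{align*}
\int_{-\infty}^{\infty}\bigl|(\wh{x}*\wh{H})(f)\bigr|^2\bigl|\wh{I}(f)-\wh{G}^{(j)}_{\sigma,b}(f)\bigr|^2\d f
\le \delta_1^2\int_{-\infty}^{\infty}\bigl|(\wh{x}*\wh{H})(f)\bigr|^2\d f
= \delta_1^2\int_{-\infty}^{\infty}\bigl|(x\cdot H)(t)\bigr|^2\d t ,
\end{align*}
the last step being Parseval again. It remains to bound $\int_{-\infty}^{\infty}|(x\cdot H)(t)|^2\d t \lesssim T\|x(t)\|_T^2$; this follows from Lemma~\ref{lem:property_of_filter_H} Properties~\RN{4} and~\RN{5} (relating the $L_2$ norm of $x\cdot H$ to its $T$-norm) followed by Property~\RN{5} again ($\|(x\cdot H)(t)\|_T\lesssim\|x(t)\|_T$), exactly as in the last lines of the proof of Lemma~\ref{lem:xHG_sub_XHI_is_small}. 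Combining these estimates gives $\delta_1^2 T\|x(t)\|_T^2$, and hence both displayed inequalities.

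This is essentially the $L_2$ counterpart of Lemma~\ref{lem:xHG_sub_XHI_is_small}: there the difference of filters was controlled by an $\ell_1$-in-frequency estimate passed through Cauchy--Schwarz and a supremum bound, which cost a $\sqrt{T|S|}$ factor, whereas here, because only an $L_2$ bound is needed, the pointwise estimate $|\wh{I}-\wh{G}^{(j)}_{\sigma,b}|\le\delta_1$ can be kept under the integral and that factor disappears. I do not expect a real obstacle; the only point needing care is the probabilistic one---arranging, via Lemma~\ref{lem:large_off_not_happen}, that the pointwise bound holds simultaneously for all $f\in S$ and all bins $j\in[B]$ on a single event of probability at least $0.9$, and noting this is the same event underlying Lemma~\ref{lem:xIH2xHI} and Lemma~\ref{lem:xHG_sub_XHI_is_small}, so that all three lemmas apply together downstream.
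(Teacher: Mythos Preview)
Your proposal is correct and essentially identical to the paper's own proof: both pass to the frequency domain via Parseval, restrict to $S=\supp(\wh{x}*\wh{H})$, invoke the absence of the Large Offset event (Lemma~\ref{lem:large_off_not_happen} together with Lemma~\ref{lem:property_of_filter_G}) to get $|\wh{I}-\wh{G}^{(j)}_{\sigma,b}|\le\delta_1$ on $S$, apply Parseval again, and finish with Lemma~\ref{lem:property_of_filter_H} Properties~\RN{4} and~\RN{5}. Your added remark explaining why the $\sqrt{T|S|}$ factor from Lemma~\ref{lem:xHG_sub_XHI_is_small} disappears here is a nice clarification not present in the paper.
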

\begin{proof}
Let $S:=\supp(\wh{x}*\wh{H})$ be defined as the support set of $\wh{x}*\wh{H}$.

We have that
\begin{align*}
&~T \| (x\cdot H)*I(t) - (x\cdot H)*G^{(j)}_{\sigma, b}(t) \|_T^2 \\
=&~ \int_0^T |(x\cdot H)*I(t) - (x\cdot H)*G^{(j)}_{\sigma, b}(t)|^2\d t \notag \\
\leq &~ \int_{-\infty}^\infty |(x\cdot H)*I(t) - (x\cdot H)*G^{(j)}_{\sigma, b}(t)|^2\d t \notag \\
\leq &~ \int_{-\infty}^\infty |(\wh{x}*\wh{H})(f)\cdot(\wh{I}(f) -\wh{G}^{(j)}_{\sigma, b}(f))|^2\d f \notag \\
= &~ \int_{S} |(\wh{x}*\wh{H})(f)\cdot(I(f) -\wh{G}^{(j)}_{\sigma, b}(f))|^2\d f \notag \\
\leq &~ \int_{S} |(\wh{x}*\wh{H})(f)\cdot\delta_1 |^2\d f \notag \\
\leq &~ \int_{-\infty}^\infty |(\wh{x}*\wh{H})(f)\cdot\delta_1 |^2\d f \notag \\
= &~ \delta_1^2\int_{-\infty}^\infty |(x\cdot H)(t) |^2\d t \notag \\
\lesssim&~ \delta_1^2 T \|x(t) \|_T^2
\end{align*}
where the first step follows from the definition of the norm, the second step is straight forward, the third step follows from Parseval's theorem, the forth step follows from the definition of $S$, the fifth step follows from Lemma~\ref{lem:large_off_not_happen} and  Lemma~\ref{lem:property_of_filter_G}, the sixth step is straight forward, the seventh step follows from Parseval's theorem, the eighth step follows from Lemma \ref{lem:property_of_filter_H} Property \RN{4} and Property \RN{5}.

\end{proof}

\section{Concentration Property of the Filtered Signal}
\label{sec:concentr}

Recall that a frequency $f^*$ is \emph{heavy} if it satisfies the following condition:
\begin{align*}
    \int_{f^*-\Delta_h}^{f^*+\Delta_h} | \widehat{x^*\cdot H}(f) |^2 \mathrm{d} f \geq  T\N^2/k.
\end{align*}
In this section, we consider the filtered signal in a hashing bin that contains a heavy frequency; that is, $z(t)=(x^*\cdot H)*\wh{G}_{\sigma,b}^{(j)}$ where $j=h_{\sigma,b}(f^*)$ is the index of the bin containing $f^*$. We will prove that $z(t)$ form a one-cluster signal around $f^*$, which means that in the frequency domain most energy are concentrated around $f^*$, and in the time domain, most energy are contained in the observation window $[0,T]$. The formal definition are given as follows:
\begin{definition}[$(\epsilon,\Delta)$-one-cluster signal]\label{def:one_cluster}
We say that a signal $z(t)$ is an $(\epsilon,\Delta)$-one-cluster signal around $f_0$ if and only if $z(t)$ and $\wh{z}(f)$ satisfy the following two properties:
\begin{eqnarray*}
\mathrm{Property ~\RN{1}} &:& \int_{f_0-\Delta}^{f_0+\Delta} | \widehat{z}(f) |^2 \mathrm{d} f \geq (1-\epsilon) \int_{-\infty}^{+\infty} | \widehat{z}(f) |^2 \mathrm{d} f \\
\mathrm{Property ~\RN{2}} &:& \int_0^T | z(t) |^2 \mathrm{d} t \geq (1-\epsilon) \int_{-\infty}^{+\infty} |z(t) |^2 \mathrm{d} t.
\end{eqnarray*}
\end{definition}

We first prove the energy preservation in the time domain: 
\begin{lemma}[Time domain energy preservation]\label{lem:full_proof_of_3_properties_true_for_z} 
Let $\Delta_h=|\supp(\wh{H})|$. Let $f^*$ satisfy
\begin{align*}
    \int_{f^*-\Delta_h}^{f^*+\Delta_h} | \widehat{x^*\cdot H}(f) |^2 \mathrm{d} f \geq  T\N^2/k
\end{align*}
 and $\wh{z} = \widehat{x^* \cdot H} \cdot \widehat{G}^{ (j)}_{\sigma,b}$ where $j=h_{\sigma,b}(f^*)$. 
Suppose the Large Offset event does not happen.
Then, we have that, 
\begin{align*}
      \int_{-\infty}^{+\infty} |z(t) |^2 \mathrm{d} t \leq 1.35 \int_{0}^{T} | z(t) |^2 \mathrm{d} t. 
\end{align*}
\end{lemma}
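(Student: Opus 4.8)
The plan is to apply the Signal Equivalent Method: I replace the randomized frequency filter $\wh G^{(j)}_{\sigma,b}$ by the ideal filter $\wh I=\wh I^{(j)}_{\sigma,b}$ of \eqref{eq:def_ideal_filter}, prove the time-domain concentration for the equivalent signal $\ov z(t):=(x^*\cdot H)*I(t)$ from the fine properties of $H$, and then transport the bound to $z(t)$ using that $z$ and $\ov z$ are close in $L_2(\R)$. Since we assume the Large Offset event does not happen, Lemma~\ref{lem:xIH2xHI} lets me commute the two filters, so $\ov z(t)=(x^**I)(t)\cdot H(t)$ for all $t$, where $y(t):=(x^**I)(t)$ has $\wh y=\wh x^*\cdot\wh I$ — a sub-collection of the spikes of $\wh x^*$ — hence $y\in\mathcal F_{k,F}$.

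Because $y$ is $k$-Fourier-sparse with band-limit $F$, it satisfies both hypotheses of Theorem~\ref{thm:HwithScaling}: the sup-versus-average energy bound with $R=O(k^2)$ (Theorem~\ref{thm:energy_bound}) and the $|t/T|^{S}$ growth bound outside the window with $S=O(k^2\log k)$ (Theorem~\ref{thm:bound_growth_around1}) — exactly the $R,S$ used to build $H$ in Definition~\ref{def:effect_H_k_sparse}. Applying Theorem~\ref{thm:HwithScaling} Part~2 to $y$ (after the standard affine rescaling of the window to the unit interval; equivalently, Lemma~\ref{lem:property_of_filter_H} Properties~\RN{4}--\RN{5}) yields
\begin{align*}
\int_0^T|\ov z(t)|^2\,\d t=\int_0^T|y(t)H(t)|^2\,\d t\ \ge\ 0.95\int_{-\infty}^{+\infty}|y(t)H(t)|^2\,\d t\ =\ 0.95\int_{-\infty}^{+\infty}|\ov z(t)|^2\,\d t,
\end{align*}
i.e.\ $\int_{\R\setminus[0,T]}|\ov z|^2\le 0.06\int_0^T|\ov z|^2$.

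Next I control the discrepancy. By Lemma~\ref{lem:xHI_sub_XHG_T_norm_is_small} (whose $0.9$-probability hypothesis is subsumed by ``Large Offset does not happen''), $\int_{-\infty}^{+\infty}|z(t)-\ov z(t)|^2\,\d t\lesssim\delta_1^2\,T\,\|x^*\|_T^2$ unconditionally. To see this is negligible relative to $\int_\R|z|^2$, I use the heavy-cluster hypothesis: since $f^*$ hashes to bin $j$ and no Large Offset occurs, Lemma~\ref{lem:lo_and_fstar2Glarge} gives $\wh G^{(j)}_{\sigma,b}(f)\ge 1-\delta_1/k$ on $\supp(\wh x^**\wh H)$, while $\widehat{x^*\cdot H}$ vanishes on $[f^*-\Delta_h,f^*+\Delta_h]$ off that support, so by Parseval and the heavy bound,
\begin{align*}
\int_{-\infty}^{+\infty}|z(t)|^2\,\d t=\int_{-\infty}^{+\infty}|\wh z(f)|^2\,\d f\ \ge\ (1-\delta_1/k)^2\!\int_{f^*-\Delta_h}^{f^*+\Delta_h}\!|\widehat{x^*\cdot H}(f)|^2\,\d f\ \ge\ \tfrac12\,T\N^2/k\ \gtrsim\ \tfrac{\delta}{k}\,T\,\|x^*\|_T^2,
\end{align*}
using $\N^2\ge\delta\|x^*\|_T^2$. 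Hence, picking the filter accuracy $\delta_1$ so that $k\delta_1^2/\delta$ is a sufficiently small absolute constant, I get $\int_\R|z-\ov z|^2\le\epsilon_0\int_\R|z|^2$ for a tiny $\epsilon_0$.

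Finally I combine: writing $z=\ov z+(z-\ov z)$ and $\eta:=\big(\int_\R|z-\ov z|^2\big)^{1/2}\le\sqrt{\epsilon_0}\,\big(\int_\R|z|^2\big)^{1/2}$, the $L_2$ triangle inequality on $[0,T]$ and on $\R\setminus[0,T]$, together with the concentration of $\ov z$ above, produces a linear inequality in $\big(\int_{\R\setminus[0,T]}|z|^2\big)^{1/2}$ and $\big(\int_0^T|z|^2\big)^{1/2}$ whose cross-terms are $O(\sqrt{\epsilon_0})$; solving it gives $\int_{\R\setminus[0,T]}|z|^2\le 0.35\int_0^T|z|^2$, i.e.\ $\int_{-\infty}^{+\infty}|z|^2\le 1.35\int_0^T|z|^2$, the slack in $1.35$ being there to absorb the $O(\sqrt{\epsilon_0})$ corrections. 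The main obstacle is precisely this transfer: the only unconditional bound on $z-\ov z$ is measured in units of $\|x^*\|_T$, which can vastly exceed $\|z\|_T$, so making the discrepancy provably subordinate to $\int_\R|z|^2$ relies on the heavy-cluster lower bound and on the no-Large-Offset event forcing $\wh G^{(j)}_{\sigma,b}\approx1$ throughout $\supp(\wh x^**\wh H)$ — after which only careful tracking of constants remains.
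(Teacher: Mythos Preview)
Your proposal is correct and follows essentially the same route as the paper: replace $G^{(j)}_{\sigma,b}$ by the ideal filter $I$, commute to write $\ov z=(x^**I)\cdot H$ with $x^**I$ $k$-Fourier-sparse, invoke the time-domain properties of $H$ (the paper goes through Lemma~\ref{lem:property_of_filter_H} Properties~\RN{4}--\RN{5} to reach the constant $1.3$ rather than citing Theorem~\ref{thm:HwithScaling} Part~2 directly, but it is the same content), bound $\|z-\ov z\|_{L_2}$ by Lemma~\ref{lem:xHI_sub_XHG_T_norm_is_small}, and finally use the heavy-cluster lower bound together with $\wh G^{(j)}_{\sigma,b}\approx 1$ near $f^*$ to absorb the discrepancy as an $o(1)\cdot\|z\|_{L_2}$ term. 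The only cosmetic difference is the final bookkeeping: the paper works with the single inequality $\|z\|_{L_2}\le\sqrt{1.3T}\,\|z\|_T+o(1)\,\|z\|_{L_2}$ and solves for $\|z\|_{L_2}$, whereas you split into $[0,T]$ and $\R\setminus[0,T]$ and solve a two-variable linear inequality; both give the stated $1.35$ after absorbing the $O(\sqrt{\epsilon_0})$ slack.
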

\begin{proof}
Let $I(f)$ be the ideal filter defined by Eq.~\eqref{eq:def_ideal_filter}.

We first have
\begin{align*}
    \|z(t)\|_{L_2} = &~ \|(x^*\cdot H)(t) * G^{(j)}_{\sigma, b} (t)\|_{L_2}\\
    \leq &~ \|(x^*\cdot H)(t) * I (t)\|_{L_2} + \|(x^*\cdot H)(t) * (I-G^{(j)}_{\sigma, b}) (t)\|_{L_2},
\end{align*}
where the second step follows from triangle inequality.

Then, we bound the two terms separately.

For the first term, by Lemma \ref{lem:xIH2xHI}, if the Large Offset event does not happen, we have that
\begin{align}
    (x^*\cdot H)*I(t) = (x^**I)(t)\cdot H(t). \label{eq:lem:xIH2xHI}
\end{align}
It implies that
\begin{align*}
    \|(x^*\cdot H)(t) * I (t)\|_{L_2} = &~ \|(x^** I)(t) \cdot H (t)\|_{L_2}
\end{align*}
Let $y(t):=(x^**I)(t)$. It's easy to see that $y(y)$ is $k$-Fourier-sparse. Then, we have
\begin{align}
&~ \int_{-\infty}^\infty |y(t) \cdot H(t)|^2 \d t  \notag \\
= &~ \int_0^T |y(t) \cdot H(t)|^2 \d t + \int_{[-\infty,\infty]\backslash [0, T]} |y(t) \cdot H(t)|^2 \d t  \notag \\
\leq &~ \int_0^T |y(t) \cdot H(t)|^2 \d t + 0.1 \int_0^T |y(t)|^2 \d t \notag \\
\leq &~ 1.1 \int_0^T |y(t) |^2 \d t  \notag \\
\leq &~ 1.3 \int_0^T |y(t) \cdot H(t)|^2 \d t,  \label{eq:xH_infty_T}
\end{align}
where the first step is straight forward, the second step follows from Lemma \ref{lem:property_of_filter_H} Property \RN{4}, the third step follows from Lemma \ref{lem:property_of_filter_H} Property \RN{5}, the forth step follows from Lemma \ref{lem:property_of_filter_H} Property \RN{5}. 
Hence,
\begin{align*}
    \|(x^*\cdot H)(t) * I (t)\|_{L_2} = \|y(t)\cdot H(t)\|_{L_2}\leq \sqrt{1.3T}\cdot \|(x^**I)(t)\cdot H(t)\|_T.
\end{align*}
By Eq.~\eqref{eq:lem:xIH2xHI} again, we can swap the order of $I$ and $H$ and obtain:
\begin{align*}
    \|(x^*\cdot H)(t) * I (t)\|_{L_2} \leq \sqrt{1.3T}\cdot \|(x^*\cdot H)(t)* I(t)\|_T.
\end{align*}

For the second term, by Lemma \ref{lem:xHI_sub_XHG_T_norm_is_small}, we have that
\begin{align}
    \| (x^*\cdot H)*I(t) - (x^*\cdot H)*G^{(j)}_{\sigma, b}(t) \|_{L_2} \lesssim \delta_1 \sqrt{T}\|x^*(t) \|_T.
    \label{eq:lem:xHI_sub_XHG_T_norm_is_small_1}
\end{align}

Therefore, we get that
\begin{align*}
    \|z(t)\|_{L_2} \leq &~ \sqrt{1.3T}\cdot \|(x^*\cdot H)(t)* I(t)\|_T + O(\delta_1 \sqrt{T}  \|x^*(t) \|_T)\\
    \leq&~ \sqrt{1.3T}\cdot \| (x^*\cdot H) * G^{(j)}_{\sigma, b}(t) \|_T + \sqrt{1.3T}\cdot\| (x^*\cdot H) * (I-G^{(j)}_{\sigma, b})(t) \|_T  + O(\delta_1 \sqrt{T} \|x^*(t) \|_T)\\
\leq&~ \sqrt{1.3T}\cdot \|z(t)\|_T + O(\delta_1 \sqrt{T} \|x^*(t) \|_T),
\end{align*}
where the second step follows from triangle inequality, and the last step follows from Lemma \ref{lem:xHI_sub_XHG_T_norm_is_small} again.

We claim that the second term can be bounded by $o(1)\cdot \|z(t)\|_T$. 
Indeed, we have
\begin{align}
  &~ \int_{-\infty}^\infty |(x^*\cdot H)*G^{(j)}_{\sigma, b}(t) |^2 \d t \notag\\
 = &~ \int_{-\infty}^\infty |(\wh{x}^**\wh{H})\cdot \wh{G}^{(j)}_{\sigma, b}(f) |^2 \d f \notag\\
 \geq &~ \int_{f^*-\Delta_h}^{f^*+\Delta_h} |(\wh{x}^**\wh{H})\cdot \wh{G}^{(j)}_{\sigma, b}(f) |^2 \d f \notag\\
 \gtrsim &~ \int_{f^*-\Delta_h}^{f^*+\Delta_h} |(\wh{x}^**\wh{H})(f) |^2 \d f \notag\\
 \gtrsim &~ \int_{f^*-\Delta_h}^{f^*+\Delta_h} |(\wh{x}^**\wh{H})(f) |^2 \d f \notag\\
 \geq &~ \frac{T \delta \|x^*\|_T^2}{k} \label{eq:xstarTnorm_xstarHGinfinity}
\end{align}
where the first step follows from Parseval's theorem, the second step is straightforward, the third step follows from Lemma \ref{lem:large_off_not_happen} and Lemma \ref{lem:property_of_filter_G} Property \RN{1}, 
the forth step follows from our assumption that there exists a heavy frequency $f^*$ hashing into the $j$-th bin, the fifth step follows from $f^*$ satisfying 
\begin{align*}
    \int_{f^*-\Delta}^{f^*+\Delta} | \widehat{x^*\cdot H}(f) |^2 \mathrm{d} f \geq  T\N^2/k.
\end{align*}
Thus, $\|x^*\|_T\leq O(\sqrt{k/(T\delta)})\|z(t)\|_{L_2}$ and we have
\begin{align*}
    O(\delta_1\sqrt{T} \|x^*\|_T) = O\Big(\delta_1\sqrt{\frac{k}{\delta}} \|z(t)\|_{L_2}\Big)\leq O\Big(\sqrt{\frac{\delta}{k}}\Big)\|z(t)\|_{L_2}=o(1)\cdot \|z(t)\|_{L_2},
\end{align*}
where the second step follows from $\delta_1\leq \delta / k$.

Finally, we have
\begin{align*}
    \|z(t)\|_{L_2} \leq \sqrt{1.3T}\cdot \|z(t)\|_T + o(1)\cdot \|z(t)\|_{L_2},
\end{align*}
which implies that
\begin{align*}
    \int_{-\infty}^{+\infty} |z(t) |^2 \mathrm{d} t \leq 1.35 \int_{0}^{T} | z(t) |^2 \mathrm{d} t.
\end{align*}
The lemma is then proved.

\end{proof}

We next show the frequency domain energy concentration in the following lemma. Together with Lemma~\ref{lem:full_proof_of_3_properties_true_for_z}, we conclude that $z(t)$ is a one-cluster signal.
\begin{lemma}[Frequency domain energy concentration]\label{lem:z_satisfies_two_properties}
Let $x^*$ be a $k$-Fourier-sparse signal. Let $f^*\in [-F,F]$ satisfy  the following property:
\begin{equation}\label{eq:heavyfrequency_000}
\int_{f^*-\Delta_h}^{f^*+\Delta_h} | \widehat{x^*\cdot H}(f) |^2 \mathrm{d} f \geq  T\N^2/k.
\end{equation}
Let $\sigma,b$ be the parameter of the hashing function. Suppose that Large Offset event not happened and $f^*$ is well-isolated.  Let $j=h_{\sigma,b}(f^*)$ be the bucket that $f^*$ maps to under the hash such that $z=(x^* \cdot H)*G^{(j)}_{\sigma,b}$ and $\wh{z}=\wh{x^* \cdot H} \cdot \wh{G}^{(j)}_{\sigma,b}$. Then, we have
\begin{align*}
    \int_{f^*-\Delta}^{f^*+\Delta} | \widehat{z}(f) |^2 \mathrm{d} f \geq 0.7 \int_{-\infty}^{+\infty} | \widehat{z}(f) |^2 \mathrm{d} f.
\end{align*}
Furthermore, $z(t)$ is a $(0.3,\Delta)$-one-cluster signal around $f^*$.
\end{lemma}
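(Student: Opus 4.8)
The plan is to prove the single displayed inequality $\int_{f^*-\Delta}^{f^*+\Delta}|\wh{z}(f)|^2\,\d f \ge 0.7\int_{-\infty}^{+\infty}|\wh{z}(f)|^2\,\d f$ and then observe that this is exactly Property~\RN{1} of Definition~\ref{def:one_cluster} with $\epsilon=0.3$, while Property~\RN{2} of that definition is immediate from Lemma~\ref{lem:full_proof_of_3_properties_true_for_z}: that lemma gives $\int_{-\infty}^{+\infty}|z(t)|^2\,\d t \le 1.35\int_0^T|z(t)|^2\,\d t$, hence $\int_0^T|z(t)|^2\,\d t \ge \frac{1}{1.35}\int_{-\infty}^{+\infty}|z(t)|^2\,\d t \ge 0.7\int_{-\infty}^{+\infty}|z(t)|^2\,\d t$. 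So all the work is in the frequency-domain inequality, and once it is proven the ``furthermore'' sentence follows at once.

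To prove the inequality I would write $I_{f^*} := (f^*-\Delta, f^*+\Delta)$ and split $\int_{-\infty}^{+\infty}|\wh{z}(f)|^2\,\d f = \int_{I_{f^*}}|\wh{z}(f)|^2\,\d f + \int_{\overline{I_{f^*}}}|\wh{z}(f)|^2\,\d f$. For the in-band term: since $\Delta = k\,|\supp(\wh{H})| \ge \Delta_h$, we have $(f^*-\Delta_h, f^*+\Delta_h)\subseteq I_{f^*}$; moreover $\widehat{x^*\cdot H}$ is supported on $\supp(\wh{x}^**\wh{H})$, on which Lemma~\ref{lem:lo_and_fstar2Glarge} (applicable because the Large Offset event does not happen and $j=h_{\sigma,b}(f^*)$) gives $\wh{G}^{(j)}_{\sigma,b}(f)\in[1-\delta/k,1]$. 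Hence $\int_{I_{f^*}}|\wh{z}|^2 \ge (1-\delta/k)^2\int_{f^*-\Delta_h}^{f^*+\Delta_h}|\widehat{x^*\cdot H}(f)|^2\,\d f \ge (1-\delta/k)^2\, T\N^2/k$, where the last step is the heavy-frequency hypothesis Eq.~\eqref{eq:heavyfrequency_000}. For the out-of-band term, the assumption that $f^*$ is well-isolated (Definition~\ref{def:k_signal_recovery_z}) yields $\int_{\overline{I_{f^*}}}|\wh{z}|^2 \lesssim \epsilon\, T\N^2/k$ with $\epsilon$ the small absolute constant fixed in that definition. Combining, $\int_{\overline{I_{f^*}}}|\wh{z}|^2 \le c'\int_{I_{f^*}}|\wh{z}|^2$ with $c' = O(\epsilon)(1-\delta/k)^{-2}$, and therefore the ratio $\int_{I_{f^*}}|\wh{z}|^2 / \int_{-\infty}^{+\infty}|\wh{z}|^2$ equals $1/(1+c')$, which is $\ge 0.7$ as soon as $c' \le 3/7$.

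The only real obstacle is constant bookkeeping: one must check that the fixed constant $\epsilon$ appearing in the well-isolation condition (the same $\epsilon$ of Lemma~\ref{lem:often-well-isolated}) is chosen small enough that $c' = O(\epsilon)(1-\delta/k)^{-2}\le 3/7$; this is harmless since $\delta=o(1)$, so $(1-\delta/k)^{-2}$ is arbitrarily close to $1$. A secondary, purely cosmetic point is that Definition~\ref{def:k_signal_recovery_z} is phrased for the observed signal $x$ whereas here $z=(x^*\cdot H)*G^{(j)}_{\sigma,b}$ is built from $x^*$; in this section the isolation bound is used with $x^*$ in place of $x$ and with the same noise level $\N$ that appears in the heavy-frequency hypothesis, so the computation is unchanged.
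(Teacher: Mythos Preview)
Your proposal is correct and follows essentially the same approach as the paper: split the frequency axis into $I_{f^*}$ and its complement, lower-bound the in-band mass via Lemma~\ref{lem:lo_and_fstar2Glarge} together with the heavy-frequency hypothesis Eq.~\eqref{eq:heavyfrequency_000}, upper-bound the out-of-band mass via the well-isolation assumption, and then invoke Lemma~\ref{lem:full_proof_of_3_properties_true_for_z} for Property~\RN{2}. Your write-up is in fact a bit more careful than the paper's (you make the inclusion $(f^*-\Delta_h,f^*+\Delta_h)\subseteq I_{f^*}$ explicit, you square the filter bound correctly, and you flag the $x$ versus $x^*$ discrepancy in Definition~\ref{def:k_signal_recovery_z}).
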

\begin{proof}
Define region $I_{f^*} = (f^* - \Delta, f^* + \Delta)$ with the complement   $\overline{I_{f^*}} = (-\infty, \infty)\setminus I_{f^*}$.  We have that
  
 \begin{align*}
 \int_{I_{f^*}} | \widehat{z}(f) |^2 \mathrm{d} f \geq (1-\delta/k)\int_{I_{f^*}} | \widehat{x^*\cdot H}(f) |^2 \mathrm{d} f \geq (1-o(1)) T\N^2/k
  \end{align*}
 where the first step follows from Lemma \ref{lem:lo_and_fstar2Glarge}, 
 the second step follows from Eq.~\eqref{eq:heavyfrequency_000}.
  
 On the other hand, $f^*$ is well-isolated. Thus, by the definition of well-isolation (Definition~\ref{def:k_signal_recovery_z}), we have that
  \begin{equation*}
    \int_{\overline{I_{f^*}}} | \widehat{z}(f) |^2 \mathrm{d} f \lesssim  \epsilon\cdot  T\N^2/k\leq 0.1 T\N^2/k.
  \end{equation*}
  
Combining them together, we get that
  \begin{align*}
    \int_{f_0-\Delta}^{f_0+\Delta} | \widehat{z}(f) |^2 \mathrm{d} f \geq 0.7 \int_{-\infty}^{+\infty} | \widehat{z}(f) |^2 \mathrm{d} f
\end{align*}
  
For the furthermore part, Lemma \ref{lem:full_proof_of_3_properties_true_for_z} implies that
 \begin{align*}
    \int_0^T | z(t) |^2 \mathrm{d} t \geq (1-0.3) \int_{-\infty}^{+\infty} |z(t) |^2 \mathrm{d} t.
\end{align*}
Hence, by Definition~\ref{def:one_cluster}, $z(t)$ is a $(0.3,\Delta)$-one-cluster.
\end{proof}

\section{Energy Bound for Filtered Fourier Sparse Signals}
\label{sec:struct_FS}

In this section, we prove an energy bound for the filtered signals $(x\cdot H)*G_{\sigma,b}^{(j)}$, which upper bounds the magnitude of any such signal at a point $t$ by its energy in the time duration $[0,T]$. We first prove an energy bound for untruncated ideally filtered signals (see Section~\ref{sec:struct_FS:untruncated}). Then, we prove an energy bound for filtered signals (see Section~\ref{sec:struct_FS:filtered}). In addition, we prove a technical claim (see Section~\ref{sec:struct_FS:technical_claim}).

\subsection{Energy bound for untruncated ideally filtered signals}\label{sec:struct_FS:untruncated}

In Section~\ref{sec:approx_f_SFS}, we show that the ideally filtered signal $(x\cdot H)*I_{\sigma,b}^{(j)}$, where $I_{\sigma,b}^{(j)}$ defined as Eq.~\eqref{eq:def_ideal_filter} is the ideal filter, is close to the true filtered signal. Here, we further simplify the signal by ignoring the truncation filter $H(t)$, and prove an energy bound for the signals of the form $(x*I_{\sigma,b}^{(j)})(t)$:

\begin{lemma}\label{lem:condition_number_pre}

Let $H$ be defined as in Definition~\ref{def:effect_H_k_sparse}, ${G}^{(j)}_{\sigma,b}$ be defined as in Definition \ref{def:G_j_sigma_b} and the corresponding ideal filter $I=I_{\sigma,b}^{(b)}$ be defined as in Eq.~\eqref{eq:def_ideal_filter}. 
Let $D(t):=\mathrm{Uniform}([-1, 1])$. 

For any $x\in {\cal F}_{k,F}$, we have that with probability $0.6$, for any $t\in(-1, 1)$ 
\begin{align*}
 | (x*I)(t) |^2  \lesssim  {\min} \Big\{\frac{k }{1-|t|}, k^2\Big\} \cdot  \| (x*I)(t) \|_D^2 
\end{align*}
\end{lemma}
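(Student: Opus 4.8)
The plan is to exploit the simple but crucial fact that passing a Fourier-sparse signal through the \emph{ideal} filter keeps it Fourier-sparse. Concretely, write $x(t)=\sum_{l=1}^k v_l e^{2\pi\i f_l t}$, so that $\wh{x}(f)=\sum_{l=1}^k v_l\,\delta_{f_l}(f)$. Since $\wh{I}=\wh{I}_{\sigma,b}^{(j)}$ is $\{0,1\}$-valued by Eq.~\eqref{eq:def_ideal_filter}, we have $\widehat{x*I}(f)=\wh{x}(f)\cdot\wh{I}(f)=\sum_{l:\,\wh{I}(f_l)=1} v_l\,\delta_{f_l}(f)$, hence
\begin{align*}
    (x*I)(t)=\sum_{l:\,\wh{I}(f_l)=1} v_l e^{2\pi\i f_l t},
\end{align*}
which is a sum of at most $k$ complex exponentials with frequencies in $[-F,F]$; that is, $(x*I)\in {\cal F}_{k,F}$, and this holds for \emph{every} choice of the hashing parameters $(\sigma,b)$.

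Once the target signal $y:=x*I$ is known to be $k$-Fourier-sparse, I would simply invoke the two energy bounds of Section~\ref{sec:energy_bound}. Theorem~\ref{thm:bound_k_sparse_FT_x_middle} applied to $y$ gives directly $|y(t)|^2\lesssim \frac{k}{1-|t|}\,\|y\|_D^2$ for all $t\in(-1,1)$. For the $k^2$ bound, I would transport Theorem~\ref{thm:energy_bound} from the interval $[0,T]$ to the symmetric interval $[-1,1]$: since ${\cal F}_{k,F}$ is closed under time translation (a shift only multiplies the coefficients $v_l$ by unit-modulus phases) and the constant $k^2$ is dimensionless, applying Theorem~\ref{thm:energy_bound} with $T=2$ to the shifted signal $s\mapsto y(s-1)$ yields $|y(t)|^2\lesssim k^2\,\|y\|_D^2$ for all $t\in[-1,1]$. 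Taking the minimum of the two estimates gives exactly
\begin{align*}
    |(x*I)(t)|^2 \lesssim \min\Big\{\tfrac{k}{1-|t|},\,k^2\Big\}\cdot \|(x*I)(t)\|_D^2 .
\end{align*}

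Finally, regarding the probability statement: the argument above uses no randomness, so the inequality in fact holds for every realization of $(\sigma,b)$, and in particular with probability at least $0.6$ as claimed; the slack is kept only so that this lemma can later be combined, without bookkeeping, with the high-probability events governing the ideal filter (e.g.\ the commuting identity of Lemma~\ref{lem:xIH2xHI} and the approximation bounds of Lemma~\ref{lem:xHG_sub_XHI_is_small} and Lemma~\ref{lem:xHI_sub_XHG_T_norm_is_small}). The main ``obstacle'' is thus not analytic but conceptual: one must notice that the ideal filter, unlike $G^{(j)}_{\sigma,b}$, does not smear the spectrum and therefore preserves Fourier-sparsity; after that, the statement is an immediate corollary of the known energy bounds, with the only minor technical point being the harmless rescaling/translation of Theorem~\ref{thm:energy_bound} to a symmetric interval.
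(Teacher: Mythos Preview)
Your proposal is correct and follows essentially the same approach as the paper: observe that the ideal filter preserves $k$-Fourier-sparsity (since $\wh{I}$ is $\{0,1\}$-valued), then apply the two known energy bounds (Theorems~\ref{thm:energy_bound} and~\ref{thm:bound_k_sparse_FT_x_middle}) and take their minimum. Your write-up is in fact slightly more explicit than the paper's, spelling out the Fourier computation and the translation needed to apply Theorem~\ref{thm:energy_bound} on the symmetric interval, and your remark that the probability $0.6$ is slack (the bound holds deterministically) is accurate.
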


\begin{proof}
Since $\wh{(x*I)}(f) =\wh{x}\cdot \wh{I}(f)$ and $x$ is $k$-Fourier-sparse, we know that $(x*\wh{I})(t)$ is also a $k$-Fourier-sparse signal.

On the one hand, by the $k$-Fourier-sparse signal's location-dependent energy bound (Theorem \ref{thm:bound_k_sparse_FT_x_middle}), we have
\begin{align}
   | (x*I)(t) |^2 
   \lesssim &~ \frac{k }{1-|t|} \| (x*I)(t) \|_D^2 
   \label{eq:xIH_k_1t_energy_bound}
\end{align}

On the other hand, by the location-independent energy bound (Theorem \ref{thm:energy_bound}), we have that
\begin{align}
   | (x*I)(t) |^2
   \lesssim  k^2 \| (x*I)(t) \|_D^2  \label{eq:xIH_k2_energy_bound}
\end{align}

Combine Eq.~\eqref{eq:xIH_k_1t_energy_bound} and Eq.~\eqref{eq:xIH_k2_energy_bound} together, we prove the lemma:
\begin{align*}
 | (x*I)(t) |^2  \lesssim  {\min} \Big\{\frac{k }{1-|t|}, k^2\Big\} \cdot  \| (x*I)(t) \|_D^2 .
\end{align*}

\end{proof}

\subsection{Energy bound for filtered signals}\label{sec:struct_FS:filtered}

Based on Lemma~\ref{lem:condition_number_pre}, we can relate the magnitude of the filtered signal with its own energy plus the original Fourier-sparse signal's energy.

\begin{lemma}\label{lem:condition_number_xI}
Let $H$ be defined as in Definition~\ref{def:effect_H_k_sparse}, ${G}^{(j)}_{\sigma,b}$ be defined as in Definition \ref{def:G_j_sigma_b} and the corresponding ideal filter $I=I_{\sigma,b}^{(b)}$ be defined as in Eq.~\eqref{eq:def_ideal_filter}.

For any $x\in {\cal F}_{k,F}$, $j\in [B]$, and $(\sigma,b)$ such that Large Offset event does not happen, let $z(t)=(x\cdot H)*{G}^{(j)}_{\sigma,b}(t)$. It holds that:
\begin{align*}
     {|z(t)|^2} \lesssim {\min}\Big\{ \frac{k \cdot H(t)}{1-|2t/T-1|} , k^2\Big\} \cdot {\|z(t)\|_T^2} +\delta_1 \|x(t)\|_T^2~~~\forall t\in (-1,1).
\end{align*}
\end{lemma}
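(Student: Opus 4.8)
The plan is to reduce the filtered signal $z(t)=(x\cdot H)*G^{(j)}_{\sigma,b}(t)$ to the \emph{ideally} filtered signal $(x\cdot H)*I(t)$, for which an energy bound is almost immediate, and then to pay for the replacement using the $\delta_1$-error estimates of Section~\ref{sec:approx_f_SFS}. First I would write, by the triangle inequality, $|z(t)|^2\lesssim|(x\cdot H)*I(t)|^2+|z(t)-(x\cdot H)*I(t)|^2$. Since the Large Offset event does not happen, Lemma~\ref{lem:xIH2xHI} lets me commute the filters, $(x\cdot H)*I(t)=(x*I)(t)\cdot H(t)$; writing $\bar z:=x*I$, its Fourier transform $\wh{\bar z}=\wh x\cdot\wh I$ is supported on a subset of $\supp(\wh x)$, so $\bar z$ is again $k$-Fourier-sparse and $(x\cdot H)*I(t)=\bar z(t)\,H(t)$.

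Next I would bound the main term $|\bar z(t)H(t)|^2$ via the $k$-Fourier-sparse energy bounds. After the affine rescaling $s=2t/T-1$ that identifies $[0,T]$ with $[-1,1]$ — under which the $T$-norm becomes the $\mathrm{Uniform}([-1,1])$-norm and the distance to the boundary becomes $1-|2t/T-1|$ — Lemma~\ref{lem:condition_number_pre} (equivalently, Theorems~\ref{thm:energy_bound} and~\ref{thm:bound_k_sparse_FT_x_middle}) gives $|\bar z(t)|^2\lesssim\min\{k/(1-|2t/T-1|),\,k^2\}\cdot\|\bar z(t)\|_T^2$ for all $t\in(0,T)$. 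Multiplying by $|H(t)|^2$ and using $|H(t)|\le 1.01$ (Lemma~\ref{lem:property_of_filter_H} Property~\RN{1}) together with $H(t)\ge 0$ for $t\in(0,T)$, so that $|H(t)|^2\lesssim\min\{H(t),1\}$, yields $|(x\cdot H)*I(t)|^2\lesssim\min\{kH(t)/(1-|2t/T-1|),\,k^2\}\cdot\|\bar z(t)\|_T^2$ (here I use that $\min\{a,b\}\cdot\min\{c,d\}\le\min\{ac,bd\}$ for nonnegative quantities).

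It remains to trade $\|\bar z\|_T$ for $\|z\|_T$ and to absorb the two error terms. For the first, Lemma~\ref{lem:xHI_sub_XHG_T_norm_is_small} gives $\|\bar z(t)H(t)-z(t)\|_T\lesssim\delta_1\|x(t)\|_T$, hence $\|\bar z(t)H(t)\|_T^2\lesssim\|z(t)\|_T^2+\delta_1^2\|x(t)\|_T^2$; and since $\bar z$ is $k$-Fourier-sparse, Lemma~\ref{lem:property_of_filter_H} Property~\RN{5} (after the same rescaling) gives $\|\bar z(t)H(t)\|_T^2\ge 0.9\,\|\bar z(t)\|_T^2$, so $\|\bar z(t)\|_T^2\lesssim\|z(t)\|_T^2+\delta_1\|x(t)\|_T^2$ using $\delta_1\le 1$. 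For the point-wise error, Lemma~\ref{lem:xHG_sub_XHI_is_small} gives $|z(t)-(x\cdot H)*I(t)|^2\lesssim\delta_1^2 T|S|\cdot\|x(t)\|_T^2$ with $|S|\le\Delta=k|\supp(\wh H)|$; since $\delta_1$ enters $\Delta$ only through $C=O(\log(1/\delta_1))$, I may take $\delta_1$ small enough that $\delta_1 T|S|\le 1$, so this error is $\lesssim\delta_1\|x(t)\|_T^2$. Substituting the three estimates into the triangle-inequality split proves the lemma.

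The hard part is not any individual step but the bookkeeping of the approximation errors: the $T$-norm error from replacing $\wh G^{(j)}_{\sigma,b}$ by the ideal filter feeds back multiplicatively into the main term (through $\|\bar z\|_T^2\lesssim\|z\|_T^2+\delta_1\|x\|_T^2$), and one must verify that the crude factor $T|S|$ in the point-wise bound of Lemma~\ref{lem:xHG_sub_XHI_is_small} is rendered harmless by the freedom in the choice of $\delta_1$. A secondary point is keeping the $[-1,1]$-normalized energy bounds (Theorems~\ref{thm:energy_bound} and~\ref{thm:bound_k_sparse_FT_x_middle}) consistent with the $[0,T]$-normalized $T$-norm through the rescaling $t\mapsto 2t/T-1$, which is exactly what produces the factor $1-|2t/T-1|$ appearing in the statement.
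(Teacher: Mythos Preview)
Your approach is essentially the paper's: split $|z(t)|^2$ via the triangle inequality, commute $(x\cdot H)*I=(x*I)\cdot H$ using Lemma~\ref{lem:xIH2xHI}, apply the Fourier-sparse energy bound of Lemma~\ref{lem:condition_number_pre} to $\bar z=x*I$, convert $\|\bar z\|_T$ back to $\|z\|_T$, and control the pointwise replacement error by Lemma~\ref{lem:xHG_sub_XHI_is_small}. The paper packages your conversion step ``$\|\bar z\|_T^2\lesssim\|z\|_T^2+\delta_1^2\|x\|_T^2$'' as Claim~\ref{clm:xI2xHG}, whose proof is exactly your Property~\RN{5} + Lemma~\ref{lem:xHI_sub_XHG_T_norm_is_small} argument, so there is no real difference in route.

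One small bookkeeping slip: you loosen $\delta_1^2$ to $\delta_1$ \emph{before} multiplying by the $\min\{\cdot\}$ factor. After that multiplication the cross term becomes $\min\{\cdot\}\cdot\delta_1\|x\|_T^2\le k^2\delta_1\|x\|_T^2$, which is not $\lesssim\delta_1\|x\|_T^2$. Keep $\delta_1^2$ through the substitution and only then use $k^2\delta_1\le 1$ to get $k^2\delta_1^2\|x\|_T^2\le\delta_1\|x\|_T^2$; this is precisely the step the paper records as ``$k^2\delta_1\le 1$'' in Eq.~\eqref{eq:bound_xHI2xHG}. With that correction your sketch matches the paper's proof.
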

\begin{proof}
Let $S:=\supp(\wh{x}*\wh{H})$ be defined as the support set of $\wh{x}*\wh{H}$. 
Then $|S|\leq \Delta$.

First, by the ideally untruncated filtered signal's energy bound (Lemma \ref{lem:condition_number_pre}), we have
\begin{align}
 | (x*I)(t)\cdot H(t) |^2  \lesssim &~ H^2 (t)\cdot {\min} \Big\{\frac{k  }{1-|2t/T-1|}, k^2\Big\} \cdot  \| (x*I)(t) \|_T^2\notag \\
 \lesssim&~  {\min} \Big\{\frac{k \cdot H(t)}{1-|2t/T-1|}, k^2\Big\} \cdot  \| (x*I)(t) \|_T^2 ,
 \label{eq:xIH_k2_k_1t_energy_bound}
\end{align}
where the second step follows from $H(t)\lesssim 1$ (Lemma \ref{lem:property_of_filter_H} Property \RN{1}, \RN{2}).

Then, we bound the magnitude of the ideal filtered signal as follows:
\begin{align}
  |(x\cdot H)*I(t)|^2 =&~   |(x*I)(t)\cdot H(t)|^2 \notag \\
  \lesssim&~ {\min} \Big\{\frac{k \cdot H(t)}{1-|2t/T-1|}, k^2\Big\} \cdot  \| (x*I)(t) \|_T^2 \notag \\
\lesssim&~ {\min} \Big\{\frac{k \cdot H(t)}{1-|2t/T-1|}, k^2\Big\} \cdot (\|(x\cdot H)*G^{(j)}_{\sigma, b}(t)\|_T^2 + \delta_1^2 \|x(t)\|_T^2)\notag \\
\lesssim&~ {\min} \Big\{\frac{k \cdot H(t)}{1-|2t/T-1|}, k^2\Big\} \cdot \|(x\cdot H)*G^{(j)}_{\sigma, b}(t)\|_T^2 + \delta_1  \|x(t)\|_T^2
\label{eq:bound_xHI2xHG}
\end{align}
where the first step follows from Lemma \ref{lem:xIH2xHI}, the second step follows from Eq.~\eqref{eq:xIH_k2_k_1t_energy_bound}, the third step follows from Claim \ref{clm:xI2xHG},
the forth step follows from $k^2\delta_1 \leq 1 $. 

Next, we consider the difference between the signals filtered by ${G}^{(j)}_{\sigma,b}(t)$ and  $I(t)$:
\begin{align}
    |(x\cdot H)*{G}^{(j)}_{\sigma,b}(t)-(x\cdot H)*I(t)|^2 \leq &~ \delta_1^2{T|S|} \cdot \|x(t)\|_T^2 \notag \\
    \leq &~ \delta_1  \cdot \|x(t)\|_T^2 \label{eq:bound_xHG_sub_xHI_to_delta_x_Tnorm}
\end{align}
where the first step follows from Lemma \ref{lem:xHG_sub_XHI_is_small}, the second step follows from $ \delta_1 T|S| \leq 1 $.

Finally, we have that
\begin{align*}
|(x\cdot H)*{G}^{(j)}_{\sigma,b}(t)|^2 \leq&~ 2|(x\cdot H)*I(t)|^2+2|(x\cdot H)*{G}^{(j)}_{\sigma,b}(t)-(x\cdot H)*I(t)|^2 \notag \\
\lesssim&~ |(x\cdot H)*I(t)|^2+\delta_1 \|x(t)\|_T^2 \notag \\
\lesssim  &~ {\min} \Big\{\frac{k \cdot H(t)}{1-|2t/T-1|}, k^2\Big\} \cdot \|(x\cdot H)*\wh{G}^{(j)}_{\sigma, b}(t)\|_T^2  +\delta_1 \|x(t)\|_T^2, 
\end{align*}
where the first step follows from $(a+b)^2\leq 2a^2+2b^2$, the second step follows from Eq.~\eqref{eq:bound_xHG_sub_xHI_to_delta_x_Tnorm}, the third step follows from Eq.~\eqref{eq:bound_xHI2xHG}.

The lemma is then proved.

\end{proof}

The energy bound in Lemma~\ref{lem:condition_number_xI} not only depends on $\|z(t)\|_T$, but also on $\|x(t)\|_T$. The following lemma show that assuming the filtered signal contains a heavy frequency, $\|x(t)\|_T$ can be upper bounded by $\|z(t)\|_T$.

\begin{lemma}\label{lem:xHG_large_than_exp_small}

Given $k\in \Z_+, F\in\R_+$. Let $H$ be defined as in Definition~\ref{def:effect_H_k_sparse}, ${G}^{(j)}_{\sigma,b}$ be defined as in Definition \ref{def:G_j_sigma_b}. Let $x\in {\cal F}_{k,F}$ be any $k$-Fourier sparse signal.
For $j\in [B]$ such that there exists a $f^*$ satisfying: $j=h_{\sigma, b}(f^*)$ and
\begin{align}
    \int_{f^*-\Delta_h}^{f^*+\Delta_h} | \widehat{x\cdot H}(f) |^2 \mathrm{d} f \geq  T\N^2/k, 
    \label{eq:assumption_heavy_freq_1}
\end{align}
where $\N^2\geq \delta \|x\|_T^2$ and $\Delta_h=|\supp(\wh{H})|$.

For any $(\sigma, b)$ that Large Offset event does not happen, we have that 
\begin{align*}
      \|(x\cdot H)*G^{(j)}_{\sigma, b}(t)  \|^2_T \gtrsim \frac{ \delta \|x\|_T^2}{k}.
\end{align*}
\end{lemma}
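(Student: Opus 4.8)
The plan is to lower-bound $\|z(t)\|_T^2$, where $z(t):=(x\cdot H)*G^{(j)}_{\sigma,b}(t)$, by first establishing a lower bound on the full-line energy $\int_{-\infty}^{+\infty}|z(t)|^2\d t$ in the frequency domain, and then transferring it to the observation window $[0,T]$ via the time-domain energy preservation already proved in Lemma~\ref{lem:full_proof_of_3_properties_true_for_z}. First I would check that the hypotheses of that lemma hold: $f^*$ satisfies the heavy-frequency bound Eq.~\eqref{eq:assumption_heavy_freq_1} (which is exactly the hypothesis of Lemma~\ref{lem:full_proof_of_3_properties_true_for_z} with its ground-truth signal taken to be our $k$-Fourier-sparse $x$), and the Large Offset event does not happen by assumption; since the proof of that lemma uses only $k$-Fourier-sparsity, it applies verbatim and yields $\int_{-\infty}^{+\infty}|z(t)|^2\d t\le 1.35\int_0^T|z(t)|^2\d t$.

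Next I would produce the frequency-side lower bound. By Parseval's theorem, $\int_{-\infty}^{+\infty}|z(t)|^2\d t=\int_{-\infty}^{+\infty}|(\wh{x}*\wh{H})(f)\cdot\wh{G}^{(j)}_{\sigma,b}(f)|^2\d f$, and I would restrict this integral to the heavy band $[f^*-\Delta_h,f^*+\Delta_h]$. Since $j=h_{\sigma,b}(f^*)$ and the Large Offset event does not happen, Lemma~\ref{lem:lo_and_fstar2Glarge} (together with Lemma~\ref{lem:property_of_filter_G} Property~\RN{1}) gives $\wh{G}^{(j)}_{\sigma,b}(f)\in[1-\delta/k,1]$ for every $f\in\supp(\wh{x}*\wh{H})$, in particular on that band. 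Hence
\begin{align*}
\int_{-\infty}^{+\infty}|z(t)|^2\d t\ge \int_{f^*-\Delta_h}^{f^*+\Delta_h}\bigl|\wh{G}^{(j)}_{\sigma,b}(f)\bigr|^2\,\bigl|(\wh{x}*\wh{H})(f)\bigr|^2\d f\gtrsim \int_{f^*-\Delta_h}^{f^*+\Delta_h}\bigl|(\wh{x}*\wh{H})(f)\bigr|^2\d f\ge \frac{T\N^2}{k}\ge \frac{T\delta\|x\|_T^2}{k},
\end{align*}
where the last two inequalities use Eq.~\eqref{eq:assumption_heavy_freq_1} and $\N^2\ge \delta\|x\|_T^2$; this is precisely the chain already recorded in Eq.~\eqref{eq:xstarTnorm_xstarHGinfinity} in the proof of Lemma~\ref{lem:full_proof_of_3_properties_true_for_z}.

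Combining the two estimates, $\int_0^T|z(t)|^2\d t\ge \tfrac{1}{1.35}\int_{-\infty}^{+\infty}|z(t)|^2\d t\gtrsim \tfrac{T\delta\|x\|_T^2}{k}$, and dividing by $T$ gives $\|(x\cdot H)*G^{(j)}_{\sigma,b}(t)\|_T^2\gtrsim \delta\|x\|_T^2/k$, as claimed. I expect the only nonroutine ingredient to be the pointwise filter lower bound $\wh{G}^{(j)}_{\sigma,b}(f)\gtrsim 1$ on $\supp(\wh{x}*\wh{H})$ — this is exactly where the no-Large-Offset hypothesis is indispensable, and it is already handled by Lemma~\ref{lem:lo_and_fstar2Glarge}; everything else is Parseval's identity plus bookkeeping. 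A secondary point to verify carefully is that Lemma~\ref{lem:full_proof_of_3_properties_true_for_z} and Eq.~\eqref{eq:xstarTnorm_xstarHGinfinity} are legitimately invoked with the $k$-Fourier-sparse $x$ (not an observed signal $x^*+g$) playing the role of the sparse signal, which is consistent since $x$ is assumed $k$-Fourier-sparse in the statement.
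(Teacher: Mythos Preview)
Your proposal is correct and follows essentially the same approach as the paper: both use time-domain energy preservation to pass from $\int_0^T$ to $\int_{-\infty}^{+\infty}$, then apply Parseval, restrict to the heavy band $[f^*-\Delta_h,f^*+\Delta_h]$, invoke Lemma~\ref{lem:lo_and_fstar2Glarge} for the pointwise bound $\wh{G}^{(j)}_{\sigma,b}(f)\ge 1-\delta/k$ on $\supp(\wh{x}*\wh{H})$, and finish with the heavy-frequency hypothesis and $\N^2\ge\delta\|x\|_T^2$. The only cosmetic difference is that the paper cites Lemma~\ref{lem:z_satisfies_two_properties} for the concentration step while you cite Lemma~\ref{lem:full_proof_of_3_properties_true_for_z} directly; your citation is in fact more precise, since Lemma~\ref{lem:z_satisfies_two_properties} additionally assumes well-isolation, which is not a hypothesis of Lemma~\ref{lem:xHG_large_than_exp_small}.
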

\begin{proof}

We have that
\begin{align*}
 T\|(x\cdot H)*G^{(j)}_{\sigma, b}(t)  \|^2_T =&~     \int_0^T |(x\cdot H)*G^{(j)}_{\sigma, b}(t) |^2 \d t \notag\\
 \gtrsim &~ \int_{-\infty}^\infty |(x\cdot H)*G^{(j)}_{\sigma, b}(t) |^2 \d t \notag\\
 = &~ \int_{-\infty}^\infty |(\wh{x}*\wh{H})\cdot \wh{G}^{(j)}_{\sigma, b}(f) |^2 \d f \notag\\
 \geq  &~ \int_{f^*-\Delta_h}^{f^*+\Delta_h} |(\wh{x}*\wh{H})\cdot \wh{G}^{(j)}_{\sigma, b}(f) |^2 \d f \notag\\
 \gtrsim &~ \int_{f^*-\Delta_h}^{f^*+\Delta_h} |(\wh{x}*\wh{H})(f) |^2 \d f \notag\\
 \geq &~ \frac{T \delta \|x\|_T^2}{k}
\end{align*}
where the first step follows from the definition of norm, the second step follows from Lemma \ref{lem:z_satisfies_two_properties}, the third step follows from Parseval's theorem, the forth step is straight forward,
the fifth step follows from Lemma \ref{lem:lo_and_fstar2Glarge}, 
the sixth step follows from Eq.~\eqref{eq:assumption_heavy_freq_1}.

\end{proof}
 
Lemma~\ref{lem:condition_number_xI} and Lemma~\ref{lem:xHG_large_than_exp_small} implies the following energy bound:
\begin{corollary}[Energy bound for filtered signals]\label{cor:condition_number_z}

Given $k\in \mathbb{N}$ and $F\in \R_+$. Let $x\in {\cal F}_{k,F}$.
Let $H$ be defined as in Definition~\ref{def:effect_H_k_sparse}, ${G}^{(j)}_{\sigma,b}$ be defined as in Definition \ref{def:G_j_sigma_b} with $(\sigma,b)$ such that Large Offset event does not happen.

For any $j\in [B]$, suppose there exists an $f^*$ with $j=h_{\sigma, b}(f^*)$ satisfying: 
\begin{align*}
    \int_{f^*-\Delta}^{f^*+\Delta} | \widehat{x\cdot H}(f) |^2 \mathrm{d} f \geq  T\N^2/k, 
\end{align*}
where $\N^2\geq \delta \|x\|_T^2$.
Then, for $z(t)=(x\cdot H)*{G}^{(j)}_{\sigma,b}(t)$, it holds that: 
\begin{align*}
     {|z(t)|^2} \lesssim {\min}\Big\{ \frac{k \cdot H(t)+\delta}{1-|2t/T-1|}, k^2 \Big\}\cdot {\|z(t)\|_D^2}~~~\forall t\in (0,T).
\end{align*}
\end{corollary}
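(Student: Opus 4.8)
The plan is to combine the two preceding lemmas, treating $\delta_1\|x(t)\|_T^2$ as a lower-order error that the heavy-frequency hypothesis lets us absorb. Since the Large Offset event does not happen, Lemma~\ref{lem:condition_number_xI} applies directly to $z(t)=(x\cdot H)*{G}^{(j)}_{\sigma,b}(t)$ and gives, for every $t\in(0,T)$,
\[
|z(t)|^2\ \lesssim\ \min\Big\{\frac{k\cdot H(t)}{1-|2t/T-1|},\ k^2\Big\}\cdot\|z(t)\|_T^2\ +\ \delta_1\,\|x(t)\|_T^2 .
\]
So the whole task reduces to showing that $\delta_1\|x(t)\|_T^2$ is dominated by the right-hand side of the corollary; in particular it suffices to show $\delta_1\|x(t)\|_T^2\lesssim\delta\,\|z(t)\|_T^2$.

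This is exactly where the heavy-frequency hypothesis enters. The hypothesis of the corollary is the one of Lemma~\ref{lem:xHG_large_than_exp_small} (up to replacing the integration radius $\Delta_h$ by $\Delta\ge\Delta_h$), and the proof of that lemma only uses that $\widehat{G}^{(j)}_{\sigma,b}(f)\in[1-\delta/k,1]$ on $\supp(\widehat{x\cdot H})$ (Lemma~\ref{lem:lo_and_fstar2Glarge}), hence it goes through verbatim with $\Delta$ in place of $\Delta_h$. Therefore from $\int_{f^*-\Delta}^{f^*+\Delta}|\widehat{x\cdot H}(f)|^2\,\mathrm{d}f\ge T\N^2/k\ge T\delta\|x\|_T^2/k$ we get $\|z(t)\|_T^2\gtrsim \tfrac{\delta}{k}\,\|x(t)\|_T^2$, i.e. $\|x(t)\|_T^2\lesssim \tfrac{k}{\delta}\|z(t)\|_T^2$. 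Substituting, $\delta_1\|x(t)\|_T^2\lesssim (\delta_1 k/\delta)\,\|z(t)\|_T^2$, and since $\delta_1$ is a free parameter of the filter construction that affects the sample/time complexity only through $\log(1/\delta_1)$ and the support size of $\widehat H$, we may assume it is taken small enough that $\delta_1 k/\delta\le\delta$; hence $\delta_1\|x(t)\|_T^2\lesssim\delta\,\|z(t)\|_T^2$.

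To finish I would split into two cases on $A(t):=\frac{kH(t)+\delta}{1-|2t/T-1|}$. If $A(t)\le k^2$, then $\frac{kH(t)}{1-|2t/T-1|}\le A(t)\le k^2$, so the $\min$ above equals $\frac{kH(t)}{1-|2t/T-1|}$, and using $1-|2t/T-1|\le 1$ we get $|z(t)|^2\lesssim \frac{kH(t)}{1-|2t/T-1|}\|z(t)\|_T^2+\delta\|z(t)\|_T^2\le \frac{kH(t)+\delta}{1-|2t/T-1|}\|z(t)\|_T^2=\min\{A(t),k^2\}\|z(t)\|_T^2$. If instead $A(t)>k^2$, then $\min\{A(t),k^2\}=k^2$, and from the displayed bound together with $\delta_1\|x(t)\|_T^2\lesssim\delta\|z(t)\|_T^2\le\|z(t)\|_T^2$ we obtain $|z(t)|^2\lesssim k^2\|z(t)\|_T^2=\min\{A(t),k^2\}\|z(t)\|_T^2$. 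In either case $|z(t)|^2\lesssim\min\{\frac{kH(t)+\delta}{1-|2t/T-1|},k^2\}\|z(t)\|_T^2$, which is the claimed bound (with $\|z\|_D$ denoting the $T$-norm, $D=\mathrm{Uniform}([0,T])$). The only real subtlety — and the place I expect to have to be careful — is the parameter bookkeeping: choosing $\delta_1$ small enough relative to $\delta$ and $k$ so that the error is absorbed both into the $\delta$-term and into the $k^2$-term (and recording the $\Delta$-vs-$\Delta_h$ remark when invoking Lemma~\ref{lem:xHG_large_than_exp_small}); the rest is a mechanical consequence of Lemmas~\ref{lem:condition_number_xI} and~\ref{lem:xHG_large_than_exp_small}.
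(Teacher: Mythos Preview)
Your proposal is correct and follows essentially the same approach as the paper: apply Lemma~\ref{lem:condition_number_xI}, use Lemma~\ref{lem:xHG_large_than_exp_small} together with the parameter choice $\delta_1\le\delta^2/k$ (your condition $\delta_1 k/\delta\le\delta$ is exactly this) to absorb the $\delta_1\|x\|_T^2$ term as $\delta\|z\|_T^2$, and then merge into the $\min$. The paper does the merge in one line labeled ``straightforward'' while you spell out the case split on $A(t)\lessgtr k^2$, and your $\Delta$-vs-$\Delta_h$ remark about Lemma~\ref{lem:xHG_large_than_exp_small} is a valid observation the paper glosses over.
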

\begin{proof}
We have that
\begin{align*}
|z(t)|^2 \lesssim  &~ {\min} \Big\{\frac{k \cdot H(t)}{1-|2t/T-1|}, k^2\Big\} \cdot \|z(t)\|_T^2  +\delta_1 \|x(t)\|_T^2 \notag \\
\lesssim  &~ {\min} \Big\{\frac{k \cdot H(t)}{1-|2t/T-1|}, k^2\Big\} \cdot \|z(t)\|_T^2  +\delta^2 k^{-1} \|x(t)\|_T^2 \notag \\
\lesssim  &~ {\min} \Big\{\frac{k \cdot H(t)}{1-|2t/T-1|}, k^2\Big\} \cdot \|z(t)\|_T^2  + \delta \|(x\cdot H)*G^{(j)}_{\sigma, b}(t)\|_T^2 \notag \\
\lesssim  &~ {\min} \Big\{\frac{k \cdot H(t)+\delta}{1-|2t/T-1|}, k^2\Big\} \cdot \|z(t)\|_T^2
\end{align*}
where the first step follows from Lemma \ref{lem:condition_number_xI}, the second step follows from $\delta_1 \leq \delta^2 k^{-1}$, the third step follows from Lemma \ref{lem:xHG_large_than_exp_small}, the forth step is straight forward.

\end{proof}

\subsection{Technical claim}\label{sec:struct_FS:technical_claim}

\begin{claim}\label{clm:xI2xHG}
Given $k\in \Z_+, F\in\R_+$. Let $\delta_1$ be defined as the $\delta$ of Lemma \ref{lem:property_of_filter_G}.%
Let $H$ be defined as in Definition~\ref{def:effect_H_k_sparse}, ${G}^{(j)}_{\sigma,b}$ be defined as in Definition \ref{def:G_j_sigma_b}, and $I=I_{\sigma,b}^{(j)}$ be the ideal filter defined by Eq.~\eqref{eq:def_ideal_filter}. 

Then, for any $x\in {\cal F}_{k,F}$ and $j\in [B]$,  with probability $0.6$ over $(\sigma,b)$, we have that %
\begin{align*}
\|(x*I)(t) \|_T^2 \lesssim \|(x\cdot H)*\wh{G}^{(j)}_{\sigma, b}(t)\|_T^2 + \delta_1^2 \|x(t)\|_T^2. %
\end{align*}
\end{claim}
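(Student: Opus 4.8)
The plan is to reduce the claim to a short chain of three facts that are already available: the filter-swap identity of Lemma~\ref{lem:xIH2xHI}, namely $(x\cdot H)*I = (x*I)\cdot H$; the window-energy preservation of the time filter $H$ on Fourier-sparse signals (Lemma~\ref{lem:property_of_filter_H}, Property~\RN{5}, via Theorem~\ref{thm:HwithScaling}); and the $T$-norm closeness of the $G^{(j)}_{\sigma,b}$-filtered and $I$-filtered signals (Lemma~\ref{lem:xHI_sub_XHG_T_norm_is_small}). First I would observe that $y(t):=(x*I)(t)$ is again a $k$-Fourier-sparse signal in ${\cal F}_{k,F}$: since $\wh{x}(f)=\sum_{j=1}^{k} v_j\,\delta_{f_j}(f)$, we have $\wh{y}(f)=\wh{x}(f)\wh{I}(f)=\sum_{j=1}^{k} v_j\,\wh{I}(f_j)\,\delta_{f_j}(f)$, a combination of at most $k$ frequencies, all inside $[-F,F]$. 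Consequently Theorem~\ref{thm:energy_bound} and Theorem~\ref{thm:bound_growth_around1} supply the energy hypotheses needed to apply Lemma~\ref{lem:property_of_filter_H} Property~\RN{5} to $y$, giving $\|y(t)\|_T^2 \lesssim \|y(t)\cdot H(t)\|_T^2$; this is the step where replacing $y$ by $y\cdot H$ loses only a constant factor of its energy on the observation window.

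Next I would condition on the two probability-$0.9$ events: the event of Lemma~\ref{lem:xIH2xHI}, on which $(x\cdot H)*I(t)=(x*I)(t)\cdot H(t)=y(t)H(t)$ for all $t$ and all $j\in[B]$; and the event of Lemma~\ref{lem:xHI_sub_XHG_T_norm_is_small}, on which $\|(x\cdot H)*I(t)-(x\cdot H)*G^{(j)}_{\sigma,b}(t)\|_T\lesssim \delta_1\|x(t)\|_T$ for all $j\in[B]$. By a union bound both hold simultaneously with probability at least $0.8\ge 0.6$, which is enough. Under these events I would chain
\begin{align*}
\|(x*I)(t)\|_T^2 &= \|y(t)\|_T^2 \;\lesssim\; \|y(t)H(t)\|_T^2 \;=\; \|(x\cdot H)*I(t)\|_T^2 \\
&\lesssim\; \|(x\cdot H)*G^{(j)}_{\sigma,b}(t)\|_T^2 + \|(x\cdot H)*I(t)-(x\cdot H)*G^{(j)}_{\sigma,b}(t)\|_T^2 \\
&\lesssim\; \|(x\cdot H)*G^{(j)}_{\sigma,b}(t)\|_T^2 + \delta_1^2\,\|x(t)\|_T^2,
\end{align*}
where the second inequality combines the triangle inequality with $(a+b)^2\le 2a^2+2b^2$, and the last uses Lemma~\ref{lem:xHI_sub_XHG_T_norm_is_small}. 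Since $y=x*I$, this is exactly the asserted bound.

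I do not expect a genuine obstacle here; the statement is essentially a repackaging of earlier lemmas along the path ``$x*I\;\to\;(x*I)H=(x\cdot H)*I\;\to\;(x\cdot H)*G^{(j)}_{\sigma,b}$''. The two points that require care are: (a) confirming that $x*I\in{\cal F}_{k,F}$, so that the $H$-energy-preservation property is legitimately invoked; and (b) keeping the normalization consistent between the $[0,T]$-window $T$-norm and the $[-1,1]$-scaled statements of Lemma~\ref{lem:property_of_filter_H}/Theorem~\ref{thm:HwithScaling}, so that ``$\|y\|_T^2\lesssim\|yH\|_T^2$'' really does follow from Property~\RN{5}. The probability bookkeeping is immediate from the union bound.
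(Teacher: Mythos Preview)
Your proof is correct and follows essentially the same approach as the paper: both argue that $x*I$ is $k$-Fourier-sparse, apply Property~\RN{5} of $H$ to pass from $\|x*I\|_T$ to $\|(x*I)H\|_T$, swap via Lemma~\ref{lem:xIH2xHI} to get $\|(x\cdot H)*I\|_T$, and then use $(a+b)^2\le 2a^2+2b^2$ together with Lemma~\ref{lem:xHI_sub_XHG_T_norm_is_small} to reach the $G^{(j)}_{\sigma,b}$ term plus the $\delta_1^2\|x\|_T^2$ error. Your probability bookkeeping is slightly more explicit than the paper's (which simply conditions on the Large Offset event not happening), but the underlying event is the same.
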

\begin{proof}

We have that 
\begin{align*}
    \|(x*I)(t) \|_T^2 \lesssim&~ \|(x*I)(t)\cdot H\|_T^2\notag\\
    = &~ \|(x\cdot H)*I(t)\|_T^2\notag\\
    \leq &~ 2 \|(x\cdot H)*G^{(j)}_{\sigma, b}(t)\|_T^2 + 2 \| (x\cdot H)*G^{(j)}_{\sigma, b}(t)-(x\cdot H)*I(t)\|_T^2\notag\\
    \lesssim &~ \|(x\cdot H)*G^{(j)}_{\sigma, b}(t)\|_T^2 + \delta_1^2 \|x(t)\|_T^2 
\end{align*}
where the first step follows from $(x*I)(t)$ is a $k$-Fourier-sparse signal and Lemma \ref{lem:property_of_filter_H} Property \RN{5}, the second step follows from Lemma~\ref{lem:xIH2xHI} conditioning on Large Offset event not happening, the third step follows from $(a+b)^2\leq 2a^2+2b^2$, the forth step follows from Lemma \ref{lem:xHI_sub_XHG_T_norm_is_small}. 
\end{proof}

\section{Local-Test Signal}

\label{sec:twist_struct_FS}

Recall that the filtered signal in the $j$-th bin of the \textsc{HashToBins} procedure can be written as $z(t)=(x\cdot H)*G_{\sigma,b}^{(j)}(t)$. The next step of the frequency estimation algorithm is to extract a significant frequency from $z(t)$ by considering a so-called \emph{local-test signal}:
\begin{align}\label{eq:def_d_z}
    d_z(t):=z(t)e^{2\pi f_0\beta}-z(t+\beta),
\end{align}
where $f_0\in \supp(\wh{x}^*)$, and $j=h_{\sigma, b}(f_0)$, where $\beta\in\R_+$ is a parameter such that $\beta \leq O(1/\Delta)$ with $\Delta=O(k\cdot |\supp(\wh{H})|)$.

In this section, we will study some properties of $d_z(t)$ and its ideal versions (see Section~\ref{sec:twist_struct_FS:local} and Section~\ref{sec:twist_struct_FS:post}) and derive an energy bound for it (See Section~\ref{sec:twist_struct_FS:energy}).

\subsection{Ideal local-test signal}\label{sec:twist_struct_FS:local}

In previous section, we've shown that ideal filter $I_{\sigma,b}^{(j)}$ can be used to approximate $G_{\sigma,b}^{(j)}$ such that the ideally filtered signal is close to the true filtered signal. We will show that under the ideal filter approximation, the \emph{ideal local-test signal} is also close to the true local-test signal. More formally, we define the ideal filtered signal and the ideal local-test signal as follows: 
\begin{align}\label{eq:def_ideal_local_test_signal}
    z_I(t):= &~ (x\cdot H)*I(t),\notag\\
    d_{I,z}(t):= &~ z_I(t)e^{2\pi\i f_0 \beta}-z_I(t+\beta),
\end{align}
The following lemma bounds the point-wise distance between $d_z(t)$ and $d_{I,z}(t)$. 

\begin{lemma}\label{lem:z_diff_xHG_sub_XHI_is_small}
Let $\delta_1$ be defined as in Lemma \ref{lem:property_of_filter_G}. Let $H$ be defined as in Definition~\ref{def:effect_H_k_sparse}, ${G}^{(j)}_{\sigma,b}$ be defined as in Definition \ref{def:G_j_sigma_b} and $I=I_{\sigma,b}^{(j)}$ be the corresponding ideal filter as in Eq.~\eqref{eq:def_ideal_filter}. 
 
For any $x\in {\cal F}_{k,F}$ and $(\sigma,b)$ such that Large Offset event does not happen, for any $j\in [B]$, let $z(t)=(x\cdot H)*G_{\sigma,b}^{(j)}(t)$, $d_z(t)$ be defined as Eq.~\eqref{eq:def_d_z}, $z_I(t)$ and $d_{z,I}(t)$ be defined as Eq.~\eqref{eq:def_ideal_local_test_signal}. %

Then, we have
\begin{align*}
|d_z(t) - d_{I,z}(t)| \lesssim \delta_1  \sqrt{T|S|} \cdot  \|x(t) \|_T~~~\forall t\in \R.
\end{align*}
\end{lemma}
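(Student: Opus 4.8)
The plan is to reduce the bound on $|d_z(t)-d_{I,z}(t)|$ to two applications of the point-wise estimate already proved in Lemma~\ref{lem:xHG_sub_XHI_is_small}. Recall that by definition
\[
d_z(t)-d_{I,z}(t) = \big(z(t)-z_I(t)\big) e^{2\pi\i f_0\beta} - \big(z(t+\beta)-z_I(t+\beta)\big),
\]
where $z(t)=(x\cdot H)*G^{(j)}_{\sigma,b}(t)$ and $z_I(t)=(x\cdot H)*I(t)$. So the difference of local-test signals is just a unimodular-weighted difference of two copies of the quantity $z-z_I$ evaluated at the two shifted time points $t$ and $t+\beta$.

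First I would invoke Lemma~\ref{lem:xHG_sub_XHI_is_small}: with probability $0.9$ over the choice of $(\sigma,b)$, for all $j\in[B]$ and \emph{all} $t\in\R$,
\[
|z(t)-z_I(t)| = |(x\cdot H)*G^{(j)}_{\sigma,b}(t) - (x\cdot H)*I(t)| \lesssim \delta_1\sqrt{T|S|}\cdot\|x(t)\|_T .
\]
Since this holds pointwise for every real argument, it applies verbatim both at $t$ and at $t+\beta$. Then, by the triangle inequality together with $|e^{2\pi\i f_0\beta}|=1$,
\[
|d_z(t)-d_{I,z}(t)| \le |z(t)-z_I(t)| + |z(t+\beta)-z_I(t+\beta)| \lesssim \delta_1\sqrt{T|S|}\cdot\|x(t)\|_T ,
\]
which is exactly the claimed estimate (the factor of $2$ is absorbed into $\lesssim$). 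The same $0.9$-probability event (Large Offset not happening) underlies both Lemma~\ref{lem:xHG_sub_XHI_is_small} and the hypothesis of the present lemma, so there is no loss in the probability bound.

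There is really no substantial obstacle here: the lemma is a routine corollary of the pointwise filter-approximation bound, and the only thing to be careful about is that Lemma~\ref{lem:xHG_sub_XHI_is_small} is stated with the quantifier ``$\forall t\in\R$'' so that the shift by $\beta$ is harmless, and that $S:=\supp(\wh{x}*\wh{H})$ is the same set used in that lemma (with $|S|\le\Delta$). I would just write out the two displays above and cite Lemma~\ref{lem:xHG_sub_XHI_is_small} and the triangle inequality, noting the $|e^{2\pi\i f_0\beta}|=1$ step explicitly.
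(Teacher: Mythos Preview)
Your proposal is correct and follows essentially the same approach as the paper: expand $d_z(t)-d_{I,z}(t)$ via the definitions, apply the triangle inequality together with $|e^{2\pi\i f_0\beta}|=1$, and invoke Lemma~\ref{lem:xHG_sub_XHI_is_small} at both $t$ and $t+\beta$. The paper's proof is exactly this four-line computation.
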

\begin{proof}
\begin{align*}
   |d_z(t) - d_{I,z}(t)| = &~ |z(t)e^{2\pi\i f_0 \beta}-z_I(t)e^{2\pi\i f_0 \beta}-(z(t+\beta) -z_I(t+\beta))| \\ 
   \leq &~ |z(t)e^{2\pi\i f_0 \beta}-z_I(t)e^{2\pi\i f_0 \beta}|+|z(t+\beta)  -z_I(t+\beta)| \\ 
   = &~ |z(t)-z_I(t)|+|z(t+\beta) -z_I(t+\beta)| \\ 
   \lesssim &~ \delta_1  \sqrt{T|S|} \cdot  \|x(t) \|_T,
\end{align*}
where the first step follows from the definition of $d_z(t)$ and $d_{I, z}(t)$, the second step follows from triangle inequality, the third step follows from $|e^{2\pi\i f_0 \beta}|=1$, the forth step follows from Lemma \ref{lem:xHG_sub_XHI_is_small}.  
\end{proof}

The following lemma bounds the $L_2$-distance between $d_z(t)$ and $d_{z,I}(t)$.

\begin{lemma}\label{lem:z_diff_xHI_sub_XHG_T_norm_is_small}
Let $\delta_1$ be defined as in Lemma \ref{lem:property_of_filter_G}. Let $H$ be defined as in Definition~\ref{def:effect_H_k_sparse}, ${G}^{(j)}_{\sigma,b}$ be defined as in Definition \ref{def:G_j_sigma_b} and $I=I_{\sigma,b}^{(j)}$ be the corresponding ideal filter as in Eq.~\eqref{eq:def_ideal_filter}. 

For any $x\in {\cal F}_{k,F}$ and $(\sigma,b)$ such that Large Offset event does not happen, for any $j\in [B]$, let $z(t)=(x\cdot H)*G_{\sigma,b}^{(j)}(t)$, $d_z(t)$ be defined as Eq.~\eqref{eq:def_d_z}, $z_I(t)$ and $d_{z,I}(t)$ be defined as Eq.~\eqref{eq:def_ideal_local_test_signal}.%
Then, 
\begin{align*}
\int_{-\infty}^\infty  |d_{I, z}(t) - d_z(t) |^2 \d t \lesssim \delta_1^2 T \|x(t) \|_T^2
\end{align*}
\end{lemma}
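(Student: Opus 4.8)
The plan is to reduce this $L_2$-bound for the local-test signals to the $L_2$-approximation bound for the filtered signals themselves, namely Lemma~\ref{lem:xHI_sub_XHG_T_norm_is_small}. First I would introduce the shorthand $w(t) := z_I(t) - z(t) = (x\cdot H)*I(t) - (x\cdot H)*G_{\sigma,b}^{(j)}(t)$. From the definitions in Eq.~\eqref{eq:def_d_z} and Eq.~\eqref{eq:def_ideal_local_test_signal}, both $d_z$ and $d_{I,z}$ are linear in the underlying signal ($z$ and $z_I$ respectively), so the difference splits exactly as
\begin{align*}
d_{I,z}(t) - d_z(t) = w(t)\, e^{2\pi\i f_0\beta} - w(t+\beta).
\end{align*}

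Next I would bound this pointwise. Using $|a-b|^2 \le 2|a|^2 + 2|b|^2$ together with $|e^{2\pi\i f_0\beta}| = 1$, we get $|d_{I,z}(t) - d_z(t)|^2 \le 2|w(t)|^2 + 2|w(t+\beta)|^2$. Integrating over all of $\R$ and using translation invariance of Lebesgue integration, i.e.\ $\int_{-\infty}^\infty |w(t+\beta)|^2\,\d t = \int_{-\infty}^\infty |w(t)|^2\,\d t$, this yields
\begin{align*}
\int_{-\infty}^\infty |d_{I,z}(t) - d_z(t)|^2\,\d t \le 4\int_{-\infty}^\infty |w(t)|^2\,\d t.
\end{align*}
Finally, since the hypothesis is exactly that the Large Offset event does not happen, Lemma~\ref{lem:xHI_sub_XHG_T_norm_is_small} applies and gives $\int_{-\infty}^\infty |w(t)|^2\,\d t \lesssim \delta_1^2 T\|x(t)\|_T^2$, which combined with the last display completes the proof (up to the absolute constant, which is absorbed into $\lesssim$).

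There is essentially no serious obstacle here; the only point requiring a small amount of care is that the shifted term $w(t+\beta)$ is integrated over all of $\R$, so the change of variables is exact with no boundary contribution — this is why the statement is phrased as an $L_2$-bound rather than a $T$-norm bound (the latter would additionally require energy concentration of $z_I$ and $z$ in $[0,T]$, which is not needed here). The phase factor $e^{2\pi\i f_0\beta}$ has modulus one and plays no role beyond bookkeeping.
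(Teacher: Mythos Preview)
Your proof is correct and follows essentially the same approach as the paper: both write $d_{I,z}(t)-d_z(t)$ as $w(t)e^{2\pi\i f_0\beta}-w(t+\beta)$ with $w=z_I-z$, apply $|a-b|^2\le 2|a|^2+2|b|^2$, and reduce to Lemma~\ref{lem:xHI_sub_XHG_T_norm_is_small}. Your explicit use of translation invariance for the shifted term is slightly cleaner than the paper's phrasing (which invokes Lemma~\ref{lem:xHI_sub_XHG_T_norm_is_small} again on the shifted integral), but the argument is the same.
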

\begin{proof}

We first have that,
\begin{align}
    &~ \int_{-\infty}^\infty  |z_I(t)e^{2\pi\i f_0 \beta}-z(t)e^{2\pi\i f_0 \beta} |^2 \d t \notag \\
    = &~ \int_{-\infty}^\infty  |z_I(t)-z(t) |^2 \d t \notag \\
    \leq &~ \delta_1^2 T \|x(t) \|_T^2,\label{eq:bound_z_diff_e}
\end{align}
where the first step follows from $|e^{2\pi\i f_0 \beta}|=1$, the second step follows from Lemma \ref{lem:xHI_sub_XHG_T_norm_is_small}.

Then, we complete the proof as follows:
\begin{align*}
&~ \int_{-\infty}^\infty  |d_{I, z}(t) - d_z(t) |^2 \d t \\
= &~ \int_{-\infty}^\infty  |z_I(t)e^{2\pi\i f_0 \beta}-z(t)e^{2\pi\i f_0 \beta}-(z_I(t+\beta)-z(t+\beta)) |^2 \d t \\
\leq &~ 2\int_{-\infty}^\infty  |z_I(t)e^{2\pi\i f_0 \beta}-z(t)e^{2\pi\i f_0 \beta} |^2 \d t + 2\int_{-\infty}^\infty  |z_I(t+\beta)-z(t+\beta) |^2 \d t \\
\lesssim &~ \delta_1^2 T \|x(t) \|_T^2 + \int_{-\infty}^\infty  |z_I(t+\beta)-z(t+\beta) |^2 \d t \\
\lesssim &~ \delta_1^2 T \|x(t) \|_T^2 
\end{align*}
where the first step follows from the definition of $d_{I, z}(t)$ and $d_z(t)$, the second step follows from $(a+b)^2 \leq 2a^2+2b^2$, the third step follows from Eq.~\eqref{eq:bound_z_diff_e}, the forth step follows from Lemma \ref{lem:xHI_sub_XHG_T_norm_is_small}. 

\end{proof}

\subsection{Ideal post-truncated local-test signal}\label{sec:twist_struct_FS:post}
It is still difficult to directly study the energy bound for $d_{z,I}(t)$. In this section, we further simplify the ideally filtered signal by removing the $H$ filter and consider the untruncted ideally filtered signal $(x*I)(t)$. Then, in the local-test signal, we perform a post-truncation. More specifically, the untruncated ideally filtered signal and the \emph{ideal post-truncated local-test signal} are defined as follows:
\begin{align}\label{eq:def_x_I_d_I_x}
    x_I(t):=&~ (x*I)(t),\notag\\
    d_{I, x}(t):= &~ x_I(t)\cdot H(t)\cdot e^{2\pi\i f_0 \beta}-x_I(t+\beta)\cdot H(t+\beta).
\end{align}
Intuitively, $d_{I,x}(t)$ can be viewed as swapping the order of the $I$ and $H$ filters in $d_{I,z}(t)$. 

The following lemma shows that $d_{I,z}(t)$ and $d_{I,x}(t)$ are actually the same!
\begin{lemma}\label{lem:z_diff_xIH2xHI}
Let $\delta_1$ be defined as in Lemma \ref{lem:property_of_filter_G}. Let $H$ be defined as in Definition~\ref{def:effect_H_k_sparse}, ${G}^{(j)}_{\sigma,b}$ be defined as in Definition \ref{def:G_j_sigma_b} and $I=I_{\sigma,b}^{(j)}$ be the corresponding ideal filter as in Eq.~\eqref{eq:def_ideal_filter}.

For any $x\in {\cal F}_{k,F}$, and $(\sigma,b)$ such that Large Offset event does not happen, let $z_I(t)$ and $d_{z,I}(t)$ be defined as Eq.~\eqref{eq:def_ideal_local_test_signal}, $x_I(t)$ and $d_{x,I}(t)$ be defined as Eq.~\eqref{eq:def_x_I_d_I_x}.

Then, we have
\begin{align*}
 d_{I, z}(t) = d_{I, x}(t)~~~\forall t\in \R.
\end{align*}
\end{lemma}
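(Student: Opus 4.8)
The plan is to reduce the identity to the filter-commutation property already established in Lemma~\ref{lem:xIH2xHI}. That lemma states that when the Large Offset event does not happen, we have $(x\cdot H)*I(t) = (x*I)(t)\cdot H(t)$ for every $t\in\R$. In the notation of the present section this is precisely the statement $z_I(t) = x_I(t)\cdot H(t)$, since $z_I(t):=(x\cdot H)*I(t)$ and $x_I(t):=(x*I)(t)$ by Eq.~\eqref{eq:def_ideal_local_test_signal} and Eq.~\eqref{eq:def_x_I_d_I_x}. So the first step is just to invoke Lemma~\ref{lem:xIH2xHI} to record $z_I(t)=x_I(t)H(t)$, valid at $t$ and at $t+\beta$ simultaneously (the lemma holds pointwise for all real arguments).

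The second step is to substitute this into the definition of $d_{I,z}$. By Eq.~\eqref{eq:def_ideal_local_test_signal},
\begin{align*}
d_{I,z}(t) = z_I(t)e^{2\pi\i f_0\beta} - z_I(t+\beta)
= x_I(t)H(t)e^{2\pi\i f_0\beta} - x_I(t+\beta)H(t+\beta),
\end{align*}
and the right-hand side is exactly $d_{I,x}(t)$ as defined in Eq.~\eqref{eq:def_x_I_d_I_x}. Hence $d_{I,z}(t)=d_{I,x}(t)$ for all $t\in\R$, which is the claim.

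There is essentially no obstacle here: the content of the statement is entirely contained in the commutation lemma, and the rest is a one-line algebraic substitution using that $e^{2\pi\i f_0\beta}$ is a scalar independent of $t$. The only point to be careful about is making sure the hypothesis ``Large Offset event does not happen'' is carried over verbatim so that Lemma~\ref{lem:xIH2xHI} applies, and that we are allowed to apply it at both $t$ and $t+\beta$ (which we are, since the conclusion of Lemma~\ref{lem:xIH2xHI} is an identity of functions on all of $\R$, not just on $[0,T]$).
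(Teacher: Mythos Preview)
Your proposal is correct and matches the paper's proof essentially line for line: the paper also expands $d_{I,z}(t)$ from its definition, applies Lemma~\ref{lem:xIH2xHI} at $t$ and at $t+\beta$ to rewrite $z_I$ as $x_I\cdot H$, and then recognizes the result as $d_{I,x}(t)$. Your remark that the ``Large Offset event does not happen'' hypothesis is exactly what makes Lemma~\ref{lem:xIH2xHI} applicable deterministically (rather than only with probability $0.9$) is also correct.
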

\begin{proof}
We have that
\begin{align*}
 d_{I, z}(t) = &~ z_I(t)\cdot e^{2\pi\i f_0 \beta}-z_I(t+\beta) \\
 = &~ x_I(t)\cdot H(t)\cdot e^{2\pi\i f_0 \beta}-z_I(t+\beta) \\
 = &~ x_I(t)\cdot H(t)\cdot e^{2\pi\i f_0 \beta}-x_I(t+\beta)\cdot H(t+\beta) \\
 = &~ d_{I, x}(t),
\end{align*}
where the first step follows from the definition of $  d_{I, z}(t)$, the second step follows from Lemma \ref{lem:xIH2xHI}, the third step follows from Lemma \ref{lem:xIH2xHI}, the last step follows from the definition of $ d_{I, x}(t)$.
\end{proof}

The structure of $d_{I,x}(t)$ makes it easy to study its magnitude at any ``good point'': 

\begin{lemma}\label{lem:z_diff_condition_number_xI_pre}

Let $H$ be defined as in Definition~\ref{def:effect_H_k_sparse}, ${G}^{(j)}_{\sigma,b}$ be defined as in Definition \ref{def:G_j_sigma_b} and $I=I_{\sigma,b}^{(j)}$ be the corresponding ideal filter as in Eq.~\eqref{eq:def_ideal_filter}. 
Let $U:=\{t_0\in \R~|~ H(t) > 1-\delta_1~ \forall t\in [t_0,t_0+\beta]\}$. 

For any $x\in {\cal F}_{k,F}$, and $(\sigma,b)$ such that Large Offset event does not happen, let $x_I(t), d_{x,I}(t)$ be defined as Eq.~\eqref{eq:def_x_I_d_I_x}. Then, we have
\begin{align*}
|d_{I, x}(t)| \lesssim \left|x_I(t)\cdot  e^{2\pi\i f_0 \beta} - x_I(t+\beta)\right|+ \delta_1 k \|x_I(t)\|_{T}~~~\forall t\in U. 
\end{align*}
\end{lemma}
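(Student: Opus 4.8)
The goal is to bound $|d_{I,x}(t)|$ for $t\in U$, where $d_{I,x}(t)=x_I(t)H(t)e^{2\pi\i f_0\beta}-x_I(t+\beta)H(t+\beta)$. The natural approach is to add and subtract the ``clean'' test signal $x_I(t)e^{2\pi\i f_0\beta}-x_I(t+\beta)$ and absorb the discrepancy caused by $H$ not being exactly $1$ near $t$ into the error term. First I would write
\begin{align*}
d_{I,x}(t) = \big(x_I(t)e^{2\pi\i f_0\beta}-x_I(t+\beta)\big) + x_I(t)e^{2\pi\i f_0\beta}(H(t)-1) - x_I(t+\beta)(H(t+\beta)-1),
\end{align*}
so that by the triangle inequality
\begin{align*}
|d_{I,x}(t)| \le |x_I(t)e^{2\pi\i f_0\beta}-x_I(t+\beta)| + |H(t)-1|\cdot|x_I(t)| + |H(t+\beta)-1|\cdot|x_I(t+\beta)|.
\end{align*}

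The second step is to control $|H(t)-1|$ and $|H(t+\beta)-1|$ for $t\in U$. By the definition of $U=\{t_0:H(t)>1-\delta_1\ \forall t\in[t_0,t_0+\beta]\}$, both $t$ and $t+\beta$ lie in the region where $H$ is within $\delta_1$ of $1$, and by Lemma~\ref{lem:property_of_filter_H} Property~\RN{2} (together with Property~\RN{1}) we have $H\le 1$ there, so $|H(t)-1|\le\delta_1$ and $|H(t+\beta)-1|\le\delta_1$. Hence the error terms are bounded by $\delta_1(|x_I(t)|+|x_I(t+\beta)|)$.

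The third step is to replace the pointwise values $|x_I(t)|$ and $|x_I(t+\beta)|$ by a multiple of the $T$-norm $\|x_I(t)\|_T$. Since $x_I(t)=(x*I)(t)$ is a $k$-Fourier-sparse signal (its Fourier transform is $\wh{x}\cdot\wh{I}$, supported on a subset of $\{f_1,\dots,f_k\}$), Kane's location-independent energy bound (Theorem~\ref{thm:energy_bound}) gives $|x_I(s)|^2\lesssim k^2\|x_I\|_T^2$, i.e.\ $|x_I(s)|\lesssim k\|x_I(t)\|_T$ for every $s$. Applying this at $s=t$ and $s=t+\beta$ turns $\delta_1(|x_I(t)|+|x_I(t+\beta)|)$ into $O(\delta_1 k\|x_I(t)\|_T)$, which is exactly the second term in the claimed bound. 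Combining the three steps yields
\begin{align*}
|d_{I,x}(t)| \lesssim |x_I(t)e^{2\pi\i f_0\beta}-x_I(t+\beta)| + \delta_1 k\|x_I(t)\|_T,
\end{align*}
as desired.

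\textbf{Main obstacle.} There is no serious obstacle here; the only points requiring a little care are (i) making sure that for $t\in U$ the point $t+\beta$ also lands in the good region of $H$ — which is immediate from the definition of $U$, since the whole interval $[t,t+\beta]$ is required to satisfy $H>1-\delta_1$ — and (ii) justifying that Theorem~\ref{thm:energy_bound} applies to $x_I$, which holds because convolving with the ideal filter only zeroes out some of the Fourier coefficients of $x$ and therefore keeps the signal $k$-Fourier-sparse with frequencies still in $[-F,F]$. The appearance of the factor $k$ (rather than, say, $\sqrt{k/(1-|t|)}$) is because we want a bound valid uniformly over all $t\in U$, including near the endpoints of $[0,T]$, so we use the $t$-independent energy bound rather than the sharper location-dependent one.
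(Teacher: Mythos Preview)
Your proposal is correct and follows essentially the same approach as the paper: decompose $d_{I,x}(t)$ by adding and subtracting the clean test signal, bound $|H(t)-1|$ and $|H(t+\beta)-1|$ by $\delta_1$ via the definition of $U$, and then convert the pointwise values $|x_I(t)|,|x_I(t+\beta)|$ to $k\|x_I\|_T$ using the $k$-Fourier-sparse energy bound. The only cosmetic difference is that the paper invokes Lemma~\ref{lem:condition_number_pre} (which packages Theorems~\ref{thm:energy_bound} and~\ref{thm:bound_k_sparse_FT_x_middle}) rather than citing Theorem~\ref{thm:energy_bound} directly, but the content is identical.
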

\begin{proof}

First, for any $t\in U$,
\begin{align}
|x_I(t)\cdot H(t)\cdot e^{2\pi\i f_0 \beta}- x_I(t)\cdot  e^{2\pi\i f_0 \beta}| = &~ |x_I(t)\cdot H(t)- x_I(t)| \notag \\
 = &~ |x_I(t)| \cdot |1-H(t)| \notag \\
 \leq &~  \delta_1 |x_I(t)| \notag \\
 \lesssim &~  \delta_1 k \|x_I(t)\|_{T}
 \label{eq:1_z_diff_condition_number_xI}
\end{align}
where the first step follows from $|e^{2\pi\i f_0 \beta}|=1$, the second step is straight forward, the third step follows from $H(t)\leq 1$ (Lemma \ref{lem:property_of_filter_H} Property \RN{1}, \RN{2}) and $ \forall t\in U, H(t)>1-\delta_1$, and the last step follows from Lemma \ref{lem:condition_number_pre}.

Second, for any $t\in U$,
\begin{align}
|x_I(t+\beta) - x_I(t+\beta)\cdot H(t+\beta)| = &~ |x_I(t+\beta)| \cdot |1-H(t+\beta)| \notag \\
 \leq &~  \delta_1 |x_I(t+\beta)| \notag \\
 \lesssim &~  \delta_1 k \|x_I(t)\|_{T} 
 \label{eq:2_z_diff_condition_number_xI}
\end{align}
where the first step is straight forward, the second step follows from $H(t)\leq 1$ (Lemma \ref{lem:property_of_filter_H} Property \RN{1}, \RN{2}) and $ \forall t\in U, H(t+\beta)>1-\delta_1$, the last step follows from Lemma \ref{lem:condition_number_pre}.

Combining them together, we have that for any $t\in U$,
\begin{align*}
|d_{I, x}(t)| 
=&~ |x_I(t)\cdot H(t)\cdot e^{2\pi\i f_0 \beta}-x_I(t+\beta)\cdot H(t+\beta)| \notag \\
\leq&~ |x_I(t)\cdot H(t)\cdot e^{2\pi\i f_0 \beta}- x_I(t)\cdot  e^{2\pi\i f_0 \beta}|+|x_I(t)\cdot  e^{2\pi\i f_0 \beta} - x_I(t+\beta)|\\
&+|x_I(t+\beta) - x_I(t+\beta)\cdot H(t+\beta)| \notag \\ 
\lesssim&~ |x_I(t)\cdot  e^{2\pi\i f_0 \beta} - x_I(t+\beta)|+|x_I(t+\beta) - x_I(t+\beta)\cdot H(t+\beta)| + \delta_1 k \|x_I(t)\|_{T}  \notag \\ 
\lesssim&~ |x_I(t)\cdot  e^{2\pi\i f_0 \beta} - x_I(t+\beta)|+ \delta_1 k \|x_I(t)\|_{T}  
\end{align*}
where the first step follows from the definition of $d_{I, x}(t)$, the second step follows from triangle inequality, the third step follows Eq.~\eqref{eq:1_z_diff_condition_number_xI}, the forth step follows from Eq.~\eqref{eq:2_z_diff_condition_number_xI}.

\end{proof}

Furthermore, we can show that the ideal post-truncated local-test signal is close to the ideal local-test signal without truncation on most of ``good points''.

\begin{lemma}\label{lem:z_diff_norm_condition_number_xI_pre}
Let $H$ be defined as in Definition~\ref{def:effect_H_k_sparse}, ${G}^{(j)}_{\sigma,b}$ be defined as in Definition \ref{def:G_j_sigma_b} and $I=I_{\sigma,b}^{(j)}$ be the corresponding ideal filter as in Eq.~\eqref{eq:def_ideal_filter}. 
Let $U:=\{t_0\in \R~|~ H(t) > 1-\delta_1, \forall t\in [t_0,t_0+\beta]\}$.  Let $D_U(t):=\mathrm{Uniform}(U)$ and $D_{U+\beta}(t):=\mathrm{Uniform}(U+\beta)$.  

For any $x\in {\cal F}_{k,F}$, and $(\sigma,b)$ such that Large Offset event does not happen, let $x_I(t), d_{x,I}(t)$ be defined as Eq.~\eqref{eq:def_x_I_d_I_x}. Then, we have
\begin{align*}
\|d_{I, x}(t) - (x_I(t)\cdot e^{2\pi\i f_0 \beta}-x_I(t+\beta))\|_{D_U} \lesssim \delta_1 \|x_I(t)\|_{T}. %
\end{align*}
\end{lemma}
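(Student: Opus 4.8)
The plan is to expand the difference directly from the definition in Eq.~\eqref{eq:def_x_I_d_I_x}. Writing out $d_{I,x}(t)=x_I(t)H(t)e^{2\pi\i f_0\beta}-x_I(t+\beta)H(t+\beta)$ and subtracting $x_I(t)e^{2\pi\i f_0\beta}-x_I(t+\beta)$ gives
\[
d_{I,x}(t)-\big(x_I(t)e^{2\pi\i f_0\beta}-x_I(t+\beta)\big)=x_I(t)\,(H(t)-1)\,e^{2\pi\i f_0\beta}-x_I(t+\beta)\,(H(t+\beta)-1),
\]
so the whole lemma reduces to controlling the two factors $H(t)-1$ and $H(t+\beta)-1$ on the set $U$, where they are tiny by construction.

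First I would prove the pointwise bound: for every $t\in U$, the definition of $U$ forces $[t,t+\beta]$ into the region where the fluctuation estimate holds, so $H(t)>1-\delta_1$ and $H(t+\beta)>1-\delta_1$; combined with $H\le 1$ (Lemma~\ref{lem:property_of_filter_H} Property~\RN{2}, i.e.\ Lemma~\ref{lem:bound_H_1}) this yields $|1-H(t)|\le\delta_1$ and $|1-H(t+\beta)|\le\delta_1$. Using $|e^{2\pi\i f_0\beta}|=1$ and the triangle inequality, it follows that $\big|d_{I,x}(t)-(x_I(t)e^{2\pi\i f_0\beta}-x_I(t+\beta))\big|\le\delta_1\big(|x_I(t)|+|x_I(t+\beta)|\big)$ for all $t\in U$. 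Squaring, applying $(a+b)^2\le 2a^2+2b^2$, and averaging over $U$ then gives
\[
\|d_{I,x}(t)-(x_I(t)e^{2\pi\i f_0\beta}-x_I(t+\beta))\|_{D_U}^2\le 2\delta_1^2\Big(\|x_I(t)\|_{D_U}^2+\|x_I(t+\beta)\|_{D_U}^2\Big).
\]

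The remaining step is to bound these two $D_U$-norms of $x_I$ by $\|x_I\|_T^2$. Here I would invoke the structure of $U$: by its definition together with the fluctuation bound of Section~\ref{sec:t_filter_range} (Lemma~\ref{lem:bound_H_1}) and the length estimate of Lemma~\ref{lem:relation_R_L_T}, $U$ is, up to the exponentially small tails of $H$, an interval contained in $[0,T]$ with $|U|\eqsim T$, and moreover $U+\beta\subseteq[0,T]$. Hence $\|x_I(t)\|_{D_U}^2=\frac1{|U|}\int_U|x_I(t)|^2\d t\le\frac1{|U|}\int_0^T|x_I(t)|^2\d t=\frac{T}{|U|}\|x_I\|_T^2\lesssim\|x_I\|_T^2$, and by the substitution $s=t+\beta$ the same bound holds for $\|x_I(t+\beta)\|_{D_U}^2=\frac1{|U|}\int_{U+\beta}|x_I(s)|^2\d s$. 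Plugging these in gives $\|d_{I,x}(t)-(x_I(t)e^{2\pi\i f_0\beta}-x_I(t+\beta))\|_{D_U}^2\lesssim\delta_1^2\|x_I\|_T^2$, and taking square roots completes the proof.

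I expect the only delicate point to be the claim that $U$ and $U+\beta$ are essentially contained in $[0,T]$ with $|U|\eqsim T$: this is exactly where the fluctuation bound of Section~\ref{sec:t_filter_range} and the length estimates of Lemma~\ref{lem:relation_R_L_T} enter, and one has to check that the rescaling $H(t)=H_1(\alpha t)$ and the boundary behaviour do not spoil the containment. Since $H$ is exponentially small outside $[0,T]$ whereas $U$ only excludes a region where $H\le 1-\delta_1$, removing the complement of $U$ from $[0,T]$ changes the relevant energy integrals by at most a constant factor, so this costs nothing. Everything else is elementary: the triangle inequality, $|1-H|\le\delta_1$ on $U$, and the trivial monotonicity $\int_U\le\int_0^T$.
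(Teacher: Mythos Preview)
Your proposal is correct and follows essentially the same approach as the paper: decompose the difference into the two terms $x_I(t)(H(t)-1)e^{2\pi\i f_0\beta}$ and $x_I(t+\beta)(H(t+\beta)-1)$, use $|1-H|\le\delta_1$ on $U$ (and on $U+\beta$), and then control the resulting $D_U$-norms of $x_I$ by $\|x_I\|_T$ via $U,\,U+\beta\subseteq[0,T]$ and $|U|\eqsim T$ from Lemma~\ref{lem:relation_R_L_T}. The only cosmetic difference is that you bound pointwise first and then integrate, whereas the paper applies the triangle inequality directly at the level of the $D_U$-norm and handles the two terms separately.
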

\begin{proof}

First, 
\begin{align}
\|x_I(t)\cdot H(t)\cdot e^{2\pi\i f_0 \beta}- x_I(t)\cdot  e^{2\pi\i f_0 \beta}\|_{D_U} = &~ \|x_I(t)\cdot H(t)- x_I(t)\|_{D_U} \notag \\
 \leq &~ {\max}_{t\in U}\{|1-H(t)|\} \cdot\|x_I(t)\|_{D_U}   \notag \\
 \leq &~  \delta_1 \cdot\|x_I(t)\|_{D_U} \notag \\
 \lesssim &~  \delta_1 \cdot \sqrt{\frac{T}{|U|}} \|x_I(t)\|_{T} \notag \\
 \lesssim &~  \delta_1 \cdot  \|x_I(t)\|_{T} 
 \label{eq:1_d_I_x_d_x}
\end{align}
where the first step follows from $|e^{2\pi\i f_0 \beta}|=1$, the second step is straight forward, the third step follows from $H(t)\leq 1$ (Lemma \ref{lem:property_of_filter_H} Property \RN{1}, \RN{2}) and $ \forall t\in U, H(t)>1-\delta_1$, the forth step follows from the definition of the norm 
\begin{align*}
    \|x(t)\|^2_{D_U}=\frac{1}{|U|} \int_{U} |x(t)|^2 \d t \leq \frac{1}{|U|} \int_{[0, T]} |x(t)|^2 \d t = \frac{T}{|U|}\|x(t)\|^2_{T},
\end{align*}
and the last step follows from Lemma \ref{lem:condition_number_pre}.

Second, 
\begin{align}
\|x_I(t+\beta) - x_I(t+\beta)\cdot H(t+\beta)\|_{D_U} \leq &~ {\max}_{t\in U}\{|1-H(t+\beta)|\} \cdot\|x_I(t+\beta)\|_{D_U} \notag \\
 \leq &~  \delta_1\cdot  \|x_I(t)\|_{D_{U+\beta}} \notag \\
 \lesssim &~  \delta_1\cdot \frac{1}{|U+\beta|} \|x_I(t)\|_{D_{1}} \notag \\
 \lesssim &~  \delta_1\cdot  \|x_I(t)\|_{D_{1}}
 \label{eq:2_d_I_x_d_x}
\end{align}
where the first step is straight forward, the second step follows from $H(t)\leq 1$ (Lemma \ref{lem:property_of_filter_H} Property \RN{1}, \RN{2}) and $ \forall t\in U, H(t+\beta)>1-\delta_1$, the third step follows from the definition of the norm, the forth step follows from $|U+\beta|=|U|\gtrsim 1$.

Then, we have that,
\begin{align*}
&~ \|d_{I, x}(t) - (x_I(t)\cdot e^{2\pi\i f_0 \beta}-x_I(t+\beta))\|_{D_U} \\
=  &~ \|x_I(t)\cdot H(t)\cdot e^{2\pi\i f_0 \beta}-x_I(t+\beta)\cdot H(t+\beta) - (x_I(t)\cdot e^{2\pi\i f_0 \beta}-x_I(t+\beta))\|_{D_U}  \\
\leq  &~ \|x_I(t)\cdot H(t)\cdot e^{2\pi\i f_0 \beta} - x_I(t)\cdot e^{2\pi\i f_0 \beta}\|_{D_U} + \|x_I(t+\beta)\cdot H(t+\beta) - x_I(t+\beta)\|_{D_U}  \\
\lesssim  &~ \delta_1\cdot  \|x_I(t)\|_{D_{1}} + \|x_I(t+\beta)\cdot H(t+\beta) - x_I(t+\beta)\|_{D_U}  \\
\lesssim  &~ \delta_1\cdot  \|x_I(t)\|_{D_{1}} 
\end{align*}
where the first step follows from the definition of $d_{I, x}(t)$, the second step follows from triangle inequality, the third step follows from Eq.~\eqref{eq:1_d_I_x_d_x}, the forth step follows from Eq.~\eqref{eq:2_d_I_x_d_x}.

\end{proof}

\subsection{Energy bound for local-test signals}\label{sec:twist_struct_FS:energy}
In this section, we prove the following lemma, which gives an energy bound for local-test signals.
\begin{lemma}[Energy bound for local-test signals]\label{lem:z_diff_condition_number_xI}
Let $H$ be defined as in Definition~\ref{def:effect_H_k_sparse}, ${G}^{(j)}_{\sigma,b}$ be defined as in Definition \ref{def:G_j_sigma_b}.
Let $U$, $D_U$ be defined as in Lemma~\ref{lem:z_diff_norm_condition_number_xI_pre}. 

For any $x\in {\cal F}_{k,F}$, and $(\sigma,b)$ such that Large Offset event does not happen, let $z(t)=(x\cdot H)*G_{\sigma,b}^{(j)}(t)$ and $d_z(t)$ be defined as Eq.~\eqref{eq:def_d_z}. Then, we have
\begin{align*}
     {|d_z(t)|^2} \lesssim {\min}\Big\{ \frac{k}{1-|2t/T-1|} , k^2 \Big\}\cdot {\|d_z(t)\|_{D_U}^2} +\delta_1 \|x(t)\|_{T}^2~~~\forall t\in U.
\end{align*}
\end{lemma}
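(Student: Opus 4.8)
The plan is to transfer the energy bound from the genuinely $k$-Fourier-sparse signal $w(t) := x_I(t)\,e^{2\pi\i f_0\beta} - x_I(t+\beta)$, where $x_I=(x*I)$, down to the local-test signal $d_z$, absorbing every approximation error into the $\delta_1\|x(t)\|_T^2$ slack. Note first that $w$ has the same (at most $k$) frequencies as $x_I$, which lie in $\supp(\wh x)\subseteq[-F,F]$, so $w\in{\cal F}_{k,F}$; this is exactly what makes the Fourier-sparse energy bounds applicable.

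The first stage is the ideal-filter reduction. Since the Large Offset event does not happen, Lemma~\ref{lem:z_diff_xHG_sub_XHI_is_small} gives $|d_z(t)-d_{I,z}(t)|\lesssim \delta_1\sqrt{T|S|}\,\|x(t)\|_T$ pointwise, Lemma~\ref{lem:z_diff_xIH2xHI} gives $d_{I,z}(t)=d_{I,x}(t)$, and Lemma~\ref{lem:z_diff_condition_number_xI_pre} gives $|d_{I,x}(t)|\lesssim |w(t)|+\delta_1 k\,\|x_I(t)\|_T$ for $t\in U$. Chaining these and using $(a+b+c)^2\lesssim a^2+b^2+c^2$, for $t\in U$ I get $|d_z(t)|^2\lesssim |w(t)|^2+\delta_1^2k^2\|x_I\|_T^2+\delta_1^2 T|S|\,\|x\|_T^2$. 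Since $|S|\le\Delta$, and using the standing parameter inequalities $\delta_1 k^2\le 1$ and $\delta_1 T\Delta\le 1$ (the latter exactly as in the proof of Lemma~\ref{lem:condition_number_xI}), together with $\|x_I\|_T\lesssim\|x\|_T$ (Claim~\ref{clm:xI2xHG} and the crude bound $\|z\|_T\lesssim\|x\|_T$), the last two terms are $\lesssim\delta_1\|x\|_T^2$. In parallel, passing to the $D_U$-norm: Lemma~\ref{lem:z_diff_norm_condition_number_xI_pre} gives $\|w-d_{I,x}\|_{D_U}\lesssim\delta_1\|x_I\|_T$, while $d_{I,x}=d_{I,z}$, Lemma~\ref{lem:z_diff_xHI_sub_XHG_T_norm_is_small}, and $|U|\eqsim T$ (Lemma~\ref{lem:relation_R_L_T}) give $\|d_{I,z}-d_z\|_{D_U}^2\le |U|^{-1}\int_{\R}|d_{I,z}-d_z|^2\lesssim\delta_1^2\|x\|_T^2$; hence $\|w\|_{D_U}\lesssim\|d_z\|_{D_U}+\delta_1\|x\|_T$.

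The second stage invokes the Fourier-sparse energy bounds on the interval $U=[L,R]$. Rescaling $U$ affinely onto $[-1,1]$, Theorem~\ref{thm:bound_k_sparse_FT_x_middle} and Theorem~\ref{thm:energy_bound} applied to $w\in{\cal F}_{k,F}$ yield, for every $t\in U$, $|w(t)|^2\lesssim\min\{k/(1-|\psi(t)|),\,k^2\}\cdot\|w\|_{D_U}^2$ with $\psi(t)=(2t-L-R)/(R-L)$. Combining with $\|w\|_{D_U}\lesssim\|d_z\|_{D_U}+\delta_1\|x\|_T$, squaring, and bounding $\min\{\cdot,k^2\}\le k^2$ against $\delta_1 k^2\le1$, I obtain $|w(t)|^2\lesssim\min\{k/(1-|\psi(t)|),k^2\}\|d_z\|_{D_U}^2+\delta_1\|x\|_T^2$ for $t\in U$, which combined with the first stage gives the same bound (up to an extra $+\delta_1\|x\|_T^2$). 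It then remains to replace $1-|\psi(t)|$ by the target $1-|2t/T-1|$: writing $1-|\psi(t)|=2\min(t-L,R-t)/(R-L)$ and $1-|2t/T-1|=2\min(t,T-t)/T$, and recalling from Lemma~\ref{lem:relation_R_L_T} that $L$ and $T-R$ are $O(T/k^2)$ while $R-L\le T$, I split on whether $1-|2t/T-1|\le 1/k$. In that ``boundary'' regime the target factor equals $k^2$ and I use only the uniform bound $|w(t)|^2\lesssim k^2\|w\|_{D_U}^2$. In the complementary regime $\min(t,T-t)>T/(2k)$ dominates $L$ and $T-R$, so $\min(t-L,R-t)\ge\tfrac12\min(t,T-t)$, hence $1-|\psi(t)|\gtrsim 1-|2t/T-1|$ and $k/(1-|\psi(t)|)\lesssim k/(1-|2t/T-1|)$. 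Merging the two regimes yields $|d_z(t)|^2\lesssim\min\{k/(1-|2t/T-1|),k^2\}\,\|d_z\|_{D_U}^2+\delta_1\|x\|_T^2$ for all $t\in U$, as claimed.

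I expect this last step to be the main obstacle. Because the ``good'' time window $U$ on which the reduction lemmas hold is a proper subinterval of $[0,T]$ (albeit of length $(1-1/\poly(k))T$), the sparse-signal energy bound one can cleanly apply is indexed by $U$ rather than by $[0,T]$, and reconciling its natural weight $1/(1-|\psi(t)|)$ with the $1/(1-|2t/T-1|)$ weight in the statement forces the regime split above. The $k^2$ cap inside the $\min$ is precisely what rescues the near-boundary case $t\approx\partial U$, where $1-|\psi(t)|$ can be arbitrarily small even though $1-|2t/T-1|$ stays $\gtrsim 1/\poly(k)$; everything else is routine manipulation of the already-established approximation and energy-preservation lemmas.
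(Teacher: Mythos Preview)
Your proposal is correct and follows essentially the same approach as the paper: reduce $d_z$ to $d_{I,z}=d_{I,x}$ via Lemmas~\ref{lem:z_diff_xHG_sub_XHI_is_small} and~\ref{lem:z_diff_xIH2xHI}, then to the genuinely $k$-sparse signal $w(t)=x_I(t)e^{2\pi\i f_0\beta}-x_I(t+\beta)$ via Lemma~\ref{lem:z_diff_condition_number_xI_pre}, apply Theorems~\ref{thm:energy_bound} and~\ref{thm:bound_k_sparse_FT_x_middle} on $U$, and bound $\|w\|_{D_U}$ back by $\|d_z\|_{D_U}+\delta_1\|x\|_T$ using Lemmas~\ref{lem:z_diff_norm_condition_number_xI_pre} and~\ref{lem:z_diff_xHI_sub_XHG_T_norm_is_small}. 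Your regime split for converting the $U$-based weight $1-|\psi(t)|$ to $1-|2t/T-1|$ is exactly the step the paper carries out (somewhat tersely) right after invoking the sparse energy bounds; the only cosmetic difference is that the paper cites Claim~\ref{clm:x_I2x} directly for $\|x_I\|_T\lesssim\|x\|_T$, whereas you route through Claim~\ref{clm:xI2xHG} plus $\|z\|_T\lesssim\|x\|_T$.
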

\begin{proof}
Let $I=I_{\sigma,b}^{(j)}$ be the corresponding ideal filter as in Eq.~\eqref{eq:def_ideal_filter}. 
Let $S:=\supp(\wh{x}*\wh{H})$ be the support set of $\wh{x}*\wh{H}$. 
We have $|S|\leq \Delta$. 

Let $z_I(t)$, $d_{z,I}(t)$ be defined as in Lemma~\ref{lem:z_diff_xHG_sub_XHI_is_small} and $x_I(t)$, $d_{I,x}(t)$ be defined as in Lemma~\ref{lem:z_diff_xIH2xHI}.

Before proving the energy bound for $|d_z(t)|$, we first consider the signal $x_I(t)\cdot  e^{2\pi\i f_0 \beta} - x_I(t+\beta)$. 
By Fourier transformation, we know that its Fourier coefficient of a frequency $f$ is:
\begin{align*}
    \wh{x}_I(f)  e^{2\pi\i f_0 \beta} - \wh{x}_I(f) e^{2\pi\i f \beta} = \wh{x}(f)\cdot \wh{I}(f)  e^{2\pi\i f_0 \beta} - \wh{x}(f)\cdot \wh{I}(f) e^{2\pi\i f \beta}
\end{align*}
Thus, $x_I(t)\cdot  e^{2\pi\i f_0 \beta} - x_I(t+\beta)$ is at most $k$-Fourier-sparse. 

Let $[L,R]:=U$. By Fourier-sparse signals' energy bound (Theorem \ref{thm:bound_k_sparse_FT_x_middle} and Theorem \ref{thm:energy_bound}), we have
\begin{align}
  |  x_I(t)\cdot  e^{2\pi\i f_0 \beta} - x_I(t+\beta) |^2 \lesssim &~ {\min} \Big\{\frac{k }{\min\{R-t, t-L\}}, k^2\Big\} \cdot  \|  x_I(t)\cdot  e^{2\pi\i f_0 \beta} - x_I(t+\beta) \|_{D_U}^2\notag \\
  \lesssim &~ {\min} \{\frac{k }{1-|2t/T-1|}, k^2\} \cdot  \|  x_I(t)\cdot  e^{2\pi\i f_0 \beta} - x_I(t+\beta) \|_{D_U}^2 
  \label{eq:1_d_x_I_z_diff_condition_number_xI}
\end{align}
where the first step follows from applying Theorem \ref{thm:bound_k_sparse_FT_x_middle} with $x(t)=x(Tt/2+T/2) $ and applying Theorem \ref{thm:energy_bound} with $T=|U|, x(t)=x(t+L) $, the second step follows from $ [-1+0.5/k, 1-0.5/k] \subseteq [L,R]$, which implies that $k(\min\{R-t, t-L\})^{-1} \lesssim k(1-|2t/T-1|)^{-1} $ for any $|t|\in [L+1/k, R-1/k]$. Moreover, for any $|t|\in [L, L+1/k]\cup [R-1/k, R]$, $k^2 \lesssim k(1-|2t/T-1|)^{-1} $. 

The RHS can be upper bounded by:
\begin{align}
    \|  x_I(t)\cdot  e^{2\pi\i f_0 \beta} - x_I(t+\beta) \|_{D_U}^2 \leq &~ 2 \| d_{I, x}(t)\|_{D_U}^2 + 2 \| d_{I, x}(t) -( x_I(t)\cdot  e^{2\pi\i f_0 \beta} - x_I(t+\beta)) \|_{D_U}^2  \notag \\
    \lesssim &~  \| d_{I, x}(t)\|_{D_U}^2 +   \delta_1^2 \|x_I(t)\|^2_{T}\notag \\
    = &~  \| d_{I, z}(t)\|_{D_U}^2 +   \delta_1^2 \|x_I(t)\|^2_{T}\notag \\
    \lesssim &~   \|d_z(t)\|_{D_U}^2 + \| d_{I, z}(t)-d_z(t)\|_{D_U}^2 +   \delta_1^2 \|x_I(t)\|^2_{T}\notag \\
    \lesssim &~   \|d_z(t)\|_{D_U}^2 + \| d_{I, z}(t)-d_z(t)\|_{D_U}^2 +   \delta_1^2 \|x(t)\|^2_{T}
    \label{eq:2_d_x_I_z_diff_condition_number_xI}
\end{align}
where the first step follows from $(a+b)^2\leq 2a^2+2b^2$, the second step follows from Lemma \ref{lem:z_diff_norm_condition_number_xI_pre}, the third step follows from Lemma \ref{lem:z_diff_xIH2xHI}, the forth step follows from $(a+b)^2\leq 2a^2+2b^2$, the last step follows from Claim \ref{clm:x_I2x}.
For the second term, we have that
\begin{align*}
    &~ \| d_{I, z}(t)-d_z(t)\|_{D_U}^2 \notag \\
   \lesssim &~    \frac{1}{|U|} \int_{U} | d_{I, z}(t)-d_z(t)|^2 \d t \notag \\
\lesssim &~    \frac{1}{|U|} \int_{-\infty}^\infty | d_{I, z}(t)-d_z(t)|^2 \d t \notag \\
 \lesssim &~    \frac{1}{|U|} \delta_1^2 \|x(t)\|^2_{T} \notag \\
  \lesssim &~     \delta_1^2 \|x(t)\|^2_{T} 
\end{align*}
where the first step follows from the definition of the norm, second step  is straight forward, the third step follows from Lemma \ref{lem:z_diff_xHI_sub_XHG_T_norm_is_small} with appropriate scaling, the forth step follows from $|U|\gtrsim 1$. Hence,
\begin{align}\label{eq:1p5_d_I_z_diff_condition_number_xI}
    \|  x_I(t)\cdot  e^{2\pi\i f_0 \beta} - x_I(t+\beta) \|_{D_U}^2 \lesssim \|d_z(t)\|_{D_U}^2 + \delta_1^2\|x(t)\|_{T}^2.
\end{align}

Therefore, we have that
\begin{align}
   |d_{I, z}(t)|^2 =&~ |d_{I, x}(t)|^2 \notag \\
    \lesssim &~ (|x_I(t)\cdot  e^{2\pi\i f_0 \beta} - x_I(t+\beta)|+ \delta_1 k \|x_I(t)\|_{T})^2 \notag \\
    \lesssim &~ |x_I(t)\cdot  e^{2\pi\i f_0 \beta} - x_I(t+\beta)|^2+ \delta_1^2 k^2 \|x_I(t)\|_{T}^2 \notag \\
    \lesssim &~ |x_I(t)\cdot  e^{2\pi\i f_0 \beta} - x_I(t+\beta)|^2+ \delta_1 \|x_I(t)\|_{T}^2 \notag \\
    \lesssim &~ {\min} \{\frac{k }{1-|2t/T-1|}, k^2\} \cdot  \|  x_I(t)\cdot  e^{2\pi\i f_0 \beta} - x_I(t+\beta) \|_{D_U}^2 + \delta_1 \|x_I(t)\|_{T}^2\notag \\
   \lesssim &~ {\min} \{\frac{k }{1-|2t/T-1|}, k^2\} \cdot  \|  x_I(t)\cdot  e^{2\pi\i f_0 \beta} - x_I(t+\beta) \|_{D_U}^2 + \delta_1 \|x(t)\|_{T}^2 \notag \\
\lesssim &~ {\min} \{\frac{k }{1-|2t/T-1|}, k^2\} \cdot  (\|d_z(t)\|_{D_U}^2  +   \delta_1^2 \|x(t)\|^2_{T}) + \delta_1 \|x(t)\|_{T}^2 \notag \\
\lesssim &~ {\min} \{\frac{k }{1-|2t/T-1|}, k^2\} \cdot  \|d_z(t)\|_{D_U}^2  + \delta_1 \|x(t)\|_{T}^2, 
\label{eq:2_d_I_z_diff_condition_number_xI}
\end{align}
where the first step follows from Lemma \ref{lem:z_diff_xIH2xHI}, the second step follows from Lemma \ref{lem:z_diff_condition_number_xI_pre}, the third step follows from $(a+b)^2\leq 2a^2+2b^2$, the forth step follows from $\delta_1 k^2 \leq 1$, the fifth step follows from Eq.~\eqref{eq:1_d_x_I_z_diff_condition_number_xI}, the six step follows from Claim \ref{clm:x_I2x}, the seventh step follows from Eq.~\eqref{eq:1p5_d_I_z_diff_condition_number_xI}, the last step follows from $\delta_1 k^2 \lesssim 1 $.

Finally, we have
\begin{align*}
    |d_z(t)|^2 \leq &~ 2|d_z(t)-d_{I,z}(t)|^2 + 2|d_{I, z}(t)|^2 \notag \\
    \leq &~ 2\delta_1^2 T|S| \cdot  \|x(t) \|_{T}^2 + 2|d_{I, z}(t)|^2 \notag \\
    \leq &~ 2\delta_1 \cdot  \|x(t) \|_{T}^2 + 2|d_{I, z}(t)|^2 \notag \\
    \lesssim &~ \delta_1 \cdot  \|x(t) \|_{T}^2 +  {\min} \{\frac{k }{1-|2t/T-1|}, k^2\} \cdot  \|d_z(t)\|_{D_U}^2   
\end{align*}
where the first step follows from $ (a+b)^2\leq 2a^2+2b^2$, the second step follows from Lemma \ref{lem:z_diff_xHG_sub_XHI_is_small}, the third step follows from $\delta_1 T|S| \leq 1$, the forth step follows from Eq.~\eqref{eq:2_d_I_z_diff_condition_number_xI}. 

The lemma is then proved.

\end{proof}

\begin{claim}[Energy Reduction by Ideal Filter]\label{clm:x_I2x}

Let $H$ be defined as in Definition~\ref{def:effect_H_k_sparse}, ${G}^{(j)}_{\sigma,b}$ be defined as in Definition \ref{def:G_j_sigma_b} and $I=I_{\sigma,b}^{(j)}$ be the corresponding ideal filter as in Eq.~\eqref{eq:def_ideal_filter}. 

For any $x\in {\cal F}_{k,F}$, for any $(\sigma,b)$ such that Large Offset event does not happen, then we have %
\begin{align*}
\|(x*I)(t) \|_T\lesssim \|x(t)\|_T
\end{align*}
\end{claim}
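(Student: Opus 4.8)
Proof plan for Claim~\ref{clm:x_I2x} ($\|(x*I)(t)\|_T\lesssim \|x(t)\|_T$).

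The plan is to relate the energy of the ideally filtered signal $(x*I)(t)$ back to the energy of $x(t)$ by routing through the time-domain filter $H$ and the already-established approximation bounds. First I would note that $(x*I)(t)$ is a $k$-Fourier-sparse signal (its Fourier transform is $\wh{x}(f)\cdot\wh{I}(f)$, supported on a subset of the frequencies of $x$), so by Lemma~\ref{lem:property_of_filter_H} Property~\RN{5} applied to the $k$-Fourier-sparse signal $(x*I)(t)$ we have $\|(x*I)(t)\|_T^2\lesssim \|(x*I)(t)\cdot H(t)\|_T^2$; more precisely Property~\RN{5} gives $\int_{-1}^{1}|(x*I)(t)\cdot H(t)|^2\d t\ge 0.9\int_{-1}^{1}|(x*I)(t)|^2\d t$ (after the standard rescaling from $[-1,1]$ to $[0,T]$), so $\|(x*I)(t)\|_T^2\le \tfrac{1}{0.9}\|(x*I)(t)\cdot H(t)\|_T^2$.

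Next I would use Lemma~\ref{lem:xIH2xHI} (conditioning on the Large Offset event not happening, which is assumed in the hypothesis) to swap the order of filtering: $(x*I)(t)\cdot H(t)=(x\cdot H)*I(t)$. Then I would compare this with the true filtered signal $(x\cdot H)*G^{(j)}_{\sigma,b}(t)$ via the triangle inequality: $\|(x\cdot H)*I(t)\|_T\le \|(x\cdot H)*G^{(j)}_{\sigma,b}(t)\|_T+\|(x\cdot H)*I(t)-(x\cdot H)*G^{(j)}_{\sigma,b}(t)\|_T$. The first term is bounded by $\|x\cdot H\|_T\lesssim\|x\|_T$ by Parseval together with Lemma~\ref{lem:property_of_filter_G} (since $|\wh{G}^{(j)}_{\sigma,b}(f)|\le 1$ and $\int|(\wh{x}*\wh{H})(f)|^2\d f=\int|(x\cdot H)(t)|^2\d t$), and then by Lemma~\ref{lem:property_of_filter_H} Property~\RN{5} and Property~\RN{4} to pass from $\|x\cdot H\|_T$ to $\|x\|_T$. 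The second term is bounded by $\delta_1\|x(t)\|_T$ via Lemma~\ref{lem:xHI_sub_XHG_T_norm_is_small}. Combining, $\|(x*I)(t)\|_T\lesssim \|x(t)\|_T+\delta_1\|x(t)\|_T\lesssim\|x(t)\|_T$, since $\delta_1\le 1$.

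The only mild subtlety — not really an obstacle — is making sure the probabilistic qualifier is handled correctly: Lemma~\ref{lem:xIH2xHI} and Lemma~\ref{lem:xHI_sub_XHG_T_norm_is_small} hold on the $0.9$-probability event that the Large Offset event does not happen, which is exactly the hypothesis of the claim, so the statement is deterministic given that conditioning and there is nothing further to union-bound here. I expect the proof to be short, essentially a four-line chain of inequalities citing Lemma~\ref{lem:property_of_filter_H}, Lemma~\ref{lem:xIH2xHI}, Parseval/Lemma~\ref{lem:property_of_filter_G}, and Lemma~\ref{lem:xHI_sub_XHG_T_norm_is_small}, with the trivial bound $\delta_1\le 1$ at the end.
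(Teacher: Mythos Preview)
Your proposal is correct and follows the same overall scaffold as the paper: use Property~\RN{5} of $H$ to insert $H$ into $\|(x*I)\|_T$, pass to the frequency domain via Parseval, and then bound by $\|x\cdot H\|_{L_2}\lesssim\|x\|_T$ using Properties~\RN{4}--\RN{5}. The one difference is the middle step: you take a detour through $G^{(j)}_{\sigma,b}$, writing $\|(x\cdot H)*I\|_T\le \|(x\cdot H)*G^{(j)}_{\sigma,b}\|_T+\|(x\cdot H)*(I-G^{(j)}_{\sigma,b})\|_T$ and invoking Lemma~\ref{lem:xHI_sub_XHG_T_norm_is_small} for the second term, whereas the paper argues directly that $\int|(\wh{x}\cdot\wh{I})*\wh{H}|^2\le\int|\wh{x}*\wh{H}|^2$ --- this holds because, when the Large Offset event does not happen, each cluster $f_j+\supp(\wh{H})$ is either entirely inside $\supp(\wh{I})$ or entirely outside, so the kept and dropped clusters have disjoint supports and dropping can only decrease the $L_2$ energy. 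The paper's route is slightly cleaner (no $\delta_1$ error term to absorb), while yours has the advantage of reusing Lemma~\ref{lem:xHI_sub_XHG_T_norm_is_small} as a black box; both are short and valid.
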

\begin{proof}
Let $S=\supp(\hat{x}* \hat{H})$.
We have that
\begin{align*}
T \|(x*I)(t) \|_T \lesssim &~ T \|(x*I)(t) \cdot H(t)\|_T \\
=&~ \int_0^T |(x*I)(t) \cdot H(t)|^2 \d t \\
\leq&~ \int_{-\infty}^\infty |(x*I)(t)\cdot H(t)|^2 \d t \\
=&~ \int_{-\infty}^\infty |(\wh{x}\cdot \wh{I})(f)*\wh{H}(f)|^2 \d f \\
=&~ \int_S |(\wh{x}\cdot \wh{I})(f)*\wh{H}(f)|^2 \d f \\
=&~ \int_S |\wh{x}(f)*\wh{H}(f)|^2 \d f \\
\leq &~ \int_{-\infty}^\infty |\wh{x}(f)*\wh{H}(f)|^2 \d f \\
= &~ \int_{-\infty}^\infty |x\cdot H(t)|^2 \d t \\
\lesssim &~ \int_0^T |x(t)|^2 \d t \\
= &~ T \|x(t)\|_T^2
\end{align*}
where the first step follows from Lemma \ref{lem:property_of_filter_H} Property \RN{5}, the second step follows from the definition of the norm, the third step is straight forward, the forth step follows from Parseval's theorem, the fifth and sixth steps follow from Large Offset event not happening, the seventh step is straight forward, the eighth step follows from Parseval's theorem, the ninth step follows from Lemma \ref{lem:property_of_filter_H} Property \RN{4} and \RN{6}, the last step follows from the definition of the norm.

\end{proof}

\begin{figure}%
    \centering
    {\includegraphics[width=\textwidth]{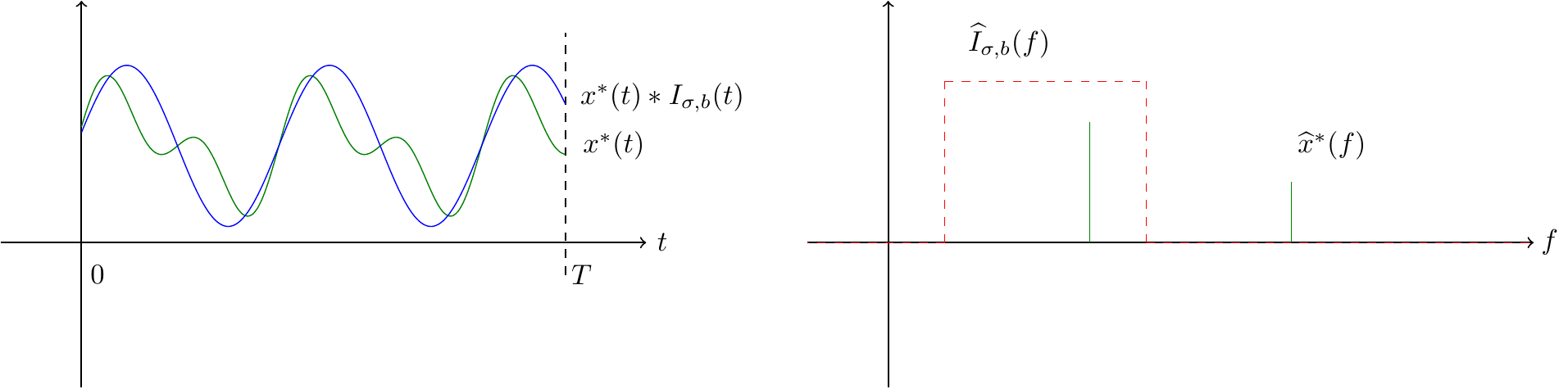}}
\caption{An illustration of the energy reduction by ideal filter. $I_{\sigma, b}$ is the ideal filter and $x^*(t)$ is a Fourier sparse signal. The energy of $x^*(t)$ in duration $[0, T]$ is reduced by applying the ideal filter, i.e., $\|x^**I_{\sigma,b}(t)\|_T\lesssim \|x^*(t)\|_T$. 
}\label{fig:mono}
\end{figure}

\section{Empirical Energy Estimation}

\label{sec:emp_energy_est}

The goal of this section is to show how to estimate a signal's energy using a few samples. We start with a general sampling and reweighing method (see Section~\ref{sec:emp_energy_est:sampling_reweighing}). Then, combining with the energy bounds derived in previous section, we obtain sample-efficient energy estimation methods for Fourier-sparse signals and filtered signals (see Section~\ref{sec:emp_energy_est:fourier_filtered}).
We further extend our methods to estimate the energy of filtered signals and local-test signals within a \emph{sub-interval} in the time duration (see Section~\ref{sec:emp_energy_est:filtered_local}). Finally, we prove several technical lemmas (see Section~\ref{sec:emp_energy_est:technical_lemmas}).

Throughout this section, for the convenience, we use a slightly different notation for the $T$-norm:
\begin{align*}
\|z\|_T^2:=\frac{1}{2T}\int_{-T}^T |z(t)|^2 \d t.  
\end{align*}
This results of using this $T$-norm is equivalent with the result of the norm taking on $[0, T]$, since we can always re-scaling the signal and transform the result into the new $T$-norm result. %

\subsection{Sampling and reweighing}\label{sec:emp_energy_est:sampling_reweighing}

In this section, we provide a generic sample-efficient method for estimating the energy of any function using discrete samples with proper weights.

\begin{lemma}\label{lem:link_weight_sampling_size__condition_num}

Let $k\in\mathbb{N}_+$ and $D$ be a probability distribution such that $\int_{-T}^T D(t) \d t= 1$. 
For any $\epsilon,\rho \in (0,1)$ and function $z:\R\rightarrow \C$, let $S_{D}=\{t_1, \cdots , t_{s}\}$ be a set of i.i.d. samples from $D$ of size
\begin{align*}
  s\ge \Big({\max}_{t\in [-T,T]}~\frac{|z(t)|^2}{D(t)}\Big)\cdot O\Big(\frac{\log(1/\rho)}{\eps^{2}T\|z(t)\|_T^2}\Big)  .
\end{align*}
Let the weight vector $w\in \R^s$ be defined by $w_i:=1/(2Ts D(t_i))$ for $i\in[s]$. 

Then with probability at least $1-\rho$, we have %
\begin{equation*}
(1-\epsilon)\| z(t)\|^2_T \leq \| z(t)\|^2_{S_D,w} \leq (1 + \epsilon)\| z(t)\|^2_T,
\end{equation*} 
where $\|z\|_T^2:=\frac{1}{2T}\int_{-T}^T |z(t)|^2 \d t$.
\end{lemma}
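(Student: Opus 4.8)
The plan is to apply a Chernoff-type concentration bound to the random variables
$X_i := w_i |z(t_i)|^2 = \frac{|z(t_i)|^2}{2Ts\,D(t_i)}$, $i \in [s]$, which are i.i.d.\ and nonnegative. First I would compute the expectation: since $t_i \sim D$,
\[
\E[X_i] = \int_{-T}^{T} D(t)\cdot \frac{|z(t)|^2}{2Ts\,D(t)}\,\d t = \frac{1}{2Ts}\int_{-T}^{T} |z(t)|^2 \,\d t = \frac{1}{s}\|z(t)\|_T^2,
\]
so that $\mu := \E\big[\sum_{i=1}^s X_i\big] = \|z(t)\|_T^2$, which is exactly the target quantity. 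Thus $\|z(t)\|_{S_D,w}^2 = \sum_i X_i$ is an unbiased estimator of $\|z(t)\|_T^2$, and the lemma reduces to showing concentration of $\sum_i X_i$ around its mean $\mu$ within a $(1\pm\epsilon)$ factor.

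Next I would bound the range of each $X_i$ in order to rescale into $[0,1]$ and invoke Lemma~\ref{lem:chernoff_bound}. We have
\[
X_i = \frac{|z(t_i)|^2}{2Ts\,D(t_i)} \le \frac{1}{2Ts}\cdot \max_{t\in[-T,T]}\frac{|z(t)|^2}{D(t)} =: M.
\]
Set $Y_i := X_i/M \in [0,1]$, so $Y := \sum_i Y_i$ has mean $\mu_Y = \mu/M = \|z(t)\|_T^2/M$. Applying the two-sided Chernoff bound (Lemma~\ref{lem:chernoff_bound}) with parameter $\epsilon$,
\[
\Pr\big[\,|Y - \mu_Y| \ge \epsilon\,\mu_Y\,\big] \le 2\exp\!\Big(-\tfrac{\epsilon^2}{3}\,\mu_Y\Big) = 2\exp\!\Big(-\tfrac{\epsilon^2}{3}\cdot\frac{\|z(t)\|_T^2}{M}\Big).
\]
Now substitute $M = \frac{1}{2Ts}\max_t \frac{|z(t)|^2}{D(t)}$, so that $\frac{\|z(t)\|_T^2}{M} = 2Ts\,\|z(t)\|_T^2 \big/ \max_t \frac{|z(t)|^2}{D(t)}$. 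By the hypothesis on $s$, namely $s \ge \big(\max_t \frac{|z(t)|^2}{D(t)}\big)\cdot \Theta\!\big(\frac{\log(1/\rho)}{\epsilon^2 T\|z(t)\|_T^2}\big)$, we get $\frac{\|z(t)\|_T^2}{M} \gtrsim \frac{\log(1/\rho)}{\epsilon^2}$, hence the exponent $\frac{\epsilon^2}{3}\cdot\frac{\|z(t)\|_T^2}{M} \gtrsim \log(1/\rho)$, making the failure probability at most $\rho$ after absorbing constants into the hidden $O(\cdot)$ in the sample bound. Finally, rescaling back ($Y = \sum_i X_i / M$ and $\mu_Y = \mu/M$) the event $|Y - \mu_Y| < \epsilon \mu_Y$ is identical to $|\sum_i X_i - \mu| < \epsilon\mu$, i.e.\ $(1-\epsilon)\|z(t)\|_T^2 \le \|z(t)\|_{S_D,w}^2 \le (1+\epsilon)\|z(t)\|_T^2$, as claimed.

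The only mild subtlety — and the step I would be most careful with — is the bookkeeping of constants: the Chernoff bound as stated gives $\exp(-\frac{\epsilon^2}{2+\epsilon}\mu)$ and $\exp(-\frac{\epsilon^2}{2}\mu)$ for the two tails, so for $\epsilon \in (0,1)$ one uses $\frac{\epsilon^2}{3}$ as a uniform lower bound on both exponents; this is why the hypothesis carries an unspecified $O(\cdot)$ absorbing these factors. There is no real obstacle here: the lemma is essentially a clean importance-sampling / Chernoff argument, and the ``hard'' content (controlling $\max_t |z(t)|^2/D(t)$ for the specific filtered and local-test signals) is deferred to the energy bounds of the previous sections and will be invoked when this lemma is applied.
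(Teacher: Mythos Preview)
Your proposal is correct and essentially identical to the paper's proof: both rescale the i.i.d.\ summands by $\max_{t}\frac{|z(t)|^2}{D(t)}$ to land in $[0,1]$, apply the Chernoff bound (Lemma~\ref{lem:chernoff_bound}) with the uniform exponent $\epsilon^2/3$, and then back out the sample requirement from the failure probability. The only cosmetic difference is notation (the paper names the rescaled variables $z_D(t_i)$ and carries the $2Ts$ factor inside its $M$, whereas you carry it outside), but the computation and the resulting $\mu$ in the exponent are the same.
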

\begin{proof}
Let $M:={\max}_{t\in [-T,T]}\frac{|z(t)|^2}{D(t)}$. Let $z_D(t):=  \frac{1}{M} \frac{|z(t)|^2}{D(t)}$. By applying Chernoff bound (Lemma \ref{lem:chernoff_bound}) for the random variables $z_D(t_1),\dots,z_D(t_s)$, %
we get that, %
\begin{align}
    \Pr_{t_i\sim D}\Big[\Big| \sum_{i=1}^s  z_D(t_i) - \mu \Big| \leq  \epsilon\mu\Big]\geq 1-2\exp(-\epsilon^2 \mu / 3), \label{eq:norm_preseving_z_under_simplyfy}
\end{align}
where $\mu:=\sum_{i=1}^s \E_{t_i\sim D} [z_D(t_i)]=s\cdot \E_{t\sim D} [z_D(t)]$.

We first consider the expectation: 
\begin{align*}
    \E_{t\sim D} [z_D(t)] =&~ \int_{-T}^T D(t) \cdot \frac{1}{M}\frac{|z(t)|^2}{D(t) } \d t \\
    =&~ \frac{1}{ M} \int_{-T}^T | z(t)|^2 \d t\\
    =&~ \frac{2T}{M} \|z(t)\|_T^2
\end{align*}
where the first step follows from the definition of expectation, the second step is straightforward, the third step follows from the definition of the norm. Thus,
\begin{align}
    \mu =s\cdot \E_{t\sim D} [z_D(t)]= \frac{2Ts}{M} \|z(t)\|_T^2. \label{eq:sum_E_z_D}
\end{align}

Then, we consider the sum of samples:
\begin{align}
    \sum_{i=1}^s  z_D(t_i) =&~ \sum_{i=1}^s \frac{1}{M}\frac{|z(t_i)|^2}{D(t_i) }\notag \\
    = &~ \sum_{i=1}^s \frac{2w_iTs}{M} {|z(t_i)|^2} \notag\\
    = &~ \frac{2Ts}{M} \|z(t)\|^2_{S_D, w}\label{eq:sum_z_D}
\end{align}
where the first step follows from the definition of $z_D$, the second step follows from the definition of $w_i$, the last step follows from the definition of the norm.  

Putting Eqs.~\eqref{eq:norm_preseving_z_under_simplyfy} - \eqref{eq:sum_z_D} together, we get that with probability at least $1-2\exp(-\epsilon^2 \mu / 3)$,
\begin{align*}
    \Big| \frac{2Ts}{M} \|z(t)\|_{S_D, w}^2 - \frac{2Ts}{ M} \|z(t)\|_T^2 \Big| \leq  \epsilon\cdot  \frac{2Ts}{ M} \|z(t)\|_T^2,
\end{align*}
which can be simplified at follows:  
\begin{align*}
    |  \|z(t)\|_{S_D, w}^2 -  \|z(t)\|_T^2 | \leq  \epsilon\cdot   \|z(t)\|_T^2. 
\end{align*}

Finally, we need the success probability to be at least $1-\rho$, which requires that:
\begin{align*}
    1- 2 \exp\Big(-\frac{\epsilon^2}{3} \frac{2Ts}{ M} \|z(t)\|_T^2\Big)
    =&~  1- 2 \exp\Big(-\frac{\epsilon^2}{3} \frac{2Ts}{\cdot  {\max}_{t\in [-T,T]}\{{|z(t)|^2}/{D(t)}\}} \|z(t)\|_T^2\Big) \\
\ge &~  1- \rho.
\end{align*}
Hence, we need the sample complexity $s$ to be at least 
\begin{align*}
    s\geq \Big({\max}_{t\in [-T,T]}~\frac{|z(t)|^2}{D(t)}\Big)\cdot O\Big(\frac{\log(1/\rho)}{\eps^{2}T\|z(t)\|_T^2}\Big).
\end{align*}
\end{proof}

\subsection{Energy estimation for Fourier-sparse signals and filtered signals}\label{sec:emp_energy_est:fourier_filtered}
The goal of this section is to apply Lemma~\ref{lem:link_weight_sampling_size__condition_num} for Fourier-sparse signals and filtered signals. 

The following lemma defines the sampling distribution:
\begin{lemma}\label{lem:dist_well_define}
For $k\in\mathbb{N}_+$, define a probability distribution $D$ as follows:
\begin{align}\label{eq:def_dist_D}
D(t):=
\begin{cases}
{c}\cdot (1-|t/T| )^{-1}T^{-1}, & \text{ for } |t| \le T(1-{1}/k)\\
c \cdot  k T^{-1}, & \text{ for } |t|\in [T(1-{1}/k), T]
\end{cases} 
\end{align} 
where $c=\Theta(\log(k)^{-1})$ is a normalization factor such that $\int_{-T}^T D(t) \d t= 1$. Then, $D$ is well-defined.
\end{lemma}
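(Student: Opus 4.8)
The plan is to verify the two defining requirements of a probability density---nonnegativity and integrability---for the function appearing in Eq.~\eqref{eq:def_dist_D}, and then to evaluate its total mass explicitly in order to pin down the normalization constant $c$ and confirm that $c = \Theta(\log(k)^{-1})$.

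First I would strip off the factor $c$, call the resulting function $\tilde D$, and check nonnegativity. This is immediate: on $|t|\le T(1-1/k)$ one has $1-|t/T|\ge 1/k>0$, hence $(1-|t/T|)^{-1}T^{-1}>0$, while on $T(1-1/k)\le |t|\le T$ the value is $kT^{-1}>0$. Since $\tilde D$ is even in $t$, the total integral equals $2\int_0^T \tilde D(t)\,\d t$, which I would split at the breakpoint $t=T(1-1/k)$ into an ``interior'' part and a ``boundary'' part.

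The boundary part is trivial: the interval $[T(1-1/k),T]$ has length $T/k$ and $\tilde D\equiv kT^{-1}$ there, so it contributes $2\cdot kT^{-1}\cdot (T/k) = 2$. For the interior part, the substitution $u=1-t/T$ (so $\d u = -T^{-1}\,\d t$, with $u$ ranging over $[1/k,1]$) gives
\begin{align*}
2\int_0^{T(1-1/k)}\frac{1}{1-t/T}\cdot\frac{1}{T}\,\d t \;=\; 2\int_{1/k}^1 \frac{\d u}{u} \;=\; 2\ln k,
\end{align*}
which is finite. Hence $\int_{-T}^T \tilde D(t)\,\d t = 2\ln k + 2 = \Theta(\log k)$, and choosing $c := (2\ln k + 2)^{-1} = \Theta(\log(k)^{-1})$ yields $\int_{-T}^T D(t)\,\d t = 1$, so $D$ is a well-defined probability distribution matching the claimed order of magnitude for $c$.

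I do not anticipate any genuine obstacle here. The only subtlety worth flagging is that the pole of $(1-|t/T|)^{-1}$ at $|t|=T$ is deliberately truncated at $|t|=T(1-1/k)$; this truncation is exactly what makes the density integrable while keeping the normalization constant only logarithmically small in $k$, which is the feature that later feeds into the location-dependent energy bound (Theorem~\ref{thm:bound_k_sparse_FT_x_middle}) used in the sampling argument.
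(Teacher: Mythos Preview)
Your proof is correct and takes essentially the same approach as the paper's own proof: both split the integral at $|t|=T(1-1/k)$, compute the boundary contribution as $2$ and the interior contribution as $2\ln k$, and conclude $c^{-1}=\Theta(\log k)$. You are simply more explicit (checking nonnegativity, writing out the substitution, and nailing the exact constant $2\ln k+2$), whereas the paper jumps directly to $c^{-1}\eqsim \log(k)+1$.
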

\begin{proof}

We justify that $D$ can be normalized with $c=\Theta(\log(k)^{-1}))$.
 By the condition $\int_{-T}^T D(t) \d t= 1$, we have 
\begin{align*}
2\int_0^{T(1-{1}/k)}\frac{c}{(1-|t/T|)T}\d t + 2\int_{T(1-{1}/k)}^T  c \cdot \frac{k}{T} \d t= 1,
\end{align*}
which implies that
\begin{align*}
c^{-1} = &~2\int_0^{T(1-{1}/k)}\frac{1}{(1-|t/T|)T}\d t + 2\int_{T(1-{1}/k)}^T   \frac{k}{T} \d t\\
\eqsim &~ \log(k) +  1\\
= &~ \Theta(\log(k)).
\end{align*}
Thus, we get that $c=\Theta(\log(k)^{-1})$.

\end{proof}

The following lemma gives the sampling complexity for estimating the energy of a Fourier-sparse signal. The main idea is to apply the energy bounds in Section~\ref{sec:energy_bound}.
\begin{lemma}[Energy estimation for Fourier-sparse signals]\label{lem:x_energy_preserving_000}
Let $D$ be the probability distribution defined as Eq.~\eqref{eq:def_dist_D}.
Let $x\in {\cal F}_{k,F}$.
For any $\epsilon,\rho \in (0,1)$, let $S_{D}=\{t_1, \cdots , t_{s}\}$ be a set of i.i.d. samples from $D(t)$ of size $s \ge O(\eps^{-2}k\log(k)\log(1/\rho))$. Let the weight vector $w\in \R^s$ be defined by $w_i:=1/(2Ts D(t_i))$ for $i\in[s]$. 

Then with probability at least $1-\rho$, we have 
\begin{equation*}
(1-\epsilon)\| x(t)\|^2_T \leq \| x(t)\|^2_{S_D,w} \leq (1 + \epsilon)\| x(t)\|^2_T.
\end{equation*} 
\end{lemma}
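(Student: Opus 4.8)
The plan is to invoke the generic sampling-and-reweighing lemma (Lemma~\ref{lem:link_weight_sampling_size__condition_num}) with $z(t)=x(t)$ and with $D$ the distribution of Eq.~\eqref{eq:def_dist_D}. That lemma says that a set of
\[
s\;\ge\;\Big(\max_{t\in[-T,T]}\frac{|x(t)|^2}{D(t)}\Big)\cdot O\Big(\frac{\log(1/\rho)}{\eps^2\, T\,\|x(t)\|_T^2}\Big)
\]
i.i.d.\ samples from $D$, reweighted by $w_i=1/(2TsD(t_i))$, suffices for the desired $(1\pm\eps)$ multiplicative estimate with failure probability at most $\rho$. So the entire task reduces to bounding the quantity $M:=\max_{t\in[-T,T]}\frac{|x(t)|^2}{D(t)}$ and showing it is $O(k\log k)\cdot T\,\|x\|_T^2$, which then makes the sample bound collapse to $s\ge O(\eps^{-2}k\log(k)\log(1/\rho))$.

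The key step is therefore the pointwise bound on $|x(t)|^2/D(t)$, and here I would split into the two regimes defining $D$. For $|t|\le T(1-1/k)$, we have $D(t)=c(1-|t/T|)^{-1}T^{-1}$ with $c=\Theta(1/\log k)$, so $\frac{|x(t)|^2}{D(t)}=\frac{T}{c}(1-|t/T|)\,|x(t)|^2$. Rescaling the time domain to $[-1,1]$ via $t\mapsto t/T$ and applying the location-dependent energy bound for $k$-Fourier-sparse signals (Theorem~\ref{thm:bound_k_sparse_FT_x_middle}, which gives $|x(t)|^2\lesssim \frac{k}{1-|t|}\|x\|_D^2$ with $D=\mathrm{Uniform}([-1,1])$, and noting $\|x\|_{\mathrm{Uniform}([-1,1])}^2$ is exactly the rescaled $\|x\|_T^2$), the factor $(1-|t/T|)$ cancels the singularity, leaving $\frac{|x(t)|^2}{D(t)}\lesssim \frac{T}{c}\cdot k\,\|x\|_T^2 = O(k\log k)\, T\,\|x\|_T^2$. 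For $|t|\in[T(1-1/k),T]$, we have $D(t)=ck/T$, so $\frac{|x(t)|^2}{D(t)}=\frac{T}{ck}|x(t)|^2$; here I use the location-independent bound $|x(t)|^2\lesssim k^2\|x\|_T^2$ (Theorem~\ref{thm:energy_bound}) to get $\frac{|x(t)|^2}{D(t)}\lesssim \frac{T}{ck}\cdot k^2\|x\|_T^2 = O(k\log k)\,T\,\|x\|_T^2$. Both regimes give the same bound $M = O(k\log k)\cdot T\,\|x\|_T^2$.

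Plugging $M=O(k\log k)\cdot T\|x\|_T^2$ into the sample-complexity requirement of Lemma~\ref{lem:link_weight_sampling_size__condition_num}, the $T\|x\|_T^2$ factors cancel and we obtain that $s\ge O(\eps^{-2}k\log(k)\log(1/\rho))$ samples suffice, which is exactly the hypothesis of the lemma being proved. One small bookkeeping point: the $T$-norm used in this section is $\frac{1}{2T}\int_{-T}^T|z|^2$ rather than $\frac{1}{T}\int_0^T$, and the energy bounds of Section~\ref{sec:energy_bound} are stated over $(-1,1)$ with the uniform distribution, so I would make the change of variables explicit and note that it is exactly the normalization built into $D$ (with $\int_{-T}^T D=1$ guaranteed by Lemma~\ref{lem:dist_well_define}), so no constants are lost. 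I do not expect any genuine obstacle here—the only mild subtlety is making sure the $(1-|t/T|)$ weight in $D$ is precisely what is needed to neutralize the $\frac{1}{1-|t|}$ blow-up in Theorem~\ref{thm:bound_k_sparse_FT_x_middle} in the interior regime while the flat $k/T$ tail of $D$ handles the boundary regime via the cruder $k^2$ bound; this is the design reason $D$ has its two-piece form.
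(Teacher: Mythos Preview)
Your proposal is correct and matches the paper's proof essentially step for step: invoke Lemma~\ref{lem:link_weight_sampling_size__condition_num}, bound $\max_{t}|x(t)|^2/D(t)$ using the two Fourier-sparse energy bounds (Theorems~\ref{thm:energy_bound} and~\ref{thm:bound_k_sparse_FT_x_middle}) against the two pieces of $D$, and observe that the result is $O(k\log k)\,T\|x\|_T^2$, which yields the claimed sample complexity. The only cosmetic difference is that the paper packages the two regimes as a single $\min\{k/(1-|t/T|),\,k^2\}$ bound before dividing by $D(t)$, whereas you treat the interior and boundary cases separately; the content is identical.
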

\begin{proof}

By applying Lemma \ref{lem:link_weight_sampling_size__condition_num}, we have that the desired result satisfy when
\begin{align*}
  s\ge \Big({\max}_{t\in [-T,T]}~\frac{|x(t)|^2}{D(t)}\Big)\cdot O\Big(\frac{\log(1/\rho)}{\eps^{2}T\|x(t)\|_T^2}\Big).  
\end{align*}

By Fourier-sparse signals' energy bound (Theorem \ref{thm:energy_bound} and Theorem \ref{thm:bound_k_sparse_FT_x_middle} with  $x(t)=x(T\cdot t)$), we have that
\begin{align}
     {|x(t)|^2} 
    \lesssim&~ {\min}\Big\{ \frac{k}{1-|t/T|}, k^2\Big\} \cdot {\|x(t)\|_T^2}~~~\forall t\in [-T,T].\label{eq:_111_energy_bound_of_t}
\end{align}
Thus,
\begin{align}
    &~ {\max}_{t\in [-T,T]}\frac{|x(t)|^2}{D(t)} \notag\\
    \lesssim&~ {\max}_{t\in [-T,T]}~{\min}\Big\{ \frac{k}{1-|t/T|} , k^2  \Big\}\cdot \frac{\|x(t)\|_T^2}{D(t)} \notag\\
    \lesssim&~ {\max}_{t\in [-T,T]}~{\min}\Big\{ \frac{k}{1-|t/T|} \frac{T(1-|t/T|)}{c} , k^2 \frac{T}{ck} \Big\} \cdot \|x(t)\|_T^2 \notag\\
    =&~ kT \|x(t)\|_T^2 /c  \notag\\
    \simeq &~ k\log(k) T \|x(t)\|_T^2  ,\label{eq:_111_bound_z_div_D}
\end{align}
where the first step follows from Eq.~\eqref{eq:_111_energy_bound_of_t}, the second step follows from the definition of $D(t)$, the third step is straight forward, the forth step follows from $c=\Theta(\log(k)^{-1})$.

Hence, we get that 
\begin{align*}
    s\geq O(k\log(k) T \|x(t)\|_T^2) \cdot O\Big(\frac{\log(1/\rho)}{\eps^{2}T\|x(t)\|_T^2}\Big) = O(\eps^{-2} k\log(k)  \log(1/\rho)).
\end{align*}

The lemma is then proved.

\end{proof}

Using the energy bound for filtered signals, we immediately get the following lemma.

\begin{lemma}[Energy estimation for filtered signals]\label{lem:z_energy_preserving}
Let $D$ be the probability distribution defined as Eq.~\eqref{eq:def_dist_D}.
Let $x\in {\cal F}_{k,F}$.
Let $H$ be defined as in Definition~\ref{def:effect_H_k_sparse}. Let ${G}^{(j)}_{\sigma,b}$ be defined as in Definition \ref{def:G_j_sigma_b}.
Let $j\in [B]$ satisfying that there exists an $f^*$ with $h_{\sigma,b}(f^*)=j$ such that:
\begin{align*}
    \int_{f^*-{\Delta_h}}^{f^*+{\Delta_h}} | \widehat{x\cdot H}(f) |^2 \mathrm{d} f \geq  T\N^2/k,
\end{align*}
where $\N^2\geq \delta \|x\|_T^2$.
Let $z(t):=(x\cdot H)*{G}^{(j)}_{\sigma,b}(t)$ be the filtered signal. 

For any $\epsilon,\rho \in (0,1)$, let $S_{D}=\{t_1, \cdots , t_{s}\}$ be a set of i.i.d. samples from $D(t)$ of size $s \ge O(\eps^{-2}k\log(k)\log(1/\rho))$. Let the weight vector $w\in \R^s$ be defined by $w_i:=1/(2Ts D(t_i))$ for $i\in[s]$. 

Then when Large Offset event not happens, with probability at least $1-\rho$, we have %
\begin{equation*}
(1-\epsilon)\| z(t)\|^2_T \leq \| z(t)\|^2_{S_D,w} \leq (1 + \epsilon)\| z(t)\|^2_T.
\end{equation*} 
\end{lemma}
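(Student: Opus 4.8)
The plan is to invoke the generic sampling-and-reweighing guarantee of Lemma~\ref{lem:link_weight_sampling_size__condition_num} with the function $z(t) = (x\cdot H)*G^{(j)}_{\sigma,b}(t)$ and the distribution $D$ of Eq.~\eqref{eq:def_dist_D}; the only thing to check is that the stated sample count $s = O(\eps^{-2}k\log(k)\log(1/\rho))$ is large enough, i.e. that
\begin{align*}
\Big(\max_{t\in[-T,T]}\frac{|z(t)|^2}{D(t)}\Big)\cdot O\Big(\frac{\log(1/\rho)}{\eps^2 T\|z(t)\|_T^2}\Big) \;\lesssim\; \eps^{-2}k\log(k)\log(1/\rho),
\end{align*}
which reduces to showing $\max_{t\in[-T,T]} |z(t)|^2/D(t) \lesssim k\log(k)\,T\,\|z(t)\|_T^2$. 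This is the exact analogue of the computation in the proof of Lemma~\ref{lem:x_energy_preserving_000}, with the pointwise Fourier-sparse energy bound replaced by the energy bound for filtered signals.

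First I would recall that the hypotheses of Corollary~\ref{cor:condition_number_z} are met: we are given $j = h_{\sigma,b}(f^*)$ for a heavy frequency $f^*$ (with $\N^2\geq\delta\|x\|_T^2$), and we are conditioning on the Large Offset event not happening. Hence Corollary~\ref{cor:condition_number_z} applies and gives, after translating its $[0,T]$ / $D$-norm statement into the symmetric $[-T,T]$ convention used in this section,
\begin{align*}
|z(t)|^2 \;\lesssim\; \min\Big\{\frac{k\cdot H(t)+\delta}{1-|t/T|},\,k^2\Big\}\cdot\|z(t)\|_T^2 \;\lesssim\; \min\Big\{\frac{k}{1-|t/T|},\,k^2\Big\}\cdot\|z(t)\|_T^2
\end{align*}
for all $t\in(-T,T)$, where the second inequality uses $H(t)\lesssim 1$ (Lemma~\ref{lem:property_of_filter_H} Property~\RN{1}) and $\delta\leq 1$. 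This is exactly the shape of bound \eqref{eq:_111_energy_bound_of_t} used for the unfiltered signal.

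Next I would bound $\max_{t\in[-T,T]} |z(t)|^2/D(t)$ by plugging in the two branches of $D$ from Eq.~\eqref{eq:def_dist_D}: for $|t|\leq T(1-1/k)$ the ratio is $\lesssim \frac{k}{1-|t/T|}\cdot\frac{T(1-|t/T|)}{c}\|z\|_T^2 = \frac{kT}{c}\|z\|_T^2$, and for $|t|\in[T(1-1/k),T]$ it is $\lesssim k^2\cdot\frac{T}{ck}\|z\|_T^2 = \frac{kT}{c}\|z\|_T^2$; since $c=\Theta(\log(k)^{-1})$ this gives $\max_{t\in[-T,T]}|z(t)|^2/D(t)\lesssim k\log(k)\,T\,\|z(t)\|_T^2$. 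Substituting this into Lemma~\ref{lem:link_weight_sampling_size__condition_num} yields the required $s = O(\eps^{-2}k\log(k)\log(1/\rho))$ and hence the two-sided estimate $(1-\eps)\|z(t)\|_T^2\leq\|z(t)\|_{S_D,w}^2\leq(1+\eps)\|z(t)\|_T^2$ with probability $\geq 1-\rho$.

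The only genuine subtlety — and the step I would be most careful about — is the bookkeeping around conventions and conditioning: Corollary~\ref{cor:condition_number_z} is stated with the $[0,T]$-norm and the $D$-norm on $(0,T)$, whereas this section silently redefines $\|z\|_T^2 = \frac{1}{2T}\int_{-T}^T|z(t)|^2\,dt$, so I would make explicit that the rescaling $t\mapsto Tt/2+T/2$ (the same move used to pass from Theorem~\ref{thm:energy_bound} to Eq.~\eqref{eq:_111_energy_bound_of_t}) converts the corollary's bound into the symmetric form above without loss. Everything else is a routine repetition of the argument in Lemma~\ref{lem:x_energy_preserving_000}, with ``$x$'' replaced by ``$z$'' and ``Theorem~\ref{thm:energy_bound}/\ref{thm:bound_k_sparse_FT_x_middle}'' replaced by ``Corollary~\ref{cor:condition_number_z}''; no new probabilistic ideas are needed beyond the Chernoff bound already invoked inside Lemma~\ref{lem:link_weight_sampling_size__condition_num}.
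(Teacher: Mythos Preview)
Your proposal is correct and follows essentially the same approach as the paper's proof: invoke Lemma~\ref{lem:link_weight_sampling_size__condition_num}, replace the Fourier-sparse energy bound by Corollary~\ref{cor:condition_number_z} (using $H(t)\lesssim 1$ to simplify), and then do the same two-branch case analysis on $D$ to obtain $\max_t |z(t)|^2/D(t)\lesssim k\log(k)\,T\,\|z\|_T^2$. Your remark about the $[0,T]$ versus $[-T,T]$ rescaling is exactly the right bookkeeping concern, and the paper handles it implicitly in the same way.
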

\begin{proof}

By applying Lemma \ref{lem:link_weight_sampling_size__condition_num}, we have that the desired result requires that
\begin{align*}
  s\ge \Big({\max}_{t\in [-T,T]}~\frac{|z(t)|^2}{D(t)}\Big)\cdot O\Big(\frac{\log(1/\rho)}{\eps^{2}T\|z(t)\|_T^2}\Big) .
\end{align*}

By the filtered signals' energy bound (Corollary \ref{cor:condition_number_z}), we have that
\begin{align}
     {|z(t)|^2} \lesssim&~ {\min}\Big\{ \frac{k\cdot H(t)+\delta }{1-|t/T|} , k^2\Big\}\cdot {\|z(t)\|_T^2}\notag\\
    \lesssim&~ {\min}\Big\{ \frac{k}{1-|t/T|} , k^2 \Big\} \cdot {\|z(t)\|_T^2}.\label{eq:energy_bound_of_t}
\end{align}
where the second step follows from $H(t)\lesssim 1$ (Lemma \ref{lem:property_of_filter_H} Property \RN{1}, \RN{2}).
Then, we get that
\begin{align}
    &~ {\max}_{t\in [-T,T]}~\frac{|z(t)|^2}{D(t)} \notag\\
    \lesssim&~ {\max}_{t\in [-T,T]}~{\min}\Big\{ \frac{k}{1-|t/T|} , k^2 \Big\} \cdot {\|z(t)\|_T^2} \notag\\
    \lesssim&~ {\max}_{t\in [-T,T]}~{\min}\Big\{ \frac{k}{1-|t/T|} \frac{T(1-|t/T|)}{c} , k^2 \frac{T}{ck} \Big\}\cdot \|z(t)\|_T^2  \notag\\
    =&~ kT \|z(t)\|_T^2 /c  \notag\\
    \simeq&~ k\log(k) T \|z(t)\|_T^2  ,\label{eq:bound_z_div_D}
\end{align}
where the first step follows from Eq.~\eqref{eq:energy_bound_of_t}, the second step follows from the definition of $D(t)$, the third step is straight forward, the forth step follows from $c=\Theta(\log(k)^{-1})$.

As a result,
\begin{align*}
    s\geq O(k\log(k) T \|z(t)\|_T^2) \cdot O\Big(\frac{\log(1/\rho)}{\eps^{2}T\|z(t)\|_T^2}\Big) = O(\eps^{-2} k\log(k)  \log(1/\rho)).
\end{align*}

The lemma is then proved.

\end{proof}

\subsection{Partial energy estimation for filtered signals and local-test signals}\label{sec:emp_energy_est:filtered_local}
In this section, we consider a variant version of energy estimation problem, which we are given a sub-interval $U\subseteq [-T,T]$ and we only want to estimate the energy within this interval. 

The following lemma gives the sampling distribution with respect to $U$.
\begin{lemma}\label{lem:dist_well_define_plus}
Let $U=[L,R]$ such that $ [-T(1-{1}/k) , T(1-{1}/k) ] \subseteq U\subseteq [-T,T]$. For $k\in\mathbb{N}_+$, define a probability distribution $D_U$ as follows:
\begin{align}\label{eq:def_dist_D_U}
D_U(t):=
\begin{cases}
{c}\cdot (1-|t/T| )^{-1}T^{-1}, & \text{ for } |t| \le T(1-{1}/k) \wedge t\in U \\
c \cdot  k T^{-1}, & \text{ for } |t|\in [T(1-{1}/k), T] \wedge t\in U
\end{cases} 
\end{align} 
where $c=\Theta(\log(k)^{-1})$ is a normalization factor such that $\int_{-T}^T D_U(t) \d t= 1$. Then, $D_U$ is well-defined.
\end{lemma}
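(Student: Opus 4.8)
The plan is to mimic the proof of Lemma~\ref{lem:dist_well_define}, observing that $D_U$ is simply the restriction of the density $D$ from Eq.~\eqref{eq:def_dist_D} to the sub-interval $U=[L,R]$ (set to zero outside $U$), and that the two-sided containment $[-T(1-1/k),T(1-1/k)]\subseteq U\subseteq[-T,T]$ is exactly what guarantees the normalization constant neither degenerates nor blows up.

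First I would compute the unnormalized total mass. Writing $c^{-1}$ for $\int_{-T}^{T}\big(D_U(t)/c\big)\,\d t$, we split according to the piecewise definition: $c^{-1}=\int_{|t|\le T(1-1/k),\,t\in U}(1-|t/T|)^{-1}T^{-1}\,\d t+\int_{|t|\in[T(1-1/k),T],\,t\in U}kT^{-1}\,\d t$. Since $U$ contains the central interval $[-T(1-1/k),T(1-1/k)]$, the first integral equals exactly $2\int_0^{T(1-1/k)}(1-t/T)^{-1}T^{-1}\,\d t=2\ln k$ by the substitution $u=1-t/T$. For the second (boundary) term, the two sets $[T(1-1/k),T]\cap U$ and $[-T,-T(1-1/k)]\cap U$ each have Lebesgue measure at most $T/k$, and on each the integrand is the constant $kT^{-1}$, so this term lies in $[0,2]$. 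Hence $c^{-1}\in[2\ln k,\,2\ln k+2]=\Theta(\log k)$, and setting $c=\Theta(\log(k)^{-1})$ makes $\int_{-T}^T D_U(t)\,\d t=1$. Nonnegativity of $D_U$ is immediate from the definition, so $D_U$ is a well-defined probability density.

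There is essentially no obstacle here; the only point requiring care is that the logarithmically divergent ``interior'' mass of $D$ is fully captured by $U$, which is precisely the hypothesis $[-T(1-1/k),T(1-1/k)]\subseteq U$ — without it the lower bound $c^{-1}\ge 2\ln k$ could fail and the constant $c$ might not be $\Theta(\log(k)^{-1})$. The complementary inclusion $U\subseteq[-T,T]$ ensures we never integrate outside the support of $D$, so the upper bound carries over unchanged. Thus the same normalization that works for $D$ works, up to an absolute constant factor, for $D_U$.
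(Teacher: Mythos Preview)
Your proposal is correct and follows essentially the same approach as the paper: both split $c^{-1}$ into the interior integral over $[-T(1-1/k),T(1-1/k)]$ (evaluating to $\Theta(\log k)$) and the bounded boundary contribution over $U\setminus[-T(1-1/k),T(1-1/k)]$, using $U\subseteq[-T,T]$ to cap the latter. Your write-up is in fact slightly more explicit about the constants and about why each containment hypothesis is needed.
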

\begin{proof}

We compute the normalization factor of $D_U$ in below. The condition that $\int_{-T}^T D_U(t) \d t= 1$ requires that 
\begin{align*}
2\int_0^{T(1-{1}/k)}\frac{c}{(1-|t/T|)T}\d t + \int_{T(1-{1}/k)}^R  c \cdot \frac{k}{T} \d t + \int_{L}^{-T(1-{1}/k)}  c \cdot \frac{k}{T} \d t= 1,
\end{align*}
which implies that
\begin{align*}
c^{-1} = &~2\int_0^{T(1-{1}/k)}\frac{1}{(1-|t/T|)T}\d t + \int_{T(1-{1}/k)}^R   \frac{k}{T} \d t + \int_{L}^{-T(1-{1}/k)}   \frac{k}{T} \d t\\
\eqsim &~ \log(k) +  1\\
= &~ \Theta(\log(k)).
\end{align*}
where the second step follows from $R\leq T$ and $L\geq -T$. 

Thus, we get that $c=\Theta(\log(k)^{-1})$.
\end{proof}

Similar to Lemma~\ref{lem:z_energy_preserving}, we have a sample-efficient approach for estimating the partial energy of a filtered signal.

\begin{lemma}[Partial energy estimation for filtered signals]\label{lem:z_energy_preserving_plus}

Let $U=[L,R]$ be such that $ [-T(1-{1}/k) , T(1-{1}/k) ] \subseteq U$. For $k\in\mathbb{N}_+$, let $D_U$ be the probability distribution defined as Eq.~\eqref{eq:def_dist_D_U}.

Let $x\in {\cal F}_{k,F}$.
Let $H$ be defined as in Definition~\ref{def:effect_H_k_sparse}, ${G}^{(j)}_{\sigma,b}$ be defined as in Definition \ref{def:G_j_sigma_b} with $(\sigma,b)$ such that Large Offset event does not happen.
For any $j\in [B]$, suppose there exists an $f^*$ with $j=h_{\sigma, b}(f^*)$ satisfying: 
\begin{align*}
    \int_{f^*-{\Delta_h}}^{f^*+{\Delta_h}} | \widehat{x\cdot H}(f) |^2 \mathrm{d} f \geq  T\N^2/k, 
\end{align*}
where $\N^2\geq \delta \|x\|_T^2$.
Let $z(t)=(x\cdot H)*{G}^{(j)}_{\sigma,b}(t)$ be the filtered signal.

For any $\epsilon,\rho \in (0,1)$, let $S_{D_U}=\{t_1, \cdots , t_{s}\}$ be a set of i.i.d. samples from $D_U$ of size $s \ge O(\eps^{-2}k\log(k)\log(1/\rho))$. Let the weight vector $w\in \R^s$ be defined by $w_i:=2/(Ts D_U(t_i))$ for $i\in[s]$. 

Then when Large Offset event not happens, with probability at least $1-\rho$, we have 
\begin{equation*}
(1-\epsilon)\| z\|^2_{U} \leq \| z\|^2_{S_{D_U},w} \leq (1 + \epsilon)\| z\|^2_{U},
\end{equation*} 
where $\|z\|_U^2:=\frac{1}{R-L}\cdot\int_{L}^R |z(t)|^2 \d t$.
\end{lemma}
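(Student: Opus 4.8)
The plan is to reduce this "partial energy estimation" lemma to the generic sampling-and-reweighing statement of Lemma~\ref{lem:link_weight_sampling_size__condition_num}, with $D$ replaced by $D_U$ and the $T$-norm replaced by the $U$-norm. First I would invoke Lemma~\ref{lem:link_weight_sampling_size__condition_num} in the form where the target norm is $\|z\|_U^2 = \frac{1}{|U|}\int_U |z(t)|^2\,\d t$ and the sampling distribution is $D_U$ supported on $U$; the generic argument there only uses that $D_U$ is a probability density on its support, that the weights are $w_i = 1/(|U|\, s\, D_U(t_i))$ (up to the constant-factor rescaling by $|U|$ versus $2T$ that the lemma statement absorbs), and that we have an energy bound of the form $\max_{t\in U} |z(t)|^2/D_U(t) \lesssim |U|\cdot k\log k \cdot \|z\|_U^2$. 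So the only real content is producing this last bound; once it is in hand, the Chernoff calculation is identical to the proof of Lemma~\ref{lem:z_energy_preserving}, yielding $s \gtrsim \eps^{-2} k\log k \log(1/\rho)$.

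To get the energy bound I would use Corollary~\ref{cor:condition_number_z}, which under the Large Offset event not happening and the heavy-frequency hypothesis gives, for $z(t)=(x\cdot H)*G^{(j)}_{\sigma,b}(t)$,
\begin{align*}
|z(t)|^2 \lesssim \min\Big\{\frac{k\cdot H(t)+\delta}{1-|2t/T-1|},\, k^2\Big\}\cdot \|z(t)\|_T^2 \lesssim \min\Big\{\frac{k}{1-|t/T|},\, k^2\Big\}\cdot \|z(t)\|_T^2,
\end{align*}
using $H(t)\lesssim 1$ from Lemma~\ref{lem:property_of_filter_H} and the coordinate change to the symmetric $[-T,T]$ convention. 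Then, exactly as in the proof of Lemma~\ref{lem:z_energy_preserving}, I would divide by $D_U(t)$: on $|t|\le T(1-1/k)$ we have $D_U(t) \eqsim c(1-|t/T|)^{-1}T^{-1}$ so the quotient is $\lesssim k T\|z\|_T^2/c$, and on the boundary strip $D_U(t)\eqsim ckT^{-1}$ so the quotient is $\lesssim k T\|z\|_T^2/c$ as well; with $c=\Theta(1/\log k)$ this is $\lesssim k\log k\, T\,\|z\|_T^2$. The last step is to replace $\|z\|_T^2$ by $\|z\|_U^2$: since $[-T(1-1/k),T(1-1/k)]\subseteq U\subseteq[-T,T]$, the time-domain concentration of Lemma~\ref{lem:full_proof_of_3_properties_true_for_z} (the filtered signal, having a heavy frequency, has $\int_{-\infty}^\infty|z|^2 \le 1.35\int_0^T|z|^2$) plus the fact that $U$ captures all but an $O(1/k)$-length boundary piece of $[-T,T]$ gives $\|z\|_U^2 \eqsim \|z\|_T^2$ and $|U|\eqsim T$; hence $\max_{t\in U}|z(t)|^2/D_U(t) \lesssim k\log k\, |U|\,\|z\|_U^2$, which is precisely what Lemma~\ref{lem:link_weight_sampling_size__condition_num} needs.

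I expect the main obstacle to be the comparison $\|z\|_U^2 \eqsim \|z\|_T^2$ and the companion claim $|U|\eqsim T$: one must check that removing the two boundary strips of total length $O(T/k)$ from $[-T,T]$ does not lose more than a constant factor of the energy of $z$, which is not automatic for a general $L_2$ function but follows here because $z$ is a filtered Fourier-sparse signal obeying the pointwise energy bound above (so the boundary strips contribute at most $O(k^2\cdot(T/k))\cdot T^{-1}\|z\|_T^2 = O(k)\|z\|_T^2$ — wait, this naive bound is too weak, so instead one uses that $R-L\eqsim T$ from Lemma~\ref{lem:relation_R_L_T} together with the one-cluster/time-concentration property to bound the boundary contribution, or equivalently one can argue that $U$ already contains the interval where $H\approx 1$ and invoke the local-test machinery of Section~\ref{sec:twist_struct_FS}). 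Modulo carefully stating which of these comparison inequalities is cited, every remaining step is a verbatim rerun of Lemma~\ref{lem:z_energy_preserving}'s proof with $D\to D_U$ and $\|\cdot\|_T\to\|\cdot\|_U$, so I would keep the write-up short and lean on that parallel structure.
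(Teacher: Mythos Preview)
Your approach is correct and matches the paper's proof: reduce to Lemma~\ref{lem:link_weight_sampling_size__condition_num} with the $U$-norm, invoke Corollary~\ref{cor:condition_number_z} for the pointwise energy bound, divide by $D_U$ to get the $k\log k$ factor, and then convert $\|z\|_T^2$ to $\|z\|_U^2$.

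The only place you overcomplicate things is the last step. Your ``naive'' bound is in fact exactly what the paper uses --- it is packaged as Lemma~\ref{lem:norm_z_cut_preserve}, which gives
\[
\|z\|_T^2 \lesssim \frac{R-L}{2T - k^2(2T+L-R)}\,\|z\|_U^2.
\]
The reason you got $O(k)$ instead of $O(1)$ is that you plugged in boundary strips of length $O(T/k)$, coming from the stated hypothesis $[-T(1-1/k),T(1-1/k)]\subseteq U$. But the $U$ actually used downstream (the set where $H>1-\delta_1$ on $[t_0,t_0+\beta]$) satisfies the stronger condition $2T-(R-L)\lesssim T/k^2$, equivalently $2T-k^2(2T+L-R)\gtrsim T$; this is precisely the first conclusion of Lemma~\ref{lem:relation_R_L_T}, which you already cite. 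With that in hand, the fraction above is $\lesssim 1$ and the naive bound suffices --- no appeal to time-domain concentration (Lemma~\ref{lem:full_proof_of_3_properties_true_for_z}) or the local-test machinery of Section~\ref{sec:twist_struct_FS} is needed. So drop the detours and just cite Lemma~\ref{lem:norm_z_cut_preserve} followed by Lemma~\ref{lem:relation_R_L_T}; the rest of your write-up is the paper's proof verbatim.
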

\begin{proof}

By applying Lemma \ref{lem:link_weight_sampling_size__condition_num}, we have that the desired result requires that
\begin{align*}
  s\ge \Big({\max}_{t\in U}~\frac{|z(t)|^2}{D_U(t)}\Big)\cdot O\Big(\frac{\log(1/\rho)}{\eps^{2}T\|z(t)\|_T^2}\Big).
\end{align*}

The first term can be upper bounded as follows:
\begin{align}
    {\max}_{t\in U}~\frac{|z(t)|^2}{D_U(t)}
    \lesssim&~ {\max}_{t\in U}~{\min}\Big\{ \frac{k}{1-|t/T|} , k^2  \Big\}\cdot \frac{\|z(t)\|_{T}^2}{D_U(t)} \notag\\
    \lesssim&~ {\max}_{t\in U}~{\min}\Big\{ \frac{k}{1-|t/T|} \frac{T(1-|t/T|)}{c} , k^2 \frac{T}{ck} \Big\}\cdot \|z(t)\|_{T}^2 \notag\\
    =&~ kT \|z(t)\|_{T}^2 /c  \notag\\
    \lesssim&~ k\log(k) T \|z(t)\|_{T}^2  \notag\\
    \lesssim &~  k\log(k) \frac{R-L}{2T-k^2(2T+L-R)} \cdot T \|z(t)\|_{U}^2  \notag \\
    \leq &~  k\log(k)  \cdot T \|z(t)\|_{U}^2 
    ,\label{eq:bound_z_div_D_plus}
\end{align}
where the first step follows from Eq.~\eqref{eq:energy_bound_of_t}, the second step follows from the definition of $D_U(t)$, the third step is straight forward, the forth step follows from $c=\Theta(\log(k)^{-1})$, the fifth step follows from Lemma \ref{lem:norm_z_cut_preserve}, the sixth step follows from ${R-L}\leq {2T-k^2(2T+L-R)}$.  

Therefore, the sample complexity $s$ should be at least:
\begin{align*}
    s\geq O(k\log(k)  \cdot T \|z(t)\|_{U}^2)\cdot O\Big(\frac{\log(1/\rho)}{\eps^{2}T\|z(t)\|_T^2}\Big) = O(\eps^{-2} k\log(k)  \log(1/\rho)).
\end{align*}

The proof of the lemma is then completed.

\end{proof}

Recall that in Section~\ref{sec:twist_struct_FS}, we study the local-test signal $d_z(t)=z(t)e^{2 \pi \i f_0 \beta} - z(t+\beta)$. The following lemma gives a way to estimate the partial energy of a local-test signal.  It can be proved by the same  strategy with the energy bound in Lemma \ref{lem:z_diff_condition_number_xI}.

\begin{lemma}[Partial energy estimation for local-test signals]\label{lem:z_diff_energy_preserving}

Let $x\in {\cal F}_{k,F}$.
Let $H$ be defined as in Definition~\ref{def:effect_H_k_sparse}, ${G}^{(j)}_{\sigma,b}$ be defined as in Definition \ref{def:G_j_sigma_b} with $(\sigma,b)$ such that Large Offset event does not happen.
Let $z(t)=(x\cdot H)*{G}^{(j)}_{\sigma,b}(t)$ be the filtered signal.
Let $U:=\{t_0\in \R~|~ H(t) > 1-\delta_1, \forall t\in [t_0,t_0+\beta]\}$. Let $D_U$ be the probability distribution defined as Eq.~\eqref{eq:def_dist_D_U}. 
Let $D_H(t):=\mathrm{Uniform}(\{t\in \R~|~ H(t) > 1-\delta_1\})$. %

For any $\epsilon,\rho \in (0,1)$, let $S_{D_U}=\{t_1, \cdots , t_{s}\}$ be a set of i.i.d. samples from $D_U$ of size $s \ge O(k\log(k)\log(1/\rho))$. Let the weight vector $w\in \R^s$ be defined by $w_i:=2/(Ts D_U(t_i))$ for $i\in[s]$. 

Let $d_{z}(t)=z(t)e^{2 \pi \i f_0 \beta} - z(t+\beta)$ be the local-test signal. Then, with probability at least $1-\rho$, we have
\begin{align*}
\| d_z(t)\|^2_{S_{D_U},w} \leq  2\| d_z(t)\|^2_{U}
+  \sqrt{\delta_1} \|x(t)\|_{T}\cdot  \| d_z(t)\|_{U}.
\end{align*} 
\end{lemma}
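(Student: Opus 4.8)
The plan is to decompose the local-test signal $d_z$ on the sampling region $U$ into a genuinely $k$-Fourier-sparse ``main part'' plus a small ``noise part,'' estimate the main part with the generic sampling-and-reweighing bound, and control the noise part deterministically using the pointwise approximation bounds already established. Concretely, set $x_I:=(x*I)$ with $I=I^{(j)}_{\sigma,b}$ the ideal filter of Eq.~\eqref{eq:def_ideal_filter}, and let $p(t):=x_I(t)e^{2\pi\i f_0\beta}-x_I(t+\beta)$. Since $\widehat{x_I}=\widehat x\cdot\widehat I$ is supported inside $\supp(\widehat x)$, the signal $p$ is at most $k$-Fourier-sparse. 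For $t\in U$ one has $d_{I,z}(t)=d_{I,x}(t)$ by Lemma~\ref{lem:z_diff_xIH2xHI}, and $|d_{I,x}(t)-p(t)|\le|x_I(t)||1-H(t)|+|x_I(t+\beta)||1-H(t+\beta)|\lesssim\delta_1 k\,\|x(t)\|_T$ by the estimates inside the proof of Lemma~\ref{lem:z_diff_condition_number_xI_pre} together with Lemma~\ref{lem:condition_number_pre} and Claim~\ref{clm:x_I2x}. Hence on $U$ we may write $d_z=p+\xi$ with $\xi:=(d_{I,x}-p)+(d_z-d_{I,z})$, and $\xi$ is small both in the $D_U$-norm (by Lemma~\ref{lem:z_diff_norm_condition_number_xI_pre}, Lemma~\ref{lem:z_diff_xHI_sub_XHG_T_norm_is_small}, and $|U|\gtrsim T$, which give $\|\xi\|_U\lesssim\delta_1\|x(t)\|_T$) and pointwise on $U$ (by Lemma~\ref{lem:z_diff_xHG_sub_XHI_is_small} and the bound above, which give $|\xi(t)|\lesssim\delta_1\sqrt{T|S|}\,\|x(t)\|_T$ with $|S|=|\supp(\widehat x*\widehat H)|\le\Delta$). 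Here I use that $U$ is the fluctuation set, so by Lemma~\ref{lem:property_of_filter_H} and Lemma~\ref{lem:relation_R_L_T} it contains $[-T(1-1/k),T(1-1/k)]$, validating the use of the distribution $D_U$ of Eq.~\eqref{eq:def_dist_D_U}.

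Next I would estimate $\|p\|^2_{S_{D_U},w}$. Because $p$ is $k$-Fourier-sparse, the energy bounds of Theorem~\ref{thm:energy_bound} and Theorem~\ref{thm:bound_k_sparse_FT_x_middle}, applied on the sub-interval $U\supseteq[-T(1-1/k),T(1-1/k)]$ exactly as in the first step of the proof of Lemma~\ref{lem:z_diff_condition_number_xI}, give $|p(t)|^2\lesssim\min\{k/(1-|t/T|),k^2\}\cdot\|p\|_U^2$ for $t\in U$. Dividing by $D_U(t)$ and using the explicit form of $D_U$ (the product $\min\{k/(1-|t/T|),k^2\}/D_U(t)$ equals $\Theta(kT/c)=\Theta(kT\log k)$) yields $\max_{t\in U}|p(t)|^2/D_U(t)\lesssim k\log(k)\,T\,\|p\|_U^2$. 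Feeding this into Lemma~\ref{lem:link_weight_sampling_size__condition_num} used on the sub-interval, exactly as in the proof of Lemma~\ref{lem:z_energy_preserving_plus}, shows that $s\ge O(k\log(k)\log(1/\rho))$ i.i.d.\ samples from $D_U$ suffice so that, with probability at least $1-\rho$, $\|p\|^2_{S_{D_U},w}\le(1+\epsilon)\|p\|_U^2$ for a small absolute constant $\epsilon$ of our choosing.

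Then I would bound the noise contribution $\|\xi\|_{S_{D_U},w}$ deterministically. Every sample $t_i$ lies in $U=\supp(D_U)$, and on $U$ one has $D_U(t_i)\ge c\,k/T$, so $\sum_{i=1}^s w_i=\sum_{i=1}^s \frac{2}{Ts D_U(t_i)}\le\frac{2}{ck}=O(\log k)$ with no probability loss. Combining this with the pointwise bound $|\xi(t_i)|\lesssim\delta_1\sqrt{T\Delta}\,\|x(t)\|_T$ and the standing smallness conventions $\delta_1 T\Delta\lesssim1$ and $\delta_1 k^2\lesssim 1$ gives $\|\xi\|^2_{S_{D_U},w}\le\big(\sum_i w_i\big)\max_i|\xi(t_i)|^2\lesssim\delta_1\,\|x(t)\|_T^2$, i.e.\ $\|\xi\|_{S_{D_U},w}\lesssim\sqrt{\delta_1}\,\|x(t)\|_T$. (The same argument, with the supremum in place of $\max_i$, re-confirms $\|\xi\|_U\lesssim\delta_1\|x(t)\|_T$.)

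Finally, combine by the triangle inequality in $\|\cdot\|_{S_{D_U},w}$: $\|d_z\|_{S_{D_U},w}\le\|p\|_{S_{D_U},w}+\|\xi\|_{S_{D_U},w}\le\sqrt{1+\epsilon}\,\|p\|_U+O(\sqrt{\delta_1})\|x(t)\|_T\le\sqrt{1+\epsilon}\,\|d_z\|_U+O(\sqrt{\delta_1})\|x(t)\|_T$, using $\|p\|_U\le\|d_z\|_U+\|\xi\|_U$. Squaring and applying Cauchy--Schwarz to the cross term gives $\|d_z\|^2_{S_{D_U},w}\le(1+\epsilon)\|d_z\|_U^2+O(\sqrt{\delta_1})\,\|x(t)\|_T\|d_z\|_U+O(\delta_1)\|x(t)\|_T^2$, and choosing $\epsilon$ small and absorbing the residual $O(\delta_1)\|x(t)\|_T^2$ and the constant in front of the cross term into the stated error yields $\|d_z\|^2_{S_{D_U},w}\le 2\|d_z\|_U^2+\sqrt{\delta_1}\|x(t)\|_T\|d_z\|_U$. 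I expect this last step to be the main obstacle: the Fourier-sparse part is only controlled up to a constant factor, the normalization of the weights $w_i=2/(Ts D_U(t_i))$ has to be tracked carefully so that the leading constant really lands at $2$, and one must verify that every residual error is genuinely of order $\sqrt{\delta_1}\|x\|_T\|d_z\|_U$ (or smaller) rather than the cruder $\delta_1\|x\|_T^2$ — which is precisely where the explicit $\sqrt{\delta_1}$ in the statement originates, via square-rooting a $\delta_1$-order bound on $\|\xi\|_{S_{D_U},w}$ and pairing it with $\|d_z\|_U$ through Cauchy--Schwarz.
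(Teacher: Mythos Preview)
Your decomposition $d_z=p+\xi$ with $p(t)=x_I(t)e^{2\pi\i f_0\beta}-x_I(t+\beta)$ is correct, and your separate estimates on $p$ and $\xi$ are essentially right (modulo a slip: the minimum of $D_U$ on $U$ is $c/T$, attained at $t=0$, not $ck/T$; the conclusion $\sum_i w_i=O(\log k)$ still holds). The obstacle you flag at the end is real, however, and your route does not resolve it: after squaring, the residual $O(\delta_1)\|x\|_T^2$ cannot be absorbed into $2\|d_z\|_U^2+\sqrt{\delta_1}\|x\|_T\|d_z\|_U$, because nothing in the hypotheses forces $\sqrt{\delta_1}\|x\|_T\lesssim\|d_z\|_U$. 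Of your three listed concerns, the first two (the leading constant $2$ and the weight normalization) are cosmetic; the third is a genuine gap.

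The paper sidesteps this by a different argument. It does not decompose $d_z$; instead it applies Lemma~\ref{lem:link_weight_sampling_size__condition_num} directly to $d_z$, feeding in the energy bound of Lemma~\ref{lem:z_diff_condition_number_xI} (which already packages your decomposition into the single pointwise estimate $|d_z(t)|^2\lesssim\min\{k/(1-|t/T|),k^2\}\|d_z\|_U^2+\delta_1\|x\|_T^2$). This gives $\max_{t\in U}|d_z(t)|^2/D_U(t)\lesssim k\log(k)\,T\,(\|d_z\|_U^2+\delta_1\|x\|_T^2)$ and hence a sample requirement $s\gtrsim\xi^{-2}k\log(k)\log(1/\rho)\cdot(1+\delta_1\|x\|_T^2/\|d_z\|_U^2)$. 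The crux is that the accuracy parameter $\xi$ is only an \emph{analysis} parameter, so one may choose it adaptively as $\xi:=\sqrt{1+\delta_1\|x\|_T^2/\|d_z\|_U^2}$; then $s=O(k\log(k)\log(1/\rho))$ regardless of how small $\|d_z\|_U$ is, and the multiplicative error unpacks via $\sqrt{a+b}\le\sqrt a+\sqrt b$ as $(1+\xi)\|d_z\|_U^2\le(2+\sqrt{\delta_1}\|x\|_T/\|d_z\|_U)\|d_z\|_U^2=2\|d_z\|_U^2+\sqrt{\delta_1}\|x\|_T\|d_z\|_U$, exactly the stated bound. Your approach proves a slightly weaker inequality (with an extra additive $O(\delta_1)\|x\|_T^2$), which would still suffice for the downstream use in Lemma~\ref{lem:significant_samples_z_above} after invoking Lemma~\ref{lem:xHG_large_than_exp_small} and the smallness of $\delta_1$, but it does not give the lemma as stated.
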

\begin{proof}

By Lemma \ref{lem:link_weight_sampling_size__condition_num}, we have that when
\begin{align*}
  s\ge \Big({\max}_{t\in U}~\frac{|d_z(t)|^2}{D_U(t)}\Big)\cdot O\Big(\frac{\log(1/\rho)}{\xi^{2}|U|\cdot \|d_z(t)\|_U^2}\Big),
\end{align*}
 the following result holds with probability at least $1-\rho$,
 \begin{align}
 \|d_z(t)\|^2_{S_{D_U},w}\in (1\pm \xi)\|d_z(t)\|_{U}^2, \label{eq:z_diff_norm_preserving_condition}
 \end{align}
where $\xi$ is a parameter to be chosen later.

By the energy bound for local-test signals (Lemma \ref{lem:z_diff_condition_number_xI}), we have that for any $t\in U$,
\begin{align}
  {|d_z(t)|^2} \lesssim {\min}\Big\{ \frac{k}{1-|t/T|} , k^2  \Big\} \cdot {\|d_z(t)\|_{U}^2} +\delta_1 \|x(t)\|_{T}^2.
    \label{eq:z_diff_energy_bound_of_t}
\end{align}
Then, we get that
\begin{align}
&~ {\max}_{t\in U}~\frac{|d_z(t)|^2}{D_U(t)} \notag\\
\lesssim&~ {\max}_{t\in U}~{\min}\Big\{ \frac{k}{1-|t/T|} , k^2  \Big\}\cdot \frac{\|d_z(t)\|_{U}^2+  \delta_1 \|x(t)\|_{T}^2 }{D_U(t)} \notag\\
\lesssim&~ {\max}_{t\in U}~{\min}\Big\{ \frac{k}{1-|t/T|} \frac{T(1-|t/T|)}{c} , k^2 \frac{T}{ck} \Big\} \cdot (\|d_z(t)\|_{U}^2+\delta_1 \|x(t)\|_{T}^2) \notag\\
=&~ kT c^{-1} \cdot (\|d_z(t)\|_{U}^2+\delta_1 \|x(t)\|_{T}^2)  \notag\\
\simeq&~ k\log(k) T \cdot (\|d_z(t)\|_{D_U}^2+\delta_1 \|x(t)\|_{D_1}^2)  ,\label{eq:z_diff_bound_z_div_D}
\end{align}
where  the first step follows from Eq.~\eqref{eq:z_diff_energy_bound_of_t}, the second step follows from the definition of $D_U(t)$, the third step is straight forward, the forth step follows from $c=\Theta(\log(k)^{-1})$.

As a result, the sample complexity is
\begin{align*}
s\geq &~k\log(k) T \cdot (\|d_z(t)\|_{D_U}^2+\delta_1 \|x(t)\|_{D_1}^2)\cdot O\Big(\frac{\log(1/\rho)}{\xi^{2}|U|\cdot \|d_z(t)\|_U^2}\Big)\\
\simeq &~ \xi^{-2}\cdot k\log(k)\cdot ( 1+\frac{\delta_1 \|x(t)\|_{T}^2}{\|d_z(t)\|_{U}^2} ) \cdot \log(1/\rho) \\ 
= &~  k\log(k)\cdot \log(1/\rho) ,
\end{align*}
where the first step follows from Eq.~\eqref{eq:z_diff_bound_z_div_D}, the second step follows from $|U| \gtrsim T$, the third step follows by taking $\xi$ to be such that 
\begin{align*}
\xi^{-2}( 1+\frac{\delta_1 \|x(t)\|_{T}^2}{\|d_z(t)\|_{U}^2} ) \simeq  1 .
\end{align*}

It remains to bound the estimation error. We have that
 \begin{align*}
\|d_z(t)\|^2_{S_D,w}
 \leq &~ (1+\xi)\|d_z(t)\|_{U}^2 \\
 \simeq &~ \Big(1+\sqrt{1+\frac{\delta_1 \|x(t)\|_{D_1}^2}{\|d_z(t)\|_{U}^2}}\Big)\|d_z(t) \|_{U}^2 \\
  \leq &~ \Big(2+{\frac{\sqrt{\delta_1}  \|x(t)\|_{T}}{\|d_z(t)\|_{U}}}\Big)\|d_z(t)\|_{U}^2 \\
  \leq &~ 2\|d_z(t)\|_{U}^2 +{{\sqrt{\delta_1}  \|x(t)\|_{T}}\cdot {\|d_z(t)\|_{U}}}
 \end{align*}
where the first step follows from Eq.~\eqref{eq:z_diff_norm_preserving_condition}, the second step follows from the setting of $\eps$, the third step follows from $\sqrt{a+b}\leq\sqrt{a}+\sqrt{b}$, the forth step is straight forward.

The lemma is then proved.
\end{proof}

\subsection{Technical lemmas}\label{sec:emp_energy_est:technical_lemmas}

We prove two technical lemmas in this section. 

The following lemma bounds the energy of a Fourier-sparse signal within time duration $[L,R]\subseteq [-T,T]$ by its total energy.

\begin{lemma}[Partial energy of Fourier-sparse signal]\label{lem:norm_x_cut_preserve}

Given $k\in \Z_+, F\in\R_+$.
For any $x\in {\cal F}_{k,F}$, $[L, R]\subseteq [-T(1-O(\frac{1}{k^2})), T(1-O(\frac{1}{k^2}))]$, we have that, 
\begin{align*}
    \frac{2T - k^2 (2T+L-R)}{R-L}\|x(t)\|_{T}^2 \lesssim \frac{1}{R-L}\int_{L}^R |x(t)|^2 \d t  \leq \frac{2T}{R-L}\|x(t)\|_{T}^2.
\end{align*}
\end{lemma}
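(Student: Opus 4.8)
The plan is to exploit the growth bound for Fourier-sparse signals outside $[-T,T]$ together with the fact that $x\in{\cal F}_{k,F}$ can be rescaled so that Theorems~\ref{thm:energy_bound} and~\ref{thm:bound_k_sparse_FT_x_middle} apply on a normalized interval. The upper bound is essentially trivial: since $[L,R]\subseteq[-T,T]$,
\[
\frac{1}{R-L}\int_L^R |x(t)|^2\,\d t \le \frac{1}{R-L}\int_{-T}^T |x(t)|^2\,\d t = \frac{2T}{R-L}\|x(t)\|_T^2,
\]
where the last equality uses the $T$-norm convention $\|x\|_T^2=\frac{1}{2T}\int_{-T}^T|x(t)|^2\,\d t$ adopted in this section. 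So the real content is the lower bound.

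For the lower bound, the idea is to write $\int_{-T}^T |x|^2 = \int_L^R |x|^2 + \int_{[-T,T]\setminus[L,R]} |x|^2$ and to show that the second piece is small compared to $T\|x\|_T^2$. The ``tail'' region $[-T,T]\setminus[L,R]$ has total length $2T-(R-L)$, and by hypothesis every point in it lies within $O(1/k^2)$ (in units of $T$) of the endpoint $\pm T$. First I would rescale: set $y(\tau):=x(T\tau)$, so $y\in{\cal F}_{k,FT}$ and the problem becomes one on $[-1,1]$ with the tail region contained in $[-1,-1+O(1/k^2)]\cup[1-O(1/k^2),1]$. On this tail region I would apply the location-dependent energy bound (Theorem~\ref{thm:bound_k_sparse_FT_x_middle}): for $\tau$ with $1-|\tau|\le O(1/k^2)$ one still has $|y(\tau)|^2\lesssim \frac{k}{1-|\tau|}\|y\|_D^2$, but this blows up; instead the cleaner route is to use the location-independent bound $|y(\tau)|^2\lesssim k^2\|y\|_D^2$ (Theorem~\ref{thm:energy_bound}) uniformly on $[-1,1]$, valid since the endpoints of the tail are still inside $(-1,1)$ by the $O(1/k^2)$ margin. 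Then
\[
\int_{\text{tail}} |y(\tau)|^2\,\d\tau \;\lesssim\; k^2\cdot \big(2 - (R-L)/T\big)\cdot \|y\|_D^2,
\]
and translating back through $\|y\|_D^2 = \frac12\int_{-1}^1|y|^2 = T\|x\|_T^2$ (up to the normalization constant) gives $\int_{\text{tail}}|x|^2 \lesssim k^2(2T+L-R)\|x\|_T^2$. Rearranging $\int_L^R|x|^2 = 2T\|x\|_T^2 - \int_{\text{tail}}|x|^2 \gtrsim (2T - k^2(2T+L-R))\|x\|_T^2$ and dividing by $R-L$ yields the claim.

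The main obstacle I anticipate is getting the constants and the direction of the inequality right at the endpoints: Theorem~\ref{thm:energy_bound} is stated for $t\in[0,T]$ (equivalently $\tau\in[-1,1]$ after rescaling the $D$-norm), and I need the tail region to sit strictly inside $(-1,1)$ so that the bound $|y(\tau)|^2\lesssim k^2\|y\|_D^2$ is legitimately applicable there — this is exactly why the hypothesis restricts $[L,R]$ to $[-T(1-O(1/k^2)),T(1-O(1/k^2))]$ rather than all of $[-T,T]$. A secondary subtlety is bookkeeping between $\|x\|_T^2$ (with the $\frac{1}{2T}\int_{-T}^T$ convention) and $\|y\|_D^2$ (with $D=\mathrm{Uniform}([-1,1])$, i.e. $\frac12\int_{-1}^1$); these differ only by the factor $T$, but I want to track it carefully so the final bound reads $\frac{2T-k^2(2T+L-R)}{R-L}\|x\|_T^2$ exactly as stated. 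Neither of these is deep, but both are easy to slip on, so I would handle the rescaling explicitly up front and then let $\lesssim$ absorb the universal constants in the two energy bounds.
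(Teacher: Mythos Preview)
Your proposal is correct and follows essentially the same approach as the paper: the upper bound is the trivial inclusion $[L,R]\subseteq[-T,T]$, and the lower bound comes from writing $\int_L^R|x|^2=\int_{-T}^T|x|^2-\int_{\text{tail}}|x|^2$ and bounding the tail pointwise via the uniform energy bound $|x(t)|^2\lesssim k^2\|x\|_T^2$ (Theorem~\ref{thm:energy_bound}) times the tail length $2T+L-R$. The rescaling to $[-1,1]$ is unnecessary (the paper just applies Theorem~\ref{thm:energy_bound} directly on $[-T,T]$), and your endpoint concern is unfounded since that theorem holds on the closed interval; the hypothesis on $[L,R]$ is there so the resulting lower bound is nontrivial, not so the energy bound applies.
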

\begin{proof}
For the upper bound, we have that
\begin{align*}
    \frac{1}{R-L}\int_{L}^R |x(t)|^2 \d t \leq \frac{1}{R-L}\int_{-T}^T |x(t)|^2 \d t \leq \frac{2T}{R-L}\|x(t)\|_{T}^2,
\end{align*}
where the first step is straight forward, the second step follows from the definition of the norm. 

For the lower bound, we have that
\begin{align*}
    \int_{L}^R |x(t)|^2 \d t = &~ \int_{-T}^T |x(t)|^2 \d t - \int_{-T}^L |x(t)|^2 \d t - \int_{R}^T |x(t)|^2 \d t \\
    \geq &~ 2T\|x(t)\|_T^2 - (L+T)\cdot  {\max}_{t\in[-T, L]} |x(t)|^2  - (T-R)\cdot {\max}_{t\in[R, T]} |x(t)|^2  \\
    \gtrsim &~ 2T\|x(t)\|_T^2  - (L+T)\cdot k^2 \|x(t)\|_T^2  - (T-R)\cdot k^2 \|x(t)\|_T^2  \\
    = &~(2T - k^2 (2T+L-R))\|x(t)\|_T^2,
\end{align*}
where the first step is straight forward, the second step follows from the definition of the norm, the third step follows from Theorem \ref{thm:energy_bound}, the forth step is straight forward.

\end{proof}

By replacing the energy bound for Fourier-sparse signals with the energy bound for filtered signals (Corollary \ref{cor:condition_number_z}), we obtain the following lemma:

\begin{lemma}[Partial energy of filtered signal]\label{lem:norm_z_cut_preserve}
Given $k\in \mathbb{N}$ and $F\in \R_+$. Let $x\in {\cal F}_{k,F}$.
Let $H$ be defined as in Definition~\ref{def:effect_H_k_sparse}, ${G}^{(j)}_{\sigma,b}$ be defined as in Definition \ref{def:G_j_sigma_b} with $(\sigma,b)$ such that Large Offset event does not happen.

For any $j\in [B]$, suppose there exists an $f^*$ with $j=h_{\sigma, b}(f^*)$ satisfying: 
\begin{align*}
    \int_{f^*-{\Delta_h}}^{f^*+{\Delta_h}} | \widehat{x\cdot H}(f) |^2 \mathrm{d} f \geq  T\N^2/k, 
\end{align*}
where $\N^2\geq \delta \|x\|_T^2$.
Then, for $z(t)=(x\cdot H)*{G}^{(j)}_{\sigma,b}(t)$, we have that %
\begin{align*}
\frac{2T - k^2 (2T+L-R)}{R-L}\|z(t)\|_{T}^2 \lesssim \frac{1}{R-L}\int_{L}^R |z(t)|^2 \d t  \leq \frac{2T}{R-L}\|z(t)\|_{T}^2.
\end{align*}
\end{lemma}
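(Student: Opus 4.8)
The plan is to run the proof of Lemma~\ref{lem:norm_x_cut_preserve} essentially verbatim, with the Fourier-sparse energy bound (Theorem~\ref{thm:energy_bound}) replaced by the energy bound for filtered signals (Corollary~\ref{cor:condition_number_z}). This substitution is legitimate because the two hypotheses needed to invoke Corollary~\ref{cor:condition_number_z} --- the Large Offset event not happening and the existence of a heavy frequency $f^*$ with $j = h_{\sigma,b}(f^*)$ and $\int_{f^*-\Delta_h}^{f^*+\Delta_h}|\widehat{x\cdot H}(f)|^2\,\d f \ge T\N^2/k$ (with $\N^2 \ge \delta\|x\|_T^2$) --- are exactly the standing hypotheses of Lemma~\ref{lem:norm_z_cut_preserve}. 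I would first record the pointwise consequence $|z(t)|^2 \lesssim k^2 \|z(t)\|_T^2$ for all $t$ in the observation window, obtained from Corollary~\ref{cor:condition_number_z} by keeping only the $k^2$ branch of the minimum (the other branch $\frac{k\cdot H(t)+\delta}{1-|2t/T-1|}$ can blow up near the endpoints, but the min is always $\lesssim k^2\|z(t)\|_T^2$), using $H(t)\lesssim 1$ from Lemma~\ref{lem:property_of_filter_H} Properties~\RN{1},~\RN{2}, and transporting the statement from the $(0,T)$/$\|\cdot\|_D$ formulation of the corollary to the $[-T,T]$ window and the $T$-norm convention $\|z\|_T^2 = \frac{1}{2T}\int_{-T}^T|z(t)|^2\,\d t$ of this section (which costs only absolute constants, absorbed into $\lesssim$, per the rescaling remark at the start of Section~\ref{sec:emp_energy_est}).

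For the upper bound, since $[L,R]\subseteq[-T,T]$ we have $\int_L^R |z(t)|^2\,\d t \le \int_{-T}^T |z(t)|^2\,\d t = 2T\|z(t)\|_T^2$, and dividing by $R-L$ gives $\frac{1}{R-L}\int_L^R |z(t)|^2\,\d t \le \frac{2T}{R-L}\|z(t)\|_T^2$. For the lower bound, I would write $\int_L^R |z(t)|^2\,\d t = \int_{-T}^T |z(t)|^2\,\d t - \int_{-T}^L |z(t)|^2\,\d t - \int_R^T |z(t)|^2\,\d t$, bound the first integral by $2T\|z(t)\|_T^2$, and bound each tail integral by the length of its interval times the supremum of $|z|^2$ there, i.e. by $(L+T)\sup_{t\in[-T,L]}|z(t)|^2$ and $(T-R)\sup_{t\in[R,T]}|z(t)|^2$. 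Applying the pointwise bound $\sup|z|^2 \lesssim k^2\|z(t)\|_T^2$ on both tails and using $(L+T)+(T-R) = 2T+L-R$ yields $\int_L^R |z(t)|^2\,\d t \gtrsim \bigl(2T - k^2(2T+L-R)\bigr)\|z(t)\|_T^2$, and dividing by $R-L$ finishes the proof. (When $2T - k^2(2T+L-R) \le 0$ the lower bound is vacuous, so no explicit hypothesis on $[L,R]$ is needed; it is only meaningful when $[L,R]$ is nearly all of $[-T,T]$, which is how it is used in Lemma~\ref{lem:z_energy_preserving_plus}.)

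The main obstacle here is not a new estimate --- all the heavy lifting lives in Corollary~\ref{cor:condition_number_z} and hence ultimately in the Signal Equivalent Method and the $k$-Fourier-sparse energy bounds --- but rather the bookkeeping: making sure the energy bound is applied in the right branch of the minimum on the boundary subintervals, and that the passage between the $(0,T)$ window with its $D$-norm in the corollary and the $[-T,T]$ window with the renormalized $T$-norm of Section~\ref{sec:emp_energy_est} only changes absolute constants. Once those are handled the argument is a line-for-line mirror of Lemma~\ref{lem:norm_x_cut_preserve}.
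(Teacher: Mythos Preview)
Your proposal is correct and follows essentially the same route as the paper's proof: trivial upper bound from $[L,R]\subseteq[-T,T]$, and for the lower bound, decompose $\int_L^R = \int_{-T}^T - \int_{-T}^L - \int_R^T$ and control the two tail integrals by $(\text{length})\times\sup|z|^2$ using the $k^2$ branch of Corollary~\ref{cor:condition_number_z}. Your remarks about the norm-convention bookkeeping and the vacuity of the lower bound when $2T-k^2(2T+L-R)\le 0$ are accurate and slightly more explicit than the paper.
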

\begin{proof}
For the upper bound, we have that
\begin{align*}
    \frac{1}{R-L}\int_{L}^R |z(t)|^2 \d t \leq \frac{1}{R-L}\int_{-T}^T |z(t)|^2 \d t \leq \frac{2T}{R-L}\|z(t)\|_{T}^2,
\end{align*}
where the first step is straight forward, the second step follows from the definition of the norm. 

For the lower bound, we have that
\begin{align*}
    \int_{L}^R |z(t)|^2 \d t = &~ \int_{-T}^T |z(t)|^2 \d t - \int_{-T}^L |z(t)|^2 \d t - \int_{R}^T |z(t)|^2 \d t \\
    \geq &~ T\|z(t)\|_T^2 - (L+T)\cdot  {\max}_{t\in[0, L]} |z(t)|^2  - (T-R)\cdot {\max}_{t\in[R, T]} |z(t)|^2  \\
    \gtrsim &~ T\|z(t)\|_T^2  - (L+T)\cdot k^2 \|z(t)\|_T^2  - (T-R)k^2 \|z(t)\|_T^2  \\
    = &~(2T - k^2 (2T+L-R))\|z(t)\|_T^2
\end{align*}
where the first step is straight forward, the second step follows from the definition of the norm, the third step follows from Corollary \ref{cor:condition_number_z}, the forth step is straight forward.

\end{proof}

\newpage

\section{Generate Significant Samples}
\label{sec:gen_sigi_sam}
In this section, we show our significant sample generation procedure for noisy signals.
Recall that we use $x^*(t)$ to denote the ground-truth $k$-Fourier-sparse signal and $x(t)=x^*(t)+g(t)$ to denote the observation signal. We first generalize the energy estimation method in previous section to the noisy signals (see Section~\ref{sec:gen_sigi_sam:noisy}). Then, we give our significant sample generation algorithm for a single bin (see Section~\ref{sec:gen_sigi_sam:single_bin}). Next, we show how to adapt our significant sample generation algorithm for multiple bins (see Section~\ref{sec:gen_sigi_sam:multiple_bins}). In addition, we provide some technical claims (see Section~\ref{sec:gen_sigi_sam:technical_claims}).

\subsection{Energy estimation for noisy signals}\label{sec:gen_sigi_sam:noisy}
In this section, we generalize our methods in Section~\ref{sec:emp_energy_est} to estimate the (partial) energy of the true observing signals, which contains some noise. 

In the following lemma, we show that the energy of the filtered signal $z(t)$ can be estimated with a few samples, assuming it contains a small fraction of noise.

\begin{lemma}\label{lem:significant_samples_z_below}
Let $x^*\in {\cal F}_{k,F}$ be the ground-truth signal and $x(t)=x^*(t)+g(t)$ be the noisy observation signal.
Let $H$ be defined as in Definition~\ref{def:effect_H_k_sparse}, ${G}^{(j)}_{\sigma,b}$ be defined as in Definition \ref{def:G_j_sigma_b} with $(\sigma,b)$ such that Large Offset event does not happen.
For any $j\in [B]$, suppose there exists an $f_0$
with $j=h_{\sigma, b}(f_0)$ satisfying: 
\begin{align*}
    \int_{f_0-{\Delta_h}}^{f_0+{\Delta_h}} | \widehat{x^*\cdot H}(f) |^2 \mathrm{d} f \geq  T\N^2/k, 
\end{align*}
where $\N^2\geq \delta \|x^*\|_T^2$.
Let $z^*(t):=(x^*\cdot H)*G^{(j)}_{\sigma, b}(t)$ and $z(t)=(x\cdot H)*{G}^{(j)}_{\sigma,b}(t)$.
Let $g_z(t):=z(t)-z^*(t)$. Let $U=\{t_0\in \R~|~ H(t) > 1-\delta_1~ \forall t\in [t_0,t_0+\beta]\}$.  Suppose that $ \|g_z(t)\|_T^2 \leq c \|z^*(t)\|_{U}^2$, where $c\in (0, 0.001)$ is a small universal constant. 

For $s\geq O(k\log(k)\log(1/\rho))$, let $S_{D_U}=\{t_1,\dots,t_s\}$ be a set of i.i.d. samples from the distribution $D_U$ defined as Eq.~\eqref{eq:def_dist_D_U}. Let the weights $w_i=1/(TsD_{U}(t_i))$ for $i\in  [s]$.

Then, with probability at least $0.85$,
\begin{equation*}
\|z(t)\|_{S_{D_U},w}^2 \geq ( 0.2 - 20 c )\cdot \|z^*(t)\|_{U}^2
\end{equation*}
\end{lemma}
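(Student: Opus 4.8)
The goal is a lower bound on the empirical weighted energy $\|z(t)\|_{S_{D_U},w}^2$ in terms of the true filtered signal's partial energy $\|z^*(t)\|_U^2$. The plan is to first control $\|z^*\|_U^2$ and $\|z\|_U^2$ against their $T$-norm counterparts, then invoke the empirical energy estimation machinery from Section~\ref{sec:emp_energy_est:filtered_local}, and finally absorb the noise term $g_z$ via the triangle inequality together with the small-noise hypothesis $\|g_z(t)\|_T^2 \le c\|z^*(t)\|_U^2$.

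\textbf{Step 1: relate partial and full energies of $z^*$.} Since $x^*$ is $k$-Fourier-sparse and $f_0$ is heavy (the displayed integral condition holds), Lemma~\ref{lem:norm_z_cut_preserve} applies to $z^*(t)=(x^*\cdot H)*G^{(j)}_{\sigma,b}(t)$ with the interval $U=[L,R]$; by Lemma~\ref{lem:relation_R_L_T} we have $R-L\eqsim T$ and $2T-k^2(2T+L-R)\eqsim T$, so
\begin{align*}
\|z^*(t)\|_T^2 \eqsim \|z^*(t)\|_U^2.
\end{align*}
This lets me freely pass between $\|z^*\|_U$ and $\|z^*\|_T$ up to universal constants, which is needed because the empirical estimation lemma is naturally stated for one or the other.

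\textbf{Step 2: apply empirical estimation to $z$ and handle the noise.} Write $z = z^* + g_z$. I would like to apply Lemma~\ref{lem:z_energy_preserving_plus} (partial energy estimation for filtered signals) to get $\|z\|_{S_{D_U},w}^2 \gtrsim \|z\|_U^2$ with probability $\ge 1-\rho$; however, that lemma is stated for a filtered signal of the form $(x\cdot H)*G^{(j)}_{\sigma,b}$, which $z$ indeed is (with $x = x^*+g$), so it applies directly provided the heavy-frequency hypothesis on $x$ holds — and here one must be slightly careful, since the hypothesis is stated for $x^*\cdot H$, not $x\cdot H$. I expect the resolution is that the energy bound actually only needs $\|z\|_U \gtrsim \|z^*\|_U$ (via $\|g_z\|_T$ small and Step 1), so the relevant energy bound (Corollary~\ref{cor:condition_number_z}, or its $D_U$-version) still goes through with $z^*$ in the denominator. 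Concretely: by triangle inequality $\|z\|_U \ge \|z^*\|_U - \|g_z\|_U$, and since $\|g_z\|_U^2 \le \frac{T}{|U|}\|g_z\|_T^2 \lesssim \|g_z\|_T^2 \le c\|z^*\|_U^2$ with $c$ tiny, we get $\|z\|_U^2 \ge (1-O(\sqrt c))^2\|z^*\|_U^2 \ge 0.9\|z^*\|_U^2$, say. Then $\|z\|_{S_{D_U},w}^2 \ge (1-\epsilon)\|z\|_U^2$ with, say, $\epsilon = 1/2$, gives $\|z\|_{S_{D_U},w}^2 \ge \frac{1}{2}\cdot 0.9\,\|z^*\|_U^2$, already beating the claimed $0.2$.

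\textbf{Step 3: tracking constants to get the stated form.} The stated bound $(0.2 - 20c)\|z^*\|_U^2$ with its specific constants suggests the intended route is slightly different: rather than estimating $\|z\|_U$ directly, one estimates the empirical energy of $z$ and then writes $\|z\|_{S_{D_U},w} \ge \|z^*\|_{S_{D_U},w} - \|g_z\|_{S_{D_U},w}$, applies Lemma~\ref{lem:z_energy_preserving_plus} (or Lemma~\ref{lem:z_energy_preserving}) to the first term to get $\|z^*\|_{S_{D_U},w}^2 \ge (1-\epsilon)\|z^*\|_U^2$, and bounds the noise term $\|g_z\|_{S_{D_U},w}^2$ crudely — perhaps via the generic Lemma~\ref{lem:link_weight_sampling_size__condition_num} applied to $g_z$, or via a union bound showing the samples don't land where $g_z$ is large, yielding something like $\|g_z\|_{S_{D_U},w}^2 \lesssim c\|z^*\|_U^2$ up to a moderate constant (the factor $20$). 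Squaring out $(\sqrt{(1-\epsilon)} - O(\sqrt c))^2$ and choosing $\epsilon$ appropriately produces the $0.2 - 20c$ shape. The success probability $0.85$ then comes from combining the $1-\rho$ failure probability of the estimation lemma with the constant-probability events already conditioned on (Large Offset not happening is assumed, not charged here), plus possibly a second invocation for the noise term.

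\textbf{Main obstacle.} The delicate point is the noise term $\|g_z\|_{S_{D_U},w}^2$: unlike $z^*$ and $z$, the noise $g_z = (g\cdot H)*G^{(j)}_{\sigma,b}$ is \emph{not} a sparse signal, so the energy bounds of Section~\ref{sec:struct_FS} do not apply to it, and Lemma~\ref{lem:link_weight_sampling_size__condition_num} requires a pointwise bound $\max_t |g_z(t)|^2/D_U(t)$ that we cannot control. I expect the fix is to avoid ever invoking concentration for $g_z$ directly: instead, bound $\|g_z\|_{S_{D_U},w}^2$ deterministically using the reweighting definition $w_i = 1/(TsD_U(t_i))$ — i.e., $\|g_z\|_{S_{D_U},w}^2 = \frac{1}{Ts}\sum_i |g_z(t_i)|^2/D_U(t_i)$ — and then take expectation over $S_{D_U}$ to get exactly $\frac{1}{T}\int |g_z|^2 / (\text{implied density}) \eqsim \|g_z\|_T^2$ in expectation, followed by Markov's inequality (costing a constant factor like $20$ and a constant failure probability like $0.1$). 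This is why the final constants are loose and why the probability is $0.85$ rather than $1-\rho$; carrying out this expectation-plus-Markov argument cleanly, while simultaneously running the Chernoff-based estimation for the $z^*$ term on the \emph{same} sample set $S_{D_U}$, is the part that requires the most care.
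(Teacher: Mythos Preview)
Your Step~3 together with the ``Main obstacle'' paragraph is exactly the paper's proof: the paper applies Lemma~\ref{lem:z_energy_preserving_plus} to $z^*$ (not to $z$), bounds $\|g_z\|_{S_{D_U},w}^2$ by computing its expectation (which is at most $\|g_z\|_T^2$) and applying Markov with factor $20$ and failure probability $0.1$, and combines via the pointwise inequality $|z(t_i)|^2 \ge 0.5\,|z^*(t_i)|^2 - |g_z(t_i)|^2$ summed over the sample set (rather than the norm-level triangle inequality you wrote, but this is cosmetic and yields the same constants). The union of the $1-\rho$ event from Lemma~\ref{lem:z_energy_preserving_plus} and the $0.9$ event from Markov gives the stated $0.85$.

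Your Step~2 detour of applying Lemma~\ref{lem:z_energy_preserving_plus} directly to $z$ does not work, and the obstruction is not the heavy-cluster hypothesis but the more basic requirement that the underlying signal lie in ${\cal F}_{k,F}$: since $x = x^* + g$ with arbitrary $g$, the signal $z = (x\cdot H)*G^{(j)}_{\sigma,b}$ is not a filtered $k$-sparse signal and the energy bound (Corollary~\ref{cor:condition_number_z}) underlying that lemma is unavailable for it. You already route around this correctly in Step~3, so this is just a clarification of why the detour fails.
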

\begin{proof}

We consider the expectation of $\|g_z(t)\|_{S_{U}, w}^2$ first.
\begin{align}
 \E\Big[\sum_{j=1}^s w_i|g_z(t_j)|^2\Big] =&~ \sum_{j=1}^s\E_{t_j\sim D_U} [w_i|g_z(t_j)|^2] \notag \\
  =&~ \sum_{j=1}^s\E_{t_j\sim D_U} \Big[\frac{1}{T s D_U(t_i)}|g_z(t_j)|^2\Big] \notag\\
  \leq &~ \E_{t\sim D_U} \Big[\frac{1}{T D_U(t)}|g_z(t)|^2\Big] \notag \\
  \leq &~ \int_U \frac{1}{T }|g_z(t)|^2 \d t\notag \\
  \leq &~  \frac{1}{T } \int_0^T|g_z(t)|^2 \d t \notag \\
  \leq &~ \|g_z(t)\|_T^2  \label{eq:3_good_samples_estimate_z_diff}
\end{align}
where the first step is straight forward, the second step follows from the definition of $w_i$, the third step is straight forward, the forth step follows from the definition of expectation, the fifth step follows from $U\subseteq [0, T]$, the sixth step follows from the definition of the norm.

By Eq.~\eqref{eq:3_good_samples_estimate_z_diff} and Markov inequality, we have that with probability at least $0.9$, 
\begin{align}
    \sum_{j=1}^s w_i|g_z(t_j)|^2 \leq 20 \|g_z(t)\|_T^2. \label{eq:22_good_samples_estimate_z_diff}
\end{align}

Then, we have that
\begin{align*}
\sum_{j=1}^s w_i|z(t_j)|^2 \geq &~  0.5 \sum_{j=1}^s w_i|z^*(t_j)|^2 -  \sum_{j=1}^s w_i|g_z(t_j)|^2\notag \\   
\geq &~  0.5 \sum_{j=1}^s w_i|z^*(t_j)|^2 - 20 \|g_z(t)\|_T^2 \notag \\ 
\geq &~  0.2 \|z^*(t)\|_{U}^2  - 20 \|g_z(t)\|_T^2 \notag \\
\geq &~ ( 0.2 - 20 c )\cdot \|z^*(t)\|_{U}^2
\end{align*}
where the first step follows from $ (a+b)^2\geq 0.5 a^2-b^2$, the second step follows from Eq.~\eqref{eq:22_good_samples_estimate_z_diff}, the third step follows from Lemma \ref{lem:z_energy_preserving_plus}, the forth step follows from  $\|g_z(t)\|_T^2 \leq c \|z^*(t)\|_{U}^2 $. 

The total success probability follows from a union bound: $0.9-\rho>0.85$.

The lemma is then proved.
\end{proof}

The following lemma shows how to estimate the energy of a noisy local-test signal.

\begin{lemma}\label{lem:significant_samples_z_above}

Let $x^*\in {\cal F}_{k,F}$ be the ground-truth signal and $x(t)=x^*(t)+g(t)$ be the noisy observation signal.
Let $H$ be defined as in Definition~\ref{def:effect_H_k_sparse}, ${G}^{(j)}_{\sigma,b}$ be defined as in Definition \ref{def:G_j_sigma_b} with $(\sigma,b)$ such that Large Offset event does not happen.
For any $j\in [B]$, suppose there exists an $f_0$
with $j=h_{\sigma, b}(f_0)$ satisfying: 
\begin{align*}
    \int_{f_0-{\Delta_h}}^{f_0+{\Delta_h}} | \widehat{x^*\cdot H}(f) |^2 \mathrm{d} f \geq  T\N^2/k, 
\end{align*}
where $\N^2\geq \delta \|x^*\|_T^2$.
Let $z^*(t):=(x^*\cdot H)*G^{(j)}_{\sigma, b}(t)$ and $z(t)=(x\cdot H)*{G}^{(j)}_{\sigma,b}(t)$.
Let $g_z(t):=z(t)-z^*(t)$. Let $U=\{t_0\in \R~|~ H(t) > 1-\delta_1~ \forall t\in [t_0,t_0+\beta]\}$.  Suppose that $ \|g_z(t)\|_T^2 \leq c \|z^*(t)\|_{U}^2$, where $c\in (0, 0.001)$ is a small universal constant. 

For $s\geq O(k\log(k)\log(1/\rho))$, let $S_{D_U}=\{t_1,\dots,t_s\}$ be a set of i.i.d. samples from the distribution $D_U$ defined as Eq.~\eqref{eq:def_dist_D_U}. Let $w_i=1/(TsD_{U}(t_i))$ for $i\in  [s]$. 

Then, %
with probability at least $0.85$,
\begin{equation*}
\|d_z(t)\|_{S_{D_U}, w}^2 \lesssim (c + \sqrt{\gamma^2 +\delta_1 }  )\cdot \|z^*(t)\|_{U}^2,
\end{equation*}
\end{lemma}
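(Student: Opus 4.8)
The plan is to mirror the structure of the proof of Lemma~\ref{lem:significant_samples_z_below}, but now estimating the energy of the local-test signal $d_z(t)=z(t)e^{2\pi\i f_0\beta}-z(t+\beta)$ instead of $z(t)$. First I would split $d_z(t)$ into its ground-truth part $d_{z^*}(t):=z^*(t)e^{2\pi\i f_0\beta}-z^*(t+\beta)$ and a noise part $d_{g_z}(t):=g_z(t)e^{2\pi\i f_0\beta}-g_z(t+\beta)$, so that $d_z(t)=d_{z^*}(t)+d_{g_z}(t)$. By the elementary inequality $(a+b)^2\le 2a^2+2b^2$ applied in the weighted discrete norm, it suffices to bound $\|d_{z^*}(t)\|_{S_{D_U},w}^2$ and $\|d_{g_z}(t)\|_{S_{D_U},w}^2$ separately.

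For the noise term, I would estimate $\E[\sum_i w_i |d_{g_z}(t_i)|^2]$ exactly as in Eq.~\eqref{eq:3_good_samples_estimate_z_diff}: since $|e^{2\pi\i f_0\beta}|=1$, a triangle inequality gives $|d_{g_z}(t)|^2\le 2|g_z(t)|^2+2|g_z(t+\beta)|^2$, and both terms integrate (against $D_U$ with the weights $w_i=1/(TsD_U(t_i))$) to something $\lesssim \|g_z(t)\|_T^2$, using $U,U+\beta\subseteq[0,T]$ (here the relation $R-L\eqsim T$ from Lemma~\ref{lem:relation_R_L_T} keeps the normalization under control). Then Markov's inequality gives, with probability $\ge 0.9$, $\|d_{g_z}(t)\|_{S_{D_U},w}^2\lesssim \|g_z(t)\|_T^2\le c\,\|z^*(t)\|_U^2$. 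For the ground-truth term, I would invoke the partial energy estimation for local-test signals, Lemma~\ref{lem:z_diff_energy_preserving}, applied to $x^*$ (which is genuinely $k$-Fourier-sparse, so the hypotheses there are met when the Large Offset event fails and $f_0$ is heavy): with probability $\ge 1-\rho$ this yields $\|d_{z^*}(t)\|_{S_{D_U},w}^2\lesssim \|d_{z^*}(t)\|_U^2+\sqrt{\delta_1}\,\|x^*(t)\|_T\cdot\|d_{z^*}(t)\|_U$. It then remains to bound $\|d_{z^*}(t)\|_U^2$ by $\gamma^2\|z^*(t)\|_U^2$, which is exactly the frequency-domain concentration consequence Eq.~\eqref{eq:freq_concentration_intro}: since $z^*(t)$ is a one-cluster signal around $f_0$ (Lemma~\ref{lem:z_satisfies_two_properties}, using heaviness plus well-isolation), and $\beta\le O(1/\Delta)$ ensures $|e^{2\pi\i f_0\beta}-e^{2\pi\i f\beta}|\le\gamma$ for all $f$ in the cluster, one gets $\|z^*(t)e^{2\pi\i f_0\beta}-z^*(t+\beta)\|_T^2\lesssim\gamma^2\|z^*(t)\|_T^2\lesssim\gamma^2\|z^*(t)\|_U^2$ (the last step again via $R-L\eqsim T$ and the time-domain concentration). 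Finally I would absorb the cross term $\sqrt{\delta_1}\|x^*(t)\|_T\cdot\|d_{z^*}(t)\|_U$ by using the heavy-cluster bound $\|x^*(t)\|_T^2\lesssim (k/\delta)\|z^*(t)\|_U^2$ (Lemma~\ref{lem:xHG_large_than_exp_small}, noting $\|z^*\|_T\eqsim\|z^*\|_U$) together with $\delta_1\le\delta/k$, so that $\sqrt{\delta_1}\|x^*\|_T\lesssim\|z^*\|_U$ and the cross term is $\lesssim\sqrt{\gamma^2+\delta_1}\,\|z^*(t)\|_U^2$. Collecting the three contributions and taking a union bound over the two probabilistic events ($0.9-\rho>0.85$) gives the claimed $\|d_z(t)\|_{S_{D_U},w}^2\lesssim(c+\sqrt{\gamma^2+\delta_1})\|z^*(t)\|_U^2$.

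The main obstacle I expect is the bookkeeping around the cross term and the passage between the $U$-norm, the $T$-norm, and the $D_U$-weighted norm: Lemma~\ref{lem:z_diff_energy_preserving} is stated with an error proportional to $\sqrt{\delta_1}\|x(t)\|_T\cdot\|d_z(t)\|_U$ (i.e. a geometric-mean form rather than a clean additive one), so I must be careful that when I plug in $x=x^*$ and then re-express $\|x^*\|_T$ in terms of $\|z^*\|_U$ via the heavy-cluster inequality, the resulting bound genuinely collapses to the additive form $c+\sqrt{\gamma^2+\delta_1}$ and no stray $\poly(k)$ factor survives. A secondary subtlety is that Lemma~\ref{lem:z_diff_energy_preserving} is an \emph{upper} bound only (no matching lower bound), which is exactly what we want here but means the argument is genuinely one-sided and I should not accidentally appeal to a two-sided norm-preservation statement for $d_z$.
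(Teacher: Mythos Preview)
Your proposal is correct and follows essentially the same approach as the paper's proof: split $d_z$ into $d_{z^*}+d_{g_z}$, handle the noise part by computing its expectation and applying Markov, handle the signal part via Lemma~\ref{lem:z_diff_energy_preserving}, bound $\|d_{z^*}\|_U^2$ by the frequency-domain concentration argument (this is exactly Claim~\ref{clm:est_freq_z_in_expectation_L_R} in the paper, which gives $(\gamma^2+\delta_1)\|z^*\|_U^2$ rather than just $\gamma^2\|z^*\|_U^2$---the extra $\delta_1$ comes from the out-of-cluster mass and is harmless), and absorb the cross term using Lemma~\ref{lem:xHG_large_than_exp_small} together with Lemma~\ref{lem:norm_z_cut_preserve} and $\delta_1 k/\delta\lesssim 1$. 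Your anticipated obstacles (the geometric-mean form of the error in Lemma~\ref{lem:z_diff_energy_preserving} and the one-sidedness of that bound) are precisely the points where the paper's chain of inequalities does the work, and your resolution matches theirs.
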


\begin{proof}

We first consider the expectation of $\|g_z(t) e^{2 \pi i f_0 \beta} - g_z(t+\beta)\|_{S_{D_U}, w}^2$.
We have that
\begin{align}
&~ \E\Big[\sum_{i=1}^s w_i|g_z(t_i) e^{2 \pi i f_0 \beta} - g_z(t_i+\beta)|^2\Big]\notag \\
=&~ \sum_{i=1}^s \E_{t_i\sim D_U}[w_i|g_z(t_i) e^{2 \pi i f_0 \beta} - g_z(t_i+\beta)|^2]\notag \\
= &~ \sum_{i=1}^s \E_{t_i\sim D_U}\Big[\frac{1}{Ts D_U(t_i)}|g_z(t_i) e^{2 \pi i f_0 \beta} - g_z(t_i+\beta)|^2\Big]\notag \\
= &~ \E_{t\sim D_U}\Big[\frac{1}{T D_U(t)}|g_z(t) e^{2 \pi i f_0 \beta} - g_z(t+\beta)|^2\Big]\notag \\
= &~ \int_{U} \frac{1}{T }|g_z(t) e^{2 \pi i f_0 \beta} - g_z(t+\beta)|^2 \d t \notag \\
\leq &~  \frac{4}{T} \int_{U} (|g_z(t)|^2+ |g_z(t+\beta)|^2) \d t \notag \\
\leq &~  \frac{8}{T} \int_{0}^T |g_z(t)|^2 \d t \notag \\
\leq &~ 10 \|g_z(t)\|_T^2, \label{eq:1_good_samples_estimate_z_diff}
\end{align}
where the first step is straight forward, the second step follows from the definition of $w_i$, the third step is straight forward, the forth step follows from the definition of expectation, the fifth step follows from $(a+b)^2\leq 2a^2+2b^2$, the sixth step follows from $U\subseteq [0,  T]$ and $U+\beta \subseteq [0,  T]$, the seventh step follows from the definition of the norm. 

By Eq.~\eqref{eq:1_good_samples_estimate_z_diff} and Markov inequality, we have that with probability at least $0.9$, 
\begin{align}
    \sum_{i=1}^s w_i|g_z(t_i) e^{2 \pi i f_0 \beta} - g_z(t_i+\beta)|^2 \leq 100 \|g_z(t)\|_T^2. \label{eq:2_good_samples_estimate_z_diff}
\end{align}

We have that
\begin{align*}
&~\sum_{i=1}^s w_i|z(t_i) e^{2 \pi i f_0 \beta} - z(t_i+\beta)|^2 \notag \\
\leq &~ \sum_{i=1}^s (2w_i|z^*(t_i) e^{2 \pi i f_0 \beta} - z^*(t_i+\beta)|^2 + 2w_i|g_z(t_i) e^{2 \pi i f_0 \beta} - g_z(t_i+\beta)|^2) \notag \\
\leq &~ 200 \|g_z(t)\|_T^2+ \sum_{i=1}^s 2w_i|z^*(t_i) e^{2 \pi i f_0 \beta} - z^*(t_i+\beta)|^2  \notag \\
\leq &~  200 \|g_z(t)\|_T^2+ 4\| z^*(t+\beta) - e^{2 \pi \i f_0 \beta} \cdot z^*(t)\|^2_{U} + 2 \sqrt{\delta_1} \|x(t)\|_{T} \| z^*(t+\beta) - e^{2 \pi \i f_0 \beta} \cdot z^*(t)\|_{U}  \notag \\
\lesssim &~  (200 c + 4\gamma^2 +4\delta_1) \|z^*(t)\|_{U}^2 + 2 \sqrt{\delta_1(\gamma^2 +\delta_1)} \|x^*(t)\|_{T} \|z^*(t)\|_{U} \notag \\
\lesssim &~  (c + \gamma^2 +\delta_1) \|z^*(t)\|_{U}^2 + \sqrt{\delta_1(\gamma^2 +\delta_1)\frac{k}{\delta}}  \|z^*(t)\|_{T} \|z^*(t)\|_{U} \notag \\
\lesssim &~  ((c + \gamma^2 +\delta_1)  + \sqrt{\delta_1(\gamma^2 +\delta_1)\frac{k}{\delta}  \frac{R-L}{T-k^2(T+L-R)}}  )\|z^*(t)\|_{U}^2\notag \\
\lesssim &~  (c + \sqrt{\gamma^2 +\delta_1 }  )\cdot \|z^*(t)\|_{U}^2 
\end{align*}
where the first step follows from $(a+b)^2\leq 2a^2+2b^2$, the second step follows from Eq.~\eqref{eq:2_good_samples_estimate_z_diff}, the third step follows from  the partial energy estimation for local-test signal (Lemma \ref{lem:z_diff_energy_preserving}) which holds with probability $1-\rho$, the forth step follows from the $\|g_z(t)\|_T^2 \leq c \|z^*(t)\|_{U}^2 $ and Claim \ref{clm:est_freq_z_in_expectation_L_R}, the fifth step follows from Lemma \ref{lem:xHG_large_than_exp_small}, the sixth step follows from $[L,R]:=U$ and Lemma \ref{lem:norm_z_cut_preserve}, the seventh step follows form ${R-L}\lesssim {T-k^2(T+L-R)}$, $\delta_1 \delta^{-1} k\lesssim 1$.  

The total success probability follows from a union bound $0.9-\rho>0.85$.

The lemma is then proved.
\end{proof}

\subsection{Significant sample generation for a single bin}\label{sec:gen_sigi_sam:single_bin}

Recall that we define a sample $t\in [0,T]$ is \emph{significant} if the magnitude of the local-test signal at $t$ is small, i.e., $|d_z(t)|\leq O(|z(t)|)$. The following lemma shows that a significant sample can be efficiently generated, provided that the filtered noisy signal does not contain too much noise.

\begin{lemma}[Generate Significant samples for filtered noisy  signals]\label{lem:significant_samples_z}

Let $x^*\in {\cal F}_{k,F}$ be the ground-truth signal and $x(t)=x^*(t)+g(t)$ be the noisy observation signal.
Let $H$ be defined as in Definition~\ref{def:effect_H_k_sparse}, ${G}^{(j)}_{\sigma,b}$ be defined as in Definition \ref{def:G_j_sigma_b} with $(\sigma,b)$ such that Large Offset event does not happen.
For any $j\in [B]$, suppose there exists an $f_0$
with $j=h_{\sigma, b}(f_0)$ satisfying: 
\begin{align*}
    \int_{f_0-{\Delta_h}}^{f_0+{\Delta_h}} | \widehat{x^*\cdot H}(f) |^2 \mathrm{d} f \geq  T\N^2/k, 
\end{align*}
where $\N^2\geq \delta \|x^*\|_T^2$.
Let $z^*(t):=(x^*\cdot H)*G^{(j)}_{\sigma, b}(t)$ and $z(t)=(x\cdot H)*{G}^{(j)}_{\sigma,b}(t)$.
Let $g_z(t):=z(t)-z^*(t)$. Let $U=\{t_0\in \R~|~ H(t) > 1-\delta_1~ \forall t\in [t_0,t_0+\beta]\}$.  Suppose that $ \|g_z(t)\|_T^2 \leq c \|z^*(t)\|_{U}^2$, where $c\in (0, 0.001)$ is a small universal constant.

Then, there is an algorithm  
that takes $O(k\log(k))$ samples in $z$, runs in $O(k\log(k))$ time, and output an $\alpha\in U$
such that with probability at least $0.6$,
\begin{equation*}
|z(\alpha + \beta) - z(\alpha)e^{2 \pi \i f_0 \beta}|^2 \le O({c + \sqrt{\gamma^2 +\delta_1 }  }) |z(\alpha)|^2\leq 0.01|z(\alpha)|^2.
\end{equation*}
\end{lemma}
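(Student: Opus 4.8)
The plan is to combine the two-level sampling procedure described in the technical overview with the two energy-estimation lemmas already proved in this section, namely Lemma~\ref{lem:significant_samples_z_below} (lower bound on $\|z(t)\|_{S_{D_U},w}^2$) and Lemma~\ref{lem:significant_samples_z_above} (upper bound on $\|d_z(t)\|_{S_{D_U},w}^2$). First I would describe the algorithm: in the first level, draw $S_{D_U}=\{t_1,\dots,t_s\}$ of size $s=O(k\log(k)\log(1/\rho))$ i.i.d.\ from the distribution $D_U$ of Eq.~\eqref{eq:def_dist_D_U}, with weights $w_i=1/(TsD_U(t_i))$; to obtain each sample value $z(t_i)$ and $z(t_i+\beta)$ we invoke \textsc{HashToBins} (Lemma~\ref{lem:hashtobins}), which costs $|\supp(G^{(j)}_{\sigma,b})|=\wt O(k)$ samples of $x$ and $\wt O(k)$ time per point, for a total of $\wt O(k^2)$; but since the statement charges costs "in $z$'', we only claim $O(k\log k)$ samples of $z$ and $O(k\log k)$ time for the sampling/reweighing bookkeeping. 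In the second level, sub-sample an index $\alpha:=t_i$ from $S_{D_U}$ according to $D_S(t_i)=w_i|z(t_i)|^2/\sum_j w_j|z(t_j)|^2$; this is an additional $O(s)=O(k\log k)$ time and no new samples.

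Next I would do the probabilistic analysis. Condition on the high-probability events of Lemma~\ref{lem:significant_samples_z_below} and Lemma~\ref{lem:significant_samples_z_above}, which each hold with probability at least $0.85$ over the draw of $S_{D_U}$; by a union bound both hold simultaneously with probability at least $0.7$. On this event we have $\|d_z(t)\|_{S_{D_U},w}^2\lesssim(c+\sqrt{\gamma^2+\delta_1})\|z^*(t)\|_U^2$ and $\|z(t)\|_{S_{D_U},w}^2\ge(0.2-20c)\|z^*(t)\|_U^2$, so dividing,
\begin{align*}
\frac{\|d_z(t)\|_{S_{D_U},w}^2}{\|z(t)\|_{S_{D_U},w}^2}\lesssim \frac{c+\sqrt{\gamma^2+\delta_1}}{0.2-20c}=O(c+\sqrt{\gamma^2+\delta_1}).
\end{align*}
Now observe that by the definition of the second-level distribution $D_S$,
\begin{align*}
\E_{\alpha\sim D_S}\Big[\frac{|d_z(\alpha)|^2}{|z(\alpha)|^2}\Big]=\frac{\sum_i w_i|d_z(t_i)|^2}{\sum_j w_j|z(t_j)|^2}=\frac{\|d_z(t)\|_{S_{D_U},w}^2}{\|z(t)\|_{S_{D_U},w}^2}=O(c+\sqrt{\gamma^2+\delta_1}),
\end{align*}
where $d_z(t_i)=z(t_i+\beta)-z(t_i)e^{2\pi\i f_0\beta}$. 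Applying Markov's inequality to the nonnegative random variable $|d_z(\alpha)|^2/|z(\alpha)|^2$, with probability at least, say, $0.9$ over the second-level draw we get $|d_z(\alpha)|^2\le O(c+\sqrt{\gamma^2+\delta_1})\,|z(\alpha)|^2$. Multiplying the two stage probabilities (and noting $\alpha\in U$ automatically since $S_{D_U}\subseteq U$) yields the claimed success probability $0.7\cdot 0.9>0.6$; the final bound $\le 0.01|z(\alpha)|^2$ follows by choosing the universal constant $c$ and the parameter $\gamma\in(0,0.001)$, and using $\delta_1\le\delta/k$, small enough that $O(c+\sqrt{\gamma^2+\delta_1})\le 0.01$.

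The main obstacle I anticipate is bookkeeping the success probabilities and the constants consistently: Lemma~\ref{lem:significant_samples_z_below} and Lemma~\ref{lem:significant_samples_z_above} each bury a failure probability $\rho$ from the norm-preservation step (Lemma~\ref{lem:link_weight_sampling_size__condition_num}) inside the stated $0.85$, so one must be careful that the Markov step in the second level does not re-use the same randomness and that the combined constant in front of $\|z(\alpha)\|^2$ genuinely collapses below $0.01$ under the standing assumptions $c<0.001$, $\gamma<0.001$, $\delta_1\le\delta/k$. A secondary subtlety is justifying that $\|z^*(t)\|_U$ cancels cleanly in the ratio — this is fine because both lemmas are stated relative to the \emph{same} quantity $\|z^*(t)\|_U^2$ under the \emph{same} hypothesis $\|g_z\|_T^2\le c\|z^*\|_U^2$, so no separate lower bound on $\|z^*\|_U$ is needed. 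The cost accounting is routine given Lemma~\ref{lem:hashtobins} and the size of $S_{D_U}$.
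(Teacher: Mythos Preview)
Your proposal is correct and follows essentially the same approach as the paper: describe the two-level sampling (first draw $S_{D_U}$ from $D_U$, then sub-sample $\alpha$ proportional to $w_i|z(t_i)|^2$), invoke Lemmas~\ref{lem:significant_samples_z_below} and~\ref{lem:significant_samples_z_above} and union-bound them to get probability $\ge 0.7$, divide to bound the weighted ratio by $O(c+\sqrt{\gamma^2+\delta_1})$, recognize this ratio as $\E_{\alpha\sim D_S}[|d_z(\alpha)|^2/|z(\alpha)|^2]$, and finish with Markov. The constant and probability bookkeeping you flag is exactly what the paper does, and your observation that both lemmas are stated relative to the same $\|z^*\|_U^2$ so the ratio cancels cleanly is precisely the point.
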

\begin{proof}

The output $\alpha$ is sample in two steps:
\begin{enumerate}
    \item For $s\geq O(k\log(k))$, generate $s$ i.i.d. samples  $S_{D_U}=\{t_1,\dots,t_s\}$ bfrom the distribution $D_U$ defined as Eq.~\eqref{eq:def_dist_D_U}. Let $w_i=1/(TsD_{U}(t_i))$ for $i\in  [s]$ be the weights.
    \item Define a probability distribution $D_S$ such that
    \begin{align}\label{eq:def_dist_D_S}
        D_S(t_i):=\frac{w_i|z(t_i)|^2}{\sum_{i\in [s]}w_i |z(t_i)|^2}~~~\forall i\in [s].
    \end{align}
    And sample $\alpha$ according to $D_S$.
\end{enumerate}
The sample and time complexities of this procedure are straightforward. It remains to prove that $\alpha$ satisfies the significance requirement stated in the lemma.

By Lemma \ref{lem:significant_samples_z_below}, we have that with probability at least 0.85,
\begin{align}
 \sum_{j=1}^s w_i|z(t_j)|^2 \geq ( 0.2 - 20 c )\cdot \|z^*(t)\|_{U}^2\label{eq:down_good_samples_estimate_z_diff}
\end{align}

By Lemma \ref{lem:significant_samples_z_above}, we have that with probability at least 0.85,
\begin{align}
    \sum_{i=1}^s w_i|z(t_i) e^{2 \pi i f_0 \beta} - z(t_i+\beta)|^2 \lesssim (c + \sqrt{\gamma^2 +\delta_1 }  )\cdot \|z^*(t)\|_{U}^2 \label{eq:up_good_samples_estimate_z_diff}
\end{align}

Thus, with probability at least 0.7, 
\begin{align}
&~\frac{\sum_{i=1}^s w_i|z(t_i) e^{2 \pi i f_0 \beta} - z(t_i+\beta)|^2}{\sum_{j=1}^s w_i|z(t_j)|^2} \notag\\
\leq &~\frac{O(c + \sqrt{\gamma^2 +\delta_1 }  )\cdot \|z^*(t)\|_{U}^2}{\sum_{j=1}^s w_i|z(t_j)|^2} \notag\\
\leq &~  \frac{O(c + \sqrt{\gamma^2 +\delta_1 }  )\cdot \|z^*(t)\|_{U}^2}{( 0.2 - 20 c )\cdot \|z^*(t)\|_{U}^2 } \notag \\
= &~  O({c + \sqrt{\gamma^2 +\delta_1 }  })
\label{eq:bound_sum_frac_z_diff_z_magnitude}
\end{align}
where the first step follows from Eq.~\eqref{eq:up_good_samples_estimate_z_diff}, the second step follows from Eq.~\eqref{eq:down_good_samples_estimate_z_diff}, the third step is straight forward.

For a random sample $\alpha \sim D_S$, we bound the following expectation:
\begin{align*}
&~\E_{\alpha \sim D_S}\left[ \frac{|z(\alpha) e^{2 \pi i f_0 \beta} - z(\alpha+\beta)|^2}{|z(\alpha)|^2} \right]\\
=&~ \sum_{i=1}^s  \frac{w_i|z(t_i)|^2}{\sum_{j=1}^s w_j|z(t_j)|^2}\cdot \frac{|z(t_i) e^{2 \pi i f_0 \beta} - z(t_i+\beta)|^2}{|z(t_i)|^2}\\
=&~   \frac{\sum_{i=1}^s w_i |z(t_i) e^{2 \pi i f_0 \beta} - z(t_i+\beta)|^2}{\sum_{j=1}^s w_j|z(t_j)|^2} \\
\leq &~ O({c + \sqrt{\gamma^2 +\delta_1 }  }),
\end{align*}
where the first step follows from the definition of $D_m$, the second step is straightforward, the third step follows from Eq.~\eqref{eq:bound_sum_frac_z_diff_z_magnitude}.

Thus by Markov inequality, with probability $0.9$,
\begin{align*}
\frac{|z(\alpha) e^{2 \pi i f_0 \beta} - z(\alpha+\beta)|^2}{|z(\alpha)|^2}\le \frac{O({c + \sqrt{\gamma^2 +\delta_1 }  })}{0.1} = O({c + \sqrt{\gamma^2 +\delta_1 }  }).
\end{align*}

The success probability follows from a union bound.  And the second inequality follows from the range of the parameters $c,\gamma, \delta_1$.
\end{proof}

\subsection{Significant sample generation for multiple bins}\label{sec:gen_sigi_sam:multiple_bins}

In this section, we present our significant sample generation procedure that simultaneously works for all ``good bins''.

\begin{algorithm}[ht]\caption{Generate Significant Samples}\label{algo:significant_samples_z}
\begin{algorithmic}[1]
\Procedure{GenerateSignificantSamples}{$z$} %
\State $B\leftarrow O(k)$
\State $U\leftarrow \{t_0\in \R| H(t) > 1-\delta_1, \forall t\in [t_0,t_0+\beta]\}$
\State $D_z \leftarrow
\begin{cases}
{c}\cdot (1-|t/T| )^{-1}T^{-1}, & \text{ for } |t| \le T(1-{1}/k) \wedge t\in U \\
c \cdot  k T^{-1}, & \text{ for } |t|\in [T(1-{1}/k), T] \wedge t\in U
\end{cases}  $ 
\State \label{ln:genSigiSam_Sklogk} $S\leftarrow O(k\log(k))$ i.i.d. samples from $D_z$
\For{$ t_i \in S$} \label{ln:genSigiSam_t_i_S}
\For{$ j \in [B]$} \label{ln:genSigiSam_j_B_compute_u}
\State $a\leftarrow t_i/\sigma$
\State $u_j \leftarrow \sum_{i\in \Z}x\cdot H(\sigma(a-j-iB )) e^{-2\pi\i \sigma b(j+iB)} G(j+iB)$ \label{ln:genSigiSam_u} \Comment{$u\in \R^B $}
\State $u_j^\beta \leftarrow \sum_{i\in \Z}x\cdot H(\sigma(a+\beta-j-iB )) e^{-2\pi\i \sigma b(j+iB)} G(j+iB)$ \label{ln:genSigiSam_u_beta} \Comment{$u^\beta \in \R^B $}
\EndFor
\State \label{ln:genSigiSam_fft} $\wh{u}=\mathrm{FFT}(u)$ 
\State \label{ln:genSigiSam_fft_beta} $\wh{u}^\beta=\mathrm{FFT}(u^\beta)$
\For{$ j \in [B]$} \label{ln:genSigiSam_assign_z_u}
\State $z_j(t_i)\leftarrow \wh{u}_j$
\State $z_j(t_i+\beta)\leftarrow \wh{u}_j^\beta$
\EndFor
\EndFor
\State \label{ln:genSigiSam_assign_w_i} $w_i \leftarrow D_z(t_i), \forall t_i\in S$
\State \label{ln:genSigiSam_weight_z_sqrt} $W \leftarrow\sum_{t_i\in S}w_i |z_j(t_i)|^2$
\State $Z_{j,1}\leftarrow 0$, $Z_{j,2}\leftarrow 0$ \Comment{$Z\in \C^{B\times 2}$}
\For{$ j \in [B]$} \label{ln:genSigiSam_jB_final}
\State $D_S(t_i)\leftarrow w_i|z_j(t_i)|^2/W, \forall t_i\in S$
\State Sample $t_i \sim D_S$ \Comment{$\alpha\in \R^B $}
\State $Z_{j,1}\leftarrow z_j(t_i)$, $Z_{j,2}\leftarrow z_j(t_i+\beta)$
\EndFor
\State \Return $Z$
\EndProcedure
\end{algorithmic}
\end{algorithm}

We first prove the correctness of Algorithm~\ref{algo:significant_samples_z}.
\begin{lemma}[Generate significant samples for different bins simultaneously]\label{lem:significant_samples_for_each_bins}
Let $x^*\in {\cal F}_{k,F}$ be the ground-truth signal and $x(t)=x^*(t)+g(t)$ be the noisy observation signal.
Let $H$ be defined as in Definition~\ref{def:effect_H_k_sparse}, ${G}^{(j)}_{\sigma,b}$ be defined as in Definition \ref{def:G_j_sigma_b} with $(\sigma,b)$ such that Large Offset event does not happen.
Let $U=\{t_0\in \R~|~ H(t) > 1-\delta_1~ \forall t\in [t_0,t_0+\beta]\}$.  %

For $j\in [B]$, let $z_j^*(t):=(x^*\cdot H)*G^{(j)}_{\sigma, b}(t)$ and $z_j(t)=(x\cdot H)*{G}^{(j)}_{\sigma,b}(t)$.
Let $g_j(t):=z_j(t)-z_j^*(t)$.
Let 
\begin{align*}
  S_{g1}:=\big\{ j\in[B]~|~ \|g_j(t)\|_T^2 \leq c \|z^*_j(t)\|_{U}^2\big\},  
\end{align*}
where $c\in (0, 0.001)$ is a small universal constant.
Let 
\begin{align*}
  S_{g2}:=\left\{ j\in[B]~\Bigg|~ \exists f_0, h_{\sigma, b}(f_0)=j ~\text{and}~ \int_{f_0-{\Delta_h}}^{f_0+{\Delta_h}} | \widehat{x^*\cdot H}(f) |^2 \mathrm{d} f \geq  T\N^2/k  \right\},  
\end{align*}
where $\N^2\geq \delta \|x^*\|_T^2$. Let $S_g=  S_{g1} \cap S_{g2}$.

There is a Procedure \textsc{GenerateSignificantSamples} (Algorithm \ref{algo:significant_samples_z}) that takes $O(k^2\log^2(k/\delta_1))$ samples in $x$, runs in $O(k^2\log^3(k/\delta_1))$ time, and for each $j\in S_g$, output $\alpha_j$ such that with probability at least $0.6$, 
\begin{equation*}
|z_j(\alpha_j + \beta) - z_j(\alpha_j)e^{2 \pi \i f_0 \beta}|^2 \le 0.01  |z_j(\alpha_j)|^2~~~\forall j\in S_g.
\end{equation*}
\end{lemma}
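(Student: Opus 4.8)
The plan is to reduce the multi-bin claim to the single-bin guarantee of Lemma~\ref{lem:significant_samples_z}, applied to each $j\in S_g$ separately, after checking that Algorithm~\ref{algo:significant_samples_z} realizes for all bins at once exactly the two-level sampling procedure analyzed in that lemma.

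First I would verify the computational core. Applying Lemma~\ref{lem:hashtobins} with $x$ replaced by $x\cdot H$ and $a=t_i/\sigma$, the vectors $u,u^\beta\in\C^B$ built in lines~\ref{ln:genSigiSam_u}--\ref{ln:genSigiSam_u_beta} satisfy $\wh u_j=z_j(t_i)$ and $\wh u_j^\beta=z_j(t_i+\beta)$ for every $j\in[B]$, so after the two FFTs the algorithm has, for each first-level sample $t_i$ and each bin $j$, the true values $z_j(t_i)$ and $z_j(t_i+\beta)$ with $z_j=(x\cdot H)*G^{(j)}_{\sigma,b}$. Consequently one call of the procedure simultaneously instantiates the shared first-level sample $S=\{t_1,\dots,t_s\}$ drawn from $D_U$ (line~\ref{ln:genSigiSam_Sklogk}) with weights $w_i=1/(TsD_U(t_i))$, and then, bin by bin, the second-level resampling $\alpha_j\sim D_{S,j}$ with $D_{S,j}(t_i)=w_i|z_j(t_i)|^2/W$ — which is precisely the construction in the proof of Lemma~\ref{lem:significant_samples_z}.

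Next I would fix an arbitrary $j\in S_g=S_{g1}\cap S_{g2}$ and check the hypotheses of Lemma~\ref{lem:significant_samples_z}: membership in $S_{g2}$ provides a frequency $f_0$ with $h_{\sigma,b}(f_0)=j$ meeting the heavy-cluster bound $\int_{f_0-\Delta_h}^{f_0+\Delta_h}|\wh{x^*\cdot H}(f)|^2\,\d f\ge T\N^2/k$ with $\N^2\ge\delta\|x^*\|_T^2$; membership in $S_{g1}$ provides $\|g_j(t)\|_T^2\le c\|z_j^*(t)\|_U^2$ with the same universal constant $c\in(0,0.001)$; the Large Offset event does not happen by assumption; and $H,G^{(j)}_{\sigma,b}$ are as required. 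Lemma~\ref{lem:significant_samples_z} then produces $\alpha_j\in U$ with $|z_j(\alpha_j+\beta)-z_j(\alpha_j)e^{2\pi\i f_0\beta}|^2\le O(c+\sqrt{\gamma^2+\delta_1})\,|z_j(\alpha_j)|^2\le 0.01\,|z_j(\alpha_j)|^2$ with probability at least $0.6$; since the algorithm carries out this procedure for every bin, the conclusion holds for each $j\in S_g$. (To additionally force all bins in $S_g$ to succeed under a single draw of $S$, one enlarges $s$ by a $\log B$ factor so that the Chernoff-type partial-energy estimates of Lemmas~\ref{lem:z_energy_preserving_plus} and~\ref{lem:z_diff_energy_preserving} hold for all $O(k)$ bins by a union bound; the second-level sampling is already independent per bin.)

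Finally I would bound the costs. With $|\supp(G)|=O(Bl/\alpha)=O(k\log(k/\delta_1))$ and $s=O(k\log k)$, evaluating all $u_j,u_j^\beta$ for one $t_i$ reads $O(k\log(k/\delta_1))$ samples of $x$ and costs $O(k\log(k/\delta_1))$ arithmetic plus $O(B\log B)$ for the two FFTs; summing over the $s$ values of $t_i$ and adding the $O(Bs)$ work of the second-level resampling gives $O(k^2\log^2(k/\delta_1))$ samples and, including the FFT work and the per-point cost of evaluating $H$, $O(k^2\log^3(k/\delta_1))$ time. The only real obstacle is bookkeeping: confirming that the heavy-cluster and high-SNR data packaged into $S_{g2}$ and $S_{g1}$ are exactly what Lemma~\ref{lem:significant_samples_z} requires, and that a single execution of Algorithm~\ref{algo:significant_samples_z} reproduces, per bin, the analyzed two-level sampler; all probabilistic content is inherited from the single-bin lemma.
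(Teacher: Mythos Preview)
Your proposal is correct and follows essentially the same route as the paper: invoke Lemma~\ref{lem:hashtobins} to show that one pass of the algorithm computes $z_j(t_i),z_j(t_i+\beta)$ for all bins from a shared sample set, then apply the single-bin Lemma~\ref{lem:significant_samples_z} to each $j\in S_g$ to get the per-bin $0.6$ guarantee, with the cost bounds following from $|\supp(G)|=O(k\log(k/\delta_1))$ and $|S|=O(k\log k)$. The only superfluous addition is your parenthetical about a union bound over bins: the lemma (and the paper's proof) only asserts the per-bin success probability $0.6$, not simultaneous success across all of $S_g$, so no enlargement of $s$ by $\log B$ is needed here.
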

\begin{proof}
For $k\in\mathbb{N}_+$, define a probability distribution $D(t)$ as follows:
\begin{align*}
D(t):=
\begin{cases}
{c}\cdot (1-|t/T| )^{-1}T^{-1}, & \text{ for } |t| \le T(1-{1}/k) \wedge t\in U \\
c \cdot  k T^{-1}, & \text{ for } |t|\in [T(1-{1}/k), T] \wedge t\in U
\end{cases} 
\end{align*} 
where $c=\Theta(\log(k)^{-1})$ is a normalization factor such that $\int_{-T}^T D(t) \d t= 1$. For any $\epsilon,\rho \in (0,1)$, let $S_{D}=\{t_1, \cdots , t_{s}\}$ be a set of i.i.d. samples from $D(t)$ of size $s \ge O(k\log(k)\log(1/\rho))$. Let the weight vector $w\in \R^s$ be defined by $w_i:=1/(Ts D(t_i))$ for $i\in[s]$.

Suppose all the bins can access the same set of  time points $S_D$. Then, by Lemma \ref{lem:significant_samples_z}, for any $j\in S_g$ with probability $0.6$, we have that, 
\begin{align*}
    |z_j(\alpha + \beta) - z_j(\alpha)e^{2 \pi \i f_0 \beta}|^2 \le 0.01  |z_j(\alpha)|^2.
\end{align*}

Then, we show that the value of $z_j(t),j=1,\cdots, B$ of same set of time points $S_D$ can be compute by accessing a same set of time points in $x(t)$. By Lemma \ref{lem:hashtobins} with setting $a = \alpha /\sigma$, we have that
\begin{align*}
    z_j(\alpha) = \wh{u}_j,
\end{align*}
which is computed by the algorithm.

As a result, for each $j \in S_g$, 
\begin{align*}
|z_j(\alpha_j + \beta) - z_j(\alpha_j)e^{2 \pi \i f_0 \beta}|^2 \le 0.01  |z_j(\alpha_j)|^2~~~\forall j\in S_g,
\end{align*}
holds with probably $0.6$.

\end{proof}

We compute the time and sample complexities of Algorithm~\ref{algo:significant_samples_z} in the following two lemmas.

\begin{lemma}[Running time of Procedure \textsc{GenerateSignificantSamples} in Algorithm \ref{algo:significant_samples_z}]\label{lem:time:algo:significant_samples_z}

 Procedure \textsc{GenerateSignificantSamples} in Algorithm \ref{algo:significant_samples_z} runs in $O(k^2\log(k)\log(k/\delta_1) ) $ times.
\end{lemma}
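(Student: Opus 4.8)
The plan is to bound the cost of Algorithm~\ref{algo:significant_samples_z} line by line and show that the dominant term is the double loop over the sample set $S$ and the bins $[B]$ (Lines~\ref{ln:genSigiSam_t_i_S}--\ref{ln:genSigiSam_assign_z_u}). First I would record the sizes that enter the count: $B = O(k)$ by construction, $|S| = O(k\log k)$ by Line~\ref{ln:genSigiSam_Sklogk}, and $|\supp(G)| = O(k\log(k/\delta_1))$. The last bound follows from Property~\RN{4} of the $G$-filter (Lemma~\ref{lem:property_of_filter_G}) together with the parameter settings $B = O(k)$, $l = \Theta(\log(k/\delta_1))$ from Definition~\ref{def:define_G_filter}, and the fact that $\alpha$ is a fixed constant.

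Next I would bound the cost of a single iteration of the outer loop, i.e.\ of a fixed $t_i \in S$. The vectors $u, u^\beta \in \C^B$ in Lines~\ref{ln:genSigiSam_u}--\ref{ln:genSigiSam_u_beta} are computed via the \textsc{HashToBins} identity of Lemma~\ref{lem:hashtobins}: although written as a loop over $j$, each coordinate $u_j$ is a sum over only those integers $i$ with $j+iB\in\supp(G)$, of which there are $O(|\supp(G)|/B)=O(l)$, so the whole vector costs $O(Bl)=O(|\supp(G)|)=O(k\log(k/\delta_1))$ arithmetic operations (equivalently, one aliases the $O(|\supp(G)|)$ samples of $x\cdot H$ into the $B$ bins, as in the proof of Lemma~\ref{lem:hashtobins}); evaluating $H$ and $G$ at each needed point is charged $O(1)$, as in prior work. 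The two length-$B$ FFTs in Lines~\ref{ln:genSigiSam_fft}--\ref{ln:genSigiSam_fft_beta} cost $O(B\log B)=O(k\log k)$, and copying $\wh u,\wh u^\beta$ into the arrays $z_j(t_i),z_j(t_i+\beta)$ in Line~\ref{ln:genSigiSam_assign_z_u} costs $O(B)$. Hence one outer iteration costs $O(k\log(k/\delta_1))$ (the FFT term being subsumed since $\log k\le\log(k/\delta_1)$), and the whole double loop costs $|S|\cdot O(k\log(k/\delta_1)) = O(k^2\log(k)\log(k/\delta_1))$.

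Finally I would verify that the remaining lines do not dominate. Drawing $|S|=O(k\log k)$ i.i.d.\ samples from $D_z$ and setting the weights $w_i$ (Lines~\ref{ln:genSigiSam_Sklogk} and~\ref{ln:genSigiSam_assign_w_i}) take $O(k\log k)$ time. The normalization $W$ and the final loop over $j\in[B]$ (Lines~\ref{ln:genSigiSam_weight_z_sqrt}--\ref{ln:genSigiSam_jB_final}), which for each bin forms the distribution $D_S$ over the $O(k\log k)$ points of $S$ and draws one sample from it, cost $O(B|S|)=O(k^2\log k)$ in total. Summing all contributions yields the claimed running time $O(k^2\log(k)\log(k/\delta_1))$.

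The step needing the most care is the second one: justifying that the computation of $u$ and $u^\beta$ costs $O(|\supp(G)|)$ rather than $O(B\cdot|\supp(G)|)$ by exploiting the aliasing structure behind Lemma~\ref{lem:hashtobins}, together with making explicit the modeling convention---standard in this line of work---that a single evaluation of the filters $H$ and $G$ costs $O(1)$. Everything else is routine bookkeeping.
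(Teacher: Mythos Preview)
Your proposal is correct and follows essentially the same approach as the paper: both bound the dominant double loop by observing that the inner computation of $u,u^\beta$ across all $j\in[B]$ touches each element of $\supp(G)$ once, so the aggregate cost is $O(|\supp(G)|)$ rather than $O(B\cdot|\supp(G)|)$, and both plug in $|S|=O(k\log k)$, $B=O(k)$, $|\supp(G)|=O(k\log(k/\delta_1))$. Your accounting of the final loop over $j\in[B]$ as $O(B|S|)$ is in fact slightly more careful than the paper's (which charges only $O(1)$ per bin there), but this term is dominated either way.
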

\begin{proof}
In each call of Procedure \textsc{GenerateSignificantSamples} in Algorithm \ref{algo:significant_samples_z}, 
\begin{itemize}
    \item In line \ref{ln:genSigiSam_Sklogk}, taking $|S|$ samples runs $O(|S|)$ times.
    \item In line \ref{ln:genSigiSam_t_i_S}, the for loop repeats $|S|$ times, 
    \begin{itemize}
        \item In line \ref{ln:genSigiSam_j_B_compute_u}, the for loops repeats $B$ times and $j$ iterate from $1$ to $B$, in each loop line \ref{ln:genSigiSam_u} and line \ref{ln:genSigiSam_u_beta}, computing the summation runs in $|\{ j + i B | i\in \Z \wedge j + i B \in  \supp(G) \}| $ times. 
        \item In line \ref{ln:genSigiSam_fft} and \ref{ln:genSigiSam_fft_beta}, running Fast Fourier Transform algorithm takes $ O(B \log(B))$ time, where $B$ is the length of the vector $u$ and $u^\beta$.
        \item In line \ref{ln:genSigiSam_assign_z_u}, the for loop repeats $B$ times, each loop runs in $O(1)$ times. 
    \end{itemize}
    \item In line \ref{ln:genSigiSam_assign_w_i}, assigning $w_i$ runs in $|S|$ times. 
    \item In line \ref{ln:genSigiSam_weight_z_sqrt}, computing $ \sum_{t_i\in S}w_i |z_j(t_i)|^2$ runs in $|S|$ times. 
    \item In line \ref{ln:genSigiSam_jB_final}, the for loop repeats $B$ times, each loop runs in $O(1)$ times. 
\end{itemize}

Notice that
\begin{align*}
    \sum_{j \in [B]} |\{ j + i B | i\in \Z \wedge j + i B \in  \supp(G) \}| \leq | \supp(G)|.
\end{align*}
In the algorithm, we set the parameters:
\begin{align}
    B = ~ O(k),~~\text{and}~~~
    |S| =~ k \log (k). \label{eq:time_GDS_BS}
\end{align}
Thus,
\begin{align}
    |\supp(G)| =  O(l B/\alpha) =  k \log(k/\delta_1),\label{eq:time_GDS_G}
\end{align}
where the first step follows from Lemma \ref{lem:property_of_filter_G} Property \RN{4}, the second step follows from $\alpha \eqsim 1$ and $ l = \Theta(\log(k/\delta_1))$.  

Therefore, the time complexity in total is
\begin{align*}
    &~ O(O(|S|)+|S|\cdot (| \supp(G)| + O(B\log(B)) + B\cdot O(1) ) + |S| + |S| + B \cdot O(1))\\
    \leq &~ O(|S|\cdot (| \supp(G)| + B\log(B)))  \\
    \leq &~ O(k\log(k) \cdot (| \supp(G)| + k\log(k)))  \\
    \leq &~ O(k\log(k) \cdot (k\log(k/\delta_1) + k\log(k)))  \\
    \leq &~ O(k^2\log(k)\log(k/\delta_1) ),
\end{align*}
where the first step is straightforward, the second step follows from Eq.~\eqref{eq:time_GDS_BS}, the third step follows from Eq.~\eqref{eq:time_GDS_G}, the forth step is straight forward. 
\end{proof}

\begin{lemma}[Sample complexity of Procedure \textsc{GenerateSignificantSamples} in Algorithm \ref{algo:significant_samples_z}] \label{lem:sample:algo:significant_samples_z}
 Procedure \textsc{GenerateSignificantSamples} in Algorithm \ref{algo:significant_samples_z} takes $O(k^2\log(k)\log(k/\delta_1) )$ samples.
\end{lemma}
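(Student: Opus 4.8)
The plan is to mirror the accounting in Lemma~\ref{lem:time:algo:significant_samples_z}, but tracking only the number of distinct evaluation points of $x$ (equivalently, of $x\cdot H$) rather than the number of arithmetic operations. First I would observe that the \emph{only} lines of Algorithm~\ref{algo:significant_samples_z} that access the signal $x$ are lines~\ref{ln:genSigiSam_u} and~\ref{ln:genSigiSam_u_beta}, where we form the coordinates $u_j$ and $u_j^\beta$ via the \textsc{HashToBins} identity of Lemma~\ref{lem:hashtobins}; every other step (the two FFT calls, the reweighting in line~\ref{ln:genSigiSam_assign_w_i}, the computation of $W$, and the second-level subsampling from $D_S$ in the final \textbf{for} loop) only manipulates already-computed values $z_j(t_i), z_j(t_i+\beta)$ and therefore consumes no fresh samples of $x$.

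Next I would bound, for a fixed outer sample $t_i\in S$, the number of distinct points of $x\cdot H$ needed to compute $\{u_j\}_{j\in[B]}$. Writing $a = t_i/\sigma$, the term $u_j$ requires evaluations of $x\cdot H$ at the points $\sigma(a-j-iB)$ for those $i\in\Z$ with $j+iB\in\supp(G)$; hence the number of evaluations contributed by bin $j$ is $|\{\,j+iB : i\in\Z,\ j+iB\in\supp(G)\,\}|$. Summing over $j\in[B]$, these index sets are disjoint and their union is contained in $\supp(G)$, so the total over all bins is at most $|\supp(G)|$. The same bound applies to $\{u_j^\beta\}_{j\in[B]}$ (a $\beta$-shift of the same argument), giving at most $2|\supp(G)|$ evaluations of $x\cdot H$ per outer sample $t_i$.

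Then I would plug in the parameter settings. Since the outer loop of line~\ref{ln:genSigiSam_t_i_S} runs over $|S| = O(k\log k)$ points (line~\ref{ln:genSigiSam_Sklogk}, cf.\ Eq.~\eqref{eq:time_GDS_BS}), the total sample count is $O(|S|\cdot|\supp(G)|)$. By Lemma~\ref{lem:property_of_filter_G} Property~\RN{4}, together with $l=\Theta(\log(k/\delta_1))$ and $\alpha\eqsim 1$, we have $|\supp(G)| = O(lB/\alpha) = O(k\log(k/\delta_1))$ (cf.\ Eq.~\eqref{eq:time_GDS_G}). Therefore the number of samples is
\begin{align*}
O(|S|\cdot|\supp(G)|) = O\big(k\log(k)\cdot k\log(k/\delta_1)\big) = O\big(k^2\log(k)\log(k/\delta_1)\big),
\end{align*}
as claimed. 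I do not anticipate a genuine obstacle here; the only point requiring mild care is the disjointness-and-union argument for the index sets $\{j+iB\}$ across bins, ensuring the per-$t_i$ count is $|\supp(G)|$ rather than $B\cdot|\supp(G)|$, and noting that the $u$ versus $u^\beta$ evaluations only contribute a constant factor.
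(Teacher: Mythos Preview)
Your proposal is correct and follows essentially the same approach as the paper's proof: both identify that only lines~\ref{ln:genSigiSam_u} and~\ref{ln:genSigiSam_u_beta} draw samples, use the disjointness of the residue classes $\{j+iB\}_{j\in[B]}$ to bound the per-$t_i$ cost by $O(|\supp(G)|)$, and then multiply by $|S|=O(k\log k)$ with $|\supp(G)|=O(k\log(k/\delta_1))$. Your write-up is in fact slightly more explicit than the paper's about the disjointness argument and about why the remaining lines consume no fresh samples.
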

\begin{proof}
In each call of Procedure \textsc{GenerateSignificantSamples} in Algorithm \ref{algo:significant_samples_z}, 
\begin{itemize}
    \item In line \ref{ln:genSigiSam_t_i_S}, the for loop repeats $|S|$ times, 
    \item In line \ref{ln:genSigiSam_j_B_compute_u}, the for loops repeats $B$ times and $j$ iterate from $1$ to $B$, in each loop line \ref{ln:genSigiSam_u} and line \ref{ln:genSigiSam_u_beta}, computing the summation takes $O(|\{ \sigma(a-j-iB) | i\in \Z \wedge j + i B \in  \supp(G) \}|) $ samples. 
\end{itemize}

Following from the setting in the algorithm, we have that 
\begin{align}
    |S| =&~ k \log (k). \label{eq:samples_GDS_BS}
\end{align}
Thus,
\begin{align}
    |\supp(G)| =  O(l B/\alpha) =  k \log(k/\delta_1).\label{eq:sample_GDS_G}
\end{align}
where the first step follows from Lemma \ref{lem:property_of_filter_G} Property \RN{4}, the second step follows from $\alpha \eqsim 1$ and $ l = \Theta(\log(k/\delta_1))$.  

So, the samples complexity of Procedure \textsc{GenerateSignificantSamples} in Algorithm \ref{algo:significant_samples_z} is
\begin{align*}
    &~ |S| \cdot \sum_{j \in [B]} O(|\{ \sigma(a-j-iB) ~|~ i\in \Z \wedge j + i B \in  \supp(G) \}|) \\
    \leq &~ O(|S| \cdot | \supp(G)|) \\
    \leq &~ O(|S| \cdot k \log(k/\delta_1)) \\
      \leq &~ O(k \log (k) \cdot k \log(k/\delta_1)) \\
    = &~ O(k^2\log(k)\log(k/\delta_1) )
\end{align*}
where the first step is straight forward, the second step follows from Eq.~\eqref{eq:sample_GDS_G}, the third step follows from Eq.~\eqref{eq:samples_GDS_BS}, the forth step is straight forward. 
\end{proof}

\subsection{Technical claims}\label{sec:gen_sigi_sam:technical_claims}

We prove two technical claims in below about the local-test signals' energy reduction.

\begin{claim}[Energy decay of local-test signals]\label{clm:est_freq_z_in_expectation}

Let $x^*\in {\cal F}_{k,F}$. For any $(\sigma,b)$ such that Large Offset event does not happen and any $j\in [B]$,  suppose there exists an $f_0$ with $j=h_{\sigma, b}(f_0)$ satisfying: well-isolation conditions and
\begin{align*}
    \int_{f_0-{\Delta_h}}^{f_0+{\Delta_h}} | \widehat{x^*\cdot H}(f) |^2 \mathrm{d} f \geq  T\N^2/k, 
\end{align*}
where $\N^2\geq \delta \|x^*\|_T^2$. Let $z(t)=(x^*\cdot H)*G^{(j)}_{\sigma, b}(t)$ be the filtered signal. 

For $\beta \leq \gamma / \Delta_0$ with $\Delta_0 = O(\Delta)$, let $d_z(t)=z(t)e^{2 \pi \i f_0 \beta}-z(t+\beta)$ be the local-test signal. We have that
\begin{align*}
\|d_z(t)\|_T^2\lesssim  (\gamma^2 + \delta_1) \cdot \|z(t)\|_T^2.
\end{align*}
\end{claim}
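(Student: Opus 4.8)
The plan is to work in the frequency domain, where the local-test operator $z(t)\mapsto z(t)e^{2\pi\i f_0\beta}-z(t+\beta)$ becomes multiplication of $\wh z(f)$ by the factor $(e^{2\pi\i f_0\beta}-e^{2\pi\i f\beta})$. First I would pass to the ideal filtered signal: let $I=I_{\sigma,b}^{(j)}$ and $z_I(t)=(x^*\cdot H)*I(t)$, and $d_{I,z}(t)=z_I(t)e^{2\pi\i f_0\beta}-z_I(t+\beta)$. By Lemma~\ref{lem:z_diff_xHI_sub_XHG_T_norm_is_small} (or Lemma~\ref{lem:z_diff_xHI_sub_XHG_T_norm_is_small} applied to the local-test signal, i.e.\ Lemma~\ref{lem:z_diff_xIH2xHI}/Lemma~\ref{lem:z_diff_norm_condition_number_xI_pre}-style bounds) the $T$-norm error of this replacement is $O(\delta_1)\|x^*\|_T$, and since $f_0$ is a heavy frequency hashed to bin $j$, Lemma~\ref{lem:xHG_large_than_exp_small} gives $\|x^*\|_T^2\lesssim (k/\delta)\|z(t)\|_T^2$, so $\delta_1\|x^*\|_T^2\lesssim (\delta_1 k/\delta)\|z\|_T^2\lesssim \delta_1\|z\|_T^2$; this term is absorbed into the $\delta_1\|z(t)\|_T^2$ on the right-hand side. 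Thus it suffices to bound $\|d_{I,z}(t)\|_T^2$ by $O(\gamma^2)\|z(t)\|_T^2$, and since $z_I$ is Fourier-sparse (a convolution of a $k$-sparse signal with an indicator filter is still $k$-sparse) this is cleaner to handle.

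Next I would use the time-domain concentration (Lemma~\ref{lem:full_proof_of_3_properties_true_for_z} or the one-cluster property Lemma~\ref{lem:z_satisfies_two_properties}): the filtered signal $z$ is a $(0.3,\Delta)$-one-cluster signal around $f_0$, so both $\int_0^T|z|^2\eqsim\int_{\R}|z|^2$ and $\int_{f_0-\Delta}^{f_0+\Delta}|\wh z|^2\ge 0.7\int_\R|\wh z|^2$, and the same holds for $z_I$ up to the $\delta_1$-level error. Then by Parseval,
\begin{align*}
\int_\R |d_{I,z}(t)|^2\,\d t
= \int_\R |\wh z_I(f)|^2\,|e^{2\pi\i f_0\beta}-e^{2\pi\i f\beta}|^2\,\d f.
\end{align*}
Split the integral into $|f-f_0|\le\Delta$ and $|f-f_0|>\Delta$. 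On the near part, $|e^{2\pi\i f_0\beta}-e^{2\pi\i f\beta}| = 2|\sin(\pi(f-f_0)\beta)|\le 2\pi|f-f_0|\beta \le 2\pi\Delta\beta \le 2\pi\Delta\cdot(\gamma/\Delta_0)\lesssim\gamma$ since $\Delta_0=O(\Delta)$ and $\beta\le\gamma/\Delta_0$; so this contributes $\lesssim\gamma^2\int_\R|\wh z_I|^2$. On the far part, $|e^{2\pi\i f_0\beta}-e^{2\pi\i f\beta}|\le 2$, and by the frequency concentration the far-tail mass $\int_{|f-f_0|>\Delta}|\wh z_I|^2$ is at most the well-isolation bound $\lesssim \epsilon T\N^2/k$, which by heaviness is $\lesssim \int_\R|\wh z_I|^2$ times a small constant; combined with the heaviness lower bound $\|z\|_T^2\gtrsim\delta\|x^*\|_T^2/k$ and the fact that $\wh z$ outside $[f_0-\Delta,f_0+\Delta]$ is genuinely small (Definition~\ref{def:k_signal_recovery_z}), this far contribution is at most $O(\epsilon)\|z\|_T^2$, which can be folded into either the $\gamma^2$ term (choosing $\epsilon$ small relative to $\gamma^2$, or tracking it as part of the implied constant) or the $\delta_1$ term. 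Converting back via $\int_\R|\wh z_I|^2 = \int_\R|z_I|^2\eqsim T\|z_I\|_T^2\eqsim T\|z\|_T^2$ (using time concentration and the $\delta_1$-closeness of $z_I$ and $z$) yields $\|d_{I,z}\|_T^2\lesssim(\gamma^2+\delta_1)\|z\|_T^2$, and hence the claim.

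The main obstacle I anticipate is the far-frequency tail. The naive bound $|e^{2\pi\i f_0\beta}-e^{2\pi\i f\beta}|\le 2$ there is lossy, and one must invoke the well-isolation condition to argue that $\wh z$ carries negligible energy outside the $\Delta$-window around $f_0$ — but well-isolation is stated in terms of $T\N^2/k$, not directly in terms of $\|z\|_T^2$, so one needs the heaviness hypothesis (which is assumed here) to translate $T\N^2/k$ into a small constant multiple of $\|z\|_T^2$ via Lemma~\ref{lem:xHG_large_than_exp_small}. A secondary subtlety is bookkeeping the $\delta_1$-level errors: every time we swap $z\leftrightarrow z_I$ we pick up $\delta_1\|x^*\|_T^2$, and these must all be controlled by the same $\delta_1 k/\delta\lesssim 1$ argument so they collapse into the single $\delta_1\|z(t)\|_T^2$ term on the right. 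Neither step is conceptually hard, but they require care about which constants are "universal" versus "small enough relative to $\gamma$."
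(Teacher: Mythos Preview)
Your plan is correct in outline and would prove the claim, but it is organized differently from the paper's proof and incurs one extra wrinkle.

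The paper does not pass through the ideal filter $z_I$ at all. It works directly with $z$ in the frequency domain: after Parseval, it restricts to $S=\supp(\wh{x^*}\!*\!\wh H)$ and splits $S$ into $S\cap V$ and $S\setminus V$, where $V=\{f:\wh G^{(j)}_{\sigma,b}(f)\ge 1-\delta_1\}$. On $S\setminus V$ the filter is $\le\delta_1$ (by the no-Large-Offset assumption), so $|\wh z(f)|^2\le\delta_1^2|\wh{x^*\!\cdot\! H}(f)|^2$ and this piece is $\lesssim\delta_1^2\|x^*\|_T^2$, converted to $\delta_1\|z\|_T^2$ via Lemma~\ref{lem:xHG_large_than_exp_small} exactly as you say. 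On $S\cap V$ the paper does \emph{not} use the well-isolation energy bound; instead it invokes Claim~\ref{cla:PS15_hash_claims} (no collision) to assert the \emph{support containment} $S\cap V\subset[f_0-\Delta_0,f_0+\Delta_0]$, so the multiplier $|e^{2\pi\i f_0\beta}-e^{2\pi\i f\beta}|\lesssim\gamma$ holds on the entire piece and there is no far-tail at all.

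Your route via $z_I$ effectively reproduces the $S\setminus V$ step (that is exactly what replacing $\wh G$ by $\wh I$ does), so that detour is harmless but unnecessary. The real difference is your far-tail: you bound $\int_{|f-f_0|>\Delta}|\wh z_I|^2$ by the well-isolation quantity $\epsilon T\N^2/k$, which after heaviness gives an $O(\epsilon)\|z\|_T^2$ contribution rather than zero. You correctly flag that this $\epsilon$ must be absorbed somewhere; in the paper's argument the issue simply does not arise because $\wh z_I$ has no mass outside $[f_0-\Delta_0,f_0+\Delta_0]$ once no collision is assumed. Both approaches are valid, but the paper's is shorter and yields the stated $(\gamma^2+\delta_1)$ bound without an auxiliary $\epsilon$.
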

\begin{proof}
Let $S:=\supp(\wh{x}^**\wh{H})$ be the support set of $\wh{x}^**\wh{H}$. Let 
\begin{align*}
    V:=\{f\in\R~|~\wh{G}^{(j)}_{\sigma, b}(f)\geq 1-\delta_1\}.
\end{align*}

Note that $\|d_z(t)\|_T^2$ can be expressed as follows (ignoring the $\frac{1}{T}$ factor):
\begin{align*}
    &~ \int_{0}^T |z(t+\beta) - e^{2 \pi \i f_0 \beta} \cdot z(t)|^2\d t\notag \\
\leq &~ \int_{-\infty}^\infty |z(t+\beta) - e^{2 \pi \i f_0 \beta} \cdot z(t)|^2\d t \notag \\
= &~ \int_{-\infty}^\infty |\wh{z}(f) e^{2 \pi \i f \beta} - e^{2 \pi \i f_0 \beta} \cdot \wh{z}(f)|^2\d f \notag \\
= &~ \int_{-\infty}^\infty |\wh{z}(f)|^2\cdot |e^{2 \pi \i f \beta} - e^{2 \pi \i f_0 \beta} |^2\d f \notag \\
= &~ \int_{S} |\wh{z}(f)|^2\cdot |e^{2 \pi \i f \beta} - e^{2 \pi \i f_0 \beta} |^2\d f \notag \\
= &~ \int_{S\cap V} |\wh{z}(f)|^2\cdot |e^{2 \pi \i f \beta} - e^{2 \pi \i f_0 \beta} |^2\d f + \int_{S\backslash V} |\wh{z}(f)|^2\cdot |e^{2 \pi \i f \beta} - e^{2 \pi \i f_0 \beta} |^2\d f,
\end{align*}
where the first step is straight forward, the second step follows from Parseval’s theorem, the third step is straight forward, the forth step follows from the assumption that Large Offset event does not happen, the fifth step is straight forward.

Then, for the first term, we have that
\begin{align}
   \int_{S\cap V} |\wh{z}(f)|^2\cdot |e^{2 \pi \i f \beta} - e^{2 \pi \i f_0 \beta} |^2\d f \leq&~ \int_{f_0-\Delta_0}^{f_0+\Delta_0} |\wh{z}(f)|^2\cdot |e^{2 \pi \i f \beta} - e^{2 \pi \i f_0 \beta} |^2\d f \notag \\
   \lesssim &~ \int_{f_0-\Delta_0}^{f_0+\Delta_0} |\wh{z}(f)|^2\cdot \gamma^2\d f \notag \\
   \leq&~ \gamma^2\cdot \int_{-\infty}^\infty |\wh{z}(f)|^2\d f \notag \\
   \leq&~ \gamma^2\cdot \int_{-\infty}^\infty |z(t)|^2\d t \notag \\
\leq&~ \gamma^2\cdot \int_{0}^T |z(t)|^2\d t \label{eq:bound_S_and_V}
\end{align}
where the first step follows from $S\cap V\subset [f_0-\Delta_0, f_0+\Delta_0]$ by Claim \ref{cla:PS15_hash_claims}, the second step follows from $$|e^{2 \pi \i f \beta} - e^{2 \pi \i f_0 \beta}| \leq 4\pi \beta |f-f_0|\leq 4\pi \beta{\Delta_h}_0 \lesssim \gamma,$$ 
the third step is straight forward, the forth step follows from Parseval’s theorem, the fifth step follows from Lemma \ref{lem:z_satisfies_two_properties}.%

For the second term, we have that
\begin{align}
    \int_{S\backslash V} |\wh{z}(f)|^2\cdot |e^{2 \pi \i f \beta} - e^{2 \pi \i f_0 \beta} |^2\d f = &~ \int_{S\backslash V} |(\wh{x}^**\wh{H})(f)|^2\cdot |\wh{G}^{(j)}_{\sigma, b}(f)|^2\cdot |e^{2 \pi \i f \beta} - e^{2 \pi \i f_0 \beta} |^2\d f \notag \\
    \leq &~ \int_{S\backslash V} |(\wh{x}^**\wh{H})(f)|^2\cdot \delta_1^2 \cdot |e^{2 \pi \i f \beta} - e^{2 \pi \i f_0 \beta} |^2\d f \notag \\
    \lesssim &~ \int_{S\backslash V} |(\wh{x}^**\wh{H})(f)|^2\cdot \delta_1^2 \d f \notag \\
    \leq &~ \int_{-\infty}^\infty |(\wh{x}^**\wh{H})(f)|^2\cdot \delta_1^2 \d f \notag \\
    =  &~  \delta_1^2\cdot \int_{-\infty}^\infty |(x^*\cdot H)(t)|^2 \d t \notag \\
    \lesssim  &~  \delta_1^2\cdot \int_0^T |x^*(t)|^2 \d t \label{eq:bound_S_sub_V}
\end{align}
where the first step follows from the definition of $z$, the second step follows from the definition of $V$, the third step follows from $|e^{2 \pi \i f \beta} - e^{2 \pi \i f_0 \beta} |^2\lesssim 1$, the forth step is straight forward, the fifth step follows from Parseval’s theorem, the sixth step follows from Lemma \ref{lem:property_of_filter_H} Property \RN{4} and \RN{5}.

Putting them together, we get that
\begin{align*}
\int_{0}^T |z(t+\beta) - e^{2 \pi \i f_0 \beta} \cdot z(t)|^2\d t %
\lesssim &~ \gamma^2\cdot \int_{0}^T |z(t)|^2\d t +  \delta_1^2\cdot \int_0^T |x(t)|^2 \d t \\
\lesssim &~ (\gamma^2 + \delta_1^2 \delta^{-1} k )\cdot \int_{0}^T |z(t)|^2\d t  \\
\lesssim &~ (\gamma^2 + \delta_1)\cdot \int_{0}^T |z(t)|^2\d t,  
\end{align*}
where the first step follows from Eq.~\eqref{eq:bound_S_and_V} and Eq.~\eqref{eq:bound_S_sub_V}, the second step follows from Lemma \ref{lem:xHG_large_than_exp_small}, the last step follows from $ \delta_1 \delta^{-1} k \lesssim 1$.

The proof of the lemma is then completed.
\end{proof}

Similar result also holds for the partial energy:

\begin{claim}[Partial energy decay of local-test signals]\label{clm:est_freq_z_in_expectation_L_R}

Let $x^*(t)$ be a $k$-Fourier-sparse signal.   Let $z(t):=(x^*\cdot H)*G^{(j)}_{\sigma, b}(t)$ and $d_z(t)=z(t+\beta) - e^{2 \pi \i f_0 \beta} \cdot z(t)$. Let $U=\{t_0\in \R~|~ H(t) > 1-\delta_1~ \forall t\in [t_0,t_0+\beta]\}=:[L,R]$. For $\beta \leq \gamma / \Delta_0$ with $\Delta_0 = O(\Delta)$, we have that
\begin{align*}
\int_{L}^R |d_z(t)|^2
\mathrm{d} t \lesssim (\gamma^2 +\delta_1)  \cdot \int_{L}^R |z
(t)|^2 \mathrm{d} t.
\end{align*}
\end{claim}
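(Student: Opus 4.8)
The plan is to mimic the proof of Claim~\ref{clm:est_freq_z_in_expectation} almost verbatim, replacing the full-line integral $\int_0^T$ by the sub-interval integral $\int_L^R$ over $U=[L,R]$. First I would set $S:=\supp(\wh{x}^**\wh{H})$ and $V:=\{f\in\R~|~\wh{G}^{(j)}_{\sigma,b}(f)\geq 1-\delta_1\}$, exactly as before. The key observation is that passing from $\int_L^R$ to $\int_{-\infty}^\infty$ only \emph{enlarges} the left-hand side, so one writes
\begin{align*}
\int_L^R |d_z(t)|^2\d t \leq \int_{-\infty}^\infty |d_z(t)|^2\d t = \int_{-\infty}^\infty |\wh z(f)|^2\cdot |e^{2\pi\i f\beta}-e^{2\pi\i f_0\beta}|^2\d f,
\end{align*}
using Parseval. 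Since the Large Offset event does not happen, the integrand is supported on $S$, and splitting $S=(S\cap V)\cup (S\setminus V)$ we bound the two pieces as in Claim~\ref{clm:est_freq_z_in_expectation}: on $S\cap V\subseteq[f_0-\Delta_0,f_0+\Delta_0]$ (by Claim~\ref{cla:PS15_hash_claims}) we use $|e^{2\pi\i f\beta}-e^{2\pi\i f_0\beta}|\leq 4\pi\beta|f-f_0|\lesssim\gamma$ to get a bound $\lesssim \gamma^2\int_{-\infty}^\infty|\wh z(f)|^2\d f=\gamma^2\|z\|_{L_2}^2$, and on $S\setminus V$ we use $|\wh G^{(j)}_{\sigma,b}(f)|\leq\delta_1$ together with $|e^{2\pi\i f\beta}-e^{2\pi\i f_0\beta}|^2\lesssim 1$ to get a bound $\lesssim \delta_1^2\int_{-\infty}^\infty|(\wh x^**\wh H)(f)|^2\d f\lesssim\delta_1^2\int_0^T|x^*(t)|^2\d t$ via Parseval and Lemma~\ref{lem:property_of_filter_H} Properties~\RN{4},~\RN{5}.

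The one genuinely new ingredient is that the final answer must be phrased in terms of $\int_L^R|z(t)|^2\d t$ rather than $\int_0^T|z(t)|^2\d t$ or $\int_{-\infty}^\infty|z(t)|^2\d t$. For this I would invoke the partial-energy lower bound for filtered signals, Lemma~\ref{lem:norm_z_cut_preserve}, which (under the heavy-frequency hypothesis on $f_0$ and the Large Offset assumption, both available here since $U\supseteq[-T(1-1/k),T(1-1/k)]$) gives $\int_0^T|z(t)|^2\d t\lesssim \int_L^R|z(t)|^2\d t$ up to the factor $\frac{R-L}{2T-k^2(2T+L-R)}$, which is $\eqsim 1$ by Lemma~\ref{lem:relation_R_L_T}. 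Combined with Lemma~\ref{lem:z_satisfies_two_properties} (one-cluster, so $\|z\|_{L_2}^2\lesssim\int_0^T|z|^2\d t$) and Lemma~\ref{lem:xHG_large_than_exp_small} (which converts $\delta_1^2\int_0^T|x^*|^2\d t$ into $\delta_1^2\delta^{-1}k\int_0^T|z|^2\d t\lesssim\delta_1\int_0^T|z|^2\d t$ using $\delta_1\delta^{-1}k\lesssim 1$), everything collapses to $\lesssim(\gamma^2+\delta_1)\int_L^R|z(t)|^2\d t$.

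The main obstacle, if any, is purely bookkeeping: making sure the hypotheses needed for Lemmas~\ref{lem:z_satisfies_two_properties}, \ref{lem:xHG_large_than_exp_small}, and \ref{lem:norm_z_cut_preserve} are all in force. The statement of Claim~\ref{clm:est_freq_z_in_expectation_L_R} as written only says ``$x^*$ is $k$-Fourier-sparse'' and does not explicitly repeat the Large Offset / well-isolation / heavy-frequency assumptions, but these are exactly the standing assumptions under which this claim is applied (it is used in the proof of Lemma~\ref{lem:significant_samples_z_above}, where all of them hold), so I would simply carry them along. No part of the argument is technically deep; it is a localized re-run of Claim~\ref{clm:est_freq_z_in_expectation} with the extra step of relating the $[L,R]$-energy of $z$ to its $[0,T]$-energy via Lemma~\ref{lem:norm_z_cut_preserve}.
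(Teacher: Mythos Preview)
Your proposal is correct, but it takes a more laborious route than the paper. The paper's proof is a three-line reduction: enlarge $\int_L^R|d_z|^2$ to $\int_0^T|d_z|^2$, invoke Claim~\ref{clm:est_freq_z_in_expectation} as a black box to obtain $\int_0^T|d_z|^2\lesssim(\gamma^2+\delta_1)\int_0^T|z|^2$, and then apply Lemma~\ref{lem:norm_z_cut_preserve} (together with $T-k^2(T+L-R)\gtrsim T$) to replace $\int_0^T|z|^2$ by $\int_L^R|z|^2$. You instead re-open the proof of Claim~\ref{clm:est_freq_z_in_expectation} and redo the Parseval/$S\cap V$ versus $S\setminus V$ split, followed by Lemmas~\ref{lem:z_satisfies_two_properties}, \ref{lem:xHG_large_than_exp_small}, and~\ref{lem:norm_z_cut_preserve}. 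This is valid but redundant: the only genuinely new step in this claim is the last one (Lemma~\ref{lem:norm_z_cut_preserve}), which you correctly identified; everything else is already packaged in Claim~\ref{clm:est_freq_z_in_expectation} and need not be repeated.
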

\begin{proof}
We have that
\begin{align*}
\int_{L}^R |z(t+\beta) - e^{2 \pi \i f_0 \beta} \cdot z(t)|^2 \mathrm{d} t \leq &~ \int_{0}^T |z(t+\beta) - e^{2 \pi \i f_0 \beta} \cdot z(t)|^2
\mathrm{d} t  \\
\leq &~ (\gamma^2 +\delta_1) \int_{0}^T | z(t)|^2
\mathrm{d} t  \\
\leq &~ (\gamma^2 +\delta_1) \frac{T}{T-k^2(T+L-R)} \int_{L}^R | z(t)|^2
\mathrm{d} t \\
\lesssim &~ (\gamma^2 +\delta_1)  \int_{L}^R | z(t)|^2
\mathrm{d} t,
\end{align*}
where the first step is straight forward, the second step follows from Claim \ref{clm:est_freq_z_in_expectation}, the third follows from Lemma \ref{lem:norm_z_cut_preserve} with time duration changed from $[-T,T]$ to $[0,T]$, the forth step follows from ${T-k^2(T+L-R)}\gtrsim T$. 
\end{proof}

\section{Frequency Estimation}
\label{sec:freq_est}

We introduce our improved frequency estimation algorithm in this section. We first show that given significant samples, we are able to estimate a specific target frequency (see Section~\ref{sec:freq_est:frequency_one_bin}). Then, we show to generalize it to simultaneously estimate frequencies for multiple bins and give our main frequency estimation algorithm (see Section~\ref{sec:freq_est:frequency_different_bins}). Next, we prove several technical claims on the votes distribution in the \textsc{ArySearch} procedure (see Section~\ref{sec:freq_est:vote}).

\subsection{Frequency estimation via significant samples}\label{sec:freq_est:frequency_one_bin}
In this section, we show an algorithm such that for a target frequency $f_0$, it can use several significant samples to estimate it with high accuracy.
The main idea is as follows: %
for a significant sample $\alpha$, since $|z(\alpha+\beta)-z(\alpha)e^{2\pi \i f_0\beta}|$ is very small, the angle of $\frac{z(\alpha+\beta)}{z(\alpha)}$ will be close to $2\pi f_0 \beta$. That is, 
\begin{align*}
    \arg\Big(\frac{z(\alpha+\beta)}{z(\alpha)}\Big)\approxeq 2\pi f_0 \beta \pmod{2\pi}.
\end{align*}
Solving the congruence equation gives that
\begin{align*}
    f_0\approx \frac{1}{2\pi \beta}\Big(\arg\Big(\frac{z(\alpha+\beta)}{z(\alpha)}\Big) + 2\pi s\Big)
\end{align*}
for some unknown $s\in \mathbb{Z}$.

To find the unknown $s$, we use the same strategy as in \cite{ps15}: perform a $D$-round searching procedure to narrow the possible range of $f_0$. More specifically, at the beginning, the possible range of $f_0$ (frequency interval) is $[-F, F]$.
And after $D$ rounds, $f_0$ is located in a frequency interval of length $O(\Delta)$, resulting in an estimate with $\Delta$ accuracy.

For $d\in [D]$, consider the $d$-th round of searching, where the frequency interval is: 
\begin{align*}
   [\L_d, \L_d + \len_d]. 
\end{align*}
We equally partition the frequency interval into $\num$ parts and do a $\num$-ary search. We generate $R$ significant samples, and for each sample $\alpha$, we enumerate all possible $s$ and compute 
\begin{align*}
    \frac{1}{2\pi \beta}\Big(\arg\Big(\frac{z(\alpha+\beta)}{z(\alpha)}\Big) + 2\pi s\Big).
\end{align*}
Then, we find which part this quantity falls in and add a vote to that part. For robustness, we also add votes to that part's left and right neighbors. In the end, the frequency interval for the next round is the part with more than $R/2$ votes. %
It is easy to see that at most $5$ parts can be selected in the new frequency interval. 
Hence, we have
\begin{align}\label{eq:len_dacay}
    \len_{d+1}\leq \frac{\len_d}{\num/5},
\end{align}
i.e., the length of the possible range  of $f_0$ decays at a constant rate.

More formally, we have the following lemma:

\begin{lemma}[significant sample to frequency estimation]\label{lem:sample2freq_est}

Suppose that there is an algorithm \textsc{GetSignificantSample} that
\begin{itemize}
    \item takes $z(t), \beta$ as input where $\beta \leq O(1/\Delta)$,
    \item takes $ {\cal S}$ samples in $z(t)$,
    \item runs in ${\cal T}$ time,
    \item outputs an $\alpha$ such that with probability $0.9$,
\begin{equation*} 
|z(\alpha + \beta) - z(\alpha)e^{2 \pi \i f_0 \beta}|^2 \le 0.0001 |z(\alpha)|^2.
\end{equation*}
\end{itemize}

Then, there is an Procedure \textsc{FrequencyEstimationZ} in Algorithm \ref{alg:locateksignal_locatekinner} that 
\begin{itemize}
    \item takes $ O(\log(FT)\log(\log(FT))\log(1/\rho_1){\cal S})$ samples,
    \item  runs in $ O(\log(FT)\log(\log(FT))\log(1/\rho_1){\cal T})$ times,
    \item and outputs $\wt{f}_0$ such that with probability at least $1-\rho_1$, 
\begin{align*}
    | \wt{f}_0 - f_0 | \lesssim \Delta. 
\end{align*}
\end{itemize}
\end{lemma}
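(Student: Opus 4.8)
## Proof Plan for Lemma~\ref{lem:sample2freq_est}

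The plan is to analyze the iterative $\num$-ary search described in the text. I will set up the search so that after $D = O(\log(F/\Delta)) = O(\log(FT))$ rounds, the candidate frequency interval has shrunk to length $O(\Delta)$. The core invariant I want to maintain is: \emph{with high probability, the true frequency $f_0$ lies in the current interval $[\L_d, \L_d + \len_d]$ at the start of round $d$}. The length recursion $\len_{d+1} \le 5\len_d/\num$ of Eq.~\eqref{eq:len_dacay}, combined with $\num = O(1)$ chosen so that $5/\num < 1$ (say $\num$ a fixed constant like $50$), gives geometric decay, so $D = O(\log(F/\Delta))$ rounds suffice; since $\Delta = k\cdot|\supp(\wh H)| \ge \Omega(1/T)$ up to polylog factors, $\log(F/\Delta) \le O(\log(FT))$.

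The heart of the argument is a single-round correctness claim: conditioned on $f_0 \in [\L_d, \L_d + \len_d]$, after generating $R = O(\log\log(FT)) + O(\log(1/\rho_1))$ significant samples and running the voting procedure with perturbation $\beta$ sampled from $\mathrm{Uniform}([\tfrac12\wh\beta, \wh\beta])$ where $\wh\beta = O(\num/\len_d)$, the selected sub-interval contains $f_0$ with probability $\ge 1 - \rho_1/D$. First I would fix a good significant sample $\alpha$ (the event $|z(\alpha+\beta) - z(\alpha)e^{2\pi\i f_0\beta}|^2 \le 0.0001|z(\alpha)|^2$, which holds with probability $0.9$ per call of \textsc{GetSignificantSample}). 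On this event, $\tfrac{z(\alpha+\beta)}{z(\alpha)}$ is within a small ball of $e^{2\pi\i f_0\beta}$, so $\arg\big(\tfrac{z(\alpha+\beta)}{z(\alpha)}\big) = 2\pi f_0\beta \pmod{2\pi}$ up to an additive error $O(0.01)$ (turning the magnitude bound $0.01$ on the chord into an angular bound via the fact that $|e^{\i\theta_1} - e^{\i\theta_2}| \ge \tfrac{2}{\pi}|\theta_1 - \theta_2|$ for $|\theta_1 - \theta_2| \le \pi$). Dividing by $2\pi\beta$ and adding the integer multiples of $1/\beta$, one of the candidate values in $\wt S = \tfrac{1}{2\pi\beta}(\arg(\cdots) + 2\pi\Z)$ lies within $O(0.01/\beta) = O(\len_d/\num)$ of $f_0$ — i.e., within a constant fraction of one sub-interval width — so $f_0$'s sub-interval (or an immediate neighbor) receives a vote. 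Thus each good sample contributes a vote to the correct region; by Chernoff (Lemma~\ref{lem:chernoff_bound}), with $R = \Theta(\log(D/\rho_1))$ samples, more than $R/2$ of them are good, so the correct region crosses the $R/2$ threshold. The spurious candidates in $\wt S$ from other integer shifts are spaced $1/\beta \gtrsim \len_d/\num$ apart, hence land in controlled places; I need the technical claims of Section~\ref{sec:freq_est:vote} to argue that no \emph{wrong} region accumulates $\ge R/2$ votes (or that if one does, it must still be adjacent to the true region, keeping $f_0$ within the $5$ selected parts). This is where the choice of $\num$ large enough and the left/right-neighbor voting rule (the condition $v_q + v_{q+1} + v_{q+2} \ge R/2$) are used.

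For the complexity bookkeeping: each round makes $R = O(\log\log(FT)\log(1/\rho_1))$ calls to \textsc{GetSignificantSample}, costing $R\cdot{\cal S}$ samples and $R\cdot{\cal T}$ time per round, and there are $D = O(\log(FT))$ rounds, giving the claimed $O(\log(FT)\log\log(FT)\log(1/\rho_1)\,{\cal S})$ samples and $O(\log(FT)\log\log(FT)\log(1/\rho_1)\,{\cal T})$ time. The final accuracy $|\wt f_0 - f_0| \lesssim \Delta$ follows because $\len_{D+1} = O(\Delta)$ and $f_0, \wt f_0$ both lie in $[\L_{D+1}, \L_{D+1}+\len_{D+1}]$; the overall failure probability is $\le D \cdot (\rho_1/D) = \rho_1$ by a union bound over rounds.

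The main obstacle I anticipate is the single-round soundness analysis — precisely controlling which sub-intervals receive votes. The "good sample $\Rightarrow$ correct region gets a vote" direction is routine, but ruling out that a \emph{bad} region also collects $\ge R/2$ votes (from the $1/\beta$-spaced aliases of $\arg(z(\alpha+\beta)/z(\alpha))$ as $\beta$ varies over $[\tfrac12\wh\beta, \wh\beta]$, and from the occasional bad samples) requires the randomness of $\beta$: for a fixed wrong frequency $f'$ at distance $\gtrsim \len_d/\num$ from $f_0$, the quantity $(f' - f_0)\beta \bmod 1$ is spread out as $\beta$ ranges over a factor-$2$ interval, so $f'$'s region is hit by only a small constant fraction of samples in expectation, and Chernoff keeps it below $R/2$. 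Making the constants line up — the $0.0001$ in the significant-sample guarantee, the width $\wh\beta = O(\num/\len_d)$, the arity $\num$, and the $5$-fold expansion in Eq.~\eqref{eq:len_dacay} — so that the geometric decay genuinely contracts, is the delicate part, and is exactly what the technical claims in Section~\ref{sec:freq_est:vote} are designed to handle.
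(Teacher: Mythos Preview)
Your proposal is correct and follows essentially the same approach as the paper: the same $\num$-ary search with geometric interval shrinkage over $D=O(\log(FT))$ rounds, $R=O(\log(\log(FT)/\rho_1))$ calls to \textsc{GetSignificantSample} per round, a per-round union bound, and deferral of the vote-distribution analysis to the claims in Section~\ref{sec:freq_est:vote}. The paper invokes Claim~\ref{clm:angle_between_xgamma_and_xgammabeta_is_betaf_sum_2} directly to get the per-round failure probability $O(c+\rho)^{R/6}$, which packages together exactly the two pieces you isolate (good samples vote for the true region via Claim~\ref{clm:angle_between_xgamma_and_xgammabeta_is_betaf}, and wrong regions are hit with small constant probability over the random $\beta$ via Claim~\ref{clm:angle_between_xgamma_and_xgammabeta_is_betaf_second_part}); your identification of the randomness of $\beta$ as the mechanism controlling spurious votes is precisely what those claims formalize.
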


\begin{proof}
We prove the correctness, time/sample complexity of Algorithm \ref{alg:locateksignal_locatekinner} in below. 
\paragraph{Correctness:} We first compute the value of $D$, the number of rounds needed for the searching procedure. Note that the \textsc{GetSignificantSample} procedure requires that $\beta\leq O(1/\Delta)$. In our algorithm, we take $\beta_d = O(\num/\len_d)$ for the $d$-th round. Hence, for the last round $d=D$, we have that, %
\begin{align*}
    O(\num/\len_D)=O(1/\Delta)~~\Longrightarrow~~\len_D \geq  \num \Delta. 
\end{align*}
Then, by Eq.~\eqref{eq:len_dacay} and $\len_1=F$, we get that
\begin{align}
D = \log_\num (\frac{FT}{\num (T\Delta)} ) \lesssim \log(FT)/\log(\num).\label{eq:b_D}
\end{align}

Then, we calculate the success probability. For $d\in [D]$, by Claim~\ref{clm:angle_between_xgamma_and_xgammabeta_is_betaf_sum_2}, with probability at least $1-O(c+\rho)^{R/6}$, the true part containing $f_0$ and its left and right neighbor will get $R$ votes in total, and the other far away parts will get at most $R/2$ votes. In this case, the new frequency interval will contain the true part, and we consider this round being success.  
Since the search procedure takes $D$ rounds, by a union bound, all rounds will succeed with probability at least 
\begin{align*}
1- D \cdot O(c+\rho)^{R/6}  \geq &~  1- \frac{\log(FT)}{\log(\num)}\cdot O(c+\rho)^{R/6} 
\geq 1- \rho_1
\end{align*}
where the first step follows from Eq.~\eqref{eq:b_D}, the second step follows from 
\begin{align*}
 R \geq O\Big(\frac{\log(\log(FT)/\rho_1 )}{\log(1/(c+\rho))}\Big) \geq O\Big(\frac{\log(\log(FT)/(\rho_1 \log(\num)))}{\log(1/(c+\rho))}\Big) .    
\end{align*}

Therefore, with probability at least $1-\rho$, the final frequency interval of length $O(\Delta)$ will contain the target frequency $f_0$, which means that the output $\wt{f}_0$ satisfies $|\wt{f}_0-f_0|\lesssim \Delta$. And the correctness of the algorithm is proved.

\paragraph{Time complexity:} We show that Procedure \textsc{FrequencyEstimationZ} in Algorithm \ref{alg:locateksignal_locatekinner} runs in $O(\log (FT) \cdot \log({\log(FT)}/{\rho_1}) ) $ times.

In each call of the Procedure \textsc{FrequencyEstimationZ} in Algorithm \ref{alg:locateksignal_locatekinner}, 
\begin{itemize}
    \item The for-loop repeats $ D$ times.
    \item In each loop, line \ref{ln:callArySearch} call Procedure  \textsc{ArySearch}. 
\end{itemize}
Then, in the $d$-th call of the Procedure \textsc{ArySearch}, 
\begin{itemize}
    \item In line \ref{ln:for_R_boost_prob}, the for-loop repeats $R$ times.
    \item In line~\ref{ln:call_get_sig_sample}, the Procedure \textsc{GetSignificantSample} is called.
    \item In  line \ref{ln:for_s_round_of_a_circle}, the for-loop repeats $\beta_d \len_d + O(1) $ times.
    \item In line \ref{ln:for_num_of_ary}, the for-loop repeats $ \num$ times.
\end{itemize}
Thus, the total time complexity is dominated by:
\begin{align*}
    D\cdot R \cdot (\beta_d \len_d + O(1))\cdot \num + D \cdot R \cdot {\cal T}.
\end{align*}

By the parameter settings in Algorithm \ref{alg:locateksignal_locatekinner}, we have that
\begin{align*}
\num = &~ O(1) ,\notag \\
    D=&~ O(\log (\frac{FT}{ \Delta} )/\log(\num)), \notag \\
    R =&~ O(\log(\frac{\log(FT)}{\rho_1\log(\num)})) ,\notag \\
    \beta_d =&~  O(\frac{\num}{\len_d}),
\end{align*}
In particular, we have
\begin{align*}
 D=&~ O(\log (\frac{FT}{ \Delta} )/\log(\num)) \leq O(\log (\frac{FT}{ \Delta} ))   \leq O(\log (FT))   ,
\end{align*}
where the first step follows from the setting of $D$, the second step follows from $ \num = O (1)$, the third step follows from $ \Delta = \poly(k)$. 

Hence, the total time complexity of Algorithm \ref{alg:locateksignal_locatekinner} is
\begin{align*}
    &~O(D\cdot R \cdot (\beta_d \len_d + O(1))\cdot \num) + D\cdot R \cdot {\cal T}\\
    = &~ O(D\cdot R \cdot (O(\num) + O(1))\cdot \num)+ D\cdot R \cdot {\cal T}\\
    = &~ O(D\cdot R\cdot {\cal T})\\
    = &~ O(\log (FT)\cdot \log(\log(FT)/\rho_1)\cdot {\cal T}),
\end{align*}
where the first step follows from $\beta_d =  O(\frac{\num}{\len_d})$, the second step follows from $\num=O(1)$, the third step follows from the choices of $D$ and $R$.

\paragraph{Sample complexity:} Each call of the Procedure \textsc{GetSignificantSample} takes ${\cal S}$ samples, and it is called $DR$ times. Thus, the total sample complexity of Algorithm \ref{alg:locateksignal_locatekinner} is
\begin{align*}
    DR\cdot {\cal S} =  O(\log (FT)\cdot \log(\log(FT)/\rho_1)\cdot {\cal S}).
\end{align*}

The proof of the lemma is completed.

\end{proof}

\begin{algorithm}[!ht]
\caption{Frequency Estimation of the Filtered Signal}\label{alg:locateksignal_locatekinner}
\begin{algorithmic}[1]
\Procedure{$\textsc{FrequencyEstimationZ}$}{$x,H,G^{(j)}_{\sigma, b}$}%
	\State $\num \leftarrow O(1)$, $D \leftarrow O(\log (\frac{FT}{ \Delta} )/\log(\num))$, $R \leftarrow O(\log(\frac{\log(FT)}{\rho_1\log(\num)})) $ %
	\State $\L_1\leftarrow -F$, $\len_1 \leftarrow 2F$
	\For{$d\in [D]$}
	\State $\len_d\leftarrow 5 \frac{\len_{d-1}}{ \num} $
	\State \label{ln:callArySearch} $\L_{d+1} \leftarrow  \textsc{ArySearch}$($x,H,G^{(j)}_{\sigma, b},F, T,\Delta, \L_d, \len_d,  \num$)
	\EndFor
	\State \Return $\L_D$
\EndProcedure

\Procedure{$\textsc{ArySearch}$}{$x,H,G^{(j)}_{\sigma, b},F, T,\Delta, \L_i, \len_i,  \num$}
	\State Let $v\in \Z_+^\num$ and $v_{q}\leftarrow 0$ for $q\in [\num]$
	\For {$r=1 \to R$}\label{ln:for_R_boost_prob}
\State $z(\alpha+\beta), z(\alpha) \leftarrow \textsc{GetSignificantSample}(x,H,G^{(j)}_{\sigma, b}, r, d)$ \label{ln:call_get_sig_sample}
	
	\For {$ s \in [\beta \L_d -10, \beta(\L_d+\len_d )+10]\cap \Z$} \label{ln:for_s_round_of_a_circle}
	    \State $\wt{f} = \frac{1}{2\pi\beta} ( \arg(\frac{z(\alpha+\beta)}{ z(\alpha)}) +2\pi s)$
	    \For {$q \in [\num]$} \label{ln:for_num_of_ary}
		
		\If{$ \wt{f} \in [\L_d+(q-1)\len_d/\num, \L_d+q\len_d/\num] $}
        \State $ v_{q}\leftarrow v_q + 1$
		\EndIf
	\EndFor
	\EndFor
\EndFor
\For {$q \in [\num]$}
	\If{$ v_{q} +  v_{q+1} + v_{q+2} \geq R / 2$}
	    \State $\L_{d+1} \leftarrow  \L_d+(q-1)\len_d/\num$
	    \State \Return $\L_{d+1}$
	\EndIf
\EndFor
\State \Return $\emptyset $
\EndProcedure
\end{algorithmic}
\end{algorithm}

\begin{algorithm}
\caption{Pre-computation of the Significant Samples}\label{alg:pre_compute}
\begin{algorithmic}[1]
\Procedure{$\textsc{SamplingSignificantSample}$}{$x,H,G^{(j)}_{\sigma, b},F, T, \Delta, \num, D, R $} 
\State $\len_1 \leftarrow 2F$, ${\cal L}\in \C^{D\times R \times B \times 2} $
\For{$d\in [D]$} \label{ln:alg:pre_compute_d_D}
\State $\len_d=5 \frac{\len_{d-1}}{ \num}$
\State $\wh{\beta} \leftarrow O(\frac{\num}{\len_d}) $%
	\For {$r\in [R]$} \label{ln:alg:pre_compute_r_R}
		\State Sample $\beta \in \text{Uniform}([\frac{1}{2}\wh{\beta}, \wh{\beta} ])$
\State $Z \leftarrow \textsc{GenerateSignificantSamples}(x,H,G)$ \label{ln:alg:pre_compute_gensigsam} \Comment{$Z\in \C^{B\times 2}$, see Algorithm \ref{algo:significant_samples_z}}
\For{$j\in [B]$} \label{ln:alg:pre_compute_j_B}
\State ${\cal L}_{d,r,j,1}\leftarrow Z_{j,1}$ \Comment{$ Z_{j,1}=z^{(j)}(\alpha+\beta)$}
\State ${\cal L}_{d,r,j,2}\leftarrow Z_{j,2}$ \Comment{$ Z_{j,2}=z^{(j)}(\alpha)$}
\EndFor
\EndFor
\EndFor
\State \Return ${\cal L}$
\EndProcedure

\Procedure{$\textsc{GetSignificantSample}$}{${\cal L}, d, r, j $} 
\State \Return $({\cal L}_{d,r,j,1}, {\cal L}_{d,r,j,2})$
\EndProcedure
\end{algorithmic}
\end{algorithm}

\subsection{Simultaneously estimate frequencies for different bins}\label{sec:freq_est:frequency_different_bins}

Combining the significant sample generation procedure discussed in Section~\ref{sec:gen_sigi_sam} with Algorithm~\ref{alg:locateksignal_locatekinner}, we obtain the frequency estimation algorithm that improves the algorithms in \cite{ps15} and \cite{ckps16}.

\begin{algorithm}[ht]\caption{Frequency Estimation}\label{algo:freqEstX}
\begin{algorithmic}[1]
\Procedure{FrequencyEstimationX}{$x$}
\State \label{ln:samSigiSam} ${\cal L}\leftarrow \textsc{SamplingSignificantSample}(x)$
\For{$j\leftarrow 1,\cdots,B$}
\State \label{ln:freqEstZcallfreqEstZ} $\wt{f}_j \leftarrow\textsc{FrequencyEstimationZ}(x, H, G^{(j)}_{\sigma, b})$  \Comment{$z^{(j)} = (x\cdot H)* G^{(j)}_{\sigma, b}(t)$} \label{ln:call_freq_est_z}
\EndFor
\State $L\leftarrow\{\wt{f}_1,\cdots,\wt{f}_B\}$
\State \Return $L$
\EndProcedure
\end{algorithmic}
\end{algorithm}

\begin{theorem}[Better frequency estimation algorithm]\label{thm:frequency_recovery_k_better}
Let $x^*(t) = \overset{k}{ \underset{j=1}{\sum} } v_j e^{2\pi\i f_j t}$ and $x(t)= x^*(t) +g(t)$ be the observation signal where $g(t)$ is arbitrary noise. Let $\Delta_h := O(|\supp(\wh{H})|) $, $\Delta:=O(k\cdot \Delta_h )$ and $\N^2 := \| g(t) \|_T^2 + \delta \| x^*(t) \|_T^2$. Let $H$ be defined as in Definition~\ref{def:effect_H_k_sparse}, ${G}^{(j)}_{\sigma,b}$ be defined as in Definition \ref{def:G_j_sigma_b} with $(\sigma,b)$ such that Large Offset event does not happen.
Let $U=\{t_0\in \R~|~ H(t) > 1-\delta_1~ \forall t\in [t_0,t_0+\beta]\}$.

For $j\in [B]$, let $z_j^*(t):=(x^*\cdot H)*G^{(j)}_{\sigma, b}(t)$ and $z_j(t)=(x\cdot H)*{G}^{(j)}_{\sigma,b}(t)$.
Let $g_j(t):=z_j(t)-z_j^*(t)$. Let 
\begin{align*}
  S_{g1}=\{ j\in[B]~|~ \|g_j(t)\|_T^2 \leq c \|z^*_j(t)\|_{U}^2\},  
\end{align*}
where $c\in (0, 0.001)$ is a small universal constant.
Let 
\begin{align*}
  S_{g2}=\Big\{ j\in[B]~\Big|~ \exists f_0, h_{\sigma, b}(f_0)=j ~\text{and}~ \int_{f_0-\Delta_h}^{f_0+\Delta_h} | \widehat{x^*\cdot H}(f) |^2 \mathrm{d} f \geq  T\N^2/k  \Big\}.  
\end{align*}
Let $S_g=  S_{g1} \cap S_{g2}$. Let  $S_f = \{ f_i ~|~ \exists j\in S_g:h_{\sigma, b}(f_i)=j~\forall i\in [k]\}$.

There is a Procedure \textsc{FrequencyEstimationX} in Algorithm \ref{algo:freqEstX} such that:
\begin{itemize}
    \item takes $k^2 \log(1/\delta) \log(FT)$ samples,
    \item runs in $k^2 \log(1/\delta) \log^2 (FT)$ time,
    \item returns a set $L$ of $O(k)$ frequencies such that with probability $1- \rho_0$, for any $f\in S_f$, there exists an $\wt{f} \in L$ satisfying
\begin{equation*}
|f-\widetilde{f} | \lesssim \Delta.
\end{equation*}
\end{itemize}
\end{theorem}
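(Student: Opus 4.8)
Algorithm~\ref{algo:freqEstX} is the composition of two already-analyzed pieces: the batched generator \textsc{GenerateSignificantSamples} (Algorithm~\ref{algo:significant_samples_z}, correctness in Lemma~\ref{lem:significant_samples_for_each_bins}, costs in Lemmas~\ref{lem:time:algo:significant_samples_z} and~\ref{lem:sample:algo:significant_samples_z}), wrapped into the pre-computation table of Algorithm~\ref{alg:pre_compute}, and the $\num$-ary search estimator \textsc{FrequencyEstimationZ} (Algorithm~\ref{alg:locateksignal_locatekinner}, analyzed in Lemma~\ref{lem:sample2freq_est}). The plan is: (i) fix global parameters; (ii) verify that the pre-computed table turns \textsc{GetSignificantSample} into exactly the oracle required by Lemma~\ref{lem:sample2freq_est}, \emph{simultaneously for every bin in $S_g$}; (iii) apply Lemma~\ref{lem:sample2freq_est} per bin and union-bound over the $B$ bins to cover all $f\in S_f$; (iv) add up the costs. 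For (i): take $B=\Theta(k)$, $\num=O(1)$, $D=O(\log(FT)/\log\num)=O(\log(FT))$, $R=O(\log(\log(FT)/(\rho_1\log\num)))$ with $\rho_1:=\rho_0/(2B)$, and choose $\delta_1=\poly(\delta/k)$ small enough to satisfy every smallness requirement used earlier ($\delta_1\le\delta^2/k$, $\delta_1 T|S|\le 1$, $\delta_1 k^2\le 1$, $\delta_1\le\delta/k$), so that $\log(1/\delta_1)=O(\log(k/\delta))$, $|\supp(\wh H)|=\poly(k)\polylog(k/\delta)$, and hence $\Delta=O(k\,|\supp(\wh H)|)=\poly(k)\polylog(k/\delta)$. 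The round-$d$ interval has length $\len_d=5\len_{d-1}/\num$ from $\len_1=2F$, the search stops when $\len_d=\Theta(\num\Delta)$, and the perturbation used there is $\beta_d=\Theta(\num/\len_d)$, so $\beta_d\le O(1/\Delta)$ throughout, the regime for which Lemma~\ref{lem:significant_samples_for_each_bins} is stated.

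For (ii)--(iii): \textsc{SamplingSignificantSample} (Algorithm~\ref{alg:pre_compute}) calls \textsc{GenerateSignificantSamples} once for each of the $DR$ pairs $(d,r)$ with a fresh $\beta\sim\mathrm{Uniform}([\frac{1}{2}\wh{\beta}_d,\wh{\beta}_d])$, storing for every $j\in[B]$ the pair $(z^{(j)}(\alpha+\beta),z^{(j)}(\alpha))$; then \textsc{GetSignificantSample}$({\cal L},d,r,j)$ is an $O(1)$ lookup. By Lemma~\ref{lem:significant_samples_for_each_bins}, each such call returns, with constant probability, a pair that is significant ($|z_j(\alpha+\beta)-z_j(\alpha)e^{2\pi\i f_0\beta}|^2\le 0.01|z_j(\alpha)|^2$) \emph{for all} $j\in S_g$ at once. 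To meet the interface of Lemma~\ref{lem:sample2freq_est} (success probability $0.9$, error constant $0.0001$), re-tune the Markov thresholds in Lemmas~\ref{lem:significant_samples_z_below} and~\ref{lem:significant_samples_z_above} and in the final resampling step inside Lemma~\ref{lem:significant_samples_z}, and take $c,\gamma,\delta_1$ sufficiently small; this raises the per-call success probability above $0.9$ and lowers the error below $0.0001$ without changing the asymptotic cost. Working on the standing events --- $(\sigma,b)$ such that Large Offset does not happen (given) and the frequencies relevant to $S_g$ well-isolated (Lemma~\ref{lem:often-well-isolated}, the latter absorbed into $\rho_0$) --- \textsc{GetSignificantSample}$(\cdot,\cdot,\cdot,j)$ is a legitimate instance of the oracle of Lemma~\ref{lem:sample2freq_est} for every $j\in S_g$, all reading the \emph{same} table. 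Hence for each $f_i\in S_f$ the bin $j=h_{\sigma,b}(f_i)\in S_g$, line~\ref{ln:freqEstZcallfreqEstZ} runs \textsc{FrequencyEstimationZ} on it, and Lemma~\ref{lem:sample2freq_est} gives $|\wt{f}_j-f_i|\lesssim\Delta$ with probability $\ge 1-\rho_1$; a union bound over the $\le B$ bins (with $\rho_1=\rho_0/(2B)$) leaves total failure $\le\rho_0$, while $|L|\le B=O(k)$.

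For (iv): no samples are spent outside the table, so the sample complexity is $DR\cdot O(k^2\log k\log(k/\delta_1))=k^2\log(FT)\cdot\polylog(k,1/\delta)$ by Lemma~\ref{lem:sample:algo:significant_samples_z}; the time is $DR\cdot O(k^2\log k\log(k/\delta_1))$ for building ${\cal L}$ (Lemma~\ref{lem:time:algo:significant_samples_z}) plus $B\cdot D\cdot R\cdot O(1)$ for the $B$ runs of \textsc{FrequencyEstimationZ} (each \textsc{GetSignificantSample} is $O(1)$ and each \textsc{ArySearch} inner double loop over $s$ and over $q\in[\num]$ has $O(\beta_d\len_d)=O(1)$ iterations), the first term dominating. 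Collecting terms gives $k^2\log(FT)\cdot\polylog(k,1/\delta)$, which matches the stated bounds up to the $\wt O$/informal-constant conventions (the powers of $\log(FT)$ are the $D=O(\log(FT))$ search rounds in the pre-computation and in \textsc{FrequencyEstimationZ}, together with the per-call cost of \textsc{GenerateSignificantSamples}).

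\textbf{Main obstacle.} The only step that is not bookkeeping is reconciling the two interfaces: Lemma~\ref{lem:significant_samples_for_each_bins} is stated with success probability $0.6$ and error constant $0.01$, whereas Lemma~\ref{lem:sample2freq_est} wants $0.9$ and $0.0001$. Closing this requires a pass back through the two-level-sampling / energy-estimation chain (Lemmas~\ref{lem:significant_samples_z_below}--\ref{lem:significant_samples_z}) to confirm the success probability can be pushed above $0.9$ and the error below $0.0001$ by enlarging the Markov thresholds and shrinking $c,\gamma,\delta_1$, \emph{while keeping the $\polylog$ sample/time budget}, and to check that the amplification remains valid even though all $B$ bins share one random table. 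Everything else --- geometric shrinkage of $\len_d$, the bound $\beta_d\le O(1/\Delta)$, the $\rho_1=\rho_0/(2B)$ union bound, and the cost accounting --- is routine given the earlier sections.
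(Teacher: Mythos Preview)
Your proposal is correct and matches the paper's approach: cite Lemma~\ref{lem:significant_samples_for_each_bins} for the per-bin significant samples via the pre-computed table, plug into Lemma~\ref{lem:sample2freq_est} for each bin, union-bound over the $B$ bins with $\rho_1=\rho_0/B$, and read off the costs from Lemmas~\ref{lem:time:alg:pre_compute} and~\ref{lem:sample:alg:pre_compute}. The constant-interface mismatch you flag as the main obstacle ($0.6$/$0.01$ in Lemma~\ref{lem:significant_samples_for_each_bins} versus $0.9$/$0.0001$ in Lemma~\ref{lem:sample2freq_est}) is indeed present and the paper's proof simply cites the two lemmas in sequence without reconciling it; your proposed fix---shrink $c,\gamma,\delta_1$ and retune the Markov thresholds in Lemmas~\ref{lem:significant_samples_z_below}--\ref{lem:significant_samples_z}---is the right one and is in fact more explicit than what the paper writes.
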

\begin{proof}
We prove the correctness, time complexity, and sample complexity of Algorithm \ref{algo:freqEstX} in below.

\paragraph{Correctness:}
By Lemma \ref{lem:significant_samples_for_each_bins}, we know that the Procedure \textsc{GenerateSignificantSamples} in Algorithm \ref{algo:significant_samples_z}  takes ${\cal S}=O(k^2\log^2(k/\delta_1))$ samples in $x$, runs in ${\cal T}= O(k^2\log^3(k/\delta_1))$ time, and for each $j\in S_g$, and outputs $\alpha_j$ such that for each $j\in S_g$ with probability $0.6$, \begin{equation*} |z_j(\alpha_j + \beta) - z_j(\alpha_j)e^{2 \pi \i f_0 \beta}|^2 \le 0.01  |z_j(\alpha_j)|^2,
\end{equation*}
where $f_0$ satisfies  \begin{equation}%
    \int_{f_0-\Delta}^{f_0+\Delta} | \widehat{x^*\cdot H}(f) |^2 \mathrm{d} f \geq T\N^2/k,
  \end{equation}
  and $j=h_{\sigma, b}(f_0)$.

In the line~\ref{ln:call_freq_est_z}, we call the algorithm \textsc{FrequencyEstimationZ}$(x, H, G^{(j)}_{\sigma, b})$. By Lemma \ref{lem:sample2freq_est}, \textsc{FrequencyEstimationZ}$(x, H, G^{(j)}_{\sigma, b})$ output $\wt{f}$ for each $ f_j \in S_f$ such that with probability at least $1-\rho_1$
\begin{align*}
| \wt{f} - f_j | \lesssim \Delta . 
\end{align*}

As a result, for all the $f\in S_f$, there is a $\wt{f} \in L $ such that
\begin{align*}
| \wt{f} - f | \lesssim \Delta  
\end{align*} 
holds with probability at least 
\begin{align*}
    1 - B \rho_1 \geq 1- \rho_0. 
\end{align*}

\paragraph{Time complexity:} We show that the Procedure \textsc{FrequencyEstimationX} in Algorithm \ref{algo:freqEstX} runs in 
\begin{align*}
    O(k^2\log(k)\log(k/\delta_1)\log (FT)\log({\log(FT)}/{\rho_1} )) 
\end{align*}
time.

In each call of the Procedure \textsc{FrequencyEstimationX} in Algorithm \ref{algo:freqEstX}, 
\begin{itemize}
    \item Line \ref{ln:samSigiSam} call Procedure \textsc{SamplingSignificantSample}, which runs in
    \begin{align*}
        O(k^2\log(k)\log(k/\delta_1)\log (FT)\log({\log(FT)}/{\rho_1} ))
    \end{align*}
    time by Lemma \ref{lem:time:alg:pre_compute}.
    \item The for-loop repeats $ B $ times:
    \begin{itemize}
        \item In each loop, line \ref{ln:freqEstZcallfreqEstZ} call Procedure \textsc{FrequencyEstimationZ}, which runs in $$ O(\log (FT) \cdot \log({\log(FT)}/{\rho_1}) ) $$ time by Lemma \ref{lem:sample2freq_est}. 
    \end{itemize}
\end{itemize}

Thus, the total time complexity is 
\begin{align*}
 &~ O(k^2\log(k)\log(k/\delta_1)\log (FT)\log({\log(FT)}/{\rho_1} )) + B \cdot O(\log (FT) \cdot \log({\log(FT)}/{\rho_1}) ) \\
 \leq &~ O(k^2\log(k)\log(k/\delta_1)\log (FT)\log({\log(FT)}/{\rho_1} )) + O(k) \cdot O(\log (FT) \cdot \log({\log(FT)}/{\rho_1}) ) \\
  \leq &~ O(k^2\log(k)\log(k/\delta_1)\log (FT)\log({\log(FT)}/{\rho_1} )) ,
\end{align*}
where the first step follows from $ B = O(k)$, the second step is straight forward. 

\paragraph{Sample complexity:} We show that the Procedure \textsc{FrequencyEstimationX} in Algorithm \ref{algo:freqEstX} takes 
\begin{align*}
    O(k^2\log(k)\log(k/\delta_1)\log (FT)\log({\log(FT)}/{\rho_1} ))
\end{align*}
samples.

In each call of the Procedure \textsc{FrequencyEstimationX} in Algorithm \ref{algo:freqEstX}, Line \ref{ln:samSigiSam} call Procedure \textsc{SamplingSignificantSample}, which takes $ O(k^2\log(k)\log(k/\delta_1)\log (FT)\log({\log(FT)}/{\rho_1} ))$ samples by Lemma \ref{lem:sample:alg:pre_compute}.

So, the sample complexity of Procedure \textsc{FrequencyEstimationX} in Algorithm \ref{algo:freqEstX} is 
\begin{align*}
 O(k^2\log(k)\log(k/\delta_1)\log (FT)\log({\log(FT)}/{\rho_1} )).
\end{align*}
\end{proof}

The following two lemmas shows the time complexity and sample complexity of the significant sample generation procedure in Algorithm \ref{alg:pre_compute}.

\begin{lemma}[Running time of Procedure \textsc{SamplingSignificantSample} in Algorithm \ref{alg:pre_compute}]\label{lem:time:alg:pre_compute}
Procedure \textsc{SamplingSignificantSample} in Algorithm \ref{alg:pre_compute} runs in \begin{align*}
    O(k^2\log(k)\log(k/\delta_1)\log (FT)\log({\log(FT)}/{\rho_1} ))
\end{align*} times.
\end{lemma}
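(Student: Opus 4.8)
The plan is to bound the running time of Procedure \textsc{SamplingSignificantSample} by a line-by-line accounting of Algorithm~\ref{alg:pre_compute}, invoking Lemma~\ref{lem:time:algo:significant_samples_z} for the cost of each inner call and using the parameter settings inherited from the frequency-estimation routine for the loop counts. This is the same style of argument as the proof of Lemma~\ref{lem:time:algo:significant_samples_z}, one level up.

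First I would observe that the procedure is built from two nested loops: the outer loop over $d\in[D]$ (line~\ref{ln:alg:pre_compute_d_D}) and the middle loop over $r\in[R]$ (line~\ref{ln:alg:pre_compute_r_R}). Within one iteration of the middle loop the work consists of sampling $\beta$ (constant time), one call to \textsc{GenerateSignificantSamples} on line~\ref{ln:alg:pre_compute_gensigsam}, and the innermost loop over $j\in[B]$ on line~\ref{ln:alg:pre_compute_j_B}, which performs $O(1)$ assignments per index and hence $O(B)=O(k)$ work in total. By Lemma~\ref{lem:time:algo:significant_samples_z}, each call of \textsc{GenerateSignificantSamples} runs in $O(k^2\log(k)\log(k/\delta_1))$ time, which dominates the $O(k)$ bookkeeping on line~\ref{ln:alg:pre_compute_j_B} (any FFT or summation cost is already absorbed into that lemma). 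Thus a single iteration of the middle loop costs $O(k^2\log(k)\log(k/\delta_1))$.

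Next I would plug in the loop counts. The length update $\len_d = 5\len_{d-1}/\num$ with $\len_1=2F$, together with $\wh{\beta}=O(\num/\len_d)$ and the constraint $\beta\leq O(1/\Delta)$ required by \textsc{GetSignificantSample}, forces $D = O(\log(FT/\Delta)/\log\num)$, which is $O(\log(FT))$ since $\num=O(1)$ and $\Delta=\poly(k)$; likewise $R = O(\log(\log(FT)/(\rho_1\log\num))) = O(\log(\log(FT)/\rho_1))$. Multiplying the per-iteration cost by $D\cdot R$ gives total running time $O\big(D\cdot R\cdot k^2\log(k)\log(k/\delta_1)\big) = O\big(k^2\log(k)\log(k/\delta_1)\log(FT)\log(\log(FT)/\rho_1)\big)$, which is the claimed bound.

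This is a pure counting argument and I do not anticipate a genuine obstacle. The only points requiring care are (i) confirming that the parameters $\num, D, R$ used in Algorithm~\ref{alg:pre_compute} coincide with those fixed in Algorithm~\ref{alg:locateksignal_locatekinner}, so that the $D, R$ estimates above are the correct ones, and (ii) checking that the $O(B)$ per-iteration overhead is truly dominated by the \textsc{GenerateSignificantSamples} call; both are immediate from the cited lemma and the parameter settings.
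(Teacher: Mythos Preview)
Your proposal is correct and follows essentially the same approach as the paper's own proof: a line-by-line accounting of Algorithm~\ref{alg:pre_compute}, invoking Lemma~\ref{lem:time:algo:significant_samples_z} for the cost of each call to \textsc{GenerateSignificantSamples}, bounding the inner $O(B)$ bookkeeping as dominated, and then multiplying by the loop counts $D=O(\log(FT))$ and $R=O(\log(\log(FT)/\rho_1))$ obtained from the parameter settings of Algorithm~\ref{alg:locateksignal_locatekinner}. The two caveats you flag (matching the parameters $\num,D,R$ to those of the calling routine, and checking the $O(B)$ overhead is dominated) are exactly the points the paper also handles explicitly.
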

\begin{proof}
In each call of Procedure \textsc{SamplingSignificantSample} in Algorithm \ref{alg:pre_compute},  in line \ref{ln:alg:pre_compute_d_D}, the for loop repeats $D$ times, in line \ref{ln:alg:pre_compute_r_R}, the for loops repeats $R$ times,
\begin{itemize}
    \item In line \ref{ln:alg:pre_compute_gensigsam}, by Lemma \ref{lem:time:algo:significant_samples_z}, each call of Procedure \textsc{GenerateSignificantSamples} takes $ O(k^2\log(k)\log(k/\delta_1) )$ times.
    \item In line \ref{ln:alg:pre_compute_j_B}, the for loop repeats $B$ times, each iteration runs in $ O(1)$ times.
\end{itemize}

Following from the setting in the algorithm, we have that
\begin{align} 
\num = &~ O(1) ,\notag \\ 
D=&~ O(\log (\frac{FT}{ \Delta} )/\log(\num)), \notag \\ 
R =&~ O(\log(\frac{\log(FT)}{\rho_1\log(\num)})) ,\notag \\
B = &~ O(k). %
\label{eq:time_SSS_BS} \end{align} 

We have that
\begin{align}
 D=&~ O(\log (\frac{FT}{ \Delta} )/\log(\num)) \leq O(\log (\frac{FT}{ \Delta} ))   \leq O(\log (FT)), \label{eq:D_time_SSS}
\end{align}
where the first step follows from the setting of $D$, the second step follows from $ \num = O (1)$, the third step follows from $ \Delta = \poly(k)$. 

We also have that
\begin{align}
\label{eq:R_time_SSS}
R =&~ O(\log(\frac{\log(FT)}{\rho_1\log(\num)})) \leq O(\log({\log(FT)}/{\rho_1})) , %
\end{align}
where the first step follows from the setting of $R$, the second step follows from $\num=O(1)$.

So, the time complexity of Procedure \textsc{SamplingSignificantSample} in Algorithm \ref{alg:pre_compute} is 
\begin{align*}
 &~ D \cdot R \cdot (O(k^2\log(k)\log(k/\delta_1) ) + B \cdot O(1)) \\
 \leq &~ O(\log (FT)) \cdot R \cdot (O(k^2\log(k)\log(k/\delta_1) ) + B \cdot O(1)) \\
 \leq &~ O(\log (FT)) \cdot O(\log({\log(FT)}/{\rho_1})) \cdot (O(k^2\log(k)\log(k/\delta_1) ) + B \cdot O(1)) \\
 \leq &~ O(\log (FT)) \cdot O(\log({\log(FT)}/{\rho_1})) \cdot (O(k^2\log(k)\log(k/\delta_1) ) + O(k) \cdot O(1)) \\
 = &~ O(k^2\log(k)\log(k/\delta_1)\log (FT)\log({\log(FT)}/{\rho_1} )) ,
\end{align*}
where the first step follows from Eq.~\eqref{eq:D_time_SSS}, the second step follows from Eq.~\eqref{eq:R_time_SSS},  the third step follows from Eq.~\eqref{eq:time_SSS_BS}, the forth step is straightforward.

\end{proof}

\begin{lemma}[Sample complexity of Procedure \textsc{SamplingSignificantSample} in Algorithm \ref{alg:pre_compute}]\label{lem:sample:alg:pre_compute}
Procedure \textsc{SamplingSignificantSample} in Algorithm \ref{alg:pre_compute} takes 
\begin{align*}
    O(k^2\log(k)\log(k/\delta_1)\log (FT)\log({\log(FT)}/{\rho_1} ))
\end{align*}
samples.
\end{lemma}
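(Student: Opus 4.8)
\textbf{Proof proposal for Lemma~\ref{lem:sample:alg:pre_compute}.}

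The plan is to trace the sample usage of Procedure \textsc{SamplingSignificantSample} in Algorithm~\ref{alg:pre_compute} line by line, exactly mirroring the structure of the running-time argument in Lemma~\ref{lem:time:alg:pre_compute}. First I would observe that the only place the algorithm actually reads samples of $x$ is inside the call to \textsc{GenerateSignificantSamples} on line~\ref{ln:alg:pre_compute_gensigsam} (the remaining operations on lines~\ref{ln:alg:pre_compute_j_B} merely copy already-computed values into the array ${\cal L}$). By Lemma~\ref{lem:sample:algo:significant_samples_z}, each such call takes $O(k^2\log(k)\log(k/\delta_1))$ samples.

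Next I would count how many times line~\ref{ln:alg:pre_compute_gensigsam} is executed: the outer loop on line~\ref{ln:alg:pre_compute_d_D} runs $D$ times and the inner loop on line~\ref{ln:alg:pre_compute_r_R} runs $R$ times, so the procedure makes $D\cdot R$ calls to \textsc{GenerateSignificantSamples}. Then I would plug in the parameter settings from the algorithm, namely $\num = O(1)$, $D = O(\log(FT/\Delta)/\log(\num)) \le O(\log(FT))$ (using $\num = O(1)$ and $\Delta = \poly(k)$), and $R = O(\log(\log(FT)/(\rho_1\log(\num)))) \le O(\log(\log(FT)/\rho_1))$. Multiplying, the total sample complexity is
\begin{align*}
    D\cdot R\cdot O(k^2\log(k)\log(k/\delta_1)) \le O(k^2\log(k)\log(k/\delta_1)\log(FT)\log(\log(FT)/\rho_1)),
\end{align*}
which is the claimed bound.

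Since the argument is a direct bookkeeping exercise, I do not anticipate a genuine obstacle; the only point requiring mild care is making sure that the $\beta$ drawn on the preceding line and the array-writes to ${\cal L}$ consume no \emph{fresh} samples of $x$, so that the sample count is truly dominated by the $D\cdot R$ invocations of \textsc{GenerateSignificantSamples}. This is immediate from inspecting Algorithm~\ref{algo:significant_samples_z}, whose samples of $x$ (through $x\cdot H$ on lines~\ref{ln:genSigiSam_u} and~\ref{ln:genSigiSam_u_beta}) are all internal to that subroutine and are exactly the samples accounted for in Lemma~\ref{lem:sample:algo:significant_samples_z}.
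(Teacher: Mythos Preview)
Your proposal is correct and follows essentially the same approach as the paper: both arguments count the $D\cdot R$ invocations of \textsc{GenerateSignificantSamples}, invoke the per-call sample bound from Lemma~\ref{lem:sample:algo:significant_samples_z}, and then substitute the parameter settings $\num=O(1)$, $D\le O(\log(FT))$, $R\le O(\log(\log(FT)/\rho_1))$ to obtain the claimed total. Your additional remark that the $\beta$-sampling and the writes to ${\cal L}$ consume no fresh samples of $x$ is a harmless (and correct) clarification that the paper leaves implicit.
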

\begin{proof}
In each call of Procedure \textsc{SamplingSignificantSample} in Algorithm \ref{alg:pre_compute},  In line \ref{ln:alg:pre_compute_d_D}, the for loop repeats $D$ times, in line \ref{ln:alg:pre_compute_r_R}, the for loops repeats $R$ times,
\begin{itemize}
    \item In line \ref{ln:alg:pre_compute_gensigsam}, by Lemma \ref{lem:time:algo:significant_samples_z}, each call of Procedure \textsc{GenerateSignificantSamples} takes $ O(k^2\log(k)\log(k/\delta_1) )$ samples.
\end{itemize}

Following from the setting in the algorithm, we have that
\begin{align*} 
\num = &~ O(1) ,\notag \\ 
D=&~ O(\log (\frac{FT}{ \Delta} )/\log(\num)), \notag \\ 
R =&~ O(\log(\frac{\log(FT)}{\rho_1\log(\num)})). %
\end{align*} 

We have that
\begin{align}
 D=&~ O(\log (\frac{FT}{ \Delta} )/\log(\num)) \leq O(\log (\frac{FT}{ \Delta} ))   \leq O(\log (FT)), \label{eq:D_sample_SSS}
\end{align}
where the first step follows from the setting of $D$, the second step follows from $ \num = O (1)$, the third step follows from $ \Delta = \poly(k)$. 

We also have that
\begin{align}
\label{eq:R_sample_SSS}
R =&~ O(\log(\frac{\log(FT)}{\rho_1\log(\num)})) \leq O(\log({\log(FT)}/{\rho_1})) , %
\end{align}
where the first step follows from the setting of $R$, the second step follows from $\num=O(1)$.

So, the sample complexity of Procedure \textsc{SamplingSignificantSample} in Algorithm \ref{alg:pre_compute} is 
\begin{align*}
 &~ D \cdot R \cdot O(k^2\log(k)\log(k/\delta_1) ) \\
 \leq &~ O(\log (FT)) \cdot R \cdot O(k^2\log(k)\log(k/\delta_1) ) \\
 \leq &~ O(\log (FT)) \cdot O(\log({\log(FT)}/{\rho_1})) \cdot O(k^2\log(k)\log(k/\delta_1) ) \\
 = &~ O(k^2\log(k)\log(k/\delta_1)\log (FT)\log({\log(FT)}/{\rho_1} )) ,
\end{align*}
where the first step follows from Eq.~\eqref{eq:D_sample_SSS}, the second step follows from Eq.~\eqref{eq:R_sample_SSS},  the third step is straightforward.

\end{proof}

\subsection{Vote distributions in \textsc{ArySearch}}\label{sec:freq_est:vote}

In this section, we prove several claims on the distributions of votes when we perform the $\num$-ary search on the frequency interval.

We first consider a single voter, i.e., one significant sample.
The following claim shows that, if $f_0$ is in the $q$-th part, then this part or its left neighbor or its right neighbor will get at least one vote.

\begin{claim}\label{clm:angle_between_xgamma_and_xgammabeta_is_betaf}
For $\len\in\R_+, \num\in\Z_+, q\in[1, \num]$,
  let $f_0 \in [\L + (q-1)\frac{\len}{\num} , \L + q\frac{\len}{\num} ]$. 
  For any $\beta\in [\frac{c}{2}\cdot \frac{\num}{\len}, c\cdot \frac{\num}{\len}]$ with constant $c\in(0, 0.01)$, for any constant $\epsilon\in(0, 0.01\cdot c^2)$,
let $\alpha\in \R$ such that 
\begin{align*}
    |z(\alpha + \beta) - z(\alpha)e^{2 \pi \i f_0 \beta}|^2 \le \eps  |z(\alpha)|^2.
\end{align*}
Let 
\begin{align*}
    \Theta = \Big\{ \frac{1}{2\pi  \beta }(\arg(\frac{z(\alpha + \beta)}{z(\alpha)}) + 2\pi s)  ~\Big|~ s \in [ \beta \L -10 , \beta( \L + \len  ) + 10] \cap \Z \Big\}.
\end{align*}
 
 Then, we have that 
 \begin{align*}
   \left|\Theta\cap  \Big[\L + (q-2)\frac{\len}{\num} , \L + (q+1)\frac{\len}{\num} \Big) \right|=1.   
 \end{align*}
\end{claim}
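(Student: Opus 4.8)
## Proof Proposal for Claim~\ref{clm:angle_between_xgamma_and_xgammabeta_is_betaf}

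\textbf{Overall strategy.} The plan is to show two things: (i) the set $\Theta$ of candidate frequencies contains at least one point inside the window $[\L + (q-2)\frac{\len}{\num}, \L + (q+1)\frac{\len}{\num})$, and (ii) it contains at most one such point. For (i), I would exploit that the ``true'' value $f_0$ lies in the $q$-th part, and that one particular choice of the integer shift $s$ (namely $s = \mathrm{round}(\beta f_0 - \frac{1}{2\pi}\arg(z(\alpha+\beta)/z(\alpha)))$, or the nearest integer to $\beta f_0$ minus the measured phase) produces a candidate $\wt f$ that is within $O(\eps^{1/2}/\beta)$ of $f_0$; since $\eps \le 0.01 c^2$ and $\beta \ge \frac{c}{2}\cdot\frac{\num}{\len}$, this error is at most a small fraction of $\frac{\len}{\num}$, so $\wt f$ stays in the $3$-part window around the $q$-th part. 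For (ii), I would use that consecutive elements of $\Theta$ differ by exactly $\frac{1}{\beta} \ge \frac{1}{c}\cdot\frac{\len}{\num} > 100\cdot\frac{\len}{\num}$ (since $c < 0.01$), which is much larger than the length $3\cdot\frac{\len}{\num}$ of the window; hence the window can contain at most one element of $\Theta$. Also one must check the candidate $s$ used in (i) actually lies in the admissible integer range $[\beta\L - 10, \beta(\L+\len)+10]\cap\Z$, which follows because $f_0 \in [\L, \L+\len]$ and the measured phase contributes at most $1$ to the shift.

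\textbf{Key steps in order.}
First I would translate the hypothesis $|z(\alpha+\beta) - z(\alpha)e^{2\pi\i f_0\beta}|^2 \le \eps|z(\alpha)|^2$ into a phase bound: dividing by $z(\alpha)$ (nonzero, since otherwise the inequality forces $z(\alpha+\beta)=0$ too and the bound is vacuous — I would handle or note this degenerate case), we get $|z(\alpha+\beta)/z(\alpha) - e^{2\pi\i f_0\beta}| \le \sqrt\eps$, which gives $|\arg(z(\alpha+\beta)/z(\alpha)) - 2\pi f_0\beta \bmod 2\pi| \le 2\sqrt\eps$ (using $|\arg(w) - \arg(w')| \le \pi|w-w'|/|w'|$ for unit-ish $w'$, or more simply $2\arcsin(\sqrt\eps/2) \le \pi\sqrt\eps/... $; a crude constant suffices).
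Second, I would define $s^* := \mathrm{round}\big(f_0\beta - \frac{1}{2\pi}\arg(z(\alpha+\beta)/z(\alpha))\big)$ and set $\wt f := \frac{1}{2\pi\beta}(\arg(z(\alpha+\beta)/z(\alpha)) + 2\pi s^*)$; by construction $|\wt f - f_0| \le \frac{1}{2\pi\beta}\cdot 2\sqrt\eps \le \frac{\sqrt\eps}{\pi\beta}$, and plugging $\beta \ge \frac{c}{2}\cdot\frac{\num}{\len}$ and $\eps \le 0.01 c^2$ yields $|\wt f - f_0| \le \frac{2\sqrt\eps}{\pi c}\cdot\frac{\len}{\num} \le \frac{0.2}{\pi}\cdot\frac{\len}{\num} < \frac{\len}{\num}$. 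Since $f_0 \in [\L+(q-1)\frac{\len}{\num}, \L+q\frac{\len}{\num}]$, this places $\wt f \in [\L+(q-2)\frac{\len}{\num}, \L+(q+1)\frac{\len}{\num})$, proving existence, \emph{provided} $s^*$ is in the admissible range.
Third, I would verify $s^* \in [\beta\L - 10, \beta(\L+\len)+10]\cap\Z$: since $\wt f \in [\L, \L+\len]$ (wider inclusion, from $q \in [1,\num]$) we have $f_0\beta \in [\beta\L - \frac{\len}{\num}\beta, \beta(\L+\len) + \frac{\len}{\num}\beta]$ roughly, and $|s^* - f_0\beta| \le \frac12 + \frac{1}{2\pi}\cdot 2\pi \le 2$; combined with $\frac{\len}{\num}\beta \le c \le 0.01 < 10$, this gives the bound comfortably.
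Fourth, for uniqueness, I would note that any two distinct elements of $\Theta$ have the form $\frac{1}{2\pi\beta}(\theta + 2\pi s_1)$ and $\frac{1}{2\pi\beta}(\theta + 2\pi s_2)$ with $s_1 \ne s_2$ integers, hence differ by at least $\frac{1}{\beta}$. Using $\beta \le c\cdot\frac{\num}{\len}$ with $c < 0.01$, $\frac{1}{\beta} \ge \frac{1}{c}\cdot\frac{\len}{\num} > 100\cdot\frac{\len}{\num}$, which exceeds the width $3\cdot\frac{\len}{\num}$ of the target window. So at most one element of $\Theta$ lies in it, and combined with the third step, exactly one does.

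\textbf{Main obstacle.} I expect the only delicate point to be making the constant-chasing in the phase-to-frequency error bound fully rigorous: converting $|w - e^{\i\phi}| \le \sqrt\eps$ into a clean bound on $|\arg w - \phi|$ requires care about branch cuts and about the case where $|z(\alpha)|$ is small relative to the noise, and one must be sure the factor relating $\eps$ to the angular error does not secretly depend on $|z(\alpha+\beta)|/|z(\alpha)|$ being bounded. I would handle this by working with $r := z(\alpha+\beta)/z(\alpha)$, noting $|r - e^{2\pi\i f_0\beta}| \le \sqrt\eps < 1/10$, which forces $|r| \in [9/10, 11/10]$, and then the elementary estimate $|\arg(r e^{-2\pi\i f_0\beta})| \le \frac{\pi}{2}\cdot\frac{|r - e^{2\pi\i f_0\beta}|}{|e^{2\pi\i f_0\beta}|} = \frac{\pi}{2}\sqrt\eps$ (valid since the deviation is small) closes the gap. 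Everything else is routine arithmetic with the given ranges of $c$, $\eps$, $\beta$, $\num$, $\len$.
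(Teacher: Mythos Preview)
Your proposal is correct and follows essentially the same approach as the paper's proof: both divide through by $z(\alpha)$ to get a phase bound $\|\arg(z(\alpha+\beta)/z(\alpha)) - 2\pi f_0\beta\|_{\bigcirc} \lesssim \sqrt{\eps}$, identify the integer shift $s^*$ (the paper's $s_0$) closest to $f_0\beta - \tfrac{1}{2\pi}\arg(\cdot)$, show the resulting candidate lies within $O(\sqrt{\eps}/\beta) < \tfrac{\len}{\num}$ of $f_0$ and that $s^*$ falls in the admissible range, and then use the spacing $1/\beta \ge \tfrac{\len}{\num}$ to get uniqueness. The only cosmetic differences are the order in which you verify the range of $s^*$ versus the location of $\wt f$, and your slightly more explicit handling of the $\arg$ estimate via $|r| \in [9/10,11/10]$.
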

\begin{proof}
We have that
\begin{align*}
    \Big|\frac{z(\alpha + \beta)}{z(\alpha)}  - e^{2 \pi \i f_0 \beta}\Big| \le  \sqrt{\eps} .
\end{align*}
Since $|e^{2 \pi \i f_0 \beta}|=1$ and $\sin(x)\approx x$ for small $x$, it indicates that%
\begin{align}
    \Big\|\arg(\frac{z(\alpha + \beta)}{z(\alpha)})  - 2 \pi f_0 \beta \Big\|_{\bigcirc} \lesssim  \sqrt{\eps} ,  \label{eq:est_arg_z_diff_temp}
\end{align}
where $\| a\|_{\bigcirc} = \underset{x\in \mathbb{Z} }{\min}~| a + 2\pi x|$. Thus,  Eq.~\eqref{eq:est_arg_z_diff_temp} can be rewritten as:
\begin{align}
   \underset{x\in \mathbb{Z} }{\min}~ \Big|\arg(\frac{z(\alpha + \beta)}{z(\alpha)})  - 2 \pi f_0 \beta + 2\pi x \Big| \lesssim  \sqrt{\eps} . \label{eq:est_arg_z_diff}
\end{align}

Let $s_0$ be defined as
\begin{align*}
  s_0:= \arg\min_{x\in \mathbb{Z}}~ \arg(\frac{z(\alpha + \beta)}{z(\alpha)})  - 2 \pi f_0 \beta + 2\pi x.
\end{align*}

We first show that $s_0$ falls in interval in the definition of $\Theta$. We have that 
\begin{align}
    |2\pi s_0  - 2\pi f_0 \beta | \leq&~ \Big| \arg(\frac{z(\alpha + \beta)}{z(\alpha)})  - 2 \pi f_0 \beta + 2\pi s_0\Big| + \Big|\arg(\frac{z(\alpha + \beta)}{z(\alpha)})\Big| \notag \\
    \leq&~ \Big| \arg(\frac{z(\alpha + \beta)}{z(\alpha)})  - 2 \pi f_0 \beta + 2\pi s_0\Big| + 2\pi \notag \\
    \leq&~ 2\pi + O(\sqrt{\eps}) \label{eq:b_s_0_to_f_0}
\end{align}
where the first step follows from the triangle inequality, the second step follows from $|\arg(\frac{z(\alpha + \beta)}{z(\alpha)})|\leq 2\pi $, the third step follows from Eq.~\eqref{eq:est_arg_z_diff}. 

As a result, $s_0$ has the following upper bound:
\begin{align*}
    s_0 \leq&~ f_0 \beta + 1 + O(\sqrt{\eps}) \\
    \leq&~ \beta (\L + \len) + 1 + O(\sqrt{\eps}) \\
    \leq&~ \beta (\L + \len) + 2 
\end{align*}
where the first step follows from Eq.~\eqref{eq:b_s_0_to_f_0}, the second step follows from $f_0\in[\L + (q-1)\frac{\len}{\num} , \L + q\frac{\len}{\num} ]\subseteq [\L, \L + \len]$, the third step follows from the setting of $\eps$. 

Also, $s_0$ has the following lower bound:
\begin{align*}
    s_0 \geq&~ f_0 \beta - 1 - O(\sqrt{\eps}) \\
    \geq&~ \beta \L - 1 - O(\sqrt{\eps}) \\
    \geq&~ \beta \L - 2 
\end{align*}
where the first step follows from Eq.~\eqref{eq:b_s_0_to_f_0}, the second step follows from $f_0\in [\L, \L + \len]$, the third step follows from the setting of $\eps$. 

Combining the lower and upper bounds of $s_0$ together, and by the definition of the set $\Theta$, we know that
\begin{align*}
    \frac{1}{2\pi  \beta }\Big(\arg(\frac{z(\alpha + \beta)}{z(\alpha)}) + 2\pi s_0\Big) \in \Theta.
\end{align*}

Then, we show that
\begin{align*}
    \L + (q-2)\frac{\len}{\num}\leq \frac{1}{2\pi  \beta }\Big(\arg(\frac{z(\alpha + \beta)}{z(\alpha)}) + 2\pi s_0\Big) < \L + (q+1)\frac{\len}{\num}.
\end{align*}
Eq.~\eqref{eq:est_arg_z_diff} also implies that
\begin{align}
\Big|\frac{1}{2\pi  \beta }(\arg(\frac{z(\alpha + \beta)}{z(\alpha)}) + 2\pi s_0) - f_0 \Big|   %
\leq  &~  O(\frac{\sqrt{\eps}}{\beta }  ). \label{eq:b_z_est_to_f_0}
\end{align}

Then, we have the following upper bound:
\begin{align*}
\frac{1}{2\pi  \beta }(\arg(\frac{z(\alpha + \beta)}{z(\alpha)}) + 2\pi s_0) \leq &~ f_0 + O(\frac{\sqrt{\eps}}{\beta }  ) \\ 
\leq &~ \L + q \frac{\len}{\num}   + O(\frac{\sqrt{\eps}}{\beta }  ) \\  
\leq &~ \L + q \frac{\len}{\num}   + O(\frac{2\sqrt{\eps}}{c} \frac{\len}{\num }  ) \\
\leq &~ \L + (q+1) \frac{\len}{\num},  
\end{align*}
where the first step follows from Eq.~\eqref{eq:b_z_est_to_f_0}, the second step follows from $f_0 \in [\L + (q-1)\len /\num , \L + q \len /\num ]$, the third step follows from $\beta \in [c\frac{\num}{2 \len}, c \frac{\num}{\len}] $, the forth step follows from $ O(\sqrt{\eps}/c)\leq 1$.

We also have the following lower bound:
\begin{align*}
\frac{1}{2\pi  \beta }(\arg(\frac{z(\alpha + \beta)}{z(\alpha)}) + 2\pi s_0) \geq &~ f_0 - O(\frac{\sqrt{\eps}}{\beta }  ) \\ 
\geq &~ \L + (q-1) \frac{\len}{\num}   - O(\frac{\sqrt{\eps}}{\beta }  ) \\  
\geq &~ \L + (q-1) \frac{\len}{\num}   - O(\frac{\sqrt{\eps}}{c} \frac{\len}{\num }  ) \\
> &~ \L + (q-2) \frac{\len}{\num}  ,
\end{align*}
where the first step follows from Eq.~\eqref{eq:b_z_est_to_f_0}, the second step follows from $f_0 \in [\L + (q-1)\len /\num , \L + q \len /\num ]$, the third step follows from $\beta \in [c\frac{\num}{2 \len}, c \frac{\num}{\len}] $, the forth step follows from $ O(\sqrt{\eps}/c)< 1$.

Moreover, since 
\begin{align*}
     \frac{1}{ \beta} \geq \frac{\len}{\num},
\end{align*}
we have that there is at most $1$ element in the intersection
 \begin{align*}
   \left|\Theta\cap  \Big[\L + (q-2)\frac{\len}{\num} , \L + (q+1)\frac{\len}{\num} \Big) \right|\leq 1.   
 \end{align*}

The lemma then follows.
\end{proof}

The following claim shows that for those parts far away from the true part containing $f_0$, they will get no vote.
\begin{claim}\label{clm:angle_between_xgamma_and_xgammabeta_is_betaf_second_part}

For $\len\in\R_+, \num\in\Z_+, q\in[1, \num]$,
  let $f_0 \in [\L + (q-1)\frac{\len}{\num} , \L + q\frac{\len}{\num} ]$. 
  Let $\beta\sim \text{Uniform}( [\frac{c}{2}\cdot \frac{\num}{\len}, c\cdot \frac{\num}{\len}])$ with constant $c\in(0, 0.01)$. For any constant $\epsilon\in(0, 0.01\cdot c^2)$, 
let $\alpha\in \R$ such that 
\begin{align*}
    |z(\alpha + \beta) - z(\alpha)e^{2 \pi \i f_0 \beta}|^2 \le \eps  |z(\alpha)|^2.
\end{align*}
Let 
\begin{align*}
    \Theta = \Big\{ \frac{1}{2\pi  \beta }(\arg(\frac{z(\alpha + \beta)}{z(\alpha)}) + 2\pi s)  ~\Big|~ s \in [ \beta \L -10 , \beta( \L + \len  ) + 10] \cap \Z \Big\}.
\end{align*}

Then, we have that for any $q' \in [0, \num - 1]$ such that $|q-q'|> 1$, with probability at least $1-O(c)$, 
\begin{align*}
\Theta \cap  \Big[\L + (q'-1)\frac{\len}{\num} , \L + q'\frac{\len}{\num} \Big]  = \emptyset.    
\end{align*}

\end{claim}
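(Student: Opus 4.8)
The plan is to exploit the fact that the candidate set $\Theta$ is an arithmetic progression whose common difference $1/\beta$ is enormous compared not just to one part width $\len/\num$ but to the entire search interval $[\L,\L+\len]$. Concretely, I would first rewrite $\Theta$: exactly as in the proof of Claim~\ref{clm:angle_between_xgamma_and_xgammabeta_is_betaf}, the significant-sample hypothesis forces $\arg(z(\alpha+\beta)/z(\alpha))=2\pi f_0\beta+\theta_{\mathrm{err}}\pmod{2\pi}$ with $|\theta_{\mathrm{err}}|=O(\sqrt\eps)$, so
\begin{align*}
\Theta=\Big\{\,f_0+\tfrac{\theta_{\mathrm{err}}}{2\pi\beta}+\tfrac{m}{\beta}\ :\ m\in M\,\Big\},
\end{align*}
where $M\subseteq\Z$ is a block of consecutive integers. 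The two parameter inequalities I would record are: first, since $\beta\le c\,\num/\len$, the spacing obeys $1/\beta\ge \len/(c\num)$, and because $c\num\ll 1$ this is much larger than $\len$; second, since $\eps<0.01\,c^2$, the offset $|\theta_{\mathrm{err}}|/(2\pi\beta)$ is bounded by a small fraction of one part width $\len/\num$.

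Next, I would isolate the $m=0$ element. By Claim~\ref{clm:angle_between_xgamma_and_xgammabeta_is_betaf} applied with this $\eps$, it lies in the three-part window $[\L+(q-2)\tfrac{\len}{\num},\,\L+(q+1)\tfrac{\len}{\num})$, which, since $|q-q'|>1$, is disjoint from the part-$q'$ interval $[\L+(q'-1)\tfrac{\len}{\num},\,\L+q'\tfrac{\len}{\num}]$; so the $m=0$ term never votes for part $q'$. For every other $m\in M$, the corresponding element of $\Theta$ differs from the $m=0$ element by at least $1/\beta\ge \len/(c\num)$, which is much larger than $\len$; since the $m=0$ element already lies within $\len/\num$ of $f_0\in[\L,\L+\len]$, all elements with $m\neq0$ fall outside $[\L,\L+\len]$ and hence outside every part. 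This already gives the conclusion deterministically; to recover precisely the stated failure probability $O(c)$ one may instead argue by density: the probability over the random $\beta$ that some translate $m/\beta$ lands in the window of width $O(\len/\num)$ around part $q'$ is, after converting the monotone map $\beta\mapsto m/\beta$ to an additive estimate and dividing by the length $\tfrac{c}{2}\,\num/\len$ of the support of $\beta$, at most $O(\beta\,\len/\num)=O(c)$ for each of the $O(1)$ relevant pairs $(m,q')$, and a union bound keeps the total $O(c)$.

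I expect the only delicate point to be the bookkeeping near the two ends of the search interval: when $q\in\{1,\num\}$ the $m=0$ element may be nudged by the $O(\sqrt\eps)$ phase error just outside $[\L,\L+\len]$, in which case it is simply not counted, and one must check, using $1/\beta\gg\len$ once more, that this does not pull an element with $m=\pm1$ into the interval. No genuinely analytic step is involved: the whole argument rests on the separation $1/\beta\gg\len$ together with the pinning of $\Theta$ near $f_0$, but the constants in converting $\beta\mapsto m/\beta$ and in counting the relevant indices must be tracked so that they are honestly absorbed into $O(c)$.
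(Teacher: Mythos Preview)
Your ingredients are right, but the framing has a gap. The assertion ``$c\,\num\ll1$'' is not among the hypotheses: the claim allows arbitrary $\num\in\Z_+$, and when $c\,\num\gtrsim1$ the spacing $1/\beta$ can be comparable to (or smaller than) $\len$, so elements of $\Theta$ with $m\neq0$ may well lie inside $[\L,\L+\len]$. Hence your deterministic argument only covers parts $q'$ with $|q-q'|\lesssim 1/c$. This is precisely the paper's \emph{Case~1} ($1<|q-q'|<1/(4c)$), where one checks that part $q'$ falls strictly between the $s_0$ and $s_0{+}1$ elements of $\Theta$ via the inequality $(|q-q'|+1)\,\tfrac{\len}{\num}<1/\beta$.

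For $|q-q'|\ge 1/(4c)$ (the paper's \emph{Case~2}), the probabilistic argument is essential, not a cosmetic refinement. The paper applies the wrapping Lemma~\ref{lem:wrapping} to obtain
\[
\Pr_\beta\Big[\,2\pi\beta(q'-q)\tfrac{\len}{\num}\bmod 2\pi\in[-4\pi c,4\pi c]\,\Big]\;\le\;4c+\tfrac{16}{|q'-q|}\;\le\;O(c),
\]
the last inequality requiring $|q'-q|\ge1/(4c)$; note this bound is useless in Case~1 (the $16/|q'-q|$ term is $\Theta(1)$ there), which is why the deterministic argument is genuinely needed for nearby parts. Your density sketch is the same computation in different clothes, but with one imprecision: for a fixed distant $q'$ the number of relevant $m$ is $\Theta(c\,|q-q'|)$, not $O(1)$ in general, each contributing probability $O(1/|q-q'|)$; the total is still $O(c)$, and the wrapping lemma packages this sum without having to enumerate $m$.
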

\begin{proof}

Let $s_0$ be defined as
\begin{align*}
  s_0:= ~\arg\min_{x\in \mathbb{Z}}~ \arg(\frac{z(\alpha + \beta)}{z(\alpha)})  - 2 \pi f_0 \beta + 2\pi x.
\end{align*}

By Claim \ref{clm:angle_between_xgamma_and_xgammabeta_is_betaf}, we have that 
\begin{align}
\frac{1}{2\pi  \beta }(\arg(\frac{z(\alpha + \beta)}{z(\alpha)}) + 2\pi s_0) \in  \Big[\L + (q-2)\frac{\len}{\num} , \L + (q+1) \frac{\len}{\num} \Big) .    \label{eq:angle_between_xgamma_and_xgammabeta_is_betaf}
\end{align}

Then, we discuss two cases based on the range of $q'$.

\textbf{Case 1: $1<|q-q'|< 1/(4c) $.} 

For the ease of discussion, suppose $1<q'-q<  1/(4c) $. We have that
\begin{align*}
 \L + (q'-1)\frac{\len}{\num} \geq &~    \L + (q+1)\frac{\len}{\num}\\
 > &~ \frac{1}{2\pi  \beta }(\arg(\frac{z(\alpha + \beta)}{z(\alpha)}) + 2\pi s_0),
\end{align*}
where the first step follows from $q,q'\in \Z$, and the second step follows from Eq.~\eqref{eq:angle_between_xgamma_and_xgammabeta_is_betaf}.

Moreover, we also have that
\begin{align*}
 \L + q'\frac{\len}{\num} \leq &~ \L + (q+\frac{1}{4c}-1) \frac{\len}{\num}\\
 \leq &~ \frac{1}{2\pi  \beta }(\arg(\frac{z(\alpha + \beta)}{z(\alpha)}) + 2\pi s_0) + (\frac{1}{4c}+1) \frac{\len}{\num}\\
  < &~ \frac{1}{2\pi  \beta }(\arg(\frac{z(\alpha + \beta)}{z(\alpha)}) + 2\pi s_0) + \frac{1}{\beta}\\
 \leq &~ \frac{1}{2\pi  \beta }(\arg(\frac{z(\alpha + \beta)}{z(\alpha)}) + 2\pi (s_0 + 1)),
\end{align*}
where the second step follows from Eq.~\eqref{eq:angle_between_xgamma_and_xgammabeta_is_betaf}, the third step follows from $ (\frac{1}{4c}+1) \len /\num< 1/\beta $. 

Hence, we get that $\Big[\L + (q'-1)\frac{\len}{\num} , \L + q'\frac{\len}{\num} \Big]$ is contained in the following interval:
\begin{align*}
    \Big(\frac{1}{2\pi  \beta }(\arg(\frac{z(\alpha + \beta)}{z(\alpha)}) + 2\pi s_0), \frac{1}{2\pi  \beta }(\arg(\frac{z(\alpha + \beta)}{z(\alpha)}) + 2\pi (s_0+1))\Big).
\end{align*}
Since $s_0$ and $s_0+1$ are two consecutive integers, by the definition of $\Theta$, there is no element of $\Theta$ in this open interval. Hence, we know that in this case,
\begin{align*}
\Theta \cap  \Big[\L + (q'-1)\frac{\len}{\num} , \L + q'\frac{\len}{\num} \Big]  = \emptyset.    
\end{align*}

{\bf Case 2: $|q-q'|\geq 1/(4c)$.}

For the ease of discussion, suppose that $ q'-q \geq 1/(4c)$. We have that, 
\begin{align}
    c (q'-q) \geq \frac{1}{4}.\label{eq:angle_between_xgamma_and_xgammabeta_is_betaf_0_5} 
\end{align}

Moreover, we have that
\begin{align}
\arg(\frac{z(\alpha + \beta)}{z(\alpha)}) - 2\pi\beta\left(\L + (q-\frac{1}{2})\frac{\len}{\num} \right) {\pmod {2\pi } }  \in  \Big[-\frac{3\pi\beta\len}{\num} , \frac{3\pi\beta\len}{\num} \Big],    \label{eq:angle_between_xgamma_and_xgammabeta_is_betaf_1} 
\end{align}
which follows from Eq.~\eqref{eq:angle_between_xgamma_and_xgammabeta_is_betaf}. %

We also have that,
\begin{align}
    \beta \frac{\len}{\num} \leq c. \label{eq:angle_between_xgamma_and_xgammabeta_is_betaf_2}
\end{align}

Then, we have that, 
\begin{align}
&~\mathsf{Pr} \Big[2\pi\beta(\L + (q'-\frac{1}{2})\frac{\len}{\num}  ) -\arg(\frac{z(\alpha + \beta)}{z(\alpha)})  {\pmod {2\pi } } \in [   - \frac{\pi\beta\len}{ \num },   \frac{\pi\beta\len}{ \num } ] \Big] \notag \\
\leq&~ \mathsf{Pr} \Big[2\pi\beta (q'-q)\frac{\len}{\num} {\pmod {2\pi } } \in [   - \frac{4\pi\beta\len}{ \num },   \frac{4\pi\beta\len}{ \num } ] \Big] \notag \\
\leq&~ \mathsf{Pr} \Big[2\pi\beta (q'-q)\frac{\len}{\num} {\pmod {2\pi } } \in [   - 4\pi c,   + 4\pi c ] \Big] \notag \\
\leq&~ 4c + \frac{16}{q'-q} \notag \\
\leq&~ 100 c \label{eq:bound_prob_q_q_prime_not_close}
\end{align}
where the first step follows from Eq.~\eqref{eq:angle_between_xgamma_and_xgammabeta_is_betaf_1}, the second step follows from Eq.~\eqref{eq:angle_between_xgamma_and_xgammabeta_is_betaf_2}, the third step follows from Lemma \ref{lem:wrapping} with the following parameters: 
\begin{align*}
    \widetilde{T}=&~  {2\pi },\\
    \widetilde{\sigma}=&~ 2\pi\beta (q'-q)\frac{\len}{\num},\\
    \widetilde{\epsilon} = &~ 4\pi c,\\
    \widetilde{\delta}=&~0 , \\
    A=&~ \pi c (q'-q),
\end{align*}
the forth step follows from Eq.~\eqref{eq:angle_between_xgamma_and_xgammabeta_is_betaf_0_5}.

By Eq.~\eqref{eq:bound_prob_q_q_prime_not_close}, we have that
\begin{align*}
\mathsf{Pr} \Big[ \exists s_0\in\Z, \frac{1}{2\pi  \beta }(\arg(\frac{z(\alpha + \beta)}{z(\alpha)}) + 2\pi s_0) \in  \Big[\L + (q'-1)\frac{\len}{\num} , \L + q' \frac{\len}{\num} \Big]   \Big] \leq 100 c
\end{align*}

As a result, we know that in this case, with probability at least $1-O(c)$, 
\begin{align*}
\Theta \cap  \Big[\L + (q'-1)\frac{\len}{\num} , \L + q'\frac{\len}{\num} \Big]  = \emptyset.    
\end{align*}

\end{proof}

Then, we consider $R$ independent voters, i.e., $R$ significant samples $\alpha_1,\dots,\alpha_R$. The following claim shows that the true part and its left and right neighbors will get at least $R$ votes. Meanwhile, those parts far away from the true part will get at most $R/2$ votes with high probability.
\begin{claim}\label{clm:angle_between_xgamma_and_xgammabeta_is_betaf_sum}
For For $\len\in\R_+, \num\in\Z_+, q\in[1, \num]$, 
let $f_0 \in [\L + (q-1)\frac{\len}{\num} , \L + q\frac{\len}{\num} ]$. 
Let $\beta\sim \text{Uniform}( [\frac{c}{2}\cdot \frac{\num}{\len}, c\cdot \frac{\num}{\len}])$ with constant $c\in(0, 0.01)$. For any constant $\epsilon\in(0, 0.01\cdot c^2)$, 
Let $\alpha_1, \cdots, \alpha_R \in \R$ such that for any $i\in [R]$,
\begin{align*}
    |z(\alpha_i + \beta) - z(\alpha_i)e^{2 \pi \i f_0 \beta}|^2 \le \eps  |z(\alpha_i)|^2.
\end{align*}
For any $i\in [R]$, let 
\begin{align*}
    \Theta_i = \Big\{ \frac{1}{2\pi  \beta }(\arg(\frac{z(\alpha_i + \beta)}{z(\alpha_i)}) + 2\pi s)  ~\Big|~ s \in [ \beta \L -10 , \beta( \L + \len  ) + 10] \cap \Z \Big\}.
\end{align*}
 
 Then, it holds that:
 \begin{enumerate}
 \item 
 \begin{align*}
\sum_{i=1}^R \Big|\Theta_i\cap\Big[\L + (q-2)\frac{\len}{\num} , \L + (q+1)\frac{\len}{\num} \Big]\Big| \geq R   .
 \end{align*}
 
 \item For any $|q'-q|\geq 3$, with probability at least $ 1- O(c)^{R/6}$,
  \begin{align*}
\sum_{i=1}^R \Big|\Theta_i\cap\Big[\L + (q'-2)\frac{\len}{\num} , \L + (q'+1)\frac{\len}{\num} \Big]\Big| \leq \frac{R}{2} .
 \end{align*}
 \end{enumerate}
\end{claim}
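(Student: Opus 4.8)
\textbf{Part 1} is deterministic and follows by applying Claim~\ref{clm:angle_between_xgamma_and_xgammabeta_is_betaf} to each of the $R$ voters. For every $i\in[R]$, the triple $(\alpha_i,\beta,f_0)$ meets the hypotheses of that claim (the significant-sample inequality, $\beta$ in the prescribed range, $f_0$ in the $q$-th part), so $\bigl|\Theta_i\cap[\L+(q-2)\tfrac{\len}{\num},\,\L+(q+1)\tfrac{\len}{\num})\bigr|=1$. Since the closed interval of Part~1 contains this half-open one, each summand is at least $1$, and summing over $i$ gives the bound $\ge R$. No probabilistic argument is needed.

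\textbf{Part 2} is where the randomness of the perturbation enters. Fix $q'$ with $|q'-q|\ge 3$ and set $W_{q'}:=[\L+(q'-2)\tfrac{\len}{\num},\,\L+(q'+1)\tfrac{\len}{\num}]$, which is precisely the union of the three parts with indices $q'-1,q',q'+1$; because $|q'-q|\ge 3$, each of these indices lies at distance $>1$ from $q$. Put $X_i:=|\Theta_i\cap W_{q'}|$, so the quantity in Part~2 is $\sum_{i=1}^R X_i$. The plan is: (i) show $X_i\in\{0,1\}$ deterministically; (ii) show $\Pr[X_i=1]=O(c)$; (iii) use independence across voters plus a Chernoff bound.

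For (i): consecutive elements of $\Theta_i$ differ by exactly $1/\beta$ (they are $\tfrac{1}{2\pi\beta}(\arg(z(\alpha_i+\beta)/z(\alpha_i))+2\pi s)$ for consecutive integers $s$), and $1/\beta\ge \tfrac1c\cdot\tfrac{\len}{\num}>3\cdot\tfrac{\len}{\num}=|W_{q'}|$ since $c<\tfrac13$; an interval shorter than the spacing contains at most one grid point, hence $X_i\le 1$. For (ii): apply Claim~\ref{clm:angle_between_xgamma_and_xgammabeta_is_betaf_second_part} separately to each of the three parts $q'-1,q',q'+1$ (each a legitimate far-away part, being at index distance $>1$ from $q$) and union-bound, giving $\Pr_\beta[\Theta_i\cap W_{q'}\ne\emptyset]\le 3\cdot O(c)=O(c)$. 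For (iii): the $R$ voters come from $R$ independent invocations of \textsc{GenerateSignificantSamples} with independently drawn perturbations (cf.\ the inner loop over $r$ in Algorithm~\ref{alg:pre_compute}), so the $X_i$ are independent Bernoulli variables of mean $O(c)<\tfrac14$, and Lemma~\ref{lem:chernoff_bound} then yields $\Pr[\sum_{i=1}^R X_i\ge R/2]\le O(c)^{R/6}$.

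\textbf{Main obstacle.} The delicate point is item (iii): although the claim is phrased with a single $\beta$, the $O(c)^{R/6}$ amplification genuinely requires independent randomness across the $R$ voting rounds, which is exactly what \textsc{ArySearch}/\textsc{SamplingSignificantSample} supplies. Once that is made precise, the spacing argument in (i) (which only needs $c<1/3$) and the three-fold union bound in (ii) are routine, and the concluding Chernoff estimate is just bookkeeping of the constants in the exponent.
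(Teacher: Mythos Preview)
Your proposal is correct and follows the same structure as the paper: apply Claim~\ref{clm:angle_between_xgamma_and_xgammabeta_is_betaf} termwise for Part~1, and for Part~2 union-bound Claim~\ref{clm:angle_between_xgamma_and_xgammabeta_is_betaf_second_part} over the three sub-parts and then use a tail bound across the $R$ rounds.

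A couple of small differences worth noting. First, your spacing argument gives the sharper $X_i\le 1$; the paper only uses $1/\beta\ge \len/\num$ on single parts, obtaining $X_i\le 3$, which is exactly why $R/6$ (rather than $R/2$) appears in the exponent. Second, the paper does not invoke Lemma~\ref{lem:chernoff_bound} but uses the direct binomial bound $\binom{R}{R/6}\,O(c)^{R/6}\le (6e)^{R/6}O(c)^{R/6}$; the multiplicative Chernoff form as stated in Lemma~\ref{lem:chernoff_bound} gives $e^{-\Theta(R)}$ but not literally a base-$O(c)$ expression, so if you want the exact form $O(c)^{R/6}$ of the claim you should switch to the binomial argument (your $X_i\le 1$ in fact yields the stronger $O(c)^{R/2}$ this way). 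Finally, your explicit discussion of independence---that each round draws a fresh $\beta$ as in Algorithm~\ref{alg:pre_compute}---is more careful than the paper, which writes a single $\beta$ in the statement but silently relies on independent draws in the tail-bound step.
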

\begin{proof}

{\bf Part 1.}

By applying Claim \ref{clm:angle_between_xgamma_and_xgammabeta_is_betaf}, we have that,
\begin{align*}
    \Big|\Theta_i\cap\Big[\L + (q-2)\frac{\len}{\num} , \L + (q+1)\frac{\len}{\num} \Big]\Big| \geq 1,
\end{align*}
which implies that
\begin{align*}
    \sum_{i=1}^R \Big|\Theta_i\cap\Big[\L + (q-2)\frac{\len}{\num} , \L + (q+1)\frac{\len}{\num} \Big]\Big| \geq R.
\end{align*}

{\bf Part 2.}
By applying Claim \ref{clm:angle_between_xgamma_and_xgammabeta_is_betaf_second_part}, we have that, for any $ |q-q'|> 1$, with probability at most $ O(c)$,
\begin{align*}
    \Big|\Theta_i\cap\Big[\L + (q'-1)\frac{\len}{\num} , \L + q'\frac{\len}{\num} \Big]\Big| \geq  1.
\end{align*}

By the setting of our parameter $ \frac{1}{ \beta} \geq \frac{\len}{\num}$, thus
\begin{align*}
    \Big|\Theta_i\cap\Big[\L + (q'-1)\frac{\len}{\num} , \L + q'\frac{\len}{\num} \Big]\Big| =  1.
\end{align*}
Then,  for any $ |q-q'|\geq 3$, by a union bound over $q'-1$, $q'$, and $q'+1$, with probability at most $O(c)$,
\begin{align*}
   3\geq  \Big|\Theta_i\cap\Big[\L + (q'-2)\frac{\len}{\num} , \L + (q'+1)\frac{\len}{\num} \Big]\Big| \geq  1.
\end{align*}

Then, we have that
\begin{align*}
&~ \Pr\Big[\sum_{i=1}^R  \Big|\Theta_i\cap\Big[\L + (q'-2)\frac{\len}{\num} , \L + (q'+1)\frac{\len}{\num} \Big]\Big| \geq \frac{R}{2} \Big] \\
\leq &~ \binom{ R}{ R/6 } O( c)^{R/6} \\
\leq &~ (\frac{e R}{R/6})^{R/6} O( c)^{R/6} \\
\leq &~ O(c)^{R/6} 
\end{align*}
where the first step follows from there should be at least $0.5R/3=R/6$ different $i\in [R]$ satisfying $ |\Theta_i \cap[\L + (q'-2)\len /\num , \L + (q'+1)\len /\num ]| \geq  1$,%
the second step follows from $ \binom{n}{k}\leq (\frac{en}{k})^k$, the third step is straight forward.

The lemma is then proved.
\end{proof}

Finally, we consider probabilistic voters, that is, for each sample $\alpha_i$, with probability $1-\rho$, it is significant. The following claim shows the votes distribution in this case. 

\begin{claim}\label{clm:angle_between_xgamma_and_xgammabeta_is_betaf_sum_2}
For $\len\in\R_+, \num\in\Z_+, q\in[1, \num]$, 
let $f_0 \in [\L + (q-1)\frac{\len}{\num} , \L + q\frac{\len}{\num} ]$. 
Let $\beta\sim \text{Uniform}( [\frac{c}{2}\cdot \frac{\num}{\len}, c\cdot \frac{\num}{\len}])$ with $c=\Theta(1)\in(0, 0.01)$, $\epsilon=\Theta(1)\in(0, 0.01\cdot c^2)$, 
let $\alpha_1, \cdots, \alpha_R \in \R$ such that for any $i\in [R]$ with probability at least $1-\rho$,
\begin{align*}
    |z(\alpha_i + \beta) - z(\alpha_i)e^{2 \pi \i f_0 \beta}|^2 \le \eps  |z(\alpha_i)|^2.
\end{align*}
For any $i\in [R]$, let 
\begin{align*}
    \Theta_i = \Big\{ \frac{1}{2\pi  \beta }(\arg(\frac{z(\alpha_i + \beta)}{z(\alpha_i)}) + 2\pi s)  ~\Big|~ s \in [ \beta \L -10 , \beta( \L + \len  ) + 10] \cap \Z \Big\}.
\end{align*}
 
Then, it holds that
\begin{enumerate}
\item With probability at least $ 1- O(\rho)^{R/3} $, 
 \begin{align*}
\sum_{i=1}^R \Big|\Theta_i\cap\Big[\L + (q-2)\frac{\len}{\num} , \L + (q+1)\frac{\len}{\num} \Big]\Big| \geq \frac{2R}{3}.
 \end{align*}
 
\item  For any $|q'-q|\geq 3$, with probability at least $ 1- O(c + \rho)^{R/6}$,
  \begin{align*}
\sum_{i=1}^R \Big|\Theta_i\cap\Big[\L + (q'-2)\frac{\len}{\num} , \L + (q'+1)\frac{\len}{\num} \Big]\Big| \leq \frac{R}{2} .
 \end{align*}
\end{enumerate}
\end{claim}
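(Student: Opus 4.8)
The plan is to reduce everything to the deterministic-significance analysis of Claim~\ref{clm:angle_between_xgamma_and_xgammabeta_is_betaf_sum} by conditioning on which of the $R$ samples happen to be significant. For $i\in[R]$ write $B_i$ for the event that $\alpha_i$ is a significant sample, i.e.\ $|z(\alpha_i+\beta)-z(\alpha_i)e^{2\pi\i f_0\beta}|^2\le\eps|z(\alpha_i)|^2$; by hypothesis $\Pr[B_i]\ge 1-\rho$, and since the $\alpha_i$ are produced with independent randomness the $B_i$ are independent. On $B_i$, Claim~\ref{clm:angle_between_xgamma_and_xgammabeta_is_betaf} applies verbatim and gives $|\Theta_i\cap[\L+(q-2)\tfrac{\len}{\num},\ \L+(q+1)\tfrac{\len}{\num}]|=1$; on $\neg B_i$ we keep only the trivial bound that $|\Theta_i\cap[\L+(q'-2)\tfrac{\len}{\num},\ \L+(q'+1)\tfrac{\len}{\num}]|\le 3$ for any $q'$, which holds because the spacing $1/\beta\ge\len/\num$ lets $\Theta_i$ meet each single part in at most one point.

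For Part 1, the above gives $\sum_{i=1}^R|\Theta_i\cap[\L+(q-2)\tfrac{\len}{\num},\ \L+(q+1)\tfrac{\len}{\num}]|\ge|\{i:B_i\}|$, so it suffices to show $|\{i:B_i\}|\ge 2R/3$ except with probability $O(\rho)^{R/3}$. Equivalently, at most $R/3$ samples fail to be significant, and by a union bound over the failing set,
\[
\Pr\big[\,|\{i:\neg B_i\}|>R/3\,\big]\le\binom{R}{\lceil R/3\rceil}\rho^{R/3}=O(\rho)^{R/3},
\]
using $\binom{n}{k}\le(en/k)^k$ and that $\rho$ is a small constant; this is Part 1.

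For Part 2, fix $q'$ with $|q'-q|\ge 3$ and let $E_i$ be the event that either $\alpha_i$ is not significant, or $\alpha_i$ is significant and $\Theta_i$ meets the triple $[\L+(q'-2)\tfrac{\len}{\num},\ \L+(q'+1)\tfrac{\len}{\num}]$. Applying Claim~\ref{clm:angle_between_xgamma_and_xgammabeta_is_betaf_second_part} to each of the three parts $q'-1,q',q'+1$ (all at distance $>1$ from $q$) together with a union bound, and using $\Pr[\neg B_i]\le\rho$, we obtain $\Pr[E_i]\le 3\,O(c)+\rho=O(c+\rho)$, and the $E_i$ are independent. When $\neg E_i$ the $i$-th sample contributes $0$ votes to the $q'$-triple, so the vote count there is at most $3\,|\{i:E_i\}|$, and exceeding $R/2$ forces $|\{i:E_i\}|\ge R/6$. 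Union-bounding over subsets of size $R/6$, exactly as in the proof of Claim~\ref{clm:angle_between_xgamma_and_xgammabeta_is_betaf_sum} and using that $c+\rho$ is a small constant to absorb $\binom{R}{\lceil R/6\rceil}$,
\[
\Pr\Big[\textstyle\sum_{i=1}^R|\Theta_i\cap[\L+(q'-2)\tfrac{\len}{\num},\ \L+(q'+1)\tfrac{\len}{\num}]|\ge\tfrac{R}{2}\Big]\le O(c+\rho)^{R/6},
\]
which is Part 2.

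The substance here is not any one estimate but getting the \emph{bases} of the two exponentials right while juggling two genuinely different failure modes: a sample simply failing to be significant (probability $\rho$) versus a significant sample whose frequency estimate happens to land near a far part (probability $O(c)$ from Claim~\ref{clm:angle_between_xgamma_and_xgammabeta_is_betaf_second_part}). A naive union bound over the $R$ samples gives only a useless $O(c+\rho)$; the exponents $R/3$ and $R/6$ come from packaging the two modes into a single per-sample bad event, invoking independence of the $\alpha_i$'s (inherited from the fresh randomness used per draw in Algorithm~\ref{alg:pre_compute}), and---for Part 2---remembering that a non-significant sample can still inject up to three spurious votes into any triple, which is exactly why the threshold is $R/6$ rather than $R/2$.
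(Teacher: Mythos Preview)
Your proof is correct and follows essentially the same approach as the paper's: for Part~1 you invoke Claim~\ref{clm:angle_between_xgamma_and_xgammabeta_is_betaf} on the significant samples and bound the number of non-significant ones via a union bound over subsets of size $R/3$; for Part~2 you combine the $\rho$ non-significance probability with the $O(c)$ far-hit probability from Claim~\ref{clm:angle_between_xgamma_and_xgammabeta_is_betaf_second_part} into a single per-sample bad event $E_i$ of probability $O(c+\rho)$, and again union bound over subsets of size $R/6$. Your explicit definition of $B_i$ and $E_i$ and your remark that independence is inherited from the fresh per-round randomness in Algorithm~\ref{alg:pre_compute} make the argument slightly cleaner than the paper's presentation, but the substance is identical.
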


\begin{proof}

{\bf Part 1.}

By applying Claim \ref{clm:angle_between_xgamma_and_xgammabeta_is_betaf}, we have that with probability at most $\rho$,
\begin{align*}
    \Big|\Theta_i\cap\Big[\L + (q-2)\frac{\len}{\num} , \L + (q+1)\frac{\len}{\num} \Big]\Big| =0,
\end{align*}
then we have that,
\begin{align*}
    &~\Pr\Big[ \sum_{i=1}^R \Big|\Theta_i \cap [\L + (q-2)\frac{\len} {\num}, \L + (q+1)\frac{\len}{\num} ]\Big| \leq \frac{R}{3} \Big] \\
    \leq&~  \binom{ R}{ R/3 } O(\rho)^{R/3} \\
    \leq&~  O(\frac{e R}{R/3})^{R/3} O(\rho)^{R/3} \\
    \leq&~ O(\rho)^{R/3} 
\end{align*}
where the first step follows from $ |\Theta_i \cap[\L + (q-2)\len /\num , \L + (q+1)\len /\num ]| =0$ or $1$ by our parameter setting $1/\beta > 3\len /\num$ 
and there should be at least $R/3$ different $i\in [R]$ satisfying $ |\Theta_i \cap[\L + (q-2)\len /\num , \L + (q+1)\len /\num ]| =0$, 
the second step follows from $ \binom{n}{k}\leq (\frac{en}{k})^k$, the third step is straight forward.

{\bf Part 2.}

By applying Claim \ref{clm:angle_between_xgamma_and_xgammabeta_is_betaf_second_part}, we have that, for any $ |q-q'|> 1$, with probability at most $ O(c)+\rho$,
\begin{align*}
    \Big|\Theta_i \cap\Big[\L + (q'-1)\frac{\len}{\num} , \L + q'\frac{\len}{\num} \Big]\Big| = 1,
\end{align*}
where the probability follows from a union bound over the success of Claim \ref{clm:angle_between_xgamma_and_xgammabeta_is_betaf_second_part} and $\alpha_i$ being significant. %

Thus, for any $ |q-q'|\geq 3$, by a union bound, with probability at most $ 3 ((1-\rho)O(c)+\rho) = O(c+\rho)$,
\begin{align*}
   3\geq  \Big|\Theta_i \cap\Big[\L + (q'-2)\frac{\len}{\num} , \L + (q'+1)\frac{\len}{\num} \Big]\Big|\geq  1.
\end{align*}

Then, we have that
\begin{align*}
&~ \Pr\Big[ \sum_{i=1}^R \Big|\Theta_i\cap[\L + (q'-2)\len /\num , \L + (q'+1)\len /\num ]\Big| \geq \frac{R}{2} \Big] \\
\leq &~ \binom{ R}{ R/6 } O(c+\rho)^{R/6} \\
\leq &~ (\frac{e R}{R/6})^{R/6} O(c+\rho)^{R/6} \\
\leq &~ O(c+\rho)^{R/6} 
\end{align*}
where the first step follows from there should be at least $0.5R/3=R/6$ different $i\in [R]$ satisfying $ |\Theta_i \cap[\L + (q'-2)\len /\num , \L + (q'+1)\len /\num ]| \geq  1$, the second step follows from $ \binom{n}{k}\leq (\frac{en}{k})^k$, the third step is straight forward.

\end{proof}

\section{Signal Reconstruction}
\label{sec:sig_recontr}

In this section, we wrap up all technical tools developed in previous sections and present our main result: a Fourier interpolation algorithm with improved time complexity, sample complexity, and output sparsity.

This section consists of two parts. The first part is devoted to the signal estimation. We first provide some tools that are useful for signal estimation (see Section~\ref{sec:sig_recontr:preli}). Then, we formally define the heavy clusters and show their approximation property (see Section~\ref{sec:sig_recontr:heavy}). Next, we give a Fourier set query algorithm, which is a component in signal estimation (see Section~\ref{sec:sig_recontr:fourier}). We further show that it suffices to only reconstruct the signals in the bins satisfying the high SNR band condition (see Section~\ref{sec:sig_recontr:reduce}).  

The second part focuses on the Fourier interpolation algorithm. Combining the frequency estimation algorithm in Section~\ref{sec:freq_est} with the signal estimation method we just developed, we obtain a Fourier interpolation algorithm with a constant success probability (see Section~\ref{sec:sig_recontr:constant_prob}). Then, we introduce the min-of-median signal estimator used to boost the success probability (see Section~\ref{sec:sig_recontr:boost}). Finally, we prove our main theorem that gives a Fourier interpolation algorithm with high success probability (see Section~\ref{sec:sig_recontr:high_prob}).

\subsection{Preliminary}\label{sec:sig_recontr:preli}

We provide some technical tools in this section.

The following two lemma shows that Fourier-polynomial mixed signals and Fourier-sparse signals can approximate each other.  
\begin{lemma}[\cite{ckps16}]\label{lem:low_degree_approximates_concentrated_freq_ours}
For any $\Delta>0$, $\delta>0$, for any $n_1,\dots,n_k\in \Z_{\geq 0}$ with $\sum_{j\in [k]}n_j=k$, let
\begin{align*}
    x^*(t)=\sum_{j \in [k]} e^{2 \pi \i f_j t} \sum_{i=1}^{n_j} v_{j, i} e^{2\pi\i f'_{j, i} t}, 
\end{align*}
where $|f'_{j, i}| \le \Delta$ for each $j\in [k], i\in [n_j]$. There exist $k$ polynomials $P_j(t)$ for $j \in [k]$ of degree at most
\begin{align*}
    d=O(T \Delta + k^3 \log k + k \log (1/\delta) )
\end{align*} such that
\begin{align*}
    \Big\|\sum_{j\in [k]} e^{2 \pi \i f_j t} P_j(t) - x^*(t)\Big\|^2_T \le \delta \|x^*(t)\|^2_T.
\end{align*}
\end{lemma}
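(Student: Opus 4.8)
The plan is to reduce the statement to a single-cluster polynomial-approximation result and then reassemble the clusters. Write $g_j(t):=\sum_{i=1}^{n_j} v_{j,i}e^{2\pi\i f'_{j,i}t}$, so that $x^*(t)=\sum_{j\in[k]}e^{2\pi\i f_j t}g_j(t)$ and each $g_j$ is an $n_j$-Fourier-sparse signal whose frequencies all lie in $[-\Delta,\Delta]$. First I would isolate the per-cluster claim: for every such band-limited sparse signal $g$ of sparsity $m\le k$ there is a polynomial $P$ of degree at most $d_0=O(T\Delta+m^3\log m+m\log(1/\delta_0))$ with $\|g-P\|_T^2\le\delta_0\|g\|_T^2$ (this is the heart of the matter, and is essentially the corresponding lemma of~\cite{ckps16}). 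Since $n_j\le k$, such a $P_j$ for $g_j$ has degree at most $d=O(T\Delta+k^3\log k+k\log(1/\delta_0))$; combining via the triangle inequality $\|\sum_j e^{2\pi\i f_j t}(P_j-g_j)\|_T\le\sum_j\|P_j-g_j\|_T$ then gives the lemma, once $\delta_0$ is chosen as a function of $\delta$ and (crucially) of the relative energies $\|g_j\|_T/\|x^*\|_T$.

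For the single-cluster approximation, the $O(T\Delta)$ term comes from truncating the Taylor (equivalently Legendre/Jacobi) expansion of $g$ about the midpoint $T/2$: for a pure tone $e^{2\pi\i f't}$ with $|f'|\le\Delta$ the tail beyond degree $\ell$ is bounded by $\sum_{r>\ell}(\pi\Delta T)^r/r!$, which is $2^{-\Omega(\ell)}$ once $\ell\gtrsim T\Delta$. The remaining $m^3\log m+m\log(1/\delta_0)$ accounts for how the internal structure of $g$ — in particular near-collisions among the frequencies $f'_{j,i}$, which are exactly what makes the coefficients $v_{j,i}$ large relative to $\|g\|_T$ — interacts with this truncation. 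One handles this by a further clustering of the frequencies inside $[-\Delta,\Delta]$ and an inductive argument controlling the condition number of the confluent-Vandermonde-type system relating $(v_{j,i})$ to the Jacobi coefficients of $g$, so that a sub-cluster of $m'$ nearly-colliding tones is absorbed into a polynomial factor of degree $O(m')$ instead of costing a $\log(\text{coefficient size})$ penalty; the structural bounds for Fourier-sparse signals (of the type of Theorem~\ref{thm:energy_bound} and Theorem~\ref{thm:bound_k_sparse_FT_x_middle}) are what make this quantitative.

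The step I expect to be the main obstacle is exactly this last point. A naive execution — take $P_j$ to be the best degree-$d$ polynomial approximant of $g_j$ and sum the errors — forces $d$ to depend on $\log(\sum_{j,i}|v_{j,i}|/\|x^*\|_T)$, which can be unbounded (e.g.\ when two clusters nearly cancel, so that $\|x^*\|_T\ll\sum_j\|g_j\|_T$). The fix must exploit that whenever the coefficients blow up the corresponding local signal $g_j$ is itself close to a low-degree polynomial, so its true approximation error is far below the crude bound $\delta_0\|g_j\|_T$; turning this observation into a clean bound on $\mathrm{dist}(x^*,\mathrm{span}\{e^{2\pi\i f_j t}t^\ell\}_{j\le k,\ \ell\le d})$ relative to $\|x^*\|_T$ is the technical core. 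Since Lemma~\ref{lem:low_degree_approximates_concentrated_freq_ours} is quoted verbatim from~\cite{ckps16}, the most economical route in the present paper is to invoke that reference for the single-cluster statement and only carry out the (routine) reduction and error-aggregation sketched above.
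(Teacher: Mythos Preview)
The paper does not provide its own proof of this lemma: it is stated as a preliminary result with the citation \cite{ckps16} and is simply invoked later without argument. Your proposal correctly recognizes this at the end, so the ``proof'' you ultimately recommend (cite \cite{ckps16} and move on) is exactly what the paper does.

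Your preceding sketch of how the argument in \cite{ckps16} might go is extra commentary the paper does not supply, and it is broadly reasonable, but note that you yourself flag the real difficulty: the naive per-cluster triangle-inequality aggregation does not control $\sum_j\|g_j\|_T$ by $\|x^*\|_T$, so the error-aggregation step you call ``routine'' is not routine at all --- it is precisely the technical core you identify as the obstacle. If you actually had to reproduce the proof rather than cite it, you would need the full machinery of \cite{ckps16} (the Gram/energy bounds relating coefficient norms to $\|x^*\|_T$) rather than just the single-cluster lemma plus a triangle inequality.
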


\begin{lemma}[{\cite[Lemma 8.8]{ckps16}}]\label{lem:polynomial_to_FT}
For any degree-$d$ polynomial $Q(t) = \overset{d}{\underset{j=0}{\sum}} c_j t^j$, any $T>0$ and any $\epsilon>0$, there always exist $\gamma>0$ and \begin{align*}
    x^*(t)=\sum_{j=1}^{d+1} \alpha_j e^{2\pi \i (\gamma j) t}
\end{align*} %
such that
\begin{equation*}
|x^*(t) - Q(t)| \le \epsilon~~~\forall t \in [0,T]. %
\end{equation*}
\end{lemma}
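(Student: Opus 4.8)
The plan is to realize $Q$ as (approximately) the restriction to $[0,T]$ of a linear combination of the exponentials $e^{2\pi\i(\gamma j)t}$, $j=1,\dots,d+1$, for a suitably small frequency spacing $\gamma>0$, by matching Taylor coefficients at $t=0$. Expanding each exponential as a power series,
\begin{align*}
e^{2\pi\i(\gamma j)t}=\sum_{m=0}^{\infty}\frac{(2\pi\i\gamma j)^m}{m!}\,t^m,
\end{align*}
one sees that $\sum_{j=1}^{d+1}\alpha_j e^{2\pi\i(\gamma j)t}$ has $t^m$-coefficient equal to $\tfrac{(2\pi\i\gamma)^m}{m!}\sum_{j=1}^{d+1}\alpha_j j^m$. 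First I would solve, for $m=0,1,\dots,d$, the linear system
\begin{align*}
\sum_{j=1}^{d+1}\alpha_j\,j^m=\frac{c_m\,m!}{(2\pi\i\gamma)^m},
\end{align*}
whose coefficient matrix $(j^m)_{0\le m\le d,\,1\le j\le d+1}$ is a transposed Vandermonde matrix with distinct nodes $1,\dots,d+1$, hence invertible. This determines $\alpha=\alpha(\gamma)\in\C^{d+1}$ uniquely, and since the right-hand side scales like $\gamma^{-m}$ with $m\le d$, one gets a bound $\max_j|\alpha_j|\le C_{d,Q}\,\gamma^{-d}$ for $\gamma\le 1$, where $C_{d,Q}$ depends only on $d$ and $c_0,\dots,c_d$ (through the fixed inverse Vandermonde matrix).

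Set $x^*(t):=\sum_{j=1}^{d+1}\alpha_j e^{2\pi\i(\gamma j)t}$; this has the required form. By construction the power series of $x^*$ and of $Q$ agree up to order $d$, so
\begin{align*}
x^*(t)-Q(t)=\sum_{m=d+1}^{\infty}\frac{(2\pi\i\gamma)^m t^m}{m!}\sum_{j=1}^{d+1}\alpha_j\,j^m.
\end{align*}
Next I would bound this tail uniformly for $t\in[0,T]$: using $|j|\le d+1$, $|t|\le T$, and the bound on $\alpha$,
\begin{align*}
|x^*(t)-Q(t)|\le (d+1)\,C_{d,Q}\,\gamma^{-d}\sum_{m=d+1}^{\infty}\frac{\big(2\pi\gamma T(d+1)\big)^m}{m!}.
\end{align*}
For $\gamma$ small enough that $a:=2\pi\gamma T(d+1)\le 1$ one has $\sum_{m\ge d+1}a^m/m!\le 2\,a^{d+1}/(d+1)!$, so the right-hand side is at most $2(d+1)\,C_{d,Q}\,\frac{(2\pi T(d+1))^{d+1}}{(d+1)!}\cdot\gamma$, i.e.\ it is $O(\gamma)$ with a constant depending only on $d$, $T$, and the coefficients of $Q$. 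Finally I would choose $\gamma>0$ small enough (also ensuring $\gamma\le 1$ and $a\le 1$) that this last quantity is $\le\epsilon$, which completes the proof.

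The only slightly delicate point is the quantitative estimate $\max_j|\alpha_j|\lesssim_{d,Q}\gamma^{-d}$ together with the observation that the surviving tail carries a factor $\gamma^{m}$ with $m\ge d+1$, so that the net error is $O(\gamma)$ rather than $O(1)$; everything else (invertibility of the Vandermonde system, convergence of the exponential tail) is routine. No appeal to the earlier lemmas is needed, though Lemma~\ref{lem:low_degree_approximates_concentrated_freq_ours} is the companion statement in the other direction.
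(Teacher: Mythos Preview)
Your argument is correct. The paper itself does not prove this lemma; it simply quotes it from \cite[Lemma~8.8]{ckps16} without supplying a proof, so there is nothing in the present paper to compare your approach against. The Taylor-matching via a Vandermonde system, together with the observation that the $\gamma^{-d}$ growth of the coefficients is beaten by the $\gamma^{d+1}$ decay of the tail, is exactly the standard way to establish this kind of statement, and your quantitative bookkeeping is sound (the constant $2$ in the tail bound could be replaced by $e$ to cover all $a\le 1$, but this is cosmetic).
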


\begin{algorithm}[ht]
\caption{Multipoint evaluation of a polynomial}\label{alg:multipoint_evaluation}
\begin{algorithmic}[1]
\Procedure{\textsc{PolynomialEvaluation}}{$P, t$}  \Comment{Fact \ref{fac:multipoint_evaluation_of_polynomial}}
	\State \Return $(P(t_1), P(t_2), \cdots, P(t_d))$ \Comment{$t \in \C^d$}
\EndProcedure
\Procedure{\textsc{MixedPolynomialEvaluation}}{$\sum_{j=1}^k P_j(t)\exp(2\pi\i f_j t), t$} 
\For{$ j\in [k] $} \label{ln:for_mpe_j_k}
    \State $v_j\leftarrow (P_j(t_1), P_j(t_2), \cdots, P_j(t_d))$ \Comment{$t \in \C^d$} \label{ln:mpe_v_P}
\EndFor
\State \Return $(\sum_{j=1}^k v_{j, 1}\exp(2\pi\i f_j t_1), \sum_{j=1}^k v_{j, 2}\exp(2\pi\i f_j t_2), \cdots, \sum_{j=1}^k v_{j, 3}\exp(2\pi\i f_j t_3))$ 
\EndProcedure
\end{algorithmic}
\end{algorithm}

The following fact shows an efficient method  multi-point evaluation of a polynomial. 
\begin{fact}[{\cite[Chapter 10]{gg99}}]
\label{fac:multipoint_evaluation_of_polynomial}
Given a degree-$d$ polynomial $P(t)$, and a set of $d$ locations $\{t_1,t_2,\cdots, t_d\}$. There exists an algorithm that takes $O(d \log^2 d \log\log d)$ time to output the evaluations $\{ P(t_1), P(t_2), \cdots, P(t_d)\}$. %
\end{fact}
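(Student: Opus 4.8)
The plan is to use the classical subproduct-tree algorithm together with fast modular reduction, exactly as in \cite[Chapter~10]{gg99}. The starting point is that two polynomials of degree at most $d$ can be multiplied in $M(d)=O(d\log d\log\log d)$ time by FFT-based multiplication (over $\C$ this is in fact $O(d\log d)$; the extra $\log\log d$ only appears if one insists on the arbitrary-ring / bit-complexity version). The whole evaluation procedure will then cost $O(M(d)\log d)$, which is the claimed $O(d\log^2 d\log\log d)$ bound.

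\textbf{Step 1: build the subproduct tree.} Assume without loss of generality that $d$ is a power of two (pad the point set with dummy nodes otherwise). Construct a balanced binary tree whose $d$ leaves carry the linear polynomials $x-t_i$ and whose every internal node carries the product of the polynomials at its two children. A node at height $h$ carries a polynomial of degree $2^h$, and there are $d/2^h$ of them, so the multiplications performed at level $h$ cost $O((d/2^h)\cdot M(2^h))=O(M(d))$, using super-additivity of $M$. Summing over the $\log d$ levels gives $O(M(d)\log d)$ to build the entire tree. \textbf{Step 2: fast division.} The one genuinely non-routine ingredient is that, for $A$ of degree $n$ and monic $B$ of degree $m\le n$, the remainder $A\bmod B$ can be computed in $O(M(n))$ time: reverse coefficients, invert $\mathrm{rev}(B)$ modulo $x^{\,n-m+1}$ by Newton iteration (each doubling step is a single multiplication, for $O(M(n))$ total), recover the quotient, then recover the remainder.

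\textbf{Step 3: the going-down phase.} Use the identity $P(t_i)=P\bmod (x-t_i)$. Descend the tree from the root: at a node carrying a polynomial $M_v$ of degree $2^h$ we will already hold $P\bmod M_v$ (a polynomial of degree $<2^h$), and we pass it down by reducing it modulo each child's polynomial (degree $2^{h-1}$) via Step~2, at cost $O(M(2^h))$ per node. Level $h$ again costs $O(M(d))$ in aggregate, so the descent costs $O(M(d)\log d)$ in total, and at the leaves we read off $P\bmod(x-t_i)=P(t_i)$.

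\textbf{Main obstacle.} The delicate points are the $O(M(n))$-time division of Step~2 (the Newton-iteration inverse of the reversed divisor) and the bookkeeping that keeps each level at cost $O(M(d))$ rather than $O(M(d)\log d)$; the latter relies only on the standard regularity assumptions $M(a)+M(b)\le M(a+b)$ and $M(d)=\Omega(d)$. Everything else is just unrolling a two-pass recursion. For the use in this paper, all the $t_i$ and all coefficients of $P$ are complex numbers, so arithmetic is over $\C$ and FFT-based multiplication applies directly; the $\log\log d$ factor is retained only to state the bound in the form invoked in Algorithm~\ref{alg:multipoint_evaluation}.
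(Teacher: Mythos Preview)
Your proof is correct and follows exactly the classical subproduct-tree algorithm from \cite[Chapter~10]{gg99}. Note, however, that the paper does not actually prove this statement: it is stated as a \emph{Fact} with a citation to \cite{gg99} and no proof is given, so there is nothing in the paper to compare against beyond the reference itself, which your sketch faithfully reproduces.
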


The following lemma shows the time complexity of evaluating a mixed polynomial.

\begin{lemma}[Time complexity of Algorithm \ref{alg:multipoint_evaluation}]\label{lem:time_mpe}
Procedure \textsc{MixedPolynomialEvaluation} in Algorithm \ref{alg:multipoint_evaluation} runs \begin{align*}
    O\Big(\sum_{j=1}^k {\max}\{d, \mathrm{deg}(P_j)\} \log^3({\max}\{d, \mathrm{deg}(P_j)\})\Big)
\end{align*} time.
\end{lemma}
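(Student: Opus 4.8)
The goal is to bound the running time of Procedure \textsc{MixedPolynomialEvaluation}, which evaluates $\sum_{j=1}^k P_j(t)\exp(2\pi\i f_j t)$ at the $d$ points $t_1,\dots,t_d$. The plan is to charge the cost to the two nested loops in Algorithm~\ref{alg:multipoint_evaluation}: the loop over $j\in[k]$ in line~\ref{ln:for_mpe_j_k} together with the multipoint-evaluation call in line~\ref{ln:mpe_v_P}, and the final combination step that forms the $d$ weighted sums $\sum_{j=1}^k v_{j,\ell}\exp(2\pi\i f_j t_\ell)$.

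First I would handle line~\ref{ln:mpe_v_P}. For each fixed $j$, we must evaluate the polynomial $P_j$ of degree $\mathrm{deg}(P_j)$ at the $d$ points $t_1,\dots,t_d$. Fact~\ref{fac:multipoint_evaluation_of_polynomial} gives an $O(D\log^2 D\log\log D)$-time algorithm for multipoint evaluation of a degree-$D$ polynomial at $D$ locations; when the number of evaluation points $d$ and the degree $\mathrm{deg}(P_j)$ differ, one pads to $D=\max\{d,\mathrm{deg}(P_j)\}$ (splitting the point set into blocks of size $\mathrm{deg}(P_j)$ if $d>\mathrm{deg}(P_j)$, or zero-padding the polynomial if $\mathrm{deg}(P_j)>d$), so the cost for index $j$ is $O(\max\{d,\mathrm{deg}(P_j)\}\log^2(\max\{d,\mathrm{deg}(P_j)\})\log\log(\max\{d,\mathrm{deg}(P_j)\}))$, which is absorbed into $O(\max\{d,\mathrm{deg}(P_j)\}\log^3(\max\{d,\mathrm{deg}(P_j)\}))$. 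Summing over $j\in[k]$ gives the claimed bound. Next I would observe that the final combination step costs only $O(kd)$ arithmetic operations (for each of the $d$ output coordinates, a sum of $k$ terms, each a product of a precomputed $v_{j,\ell}$ with $\exp(2\pi\i f_j t_\ell)$), and this is dominated by $\sum_{j=1}^k \max\{d,\mathrm{deg}(P_j)\}\log^3(\max\{d,\mathrm{deg}(P_j)\})\ge \sum_{j=1}^k d = kd$, so it does not change the asymptotics. The remaining bookkeeping (line~\ref{ln:for_mpe_j_k} loop overhead, allocation of the vectors $v_j$) is $O(kd)$ and likewise absorbed.

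I expect the main subtlety — not a deep obstacle, but the one point needing care — to be the mismatch between the number of evaluation points and the polynomial degrees, since Fact~\ref{fac:multipoint_evaluation_of_polynomial} is stated only for the balanced case $d=\mathrm{deg}(P)$; the fix is the padding/blocking argument above, and one should note it explicitly so that the $\max\{d,\mathrm{deg}(P_j)\}$ really is the right quantity. Beyond that the proof is a direct accounting, so I would keep it short: state the per-$j$ cost from Fact~\ref{fac:multipoint_evaluation_of_polynomial}, sum over $j$, check that the combination step is dominated, and conclude.
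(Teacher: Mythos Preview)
Your proposal is correct and follows essentially the same approach as the paper: charge the cost of line~\ref{ln:mpe_v_P} via Fact~\ref{fac:multipoint_evaluation_of_polynomial} with $d_j=\max\{d,\mathrm{deg}(P_j)\}$ and sum over $j\in[k]$. You are in fact more careful than the paper's own proof, which omits both the padding/blocking justification for the mismatch between $d$ and $\mathrm{deg}(P_j)$ and the observation that the final $O(kd)$ combination step is dominated.
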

\begin{proof}
Procedure \textsc{MixedPolynomialEvaluation} in Algorithm \ref{alg:multipoint_evaluation} consists of the following steps:
\begin{itemize}
\item In line \ref{ln:for_mpe_j_k}, the for loop repeats $ k $ times. 
\item In line \ref{ln:mpe_v_P},  multipoint evaluation of a polynomial takes $d_j \log^c(d_j)$ times by Fact \ref{fac:multipoint_evaluation_of_polynomial}, where $ d_j = {\max}\{d, \mathrm{deg}(P_j)\}$. 
\end{itemize}

Hence, the total time complexity is 
\begin{align*}
\sum_{j=1}^k O(d_j \log^3(d_j)) = O\Big(\sum_{j=1}^k {\max}\{d, \mathrm{deg}(P_j)\} \log^3({\max}\{d, \mathrm{deg}(P_j)\})\Big).
\end{align*}

\end{proof}

\subsection{Heavy cluster}\label{sec:sig_recontr:heavy}

In this section, we formally define the heavy clusters and show that using ``heavy frequencies'' only yields a good approximation of the ground-truth signal.
\begin{definition}[Heavy cluster]\label{def:heavy_clusters}
Let $x^*(t)= \overset{k}{\underset{j=1}{\sum} } v_j e^{2 \pi \i f_j t}$ and $\N>0$. Let the filter $H$ be defined as in Lemma \ref{lem:property_of_filter_H}. 
Let $\Delta_h = |\supp(\wh{H})|$. We say a frequency $f^*$ belongs to an ${\cal N}$-\emph{heavy cluster} if and only if
\begin{align*}
    \int_{f^*-\Delta_h}^{f^*+\Delta_h} |\wh{H \cdot x^*}(f)|^2 \mathrm{d} f \ge T \cdot \N^2/k.
\end{align*} 

\end{definition}

\begin{claim}\label{cla:guarantee_removing_x**_x*_}
Given $x^*(t)= \overset{k}{ \underset{j=1}{\sum} } v_j e^{2 \pi \i f_j t}$ and any $\N>0$. For the set of heavy frequencies:
\begin{equation*}
{S^*}=\left\{j \in [k]\bigg{|}\int_{f_j-\Delta_h}^{f_j+\Delta_h} |\wh{H \cdot x^*}(f)|^2 \mathrm{d} f \ge T \cdot \N^2/k  \right\},
\end{equation*}
and the signal $x_{S^*}(t)= \underset{j\in {S^*}}{\sum} v_j e^{2 \pi \i f_j t}$, it holds that
\begin{align*}
    \|x_{S^*} - x^* \|_T^2 \lesssim \N^2.
\end{align*}
\end{claim}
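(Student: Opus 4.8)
I would move to the frequency domain and exploit that every frequency of the discarded part $y:=x^*-x_{S^*}=\sum_{j\notin S^*}v_je^{2\pi\i f_jt}$ lies in a ``light'' window. Since $y$ is $k$-Fourier-sparse, Properties~\RN{4} and \RN{5} of Lemma~\ref{lem:property_of_filter_H} (after rescaling $[0,T]$ to $[-1,1]$), combined with Parseval's theorem, give
\begin{align*}
    T\|y\|_T^2\eqsim\int_{-\infty}^{\infty}|H(t)\,y(t)|^2\,\d t=\int_{-\infty}^{\infty}\big|\widehat{H\cdot y}(f)\big|^2\,\d f .
\end{align*}
Since $\widehat{H\cdot y}(f)=\sum_{j\notin S^*}v_j\widehat H(f-f_j)$, this spectrum is supported on $W:=\bigcup_{j\notin S^*}\big(f_j+\supp(\widehat H)\big)$, a union of at most $k$ intervals of length $\Delta_h=|\supp(\widehat H)|$. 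So it suffices to prove $\int_W\big|\widehat{H\cdot y}(f)\big|^2\,\d f\lesssim T\N^2$.

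The main step is a cluster-by-cluster estimate. Write $W=\bigsqcup_\ell W_\ell$ for the decomposition of $W$ into connected components; each $W_\ell$ is the union of the windows $f_j+\supp(\widehat H)$ over some block $B_\ell\subseteq\{j:j\notin S^*\}$, the blocks $B_\ell$ partition $\{j:j\notin S^*\}$, and since windows lying in different components are disjoint, $\widehat{H\cdot y}$ restricted to $W_\ell$ equals $\widehat{H\cdot y_\ell}$ with $y_\ell:=\sum_{j\in B_\ell}v_je^{2\pi\i f_jt}$; set $b_\ell:=|B_\ell|$, so $\sum_\ell b_\ell\le k$. On $W_\ell$ one has $\widehat{H\cdot y_\ell}=\widehat{H\cdot x^*}-\widehat{H\cdot h_\ell}$, where $h_\ell:=\sum_{i\in S^*:\,(f_i+\supp(\widehat H))\cap W_\ell\ne\emptyset}v_ie^{2\pi\i f_it}$ collects exactly the frequencies outside $B_\ell$ that can contribute on $W_\ell$ — and, because $W_\ell$ is a \emph{full} connected component of $W$, every such frequency must be heavy. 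The first term is benign: by sub-additivity over the windows $\{f_j+\supp(\widehat H)\}_{j\in B_\ell}$ and the non-heaviness of each $j\in B_\ell$,
\begin{align*}
    \int_{W_\ell}\big|\widehat{H\cdot x^*}(f)\big|^2\,\d f\le\sum_{j\in B_\ell}\int_{f_j-\Delta_h}^{f_j+\Delta_h}\big|\widehat{H\cdot x^*}(f)\big|^2\,\d f<b_\ell\cdot\frac{T\N^2}{k},
\end{align*}
so these terms already sum to $O(T\N^2)$. It then remains to control the leakage term $\int_{W_\ell}|\widehat{H\cdot h_\ell}(f)|^2\,\d f$ — the energy that the neighbouring heavy frequencies deposit on the light cluster — by $O(b_\ell T\N^2/k)$ as well; combining the two estimates yields $\int_{W_\ell}|\widehat{H\cdot y_\ell}|^2\lesssim b_\ell T\N^2/k$, and summing over $\ell$ gives $\int_W|\widehat{H\cdot y}|^2\lesssim T\N^2$, hence $\|x_{S^*}-x^*\|_T^2\lesssim\N^2$.

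I expect the leakage term to be the main obstacle. A naive triangle-inequality bound is useless, since the coefficients of the offending heavy frequencies can be arbitrarily large; controlling it honestly requires the quantitative fact — guaranteed by taking the ``heavy'' test-window to be a sufficiently large constant multiple of $|\supp(\widehat H)|$, as in \cite{ckps16} (equivalently, via a finer hybrid argument) — that any heavy frequency whose window meets $W_\ell$ has that window contained in $[f_j-\Delta_h,f_j+\Delta_h]$ for some $j\in B_\ell$, so that its leakage is again dominated up to a constant by $\sum_{j\in B_\ell}\int_{f_j-\Delta_h}^{f_j+\Delta_h}|\widehat{H\cdot x^*}|^2$. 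Everything else reduces to Parseval, sub-additivity, and the energy-preservation properties of $H$ already established in Lemma~\ref{lem:property_of_filter_H}.
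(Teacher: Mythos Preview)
Your route is the paper's route: set $y=x_{\overline{S^*}}$, use Property~\RN{5} of Lemma~\ref{lem:property_of_filter_H} to pass from $T\|y\|_T^2$ to $\int|\widehat{H\cdot y}|^2$, cover the support by the windows $\{[f_j-\Delta_h,f_j+\Delta_h]:j\notin S^*\}$, and bound each window by $T\N^2/k$. The paper's entire argument is the chain
\[
T\|x_{\overline{S^*}}\|_T^2\;\lesssim\;\int_{\R}\bigl|\widehat{x_{\overline{S^*}}}*\widehat H\bigr|^2
\;\le\;\sum_{j\notin S^*}\int_{f_j-\Delta_h}^{f_j+\Delta_h}\bigl|\widehat{x_{\overline{S^*}}}*\widehat H\bigr|^2
\;\le\;\sum_{j\notin S^*}\frac{T\N^2}{k}\;\le\;T\N^2,
\]
with the penultimate inequality justified only by ``the definition of heavy frequency.'' There is no connected-component decomposition, no $h_\ell$, and no discussion of leakage.

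So the step you single out as the obstacle is precisely the step the paper does \emph{not} work out: the non-heaviness hypothesis bounds $\int_{f_j\pm\Delta_h}|\widehat{H\cdot x^*}|^2$, whereas the integrand in the display is $|\widehat{H\cdot x_{\overline{S^*}}}|^2$, and the two differ whenever a heavy frequency $f_i$ has $(f_i+\supp\widehat H)$ meeting $[f_j-\Delta_h,f_j+\Delta_h]$. Your component decomposition and the splitting $\widehat{H\cdot y_\ell}=\widehat{H\cdot x^*}-\widehat{H\cdot h_\ell}$ is the honest way to expose this. Be aware, though, that your proposed control of $\int_{W_\ell}|\widehat{H\cdot h_\ell}|^2$ is still circular as written: bounding it by $\sum_{j\in B_\ell}\int_{f_j\pm\Delta_h}|\widehat{H\cdot x^*}|^2$ would require relating $\widehat{H\cdot h_\ell}$ back to $\widehat{H\cdot x^*}$, which reintroduces $\widehat{H\cdot y_\ell}$. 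The clean fix (which you gesture at) is the one in \cite{ckps16}: declare heaviness at the level of maximal \emph{clusters} of overlapping $\widehat H$-supports rather than per frequency, so that $h_\ell\equiv0$ by construction; with the per-frequency definition used in this statement, both your argument and the paper's leave the same gap unfilled.
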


\begin{proof}
Let $x_{\overline{S^*}}(t) = \underset{j\in [k] \backslash S^*}{\sum} v_j e^{2 \pi \i f_j t}$. Then $\|x^*-x_{S^*}\|_T^2=\|x_{\overline{S^*}}\|^2_T$.

Then, we have that
\begin{align*}
 T\|x_{\overline{S^*}}(t)\|_T^2 =&~ \int_{0}^{T} |x_{\overline{S^*}} (t)|^2 \d t \notag\\  
 \lesssim &~ \int_{0}^{T} |x_{\overline{S^*}}(t)\cdot  H(t)|^2 \d t \notag\\  
 \leq &~ \int_{-\infty}^{\infty} |x_{\overline{S^*}}(t)\cdot H(t) |^2 \d t \notag\\  
 = &~ \int_{-\infty}^{\infty} |\wh{x}_{\overline{S^*}}(f) * \wh{H}(f) |^2 \d f \notag\\
 \leq &~ \sum_{j\in[k]\backslash S^*} \int_{f_j-\Delta_h}^{f_j+\Delta_h} |\wh{x}_{\overline{S^*}}(f) * \wh{H}(f) |^2 \d f \notag\\
\leq &~ \sum_{j\in[k]\backslash S^*}  T\N^2/k \notag\\
\leq &~  T\N^2 ,
\end{align*}
where the first step follows from the definition of the norm, the second step follows from Lemma \ref{lem:property_of_filter_H} Property \RN{5}, the third step is straight forward, the forth step follows from Parseval's theorem, the fifth step follows from the definition of $x_{\overline{S^*}}(t) $, the sixth step follows from the definition of heavy frequency, the seventh step is straightforward.

\end{proof}

\subsection{Fourier set query}\label{sec:sig_recontr:fourier}

In this section, we present a Fourier set query algorithm such that for a Fourier-polynomial mixed signal, given all of its frequencies, the algorithm can reconstruct the signal very efficiently. 

\begin{lemma}

\label{lem:magnitude_recovery_for_CKPS}
For $j\in [k]$, given a $d_j$-degree polynomial $P_j(t)$ and a frequency $f_j$. Let $x_S(t) = \sum_{j=1}^k P_j(t) \exp({2\pi\i  f_j  t  })$. Given observations of the form $x(t):=x_S(t) + g(t)$ for arbitrary noise $g(t)$ in time duration $t\in [0, T]$. Let $D:=\sum_{j=1}^k d_j$.

Then, there is an algorithm (Procedure \textsc{SignalEstimation} in Algorithm \ref{algo:sig_est_1d_accuracy}) such that
\begin{itemize}
    \item takes $O(D\log(D))$ samples from $x(t)$, %
    \item runs $O(D^\omega\log(D))$ time, 
    \item outputs $y(t) = \sum_{j=1}^k P'_j(t) \exp({2\pi\i  f_j  t  })$ with $d$-degree polynomial $P'_j(t)$, such that with probability at least $0.99$, we have
    \begin{align*}
        \|y - x_S\|_T^2  \lesssim \|g\|_T^2 .
    \end{align*}
\end{itemize}  
\end{lemma}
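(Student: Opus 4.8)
The plan is to reduce the set-query problem to a weighted linear-regression problem over a well-chosen basis, solve it by sampling, and bound the error via a norm-preservation argument.  First I would convert the Fourier-polynomial mixed signal into a genuine Fourier-sparse signal: by Lemma~\ref{lem:polynomial_to_FT}, each $P_j(t)\exp(2\pi\i f_j t)$ can be approximated on $[0,T]$ to within arbitrarily small pointwise error by a signal of the form $\sum_{i=1}^{d_j+1}\alpha_{j,i}\exp(2\pi\i(f_j+\gamma_j i)t)$, so $x_S(t)$ lies (up to negligible error, which we fold into $g$) in the linear span of a known family of $D+k=O(D)$ pure tones $\{\exp(2\pi\i \phi_\ell t)\}_{\ell\in[m]}$ with $m=O(D)$.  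Equivalently, $x_S$ is the image of an unknown coefficient vector $\alpha\in\C^m$ under the ``design'' operator $(\mathcal A\alpha)(t)=\sum_\ell \alpha_\ell e^{2\pi\i \phi_\ell t}$, and we want to recover $\mathcal A\alpha$ from noisy samples of $x=\mathcal A\alpha+g$ in $T$-norm.

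Next I would set up the sampling.  Let $D_{\mathrm{sq}}$ be the distribution of Eq.~\eqref{eq:def_dist_D} (or its analogue over $[0,T]$), draw $S=\{t_1,\dots,t_s\}$ i.i.d.\ with $s=O(D\log D)$, and assign weights $w_i=1/(2Ts D_{\mathrm{sq}}(t_i))$.  The crucial ingredient is the norm-preservation statement of Lemma~\ref{lem:link_weight_sampling_size__condition_num} applied to the family $\mathcal F_{O(D),F}$ of $O(D)$-Fourier-sparse signals: using the energy bound of Theorem~\ref{thm:energy_bound} (the $k^2$-type bound, here $D^2$) together with the location-dependent bound of Theorem~\ref{thm:bound_k_sparse_FT_x_middle}, exactly as in the proof of Lemma~\ref{lem:x_energy_preserving_000}, one gets that with probability $\ge 0.999$, simultaneously for \emph{every} $h$ in this $O(D)$-dimensional space,
\begin{align*}
\tfrac12\|h\|_T^2 \le \|h\|_{S,w}^2 \le \tfrac32\|h\|_T^2.
\end{align*}
(The uniformity over the whole subspace costs only an extra $\log$ factor via a net argument or via a matrix-Chernoff bound on the weighted Gram matrix; $s=O(D\log D)$ suffices.)  Then the algorithm outputs $y=\mathcal A\widehat\alpha$ where $\widehat\alpha=\arg\min_{\alpha}\|x-\mathcal A\alpha\|_{S,w}^2$, a weighted least-squares problem: forming the $s\times m$ weighted design matrix and solving the normal equations takes $O(m^\omega)=O(D^\omega)$ time (plus $O(s\,m^{\omega-1})$ to assemble), and reading the samples of $x$ and evaluating the $e^{2\pi\i\phi_\ell t_i}$ takes $O(D\log D)$-ish time, matching the claimed bounds; finally I'd re-expand $\widehat\alpha$ into the $P_j'$ form by grouping the tones clustered around each $f_j$ and using Lemma~\ref{lem:low_degree_approximates_concentrated_freq_ours} / Lemma~\ref{lem:polynomial_to_FT} in the reverse direction, which controls $\mathrm{deg}(P_j')$.

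For the error bound: since $y-x_S$ lies in the $O(D)$-dimensional tone space, the lower inequality gives $\|y-x_S\|_T^2\le 2\|y-x_S\|_{S,w}^2\le 4\|y-x\|_{S,w}^2+4\|x-x_S\|_{S,w}^2$; optimality of $\widehat\alpha$ gives $\|y-x\|_{S,w}^2\le \|x_S-x\|_{S,w}^2=\|g\|_{S,w}^2$; and a standard expectation-plus-Markov argument (as in Eq.~\eqref{eq:3_good_samples_estimate_z_diff}) gives $\|g\|_{S,w}^2\lesssim \|g\|_T^2$ with constant probability.  Chaining these yields $\|y-x_S\|_T^2\lesssim\|g\|_T^2$ with probability $\ge 0.99$ after adjusting the constants and the number of samples.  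The main obstacle I anticipate is making the \emph{uniform} norm-preservation over the entire $O(D)$-dimensional subspace quantitatively tight enough that the sample count stays $O(D\log D)$ rather than $O(D^2)$: the pointwise energy bound $\sup_t|h(t)|^2/\|h\|_T^2\lesssim D^2$ is what one plugs into Lemma~\ref{lem:link_weight_sampling_size__condition_num}, and one must verify that the reweighting by $D_{\mathrm{sq}}$ shaves this down to the $O(D\log D)$ appearing in Lemma~\ref{lem:x_energy_preserving_000}, and then upgrade from a single-function guarantee to an all-functions guarantee via a chaining/matrix-concentration step—this is the technically delicate part, and the one most likely to hide a polylog factor; the regression and degree-bookkeeping are routine by comparison.
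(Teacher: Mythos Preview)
Your proposal is correct and follows essentially the same approach as the paper: sample from the reweighted distribution of Eq.~\eqref{eq:def_dist_D}, solve a weighted least-squares problem, and chain norm preservation (Lemma~\ref{lem:x_energy_preserving_000}) with optimality of the regression and a Markov bound on $\|g\|_{S,w}^2$. The only cosmetic difference is that the paper regresses directly in the mixed basis $\{t^{j}e^{2\pi\i f_\ell t}\}$ and invokes Lemma~\ref{lem:polynomial_to_FT} only in the analysis, whereas you convert to pure tones first; your explicit insistence on \emph{uniform} norm preservation over the $O(D)$-dimensional subspace is in fact more careful than the paper, which applies the single-function Lemma~\ref{lem:x_energy_preserving_000} to the data-dependent difference $y_1-x_{S,1}$ without comment.
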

\begin{proof}

By Lemma \ref{lem:polynomial_to_FT}, we have that, for all $t\in [0, T]$, there exist $ D$-Fourier-sparse signals $y_1(t) $ and $x_{S,1}(t) $
\begin{align}
   | y(t) - y_1(t) | \leq \eps_1, \label{eq:y_y_1}
\end{align}
and
\begin{align}
   | x_S(t) - x_{S,1}(t) | \leq \eps_1. \label{eq:x_S_x_S_1}
\end{align}
Then, we have that
\begin{align}
\|y(t) - x_{S}(t)\|_T^2 \lesssim &~ \|y(t)-y_1(t)\|_T^2 + \|x_S(t)-x_{S,1}(t)\|_T^2 + \|y_1(t) - x_{S,1}(t)\|_T^2 \notag \\
\lesssim &~ 2\eps_1 + \|y_1(t) - x_{S,1}(t)\|_T^2 \notag \\
\lesssim &~ \|y_1(t) - x_{S,1}(t)\|_T^2  \label{eq:y_x_S_T}
\end{align}
where the first step follows from $(a+b)^2\leq 2a^2+2b^2$, the second step follows from Eq.~\eqref{eq:y_y_1} and Eq.~\eqref{eq:x_S_x_S_1}, the third step follows from $\eps_1 \lesssim  \|y_1(t) - x_{S,1}(t)\|_T^2 $.  

We also have that
\begin{align}
\|y_1(t) - x_{S,1}(t)\|_{S,w}^2 \lesssim &~ \|y_1(t)-y(t)\|_{S,w}^2 + \|x_{S,1}(t)-x_{S}(t)\|_{S,w}^2 + \|y(t) - x_{S}(t)\|_{S,w}^2 \notag \\
\lesssim &~ 2\eps_1 + \|y(t) - x_{S}(t)\|_{S,w}^2 \notag \\
\lesssim &~ \|y(t) - x_{S}(t)\|_{S,w}^2  \label{eq:y_x_S_Sw}
\end{align}
where the first step follows from $(a+b)^2\leq 2a^2+2b^2$, the second step follows from Eq.~\eqref{eq:y_y_1} and Eq.~\eqref{eq:x_S_x_S_1}, the third step follows from $\eps_1 \lesssim  \|y(t) - x_{S}(t)\|_{S,w}^2 $.  

By the definition of $y(t)$ in line \ref{ln:sig_est_regress_fast} in Procedure \textsc{SignalEstimation} of Algorithm \ref{algo:sig_est_1d_accuracy}, we have that
\begin{align}
    \|y(t) - x(t)\|_{S,w}^2 \leq  \|x_S(t) - x(t)\|_{S,w}^2  \label{eq:sig_est_regress_fast_def_y}
\end{align}

We have that
\begin{align}
   \E [\|x - x_{S}\|_{S,w}^2] =&~ \E \Big[\sum_{i\in [|S|]} w_i |x(t_i) - x_{S}(t_i)|^2\Big] \notag\\
   =&~ \E \Big[\sum_{i\in [|S|]} \frac{1}{2T |S| D(t)} |x(t_i) - x_{S}(t_i)|^2\Big]\notag\\
   =&~  \sum_{i\in [|S|]} \E_{t_i\sim D(t)} \Big[\frac{1}{2T |S| D(t)} |x(t_i) - x_{S}(t_i)|^2\Big]\notag\\
   =&~  |S|\cdot \int_{-T}^T D(t) \frac{1}{2T |S| D(t)} |x(t) - x_{S}(t)|^2 \d t\notag\\
   =&~  \int_{-T}^T \frac{1}{2T} |x(t) - x_{S}(t)|^2 \d t\notag\\
   =&~  \|x(t) - x_{S}(t)\|_T^2 
   \label{eq:E_x_x_Sw_T}
\end{align}
where the first step follows from the definition of the norm, the second step follows from the definition of $w_i$, the third step is straightforward, the forth follows from the definition of expectation, the fifth  step follows from the definition of the norm.

We have that
\begin{align*}
  \|y - x_{S}\|_T^2 
  \lesssim&~ \|y_1 - x_{S,1}\|_T^2 \\
  \lesssim&~ \|y_1 - x_{S,1}\|_{S,w}^2 \\
  \lesssim&~ \|y - x_{S}\|_{S,w}^2 \\
  \lesssim&~ \|y - x\|_{S,w}^2 + \|x - x_{S}\|_{S,w}^2 \\
  \lesssim&~ \|x - x_{S}\|_{S,w}^2 \\
  \lesssim&~ \|x - x_{S}\|_{T}^2 ,
\end{align*}
where the first step follows from Eq.~\eqref{eq:y_x_S_T}, the second step follows from Lemma \ref{lem:x_energy_preserving_000}, the third step follows from Eq.~\eqref{eq:y_x_S_Sw}, the forth step follows from $(a+b)^2\leq 2a^2+2b^2$, the fifth step follows from Eq.~\eqref{eq:sig_est_regress_fast_def_y}, the sixth step follows from Eq.~\eqref{eq:E_x_x_Sw_T} by Markov inequality with probability at least $0.99$.

\end{proof}

\subsection{High signal-to-noise ratio band approximation}\label{sec:sig_recontr:reduce}

The goal of this section is to prove the following lemma, which roughly states that for the heavy frequencies, it suffices to only reconstruct those in the bins with high SNRs.

\begin{lemma}\label{lem:xSfsubxSleqg}
Let $x^*(t)=\sum_{j=1}^k v_j e^{2\pi \i f_j t}$ be the ground-truth signal and $x(t)=x^*(t)+g(t)$ be the noisy observation signal.
Let $H$ be defined as in Definition~\ref{def:effect_H_k_sparse}, ${G}^{(j)}_{\sigma,b}$ be defined as in Definition \ref{def:G_j_sigma_b} with $(\sigma,b)$ such that Large Offset event does not happen.
Let $U:=\{t_0\in \R~|~ H(t) > 1-\delta_1~ \forall t\in [t_0,t_0+\beta]\}$. Let
\begin{align*}
 S:=\Big\{ j\in[k]~\Big|~  \int_{f_j-\Delta}^{f_j+\Delta} | \widehat{H\cdot x^*}(f) |^2 \mathrm{d} f \geq T\N^2/k \Big\},
\end{align*}
and $x_S(t)=\sum_{j\in S}v_j  e^{2\pi \i f_j t}$.

For $j\in [B]$, let $z_j^*(t):=(x^*\cdot H)*G^{(j)}_{\sigma, b}(t)$ and $z_j(t)=(x\cdot H)*{G}^{(j)}_{\sigma,b}(t)$.
Let $g_j(t):=z_j(t)-z_j^*(t)$.
Let 
\begin{align}\label{eq:define_S_g_1}
  S_{g1}:=\big\{ j\in[B]~|~ \|g_j(t)\|_T^2 \leq c \|z^*_j(t)\|_{U}^2\big\},  
\end{align}
where $c\in (0, 0.001)$ is a small universal constant.
Let 
\begin{align*}
  S_{g2}:=\left\{ j\in[B]~\Bigg|~ \exists f_0\in \{f_1,\dots,f_k\}, ~\text{and}~ h_{\sigma, b}(f_0)=j, ~\text{and}~ \int_{f_0-\Delta}^{f_0+\Delta} | \widehat{x^*\cdot H}(f) |^2 \mathrm{d} f \geq  T\N^2/k  \right\}.
\end{align*}
Let $S_g=  S_{g1} \cap S_{g2}$.
Let  $S_f := \{ j\in[k] ~|~ h_{\sigma, b}(f_j) \in S_g\} \cap S$ and  $x_{S_f}(t):= \underset{j\in S_f}{\sum} v_j e^{2 \pi \i f_j t}.$

Then, we have
\begin{align*}
\|x_{S_f}(t)-x_S(t)\|_T^2 
\lesssim  \|{g}(t)\|_T^2.    
\end{align*}
\end{lemma}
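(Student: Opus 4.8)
The plan is to bound the energy of the ``dropped'' signal $x_S - x_{S_f}$, which consists precisely of those heavy frequencies $f_j$ (i.e. $j \in S$) whose bin $h_{\sigma,b}(f_j)$ fails to lie in $S_g = S_{g1} \cap S_{g2}$. Since $j \in S$ already guarantees that $f_j$ satisfies the heavy-cluster condition, and the hash is applied so that the bin of a heavy frequency automatically lands in $S_{g2}$ (because $f_j$ witnesses the defining condition of $S_{g2}$), the only way a heavy $f_j$ can be excluded from $S_f$ is if its bin $j' := h_{\sigma,b}(f_j)$ fails the high-SNR condition $\|g_{j'}(t)\|_T^2 \le c\|z^*_{j'}(t)\|_U^2$, i.e. $j' \notin S_{g1}$. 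So first I would reduce the statement to: the total energy of the heavy frequencies hashed to low-SNR bins is $\lesssim \|g\|_T^2$.

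\textbf{Key steps.} First, write $x_{S} - x_{S_f} = \sum_{j \in S \setminus S_f} v_j e^{2\pi\i f_j t}$, and use Claim~\ref{cla:guarantee_removing_x**_x*_}-style reasoning (via the $H$-filter energy preservation, Lemma~\ref{lem:property_of_filter_H} Properties IV and V, and Parseval) to control $\|x_S - x_{S_f}\|_T^2$ by the frequency-domain energy of $\widehat{(x_S-x_{S_f})\cdot H}$ near the relevant $f_j$'s. Since the excluded frequencies are heavy, each contributes at least $T\N^2/k$ of energy in its cluster, but I need an \emph{upper} bound, so instead I would group the excluded frequencies by their hashing bin $j' \notin S_{g1}$: for each such bin, the energy of $x_S \cdot H$ restricted to that bin is captured (up to the filter properties of $G^{(j')}_{\sigma,b}$, Lemma~\ref{lem:property_of_filter_G}, and the Large-Offset and well-isolation events already assumed) by $\|z^*_{j'}(t)\|^2$, which by the one-cluster concentration (Lemma~\ref{lem:z_satisfies_two_properties} and Lemma~\ref{lem:full_proof_of_3_properties_true_for_z}) is comparable to $\|z^*_{j'}(t)\|_U^2$. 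Then the failure of the high-SNR condition for $j'$ means $\|z^*_{j'}(t)\|_U^2 < c^{-1}\|g_{j'}(t)\|_T^2$, so the energy in each bad bin is $\lesssim \|g_{j'}(t)\|_T^2 = \|(g\cdot H)*G^{(j')}_{\sigma,b}(t)\|_T^2$. Finally, summing over all bins $j' \in [B]$ and using the frequency-domain covering property (Lemma~\ref{lem:filterGlarge}, $\sum_{j'} |\widehat{G}^{(j')}_{\sigma,b}(f)|^2 \eqsim 1$) together with Parseval gives $\sum_{j'} \|(g\cdot H)*G^{(j')}_{\sigma,b}(t)\|_T^2 \lesssim \|g\cdot H\|_T^2 \lesssim \|g\|_T^2$, completing the bound.

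\textbf{Main obstacle.} The delicate point is the second step: translating ``$f_j$ heavy and hashed to bin $j'$'' into ``a definite chunk of $\|x_S\cdot H\|_T^2$ energy sits in $z^*_{j'}$, and this chunk is not double-counted across bins.'' This requires that distinct heavy frequencies in $S\setminus S_f$ that hash to the same bin are still handled cleanly (they form one cluster by the collision bound, Claim~\ref{cla:PS15_hash_claims}, and well-isolation), and that the covering property lets us pass from a per-bin bound back to the global $\|g\|_T^2$ without losing more than a constant factor — the overlap of at most two filters per frequency (Figure~\ref{fig:covering}) is exactly what keeps this a constant. I also need to carefully route the small error terms from replacing $G^{(j')}_{\sigma,b}$ by the ideal filter (Lemma~\ref{lem:xHI_sub_XHG_T_norm_is_small}) and the $\delta_1, \delta$ slack, but those are absorbed into $\|g\|_T^2$ or $\N^2 \ge \delta\|x^*\|_T^2$ using $\delta_1 \le \delta/k$ as in the earlier proofs; this is the routine part.
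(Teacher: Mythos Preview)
Your proposal is correct and follows essentially the same route as the paper's proof: reduce to bins $j'\notin S_{g1}$ (since $j\in S$ forces $h_{\sigma,b}(f_j)\in S_{g2}$), bound the per-bin energy of the dropped frequencies by $\|z^*_{j'}\|_T^2\lesssim c^{-1}\|g_{j'}\|_T^2$, and then sum over all bins using the covering property Lemma~\ref{lem:filterGlarge} plus Parseval to get $\sum_{j'}\|g_{j'}\|_T^2\lesssim\|g\|_T^2$. Two minor remarks: the paper does not invoke well-isolation or Claim~\ref{cla:PS15_hash_claims} here (the grouping by bin and the Large-Offset assumption alone give disjoint supports for the $x_{\mathcal T_j}\cdot H$ pieces), and the ideal-filter approximation (Lemma~\ref{lem:xHI_sub_XHG_T_norm_is_small}) is not needed either---the argument works directly with $G^{(j')}_{\sigma,b}$; the $\|\cdot\|_U$ versus $\|\cdot\|_T$ comparison is handled by Lemma~\ref{lem:norm_z_cut_preserve} and Lemma~\ref{lem:relation_R_L_T} rather than the one-cluster concentration lemmas.
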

\begin{proof}
By the definition of $S$ and $S_f$, we have that
\begin{align*}
    S_f\subseteq S.
\end{align*}

Let $ [L,R]:=U$. We have that for any $f\in S\backslash S_f$, $ j = h_{\sigma, b }(f)$,
\begin{align}
\|(g(t)\cdot H(t))*G^{(j)}_{\sigma, b}(t)\|_T^2 \geq&~ c \|(x^*(t)\cdot H(t))*G^{(j)}_{\sigma, b}(t)\|_{U}^2\notag \\
\geq&~ c \frac{T - k^2 (T+L-R)}{R-L} \|(x^*(t)\cdot H(t))*G^{(j)}_{\sigma, b}(t)\|_{T}^2\notag \\
\geq&~ O(c) \cdot \|(x^*(t)\cdot H(t))*G^{(j)}_{\sigma, b}(t)\|_{T}^2, \label{eq:gHGgeqcxstarHG}
\end{align}
where the first step follows from Eq.~\eqref{eq:define_S_g_1}, the second step follows from Lemma \ref{lem:norm_z_cut_preserve}, the third step follows from the Lemma \ref{lem:relation_R_L_T}.

Let $ {\cal T} = S\backslash S_f $. And for $j\in[B]$, let 
\begin{align*}
    {\cal T}_j :=\begin{cases}
    \left\{i\in S ~|~  h_{\sigma, b } (f_i)= j  \right\}, &~ \forall j\in [B]\backslash S_g,\\
    \emptyset, &~ \text{otherwise}.
    \end{cases}
\end{align*}
It is easy to see that
\begin{align*}
    {\cal T} = \bigcup_{i=1}^B {\cal T}_i.
\end{align*}
Moreover, by Lemma \ref{lem:property_of_filter_G} Property \RN{1} and \RN{3}, the definition of ${\cal T}_j$ and $\wh{G}^{(j)}_{\sigma, b}(f) $, and the Large Offset event not happening, we have that for any $f\in \supp(\wh{x}_{{\cal T}_j}*\wh{H})$,
\begin{align}
\wh{G}^{(j)}_{\sigma, b}(f) \geq 1-\frac{\delta}{k},\label{eq:inside_T_big}
\end{align}
where $x_{{\cal T}_j}=\sum_{i\in {\cal T}_j}v_i e^{2\pi \i f_i t}$ and $\wh{x}_{{\cal T}_j}$ is its Fourier transform.

Then, we have that 
\begin{align}
&~ T\|(x^*(t)\cdot H(t))*G^{(j)}_{\sigma, b}(t)\|_{T}^2 \notag \\
=&~ \int_0^T |(x^*(t)\cdot H(t))*G^{(j)}_{\sigma, b}(t)|^2 \d t \notag \\
\gtrsim &~ \int_{-\infty}^\infty |(x^*(t)\cdot H(t))*G^{(j)}_{\sigma, b}(t)|^2 \d t \notag \\
= &~ \int_{-\infty}^\infty |(\wh{x}^*(f)* \wh{H}(f))\cdot \wh{G}^{(j)}_{\sigma, b}(f)|^2 \d f \notag \\
= &~ \int_{-\infty}^\infty |(\wh{x}_{{\cal T}_j}(f)* \wh{H}(f))\cdot\wh{G}^{(j)}_{\sigma, b}(f)|^2 \d f + \int_{-\infty}^\infty |(\wh{x}_{[k]\backslash {\cal T}_j}(f)* \wh{H}(f))\cdot \wh{G}^{(j)}_{\sigma, b}(f)|^2 \d f \notag \\
\geq &~ \int_{-\infty}^\infty |(\wh{x}_{{\cal T}_j}(f)* \wh{H}(f))\cdot\wh{G}^{(j)}_{\sigma, b}(f)|^2 \d f \notag \\
\gtrsim &~ \int_{-\infty}^\infty |\wh{x}_{{\cal T}_j}(f)* \wh{H}(f)|^2 \d f \label{eq:xHGgeqxTjH}
\end{align}
where the first step follows from the definition of the norm, the second step follows from Lemma \ref{lem:z_satisfies_two_properties}, third step follows from Parseval's theorem, the forth step follows from the Large Offset event not happening and the definition of ${\cal T}_j$, the fifth step is straight forward, the sixth step follows from Eq.~\eqref{eq:inside_T_big}.

Thus, we have that 
\begin{align}
   &~T\|x_{S_f}(t)-x_S(t)\|_T^2 \notag \\  
   = &~T\|x_{\cal T}(t)\|_T^2 %
   \notag \\
   \lesssim &~T\|x_{\cal T}(t)\cdot H(t)\|_T^2 \notag \\
   = &~\int_0^T |x_{\cal T}(t)\cdot H(t)|^2 \d t \notag \\
   \leq &~\int_{-\infty}^\infty |x_{\cal T}(t)\cdot H(t)|^2 \d t \notag \\
   = &~\int_{-\infty}^\infty |\wh{x}_{\cal T}(f)* \wh{H}(f)|^2 \d f \notag \\
   = &~\sum_{j=1}^B \int_{-\infty}^\infty |\wh{x}_{{\cal T}_j}(f)* \wh{H}(f)|^2 \d f \notag \\
  \lesssim &~\sum_{j\in[B] \backslash S_g} T\|(x^*(t)\cdot H(t))*G^{(j)}_{\sigma, b}(t)\|_{T}^2 \notag \\
  \lesssim &~\sum_{j\in[B] \backslash S_g} T\|(g(t)\cdot H(t))*G^{(j)}_{\sigma, b}(t)\|_{T}^2 \label{eq:xSfsubxSleqSumBgHG}
\end{align}
where the first step follows from the definition of ${\cal T}$, the second step follows from $ x_{\cal T}$ is a $k$-Fourier-sparse signal and Lemma \ref{lem:property_of_filter_H} Property \RN{5}, the third step follows from the definition of the norm, the forth step is straight forward, the fifth step follows from Parseval's theorem, the sixth step follows from the definition of ${\cal T}_j$ and the Large Offset event not happened, the seventh step follows from Eq.~\eqref{eq:xHGgeqxTjH}, the eighth step follows from Eq.~\eqref{eq:gHGgeqcxstarHG}.

Eq.~\eqref{eq:xSfsubxSleqSumBgHG} can be upper bounded by the summation over all bins, which can be further upper bounded as follows:
\begin{align}
&~\sum_{j\in[B]} T \cdot \|(g(t)\cdot H(t))*G^{(j)}_{\sigma, b}(t)\|_{T}^2 \notag \\
= &~\sum_{j\in[B]}  \int_0^T |(g(t)\cdot H(t))*G^{(j)}_{\sigma, b}(t)|^2 \d t \notag \\
\leq &~\sum_{j\in[B]}  \int_{-\infty}^\infty |(g(t)\cdot H(t))*G^{(j)}_{\sigma, b}(t)|^2 \d t \notag \\
\leq &~\sum_{j\in[B]}  \int_{-\infty}^\infty |(\wh{g}(f)* \wh{H}(f))\cdot \wh{G}^{(j)}_{\sigma, b}(f)|^2 \d f \notag \\
= &~ \int_{-\infty}^\infty |(\wh{g}(f)* \wh{H}(f))|^2\cdot \sum_{j\in[B]}| \wh{G}^{(j)}_{\sigma, b}(f)|^2 \d f \notag \\
\lesssim &~ \int_{-\infty}^\infty |(\wh{g}(f)* \wh{H}(f))|^2 \d f \notag \\
= &~ \int_{-\infty}^\infty |({g}(t) \cdot {H}(t))|^2 \d t \notag \\
= &~ \int_0^T |({g}(t) \cdot {H}(t))|^2 \d t \notag \\
\lesssim &~ \int_0^T |{g}(t)|^2 \d t \notag \\
= &~ T \|{g}(t)\|_T^2 \label{eq:SumgHGleqg}
\end{align}
where the first step follows from the definition of the norm, the second step is straightforward, the third step follows from Parseval's theorem, the forth step is straightforward, the fifth step follows from Lemma \ref{lem:filterGlarge}, the sixth step follows from Parseval's theorem, the seventh step follows from $g(t)=0,\forall t\in \R\backslash [0, T]$, the eighth step follows from Lemma \ref{lem:property_of_filter_H} Property \RN{1}, \RN{2}, the ninth step follows from the definition of the norm.

Therefore, we get that 
\begin{align*}
&~ T\|x_{S_f}(t)-x_S(t)\|_T^2 \notag \\
   \lesssim&~ \sum_{j\in[B] \backslash S_g} T\|(g(t)\cdot H(t))*G^{(j)}_{\sigma, b}(t)\|_{T}^2 \notag \\
\leq &~ \sum_{j\in[B] } T\|(g(t)\cdot H(t))*G^{(j)}_{\sigma, b}(t)\|_{T}^2 \notag \\
\lesssim &~ T \|{g}(t)\|_T^2, 
\end{align*}
 where the first step follows from Eq.~\eqref{eq:xSfsubxSleqSumBgHG}, the second step is straight forward, the third step follows from Eq.~\eqref{eq:SumgHGleqg}.

The lemma is then proved.

\end{proof}

\subsection{Fourier interpolation with constant success probability}\label{sec:sig_recontr:constant_prob}

In this section, we give an algorithm for Fourier interpolation by combining our frequency estimation algorithm with a signal estimation algorithm. However, it only succeeds with a constant probability.

\begin{theorem}\label{thm:main_constant_prob}
Let $x(t) = x^*(t) + g(t)$, where $x^*(t)\in {\cal F}_{k,F}$ and $g(t)$ is arbitrary noise.  Given samples of $x$ over $[0, T]$, there is an algorithm (Procedure \textsc{ConstantProbFourierInterpolation} in Algorithm~\ref{alg:main_k}%
) that uses \begin{align*}
    O(k^{4} \log^{3} (k) \log^{2}(1/\delta_1)\log(\log(1/\delta_1))\log (FT)\log({\log(FT)})) 
\end{align*} samples, runs in \begin{align*}
    O(k^{4\omega} \log^{2\omega+1} (k) \log^{2\omega}(1/\delta_1)\log(\log(1/\delta_1))\log (FT)\log({\log(FT)})) 
\end{align*} time, and outputs an $O(k^4  \log^4(k/\delta))$-Fourier-sparse signal $y(t)$ such that with probability at least $0.6$,
\begin{align*}
    \|{y - x^*}\|_T \lesssim \|{g}\|_T + \delta\|{x^*}\|_T.
\end{align*}
\end{theorem}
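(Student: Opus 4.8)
\textbf{Proof plan for Theorem~\ref{thm:main_constant_prob}.}
The plan is to combine the frequency estimation algorithm (Theorem~\ref{thm:frequency_recovery_k_better}) with the Fourier set-query algorithm (Lemma~\ref{lem:magnitude_recovery_for_CKPS}), passing through the high-SNR band reduction (Lemma~\ref{lem:xSfsubxSleqg}) and the heavy-cluster approximation (Claim~\ref{cla:guarantee_removing_x**_x*_}). First I would fix the hashing parameters $(\sigma,b)$ as in Definition~\ref{def:G_j_sigma_b} with $B=\Theta(k)$; by Lemma~\ref{lem:large_off_not_happen} the Large Offset event fails with probability $\geq 0.99$, and by Lemma~\ref{lem:often-well-isolated} plus a union/counting argument a constant fraction of heavy frequencies are well-isolated with constant probability, so we may condition on a ``good'' draw of $(\sigma,b)$. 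Set $\N^2 := \|g\|_T^2 + \delta\|x^*\|_T^2$ and $\delta_1 := \delta^3/k^3$ (or whatever normalization the earlier lemmas demand, e.g.\ $\delta_1\leq \delta/k$ and $\delta_1 k^2\lesssim 1$). Run \textsc{FrequencyEstimationX} to obtain a list $L$ of $O(k)$ candidate frequencies such that, with probability $\geq 1-\rho_0$, every frequency in $S_f$ (the heavy frequencies hashed to high-SNR bins, in the notation of Theorem~\ref{thm:frequency_recovery_k_better}) has some $\widetilde f\in L$ with $|f-\widetilde f|\lesssim \Delta$.

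Next I would convert this frequency list into a signal. For each $\widetilde f_j\in L$, the true signal contribution of the cluster around $\widetilde f_j$ is a Fourier-sparse signal whose frequencies lie within $\Delta$ of $\widetilde f_j$; by Lemma~\ref{lem:low_degree_approximates_concentrated_freq_ours} this cluster is $\delta$-approximated in $T$-norm by $e^{2\pi\i \widetilde f_j t}P_j(t)$ with $\deg(P_j)=O(T\Delta + k^3\log k + k\log(1/\delta))$. So the heavy part of $x^*$ restricted to $S_f$ is $\delta$-close to a Fourier-polynomial mixed signal $x_{S_f,\mathrm{poly}}(t)=\sum_{j} e^{2\pi\i \widetilde f_j t}P_j(t)$ with total degree $D = O(k)\cdot O(T\Delta+k^3\log k+k\log(1/\delta))$. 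Treating the residual $x - x_{S_f,\mathrm{poly}}$ as ``noise'' $g'$ with $\|g'\|_T \lesssim \|g\|_T + \delta\|x^*\|_T$ (this is where I invoke the triangle inequality together with Claim~\ref{cla:guarantee_removing_x**_x*_}, which says dropping \emph{all} non-heavy frequencies costs $\lesssim \N$, and Lemma~\ref{lem:xSfsubxSleqg}, which says dropping heavy frequencies in \emph{low}-SNR bins costs only $\lesssim \|g\|_T$), I apply Lemma~\ref{lem:magnitude_recovery_for_CKPS} with these $O(k)$ frequencies and degree parameter $d$: it takes $O(D\log D)$ samples, runs in $O(D^\omega\log D)$ time, and outputs $y(t)=\sum_j e^{2\pi\i \widetilde f_j t}P'_j(t)$ with $\|y - x_{S_f,\mathrm{poly}}\|_T^2 \lesssim \|g'\|_T^2$ with probability $\geq 0.99$.

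Chaining the error bounds, $\|y - x^*\|_T \leq \|y - x_{S_f,\mathrm{poly}}\|_T + \|x_{S_f,\mathrm{poly}} - x_{S_f}\|_T + \|x_{S_f} - x_S\|_T + \|x_S - x^*\|_T \lesssim \|g\|_T + \delta\|x^*\|_T$, using the four ingredients just named for the four terms. The overall success probability is a union bound over: $(\sigma,b)$ good ($\geq 0.99$), frequency estimation ($\geq 1-\rho_0$), and set query ($\geq 0.99$), which exceeds $0.6$ for a suitable constant $\rho_0$. For the complexity accounting I would substitute $\Delta = k\cdot|\supp(\widehat H)| = O(k^3\log^2 k\log^2(1/\delta_1))$ from Lemma~\ref{lem:property_of_filter_H} Property~III, so $D = O(k^4\log^2 k\log^2(1/\delta_1))$ (the $T\Delta$ term is absorbed since we may take $T$ polynomially bounded, or more precisely the sample/time bounds carry an extra $\log(FT)$ factor which I keep explicit), giving the claimed $k^4$-type sample bound and $k^{4\omega}$-type runtime after multiplying by the $\log(FT)\log\log(FT)$ factors from Theorem~\ref{thm:frequency_recovery_k_better} and the polylog overheads. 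The output sparsity is $O(D) = O(k^4\log^4(k/\delta))$. The main obstacle I expect is the bookkeeping in the error chain: one must verify that the ``noise'' fed to the set-query algorithm genuinely satisfies $\|g'\|_T\lesssim\|g\|_T+\delta\|x^*\|_T$, which requires that the frequency estimation output $L$ covers \emph{every} frequency in $S_f$ (not just most of them) and that the degree-$d$ polynomial approximation error in Lemma~\ref{lem:low_degree_approximates_concentrated_freq_ours} is genuinely additive across the $O(k)$ clusters without blow-up --- this interplay between the frequency-domain heavy/high-SNR partition and the time-domain polynomial approximation is the delicate point, since a single missed heavy frequency in a high-SNR bin would leave an uncontrolled $\Omega(\N/\sqrt k)$ residual.
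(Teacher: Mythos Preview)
Your proposal is correct and follows essentially the same approach as the paper: heavy-cluster truncation (Claim~\ref{cla:guarantee_removing_x**_x*_}), high-SNR reduction (Lemma~\ref{lem:xSfsubxSleqg}), frequency estimation (Theorem~\ref{thm:frequency_recovery_k_better}), polynomial approximation of clusters (Lemma~\ref{lem:low_degree_approximates_concentrated_freq_ours}), and set-query recovery (Lemma~\ref{lem:magnitude_recovery_for_CKPS}), with the same error-chaining and complexity accounting. Two minor points: (i) the theorem promises a \emph{Fourier-sparse} output, so after the set-query step you still need to invoke Lemma~\ref{lem:polynomial_to_FT} to convert the mixed polynomial-exponential $y$ back into an $O(kd)$-Fourier-sparse signal (this costs only an arbitrarily small additive $\delta'$); (ii) your worry about ``additivity across clusters'' in Lemma~\ref{lem:low_degree_approximates_concentrated_freq_ours} is unnecessary, since that lemma is applied once to the entire $x_{S_f}$ rather than cluster-by-cluster, and note also that $\Delta$ already carries a $1/T$ factor so $T\Delta = O(k^3\log^2 k\log^2(1/\delta_1))$ with no assumption on $T$.
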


\begin{proof}
Let $\N^2 := \| g(t) \|_T^2 + \delta \| x^*(t) \|_T^2$ be the noisy level of the observation signal. %

\paragraph{Heavy-clusters approximation.}Let $S$ be the set of heavy frequencies:
\begin{align*}
    S=\Big\{j \in [k]\bigg{|}\int_{f_j-\Delta_h}^{f_j+\Delta_h} |\wh{H \cdot x^*}(f)|^2 \mathrm{d} f \ge T \cdot \N^2/k\Big\},
\end{align*}
where $\Delta_h=|\supp(\wh{H})|$, and let $x_S(t)=\sum_{j\in S} v_j e^{2\pi \i f_j t}$.
By Claim \ref{cla:guarantee_removing_x**_x*_}, we have %
\begin{align}
    \|x_S - x^* \|_T \lesssim \N, \label{eq:fourier_intor:x_S_x*}
\end{align}
which implies that it suffices to reconstruct $x_S$, instead of $x^*$.

\paragraph{Frequency estimation.}Conditioning on Large Offset event not happening, which holds with probability at least 0.6 by Lemma \ref{lem:large_off_not_happen}, let $S_f\subseteq S$ be defined as in Lemma~\ref{lem:xSfsubxSleqg} and $x_{S_f}(t)=\sum_{j\in S_f}v_j e^{2\pi \i f_j t}$. By Lemma~\ref{lem:xSfsubxSleqg}, we have
\begin{align}\label{eq:lem_M_8}
    \|x_{S_f}(t)-x_S(t)\|_T^2 \lesssim  \|{g}(t)\|_T^2.
\end{align}
Furthermore, by Theorem \ref{thm:frequency_recovery_k_better}, there is an algorithm that outputs a set of frequencies $L\subset \R$ of size $B$ such that with probability at least $1-2^{-\Omega(k)}$,  for any $j\in S_f$, there exists an $\wt{f}\in L$ such that,
\begin{align*}
|f_j-\widetilde{f} |\lesssim \Delta.
\end{align*}

\paragraph{Fourier-polynomial mixed signal approximation.}We define a map $p:\R\rightarrow L$ as follows:
\begin{align*}
    p(f):=\arg \min_{\wt{f}\in L} ~ |f-\wt{f}|~~~\forall f\in \R.
\end{align*}
Then, $x_{S_f}(t)$ can be expressed as
\begin{align*}
x_{S_f}(t)= &~\sum_{j\in {S_f}}v_je^{2\pi \i f_j t}\\
= &~ \sum_{j\in {S_f}}v_j e^{2\pi \i \cdot p(f_j)t} \cdot e^{2\pi \i \cdot (f_j - p(f_j))t}\\
= &~ \sum_{\wt{f} \in L} e^{2 \pi \i \widetilde{f} t} \cdot \sum_{j\in {S_f}:~ p(f_j)=\wt{f}} v_j e^{2\pi \i ( f_j - \widetilde{f})t},
\end{align*}
where the first step follows from the definition of $x_{S_f}$, the last step follows from interchanging the summations.

For each $\wt{f}_i\in L$, by Lemma \ref{lem:low_degree_approximates_concentrated_freq_ours} with $ x^*=x_{S_f}$, there exists a degree $ d=O(T \Delta + k^3 \log k + k \log 1/\delta)$ polynomial $P_i(t)$  such that, 
\begin{align}
\Big\|x_{S_f}(t)-\sum_{\wt{f}_i \in L} e^{2 \pi \i \widetilde{f}_i t} P_i(t)\Big\|_T \leq \sqrt{\delta} \|x_{S_f}(t)\|_T\label{eq:fourier:xS_sum}
\end{align}

\paragraph{Reconstructing the polynomials.}Define the following function family: 
\begin{align*}
  \mathcal{F} := \mathrm{span}\Big\{e^{2\pi \i \widetilde{f} t} \cdot t^j~{|}~  \wt{f}\in L, j \in \{0,1,\dots,d\} \Big\}.
\end{align*}
Note that $\sum_{\wt{f}_i \in L} e^{2 \pi \i \widetilde{f}_i t} P_i(t)\in {\cal F}$.

Let $ D:= d \cdot |L|$. By Lemma \ref{lem:magnitude_recovery_for_CKPS}, there is an algorithm that runs in $O(\eps^{-1}D^\omega\log^3(D)\log(1/\rho))$-time using $O(\eps^{-1}D\log^3(D)\log(1/\rho)) $ samples, and outputs $y'(t)\in {\cal F}$ such that, with probability $1-\rho$, 
\begin{align}
    \Big\|y'(t) - \sum_{\wt{f}_i \in L} e^{2 \pi \i \widetilde{f}_i t} P_i(t)\Big\|_T\leq (1+\eps)\Big\|x(t)-\sum_{\wt{f}_i \in L} e^{2 \pi \i \widetilde{f}_i t} P_i(t)\Big\|_T\label{eq:fourier:y_sum_0}
\end{align}
Thus, we have that
\begin{align}
\Big\|y'(t) - \sum_{\wt{f}_i \in L} e^{2 \pi \i \widetilde{f}_i t} P_i(t)\Big\|_T
\lesssim &~  \Big\|\sum_{\wt{f}_i \in L} e^{2 \pi \i \widetilde{f}_i t} P_i(t) - x^*(t)\Big\|_T  + \|x(t)-x^*(t)\|_T \notag \\
= &~  \Big\|\sum_{\wt{f}_i \in L} e^{2 \pi \i \widetilde{f}_i t} P_i(t) - x^*(t)\Big\|_T  + \|g(t)\|_T ,
\label{eq:fourier:y_sum}
\end{align}
where the first step  follows from triangle inequality, the second step follows from the definition of $g(t)$.

For the first term, we have that
\begin{align}
\Big\|\sum_{\wt{f}_i \in L} e^{2 \pi \i \widetilde{f}_i t} P_i(t) - x^*(t)\Big\|_T  
\lesssim &~  \Big\|\sum_{\wt{f}_i \in L} e^{2 \pi \i \widetilde{f}_i t} P_i(t) - x_{S_f}(t)\Big\|_T+\|x_{S_f}(t)-x^*(t)\|_T  \notag \\
\lesssim &~   \sqrt{\delta} \|x_{S_f}(t)\|_T +\|x_{S_f}(t)-x^*(t)\|_T  \notag \\
\leq &~   \sqrt{\delta} (\|x_{S_f}(t)-x^*(t)\|_T  + \|x^*(t)\|_T) +\|x_{S_f}(t)-x^*(t)\|_T  \notag \\
\lesssim &~   \|x_{S_f}(t)-x^*(t)\|_T + \sqrt{\delta}\|x^*(t)\|_T \notag\\
\leq&~ \|x_{S_f}(t)-x_S(t)\|_T + \|x_S(t)-x^*(t)\|_T+ \sqrt{\delta}\|x^*(t)\|_T \notag\\
\lesssim &~ \|x_{S_f}(t)-x_S(t)\|_T + {\cal N}+ \sqrt{\delta}\|x^*(t)\|_T \notag\\
\lesssim &~ \|g(t)\|_T + {\cal N} + \sqrt{\delta}\|x^*(t)\|_T,\label{eq:PsubxstarleqN}
\end{align}
where the first step follows from triangle  inequality,  the second step follows from Eq.~\eqref{eq:fourier:xS_sum}, the third step follows from triangle inequality, the forth step follows is straightforward, the fifth step follows from triangle inequality, the sixth step follows from Eq.~\eqref{eq:fourier_intor:x_S_x*}, and the last step follows from Eq.~\eqref{eq:lem_M_8}.  

Hence, we get that
\begin{align}
\Big\|y'(t)-\sum_{\wt{f}_i \in L} e^{2 \pi \i \widetilde{f}_i t} P_i(t)\Big\|_T 
\leq &~ \Big\|\sum_{\wt{f}_i \in L} e^{2 \pi \i \widetilde{f}_i t} P_i(t) - x^*(t)\Big\|_T  + \|g(t)\|_T  \notag \\
\lesssim &~ \|g(t)\|_T + \N + \sqrt{\delta}\|x^*(t)\|_T   \label{eq:yprimesubPleqN}
\end{align}
where the first step follows from Eq.~\eqref{eq:fourier:y_sum}, the second step follows from Eq.~\eqref{eq:PsubxstarleqN}. 

\paragraph{Transforming back to Fourier-sparse signal.}By Lemma \ref{lem:polynomial_to_FT}, we have that there is a $O(kd) $-Fourier-sparse signal $y(t)$, such that
\begin{align}\label{eq:approx_y}
    \|y(t)-y'(t)\|_T \leq \delta'
\end{align}
where $\delta'>0$ is any positive real number. Thus, $y(t)$ can be arbitrarily close to $y'(t)$. 
Moreover, the sparsity of $y(t)$ is 
\begin{align*}
    O(kd )= O(k\cdot (T \Delta + k^3 \log k + k \log 1/\delta)) = O(k^4  \log^4(k/\delta)),
\end{align*}
which follows from Lemma \ref{lem:property_of_filter_H} Property \RN{3}: 
\begin{align*}
    \Delta =  k \Delta_h = k |\supp(\wh{H})| = O(k^3 \log^2 (k) \log^2(1/\delta_1)/T) .
\end{align*}

Moreover,  we take
\begin{align}
    \N = \sqrt{\| g \|_T^2 + \delta \| x^* \|_T^2}\leq \|g\|_T+\sqrt{\delta}\|x^*\|_T. \label{eq:Nleqgdeltaxstar}
\end{align}

Therefore, the total approximation error can be bounded as follows:
\begin{align}
 &~ \|y(t)-x^*(t)\|_T\notag  \\
\leq &~ \|y(t)-y'(t)\|_T+ \Big\|y'(t)-\sum_{\wt{f}_i \in L} e^{2 \pi \i \widetilde{f}_i t} P_i(t)\Big\|_T + \Big\|\sum_{\wt{f}_i \in L} e^{2 \pi \i \widetilde{f}_i t} P_i(t) - x^*(t)\Big\|_T \notag \\
\lesssim &~ \Big\|y'(t)-\sum_{\wt{f}_i \in L} e^{2 \pi \i \widetilde{f}_i t} P_i(t)\Big\|_T + \Big\|\sum_{\wt{f}_i \in L} e^{2 \pi \i \widetilde{f}_i t} P_i(t) - x^*(t)\Big\|_T   \notag \\
\lesssim &~ \Big\|y'(t)-\sum_{\wt{f}_i \in L} e^{2 \pi \i \widetilde{f}_i t} P_i(t)\Big\|_T + \N + \|g(t)\|_T + \sqrt{\delta}\|x^*(t)\|_T\notag\\
\lesssim &~  \N + \|g(t)\|_T  + \sqrt{\delta}\|x^*(t)\|_T \notag \\
\lesssim &~  \|g(t)\|_T  + \sqrt{\delta}\|x^*(t)\|_T ,
\end{align}
where the first step follows from triangle inequality, the second step follows from Eq.~\eqref{eq:approx_y}, the third step follows from Eq.~\eqref{eq:PsubxstarleqN}, the forth step follows from Eq.~\eqref{eq:yprimesubPleqN}, the fifth step follows from $\N = \sqrt{\| g \|_T^2 + \delta \| x^* \|_T^2}\leq \|g\|_T+\sqrt{\delta}\|x^*\|_T$.

The correctness then follows by re-scaling $\delta$.

The running time of the algorithm follows from Lemma~\ref{lem:time_cpfi}, and the sample complexity follows from Lemma~\ref{lem:smaple_cpfi}.

The theorem is then proved.

\end{proof}

\begin{lemma}[Running time of Algorithm \ref{alg:main_k}]\label{lem:time_cpfi}
Procedure \textsc{ConstantProbFourierInterpolation} in Algorithm \ref{alg:main_k} runs in \begin{align*}
    O(k^{4\omega} \log^{2\omega+1} (k) \log^{2\omega}(1/\delta_1)\log(\log(1/\delta_1))\log (FT)\log({\log(FT)}))
\end{align*} times.
\end{lemma}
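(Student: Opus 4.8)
The plan is to prove Lemma~\ref{lem:time_cpfi} by a straightforward line-by-line accounting of the running time of Procedure \textsc{ConstantProbFourierInterpolation} in Algorithm~\ref{alg:main_k}, following the same bookkeeping style used in Lemma~\ref{lem:time:alg:pre_compute} and Lemma~\ref{lem:time:algo:significant_samples_z}. The algorithm consists of three conceptual stages identified in the proof of Theorem~\ref{thm:main_constant_prob}: (i) frequency estimation via \textsc{FrequencyEstimationX}; (ii) signal estimation (the Fourier set query of Lemma~\ref{lem:magnitude_recovery_for_CKPS}) applied to the frequency list $L$ together with the polynomial-degree parameter $d$; and (iii) transforming the recovered Fourier-polynomial mixed signal back to a Fourier-sparse signal. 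I would bound the cost of each stage separately and then take the dominant term.

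First I would recall the parameter settings established in the proof of Theorem~\ref{thm:main_constant_prob}: $B=O(k)$, $\Delta_h=|\supp(\wh{H})|=O(k^2\log^2 k\log^2(1/\delta_1)/T)$ by Lemma~\ref{lem:property_of_filter_H} Property~\RN{3}, hence $\Delta=k\Delta_h=O(k^3\log^2 k\log^2(1/\delta_1)/T)$, the polynomial degree $d=O(T\Delta+k^3\log k+k\log(1/\delta))=O(k^3\log^2 k\log^2(1/\delta_1))$, and $D:=d\cdot|L|=d\cdot B=O(k^4\log^2 k\log^2(1/\delta_1))$. For stage (i), by Theorem~\ref{thm:frequency_recovery_k_better} the frequency estimation runs in $O(k^2\log(k)\log(k/\delta_1)\log(FT)\log(\log(FT)/\rho_1))$ time, which is dominated by the later stages. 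For stage (ii), by Lemma~\ref{lem:magnitude_recovery_for_CKPS} the set-query algorithm runs in $O(D^\omega\log(D))$ time (absorbing the $\eps^{-1}\log(1/\rho)$ and extra polylog factors into constants / $\polylog$ since $\eps,\rho$ are constants here); substituting $D=O(k^4\log^2 k\log^2(1/\delta_1))$ gives $O(k^{4\omega}\log^{2\omega}(k)\log^{2\omega}(1/\delta_1)\cdot\log D)$, and $\log D=O(\log(k/\delta_1))$, so this stage costs $O(k^{4\omega}\log^{2\omega+1}(k)\log^{2\omega}(1/\delta_1))$ up to a further $\polylog$. For stage (iii), converting the mixed signal of $O(kd)=O(k^4\log^4(k/\delta))$-sparsity back to a Fourier-sparse signal via Lemma~\ref{lem:polynomial_to_FT}, possibly combined with the multipoint-evaluation cost in Lemma~\ref{lem:time_mpe}, contributes only $\wt{O}(kd)=\wt{O}(k^4)$ time, again subsumed. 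Taking the maximum over the three stages and folding all logarithmic factors into the stated expression yields the claimed bound
\begin{align*}
O(k^{4\omega}\log^{2\omega+1}(k)\log^{2\omega}(1/\delta_1)\log(\log(1/\delta_1))\log(FT)\log(\log(FT))).
\end{align*}

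The main obstacle I anticipate is not conceptual but bookkeeping: making sure every logarithmic factor is tracked consistently so that the final exponents on $\log k$, $\log(1/\delta_1)$, $\log(FT)$, and $\log\log(FT)$ exactly match the statement, and verifying that the $D^\omega$ term genuinely dominates the $D\cdot\polylog$ multipoint-evaluation term and the frequency-estimation term for all parameter regimes (in particular that $k^{4\omega}$ beats the $k^2\log(FT)$ coming from stage (i) — which it does since $4\omega>2$, but one must also check the $\log(FT)$ dependence, noting stage (ii) carries no $\log(FT)$ factor while stage (i) does, so the product form in the statement is an over-estimate that still validly upper bounds the sum). I would also double-check the claim $\log D=O(\log(k/\delta_1))$ and that the extra $\log(\log(1/\delta_1))$ factor in the statement is accounted for, likely arising from the $\log(1/\rho_1)$ term in Theorem~\ref{thm:frequency_recovery_k_better} or from the $C=\log(1/\delta_1)$-many $\sinc$ factors entering the filter construction; once these are pinned down the proof is a routine substitution-and-simplification argument.
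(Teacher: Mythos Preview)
Your proposal is correct and follows essentially the same approach as the paper: the paper's proof also decomposes \textsc{ConstantProbFourierInterpolation} into the \textsc{FrequencyEstimationX} call (bounded via Theorem~\ref{thm:frequency_recovery_k_better}) and the \textsc{SignalEstimation} call (bounded via Lemma~\ref{lem:magnitude_recovery_for_CKPS} with $D=Bd=O(k^4\log^2(k)\log^2(1/\delta_1))$), then observes that the $D^\omega\log D$ term dominates and packages the sum into the stated product form. One small note: your stage~(iii) is not actually counted in the paper's runtime analysis---the algorithm as written returns the mixed polynomial-exponential signal, and Lemma~\ref{lem:polynomial_to_FT} is invoked in the proof of Theorem~\ref{thm:main_constant_prob} only as an existence statement---but since you correctly dismiss it as negligible this does not affect your argument; also, your guess that the $\log(\log(1/\delta_1))$ factor arises from $\log D$ is exactly what the paper uses.
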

\begin{proof}
Procedure \textsc{ConstantProbFourierInterpolation} in Algorithm \ref{alg:main_k} consists of the following two steps:
\begin{itemize}
    \item Line \ref{ln:1_main_con} calls Procedure \textsc{FrequencyEstimationX}. By Theorem \ref{thm:frequency_recovery_k_better}, it runs in \begin{align*}
        O(k^2\log(k)\log(k/\delta_1)\log (FT)\log({\log(FT)} ))
    \end{align*} time.  %
    \item Line \ref{ln:2_main_con} calls Procedure \textsc{SignalEstimation}. By Lemma \ref{lem:magnitude_recovery_for_CKPS}, it runs in \begin{align*}
        O(\eps^{-1}D^\omega\log(D)\log(1/\rho))
    \end{align*} time, where $ \eps, \rho$ are set to be universal constants and $D= B\cdot d$.
\end{itemize}

Following from the setting in the algorithm, we have that 
\begin{align*}
    B =&~ O(k),\\
    d =&~ O( \Delta T + k^3 \log k + k \log 1/\delta).
\end{align*}

By Lemma \ref{lem:property_of_filter_H} Property \RN{3}, we have that
\begin{align*}
    \Delta =  k \Delta_h = k |\supp(\wh{H}(f))| = O(k^3 \log^2 (k) \log^2(1/\delta_1)/T) .
\end{align*}

As a result, we have that
\begin{align}
    D = B\cdot d = O(k^4 \log^2 (k) \log^2(1/\delta_1)) \label{eq:D_k_delta_000}
\end{align}

Thus, the time complexity of Procedure \textsc{ConstantProbFourierInterpolation} in Algorithm \ref{alg:main_k} is 
\begin{align*}
    &~ O(k^2\log(k)\log(k/\delta_1)\log (FT)\log({\log(FT)}/{\rho_1} ))  + O(\eps^{-1}D^\omega\log(D)\log(1/\rho))\\
   \leq  &~  O(k^2\log(k)\log(k/\delta_1)\log (FT)\log({\log(FT)}/{\rho_1} )) \\
   &~ \quad + O(\eps^{-1}(k^4 \log^2 (k) \log^2(1/\delta_1))^\omega\log(k^4 \log^2 (k) \log^2(1/\delta_1))\log(1/\rho))\\
   \leq  &~  O(k^{4\omega} \log^{2\omega+1} (k) \log^{2\omega}(1/\delta_1)\log(\log(1/\delta_1))\log (FT)\log({\log(FT)})) %
\end{align*}
where the first step follows from Eq.~\eqref{eq:D_k_delta_000}, the second step follows from $\eps=O(1), \rho=O(1), \rho_1=O(1)$. %

\end{proof}

\begin{lemma}[Sample complexity of  Algorithm \ref{alg:main_k}]\label{lem:smaple_cpfi}
 Procedure \textsc{ConstantProbFourierInterpolation} in Algorithm \ref{alg:main_k} takes \begin{align*}
     O(k^{4} \log^{3} (k) \log^{2}(1/\delta_1)\log(\log(1/\delta_1))\log (FT)\log({\log(FT)}))
 \end{align*} samples.
\end{lemma}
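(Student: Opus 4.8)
The statement to prove is the sample-complexity bound for Procedure \textsc{ConstantProbFourierInterpolation} in Algorithm~\ref{alg:main_k}, claiming an overall sample count of $O(k^{4} \log^{3} (k) \log^{2}(1/\delta_1)\log(\log(1/\delta_1))\log (FT)\log({\log(FT)}))$. The plan is to mirror the structure of Lemma~\ref{lem:time_cpfi} (the running-time lemma), but track only the number of samples drawn from $x(t)$ rather than arithmetic operations. First I would recall that Procedure \textsc{ConstantProbFourierInterpolation} makes exactly two calls that touch the signal: the call to \textsc{FrequencyEstimationX} (line~\ref{ln:1_main_con}) and the call to \textsc{SignalEstimation} (line~\ref{ln:2_main_con}); every other step (the map $p$, the polynomial-to-Fourier conversion, etc.) is purely computational and consumes no fresh samples. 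Crucially, the samples used by \textsc{SignalEstimation} can be reused/shared or are simply additional fresh samples, so the total is the sum of the two contributions.

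The next step is to invoke the two already-established sample bounds. By Theorem~\ref{thm:frequency_recovery_k_better}, the call to \textsc{FrequencyEstimationX} uses $O(k^2\log(k)\log(k/\delta_1)\log (FT)\log({\log(FT)}))$ samples (the theorem states $k^2\log(1/\delta)\log(FT)$ up to the $\poly\log$ factors made explicit in Lemma~\ref{lem:sample:alg:pre_compute}). By Lemma~\ref{lem:magnitude_recovery_for_CKPS}, the call to \textsc{SignalEstimation} uses $O(\eps^{-1}D\log^3(D)\log(1/\rho))$ samples where $D = B\cdot d$, and here $\eps,\rho$ are fixed universal constants. So the second step is a bookkeeping one: plug in $B=O(k)$ and $d = O(\Delta T + k^3\log k + k\log(1/\delta))$, and use Lemma~\ref{lem:property_of_filter_H} Property~III to get $\Delta = k\Delta_h = k|\supp(\wh{H})| = O(k^3\log^2(k)\log^2(1/\delta_1)/T)$, so that $\Delta T = O(k^3\log^2(k)\log^2(1/\delta_1))$ dominates and hence $D = B\cdot d = O(k^4\log^2(k)\log^2(1/\delta_1))$. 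Substituting, $D\log^3 D = O(k^4\log^2(k)\log^2(1/\delta_1))\cdot O(\log(k\log(1/\delta_1))) = O(k^4\log^3(k)\log^2(1/\delta_1)\log\log(1/\delta_1))$ after absorbing the $\log D$ factor (here I would be slightly careful to write $\log^3 D = O(\log^3(k^4\log^2 k \log^2(1/\delta_1)))$ which is $O((\log k + \log\log(1/\delta_1))^3)$, and check it is dominated by the claimed $\log^3(k)\log\log(1/\delta_1)$ bound when $k$ is large — this is the only genuinely fiddly estimate).

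Finally I would add the two contributions. The frequency-estimation term $O(k^2\cdot \poly\log)$ is dominated by the signal-estimation term $O(k^4\cdot \poly\log)$, and combining gives the claimed $O(k^{4} \log^{3} (k) \log^{2}(1/\delta_1)\log(\log(1/\delta_1))\log (FT)\log({\log(FT)}))$ — note the $\log(FT)\log\log(FT)$ factors come only from the frequency-estimation call but I keep them in the final bound as an upper bound on both terms. I would conclude by noting that this matches (up to the $\poly\log$ accounting) the statement in Theorem~\ref{thm:main_constant_prob}. The main obstacle — such as it is — is purely the careful propagation of the nested logarithms: making sure that $\log^3 D$ with $D$ polynomial in $k$ and $\log(1/\delta_1)$ collapses to $\log^3(k)\log\log(1/\delta_1)$ rather than something slightly larger, and confirming that the frequency-estimation sample count really is lower-order so that the final expression is exactly the $D\log^3 D$ term with the $\log(FT)$ factors appended. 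No new ideas are needed beyond Theorem~\ref{thm:frequency_recovery_k_better}, Lemma~\ref{lem:magnitude_recovery_for_CKPS}, and Lemma~\ref{lem:property_of_filter_H}.
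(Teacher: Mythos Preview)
Your proposal is correct and follows essentially the same approach as the paper: decompose into the two sampling calls (\textsc{FrequencyEstimationX} via Theorem~\ref{thm:frequency_recovery_k_better} and \textsc{SignalEstimation} via Lemma~\ref{lem:magnitude_recovery_for_CKPS}), compute $D=B\cdot d=O(k^4\log^2(k)\log^2(1/\delta_1))$ using Lemma~\ref{lem:property_of_filter_H} Property~III, and add. The only minor difference is that the paper's proof quotes the \textsc{SignalEstimation} sample count as $O(\eps^{-1}D\log(D)\log(1/\rho))$ rather than $D\log^3 D$, so the ``fiddly'' collapse of $\log^3 D$ you worry about never arises there; either way the claimed bound is loose enough to absorb it.
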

\begin{proof}

The sample complexity of each steps of Procedure \textsc{ConstantProbFourierInterpolation} in Algorithm \ref{alg:main_k} is as follows:
\begin{itemize}
    \item Line \ref{ln:1_main_con} calls Procedure \textsc{FrequencyEstimationX}. By Theorem \ref{thm:frequency_recovery_k_better}, it takes  \begin{align*}
        O(k^2\log(k)\log(k/\delta_1)\log (FT)\log({\log(FT)} )) 
    \end{align*} samples. %
    \item Line \ref{ln:2_main_con} calls Procedure \textsc{SignalEstimation}. By Lemma \ref{lem:magnitude_recovery_for_CKPS}, it takes in \begin{align*}
        O(\eps^{-1}D\log(D)\log(1/\rho))
    \end{align*} samples, where $ \eps, \rho$ are set to be a universal constant and $D= B\cdot d$. 
\end{itemize}

By Eq.~\eqref{eq:D_k_delta_000}, we have
\begin{align*}
D = O(k^4 \log^2 (k) \log^2(1/\delta_1)).
\end{align*}

Thus, the sample complexity of Procedure \textsc{ConstantProbFourierInterpolation} in Algorithm \ref{alg:main_k} is 
\begin{align*}
    &~ O(k^2\log(k)\log(k/\delta_1)\log (FT)\log({\log(FT)}/{\rho_1} ))  + O(\eps^{-1}D\log(D)\log(1/\rho))\\
   \leq  &~  O(k^2\log(k)\log(k/\delta_1)\log (FT)\log({\log(FT)}/{\rho_1} )) \\
   &~ \quad + O(\eps^{-1}(k^4 \log^2 (k) \log^2(1/\delta_1))\log(k^4 \log^2 (k) \log^2(1/\delta_1))\log(1/\rho))\\
   \leq  &~  O(k^{4} \log^{3} (k) \log^{2}(1/\delta_1)\log(\log(1/\delta_1))\log (FT)\log({\log(FT)})) %
\end{align*}
where the first step follows from Eq.~\eqref{eq:D_k_delta_000}, the second step follows from $\eps=O(1), \rho=O(1), \rho_1=O(1)$. %

\end{proof}

\subsection{Min-of-medians signal estimator}\label{sec:sig_recontr:boost}

In this section, we propose a ``min-of-medians'' estimator for signals that can exponentially boost the success probability. 

\begin{lemma}
Let $R_p\in \mathbb{N}$. For each $i\in [R_p]$, let $y_i(t)$ be a signal independently sampled from some distribution such that with probability at least $0.9$, %
\begin{align*}
    \| y_i(t) - x^*(t) \|_T^2 \lesssim \| g(t) \|_T^2 .
\end{align*}
Let $y(t) := y_{j^*}(t)$ where
\begin{equation*}
j^* := \underset{j \in [R_p] }{\arg\min}~\underset{i\in [R_p]}\median~ \| y_j(t) -y_i(t) \|_T^2.
\end{equation*}

Then, with probability at least $ 1-2^{-\Omega(R_p)}$, 
\begin{align*}
    \| y(t) - x^*(t) \|_T^2 \lesssim \| g(t) \|_T^2.
\end{align*}
\end{lemma}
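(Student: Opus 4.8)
The plan is to use a standard median-of-estimators argument. First I would call a signal $y_i(t)$ \emph{good} if $\|y_i(t)-x^*(t)\|_T^2 \le c_0 \|g(t)\|_T^2$ for the implicit constant $c_0$ from the hypothesis, and \emph{bad} otherwise. Since each $y_i$ is good independently with probability at least $0.9$, a Chernoff bound (Lemma~\ref{lem:chernoff_bound}) shows that with probability at least $1-2^{-\Omega(R_p)}$, strictly more than $R_p/2$ of the $y_i$'s are good; condition on this event for the rest of the argument.

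Next I would record the key deterministic consequence: if $y_i$ and $y_j$ are both good, then by the triangle inequality
\begin{align*}
\|y_i(t)-y_j(t)\|_T \le \|y_i(t)-x^*(t)\|_T + \|x^*(t)-y_j(t)\|_T \le 2\sqrt{c_0}\,\|g(t)\|_T,
\end{align*}
so $\|y_i(t)-y_j(t)\|_T^2 \le 4 c_0 \|g(t)\|_T^2$. Now fix any good index $j$. Since more than half of all indices $i$ are good, and for each such $i$ the distance $\|y_j(t)-y_i(t)\|_T^2$ is at most $4c_0\|g(t)\|_T^2$, the median over $i\in[R_p]$ of $\|y_j(t)-y_i(t)\|_T^2$ is at most $4c_0\|g(t)\|_T^2$. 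Hence the minimizing index $j^*$ also satisfies $\median_{i\in[R_p]} \|y_{j^*}(t)-y_i(t)\|_T^2 \le 4c_0\|g(t)\|_T^2$.

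Finally I would argue that $j^*$ itself must be good. Suppose for contradiction $j^*$ is bad. Since the median of $\|y_{j^*}(t)-y_i(t)\|_T^2$ over $i$ is at most $4c_0\|g(t)\|_T^2$, at least half of the indices $i$ satisfy $\|y_{j^*}(t)-y_i(t)\|_T \le 2\sqrt{c_0}\|g(t)\|_T$; because more than half the indices are good, there is at least one good index $i_0$ among them. Then
\begin{align*}
\|y_{j^*}(t)-x^*(t)\|_T \le \|y_{j^*}(t)-y_{i_0}(t)\|_T + \|y_{i_0}(t)-x^*(t)\|_T \le 2\sqrt{c_0}\|g(t)\|_T + \sqrt{c_0}\|g(t)\|_T,
\end{align*}
so $\|y_{j^*}(t)-x^*(t)\|_T^2 \le 9c_0\|g(t)\|_T^2 \lesssim \|g(t)\|_T^2$, contradicting badness once the constant in the definition of ``good/bad'' is chosen appropriately (or, more cleanly, this directly gives the claimed bound $\|y(t)-x^*(t)\|_T^2 \lesssim \|g(t)\|_T^2$ without needing a contradiction at all). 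Combining with the Chernoff event, the bound holds with probability $1-2^{-\Omega(R_p)}$. The only mildly delicate point — hardly an obstacle — is bookkeeping the constants so that ``at least half the $y_i$ good'' and ``the median is small'' interact correctly; choosing the Chernoff failure probability and using strict majority ($>R_p/2$) makes the pigeonhole step go through cleanly.
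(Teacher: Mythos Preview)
Your proposal is correct and follows essentially the same argument as the paper: Chernoff to get a strict majority of ``good'' $y_i$'s, triangle inequality to bound pairwise distances among good signals, use this to upper-bound the median for any good $j$ (hence for $j^*$), then pigeonhole to find a good $i_0$ within that median-sized ball of $y_{j^*}$ and conclude via one more triangle inequality. The paper conditions on $|S|\ge \tfrac{3}{4}R_p$ rather than a bare strict majority and writes the final step directly rather than via contradiction, but these are cosmetic differences only.
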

\begin{proof}

Let $S=\{i ~|~ \| y_i(t) - x^*(t) \|_T^2 \lesssim \| g(t) \|_T^2\}$. %
By the Chernoff bound, we have that 
\begin{align*}
    \Pr[|S|\geq 3/4 R_p] \geq 1-2^{-\Omega(R_p)}. 
\end{align*}

For the ease of discussion, we suppose $ |S|\geq 3/4 R_p$ holds in the following proof.

Fix any $j\in S$. Then, for any $ q \in S$, we have that
\begin{align}
     \| y_j(t) -y_q(t) \|_T \leq \| y_j(t) -x^*(t) \|_T + \| x^*(t) -y_q(t) \|_T \lesssim \| g(t) \|_T^2, \label{eq:wty_sub_y_S_le}
\end{align}
where the first step follows from triangle inequality, the second step follows from the definition of $S$. 

In other words, there are at least $|S|\geq (3/4)R_p$ elements such that Eq.~\eqref{eq:wty_sub_y_S_le} holds. By the definition of median, we get that
\begin{align}
\underset{i \in [R_p] }{\median}~ \| y_j(t) -y_i(t) \|_T^2  \lesssim \| g(t) \|_T^2. \label{eq:bound_ygood_wty}
\end{align}

By definition of $y(t)$, we have that, 
\begin{align}
\underset{i \in [R_p] }{\median} ~ \| y(t) -y_i(t) \|_T^2  \leq \underset{i \in [R_p] }{\median} ~ \| y_j(t) -y_i(t) \|_T^2 \lesssim \| g(t) \|_T^2, \label{eq:med_le_g}
\end{align}
where the first step follows from the definition of $ y(t)$, the second step follows from Eq.~\eqref{eq:bound_ygood_wty}. 

By the definition of median, we know that there are $R_p/2$ elements $r\in [R_p]$ such that
\begin{align}
\| y(t) -y_r(t) \|_T^2 \leq \underset{i \in [R_p] }{\median} ~ \| y(t) -y_i(t) \|_T^2\lesssim \|g(t)\|_T^2 , \label{eq:exist_good_leq_med}
\end{align}
where the last step follows from Eq.~\eqref{eq:med_le_g}. Since $|S|\geq (3/4)R_p>(1/2)R_p$, there must exists an $r\in S$ such that Eq.~\eqref{eq:exist_good_leq_med} holds.

As a result, we have that
\begin{align*}
\| y(t) -x^*(t) \|_T^2 \leq&~ \| y(t) -y_r(t) \|_T^2  + \| y_r(t) -x^*(t) \|_T^2 \notag \\
\lesssim&~  \| g(t) \|_T^2  + \| y_r(t) -x^*(t) \|_T^2\notag \\
\lesssim&~  \| g(t) \|_T^2 ,
\end{align*}
where the first step follows from triangle inequality,  the second step follows from Eq.~\eqref{eq:med_le_g}, the third step follows from the definition of $S$. 

The lemma is then proved.
\end{proof}

One potential issue in applying the min-of-median signal estimator is that, we may not be able to compute the distances $\|y_i(t)-y_j(t)\|_T^2$ exactly, but we can only estimate then with high accuracy. Therefore, we show that our estimator is robust with respect to approximated distances.

We first show a fact about the approximation of min and median. 

\begin{fact}\label{fac:robust_min_median}
Let $ x_1,\cdots, x_n\in \R_+$, and $ y_1,\cdots, y_n \in \R_+$ such that
for any $ i\in [n]$, $ y_i \in [\alpha\cdot x_i, \beta\cdot x_i] $. Then, we have: 
\begin{itemize}
    \item $ \underset{i\in[n] }{\min} ~ y_i  \in \Big[\alpha\cdot \underset{i\in[n] }{\min}~ x_i, \beta \cdot \underset{i\in[n] }{\min}~x_i\Big] $.
    \item $\underset{i\in[n] }{\median} ~ y_i \in \Big[\alpha\cdot \underset{i\in[n] }{\median} ~ x_i, \beta\cdot \underset{i\in[n] }{\median}~ x_i \Big]$.  
\end{itemize}
\end{fact}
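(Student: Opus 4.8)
\textbf{Proof plan for Fact~\ref{fac:robust_min_median}.}

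The plan is to prove the two statements separately, each being an elementary but careful order-theoretic argument about how multiplicative two-sided bounds interact with the operations $\min$ and $\median$. I will assume throughout that $\alpha \le \beta$ (which is forced, since for any fixed $i$ we need $\alpha x_i \le \beta x_i$ with $x_i > 0$), and that all quantities are nonnegative so that multiplying inequalities by $\alpha$ or $\beta$ preserves their direction.

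\textbf{The $\min$ case.} First I would let $i^* \in \argmin_{i\in[n]} x_i$ and $j^* \in \argmin_{i\in[n]} y_i$. For the upper bound: $\min_i y_i = y_{j^*} \le y_{i^*} \le \beta x_{i^*} = \beta \min_i x_i$, where the first inequality is minimality of $y_{j^*}$ and the second is the hypothesis applied at index $i^*$. For the lower bound: $\min_i y_i = y_{j^*} \ge \alpha x_{j^*} \ge \alpha x_{i^*} = \alpha \min_i x_i$, where the first inequality is the hypothesis at index $j^*$ and the second is minimality of $x_{i^*}$. Combining gives $\min_i y_i \in [\alpha \min_i x_i, \beta \min_i x_i]$.

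\textbf{The $\median$ case.} This is the slightly more delicate part and likely the main obstacle, since $\median$ is a rank statistic rather than an extremal one, so the ``pick the optimizing index'' trick does not apply directly. The clean way is to use the order-statistic characterization: for a multiset of size $n$, $\median$ is the value at a fixed rank position $m$ (say $m = \lceil n/2 \rceil$, using whatever tie-breaking convention the paper's $\median$ uses — I would state it explicitly and note the argument is insensitive to the choice). The key observation is \emph{monotonicity of rank statistics}: if $a_i \le b_i$ for all $i$, then the $m$-th smallest of $(a_i)$ is $\le$ the $m$-th smallest of $(b_i)$. To get the upper bound, apply this with $a_i = y_i$ and $b_i = \beta x_i$: since $y_i \le \beta x_i$ for all $i$, the $m$-th smallest of $(y_i)$ is at most the $m$-th smallest of $(\beta x_i) = \beta \cdot (\text{$m$-th smallest of } (x_i))$, using that scaling by the positive constant $\beta$ commutes with taking order statistics. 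Hence $\median_i y_i \le \beta \median_i x_i$. Symmetrically, with $a_i = \alpha x_i$ and $b_i = y_i$ we get $\alpha \median_i x_i \le \median_i y_i$. I would include a one-line proof of the rank-monotonicity claim: if the $m$-th smallest of $(b_i)$ is $b_{k}$ for some index set realizing the rank, then at least $m$ of the $b$-values are $\le b_k$, and the corresponding $a$-values are also $\le b_k$, so the $m$-th smallest of $(a_i)$ is $\le b_k$; the scaling-commutes-with-order-statistics fact is immediate since multiplication by a positive constant is an increasing bijection on $\R_{\ge 0}$. Assembling the two bounds yields $\median_i y_i \in [\alpha \median_i x_i, \beta \median_i x_i]$, completing the proof.
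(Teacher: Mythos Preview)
Your proposal is correct and essentially matches the paper's proof. The $\min$ case is identical up to swapping the names $i^*$ and $j^*$; for the $\median$ case the paper does exactly your counting argument inline (for each $x_j \le \median_i x_i$ conclude $y_j \le \beta\,\median_i x_i$, hence at least $n/2$ of the $y_j$ lie below that threshold), which is precisely the rank-monotonicity lemma you isolate, just stated without the general order-statistic abstraction.
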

\begin{proof}
\textbf{Part 1:} 
Let $i^* =  \underset{i\in[n] }{\arg\min} ~ y_i$. 
We have that
\begin{align*}
    y_{i^*}
    \geq \alpha\cdot x_{i^*}
    \geq \alpha\cdot \underset{i\in[n] }{\min} ~  x_i.
\end{align*}

Let $j^* =  \underset{j\in[n] }{\arg\min} ~ x_j$.
We have that
\begin{align*}
\underset{j\in[n] }{\min} ~ y_j\leq y_{j^*}
 \leq  \beta \cdot x_{j^*}
 =  \beta \cdot \underset{j\in[n] }{\min} ~ x_j ,
\end{align*}

Hence, 
\begin{align*}
    \underset{i\in[n] }{\min} ~ y_i  \in \Big[\alpha\cdot \underset{i\in[n] }{\min}~ x_i, \beta \cdot \underset{i\in[n] }{\min}~x_i\Big].
\end{align*}

\textbf{Part 2:}
For any $ x_j \leq  \underset{i\in[n] }{\median}~ x_i$, we have that
\begin{align*}
    y_j \leq \beta\cdot x_j \leq \beta\cdot \underset{i\in[n] }{\median} ~ x_i.
\end{align*}
Thus, 
\begin{align*}
    |\{j\in [n] ~|~ y_j  \leq \beta\cdot \underset{i\in[n] }{\median} ~ x_i \}|\geq n/2, 
\end{align*} which implies that 
\begin{align*}
    \underset{i\in[n] }{\median} ~ y_i  \leq \beta\cdot \underset{i\in[n] }{\median} ~ x_i.
\end{align*}

For any $ x_j \geq  \underset{i\in[n] }{\median} ~ x_i$, we have that
\begin{align*}
    y_j \geq \alpha\cdot x_j \geq \alpha\cdot \underset{i\in[n] }{\median}~ x_i.
\end{align*}
Thus, \begin{align*}
    |\{j\in [n] ~|~ y_j  \geq \alpha\cdot \underset{i\in[n] }{\median}~ x_i\}|\geq n/2, 
\end{align*} which implies that 
\begin{align*}
      \underset{i\in[n] }{\median} ~ y_i \geq \alpha\cdot \underset{i\in[n] }{\median} ~ x_i .
\end{align*}

As a result, 
\begin{align*}
  \underset{i\in[n] }{\median}~ y_i \in \Big[\alpha\cdot \underset{i\in[n] }{\median} ~ x_i, \beta\cdot \underset{i\in[n] }{\median}~ x_i\Big].    
\end{align*}
\end{proof}

The following lemma shows that our min-and-median estimator can still exponentially boost the success probability given access to approximated distances.

\begin{lemma}[Robust min-of-median signal  estimator]\label{eq:med_boost_prob}
Let $R_p\in \mathbb{N}$. For each $i\in [R_p]$, let $y_i(t)$ be a signal independently sampled from some distribution such that with probability at least $0.9$,
\begin{align*}
    \| y_i(t) - x^*(t) \|_T^2 \lesssim \| g(t) \|_T^2 .
\end{align*}
Given $d\in \R_+^{R_p\times R_p}$ such that for any $i, j \in [R_p]$,
\begin{align*}
    d_{i,j} \in \Big[\alpha\cdot \| y_i(t) -y_j(t) \|_T^2 , \beta\cdot \| y_i(t) -y_j(t)\|_T^2\Big]. 
\end{align*}
Let $y(t) := y_{j^*}(t)$ where
\begin{equation*}
j^* := \underset{j \in [R_p] }{\arg\min}~\underset{i \in [R_p] }{\median} ~ d_{j,i}.
\end{equation*}

Then, we have that, with probability at least $ 1-2^{-\Omega(R_p)}$, 
\begin{align*}
    \| y(t) - x^*(t) \|_T^2 \lesssim \frac{\beta}{\alpha} \| g(t) \|_T^2 .
\end{align*}
\end{lemma}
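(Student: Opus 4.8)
The plan is to reduce the robust (approximate-distance) version to the exact-distance version that was just proven, using Fact~\ref{fac:robust_min_median} to control how the $\alpha,\beta$ multiplicative errors propagate through the $\min$ and $\median$ operations. The key observation is that the quantity being minimized, $\median_{i}\,d_{j,i}$, is (coordinatewise) a $[\alpha,\beta]$-multiplicative approximation of $\median_{i}\,\|y_j(t)-y_i(t)\|_T^2$, and then the outer $\arg\min$ over these approximated values cannot be too far from the $\arg\min$ over the true values.

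First I would set $S=\{i : \|y_i(t)-x^*(t)\|_T^2 \lesssim \|g(t)\|_T^2\}$ and invoke the Chernoff bound (Lemma~\ref{lem:chernoff_bound}) exactly as in the non-robust proof to conclude $|S|\geq \tfrac34 R_p$ with probability $1-2^{-\Omega(R_p)}$; condition on this event. Fix any $j\in S$. For every $q\in S$, triangle inequality gives $\|y_j(t)-y_q(t)\|_T^2 \lesssim \|g(t)\|_T^2$, so among the $R_p$ values $\{\|y_j(t)-y_i(t)\|_T^2\}_i$ at least $|S|>R_p/2$ are $\lesssim\|g(t)\|_T^2$, hence $\median_i \|y_j(t)-y_i(t)\|_T^2 \lesssim \|g(t)\|_T^2$. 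Applying the median part of Fact~\ref{fac:robust_min_median} to the pair of sequences $x_i=\|y_j(t)-y_i(t)\|_T^2$ and $y_i=d_{j,i}$ gives $\median_i d_{j,i} \leq \beta\cdot \median_i\|y_j(t)-y_i(t)\|_T^2 \lesssim \beta\|g(t)\|_T^2$. Since $j^*$ minimizes $\median_i d_{j,i}$ over $j$, we get $\median_i d_{j^*,i}\leq \median_i d_{j,i}\lesssim \beta\|g(t)\|_T^2$.

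Next I would translate this bound back to true distances: again by Fact~\ref{fac:robust_min_median} (median part, now with $x_i = \|y_{j^*}(t)-y_i(t)\|_T^2$, $y_i=d_{j^*,i}$), we have $\alpha\cdot \median_i\|y_{j^*}(t)-y_i(t)\|_T^2 \leq \median_i d_{j^*,i}$, so $\median_i\|y_{j^*}(t)-y_i(t)\|_T^2 \lesssim \tfrac{\beta}{\alpha}\|g(t)\|_T^2$. By the definition of median at least $R_p/2$ indices $r$ satisfy $\|y(t)-y_r(t)\|_T^2 \leq \median_i\|y(t)-y_i(t)\|_T^2 \lesssim \tfrac{\beta}{\alpha}\|g(t)\|_T^2$, and since $|S|>R_p/2$ at least one such $r$ lies in $S$. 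For that $r$, triangle inequality yields
\begin{align*}
\|y(t)-x^*(t)\|_T^2 \lesssim \|y(t)-y_r(t)\|_T^2 + \|y_r(t)-x^*(t)\|_T^2 \lesssim \frac{\beta}{\alpha}\|g(t)\|_T^2 + \|g(t)\|_T^2 \lesssim \frac{\beta}{\alpha}\|g(t)\|_T^2,
\end{align*}
using $\beta/\alpha \geq 1$. This completes the argument.

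The only real subtlety — and the step I would be most careful about — is making sure the multiplicative approximation is applied in the correct direction at each use of Fact~\ref{fac:robust_min_median}: we need the \emph{upper} bound $\median_i d_{j^*,i}\le\beta\cdot(\text{true median})$ only for a \emph{good} $j\in S$ (to certify the $\arg\min$ value is small), and we need the \emph{lower} bound $\alpha\cdot(\text{true median for }j^*)\le\median_i d_{j^*,i}$ to convert the small approximate value for the selected $j^*$ back into a small true median. Everything else is a verbatim repeat of the non-robust proof with an extra $\beta/\alpha$ factor carried along, so no new calculation is needed.
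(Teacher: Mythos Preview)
Your proposal is correct and follows essentially the same approach as the paper's proof. The only cosmetic difference is where the lower bound $\alpha$ is applied: you convert the approximate median $\median_i d_{j^*,i}$ back to the true median $\median_i\|y_{j^*}-y_i\|_T^2$ and then pick $r\in S$ below that true median, whereas the paper first picks $r\in S$ with $d_{j^*,r}\le \median_i d_{j^*,i}$ and then converts that single entry via $\|y-y_r\|_T^2 \le \tfrac{1}{\alpha} d_{j^*,r}$; both orderings are valid and yield the same $\beta/\alpha$ loss.
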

\begin{proof}
Let $S=\{i ~|~ \| y_i(t) - x^*(t) \|_T^2 \lesssim \| g(t) \|_T^2\}$. %
By the Chernoff bound, we have that 
\begin{align*}
    \Pr[|S|\geq 3/4 R_p] \geq 1-2^{-\Omega(R_p)}. 
\end{align*}

For the ease of discussion, we suppose $ |S|\geq 3/4 R_p$ holds in the following proof.

Fix any $i^*\in S$. For any $ q \in S$, we have that
\begin{align}
     \| y_{i^*}(t) -y_q(t) \|_T \leq \| y_{i^*}(t) -x^*(t) \|_T + \| x^*(t) -y_q(t) \|_T \lesssim \| g(t) \|_T^2, \label{eq:000_wty_sub_y_S_le}
\end{align}
where the first step follows from triangle inequality, the second step follows from the definition of $S$. 

By the definition of median, since $|S|>R_p/2$, we know that
\begin{align}
\underset{i \in [R_p] }{\median} ~ \| y_{i^*}(t) -y_i(t) \|_T^2   \lesssim \| g(t) \|_T^2. \label{eq:000_bound_ygood_wty}
\end{align}

Then, we have that 
\begin{align}
\underset{i \in [R_p] }{\median} ~ d_{j^*,i} \leq &~\underset{i \in [R_p] }{\median} ~ d_{i^*,i} \notag \\
\leq &~
\beta \cdot \underset{i \in [R_p] }{\median} ~  \| y_{i^*}(t) -y_i(t) \|_T^2 \notag \\
\lesssim&~ \beta \cdot  \| g(t) \|_T^2, \label{eq:000_med_le_g}
\end{align}
where the first step follows from the definition of $ j^*$, the second step follows from Fact~\ref{fac:robust_min_median}, the third step follows from Eq.~\eqref{eq:000_bound_ygood_wty}. 

Since $|S|>R_p/2$, by the definition of median, there must exists an $r \in S$ such that
\begin{align}
d_{j^*, r} \leq \underset{i \in [R_p] }{\median}~  d_{j^*,i}\lesssim \beta \cdot \|g(t)\|_T^2.  \label{eq:000_exist_good_leq_med}
\end{align}

As a result, we have that
\begin{align*}
\| y(t) -x^*(t) \|_T^2 \leq&~ \| y(t) -y_r(t) \|_T^2  + \| y_r(t) -x^*(t) \|_T^2 \notag \\
\leq&~ \frac{1}{\alpha} v_{j^*, r}  + \| y_r(t) -x^*(t) \|_T^2 \notag \\
\lesssim&~  \frac{\beta}{\alpha}\| g(t) \|_T^2  + \| y_r(t) -x^*(t) \|_T^2\notag \\
\lesssim&~ \frac{\beta}{\alpha} \| g(t) \|_T^2 ,
\end{align*}
where the first step follows from triangle inequality, the second step follows from the definition of $d$, the third step follows from  Eq.~\eqref{eq:000_exist_good_leq_med}, the forth step follows from Eq.~\eqref{eq:000_med_le_g}, the fifth step follows from $r\in S$. 

The proof of the lemma is then completed.
\end{proof}

\subsection{Main algorithm for Fourier interpolation}\label{sec:sig_recontr:high_prob}

In this section, we present our main theorem---a time and sample efficient Fourier interpolation algorithm with high success probability. The pseudocode is given in Algorithm~\ref{alg:main_k}.

\begin{algorithm}[!ht]\caption{Signal estimation algorithm} \label{algo:sig_est_1d_accuracy}
\begin{algorithmic}[1]
\Procedure{\textsc{WeightedSketch}}{$m, k, T$}
\State $ c\leftarrow\Theta(\log(k)^{-1})$
\State $D(t)$ is defined as follows:
\begin{align*} D(t)\leftarrow \begin{cases} {c}\cdot (1-|t/T| )^{-1}T^{-1}, & \text{ for } |t| \le T(1-{1}/k) , \\ c \cdot  k T^{-1}, & \text{ for } |t|\in [T(1-{1}/k), T] . \end{cases} \end{align*} 
\State $S_0\gets m$ i.i.d. samples from $D$ \label{ln:s_0_smaple_D_ws}
\For{$t\in S_0$}\label{ln:t_s_0_ws}
    \State $w_t\gets \frac{1}{2T\cdot |S_0|\cdot D(t)}$
\EndFor
\State\label{step:new_dist} Set a new distribution $D'(t)\leftarrow w_t/\sum_{t'\in S_0} w_{t'}$ for all $t\in S_0$
\State \Return $D'$
\EndProcedure
\Procedure{\textsc{SignalEstimation}}{$x, F, T, L$} 

\State $\{f_1,f_2,\cdots,f_{B}\} \leftarrow L $ \Comment{$L\in \R^B$}
\State  $d \leftarrow O( \Delta T + k^3 \log k + k \log 1/\delta)$
\State $s, \{t_1,t_2,\cdots, t_s\},w  \leftarrow\textsc{WeightedSketch}(O(Bd\log(Bd)), Bd, T)$ \Comment{$w \in \R^{s }$} %
\State $A_{i, B\cdot j_2 + j_1} \leftarrow t_i^{j_2}\cdot \exp(2\pi\i f_{j_1} t_i)$, $A\in \C^{s \times B}$
\State $b  \leftarrow (x(t_1),x(t_2),\cdots, x(t_s))^\top$
\State Solving the following weighted linear regression\label{ln:sig_est_regress_fast}
\begin{align*}
v' \gets \underset{v' \in \C^{ B d } }{\arg\min}\| \sqrt{w} \circ (A v'- b)\|_2.  
\end{align*}
\State \Return $y(t) \leftarrow \sum_{j_1=1}^{B}\sum_{j_2=1}^d v_{B\cdot j_2 +j_1}' \cdot t^{j_2} \cdot \exp(2\pi \i f_{j_1}' t)$.
\EndProcedure

\end{algorithmic}
\end{algorithm}

\begin{algorithm}[!ht]
\caption{Fourier-sparse signal interpolation}\label{alg:main_k}
\begin{algorithmic}[1]
\Procedure{$\textsc{ConstantProbFourierInterpolation}$}{$x,H,G,T,F$}
	\State \label{ln:1_main_con} $ L \leftarrow \textsc{FrequencyEstimationX}(x,H,G,T,F)$ \Comment{$L\in \R^{B}$}
	\State \label{ln:2_main_con} $y(t) \leftarrow \textsc{SignalEstimation}(x,\eps, k, F, T , L)$ 
\State \Return $y(t)$
\EndProcedure
\Procedure{$\textsc{HighProbFourierInterpolation}$}{$x,H,G,T,F$}
\State $R_p \leftarrow \log(1/\rho)$
\For{$i\in [R_p]$}\label{ln:hpfi_for_i_R}
\State $y_i(t)\leftarrow \textsc{ConstantProbFourierInterpolation}(x,H,G,T,F)$ \label{ln:hpfi_cpfi}
\EndFor
\State \label{ln:hpfi_ms} $y(t) \leftarrow \textsc{MergeSignal}(y_1(t), y_2(t), \cdots, y_{R_p}(t))$ %
\State \Return $y(t)$
\EndProcedure
\Procedure{{MergeSignal}}{$y_1(t), y_2(t), \cdots, y_{R_p}(t)$} %
\State  $d \leftarrow O( \Delta T + k^3 \log k + k \log 1/\delta)$
\For{$i\in [R_p]$}\label{ln:mergeS_i_R_p}
\For{$j\in [R_p]$}\label{ln:mergeS_j_R_p}
\State $s, \{t_1,t_2,\cdots, t_s\},w  \leftarrow\textsc{WeightedSketch}(O(Bd\log(Bd)\log(R_p^2/\rho)), 2 \cdot B d, T)$ 
\State \Comment{$w \in \R^{s }$} \label{ln:merge_ws}
\State $S\leftarrow\{t_1,t_2,\cdots, t_s\}$
\State $Y \leftarrow \textsc{MixedPolynomialEvaluation}(y_i-y_j, S)$ \Comment{$Y\in \C^{s}$} \label{ln:merge_mpe}
\State $\|y_i(t)-y_j(t)\|^2_{S,w} \leftarrow \sum_{l=1}^s w_l \cdot |Y_l|^2$
\EndFor
\State $\textsf{med}_i \leftarrow {\median}_{j\in [R_p]}\{ \|y_i-y_j\|^2_{S,w}\}$
\EndFor
\State $ i^* \leftarrow {\arg\min}_{i\in [R_p]} \{\textsf{med}_i\}$ \label{ln:merge_med}
\State \Return $ y_{i^*}$
\EndProcedure
\end{algorithmic}
\end{algorithm}

\begin{theorem}[Main Fourier interpolation algorithm]\label{thm:main}
Let $x(t) = x^*(t) + g(t)$, where $x^*$ is $k$-Fourier-sparse signal with frequencies in $[-F, F]$.
Given samples of $x$ over $[0, T]$, there is an algorithm (Procedure \textsc{HighProbFourierInterpolation}) uses 
\begin{align*}
    O(k^{4}  \log^{6}(k/\delta)\log (FT)\log({\log(FT)}) \log(1/\rho))
\end{align*}
samples, runs in 
\begin{align*}
    O(k^{4\omega} \log^{4\omega+2}(k/\delta)\log (FT)\log({\log(FT)})\log^5(1/\rho))
\end{align*}
time, and
outputs an $O(k^4  \log^4(k/\delta))$-Fourier-sparse signal $y(t)$ such that with probability at least $1-\rho$,
\begin{align*}
    \|{y - x^*}\|_T \lesssim \|{g}\|_T + \delta\|{x^*}\|_T.
\end{align*}
\end{theorem}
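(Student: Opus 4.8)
\textbf{Proof proposal for Theorem~\ref{thm:main}.}
The plan is to amplify the constant-success-probability algorithm of Theorem~\ref{thm:main_constant_prob} by independent repetition followed by the robust min-of-medians aggregation of Lemma~\ref{eq:med_boost_prob}; this is precisely Procedure \textsc{HighProbFourierInterpolation} in Algorithm~\ref{alg:main_k}. First I would set $R_p=\Theta(\log(1/\rho))$ and call \textsc{ConstantProbFourierInterpolation} that many times, each with fresh independent samples of $x$, obtaining candidates $y_1,\dots,y_{R_p}$. By Theorem~\ref{thm:main_constant_prob} each $y_i$ is an $O(k^4\log^4(k/\delta))$-Fourier-sparse signal and, with probability at least $0.6$, satisfies $\|y_i-x^*\|_T^2\lesssim \|g\|_T^2+\delta\|x^*\|_T^2=:\N^2$, and these events are independent over $i$. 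Since $0.6>1/2$, a Chernoff bound (Lemma~\ref{lem:chernoff_bound}) gives that the ``good set'' $S=\{i:\|y_i-x^*\|_T^2\lesssim \N^2\}$ has $|S|>R_p/2$ except with probability $2^{-\Omega(R_p)}\le\rho/3$. Conditioned on this, I would apply Lemma~\ref{eq:med_boost_prob} (a version with $\N^2$ in place of $\|g\|_T^2$; its proof, being only triangle inequalities plus the above Chernoff step, is unchanged, and $|S|>R_p/2$ is all it needs) to the approximate pairwise distances produced by \textsc{MergeSignal}, outputting $y=y_{j^*}$ with $j^*=\argmin_j\median_i d_{j,i}$. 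This yields $\|y-x^*\|_T^2\lesssim(\beta/\alpha)\,\N^2$, and since $y$ is one of the $y_i$'s it inherits the $O(k^4\log^4(k/\delta))$ sparsity; rescaling $\delta$ then gives the claimed recovery guarantee $\|y-x^*\|_T\lesssim\|g\|_T+\delta\|x^*\|_T$.

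The one substantive ingredient is producing, fast and without drawing any \emph{new} samples of $x$, approximate distances $d_{i,j}\in[\alpha\|y_i-y_j\|_T^2,\beta\|y_i-y_j\|_T^2]$ with $\beta/\alpha=O(1)$. Here I would follow \textsc{MergeSignal}: use \textsc{WeightedSketch} to draw a set $S$ of evaluation \emph{locations} of size $m=O(\wt k\log(\wt k)\log(R_p^2/\rho))$ from the distribution $D$ of Eq.~\eqref{eq:def_dist_D} together with the reweighting $w$, where $\wt k=O(k^4\log^4(k/\delta))$. Because $y_i-y_j$ is a Fourier-sparse (equivalently, Fourier--polynomial mixed) signal of sparsity $O(\wt k)$, the energy-estimation bound for Fourier-sparse signals (Lemma~\ref{lem:x_energy_preserving_000}, an instance of Lemma~\ref{lem:link_weight_sampling_size__condition_num}) gives $\|y_i-y_j\|_{S,w}^2\in(1\pm\tfrac12)\|y_i-y_j\|_T^2$, and a union bound over the $\binom{R_p}{2}$ pairs makes this simultaneous for all $i,j$ with probability $\ge1-\rho/3$. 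The values $(y_i-y_j)(t)$ for $t\in S$ are computed from the already-reconstructed representations via the multipoint-evaluation routine of Algorithm~\ref{alg:multipoint_evaluation} (Fact~\ref{fac:multipoint_evaluation_of_polynomial}, Lemma~\ref{lem:time_mpe}); thus \textsc{MergeSignal} takes no samples of $x$ whatsoever, and feeding $d_{i,j}:=\|y_i-y_j\|_{S,w}^2$ into Lemma~\ref{eq:med_boost_prob} with $\alpha=\tfrac12,\beta=\tfrac32$ loses only a constant factor in the error.

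It then remains to add the three failure probabilities ($\le\rho/3$ each) and to bookkeep the complexities. The sample complexity is exactly $R_p$ times that of \textsc{ConstantProbFourierInterpolation}, namely $R_p\cdot O(k^4\log^3(k)\log^2(1/\delta_1)\log\log(1/\delta_1)\log(FT)\log\log(FT))$ by Lemma~\ref{lem:smaple_cpfi}; using $\delta_1=\poly(\delta/k)$ so that $\log(1/\delta_1)=O(\log(k/\delta))$, and $R_p=O(\log(1/\rho))$, this is $O(k^4\log^6(k/\delta)\log(FT)\log\log(FT)\log(1/\rho))$. For the running time, the $R_p$ calls cost $R_p\cdot O(k^{4\omega}\log^{2\omega+1}(k)\log^{2\omega}(1/\delta_1)\log\log(1/\delta_1)\log(FT)\log\log(FT))$ by Lemma~\ref{lem:time_cpfi}, while \textsc{MergeSignal} does $O(R_p^2)$ multipoint evaluations of degree-$O(\wt k)$ mixed polynomials at $m$ points plus $O(R_p^2 m)$ arithmetic, which by Lemma~\ref{lem:time_mpe} is $\wt O(R_p^2\,\wt k)$ up to $\poly\log(R_p/\rho)$ factors and hence lower order; absorbing $\poly\log$ factors and $\log(1/\delta_1)=O(\log(k/\delta))$ yields $O(k^{4\omega}\log^{4\omega+2}(k/\delta)\log(FT)\log\log(FT)\log^5(1/\rho))$.

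I expect the main obstacle to be the robustness part of Lemma~\ref{eq:med_boost_prob}: one must verify that working with only \emph{approximate} medians of distances still isolates a good candidate — via fixing a good anchor $i^*\in S$, bounding $\median_i d_{i^*,i}$ through $\median_i\|y_{i^*}-y_i\|_T^2\lesssim\N^2$, and then extracting an $r\in S$ with $d_{j^*,r}$ small, all using $|S|>R_p/2$ and the triangle inequality — and that the final error degrades by only the factor $\beta/\alpha$. Coupled with this is the need to make the energy estimate of Lemma~\ref{lem:x_energy_preserving_000} valid on \emph{every} difference signal $y_i-y_j$ simultaneously without inflating the sample budget; the key observation there is that the merging step reuses no samples of $x(t)$ at all, since the $y_i$ are fully known after reconstruction.
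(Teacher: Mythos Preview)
Your proposal is correct and follows essentially the same approach as the paper's own proof: repeat \textsc{ConstantProbFourierInterpolation} $R_p=\Theta(\log(1/\rho))$ times, estimate all pairwise $\|y_i-y_j\|_T^2$ via the weighted sampling of Lemma~\ref{lem:x_energy_preserving_000} (at the correct sparsity $\wt k=O(k^4\log^4(k/\delta))$, with a union bound over the $\binom{R_p}{2}$ pairs), and apply the robust min-of-medians Lemma~\ref{eq:med_boost_prob} with $\alpha=\tfrac12,\beta=\tfrac32$; the complexity bookkeeping via Lemmas~\ref{lem:time_cpfi}, \ref{lem:smaple_cpfi}, \ref{lem:time_ms} is the same. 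You are in fact a bit more careful than the paper in two places: you correctly note that the per-run guarantee is $\|y_i-x^*\|_T^2\lesssim\N^2$ (not just $\|g\|_T^2$) and that Lemma~\ref{eq:med_boost_prob} goes through verbatim with $\N^2$; and you observe that $|S|>R_p/2$ already suffices for the median argument, so the constant $0.6$ from Theorem~\ref{thm:main_constant_prob} is enough (the paper's proof quotes $0.9$ and $|S|\ge\tfrac34 R_p$, which is stronger than needed).
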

\begin{proof}

We first prove the correctness of the algorithm.

Let
\begin{align*} D(t):= \begin{cases} {c}\cdot (1-|t/T| )^{-1}T^{-1}, & \text{ for } |t| \le T(1-{1}/k) , \\ c \cdot  k T^{-1}, & \text{ for } |t|\in [T(1-{1}/k), T] . \end{cases} \end{align*} 

Let $ y_1(t),\cdots, y_{R_p}(t)$ be the outputs of $R_p$ independent runs of Procedure \textsc{ConstantProbFourierInterpolation} in Algorithm~\ref{alg:main_k}. By Theorem \ref{thm:main_constant_prob}, we have that for any $j\in [R_p]$ with probability at least $ 0.9$, 
\begin{align*}
    \|y_{j}(t)-x^*(t)\|_T^2 \lesssim \|g(t)\|_T^2. 
\end{align*}

Let $S=\{t_1,\dots,t_s\}$ be $s=O(k\log(k)\log(R_p^2/\rho))$ i.i.d. samples from $D(t)$, and let $w_i=1/(TsD(t_i))$ for $i\in [s]$.
By Lemma \ref{lem:x_energy_preserving_000}, for any $i,j\in [R_p]$, %
with probability at least $ 1- \rho/ R_p^2$, 
\begin{align*}
    \|y_i(t)-y_j(t)\|_{S, w}^2 \in [1/2, 3/2] \cdot \|y_i(t)-y_j(t)\|_T^2. 
\end{align*}

Let
\begin{align*}
    j^* = \underset{j \in [R_p] }{\arg\min}~\underset{i\in [R_p]}{\median}~ \| y_j(t) -y_i(t) \|_T^2,
\end{align*}
and let $ y(t) := y_{j^*}(t)$. 

By Lemma \ref{eq:med_boost_prob}, we have that with probability at least $ 1- 2^{-\Omega(R_p)}$,
\begin{align*}
 \|y_{j}(t)-x^*(t)\|_T^2 \lesssim \frac{3/2}{1/2} \|g(t)\|_T^2 \eqsim \|g(t)\|_T^2.    
\end{align*}

By setting $ R_p = \log(1/\rho)$, we get the desired result. The correctness is then proved.

The time complexity follows from Lemma~\ref{lem:t_complexity_main_k}. And the sample complexity follows from Lemma~\ref{lem:s_complexity_main_k}.

The proof of the theorem is completed.
\end{proof}

In the remaining of this section, we prove the time and sample complexities of Procedure \textsc{HighProbFourierInterpolation} in Algorithm~\ref{alg:main_k}.

The following two lemmas show the time complexity of Procedure \textsc{MergeSignal} in Algorithm \ref{alg:main_k}, which is used to boost the success probability of Fourier interpolation algorithm.

\begin{lemma}[Time complexity of Procedure \textsc{WeightedSketch} in Algorithm \ref{algo:sig_est_1d_accuracy}]
 \label{lem:time_ws}
Procedure \textsc{WeightedSketch} in Algorithm \ref{algo:sig_est_1d_accuracy} runs in  \begin{align*}
    O(\eps^{-2}k\log(k)\log(1/\rho)) 
\end{align*} time.
\end{lemma}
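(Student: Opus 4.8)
The plan is to analyze Procedure \textsc{WeightedSketch} in Algorithm~\ref{algo:sig_est_1d_accuracy} line by line and sum up the cost of each step. First I would observe that the procedure takes three inputs $m$, $k$, $T$, where in the relevant call we have $m = O(\eps^{-2}\cdot (\text{something})\log(k)\log(1/\rho))$; the statement claims the running time is $O(\eps^{-2}k\log(k)\log(1/\rho))$, which matches the regime where the number of samples $m$ drawn is itself $O(\eps^{-2}k\log(k)\log(1/\rho))$ (this is the sample complexity dictated by Lemma~\ref{lem:link_weight_sampling_size__condition_num} and Lemma~\ref{lem:x_energy_preserving_000} for the Fourier-sparse energy preservation with error $\eps$ and failure probability $\rho$). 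So the key quantity is $m = O(\eps^{-2}k\log(k)\log(1/\rho))$, and the goal reduces to showing every step of the procedure runs in $O(m)$ time (up to lower-order factors absorbed into the big-$O$).

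Next I would go through the steps: setting $c = \Theta(\log(k)^{-1})$ takes $O(1)$ time; defining the piecewise distribution $D(t)$ is an $O(1)$-description object (we only need to evaluate it pointwise later); line~\ref{ln:s_0_smaple_D_ws} draws $|S_0| = m$ i.i.d.\ samples from $D$, which takes $O(m)$ time assuming we can sample from the explicit two-piece density $D$ in $O(1)$ time per sample (inverse-CDF sampling on each piece, since $D$ is a scaled $(1-|t/T|)^{-1}$ on the bulk and uniform near the boundary); the for-loop over $t \in S_0$ in line~\ref{ln:t_s_0_ws} performs one division per sample to compute $w_t = 1/(2T|S_0|D(t))$, so $O(m)$ total; line~\ref{step:new_dist} normalizes the weights by first computing $\sum_{t' \in S_0} w_{t'}$ in $O(m)$ time and then dividing each $w_t$ by the sum, again $O(m)$; finally returning $D'$ is $O(1)$ (or $O(m)$ if we count writing out the table). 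Summing these gives $O(m) = O(\eps^{-2}k\log(k)\log(1/\rho))$, which is exactly the claimed bound.

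The main obstacle — really the only subtle point — is justifying that sampling a single point from $D$ costs $O(1)$ time. I would argue this by noting that $D$ is a mixture of two simple distributions: with the appropriate probability we sample from the density $\propto (1-|t/T|)^{-1}$ on $[-T(1-1/k), T(1-1/k)]$, whose CDF is $\propto -\log(1-|t/T|)$ and hence invertible in closed form via a logarithm/exponential, and otherwise we sample uniformly from $[T(1-1/k), T] \cup [-T, -T(1-1/k)]$; both operations are $O(1)$ arithmetic operations in the standard real-RAM model used throughout the paper. I should also double-check the normalization factor $c = \Theta(\log(k)^{-1})$ is consistent with Lemma~\ref{lem:dist_well_define}, which it is. Assuming this $O(1)$-per-sample cost (consistent with the model implicitly used in all the other running-time lemmas in the paper, e.g.\ Lemma~\ref{lem:time:algo:significant_samples_z}), the proof is a routine accounting: list the steps, bound each by $O(m)$, and conclude. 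I would write it in the same step-by-step itemized style as the other running-time lemmas in Section~\ref{sec:freq_est} and Section~\ref{sec:gen_sigi_sam}, citing the parameter setting $m = O(\eps^{-2}k\log(k)\log(1/\rho))$ as the source of the final bound.
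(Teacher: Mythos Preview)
Your proposal is correct and follows essentially the same approach as the paper: itemize the steps of \textsc{WeightedSketch}, observe that each costs $O(|S_0|)$ with $|S_0| = m = O(\eps^{-2}k\log(k)\log(1/\rho))$, and sum. The paper's proof is terser (it only mentions the sampling line and the for-loop, skipping the normalization step and the $O(1)$-per-sample justification you spell out), but the structure and conclusion are identical.
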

\begin{proof}

Procedure \textsc{WeightedSketch} contains the following steps:
\begin{itemize}
    \item In line \ref{ln:s_0_smaple_D_ws}, sampling $ S_0$ takes $ O(\eps^{-2}k\log(k)\log(1/\rho)) $ times.
    \item In line \ref{ln:t_s_0_ws}, the for loop repeats $ |S_0|$ times, and each takes $ O(1)$ times. 
\end{itemize}

Following from the setting in the algorithm, we have that
\begin{align*}
   |S_0| = O(\eps^{-2}k\log(k)\log(1/\rho)). 
\end{align*}

So, the time complexity of Procedure \textsc{WeightedSketch} in Algorithm \ref{algo:sig_est_1d_accuracy} is 
\begin{align*}
O(\eps^{-2}k\log(k)\log(1/\rho)) + |S_0|\cdot O(1) 
=  O(\eps^{-2}k\log(k)\log(1/\rho)).
\end{align*}

\end{proof}

\begin{lemma}[Time complexity of Procedure \textsc{MergeSignal} in Algorithm \ref{alg:main_k}]
 \label{lem:time_ms}
 Procedure \textsc{MergeSignal} in Algorithm \ref{alg:main_k} runs in 
 \begin{align*}
      O(  k^5 \log^{6} (k) \log^4(1/\delta_1)\log^5(1/\rho) )
 \end{align*} time.
\end{lemma}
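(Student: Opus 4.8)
The plan is to bound the running time of \textsc{MergeSignal} by decomposing it into its constituent loops and invoking the time complexities of the subroutines already established. The procedure has a double \texttt{for}-loop over $i,j\in[R_p]$ (lines \ref{ln:mergeS_i_R_p}--\ref{ln:mergeS_j_R_p}), so there are $R_p^2$ iterations of the inner body. Inside the body, the dominant operations are: (i) the call to \textsc{WeightedSketch} in line \ref{ln:merge_ws}, which by Lemma~\ref{lem:time_ws} runs in $O(\eps^{-2}k\log(k)\log(R_p^2/\rho))$ time with the sample size parameter $m=O(Bd\log(Bd)\log(R_p^2/\rho))$; (ii) the call to \textsc{MixedPolynomialEvaluation} in line \ref{ln:merge_mpe}, whose time is governed by Lemma~\ref{lem:time_mpe}; and (iii) the summation computing $\|y_i(t)-y_j(t)\|^2_{S,w}$, which takes $O(s)$ time where $s$ is the sketch size. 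After the inner loop there is an $O(R_p)$-time median computation for $\textsf{med}_i$, and finally an $O(R_p)$-time $\arg\min$ in line \ref{ln:merge_med}; these are lower-order.

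The key quantitative inputs I would assemble first are the parameter sizes. From the setting in the algorithm, $B=O(k)$ and $d=O(\Delta T + k^3\log k + k\log(1/\delta))$, and by Lemma~\ref{lem:property_of_filter_H} Property~\RN{3} combined with $\Delta = k\cdot|\supp(\wh H)|$, one gets (exactly as in Eq.~\eqref{eq:D_k_delta_000}) that $D := Bd = O(k^4\log^2(k)\log^2(1/\delta_1))$. The sketch size is $s = O(Bd\log(Bd)\log(R_p^2/\rho)) = O(D\log D\log(R_p^2/\rho))$. For the mixed-polynomial evaluation of $y_i-y_j$, each $y_i$ is an $O(k^4\log^4(k/\delta))$-Fourier-sparse signal of the form $\sum P_\ell(t)e^{2\pi\i f_\ell t}$ whose polynomials have degree $O(d)$ and $O(B)$ distinct frequencies, so by Lemma~\ref{lem:time_mpe} the evaluation runs in $O(D\log^3 D)$ time (using that we evaluate at $s\gtrsim D$ points, so $\max\{s,\deg\}$ per-frequency sums to $O(s)$ up to logs). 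I would then combine: total time $= R_p^2\cdot\big(O(D\log D\log(R_p^2/\rho)) + O(D\log^3 D) + O(s)\big) + O(R_p^2)\cdot O(1) + O(R_p)$, and simplify using $s=O(D\log D\log(R_p^2/\rho))$ and $R_p=\log(1/\rho)$.

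Putting in $D=O(k^4\log^2(k)\log^2(1/\delta_1))$ and $R_p=\log(1/\rho)$, the per-iteration cost is $O(D\log^3 D\cdot\log(1/\rho))$ which is $O(k^4\log^5(k)\log^5(1/\delta_1)\log(1/\rho))$ up to the $\log\log$ factors absorbed into $\log^c(k/\delta)$; multiplying by $R_p^2=\log^2(1/\rho)$ gives the stated bound $O(k^5\log^6(k)\log^4(1/\delta_1)\log^5(1/\rho))$ after matching the paper's bookkeeping of how the $\log D = O(\log k + \log\log(1/\delta_1))$ factors and the extra factor of $k$ (coming from the sparsity of $y_i-y_j$ being $O(k\cdot d)$ rather than $d$, since \textsc{MixedPolynomialEvaluation} loops over all $k$ frequency groups) are folded in. I would write the chain of inequalities mirroring the style of Lemma~\ref{lem:time_cpfi}: first substitute the subroutine bounds, then substitute $D=O(k^4\log^2(k)\log^2(1/\delta_1))$ and $R_p=\log(1/\rho)$, then collect terms.

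The main obstacle, and the step to be most careful about, is tracking the polylogarithmic and the sparsity bookkeeping so that the final exponents on $k$, $\log(1/\delta_1)$, and $\log(1/\rho)$ exactly match the claimed $O(k^5\log^6(k)\log^4(1/\delta_1)\log^5(1/\rho))$ --- in particular justifying the extra factor of $k$ beyond $D$ (it comes from \textsc{MixedPolynomialEvaluation} iterating over the $k$ distinct frequencies in line \ref{ln:for_mpe_j_k}, each contributing an $O(d\log^3 d)$ multipoint-evaluation cost, so the evaluation is $O(k\cdot d\log^3 d)$ rather than $O(D\log^3 D)$) and confirming that the $\log^3$ from multipoint evaluation plus the two $\log(1/\delta_1)$ factors in $d$ give $\log^4(1/\delta_1)$ (not $\log^6$) while the $\log(k)$ powers accumulate to $6$. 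Everything else is a routine substitution of previously-established lemmas.
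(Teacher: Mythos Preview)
Your overall approach mirrors the paper's: decompose \textsc{MergeSignal} into its $R_p^2$ inner iterations, bound the costs of \textsc{WeightedSketch} and \textsc{MixedPolynomialEvaluation} via the cited lemmas, substitute $B=O(k)$, $Bd=O(k^4\log^2(k)\log^2(1/\delta_1))$, and $R_p=\log(1/\rho)$, and collect terms. That skeleton is correct and matches the paper.

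There is, however, a concrete accounting error in your \textsc{MixedPolynomialEvaluation} bound, and it is exactly the step that produces the $k^5$. In Lemma~\ref{lem:time_mpe} the per-frequency cost is $O(\max\{d',\deg(P_j)\}\log^3(\max\{d',\deg(P_j)\}))$ where $d'$ is the \emph{number of evaluation points}, not the polynomial degree. Here $d'=s=O(Bd\log(Bd)\log(R_p^2/\rho))$, so $\max\{d',\deg(P_j)\}=d'$, and the paper bounds this (Eq.~\eqref{eq:max_d_deg_P}) by $O(k^4\log^3(k)\log^3(1/\delta_1)\log^2(1/\rho))$. Summing over the $O(k)$ frequencies of $y_i-y_j$ gives $O(k\cdot d'\log^3 d')$, and it is $k\cdot d'\sim k\cdot D\sim k^5$ that yields the fifth power. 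Your justification ``the evaluation is $O(k\cdot d\log^3 d)$ rather than $O(D\log^3 D)$'' uses the degree $d$ in place of the point count $d'$; but $k\cdot d\approx Bd=D$, so those two expressions are the same order, and your stated evaluation cost is only $O(k^4\cdot\text{polylog})$. Multiplying that by $R_p^2=\log^2(1/\rho)$ cannot recover the missing factor of $k$. Replacing $d$ by $d'$ (the sketch size $s$) in that step fixes the argument and makes the rest of your chain go through exactly as in the paper.
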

\begin{proof}

In each call of the Procedure \textsc{MergeSignal} in Algorithm \ref{alg:main_k},  the for loop (Line \ref{ln:mergeS_i_R_p}) repeats $ R_p$ times, each consisting of the following steps:
\begin{itemize}
    \item In Line \ref{ln:mergeS_j_R_p}, the for loop repeats $ R_p$ times and each iteration has the following steps:
    \begin{itemize}
        \item Line \ref{ln:merge_ws} calls Procedure \textsc{WeightedSketch}. By Lemma \ref{lem:time_ws}, it runs in $$O(Bd\log(Bd)\log(R_p^2/\rho)) $$ time.
        \item Line \ref{ln:merge_mpe} calls Procedure \textsc{MixedPolynomialEvaluation}. By Lemma \ref{lem:time_mpe}, it runs in \begin{align*}
            O\Big(\sum_{j=1}^k {\max}\{d', \mathrm{deg}(P_j)\} \log^3({\max}\{d', \mathrm{deg}(P_j)\})\Big)
        \end{align*} time, where $d' = O(Bd\log(Bd)\log(R_p^2/\rho))$ and $\mathrm{deg}(P_j) = d$.
    \end{itemize}
    \item Line \ref{ln:merge_med} computes the median in $R_p\log(R_p) $ time. 
\end{itemize}

Following from the parameter setting in the algorithm, we have that 
\begin{align*}
    B =&~ O(k),\\
    d =&~ O( \Delta T + k^3 \log k + k \log 1/\delta).
\end{align*}

By Lemma \ref{lem:property_of_filter_H} Property \RN{3}, we have that
\begin{align*}
    \Delta =  k \Delta_h = k |\supp(\wh{H}(f))| / T= O(k^3 \log^2 (k) \log^2(1/\delta_1)/T) .
\end{align*}

As a result, we have that 
\begin{align}
    B\cdot d = O(k^4 \log^2 (k) \log^2(1/\delta_1)) \label{eq:D_k_delta_002}
\end{align}

Moreover, we have that %
\begin{align}
 {\max}\{d', \mathrm{deg}(P_j)\} =&~   O(Bd\log(Bd)\log(R_p^2/\rho))\notag \\
 =&~ O(k^4 \log^3 (k) \log^2(1/\delta_1)\log( \log(1/\delta_1))\log(1/\rho)\log(\log(1/\rho)))\notag \\
 \leq &~ O(k^4 \log^3 (k) \log^3(1/\delta_1)\log^2(1/\rho)).\label{eq:max_d_deg_P}
\end{align}
where the first step follows from the definition of $d'$ and $ \mathrm{deg}(P_j)$, the second step follows from Eq.~\eqref{eq:D_k_delta_002}, the third step is straight forward.

So, the time complexity of Procedure \textsc{MergeSignal} in Algorithm \ref{alg:main_k} is 
\begin{align*}
  &~ R_p^2 \cdot \left(O(Bd\log(Bd)\log(R_p^2/\rho))  + O\Big(\sum_{j=1}^k {\max}\{d', \mathrm{deg}(P_j)\} \log^3({\max}\{d', \mathrm{deg}(P_j)\})\Big)\right)\\
  +&~ R_p\cdot O(R_p\log(R_p))\\
  \leq &~  O\Big( R^2_p  \sum_{j=1}^k {\max}\{d', \mathrm{deg}(P_j)\} \log^3({\max}\{d', \mathrm{deg}(P_j)\})\Big)\\
\leq &~  O( \log^2(1/\rho)\cdot k\cdot (k^4 \log^3 (k) \log^3(1/\delta_1)\log^2(1/\rho)) \log^3(k \log(1/\delta_1)\log(1/\rho)))\\
\leq &~  O(  k^5 \log^{6} (k) \log^4(1/\delta_1)\log^5(1/\rho) ),
\end{align*}
where the first step follows from Eq.~\eqref{eq:D_k_delta_002}, the second step follows from Eq.~\eqref{eq:max_d_deg_P}, the third step follows from Eq.~\eqref{eq:max_d_deg_P}, the forth step is straight forward.

\end{proof}

The following two lemmas show the time complexity and sample complexity of our main algorithm.

\begin{lemma}[Time complexity of the main algorithm]\label{lem:t_complexity_main_k}
 Procedure \textsc{HighProbFourierInterpolation} in Algorithm \ref{alg:main_k} runs in  \begin{align*}
     O(k^{4\omega} \log^{4\omega+2}(k/\delta)\log (FT)\log({\log(FT)})\log^5(1/\rho))
 \end{align*} times.
\end{lemma}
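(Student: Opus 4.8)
The plan is to bound the running time of Procedure \textsc{HighProbFourierInterpolation} by adding up the costs of its three components: the $R_p$ independent calls to \textsc{ConstantProbFourierInterpolation} in the loop on Line~\ref{ln:hpfi_for_i_R}, and the single call to \textsc{MergeSignal} on Line~\ref{ln:hpfi_ms}. First I would recall from Lemma~\ref{lem:time_cpfi} that each call to \textsc{ConstantProbFourierInterpolation} runs in
\begin{align*}
O(k^{4\omega} \log^{2\omega+1} (k) \log^{2\omega}(1/\delta_1)\log(\log(1/\delta_1))\log (FT)\log({\log(FT)}))
\end{align*}
time, so the $R_p$ iterations together cost $R_p$ times this quantity. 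Then I would invoke Lemma~\ref{lem:time_ms}, which bounds the cost of \textsc{MergeSignal} by $O(k^5 \log^{6}(k)\log^4(1/\delta_1)\log^5(1/\rho))$.

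Next I would substitute $R_p = \log(1/\rho)$ (the setting on the first line of \textsc{HighProbFourierInterpolation}) and simplify, using $\delta_1 = \poly(\delta/k)$ (so $\log(1/\delta_1) = O(\log(k/\delta))$ and $\log\log(1/\delta_1) = O(\log\log(k/\delta))$, which is absorbed into a polylog factor) to collapse the $\log(k)$, $\log(1/\delta_1)$, and $\log\log(1/\delta_1)$ factors into powers of $\log(k/\delta)$. The dominant term is the frequency-estimation-plus-signal-estimation term, which becomes
\begin{align*}
O(k^{4\omega} \log^{4\omega+2}(k/\delta)\log (FT)\log({\log(FT)})\log^5(1/\rho)),
\end{align*}
while the \textsc{MergeSignal} term $O(k^5 \log^{6}(k)\log^4(1/\delta_1)\log^5(1/\rho))$ is of lower order in $k$ since $4\omega > 5$ for $\omega \approx 2.373$; hence it is absorbed. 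Comparing exponents of the log factors: the $2\omega+1$ exponent on $\log(k)$ from Lemma~\ref{lem:time_cpfi}, together with the $2\omega$ exponent on $\log(1/\delta_1)$, after merging into $\log(k/\delta)$ gives at most $\log^{4\omega+1}(k/\delta)$ per run, and multiplying by $R_p = \log(1/\rho)$ does not touch the $\log(k/\delta)$ exponent, but the $\log^5(1/\rho)$ in the bound comes from combining the explicit $\log(1/\rho)$ factor of the loop with the $\log^5(1/\rho)$ inside \textsc{MergeSignal} (here I would just take the larger, $\log^5(1/\rho)$, as the claimed bound allows); the extra $+1$ in the exponent $4\omega+2$ versus $4\omega+1$ is simply slack in the stated bound, which is fine since we only need an upper bound.

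The only mild subtlety — the ``hard part,'' though it is really just bookkeeping — is making sure that every polylogarithmic factor in $k$, $1/\delta$, $1/\delta_1$, and their iterated logs is correctly folded into the single $\log^{4\omega+2}(k/\delta)$ factor, and that the \textsc{MergeSignal} contribution genuinely does not dominate; this requires the inequality $5 \le 4\omega$ (true since $\omega > 5/4$) to absorb the $k^5$ term into $k^{4\omega}$, and the observation that $\log^6(k)\log^4(1/\delta_1) = O(\log^{10}(k/\delta)) = O(\log^{4\omega+2}(k/\delta))$ since $4\omega+2 > 11 > 10$. Putting these together yields the claimed time bound, completing the proof.
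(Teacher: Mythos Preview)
Your proposal is correct and follows essentially the same approach as the paper: decompose the runtime into the $R_p$ calls to \textsc{ConstantProbFourierInterpolation} (via Lemma~\ref{lem:time_cpfi}) plus the single call to \textsc{MergeSignal} (via Lemma~\ref{lem:time_ms}), substitute $R_p=\log(1/\rho)$ and $\delta_1=\delta/\poly(k)$, and absorb the lower-order \textsc{MergeSignal} term using $k^5\le k^{4\omega}$. Your bookkeeping on the polylog exponents is actually more careful than the paper's own proof, which simply asserts the final simplification without spelling out the comparison.
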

\begin{proof}

Procedure \textsc{HighProbFourierInterpolation} in Algorithm \ref{alg:main_k} consists of the following steps: 
\begin{itemize}
    \item In Line \ref{ln:hpfi_for_i_R}, the for loop repeats $ R_p$ times with the following step:
    \begin{itemize}
        \item Line \ref{ln:hpfi_cpfi} calls Procedure \textsc{ConstantProbFourierInterpolation}. By Lemma \ref{lem:time_cpfi}, it runs in \begin{align*}
            O(k^{4\omega} \log^{2\omega+1} (k) \log^{2\omega}(1/\delta_1)\log(\log(1/\delta_1))\log (FT)\log({\log(FT)}))
        \end{align*} time. 
    \end{itemize}
    \item Line \ref{ln:hpfi_ms} calls Procedure \textsc{MergeSignal}. By Lemma \ref{lem:time_ms}, it runs in \begin{align*}
        O(  k^5 \log^{6} (k) \log^4(1/\delta_1)\log^5(1/\rho) )
    \end{align*} time. 
\end{itemize}

Following from the setting in the algorithm, we have that 
\begin{align}
\delta_1 = \delta / \poly(k). \label{eq:delta_1}
\end{align}

So, the time complexity of Procedure \textsc{HighProbFourierInterpolation} in Algorithm \ref{alg:main_k} in Algorithm \ref{alg:main_k} is 
\begin{align*}
&~R_p \cdot O\left(k^{4\omega} \log^{2\omega+1} (k) \log^{2\omega}(1/\delta_1)\log(\log(1/\delta_1))\log (FT)\log({\log(FT)})\right) \\
& ~ + O(  k^5 \log^{c+3} (k) \log^4(1/\delta_1)\log^5(1/\rho) )\\
=&~ O(k^{4\omega} \log^{2\omega+1} (k) \log^{2\omega}(1/\delta_1)\log(\log(1/\delta_1))\log (FT)\log({\log(FT)})\log^5(1/\rho)) \\
\leq &~ O(k^{4\omega} \log^{4\omega+2}(k/\delta)\log (FT)\log({\log(FT)})\log^5(1/\rho)),
\end{align*}
where the first step follows from $ R_p = \log(1/\rho)$, the second step follows from Eq.~\eqref{eq:delta_1}.

\end{proof}

\begin{lemma}[Sample complexity of the main algorithm]\label{lem:s_complexity_main_k}
 Procedure \textsc{HighProbFourierInterpolation} in Algorithm \ref{alg:main_k} takes \begin{align*}
O(k^{4}  \log^{6}(k/\delta)\log (FT)\log({\log(FT)}) \log(1/\rho))
 \end{align*} samples.
\end{lemma}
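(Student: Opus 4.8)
\textbf{Proof proposal for Lemma~\ref{lem:s_complexity_main_k} (sample complexity of \textsc{HighProbFourierInterpolation}).}

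The plan is to trace through the pseudocode of \textsc{HighProbFourierInterpolation} in Algorithm~\ref{alg:main_k}, identifying every subroutine that actually \emph{draws samples} from $x(t)$, and bound the total. The key observation is that the only subroutine that accesses fresh samples of $x$ is \textsc{ConstantProbFourierInterpolation}, invoked once per iteration of the loop in line~\ref{ln:hpfi_for_i_R}; the \textsc{MergeSignal} procedure only evaluates the \emph{already reconstructed} Fourier-sparse signals $y_1,\dots,y_{R_p}$ at points it picks itself (via \textsc{MixedPolynomialEvaluation}), so it incurs no further sample cost in $x$. Thus the total sample complexity is $R_p$ times the sample complexity of \textsc{ConstantProbFourierInterpolation}.

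First I would invoke Lemma~\ref{lem:smaple_cpfi}, which states that \textsc{ConstantProbFourierInterpolation} takes
\begin{align*}
    O\big(k^{4} \log^{3} (k) \log^{2}(1/\delta_1)\log(\log(1/\delta_1))\log (FT)\log({\log(FT)})\big)
\end{align*}
samples. Next I would use the parameter setting $\delta_1 = \delta/\poly(k)$ (Eq.~\eqref{eq:delta_1} in the proof of Lemma~\ref{lem:t_complexity_main_k}), so that $\log(1/\delta_1) = O(\log(k/\delta))$ and $\log\log(1/\delta_1) = O(\log(k/\delta))$; absorbing the $\log^3 k$, $\log^2(1/\delta_1)$, and $\log\log(1/\delta_1)$ factors together, the per-run sample complexity simplifies to $O(k^4 \log^6(k/\delta)\log(FT)\log(\log(FT)))$, with a little slack in the polylog exponent. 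Finally, since the loop runs $R_p = \log(1/\rho)$ times (line~\ref{ln:hpfi_for_i_R}), I multiply by $\log(1/\rho)$ to obtain
\begin{align*}
    O\big(k^{4}  \log^{6}(k/\delta)\log (FT)\log({\log(FT)}) \log(1/\rho)\big),
\end{align*}
which is the claimed bound. For completeness I would also note that the \textsc{WeightedSketch} calls inside \textsc{MergeSignal} (line~\ref{ln:merge_ws}) draw time points but then feed them only to \textsc{MixedPolynomialEvaluation} on $y_i - y_j$, never querying $x$, confirming that they do not contribute.

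The only mildly delicate point — not really an obstacle — is bookkeeping the polylogarithmic exponents: Lemma~\ref{lem:smaple_cpfi} gives $\log^3 k \cdot \log^2(1/\delta_1)\cdot \log\log(1/\delta_1)$, and after the substitution $\delta_1 = \delta/\poly(k)$ this is at most $\log^6(k/\delta)$ (indeed $\log^3(k/\delta)\cdot\log^2(k/\delta)\cdot\log(k/\delta) = \log^6(k/\delta)$ exactly, matching the target), so no rescaling of constants is lost. I expect no real difficulty here; the statement is essentially a direct corollary of Lemma~\ref{lem:smaple_cpfi} and the choice $R_p=\log(1/\rho)$.
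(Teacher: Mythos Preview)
Your proposal is correct and matches the paper's own proof essentially line for line: both note that \textsc{MergeSignal} draws no new samples from $x$, invoke Lemma~\ref{lem:smaple_cpfi} for the per-iteration cost, multiply by $R_p=\log(1/\rho)$, and simplify using $\delta_1=\delta/\poly(k)$. Your explicit bookkeeping of the polylog exponents is slightly more detailed than the paper's, but the argument is the same.
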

\begin{proof}
Procedure Procedure \textsc{HighProbFourierInterpolation} in Algorithm \ref{alg:main_k} consists of the following steps: 
\begin{itemize}
    \item In Line \ref{ln:hpfi_for_i_R}, the for loop repeats $ R_p$ times:
    \begin{itemize}
        \item Line \ref{ln:hpfi_cpfi} calls Procedure \textsc{ConstantProbFourierInterpolation}. By Lemma \ref{lem:smaple_cpfi}, it takes \begin{align*}
            O(k^{4} \log^{3} (k) \log^{2}(1/\delta_1)\log(\log(1/\delta_1))\log (FT)\log({\log(FT)}))
        \end{align*} samples. 
    \end{itemize}
The remaining steps do not use any new sample.
\end{itemize}

Thus, the total sample complexity is 
\begin{align*}
&~ R_p \cdot O(k^{4} \log^{3} (k) \log^{2}(1/\delta_1)\log(\log(1/\delta_1))\log (FT)\log({\log(FT)}))\\
\leq &~ O(k^{4} \log^{3} (k) \log^{2}(1/\delta_1)\log(\log(1/\delta_1))\log (FT)\log({\log(FT)}) \log(1/\rho) )\\
\leq &~ O(k^{4}  \log^{6}(k/\delta)\log (FT)\log({\log(FT)}) \log(1/\rho) ),
\end{align*}
where the first step follows from $ R_p = \log(1/\rho)$, the second step follows from Eq.~\eqref{eq:delta_1}.

\end{proof}

\newpage
\section{Structure of Our Fourier Interpolation Algorithm}\label{sec:flowchart}
\tikzstyle{startstop} = [rectangle, rounded corners, minimum width=3cm, minimum height=1cm,text centered, draw=black, fill=red!30]
\tikzstyle{io} = [trapezium, trapezium left angle=70, trapezium right angle=110, minimum width=3cm, minimum height=1cm, text centered, draw=black, fill=blue!30]
\tikzstyle{process} = [rectangle, minimum width=3cm, minimum height=1cm, text centered, draw=black, fill=orange!30]
\tikzstyle{math} = [rectangle, minimum width=3cm, minimum height=1cm, text centered, draw=black, fill=blue!10]
\tikzstyle{math1} = [rectangle, rounded corners, minimum width=4cm, minimum height=2cm, text centered, draw=black, fill=yellow!30]
\tikzstyle{decision} = [diamond, minimum width=2cm, minimum height=1cm, text centered, draw=black, fill=green!30, aspect=2]
\tikzstyle{arrow} = [thick,->,>=stealth]

\begin{center}

\begin{tikzpicture}[node distance=2cm]
\node (main_thm) [startstop] {\begin{tabular}{c}Fourier interpolation algorithm \\  (Theorem~\ref{thm:main})\end{tabular}};

\node (FI_const_prob) [process, below of=main_thm, yshift=-0.5cm] {\begin{tabular}{c} Fourier interpolation with \\ constant success probability \\ (Theorem~\ref{thm:main_constant_prob})\end{tabular}}; 
\draw [arrow] (FI_const_prob) -- (main_thm);

\node (boost) [process, right of=main_thm, xshift=5cm]{\begin{tabular}{c}Boost success probability\\ (Lemma~\ref{eq:med_boost_prob})\end{tabular}};
\draw [arrow] (boost) -- (main_thm);

\node (sig_est) [process, right of=FI_const_prob, xshift=5cm]{\begin{tabular}{c}Signal estimation\\(Lemma~\ref{lem:magnitude_recovery_for_CKPS})\end{tabular}};
\draw [arrow] (sig_est) -- (FI_const_prob);

\node (freq_est) [process, below of=FI_const_prob, yshift=-1cm] {\begin{tabular}{c}Frequency estimation\\ algorithm   (Theorem~\ref{thm:frequency_recovery_k_better})\end{tabular}};
\draw [arrow] (freq_est) -- (FI_const_prob);

\node (hc_snr) [decision, right of=freq_est, xshift=6cm] {\begin{tabular}{c}Heavy-cluster \&\\High SNR band\end{tabular}};
\draw [arrow] (hc_snr) -- %
node[sloped, anchor=center, below] {\footnotesize{(Claim \ref{cla:guarantee_removing_x**_x*_}, Lemma~\ref{lem:xSfsubxSleqg})}}
node[sloped, anchor=center, above] {Sufficiency}
(FI_const_prob);
\draw [arrow] (hc_snr) -- node[anchor=south] {Assuming} (freq_est);

\node (sig_sample) [process, below of=freq_est, yshift=-0.25cm] {\begin{tabular}{c}Generate Significant Samples\\(Lemmas~\ref{lem:significant_samples_z}, \ref{lem:significant_samples_for_each_bins})\end{tabular}};
\draw [arrow] (sig_sample) -- node[anchor=west] {Lemma~\ref{lem:sample2freq_est}} (freq_est);

\node (noisy_filter_energy) [process, below of=sig_sample] {\begin{tabular}{c}Noisy filtered signal's\\ energy estimation (Lemma~\ref{lem:significant_samples_z_below})\end{tabular}};
\draw [arrow] (noisy_filter_energy) -- (sig_sample);

\node (noisy_test_energy) [process, right of=noisy_filter_energy, xshift=5cm] {\begin{tabular}{c}Noisy local-test signal's\\ energy estimation (Lemma~\ref{lem:significant_samples_z_above})\end{tabular}};
\draw [arrow] (noisy_test_energy) -- (sig_sample);

\node (filter_energy) [process, below of=noisy_filter_energy] {\begin{tabular}{c}Partial energy estimation\\ for filtered signals\\(Lemma~\ref{lem:z_energy_preserving_plus})\end{tabular}};
\draw [arrow] (filter_energy) -- (noisy_filter_energy);

\node (test_energy) [process, below of=noisy_test_energy] {\begin{tabular}{c}Partial energy estimation\\ for local-test signals\\(Lemma~\ref{lem:z_diff_energy_preserving})\end{tabular}};
\draw [arrow] (test_energy) -- (noisy_test_energy);

\node (partial_energy_filter) [math, below of=filter_energy, xshift=-2cm] {\begin{tabular}{c}Partial energy of filtered\\ signals (Lemma~\ref{lem:norm_z_cut_preserve})\end{tabular}};
\draw [arrow] (partial_energy_filter) -- (filter_energy);

\node (energy_est) [process, right of=partial_energy_filter, xshift=4cm] {\begin{tabular}{c}Sampling \& Reweighing\\ energy estimation  (Lemma~\ref{lem:link_weight_sampling_size__condition_num})\end{tabular}};
\draw [arrow] (energy_est) -- (filter_energy);
\draw [arrow] (energy_est) -- (test_energy);

\node (eb_test) [math, right of=energy_est, xshift=4.5cm]{\begin{tabular}{c}Energy bound for\\ local-test signals\\ (Lemma~\ref{lem:z_diff_condition_number_xI})\end{tabular}};
\draw [arrow] (eb_test) -- (test_energy);

\node (eb_filter) [math, below of=partial_energy_filter] {\begin{tabular}{c}Energy bound for\\ filtered signals (Corollary~\ref{cor:condition_number_z})\end{tabular}};
\draw [arrow] (eb_filter) -- (partial_energy_filter);

\node (sig_equiv) [math1, below of=energy_est, xshift=0cm, yshift=-2.5cm] {Signal Equivalent Method};
\draw [arrow] (sig_equiv) -- (eb_filter);
\draw [arrow] (sig_equiv) -- (eb_test);

\node (eb_fourier) [math, below of=eb_test, yshift=-2cm] {\begin{tabular}{c}Energy bound for\\ Fourier-sparse signals\\ (Theorems~\ref{thm:energy_bound}, \ref{thm:bound_k_sparse_FT_x_middle})\end{tabular}};
\draw [arrow] (eb_fourier) -- (eb_filter);
\draw [arrow] (eb_fourier) -- (eb_test);

\node (concentrate) [math, below of=eb_filter, yshift=-0.5cm] {\begin{tabular}{c}Concentration property\\ of filtered signals\\ (Lemmas~\ref{lem:full_proof_of_3_properties_true_for_z}, \ref{lem:z_satisfies_two_properties})\end{tabular}};
\draw [arrow] (concentrate) -- (eb_filter);
\draw [arrow] (sig_equiv) -- (concentrate);

\end{tikzpicture}
    
\end{center}

\ifdefined\isarxiv
\bibliographystyle{alpha}
\bibliography{ref}
\else
\bibliography{ref}
\bibliographystyle{alpha}

\fi

\end{document}